\documentclass[letterpaper,11pt]{article}
\usepackage[margin=1in]{geometry}
\usepackage{amsmath,amsthm,amssymb,amsfonts}
\usepackage[dvipdfmx]{graphicx}
\usepackage{color,url,xspace,ascmac}
\usepackage[normalem]{ulem}
\usepackage{comment,tikz}
\usepackage{thmtools,thm-restate}
\usepackage{enumitem}
\usepackage{physics}
\usepackage{mathtools}
\usepackage{makecell}
\usepackage{multirow}
\newcommand{\fnote}[1]{{\color{teal} F: #1}}
\newcommand{\TRnote}[1]{{\color{red} [TR] #1}}

\usetikzlibrary{positioning}
\usetikzlibrary{calc}

\tikzset{every picture/.style={font issue=\footnotesize},font issue/.style={execute at begin picture={#1\selectfont}}}

\usepackage[unicode,colorlinks=true,citecolor=blue,linkcolor=blue,pdfusetitle,hypertexnames=false]{hyperref}

\usepackage[style=alphabetic,natbib=true,natbib=true,maxnames=99,maxalphanames=99,isbn=false,url=false,backref=true,backend=biber,giveninits=true]{biblatex}

\AtEveryBibitem{%
\ifentrytype{article}{%
  \clearlist{publisher}
  \clearfield{editor}
}{}
\ifentrytype{book}{
}{}
\ifentrytype{inproceedings}{%
  \clearfield{volume}
  \clearfield{number}
  \clearfield{editor}
  \clearlist{publisher}
  \clearlist{location}
}{}
  \clearfield{day}
  \clearfield{month}
  \clearfield{isbn}
  \clearfield{url}
}

\usepackage{doi}
\usepackage[capitalise]{cleveref}
\newtheorem{theorem}{Theorem}[section]
\newtheorem{definition}[theorem]{Definition}
\newtheorem{lemma}[theorem]{Lemma}
\newtheorem{corollary}[theorem]{Corollary}

\newtheorem{claim}[theorem]{Claim}

\newtheorem{remark}[theorem]{Remark}

\newif\ifdraft
\draftfalse

\def\defeq{\mathrel{\mathop:}=}

\DeclarePairedDelimiter{\rbra}{\lparen}{\rparen} 
\DeclarePairedDelimiter{\cbra}{\lbrace}{\rbrace} 
\DeclarePairedDelimiter{\sbra}{\lbrack}{\rbrack} 

\newcommand{\indicator}{\mathbf{1}}

\DeclareMathOperator*{\E}{\mathbb{E}}
\DeclareMathOperator*{\Var}{\mathbf{Var}}
\DeclareMathOperator*{\Cov}{\mathbf{Cov}}
\newcommand{\condition}{\;\middle\vert\;}

\newcommand{\e}{\mathrm{e}}
\newcommand{\Nat}{\mathbb{N}}
\newcommand{\Real}{\mathbb{R}}

\newcommand{\USD}{Undecided State Dynamics\xspace}
\newcommand{\pbot}{p_{\bot}}

\renewcommand{\epsilon}{\varepsilon}

\newcommand{\bounded}{D}
\newcommand{\variance}{s}
\newcommand{\opn}{\mathrm{opn}}
\newcommand{\drift}{R}
\newcommand{\calF}{\mathcal{F}}
\newcommand{\tauxplus}{\tau_X^+}
\newcommand{\tauxminus}{\tau_X^-}

\newcommand{\taubetaplus}{\tau_{\beta}^+}
\newcommand{\taubetaminus}{\tau_{\beta}^-}
\newcommand{\taupsiplus}{\tau_{\psi}^{+}}

\newcommand{\taumaxdown}{\tau_{\max}^{\downarrow}}
\newcommand{\taumaxup}{\tau_{\max}^{\uparrow}}
\newcommand{\cmaxdown}{c_{\max}^{\downarrow}}
\newcommand{\cmaxup}{c_{\max}^{\uparrow}}

\newcommand{\taudeltaup}{\tau_{\delta}^{\uparrow}}
\newcommand{\taudeltadown}{\tau_{\delta}^{\downarrow}}
\newcommand{\cdeltaup}{c_\delta^\uparrow}
\newcommand{\cdeltadown}{c_\delta^\downarrow}
\newcommand{\tauiweak}{\tau_{i}^{\mathrm{weak}}}
\newcommand{\taujweak}{\tau_{j}^{\mathrm{weak}}}
\newcommand{\cweak}{c_{\mathrm{weak}}}
\newcommand{\taugammaplus}{\tau_{\gamma}^{+}}
\newcommand{\xgamma}{x_\gamma}
\newcommand{\taudeltaplus}{\tau_{\delta}^{+}}
\newcommand{\xdelta}{x_\delta}
\newcommand{\taumaxplus}{\tau_{\max}^{+}}

\newcommand{\taujstrong}{\tau_j^{\mathrm{strong}}}
\newcommand{\cstrong}{c_{\mathrm{strong}}}
\newcommand{\taustrong}{\tau^{\mathrm{strong}}}
\newcommand{\taupsiminus}{\tau^{-}_{\psi}}

\newcommand{\tautildemaxup}{\tau^\uparrow_{\widetilde{\max}}}
\newcommand{\tautildemaxdown}{\tau^\downarrow_{\widetilde{\max}}}
\newcommand{\ctildemaxup}{c^\uparrow_{\widetilde{\max}}}
\newcommand{\ctildemaxdown}{c^\downarrow_{\widetilde{\max}}} 
\newcommand{\cdeltaplus}{c_\delta^+}
\newcommand{\ceta}{c_{\eta}}

\newcommand{\constref}[1]{C_{\mbox{\tiny\ref{#1}}}}
\newcommand{\constrefs}[2]{C_{\mbox{\tiny\ref{#1}(\ref{#2})}}}

\newcommand{\np}{\alpha}
\newcommand{\npo}{\beta}
\newcommand{\npt}{\gamma}

\newcommand{\npm}{\alpha^{\mathrm{max}}}
\newcommand{\tnp}{\widetilde{\alpha}}
\newcommand{\tnpt}{\widetilde{\gamma}}
\newcommand{\tnpm}{\widetilde{\alpha}^{\mathrm{max}}}
\newcommand{\xbeta}{x_\beta}
\newcommand{\xpsi}{x_\psi}

\newcommand{\taupsi}{\tau_{\psi}^+}

\newcommand{\xeta}{x_\eta}
\newcommand{\tauetaplus}{\tau^{+}_{\eta}}
\newcommand{\tauetaminus}{\tau^{-}_{\eta}}
\newcommand{\xetaplus}{x^{+}_{\eta}}
\newcommand{\xetaminus}{x^{-}_{\eta}}

\newcommand{\tauweak}{\tau^{\mathrm{weak}}}

\newcommand{\taucons}{\tau_{\mathrm{cons}}}

\newcommand{\tauphiplus}{\tau^{+}_{\varphi}}
\newcommand{\tauphiup}{\tau^{\uparrow}_{\varphi}}
\newcommand{\cphiup}{c_{\varphi}^{\uparrow}}

\newcommand{\Otilde}{\widetilde{O}}

\newcommand{\calE}{\mathcal{E}}

\crefname{figure}{Figure}{Figures}
\crefname{remark}{Remark}{Remarks}
\crefname{equation}{}{}

\makeatother

\title{Undecided State Dynamics with Many Opinions}
\author{
Colin Cooper\\
\small{King’s College London}\\
\small{\texttt{\href{mailto:colin.cooper@kcl.ac.uk}{colin.cooper@kcl.ac.uk}}}
\and
Frederik Mallmann-Trenn\\
\small{King’s College London}\\
\small{\texttt{\href{mailto:frederik.mallmann-trenn@kcl.ac.uk}{frederik.mallmann-trenn@kcl.ac.uk}}}
\and
Tomasz Radzik\\
\small{King’s College London}\\
\small{\texttt{\href{tomasz.radzik@kcl.ac.uk}{tomasz.radzik@kcl.ac.uk}}}
\and
Nobutaka Shimizu\\
\small{Institute of Science Tokyo}\\
\small{\texttt{\href{shimizu.n.ah@m.titech.ac.jp}{shimizu.n.ah@m.titech.ac.jp}}}
\and
Takeharu Shiraga\\
\small{Chuo University}\\
\small{\texttt{\href{shiraga.076@g.chuo-u.ac.jp}{shiraga.076@g.chuo-u.ac.jp}}}
}

\date{}
\begin{document}
\maketitle

\begin{abstract}
We study the Undecided-State Dynamics (USD), a fundamental consensus process in which each vertex holds one of $k$ decided opinions or the undecided state. We consider both the gossip model and the population protocol model.
Prior work established tight bounds 
on the consensus time of this process only for the regime $k = O(\sqrt{n}/(\log n)^2)$ (for the population protocol model) and $k = O((n/\log n)^{1/3})$ (for the gossip model), often under restrictive assumptions on the initial configuration.

In this paper, we obtain the first consensus-time guarantees for USD that hold for \emph{arbitrary} $2\le k\le n$ and for \emph{arbitrary} initial configurations in both the gossip model and the population protocol model.  In the gossip model, USD reaches consensus within $\widetilde O(\min\{k,\sqrt n\})$ synchronous rounds with probability $1-\pbot-n^{-c}$, where $\pbot$ is the gossip-specific probability of collapsing to the all-undecided state in the first round.  In the population protocol model, USD reaches consensus within $\widetilde O(\min\{kn,n^{3/2}\})$ asynchronous interactions with high probability.
We also present lower bounds that match the upper bounds up to polylogarithmic factors for a specific initial configuration and show that our upper bounds are essentially optimal. 
\end{abstract}

\newpage
\tableofcontents
\newpage
\section{Introduction} \label{sec:introduction}
The Undecided-State Dynamics (USD) \cite{undecided_Angluin,Becchetti_SODA2015}
is one of the simplest and most studied consensus dynamics on complete graphs.
Each vertex holds an opinion in a finite set $\Sigma=[k]\cup\{\bot\}$,
where $[k]:=\{1,\dots,k\}$ are decided opinions and $\bot$ is a distinguished
undecided opinion.
When a vertex $u$ interacts with a vertex $v$, 
the update follows a simple cancellation/adoption rule (with only $u$ updating its opinion):
if $u$ and $v$ hold different decided opinions in $[k]$, then $u$ becomes $\bot$; 
if $u$ is undecided and $v$ is decided, then $u$ adopts $v$’s opinion; 
otherwise $u$ keeps its opinion.
This elementary rule (formalized in \cref{sec:model and basic notation})
already gives rise to rich and nontrivial behavior.

We analyze USD with $k$ possible decided opinions on an $n$-vertex complete
graph with self-loops\footnote{Thus, choosing a random neighbor is equivalent to choosing a vertex uniformly at random.} under two classical communication models.
In the \emph{gossip model} \cite{Karp_SSV00}, time proceeds in synchronous rounds
in which each vertex samples one random neighbor and simultaneously updates its opinion accordingly.
In the \emph{population protocol model} \cite{Angluin_ADFP06}, interactions occur
asynchronously: an ordered pair $(u,v)$ is chosen uniformly at random with replacement, and
$u$ updates its opinion based on $v$'s opinion.

A central quantity of interest is the \emph{consensus time}: the number of
rounds (in  the gossip model) or pairwise interactions (in the population protocol model)
until all vertices hold the same decided opinion in
$[k]$.  The all-$\bot$ configuration, although
absorbing, does not constitute consensus and is regarded as failure. Consensus
time therefore measures the time to reach agreement on a real opinion.
In what follows, we review previous consensus time bounds on USD. For results on other relevant consensus dynamics, see \cref{sec:other related works}.

USD was first introduced by \citet{undecided_Angluin} for $k=2$ in the
population protocol model, where they showed that the consensus time is
$O(n\log n)$ with high probability.\footnote{%
A given event holds ``with high probability,'' if it holds with probability $1-O(n^{-c})$ for some constant $c>0$.}
They (and \citet{Condon_CRN_2017}) further proved that USD with $k=2$ reaches consensus with the
initial majority with high probability if the more popular opinion has
a sufficiently large initial advantage compared
to the other opinions. 
Thus USD offered a critical performance improvement over the classic pull voting, both in terms of the consensus time (expected $\Theta(n^2)$ in pull voting \cite{HP01,linear_voting}) and the chance 
for the majority opinion to win.
For larger $k$, \citet{fast_convergence_undecided} analyzed USD
for $2 \le k = O(\sqrt{n}/(\log n)^2)$ and proved that the consensus time is
$O(kn\log n)$ with high probability for any initial configuration.
Recently, \citet{El-Hayek_ES25} obtained a matching lower bound of
$\Omega(kn\log n)$ for all $k \le n^{1/2-\varepsilon}$, where $\varepsilon$ is an arbitrary 
positive constant.

In the gossip model, USD was first analyzed by \citet{Becchetti_SODA2015}, who
proved an $O(k\log n)$ upper bound for $2 \le k = O((n/\log n)^{1/3})$ 
under a strong
assumption on the initial advantage of the most popular opinion.  This
assumption was later removed for $k=2$ by \citet{undecided_MFCS}, who showed an
$O(\log n)$ bound for arbitrary initial configurations.
However, no general bound was known for any $k>2$ without any assumptions
on the initial configuration. 

It is natural to conjecture that the consensus time in the gossip model should be
$\Otilde(k)$\footnote{Throughout the paper, $\Otilde(\cdot)$ hides polylogarithmic factors in $n$.} for $2 \le k \ll \sqrt{n}$,\footnote{%
``$k \ll f(n)$'' abbreviates that $k=O(f(n)/\log^c n)$ for some constant $c >0$.
} by comparison with the
$\Otilde(kn)$ upper bound of \cite{fast_convergence_undecided} for the
population protocol model: a single synchronous round of the gossip model is
roughly equivalent to $n$ asynchronous interactions in the population protocol
model.  However, the analysis of \cite{fast_convergence_undecided} critically
exploits the asynchronous nature of population protocols and does not transfer
to the gossip model.  Indeed, they explicitly identify the gossip-model bound as
an open problem \cite[Section~9]{fast_convergence_undecided}.

Moreover, prior work provides a rather clear picture only in the regime
$k \ll \sqrt{n}$:  In the population protocol model, the best-known upper
and lower bounds are $\widetilde{\Theta}(kn)$ in this range
\cite{fast_convergence_undecided,El-Hayek_ES25}, strongly suggesting that
the consensus time scales linearly with $k$.
In the gossip model, the only existing upper bounds for USD with $k>2$
are also of order $\widetilde{O}(k)$, but they hold only under strong
assumptions on the initial configuration and only for $k\ll n^{1/3}$
\cite{Becchetti_SODA2015}.
Beyond this threshold, however, the efficiency of USD is mysterious:
the analytical techniques used before do not work for larger values of $k$
and no general consensus-time bounds are available for any $k\ge\sqrt{n}$ in 
the population protocol model or any $k\ge {n}^{1/3}$ in the gossip model.
This raises a natural and central question:  
does the linear-in-$k$ behavior of the consensus time persist for all $k$,
or does USD transition to a qualitatively different---possibly sublinear---regime
once $k$ exceeds $\sqrt{n}$?
Addressing this question requires techniques that go well beyond those used
in the $k<\sqrt{n}$ regime, and lies at the core of our contribution.

\subsection{Our Contributions}

We obtain the first general bounds on the consensus time of the
Undecided-State Dynamics (USD) for an \emph{arbitrary} 
number $k$ of decided opinions and an \emph{arbitrary} initial configuration in both the gossip model and the
population protocol model.  Our main results are summarized in the following
informal theorem. See also \cref{table:comparison}.

\begin{theorem}[Main Theorem] \label{thm:main}
    Consider USD on $n$ vertices starting from an arbitrary initial configuration\footnote{%
    A configuration is an assignment of opinions to vertices.}
    in $\{1,\dots,k,\bot\}^n$ 
    other than the all-$\bot$ configuration, where the number of 
    possible decided opinions $k$ is an arbitrary integer between $2$ and $n$ (inclusive).
    \begin{itemize}
        \item \textbf{Gossip model.}
        In the gossip model, USD reaches consensus within
        $\Otilde\rbra*{\min\{k,\sqrt{n}\}}$ synchronous rounds with probability
        $1 - \pbot - n^{-c}$ for some constant $c>0$, where $\pbot$ is the probability that all vertices hold $\bot$ after the first synchronous round.
        \item \textbf{Population protocol model.}
        In the population protocol model, USD reaches consensus within
        $\Otilde\rbra*{\min\{kn,n^{1.5}\}}$ asynchronous interactions 
        with high probability.
    \end{itemize}
\end{theorem}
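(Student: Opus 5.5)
The plan is to run a drift analysis of the opinion counts, separately for each model, and then convert gossip rounds to population-protocol interactions. Write $x_i(t)$ for the fraction of vertices holding opinion $i\in[k]$ and $u(t)$ for the undecided fraction. Set $\np(t)=\sum_i x_i(t)$ and $\npt(t)=\sum_i x_i(t)^2$.

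In the gossip model, the one-round expectations are
\[
\E[x_i(t+1)\mid\calF_t]=x_i(u+x_i)+u x_i=x_i(1+u-(\np-x_i)),
\]
\[
\E[u(t+1)\mid\calF_t]=u^2+\np^2-\npt .
\]
The proof will proceed in four phases.

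\textbf{Phase 1: reach the balanced regime.} Excluding the event of probability $\pbot$, we show that within $O(\log n)$ rounds the undecided fraction enters a stable band. In this band, $u\approx \np-\npt/\np$ roughly, so $\np=\Theta(1)$. We establish this by analysing the two-dimensional drift of $(\np,\npt)$ together with Chernoff concentration. Once in the band, the process stays there whp for $\mathrm{poly}(n)$ rounds.

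\textbf{Phase 2: eliminate weak opinions.} In the band, opinion $i$ has multiplicative drift $1+(x_i-\npt/\np)$ up to lower-order terms. Hence any opinion with $x_i\le \npt/(2\np)$ shrinks geometrically. Since $\npt/\np\ge \max_i x_i\ge 1/k$, the key quantity is the growth of $\npm=\max_i x_i$, or more robustly of $\npt/\np$. We show that $\npt/\np$ is a submartingale with drift of order $\Var$ of the $x_i$ weighted by $x_i$. Combined with the noise, which matters once opinions have size $\approx 1/\sqrt n$, this means the "effective number of opinions" $\np^2/\npt$ halves within $\Otilde(\np^2/\npt)$ rounds. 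Summing over halvings gives $\Otilde(k)$, since the effective number is at most $k$ initially.

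\textbf{The $\sqrt n$ cap.} If $\np^2/\npt \gg \sqrt n$, most opinions have $O(1/\sqrt n)$ mass, that is $O(\sqrt n)$ vertices. For these, the random fluctuations of order $\sqrt{x_i/n}$ per round dominate the drift, so such opinions behave like critical branching processes. A critical branching process started from $m$ individuals dies out in $O(m)$ generations with constant probability. Hence within $\Otilde(\sqrt n)$ rounds all but $\Otilde(\sqrt n)$ opinions vanish, because an opinion of size $O(\sqrt n)$ dies in $\Otilde(\sqrt n)$ rounds whp after accounting for negative drift when $x_i<\npt/\np$. The Phase 2 bound then applies from an effective $k=\Otilde(\sqrt n)$.

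\textbf{Phase 3: endgame.} Once one opinion holds a constant fraction, a two-opinion style analysis (as in the $k=2$ case of \cite{undecided_MFCS}) gives consensus in $O(\log n)$ further rounds.

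\textbf{Population protocol model.} The same drifts hold per $n$ interactions, with martingale variance bounded per interaction. Rerunning the argument with time scaled by $n$ gives $\Otilde(\min\{kn,n^{1.5}\})$ interactions, and no $\pbot$ term appears since all-$\bot$ cannot be reached from a nonzero configuration in one asynchronous step with non-negligible probability.

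\textbf{Main obstacle.} The hardest part is the $\sqrt n$ cap. It requires controlling simultaneously many opinions whose dynamics is noise-dominated and correlated through $u$ and $\np$. The first approach would be to couple each small opinion with an independent near-critical Galton--Watson process while the global quantities stay in the band, and to use a union bound over opinions.
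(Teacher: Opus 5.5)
\textbf{Verdict: genuine gaps.} The $\sqrt n$ cap argument fails as written, and the Phase 2 halving step has no mechanism behind it.

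\textbf{The $\sqrt n$ cap.} Write $\beta=\sum_i x_i$, $\gamma=\sum_i x_i^2$ and $\psi=\beta(2\beta-1)-\gamma$, so that $\E_{t-1}[\beta_t]=\beta_{t-1}-\psi_{t-1}$. You rightly flag the cap as the crux, but its key claim is false: an opinion with $O(\sqrt n)$ vertices does not die within $\Otilde(\sqrt n)$ rounds with high probability. Its multiplicative drift $1+x_i-\gamma/\beta$ is measured against the weighted mean $\gamma/\beta$ of the $x_j$. So in a near-balanced configuration a positive fraction of opinions have essentially zero drift. An essentially critical process started from $m$ individuals survives $g\gg m$ generations with probability $\Theta(m/g)$, not $n^{-c}$, so the union bound over opinions collapses. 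The claim also cannot hold for all opinions at once, since $\beta$ stays near $1/2$.

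What you can hope for is only an expected-count statement, roughly $n/g$ survivors after $g$ rounds. Turning that into a high-probability bound for opinions coupled through $u$ and $\beta$ is essentially the coupling-with-pull-voting route used for 3-Majority. The paper notes this does not transfer to USD because of the undecided vertices.

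The missing idea is that no extinction argument is needed. The noise term you mention already gives $\widetilde\gamma_t=\gamma_t/\beta_t^2$ an additive drift: $\E_{t-1}[\widetilde\gamma_t]\ge\widetilde\gamma_{t-1}+\frac{1}{12n}$ whenever $\beta_{t-1}\ge 1/2-o(1)$ and $\psi_{t-1},\gamma_{t-1}=o(1)$. This comes from $\beta_{t-1}^2\E_{t-1}[\gamma_t]\ge\E_{t-1}[\beta_t]^2\gamma_{t-1}+\beta_{t-1}^3(1-\beta_{t-1})^2/n$ (Cauchy--Schwarz plus the per-opinion variance) and a first-order bound on the ratio. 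Optional stopping with restarts then gives $\gamma_t\ge(\log n)^2/\sqrt n$, hence $\max_i x_i\ge\gamma_t/\beta_t\ge\gamma_t$, within $O(\sqrt n(\log n)^3)$ rounds. A separate drift bound shows $\max_i x_i/\beta$ then loses at most a constant factor for $\Omega(n\max_i x_i/\log n)$ rounds.

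\textbf{Phase 2.} A drift equal to the $x$-weighted variance of the $x_i$ plus $\Theta(1/n)$ does not by itself halve the effective number $m=\beta^2/\gamma$ in $\Otilde(m)$ rounds. Near a tie the first term is negligible, and the second alone needs about $n/m\gg m$ rounds when $m\ll\sqrt n$. Two side errors:
\begin{itemize}
\item Your inequality $\gamma/\beta\ge\max_i x_i$ is reversed. The correct chain is $\beta/k\le\gamma/\beta\le\max_i x_i$, and the lower bound is what yields $\Otilde(k)$.
\item Normalizing by $\beta$ rather than $\beta^2$ leaves a term $-(\gamma/\beta)\psi_{t-1}/\beta$ in the one-step drift. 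Since $\psi$ fluctuates at scale $1/\sqrt n$, $\gamma/\beta$ is not a step-by-step submartingale once $\gamma/\beta\gg1/\sqrt n$.
\end{itemize}

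What actually breaks symmetry is the instability of pairwise gaps. For non-weak $i,j$ (both at least $0.9\max_l x_l$), $\delta_t=x_i-x_j$ satisfies $\E_{t-1}[\delta_t]=\delta_{t-1}\bigl(1+x_i+x_j-(\gamma_{t-1}+\psi_{t-1})/\beta_{t-1}\bigr)\ge\delta_{t-1}\bigl(1+\Omega(\max_l x_l)\bigr)$, and $\Var_{t-1}[\delta_t]=\Omega(\max_l x_l/n)$. So within $O(\log n/\max_l x_l)$ rounds the gap exceeds $\sqrt{\log n/n}$, and then one of the two opinions becomes weak. Combined with the fact that a weak opinion cannot become strong again, a union bound over pairs gives a unique strong opinion directly, with no halving needed.

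This requires $\max_l x_l$ to dominate the $\psi$-scale $\sqrt{\log n/n}$; the paper assumes $\max_l x_l=\omega(\log n/\sqrt n)$. So the cap must deliver effective number $\sqrt n/\mathrm{polylog}(n)$, not merely $\Otilde(\sqrt n)$.

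\textbf{Endgame.} ``One opinion holds a constant fraction'' is too weak a trigger. You need a constant-factor lead over every other opinion, and with many small opinions still alive the $k=2$ analysis of \cite{undecided_MFCS} does not apply verbatim.
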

\noindent
Our result for the gossip model says that, unless the process collapses to the
        all-$\bot$ configuration in the first round, it reaches consensus within
        $\Otilde\rbra*{\min\{k,\sqrt{n}\}}$ rounds with high probability.

Regarding the hidden $\mathrm{polylog}(n)$ factor in the consensus time, we have the following bounds:
In the gossip model, the consensus time is, with high probability, $O(\sqrt{n}(\log n)^3)$ for arbitrary $2\le k \le n$, and $O(k\log n)$ if $k = O\qty(\frac{\sqrt{n}}{(\log n)^2})$ (the similar bound holds for the population protocol model).
Our results hold for every $2 \le k \le n$ and for arbitrary initial
configurations, including those with undecided vertices.

\begin{table}[htbp]
\centering
\renewcommand{\arraystretch}{1.3}
\begin{tabular}{|c|c|c|c|c|}
\hline
Model & Work & Consensus Time & Range of $k$ & Initial Gap Assumption\\
\hline
\multirow{3}{*}{PP}
& \makecell[c]{\cite{undecided_Angluin}\\\cite{Condon_CRN_2017}}
& $O(n\log n)$
& $k=2$
& $A_1-A_2=\Omega(\sqrt{n\log n})$ \\
\cline{2-5}
& \cite{fast_convergence_undecided}
& $O(kn\log n)$
& $k = O(\sqrt{n}/(\log n)^2)$
& --- \\
\cline{2-5}
& \cref{thm:main}
& $O(n^{1.5}(\log n)^3)$
& $2\le k\le n$
& --- \\
\hline
\multirow{4}{*}{Gossip}
& \cite{Becchetti_SODA2015}
& $O(k\log n)$
& $k=O((n/\log n)^{1/3})$
& $A_1\ge(1+\Omega(1))\cdot A_2$ \\
\cline{2-5}
& \cite{undecided_MFCS}
& $O(\log n)$
& $k=2$
& --- \\
\cline{2-5}
& \cref{thm:main}
& $O(k\log n)$
& $k = O(\sqrt{n}/(\log n)^2)$
& --- \\
\cline{2-5}
& \cref{thm:main}
& $O(\sqrt{n}(\log n)^3)$
& $2\le k\le n$
& --- \\
\hline
\end{tabular}
\caption{
Comparison of consensus time bounds for USD in the Population Protocol (PP) and Gossip models.
Here $A_i$ denotes the number of vertices holding opinion $i$ in the initial configuration, with
$A_1\ge A_2\ge\cdots\ge A_k$.
The assumptions on the initial gap are stated only when required; entries marked by ``---''
indicate that no explicit gap assumption is imposed.
}
\label{table:comparison}
\end{table}

\begin{remark} \label{rem:all-bot}
The probability $\pbot$ of collapsing to the all-$\bot$ configuration in the
first synchronous round is a phenomenon specific to the gossip model.
Because all vertices update simultaneously, a single round can eliminate all
decided opinions.  In contrast, in the population protocol model only one
vertex updates its opinion at each interaction, and an undecided vertex cannot
be created unless two distinct decided opinions are present.
\end{remark}
 
The value of $\pbot$ strongly depends on the initial distribution of opinions.
For example, if $k = O(n/\log n)$, or if the initial configuration contains at
least one but not all undecided vertices, then $\pbot = n^{-\Omega(1)}$
(see \cref{lem:fail at first round}).
At the opposite extreme, if $k=n$ and each vertex initially holds a distinct
decided opinion, 
then it becomes undecided in the first round with probability 
$1 - 1/n$, independently of the other vertices,
and thus $\pbot = (1 - 1/n)^n \approx 1/e$.

\paragraph{Matching lower bounds.}
We complement our upper bounds for the gossip model by proving a matching lower bound up to
a polylogarithmic factor in $n$.
Such a lower bound in the population protocol model is already given by the result of \cite{El-Hayek_ES25} in the regime of $k \le n^{1/2-\varepsilon}$ for any constant $\varepsilon>0$.

\begin{theorem}[matching lower bound for the gossip model] \label{thm:main_tight}
    For any sufficiently large $n$, any constant $0<c<1/2$ and any $2\le k\le (1/2-c)n$, there exists an initial configuration over exactly $k$ decided opinions (i.e., each of the $k$ opinions is supported by at least one vertex) such that USD in the gossip model requires
        $\widetilde{\Omega}\rbra*{\min\{k,\sqrt{n}\}}$ rounds to reach
        consensus with high probability.
\end{theorem}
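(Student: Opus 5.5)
We will exhibit a configuration in which many opinions are tied, and show that USD in the gossip model cannot break these ties quickly. Let $k' := \min\{k, \lfloor \sqrt{n}\rfloor\}$. The initial configuration has $k'$ ``large'' opinions $1,\dots,k'$, each held by exactly $m := \lfloor n/(2k') \rfloor$ vertices. The remaining $k-k'$ opinions are each held by exactly one vertex, so that all $k$ opinions are present; the condition $k \le (1/2-c)n$ guarantees there is room for this. All other vertices are undecided. A singleton opinion dies in the first round with probability $1-O(1/n)$ per round unless it is adopted, and either way it has negligible influence. The argument therefore concentrates on the large opinions, with a union bound over the singletons that keeps their total mass $O(n)$ and controlled.

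The core claim is that for $T = c' k'/\log^{a} n$ rounds, with high probability at least two large opinions retain support $\Omega(n/(k'\log n))$. Consensus then cannot occur before round $T$. Let $x_i(t)$ be the fraction of opinion $i$ and $u(t)$ the undecided fraction. The USD update gives the conditional expectation
\[
\mathbb{E}[x_i(t+1)\mid \mathcal F_t] = x_i\bigl(x_i + 2u - (1-\textstyle\sum_j x_j)\cdot 0\bigr)\ \text{-type expression} = x_i(x_i + 2u) \text{ up to the exact USD formula},
\]
so differences between opinions grow multiplicatively at rate roughly $1 + O(x_{\max})$ per round, plus martingale noise of variance $O(x_i/n)$ per round. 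Starting from exact ties, the gap $x_1 - x_2$ first has to be created by noise. I will track the quantities $\delta_{ij}(t) = x_i(t)-x_j(t)$ together with the ratio process. The drift of $\delta_{ij}$ is $\delta_{ij}(x_i+x_j+2u-1+\dots)$, whose multiplicative excess over the neutral rate is $O(1/k')$, since all large opinions stay of order $1/k'$ while balanced. Over $T$ rounds this gives an amplification factor of only $\exp(O(T/k')) = O(1)$. The accumulated noise is of order $\sqrt{T\cdot (1/k')/n} = \sqrt{T/(k'n)}$, which stays small compared with $1/k'$ provided $T \ll n/k'$. That condition holds for $T\lesssim k'$ because $k'\le\sqrt n$. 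I will formalize this with a Freedman/Azuma-type bound on the martingale part, after first normalizing by the total decided mass so that the common-mode fluctuation of $u$ cancels. A union bound over pairs then shows that, with high probability, every large opinion stays within a $(1\pm 1/2)$ factor of the average large-opinion mass.

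The main obstacle is controlling the common-mode dynamics, namely the undecided fraction $u(t)$ and the total decided mass, which in the gossip model can oscillate. We need them to stay bounded away from degeneracy, so that no large opinion is wiped out and the per-opinion drift really is $O(1/k')$ times itself. My first attempt is to show that the pair $(u, \sum_i x_i^2)$ converges in $O(\log n)$ rounds to a neighbourhood of the fixed point of the symmetric mean-field map (with $u \approx 1/2$ when $k'$ is large) and remains there. This would use concentration of binomial sums and a stability analysis of the deterministic recursion. The main theorem's analysis techniques can be cited for the concentration lemmas. Finally, the case $k \le \sqrt n$ gives $\Omega(k/\mathrm{polylog})$ rounds and the case $k > \sqrt n$ gives $\Omega(\sqrt n/\mathrm{polylog})$ rounds, which together prove the theorem.
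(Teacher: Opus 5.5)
Your route is genuinely different from the paper's, and much heavier. You try to show that the $k'$ large opinions stay \emph{balanced}, with no tie broken, for $\widetilde{\Omega}(k')$ rounds. That forces you to control every pairwise gap, normalized ratios, and the two-sided behaviour of the undecided fraction.

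The paper proves only what the theorem needs. Consensus means $\npm_t=1$. As long as $\beta_{t-1}\ge 1/2-\xbeta$ and no opinion exceeds $(1+c)\npm_0$, each opinion has drift $\E_{t-1}[\alpha_t(i)]-\alpha_{t-1}(i)=\alpha_{t-1}(i)\bigl(\alpha_{t-1}(i)+1-2\beta_{t-1}\bigr)=O\bigl((\npm_0)^2\bigr)$, using $\npm_0=\Omega(\log n/\sqrt n)\gg\xbeta$. Its conditional variance is $O(\npm_0/n)$. Freedman plus a union bound over the $k$ opinions then shows that $\npm_t$ does not even grow by a constant factor within $c'/\npm_0$ rounds (\cref{lem:hitting time for tnpm and npm}, \cref{item:taualphamaxup}; \cref{lem:consensus time lower bound exercise}).

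The paper's configuration has $\beta_0=1/2$ exactly. If $k\le k^*=\Theta(\sqrt n/\log n)$, all opinions have mass $1/(2k)$; otherwise there are $k^*$ opinions of mass $\frac{1}{2k^*}-\frac{k-k^*}{nk^*}$ plus $k-k^*$ singletons. Hence $\psi_0=-\gamma_0\le 0$, and \cref{lem:taubeta,lem:taupsi} keep $\beta_t\ge 1/2-\xbeta$ from round $0$. Only this one-sided control of $\beta$ is used. Your argument, if completed, would give the stronger conclusion that the large opinions stay balanced, but the theorem does not need it.

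As written, your proposal leaves its hard steps open, and one side claim is false.

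(a) Your configuration has $\beta_0\approx 1/2+(k-k')/n$, which can be close to $1-c$. Then $\psi_0=\Theta(1)$ and $\E[\beta_1]=2\beta_0(1-\beta_0)+\gamma_0\approx 2c(1-c)$. The common-mode ``oscillation'' you worry about is therefore self-inflicted. Shrink the large opinions so that $\beta_0=1/2$.

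(b) Your plan for the common mode, a stability analysis of a recursion in $(u,\sum_i x_i^2)$ near a symmetric fixed point, does not close. $\E_{t-1}[\gamma_t]$ involves $\|\alpha_{t-1}\|_3^3$ and $\|\alpha_{t-1}\|_4^4$, and your configuration is not symmetric. The quantity that works is $\psi_t$, for which $\E_{t-1}[\psi_t]\le\beta_{t-1}/n$ holds unconditionally (\cref{lem:basic inequalities for psi}).

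(c) After dividing by $\beta_t$, the gap increments are no longer sums of independent terms, so Freedman does not apply directly; the paper needs a Taylor expansion and read-$\ell$ bounds for such ratios. Once $\beta$ and $\psi$ are controlled, it is simpler to use the unnormalized gaps. Their one-step factor is $1+\alpha_{t-1}(i)+\alpha_{t-1}(j)+1-2\beta_{t-1}$, and $|1-2\beta_{t-1}|=O(\xbeta+\gamma_{t-1})$ whenever $\beta_{t-1}\ge 1/2-\xbeta$ and $\psi_{t-1}\le\xpsi$.

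(d) A singleton does not die with probability $1-O(1/n)$ per round. With $u\approx 1/2$, its holder keeps the opinion with probability about $u$, and about $\mathrm{Bin}(un,1/n)$ undecided vertices adopt it. It is thus a near-critical branching process that survives each round with constant probability. Surviving singletons are harmless for a lower bound, but your stated control of them is wrong.
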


\subsection{Other Related Works} \label{sec:other related works}
Previous works on USD consider the plurality consensus problem and they show that USD solves it correctly with high probability if the initially most popular opinion has a significant advantage over the others.
\citet{Berenbrink_FGK16,Ghaffari_P16} presented protocols based on USD in the gossip model and showed that their protocols solve the plurality consensus problem correctly with high probability within $\Otilde(1)$ synchronous rounds with $\abs{\Sigma}=O(k)$ states under the assumption that the most popular opinion has a margin of at least $\Omega(\sqrt{n\log n})$ vertices over any other opinion in the initial configuration.
Recently, \citet{Bankhamer_SODA_2022} presented protocols having similar flavor based on USD in both the population protocol model and the gossip model.
A remarkable feature of these protocols is that the consensus time is small for any $k\ge 2$ and any initial configuration, whereas the protocols of \citet{Ghaffari_P16,Berenbrink_FGK16} require the initial margin assumption to bound the consensus time.
There is a work on another variant of USD that deals with a preferred opinion \cite{Berenbrink_BH24} and under noisy channel \cite{dACN20} for $k=2$.

A long line of works studied the consensus times of 3-Majority and related dynamics (e.g., 2-Choices) with $k>2$ opinions on an $n$-vertex complete graph \cite{simple_dynamics,stabilizing_consensus_Becchetti,nearly_tight_analysis,ignore_or_comply,Cooper_SODA25,SS25_sync,2choices}.
Recently, \citet{Cooper_SODA25} showed that 3-Majority in the asynchronous update model (i.e., at every round, a uniformly random vertex is allowed to update its opinion) reaches consensus within $\Otilde(\min\qty{k,\sqrt{n}})$ rounds.
In the synchronous update model where all vertices simultaneously update their opinions at every round, \citet{SS25_sync} showed that the consensus time of 2-Choices is $\Otilde(k)$ and that of 3-Majority is $\Otilde(\min\qty{k,\sqrt{n}})$.
Interestingly, this consensus time bound for 3-Majority is the same as the one for USD in the gossip model.

Several other relevant consensus dynamics have been considered in the literature.
A model that it somewhat similar is the model proposed in \cite{gauy2025voter}, where initialy some nodes do not have an opinion, and adopt any opnion immedieatly. Afterwards, the process behaves exactly like the voter model.
The deterministic majority model, where each agent sets its opinion to the majroity among all of its neighbors, has been studied in \cite{kaaser2015votingtimedeterministicmajority}.
See \cite{consensus_dynamics_becchetti20} for more details on consensus dynamics.
\section{Proof Overview} \label{sec:proof_overview}

We outline the proof of \cref{thm:main} for the gossip model and compare it with arguments from previous work.
For simplicity, we assume that $k\le n/\log n$ throughout this section.
This is only to ensure that with high probability USD does not fall into the all-$\bot$ configuration within one round; it can be relaxed to $k\le n$ without much difficulty.

\subsection{Process in a Nutshell} \label{sec:process in a nutshell}

Following prior work on USD, we track the evolution of the number of
vertices holding each decided opinion.
For $t\ge0$ and $i\in[k]$, define
\begin{align*}
\alpha_t(i) := \text{fraction of vertices holding opinion $i$ at the beginning of round $t$}.
\end{align*}
Then USD can be viewed as a Markov chain $(\alpha_t)_{t\ge0}$ for $\alpha_t = (\alpha_t(1),\dots,\alpha_t(k))$ on the state space 
$\{0, \frac{1}{n}, \frac{2}{n},\ldots, \frac{n-1}{n}, 1\}^k$.
The key progress parameter is
\[
    \npm_t := \max_{i\in[k]} \alpha_t(i),
\]
and the consensus is reached at the first time $t$ when $\npm_t=1$.
Unless all vertices hold $\bot$, it holds that $\npm_t \ge 1/n$.

We also introduce the following quantities:
\begin{align}
    \beta_t := \sum_{i\in[k]} \alpha_t(i), \quad
    \psi_t := \beta_t(2\beta_t-1) - \sum_{i\in[k]} \alpha_t(i)^2. \label{eq:psi_t}
\end{align}
Note that $\beta_t$ is the total fraction of vertices holding a decided opinion.
Roughly speaking, $\psi_t$ is a key parameter that controls the stability of USD.
See \cref{sec:behavior-of-beta_t-psi_t} for more details.

Call an opinion $i$ \emph{weak at round $t$} if $\alpha_t(i) \le 0.9\cdot \npm_t$ and \emph{strong at round $t$} if $\alpha_t(i) \ge 0.95\cdot \npm_t$.
Note that there is a margin between weak and strong opinions, which is crucial for our analysis.
We decompose the analysis into four parts.

\begin{lemma}[informal; see \cref{lem:taubeta,lem:taupsi,lem:growth of tnpt,lem:hitting time for tnpm and npm,lem:unique strong opinion,lem:towards consensus} for the formal statements] \label{lem:main goal}
    Consider USD in the gossip model such that $k\le n/\log n$ and $\beta_0\ge 1/n$.
    \begin{itemize}
        \item[\textup{(I)}] For some $T_\beta=O(\log n)$, with high probability, 
        for all $t\ge T_\beta$, $\beta_t\ge 1/2 - o(\log n / \sqrt{n})$. Moreover, with high probability, for all $t\ge 1$, $\psi_t = o(\log n / \sqrt{n})$.
        \item[\textup{(II)}] Suppose that $\psi_0 = o(\log n / \sqrt{n})$ and $\beta_0\ge 1/2 - o(\log n / \sqrt{n})$. For some $T_\alpha=O(\sqrt{n}(\log n)^3)$, with high probability, $\npm_t \ge \Omega\qty(\frac{(\log n)^2}{\sqrt{n}})$ holds for all $t\ge T_\alpha$.
        \item[\textup{(III)}] Suppose that $\psi_0 = o(\log n / \sqrt{n})$, $\beta_0\ge 1/2 - o(\log n / \sqrt{n})$, and $\npm_0 \ge \Omega\qty(\frac{(\log n)^2}{\sqrt{n}})$. Then, for some $T'=O(\log n / \npm_0)$, there exists exactly one strong opinion at round $T'$.
        \item[\textup{(IV)}] Suppose that $\psi_0 = o(\log n / \sqrt{n})$, $\beta_0\ge 1/2 - o(\log n / \sqrt{n})$, $\npm_0 = \alpha_0(1) \ge \Omega\qty(\frac{(\log n)^2}{\sqrt{n}})$, and $\alpha_0(1) \ge (1+c)\cdot \alpha_0(i)$ for some constant $c>0$ and all $i\ge 2$. Then, USD reaches consensus within $O(\log n / \npm_0)$ rounds with high probability.
    \end{itemize}
\end{lemma}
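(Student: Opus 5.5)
We will prove the four parts separately by drift analysis of the one-round conditional expectations of $\alpha_{t+1}(i)$, $\beta_{t+1}$ and $\psi_{t+1}$, combined with concentration. The starting point is the exact one-round recursion in the gossip model. A vertex with opinion $i$ keeps $i$ iff it samples $i$ or $\bot$; an undecided vertex adopts $i$ with probability $\alpha_t(i)$. Hence
\[
\E[\alpha_{t+1}(i)\mid \alpha_t]=\alpha_t(i)\bigl(\alpha_t(i)+1-\beta_t\bigr)+(1-\beta_t)\alpha_t(i)=\alpha_t(i)\bigl(2-2\beta_t+\alpha_t(i)\bigr).
\]
Summing over $i$ gives
\[
\E[\beta_{t+1}]=2\beta_t-2\beta_t^2+\sum_i\alpha_t(i)^2=\beta_t-\psi_t .
\]
This explains the role of $\psi_t$: it is exactly the negative drift of $\beta$. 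The variance of each coordinate is $O(\alpha_t(i)/n)$, since the update is a sum of $n$ independent indicators. Chernoff/Bernstein bounds therefore give one-round deviations of order $\sqrt{\alpha_t(i)\log n/n}+\log n/n$.

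\textbf{Part (I).} From $\E[\beta_{t+1}]=2\beta_t(1-\beta_t)+\sum_i\alpha_t(i)^2$ the map $\beta\mapsto2\beta(1-\beta)$ has an attracting fixed point at $1/2$. Starting from $\beta_0\ge1/n$, $\beta$ roughly doubles while small, by the lower tail of Chernoff once $\beta n\gtrsim\log n$. We handle the first few rounds from $\beta_0\ge 1/n$ by a separate argument: there is a constant probability of growth per round, so the phase is $O(\log n)$ rounds w.h.p. After that, $\beta$ contracts to within $O(\sqrt{\log n/n})$ of the fixed point in $O(\log n)$ rounds.

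For $\psi$ we note that $\psi_{t+1}$ is a function of the new configuration. We compute its conditional expectation, and the point is that after one synchronous round $\psi_{t+1}$ is essentially $\beta_{t+1}(2\beta_{t+1}-1)-\sum\alpha_{t+1}(i)^2$ evaluated at near-fixed-point values. Our guess is that a direct expansion gives $\E[\psi_{t+1}]=O(\text{noise})$ whenever $\beta_t$ is near $1/2$. We then union bound over all $t$ up to a polynomial horizon, with deviations $O(\sqrt{\log n/n})=o(\log n/\sqrt n)$.

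\textbf{Part (II).} This is the main obstacle. While $\npm_t\lesssim(\log n)^2/\sqrt n$, the drift $\alpha_t(i)(\alpha_t(i)-\psi_t+\dots)$ is comparable to the noise. The plan is to track a potential such as $\gamma_t=\sum_i\alpha_t(i)^2$, or $\widetilde\alpha$-type smoothed maxima, as the paper's notation suggests. Using the recursion, its drift contains a positive term $\sum_i\alpha_t(i)^3$ together with a guaranteed positive variance contribution of order $\beta_t/n$ from the multinomial noise. Given $\psi_t=o(\log n/\sqrt n)$, the variance term alone pushes $\gamma_t$ up by $\Omega(1/n)$ per round. After $\widetilde O(\sqrt n)$ rounds, $\gamma$ therefore reaches $\Omega((\log n)^2/\sqrt n)\cdot\beta$, forcing $\npm\ge\gamma/\beta$. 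The reverse bound, that $\npm$ does not fall back, uses the positive self-reinforcing drift $\alpha(i)^2$ of a large opinion compared with the $\psi$ perturbation, via a supermartingale hitting-time bound. We would argue this with a multiplicative-drift lemma on $\log\gamma$ and Freedman's inequality.

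\textbf{Parts (III) and (IV).} Here we compare opinions pairwise. The ratio $\alpha(i)/\alpha(j)$ evolves by
\[
\frac{2-2\beta+\alpha(i)}{2-2\beta+\alpha(j)}\approx 1+\bigl(\alpha(i)-\alpha(j)\bigr),
\]
since $2-2\beta\approx1$. The gaps are therefore amplified by a factor $1+\Omega(\npm)$ per round, and concentration is adequate because $\npm\ge(\log n)^2/\sqrt n$. In (III), either all near-maximal opinions separate within $O(\log n/\npm)$ rounds, or random fluctuation creates a constant-factor gap, after which amplification takes over; this yields a unique strong opinion. In (IV), the leader's relative advantage grows geometrically and the others are driven to zero, with $\npm$ increasing monotonically. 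Once the other opinions total $O(\log n/n)$, they die out within $O(\log n)$ extra rounds, giving consensus within $O(\log n/\npm_0)$ rounds.
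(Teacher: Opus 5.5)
\textbf{The growth step in (II) fails.} You assume that, once $\psi_t=o(\log n/\sqrt n)$, the unnormalized potential $\gamma_t=\sum_i\alpha_t(i)^2$ gains $\Omega(1/n)$ per round from the variance term. Write $b=\beta_{t-1}$, $g=\gamma_{t-1}$, $p=\psi_{t-1}$ and use $1-2b=-(g+p)/b$. The exact one-round drift is
\[
\E_{t-1}[\gamma_t]-g=-\frac{g(g+p)(3-2b)}{b}+4(1-b)\|\alpha_{t-1}\|_3^3+\|\alpha_{t-1}\|_4^4+\sum_{i}\Var_{t-1}[\alpha_t(i)].
\]
The positive $\|\alpha\|_3^3$ contribution you rely on is cancelled by the $-g^2(3-2b)/b$ part. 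When all decided opinions have equal size, the first three terms equal exactly $g\bigl((b-p)^2/b^2-1\bigr)\approx-4gp$ for $b\approx1/2$, while the variance term is only $\Theta(1/n)$. Conditionally on round $t-2$, $\psi_{t-1}$ has standard deviation $\Theta(1/\sqrt n)$ around a mean of size $o(1/\sqrt n)$. So $p\ge c/\sqrt n$ with constant probability in each round, and then the drift of $\gamma$ is negative as soon as $g\gg 1/\sqrt n$ --- that is, on the whole range $[1/\sqrt n,(\log n)^2/\sqrt n]$ you must cross. The a priori bound $p=o(\log n/\sqrt n)$ only gives $|gp|=o((\log n)^3/n)$, which does not beat $1/n$. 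This is exactly the obstruction the paper flags for the 3-Majority potential.

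The term $-2gp/b$ is just $\gamma$ following $\beta^2$, and the paper removes it by tracking $\widetilde\gamma_t=\gamma_t/\beta_t^2$. First, Cauchy--Schwarz ($\|\alpha\|_3^3\ge\gamma^2/\beta$ and $\|\alpha\|_3^6\le\gamma\|\alpha\|_4^4$) gives $\beta_{t-1}^2\E_{t-1}[\gamma_t]\ge\E_{t-1}[\beta_t]^2\gamma_{t-1}+\beta_{t-1}^3(1-\beta_{t-1})^2/n$. Next, the ratio bound $\E[X/Y]\ge\E[X]/\E[Y]-\Cov(X,Y)/\E[Y]^2$ with $X=\gamma_t$, $Y=\beta_t^2$ (covariance $O(\E_{t-1}[\gamma_t]/n)$) turns this into $\E_{t-1}[\widetilde\gamma_t]\ge\widetilde\gamma_{t-1}+1/(12n)$ whenever $\beta_{t-1}\ge1/2-o(1)$ and $\psi_{t-1},\gamma_{t-1}=o(1)$. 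Finally, the hitting time is bounded by optional stopping with overshoot control, Markov's inequality and $O(\log n)$ restarts.

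\textbf{The $\psi$ part of (I) is only guessed, and the guessed version is too weak.} (I) asks for $\psi_t$ small at \emph{every} $t\ge1$ from an arbitrary start, and your persistence of $\beta_t\ge1/2-o(\log n/\sqrt n)$ needs exactly this, since $\E_{t-1}[\beta_t]=\beta_{t-1}-\psi_{t-1}$. Control ``whenever $\beta_t$ is near $1/2$'' is not enough. Indeed, $\psi=0$ means $\beta=\frac12+\gamma/(2\beta)$, so $\beta_t$ sits well above $1/2$ whenever some opinion is large, and it tends to $1$ in the endgame of (IV). There $\psi$ can a priori be of constant size (e.g.\ $\psi_0=1-\gamma_0$ when $\beta_0=1$), which would push $\beta$ far below $1/2$ in one round. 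The missing fact is configuration-free and follows from the same Cauchy--Schwarz bound. With $m=\E_{t-1}[\beta_t]=\beta_{t-1}-\psi_{t-1}$ one has $\E_{t-1}[\gamma_t]\ge\gamma_{t-1}m^2/\beta_{t-1}^2$. The identity $2m^2-m-\gamma_{t-1}m^2/\beta_{t-1}^2=-m\psi_{t-1}^2/\beta_{t-1}^2$ then gives $\E_{t-1}[\psi_t]\le2\Var_{t-1}[\beta_t]\le2\beta_{t-1}/n$ for every configuration. Combined with $O(1/n)$ bounded differences (or the paper's read-$\ell$ Bernstein bound), this yields $\psi_t\le x_\psi$ for all $1\le t\le n^{O(1)}$ w.h.p., and only then does the $\beta$-persistence argument close.

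\textbf{(III) is too schematic.} Fluctuations do not produce a constant-factor gap. From $\delta_0=0$ they give only $|\delta_t|=\Theta(1/\sqrt n)$ within $O(1/\npm_0)$ rounds with constant probability, via $\E_{t-1}[\delta_t^2]\ge\delta_{t-1}^2+\Omega(\npm_0/n)$ and optional stopping. Amplification from that level also succeeds only with constant probability, so a Doerr-type restart lemma is needed to reach $\sqrt{C\log n/n}$ w.h.p. You also need a separate argument that an opinion, once weak, stays non-strong while other pairs are resolved. The paper does this with $\eta_t(j)$, the $0.9$ vs.\ $0.95$ margin, and a union bound over pairs.
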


\begin{figure}[htbp]
    \centering
    \includegraphics[width=0.8\textwidth]{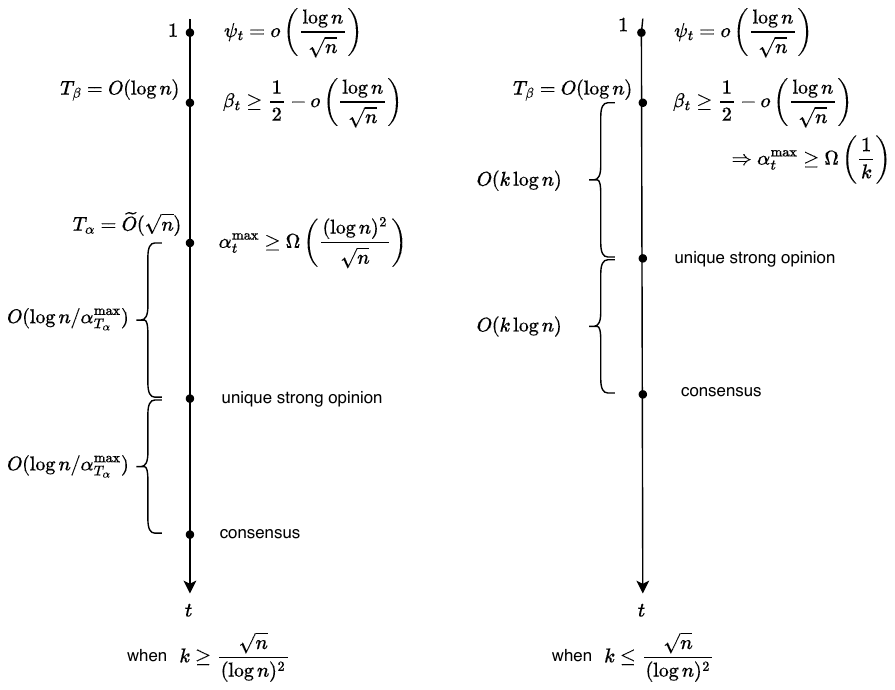}
    \caption{Combining the four parts of \cref{lem:main goal} to prove the consensus time bound.
    \label{fig:proof-flow}}
\end{figure}

The four items above immediately imply that the consensus time is $O(\sqrt{n}(\log n)^3)$.
Moreover, (I) ensures that with high probability, for some $t=O(\log n)$, $\beta_t=\Omega(1)$,
implying that $\npm_t = \Omega(1/k)$.
Thus, if $k\le \sqrt{n}/(\log n)^2$, then (III) and (IV) imply the consensus time bound of $O(\log n / \npm_0) = O(k\log n)$.
Therefore, the consensus time is $\Otilde(\min\qty{k, \sqrt{n}})$ (see \cref{fig:proof-flow}).

Our main technical contribution is to establish part (II).
An important feature is that (II) holds regardless of how large $k$ is, so 
our analysis broadens the range of $k$ from $k\ll n^{1/3}$ (or $k\ll n^{1/2}$ for the population protocol model) in the previous works to $k\le n/\log n$.

Establishing part (III) is also challenging. 
\cite[Phases 1 to 3]{fast_convergence_undecided} proved (III), for $k = O(\sqrt{n}/(\log n)^2)$ for the population protocol model, but their argument crucially relies on the local update property of the population protocol model which cannot be applied to the gossip model.
 
Part (IV) was shown in an earlier work \cite[Theorem 3.2]{Becchetti_SODA2015} for $k = O((n/\log n)^{1/3})$,
which (together with $\beta_0=\Omega(1)$) implies $\npm_0 \ge \frac{C\log^{1/3}n}{n^{1/3}}$, but we prove it under a weaker condition that $\npm_0 \ge \Omega\qty(\frac{(\log n)^2}{\sqrt{n}})$ by combining the argument of \cite{Becchetti_SODA2015} with sharp concentration inequalities from \cite{Cooper_SODA25,SS25_sync}.
In this section, we outline the proofs of (I) to (III) and omit the proof of (IV) for brevity.

\subsection{Difficulties in Extending Previous Techniques}
Before describing the ideas for the proof of \cref{lem:main goal},
we briefly discuss the techniques used in previous work and explain why they cannot be applied to USD directly.

\paragraph{(1) Breakdown of concentration-based analysis for large $k$.}
The key observation introduced by \citet{Becchetti_SODA2015} is that $\beta_t$ concentrates around $1/2$ and remains near this value for a
substantial period of time.
Using this stabilization property, they divided the USD process into several
phases according to $\npm_t$ and 
proved how the gap between the most popular opinion and the second most popular opinion grows in each phase.
This strategy is used in later work \cite{fast_convergence_undecided} to analyze USD in the population protocol model.
Our proof is based on this idea.

\citet{Becchetti_SODA2015} needed 
in their analysis a strong assumption that $k\ll n^{1/3}$  due to the limitation of the standard Chernoff bound.
In this paper, we broaden the range of $k$ to $k\ll \sqrt{n}$ by exploiting the wider martingale-based framework used in
\cite{Cooper_SODA25,SS25_sync}—particularly, we use Freedman's inequality 
together with additional technical ideas (see \cref{sec:concentration_overview} for more details).

However, for $k \gg \sqrt{n}$, the tail bounds from Freedman's inequality do not suffice to yield high-probability bounds, and the analysis based on the concentration of $\alpha_t(i)$ and $\beta_t$ as in \cite{Becchetti_SODA2015} entirely breaks down.

\begin{remark}
In the gossip model, the standard concentration argument yields that the number of remaining decided opinions becomes with high probability at most $\min(k, n\log n/k)$ after the first round if $\beta_0=1$(see \cref{sec:first round beta0=1}), i.e., if all vertices are initially decided. On the other hand, if we could extend the argument in \cite{fast_convergence_undecided} from the population protocol model to the gossip model, we would be able to derive an $\Otilde(k)$ bound on the consensus time  for $k\le O(\sqrt{n}/(\log n)^2)$.
Combining these two observations, we would be able to 
obtain a $\Otilde(\min\qty{k, \sqrt{n}})$ bound on the consensus time if either $k\le O(\sqrt{n}/(\log n)^2)$ or $k\ge\Omega(\sqrt{n}(\log n)^3)$ holds. 
Thus even if it was possible to transfer somehow the analysis in 
\cite{fast_convergence_undecided} to the gossip model, the case of 
$O(\sqrt{n}/(\log n)^2) \le k \le \Omega(\sqrt{n}(\log n)^3)$ would still remain unclear. 
Dealing with $k$ around $\sqrt{n}$ has been always challenging in the literature on USD (e.g., \cite[Section 3.3]{Bankhamer_SODA_2022}).
\end{remark}


\paragraph{(2) Limitations of the $k>\sqrt{n}$ techniques for 3-Majority.}

USD behaves similarly to the well-known consensus dynamics 3-Majority, which has been analyzed extensively for the \emph{entire} range of $2 \le k \le n$ \cite{simple_dynamics,stabilizing_consensus_Becchetti,nearly_tight_analysis,ignore_or_comply,Cooper_SODA25,SS25_sync}.
In 3-Majority, a selected vertex chooses three neighbors uniformly at random and takes the majority opinion held by the three, breaking the tie in favor of the third selected neighbor.
However, several structural and probabilistic features of USD make its analysis, especially in the regime $k \gg \sqrt{n}$, substantially more difficult.

We next explain why all known techniques that successfully handle the
large-$k$ regime in 3-Majority fail for USD.
These approaches fall into two broad categories:

\begin{itemize}

\item \textbf{Coupling + majorization (\cite{ignore_or_comply,Cooper_SODA25}).}
The successful analyses for 3-Majority by \cite{ignore_or_comply,Cooper_SODA25} rely on a coupling with Pull Voting
together with a majorization argument to show that the number of surviving opinions rapidly drops to $O(\sqrt{n})$,
after which the concentration-based analysis becomes effective.

\item \textbf{Potential-based analysis (\cite{SS25_sync}).} Another recent approach by \cite{SS25_sync} is to design a potential function that exhibits an additive drift and keeps growing during the process. 
When the potential function reaches a certain level, the concentration-based analysis becomes effective.

\end{itemize}

\noindent
Neither of the above two approaches transfers to USD.
Unlike 3-Majority, USD cannot be related to Pull Voting through coupling 
due to the existence of undecided vertices.
At the same time, the potential function used in \cite{SS25_sync} itself does not exhibit the additive drift property in USD, again due to undecided vertices.
Specifically, the potential function used in \cite{SS25_sync} is defined by
\begin{align}
    \npt_t = \sum_{i\in[k]} \alpha_t(i)^2. \label{eq:npt_t}
\end{align}
Since $\alpha_t(i)$ given the configuration at round $t-1$ can be written as the sum of $n$ independent random variables (in the gossip model), a standard moment calculation yields that, for 3-Majority,
\[
    \E_{t-1}[\npt_t] = \npt_{t-1} + \Omega(1/n),
\]
where $\E_{t-1}[\cdot]$ is the expectation conditioned on the configuration at round $t-1$.
However, in USD, again due to the emergence of undecided vertices, this inequality does not hold in general.

In summary, although several structural insights from previous work remain conceptually
useful
(e.g., the general ideas of potential-based arguments and concentration-based analysis), none of the existing analytical frameworks transfers to USD in a black-box way.
However, as shown in \cref{sec:growth-of-npm}, we can design another potential function $\tnpt_t$ that does exhibit an additive drift and keeps growing during the process.

\subsection{\texorpdfstring{Proof of \cref{lem:main goal} (I): Behavior of $\beta_t$ and $\psi_t$ (\cref{sec:behavior of beta})}{Proof of Lemma 2.1 (I): Behavior of beta and psi (Section 5.2)}
} 
\label{sec:behavior-of-beta_t-psi_t}

Now we outline the proof of the first item of \cref{lem:main goal}, which states that (i) $\beta_t$ becomes $1/2-o(1)$ with high probability within $O(\log n)$ rounds and remains at least $1/2-o(1)$ thereafter, and (ii) $\psi_t$ is small with high probability for all $t\ge 1$, where $\psi_t$ is defined in \cref{eq:psi_t}
and 
measures the 
change of $\beta_{t}$ from round $t-1$ to $t$ since 
\begin{align}
    \E_{t-1}[\beta_t] &= 2\beta_{t-1}(1-\beta_{t-1}) + \sum_{i\in[k]} \alpha_{t-1}(i)^2 \label{eq:expectation of beta_t 1} \\
    &= \beta_{t-1} - \psi_{t-1} \label{eq:expectation of beta_t 2}.
\end{align}
For example, if $\alpha_0(i)=1/k$ for all $i\in[k]$ (i.e., the initial configuration is  balanced), then $\psi_0 = 1 - 1/k$, which means that $\beta_1$ becomes $\beta_0 - \psi_0 = 1/k$ in expectation, i.e., $\beta_1$ drops significantly.

Interestingly, as shown in \cref{lem:basic inequalities for psi}, after the first round
the expectation of $\psi_t$ is always small:
\[
    \E_{t-1}[\psi_t] \le \frac{1}{n}.
\]
Combined this with a concentration inequality, we obtain the “Moreover" part of \cref{lem:main goal}(I).
Thus, once $\beta_t$ becomes $1/2-o(1)$, it remains at least $1/2-o(1)$ thereafter with high probability.

It remains to show that $\beta_t$ becomes $1/2-o(1)$ within $O(\log n)$ rounds with high probability.
Suppose that $\beta_{t-1} = o(1)$. Then, by \cref{eq:expectation of beta_t 1}, we have $\E_{t-1}[\beta_t] \ge (2-o(1))\beta_{t-1}$.
Therefore, $\beta_t$ grows by a factor of $1+\Omega(1)$ in each round, which ensures that $\beta_t$ becomes $\Omega(1)$ within $O(\log n)$ rounds with high probability.
A similar argument yields that we can show that $\beta_t=1/2-o(1)$ within $O(\log n)$ rounds with high probability.

\subsection{\texorpdfstring{Proof of \cref{lem:main goal} (II): Growth of $\npm_t$ (\cref{sec:behavior of tnpm,sec:behavior of tnpt})}{Proof of Lemma 2.1 (II): Growth of alphamax (Sections 5.3 and 5.4)}} 
\label{sec:growth-of-npm}

We now outline the proof of the second item of \cref{lem:main goal}.
The key idea is to normalize the potential $\npt_t$ of \cref{eq:npt_t} used for 3-Majority by $\beta_t^2$ to obtain the potential $\tnpt_t$ defined by
\begin{align*}
\tnpt_t :=
\begin{cases}
    \displaystyle \sum_{i\in[k]} \left(\frac{\alpha_t(i)}{\beta_t}\right)^{\!2}, & \beta_t>0,\\[1ex]
    0, & \text{otherwise}.
\end{cases}
\end{align*}
A reader familiar with the literature might notice that this quantity is somewhat similar to the \emph{monochromatic distance} introduced in \cite{Becchetti_SODA2015}.
In our notation, the monochromatic distance is defined in \cite{Becchetti_SODA2015} by
\[
    \mathrm{md}_t := \sum_{i\in[k]} \qty( \frac{\alpha_t(i)}{\npm_t} )^2.
\]
The difference between $\tnpt_t$ and $\mathrm{md}_t$ is that $\tnpt_t$ is normalized by $\beta_t^2$ while $\mathrm{md}_t$ is normalized by $(\npm_t)^2$.
This slight modification allows us to evaluate the expected growth of $\tnpt_t$, although $\tnpt_t$ is in a more complex form compared to $\npt_t$.
Using \cref{lem:main goal}(I) that ensures $\beta_{t-1}=\Omega(1)$ and a Taylor approximation, we can show that the potential $\tnpt_t$ has an additive drift
(\cref{lem:basic inequalities for tnpt}):
\[
    \E_{t-1}[\tnpt_t]\approx \frac{\E_{t-1}[\npt_t]}{\E_{t-1}[\beta_t]^2} \ge \tnpt_{t-1} + \Omega(1/n).
\]
Using a variant of the optional stopping theorem, we can show that the potential $\tnpt_t$ reaches $\Omega((\log n)^2/\sqrt{n})$ within $\Otilde(\sqrt{n})$ rounds with high probability.

Once $\tnpt_t$ reaches $\Omega((\log n)^2/\sqrt{n})$, we have
\[
    \npm_t \ge\sum_{i\in[k]} \alpha_t(i)^2 = \tnpt_t\cdot \underbrace{\beta_t^2}_{\Omega(1)\text{ from \cref{lem:main goal}(I)}} \ge \Omega\qty((\log n)^2/\sqrt{n}).
\]

We then show that $\npm_t$ does not decrease too much thereafter.
To this end, we introduce the normalized quantity
\[
    \tnpm_t :=
    \begin{cases}
        \npm_t / \beta_t, & \beta_t > 0, \\
        0, & \beta_t = 0,
    \end{cases}
\]
which is well-behaved because $\beta_t = \Omega(1)$ throughout this part of the process.  
In \cref{lem:basic inequalities for tnpm}, we will prove
\begin{align}
    \E_{t-1}[\tnpm_t] \ge \left(1 - O\qty(\frac 1 n) \right)\tnpm_{t-1}, \label{eq:Epnpm}
\end{align}
which ensures that once $\tnpm_t$ reaches $\Omega((\log n)^2/\sqrt n)$, it remains above this threshold (up to a constant factor) for $\Theta(n)$ rounds with high probability.

\subsection{\texorpdfstring{Proof of \cref{lem:main goal} (III): Gap between Two Opinions (\cref{sec:behavior of delta,sec:unique strong opinion})}{Proof of Lemma 2.1 (III): Gap between two opinions (Sections 5.5 and 5.6)}} 
\label{sec:gap-between-two-opinions}
Recall that an opinion $i$ is weak at round $t$ if $\alpha_t(i) \le 0.9\cdot \npm_t$ and strong at round $t$ if $\alpha_t(i) \ge 0.95\cdot \npm_t$.
Fix two distinct non-weak opinions $i,j\in[k]$.
The goal is to show that (i) either $i$ or $j$ becomes weak within $O(\log n / \npm_0)$ rounds with high probability, and (ii) once an opinion becomes weak, it cannot become strong during the rest of the process.

To this end, we are interested in the gap between $i$ and $j$, which is defined as
\[
    \delta_t := \alpha_t(i) - \alpha_t(j).
\]
Fix a round $t-1\ge 1$.
In \cref{lem:basic inequalities for delta_epsilon}, we will show that the expectation of $\delta_t$ given $\alpha_{t-1}$ is
\begin{align}
    \E_{t-1}[\delta_t] = \delta_{t-1}\qty(\alpha_{t-1}(i)+\alpha_{t-1}(j)+2(1-\beta_{t-1})). \label{eq:expectation of delta_t sec2}
\end{align}
Our aim is to show that $\delta_t$ grows by a factor of $1+\Omega(\npm_{t-1})$ in each round.
However, this does not hold in general.
For example, if $\beta_{t-1}=1$ and $\alpha_{t-1}(i)=\Theta(1/k)$ for all $i\in[k]$, then a standard Chernoff bound argument yields that $\alpha_t(i)\approx \alpha_{t-1}(i)^2 = \Theta(1/k^2)$, yielding $\delta_t \ll \delta_{t-1}$.
Thus we need to put some assumptions on the configuration at round $t-1$, which 
are listed below:

\begin{itemize}
    \item Suppose that both $i$ and $j$ are non-weak at round $t-1$. Then, we have  $\alpha_{t-1}(i)+\alpha_{t-1}(j) \ge 1.8\cdot \npm_{t-1}$.
    \item Suppose that $\beta_{t-1} \ge 1/2 - o(1)$ and $\psi_{t-1} \le o(\log n/\sqrt{n})$, which holds from \cref{lem:main goal}(I).
    Then, since $\psi_{t-1} = \beta_{t-1}(2\beta_{t-1}-1)-\gamma_{t-1}$ and 
    $\gamma_{t-1} = \sum_i \alpha_{t-1}(i)^2 \le \npm_{t-1}\cdot \sum_i \alpha_{t-1}(i) = \npm_{t-1}\cdot \beta_{t-1}$, we have
    \begin{align*}
        2\beta_{t-1} - 1 &\le \frac{\gamma_{t-1} + o(\log n/\sqrt{n})}{\beta_{t-1}} \\
        & = \npm_{t-1} + o(\log n/\sqrt{n}).
    \end{align*}
\end{itemize}
Applying these two items to \cref{eq:expectation of delta_t sec2}, we have
    \begin{align}
        \E_{t-1}[\delta_t] &= \delta_{t-1}\qty(1 + \underbrace{\alpha_{t-1}(i)+\alpha_{t-1}(j)}_{\ge 1.8\cdot \npm_{t-1}}-\underbrace{(2\beta_{t-1}-1)}_{\le \npm_{t-1} + o(\log n/\sqrt{n})}) \nonumber\\
        &\ge \delta_{t-1}\qty(1 + 0.8\cdot \npm_{t-1} - o(\log n/\sqrt{n})) \nonumber\\
        &\ge \delta_{t-1}\qty(1+\Omega(\npm_{t-1})). \label{eq:expectation of delta_t sec2:1}
    \end{align}
In the last inequality, note that $\npm_{t-1} \ge \Omega((\log n)^2/\sqrt{n})$ by the assumption.

Finally, from \cref{eq:Epnpm} and $\beta_t=\Omega(1)$, we have that $\npm_t \ge \Omega(\npm_0)$ for all $t=O(n)$ (in expectation).
Thus, we obtain
\[
    \E_{t-1}[\delta_t] \ge \delta_{t-1}\qty(1+\Omega(\npm_{0})).
\]
Even if we start with $\delta_0 = 0$ (i.e., $\alpha_0(i)=\alpha_0(j)$), by calculation (see \cref{lem:basic inequalities for delta_epsilon}), we have that its second moment satisfies
\[
    \Var_{t-1}[\delta_t] = \Omega\qty(\frac{(1-\beta_{t-1})^2(\alpha_{t-1}(i)+\alpha_{t-1}(j))}{n}) = \Omega\qty(\frac{\npm_0}{n}),
\]
and in particular
\[
    \E_{t-1}[\delta_t^2] \ge \delta_{t-1}^2 + \Omega\qty(\frac{\npm_0}{n}).
\]
Thus, by the optional stopping theorem combined with a widely-known argument from \cite{Doerr11}, we can show that $\abs{\delta_t}$ becomes at least $\Omega(\sqrt{\log n/n})$ within $O(\log n / \npm_0)$ rounds with high probability.
Therefore, within $O(\log n / \npm_0)$ rounds, either $i$ or $j$ becomes weak with high probability.



\subsection{Concentration Bounds via Freedman's Inequality}
\label{sec:concentration_overview}

To make the analysis of \cref{sec:gap-between-two-opinions,sec:growth-of-npm} rigorous, 
we require concentration bounds for several stochastic quantities (e.g., $\delta_t$).
Our approach follows the Freedman-based framework developed in 
\cite{Cooper_SODA25,SS25_sync}, which applies Freedman's inequality (a Bernstein-type concentration inequality for martingales) to obtain tight concentration bounds for various quantities.

In the population protocol model, these inequalities apply directly because 
each update affects only one vertex.  
In the gossip model, however, a synchronous round aggregates $n$ independent 
local changes.
Based on the observation that the one-step difference (e.g., $\alpha_t(i)-\alpha_{t-1}(i)$) can be written as the sum of $n$ independent random variables, \cite{SS25_sync} introduced the concept of \emph{Bernstein condition}, which is a sufficient condition for the application of Freedman's inequality even 
without bounded single-step jumps; see \cref{sec:Bernstein condition}.

However, the framework of \cite{SS25_sync} used for 3-Majority is insufficient for USD.
For 3-Majority, the quantities of interest 
(e.g., $\alpha_t(i)$ or $\delta_t=\alpha_t(i)-\alpha_t(j)$) are linear in 
independent random variables, which allows us to write the one-step difference as the sum of independent random variables.
In USD, however, the presence of undecided vertices destroys the clean drift 
structure of $\alpha_t(i)$, forcing us to analyze normalized 
quantities such as
\[
    \tnpt_t = \frac{\|\alpha_t\|_2^2}{\beta_t^2},
    \qquad 
    \tnpm_t = \frac{\npm_t}{\beta_t}.
\]
While $\beta_t$ concentrates reasonably well, its variance typically 
dominates that of $\alpha_t(i)$, so concentration of the numerator and the denominator 
\emph{separately} does not yield tight bounds for their ratio.  
Thus, the straightforward application of the Freedman-based framework is insufficient in its current form.

To overcome this difficulty, we apply a first-order Taylor expansion of $1/\beta_t$ around $\E_{t-1}[\beta_t]$:
\[
    \frac{1}{\beta_t}
    \ge
    \frac{1}{\E_{t-1}[\beta_t]}
    -
    \frac{\beta_t - \E_{t-1}[\beta_t]}{\E_{t-1}[\beta_t]^2}.
\]
Multiplying by $\alpha_t(i)$ gives
\[
    \frac{\alpha_t(i)}{\beta_t}
    \ge
    \frac{\alpha_t(i)}{\E_{t-1}[\beta_t]}
    -
    \frac{\alpha_t(i)}{\E_{t-1}[\beta_t]^2}
    \,(\beta_t - \E_{t-1}[\beta_t]).
\]
Both $\alpha_t(i)$ and $\beta_t$ are sums of $n$ independent random variables, 
so the right-hand side becomes a quadratic form in such sums.  
Using read-$k$ concentration bounds \cite{read-k,Duppala}, we obtain sharp 
concentration for these quadratic forms, and hence for the normalized 
quantities themselves.

This resolves the main obstacle preventing the direct use of 
Freedman-style arguments for USD and provides the concentration guarantees 
needed throughout the analysis.

\section{Definitions and Main Analysis Tools} \label{sec:preliminaries}


\subsection{Model and Basic Notation} \label{sec:model and basic notation}

For $n\in\Nat$, let $[n]=\{1,2,\dots,n\}$.
For $p>0$ and a vector $x\in\Real^n$, let $\norm{x}_p=\qty(\sum_{i=1}^n |x_i|^p)^{1/p}$ denote the $\ell^p$-norm of $x$.
For a finite set $S$, by $x\sim S$ we mean that $x$ is chosen uniformly at random from~$S$.
For $a,b\in\Real$, we use the shorthand $a\land b = \min\{a,b\}$.
We denote by $\indicator_{\calE}$ the indicator random variable of an event $\calE$, i.e., $\indicator_{\calE} = 1$ if $\calE$ occurs and $0$ otherwise.

Throughout the paper, we fix a set of $n$ vertices, denoted by $V$, with $|V|=n$.  
Each vertex holds an opinion from a finite alphabet $\Sigma = [k]\cup\{\bot\}$, 
where $k\in\mathbb{N}$ is a parameter and $\bot$ represents the undecided state.  
All configurations of the system are elements of $\Sigma^V$.  
The interaction rule (gossip or population protocol) will be specified later, 
but in all cases the opinion update is governed by the same USD update rule 
introduced below.

\begin{definition}[USD Update Rule] \label{def:USD update rule}
  The \emph{USD update rule} is a function $\mathsf{update}\colon\Sigma\times\Sigma\to\Sigma$ defined by 
  \[
    \mathsf{update}(\sigma_1,\sigma_2) = \begin{cases}
      \bot	& \text{if $|\{\sigma_1,\sigma_2,\bot\}|=3$},\\
      \sigma_2 & \text{if $\sigma_1=\bot$},\\
      \sigma_1 & \text{otherwise}.
    \end{cases}
  \]
\end{definition}
Note that, the condition $|\{\sigma_1,\sigma_2,\bot\}|=3$ means that both $\sigma_1$ and $\sigma_2$ are decided and they are distinct.

Using the update rule above,
we define USD in the gossip model.
In this model, every vertex updates simultaneously in each round by interacting with an independently chosen random neighbor.

\begin{definition}[Gossip USD] \label{def:gossip USD}
  The \emph{gossip USD} is the discrete-time Markov chain $(\opn_t)_{t\ge 0}$ on the state space $\Sigma^V$.
  Given $\opn_t\in\Sigma^V$, the next configuration $\opn_{t+1}\in\Sigma^V$ is obtained by the following procedure:
  For each vertex $u\in V$, independently select a random vertex $v\sim V$ and set 
    $\opn_{t+1}(u) = \mathsf{update}(\opn_t(u),\opn_t(v))$.
\end{definition}

We next define the Undecided-State Dynamics (USD) under the population protocol model.
Here, time evolves through a sequence of pairwise interactions:
in each step, a single ordered pair of vertices is sampled uniformly at random, and the initiator 
(the first vertex in the pair) updates its state using the USD update rule.

\begin{definition}[Population Protocol USD] \label{def:PPUSD}
  The \emph{population protocol USD} is the discrete-time Markov chain $(\opn_t)_{t\ge 0}$ on the state space $\Sigma^V$. 
  Given $\opn_t\in\Sigma^V$, the next configuration $\opn_{t+1}\in\Sigma^V$ is obtained by the following procedure:
  Select two uniformly random vertices $u,v\sim V$ with replacement and set 
    $\opn_{t+1}(u) = \mathsf{update}(\opn_t(u),\opn_t(v))$.
  All vertices other than $u$ keep their opinions unchanged, 
  i.e., $\opn_{t+1}(w)=\opn_t(w)$ for all $w\ne u$.
\end{definition}

\begin{remark}
We use the same notation $\opn_t$ for configurations in both models.
The ambient model is always clear from context, whereas introducing 
model-specific symbols such as $\opn_t^{\mathrm{gossip}}$ or 
$\opn_t^{\mathrm{pp}}$ would only create unnecessary notational branching 
without improving clarity.
\end{remark}

The main performance measure of USD is the \emph{consensus time}, defined by
  \[
  \taucons = \inf\qty{t\ge 0 \colon \exists a\in [k],\,\forall u\in V, \opn_t(u)=a}.
  \]
Note that $\taucons=\infty$ if for some $t\ge 0$, $\opn_t(u)=\bot$ for all $u\in V$.

We sometimes use $\calF_t$ to denote the natural filtration generated by the process $(\opn_s)_{s\le t}$.
Using this notation, we often use abbreviations $\Pr_{t-1}[\cdot]$, $\E_{t-1}[\cdot]$, and $\Var_{t-1}[\cdot]$ for $\Pr[\cdot\mid \calF_{t-1}]$, $\E[\cdot\mid \calF_{t-1}]$, and $\Var[\cdot\mid \calF_{t-1}]$, respectively.
For a random variable $X_t$ defined for step $t$ in USD, we often use the term ``$X_t$ conditioned on round $t-1$'' 
to denote this random variable conditioned on $\calF_{t-1}$.


In \cref{sec:key quantities}, we will define the quantities of interest mentioned in \cref{sec:proof_overview} and review their basic properties.

\subsection{Bernstein Condition}\label{sec:Bernstein condition}

The Bernstein condition \cite{SS25_sync} provides uniform control on the moment generating
function of a random variable and will be used throughout to obtain
concentration bounds for the quantities appearing in USD, such as
$\alpha_t(i)$ and their normalized forms.

\begin{definition}[Bernstein condition; \cite{SS25_sync}]\label{def:Bernstein condition}
Let $D,s\ge 0$.  
A random variable $X$ satisfies the $(D,s)$-Bernstein condition if
for all $\lambda\in\mathbb{R}$ with $|\lambda|D<3$,
\[
    \E[e^{\lambda X}]
    \le
    \exp\!\left(
        \frac{\lambda^2 s/2}{1 - |\lambda|D/3}
    \right).
\]
It satisfies the \emph{one-sided} $(D,s)$-Bernstein condition if the
same bound holds for all $\lambda\ge 0$ with $\lambda D<3$.
\end{definition}

The one-sided condition yields the following standard tail bound
(see, e.g., \cite[Proposition 2.14]{high-dim-stat}):
\begin{remark}\label{remark:Bernstein condition and concentration}
If $X$ satisfies the one-sided $(D,s)$-Bernstein condition, then
for all $h\ge 0$,
using $\Pr\left[e^{\lambda X}\ge e^{\lambda h}\right] \le e^{-\lambda h} \E[e^{\lambda X}]$ and
the bound on $\E[e^{\lambda X}]$, and taking $\lambda = h/(s+hD/3)$, we get
\[
    \Pr[X\ge h]
    \le
    \exp\qty(
        -\frac{h^2/2}{s + (hD)/3}
    ).
\]
\end{remark}

We record the basic closure properties used repeatedly later
(\cite[Lemma~3.4]{SS25_sync}).

\begin{lemma}[Basic properties of Bernstein condition; \cite{SS25_sync}]
    \label{lem:Bernstein condition}
    Let $X,Y$ be random variables. We have the following:
    \begin{enumerate}
        \renewcommand{\labelenumi}{(\roman{enumi})}
        \item \label{item:BC for bounded rv}
        If $\E[X]=0$ and $\abs{X}\leq \bounded$, then $X$ satisfies $\qty(\bounded,\Var\qty[X])$-Bernstein condition.
        \item \label{item:BC for upper bounded rv}
        If $X$ satisfies $\qty(\bounded,\variance)$-Bernstein condition, then
        $X$ satisfies $\qty(\bounded',\variance')$-Bernstein condition for any 
        $\bounded' \ge \bounded$ and $\variance' \ge \variance$ (same for the one-sided version).
        \item \label{item:BC for linear transformation}
        If $X$ satisfies $\qty(\bounded,\variance)$-Bernstein condition, then $aX$ satisfies $(\abs{a}\bounded,a^2\variance)$-Bernstein condition for any $a \in \Real$ (same for the one-sided version if $a\geq 0$).
        \item \label{item:BC for dominated rv}
        If $X$ satisfies one-sided $\qty(\bounded,\variance)$-Bernstein condition and $Y\preceq X$, 
        then $Y$ satisfies one-sided $\qty(\bounded, \variance)$-Bernstein condition, where $\preceq$ denotes the stochastic domination (see \cref{def:Stochastic domination}).
        \item \label{item:BC for independent rvs} 
        If $X_1,\ldots,X_n$ are independent and each $X_i$ satisfies $\qty(D,s_i)$-Bernstein condition, then $\sum_{i\in [n]}X_i$ satisfies $\qty(D,\sum_{i\in [n]}s_i)$-Bernstein condition. 
        \item \label{item:BC for NA rvs} 
        If $X_1,\ldots,X_n$ are negatively associated and each $X_i$ satisfies one-sided $\qty(D,s_i)$-Bernstein condition, then $\sum_{i\in [n]}X_i$ satisfies one-sided $\qty(D,\sum_{i\in [n]}s_i)$-Bernstein condition. 
    \end{enumerate}
\end{lemma}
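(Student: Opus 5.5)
This is a collection of standard facts; I would prove each item directly from \cref{def:Bernstein condition}, working in each case at the level of the moment generating function.

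Items (ii) and (iii) are immediate.
\begin{itemize}
\item For (ii), the function $\frac{\lambda^2 s/2}{1-|\lambda|D/3}$ is nondecreasing in both $s$ and $D$. Moreover, the admissible range $|\lambda|D'<3$ is contained in $|\lambda|D<3$, so the bound for $(D,s)$ implies the bound for $(D',s')$.
\item For (iii), substitute $\lambda' = a\lambda$. Then $\E[e^{\lambda aX}]\le \exp\bigl(\frac{(a\lambda)^2 s/2}{1-|a\lambda|D/3}\bigr)$ whenever $|\lambda||a|D<3$, which is exactly the condition for $(|a|D,a^2s)$. In the one-sided case we need $a\ge0$ so that $\lambda\ge0$ maps to $\lambda a\ge 0$.
\end{itemize}

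Items (iv)--(vi) are also short.
\begin{itemize}
\item For (iv), $Y\preceq X$ and $x\mapsto e^{\lambda x}$ nondecreasing for $\lambda\ge0$ give $\E[e^{\lambda Y}]\le\E[e^{\lambda X}]$. This is why only the one-sided version transfers.
\item For (v), independence factorizes the MGF: $\E[e^{\lambda\sum X_i}]=\prod_i\E[e^{\lambda X_i}]$. The exponents then add, since all of them share the common denominator $1-|\lambda|D/3$.
\item For (vi), negative association gives $\E[\prod_i f_i(X_i)]\le\prod_i\E[f_i(X_i)]$ for nonnegative nondecreasing $f_i$, applied iteratively to disjoint index sets. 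Taking $f_i=e^{\lambda x}$ with $\lambda\ge0$ yields the same product bound, restricted to the one-sided case.
\end{itemize}

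The only item requiring a computation is (i), the classical Bernstein MGF bound, and I expect it to be the one real (though standard) step. Let $\sigma^2=\Var[X]$ and $|\lambda|D<3$. I would expand
\[
\E[e^{\lambda X}] = 1+\sum_{m\ge2}\frac{\lambda^m\E[X^m]}{m!}\le 1+\frac{\lambda^2\sigma^2}{2}\sum_{m\ge2}\frac{2|\lambda|^{m-2}D^{m-2}}{m!},
\]
using $\E[X]=0$ and $|\E[X^m]|\le D^{m-2}\sigma^2$. Then I would use $m!\ge 2\cdot 3^{m-2}$ to bound the series by the geometric sum $\sum_{j\ge0}(|\lambda|D/3)^j=\frac{1}{1-|\lambda|D/3}$. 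Finally, $1+u\le e^u$ gives the claimed bound with $s=\Var[X]$. The inequality $m!\ge 2\cdot3^{m-2}$ is checked by induction on $m\ge2$. This handles both signs of $\lambda$ simultaneously, so the two-sided condition holds.
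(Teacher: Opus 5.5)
Your proposal is correct: each item follows from the moment-generating-function arguments you give, and (i) is the classical Bernstein bound obtained from $|\E[X^m]|\le D^{m-2}\Var[X]$ and $m!\ge 2\cdot 3^{m-2}$. The paper gives no proof of its own here and simply imports the lemma from \cite[Lemma~3.4]{SS25_sync}, so your standard arguments supply what the citation takes for granted. The one detail worth making explicit is in (vi): the negative-association product inequality is normally stated for bounded monotone functions, so apply it to $\min\{e^{\lambda x},M\}$ and let $M\to\infty$ by monotone convergence.
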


We will need Bernstein bounds for sums of
random variables that may not be independent, namely, the sum of a read-$\ell$ family (\cref{def:read-ell family}).

\begin{lemma}
    \label{lem:Bernstein condition 2}
    We have the following:
    \begin{enumerate}
        \renewcommand{\labelenumi}{(\roman{enumi})}
        \item \label{item:BC for sum of rvs} 
        If $X_1,\ldots,X_n$ are random variables such that each $X_i$ satisfies $\qty(D,s_i)$-Bernstein condition, then $\sum_{i\in [n]}X_i$ satisfies $\qty(nD, n\sum_{i\in [n]}s_i)$-Bernstein condition. 
        \item \label{item:BC for read-k family} 
        If $Y_1, \ldots, Y_n$ is a read-$\ell$ family and each $Y_i$ satisfies $\qty(D,s_i)$-Bernstein condition, then the sum $\sum_{i\in [n]}Y_i$ satisfies $\qty(\ell D, \ell \sum_{i\in [n]}s_i)$-Bernstein condition. 
    \end{enumerate}
\end{lemma}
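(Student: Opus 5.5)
Both parts are direct moment generating function computations built from \cref{def:Bernstein condition}, combined with Hölder's inequality. Write $\psi(\lambda) := \frac{\lambda^2/2}{1-|\lambda|D/3}$. Then the hypothesis on $X_i$ reads $\E[e^{\lambda X_i}] \le \exp(s_i\psi(\lambda))$ for all $|\lambda|D<3$.

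\emph{Part (i).} Fix $\lambda$ with $|\lambda| nD<3$. By the generalized Hölder inequality with $n$ exponents all equal to $n$,
\[
\E\Big[e^{\lambda\sum_i X_i}\Big]=\E\Big[\prod_i e^{\lambda X_i}\Big]\le \prod_i \E\big[e^{n\lambda X_i}\big]^{1/n}.
\]
Since $|n\lambda|D<3$, the hypothesis applies with parameter $n\lambda$, so each factor is at most
\[
\exp\Big(\tfrac1n s_i\,\tfrac{n^2\lambda^2/2}{1-|\lambda| nD/3}\Big)=\exp\Big(\tfrac{\lambda^2 (n s_i)/2}{1-|\lambda|(nD)/3}\Big).
\]
Multiplying over $i$ gives $\exp\big(\frac{\lambda^2 (n\sum_i s_i)/2}{1-|\lambda|(nD)/3}\big)$, which is exactly the $(nD,\,n\sum_i s_i)$-Bernstein condition. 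The admissible range $|\lambda|nD<3$ is precisely the range required by the conclusion.

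\emph{Part (ii).} Here I would recall the read-$\ell$ structure: there are independent variables $Z_1,\dots,Z_m$ and sets $P_i\subseteq[m]$ with $Y_i=f_i(Z_{P_i})$, where each $Z_j$ lies in at most $\ell$ of the sets $P_i$. The key tool is the standard fractional-cover (Shearer/Finner-type) inequality for read-$\ell$ families, as in \cite{read-k}: for nonnegative $g_i$ depending only on $Z_{P_i}$,
\[
\E\Big[\prod_i g_i\Big]\le\prod_i \E\big[g_i^{\ell}\big]^{1/\ell}.
\]
Applying it with $g_i=e^{\lambda Y_i}$ for $|\lambda|\ell D<3$, we use the hypothesis at parameter $\ell\lambda$ to get
\[
\E\big[e^{\ell\lambda Y_i}\big]^{1/\ell}\le \exp\Big(\tfrac{\lambda^2 (\ell s_i)/2}{1-|\lambda|\ell D/3}\Big).
\]
Taking the product over $i$ yields the $(\ell D,\,\ell\sum_i s_i)$-Bernstein condition.

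\emph{Main obstacle.} The only nontrivial ingredient is the product-moment inequality for read-$\ell$ families. If it is not already available from \cite{read-k}, I would prove it by induction on $m$, integrating out one $Z_j$ at a time. At each step, Hölder's inequality is applied to the (at most $\ell$) factors that depend on $Z_j$, using exponent $\ell$ on each of them and exponent $1$ on a constant slack for the unused slots. This is Finner's inequality. The exponent bookkeeping must be checked so that each $g_i$ ends with total exponent $\ell$ and outer power $1/\ell$. Part (i) is the special case $\ell=n$, in which the Hölder step is immediate.
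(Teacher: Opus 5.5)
Your proposal is correct and matches the paper's proof. In part (i) you apply H\"older with all $n$ exponents equal to $n$ and invoke the hypothesis at parameter $n\lambda$. In part (ii) you apply the read-$\ell$ product-moment inequality to $e^{\lambda Y_i}$ and invoke the hypothesis at parameter $\ell\lambda$; this inequality is the paper's \cref{lem:moment of read-k family}, cited from \cite{Duppala}, and is equivalent to Finner's inequality. The one correction is attribution: the real-valued form of that inequality is what the paper takes from \cite{Duppala}, whereas \cite{read-k} does not provide this moment-generating-function-level bound (as the paper itself notes).
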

\begin{proof}[Proof of \cref{item:BC for sum of rvs}]
    Since $X_i$ satisfies $\qty(D,s_i)$-Bernstein condition, $nX_i$ satisfies $\qty(nD, n^2s_i)$-Bernstein condition from \cref{item:BC for linear transformation} of \cref{lem:Bernstein condition}, 
    i.e., for $\lambda\in \mathbb{R}$ such that $\abs{\lambda} n\bounded<3$, we have $\E\qty[\e^{\lambda nX_i}] \leq \exp\qty(\frac{\lambda^2 n^2s_i/2}{1-(\abs{\lambda} n\bounded)/3})$.
    Consider $\lambda\in \mathbb{R}$ such that $\abs{\lambda} n\bounded<3$.
    For such $\lambda$, we have
    \begin{align*}
        \E\qty[\prod_{i\in [n]}\e^{\lambda X_i}]
        &\leq \qty(\prod_{i\in [n]}\E\qty[\e^{\lambda X_i n}])^{1/n} &&(\text{From H\"{o}lder's inequality})\\
        &\leq \qty(\prod_{i\in [n]}\exp\qty(\frac{\lambda^2 n^2s_i  / 2}{1-(\abs{\lambda} n\bounded)/3}))^{1/n} \\
        &= \exp\qty(\frac{\lambda^2 n\sum_{i\in [n]}s_i  / 2}{1-(\abs{\lambda} n\bounded)/3}).
    \end{align*}
\end{proof}
\begin{proof}[Proof of \cref{item:BC for read-k family}]
    Since $Y_i$ satisfies $\qty(D,s_i)$-Bernstein condition, $\ell Y_i$ satisfies $\qty(\ell D, \ell^2s_i)$-Bernstein condition from \cref{item:BC for linear transformation} of \cref{lem:Bernstein condition}, 
    i.e., for $\lambda\in \mathbb{R}$ such that $\abs{\lambda} \ell\bounded<3$, we have $\E\qty[\e^{\lambda \ell Y_i}] \leq \exp\qty(\frac{\lambda^2 \ell^2s_i/2}{1-(\abs{\lambda} \ell\bounded)/3})$.
    Consider $\lambda\in \mathbb{R}$ such that $\abs{\lambda} \ell\bounded<3$.
    For such $\lambda$, we have
    \begin{align*}
        \E\qty[\prod_{i\in [n]}\e^{\lambda Y_i}]
        &\leq \qty(\prod_{i\in [n]}\E\qty[\e^{\lambda Y_i \ell}])^{1/\ell} &&(\text{From \cref{lem:moment of read-k family}})\\
        &\leq \qty(\prod_{i\in [n]}\exp\qty(\frac{\lambda^2 \ell^2s_i  / 2}{1-(\abs{\lambda} \ell\bounded)/3}))^{1/\ell} \\
        &= \exp\qty(\frac{\lambda^2 \ell\sum_{i\in [n]}s_i  / 2}{1-(\abs{\lambda} \ell\bounded)/3}).
    \end{align*}
\end{proof}

\subsection{Freedman's Inequality and Drift Analysis}\label{sec:drift analysis}

Drift analysis allows us to bound the hitting time of a stochastic process
once we control both its expected evolution and the concentration of its
increments.  We briefly recall the basic mechanism.

Let $(X_t)_{t\ge 0}$ be a process with
\[
    \E_{t-1}[X_t] \ge X_{t-1} + \drift
    \qquad\text{whenever } X_{t-1} < a,
\]
for some $\drift>0$ and threshold $a > X_0$,
and define the stopping time
\[
    \tau := \inf\{t \ge 0 : X_t \ge a\}.
\]
Consider the stopped process 
\[
    Y_t := X_{t\wedge\tau} - \drift\,(t\wedge\tau),
\]
which is a submartingale.  
If the increments $Y_t - Y_{t-1}$ satisfy a Bernstein condition,
Freedman's inequality yields, with high probability, 
\[
    Y_T \ge Y_0 - \varepsilon = X_0 - \varepsilon
    \qquad\text{for all sufficiently large } T.
\]
On the event $\{T<\tau\}$, we have $X_T<a$ and hence
\[
    X_T - \drift T = Y_T \ge X_0 - \varepsilon,
\]
which implies
\begin{align}
    T \le \frac{a - X_0 + \varepsilon}{\drift}. \label{eq:T bound by drift and concentration}
\end{align}
Thus $\tau$ is at most this value with high probability.
This is the standard way in which positive drift and concentration combine
to yield upper bounds on hitting times; see, e.g., \cite{Lengler2020-bo}.
The following lemma, introduced in \cite{SS25_sync}, summarizes the above discussion, which is based on the Freedman's inequality.

\begin{lemma}[Lemma 3.5 of \cite{SS25_sync}]
    \label{lem:Freedman stopping time additive}
    Let $(X_t)_{t\in \Nat_0}$ be a sequence of random variables and
    let $(\calF_t)_{t\in\Nat_0}$ be a filtration such that $ X_t $ is $ \calF_t $-measurable for all $t\ge 0$.
    Let $\tau$ be a stopping time with respect to $(\calF_t)_{t\in \Nat_0}$.
    Let $\bounded,\variance \ge 0$ and $\drift\in\Real$ be parameters.
    Suppose the following condition holds for any $t\geq 1$: conditioned on $\mathcal{F}_{t-1}$, 
         \begin{enumerate}[label=$(C\arabic*)$]
           \item $\indicator_{\tau>t-1}\rbra*{\E_{t-1}[X_t]-X_{t-1}-\drift}\leq 0$, \label{item:C1}
           \item $\indicator_{\tau>t-1}\rbra*{X_t-X_{t-1}-\drift}$ satisfies one-sided $\qty(\bounded,\variance)$-Bernstein condition. \label{item:C2}
         \end{enumerate}
       For a parameter $h>0$, define stopping times
       \begin{align*} 
         \tauxplus\defeq \inf\qty{t\geq 0\colon X_t\geq X_0+h} \text{ and }
         \tauxminus \defeq \inf\qty{t\geq 0\colon X_t\leq X_0-h}.
       \end{align*}
       Then, we have the following:
         \begin{enumerate}
           \renewcommand{\labelenumi}{(\roman{enumi})}
           \item \label{item:positive drift} Suppose $\drift\geq 0$. Then, for any $h,T>0$ such that $z \defeq h-\drift\cdot T > 0$, we have
           \[
           \Pr\qty[\tauxplus \le \min\{T,\tau\}] 
            \le \exp\qty(- \frac{z^2/2}{\variance T + (z \bounded)/3}).
           \]
           \item \label{item:negative drift} Suppose $\drift< 0$. Then, for any $h,T>0$ such that $ z \defeq (-\drift)\cdot T - h > 0$, we have
           \[
           \Pr\qty[\min\cbra{\tauxminus,\tau} >T] 
            \le \exp\qty(- \frac{z^2/2}{\variance T + (z \bounded)/3}).
           \]
         \end{enumerate}
     \end{lemma}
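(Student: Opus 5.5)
We prove (i) by applying Freedman's inequality to a stopped, drift-compensated process; (ii) then follows by a sign flip. Define
\[
Y_t \defeq X_{t\wedge\tau} - X_0 - \drift\,(t\wedge\tau).
\]
Its increment is $Y_t - Y_{t-1} = \indicator_{\tau>t-1}\rbra*{X_t - X_{t-1} - \drift}$. By \ref{item:C1} this increment has nonpositive conditional mean, so $(Y_t)$ is a supermartingale. By \ref{item:C2} it satisfies the one-sided $(\bounded,\variance)$-Bernstein condition conditioned on $\calF_{t-1}$.

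The core step is an exponential supermartingale bound. Fix $\lambda\ge 0$ with $\lambda\bounded<3$ and set $\phi(\lambda) = \frac{\lambda^2\variance/2}{1-\lambda\bounded/3}$. Then $M_t \defeq \exp(\lambda Y_t - t\,\phi(\lambda))$ satisfies $\E_{t-1}[M_t]\le M_{t-1}$, directly from the Bernstein condition on the increment. On the event $\{\tauxplus\le\min\{T,\tau\}\}$, let $\sigma\defeq\tauxplus\wedge T$. At $\sigma$ the process is not yet stopped, so $Y_\sigma = X_\sigma - X_0 - \drift\sigma \ge h - \drift T = z$, using $\drift\ge0$ and $\sigma\le T$.

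Apply optional stopping to the bounded stopping time $\sigma$. Since $\sigma\le T$ and $\phi\ge0$, we get $\E[\exp(\lambda Y_\sigma - T\phi(\lambda))]\le \E[M_\sigma]\le 1$. Markov's inequality then gives
\[
\Pr[\tauxplus\le\min\{T,\tau\}] \le \exp\rbra*{-\lambda z + T\phi(\lambda)}.
\]
Choosing $\lambda = z/(\variance T + z\bounded/3)$, exactly as in \cref{remark:Bernstein condition and concentration} with $s$ replaced by $\variance T$, yields the claimed bound $\exp\rbra*{-\frac{z^2/2}{\variance T + z\bounded/3}}$. One checks $\lambda\bounded<3$ because $\lambda\bounded = \frac{z\bounded}{\variance T+z\bounded/3} < 3$.

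For (ii), on the event $\{\min\{\tauxminus,\tau\}>T\}$ the process is unstopped up to time $T$ and $X_T > X_0 - h$. Hence
\[
Y_T = X_T - X_0 - \drift T > -h + (-\drift)T = z.
\]
The same exponential supermartingale, evaluated at the deterministic time $T$, gives $\Pr[Y_T\ge z]\le\exp(-\lambda z + T\phi(\lambda))$, and the same choice of $\lambda$ completes the proof.

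The one point needing care is measurability and the order of conditioning. Condition \ref{item:C2} is stated conditioned on $\calF_{t-1}$ and carries the factor $\indicator_{\tau>t-1}$, which is $\calF_{t-1}$-measurable, so the supermartingale step for $M_t$ holds pathwise. This is straightforward.
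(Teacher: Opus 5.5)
Your proof is correct and takes essentially the same route as the paper. The paper does not reprove this lemma; it cites \cite{SS25_sync}, and \cref{sec:drift analysis} sketches the same idea: Freedman's inequality applied to the stopped, drift-compensated process $X_{t\wedge\tau}-\drift\,(t\wedge\tau)$. You simply inline the exponential-supermartingale proof of the Freedman step.

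There are two small loose ends. First, when $\variance T=0$ your choice of $\lambda$ gives $\lambda\bounded=3$, which is outside the allowed range, so you should instead let $\lambda\uparrow 3/\bounded$. Second, in (ii) you implicitly take $T\in\Nat$. This is genuinely needed: for non-integer $T$ you only control $X_{\lfloor T\rfloor}$, and the stated bound can then fail.
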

      \begin{remark}
        Since \cref{item:C1} implies $\indicator_{\tau>t-1}\rbra*{X_t-X_{t-1}-R}\leq \indicator_{\tau>t-1}\rbra*{X_t-\E_{t-1}[X_t]}$, we can use the following \cref{item:C2'} instead of \cref{item:C2}:
        \begin{enumerate}[label=$(C\arabic*')$]
          \setcounter{enumi}{1}
          \item $\indicator_{\tau>t-1}\rbra*{X_t-\E_{t-1}[X_t]}$ satisfies one-sided $\qty(\bounded,\variance)$-Bernstein condition. \label{item:C2'}
        \end{enumerate}
      \end{remark}

In this paper, we frequently use the following lemma, which is directly derived from \cref{lem:Freedman stopping time additive} and is formulated for convenience in our applications.
The proof is deferred to \cref{sec:tools for drift analysis}.
Roughly speaking, the result concerns a quantity $X_t$ of interest that satisfies the Bernstein condition and ensures that the following hold (with a sufficiently large probability):
\begin{itemize}
  \item If for all $t\ge 1$, $\E_{t-1}[X_t]\geq X_{t-1}+\drift$ for some $\drift<0$ (i.e., $\E_{t-1}[X_t]\geq X_{t-1}-r$ for some $r>0$), then $X_{t}> I^-$ holds for all $t\leq (1-\epsilon)\frac{X_0-I^-}{-\drift}$ ($=(1-\epsilon)\frac{X_0-I^-}{r}$).
  In other words, $X_t$ does not decrease too much if $\E_{t-1}[X_t]$ is not too small compared to $X_{t-1}$. For example, we apply this claim for $X_t=\tnpm_t$, which may have a tiny negative drift \cref{eq:Epnpm}.
  \item If for all $t\ge 1$, $\E_{t-1}[X_t]\geq X_{t-1}+\drift$ for some $\drift>0$, then $X_{T}\geq I^+$ holds for some $T\geq (1+\epsilon)\frac{I^+-X_0}{\drift}$. In other words, $X_t$ increases rapidly if it exhibits a positive drift. For example, we apply this claim for $X_t = \delta_t$, turning a positive multiplicative drift given in \cref{eq:expectation of delta_t sec2:1} into an additive drift
  $\E_{t-1}[\delta_t] \ge \delta_{t-1} +\Omega(\npm_{0}\delta_0)$.
\end{itemize}

\begin{lemma}
    \label{lem:Useful drift lemma}
    Let $(X_t)_{t\in \Nat_0}$ be a sequence of random variables and
    let $(\calF_t)_{t\in\Nat_0}$ be a filtration such that $ X_t $ is $ \calF_t $-measurable for all $t\ge 0$.
    Let $\tau^*$ be a stopping time with respect to $(\calF_t)_{t\in \Nat_0}$.
    For parameters $I^-$ and $I^+$, define stopping times
    \begin{align*} 
      \tau^+\defeq \inf\qty{t\geq 0\colon X_t\geq I^+} \text{ and }
      \tau^- \defeq \inf\qty{t\geq 0\colon X_t\leq I^-}.
    \end{align*}
    Then, we have the following:
    \begin{enumerate}
        \item \label{item:negative drift useful} 
        Suppose the conditions \cref{item:C1,item:C2} are valid for parameters $\bounded,\variance \ge 0$, $\drift<0$ and $\tau=\tau^*$.
        Then, for any positive constant $\epsilon>0$, and any $T\leq \frac{(1-\epsilon)(X_0-I^-)}{-\drift}$, 
        \begin{align*}
         \Pr\qty[\tau^- \le \min\{T,\tau^*\}] 
         \le \exp\qty(- \frac{\epsilon^2(X_0-I^-)^2/2}{\variance T + (\epsilon(X_0-I^-)\bounded)/3}).
        \end{align*}
        \item \label{item:positive drift useful} 
        Suppose the conditions \cref{item:C1,item:C2} are valid for parameters $\bounded,\variance \ge 0$, $\drift>0$ and $\tau=\min\{\tau^+,\tau^-,\tau^*\}$. 
        Then, for any positive constant $\epsilon>0$ and any $T\geq \frac{(1+\epsilon)(I^+-X_0)}{\drift}$, 
        \begin{align*}
        \Pr\qty[\tau^+>T \text{ and } \tau^*>T] 
          &\le \exp\qty(- \frac{(X_0-I^-)^2/2}{\variance T + ((X_0-I^-) \bounded)/3})+\exp\qty(- \frac{\epsilon^2(I^+-X_0)^2/2}{\variance T  + (\epsilon(I^+-X_0)\bounded)/3}), \\
        \Pr\qty[\tau^+>\tau^- \text{ and } \tau^*>T] 
          &\le \exp\qty(- \frac{(X_0-I^-)^2/2}{\variance T + ((X_0-I^-) \bounded)/3})+\exp\qty(- \frac{\epsilon^2(I^+-X_0)^2/2}{\variance T  + (\epsilon(I^+-X_0)\bounded)/3}).
         \end{align*}
    \end{enumerate}
\end{lemma}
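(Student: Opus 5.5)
Both items follow from \cref{lem:Freedman stopping time additive} by choosing $h$ and $z$ so that the stated exponents appear. No new probabilistic input is needed.

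\emph{Item~\ref{item:negative drift useful}.} Set $h \defeq X_0 - I^-$, so that $\tauxminus = \tau^-$. We need to bound $\Pr[\tau^-\le\min\{T,\tau^*\}]$, which is the event that $X$ drops by $h$ early. \cref{lem:Freedman stopping time additive}(ii) instead bounds the probability that the stopped process \emph{fails} to drop. So we apply the lemma to the reflected process $X'_t \defeq -X_t$. The conditions \ref{item:C1} and \ref{item:C2} are one-sided in the upward direction, so we do not transfer them to $X'$. Instead we re-run the Freedman argument on the centred martingale $M_t \defeq \sum_{s\le t\wedge\tau^*}\bigl(X_s - X_{s-1} - \drift\bigr)$. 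This is a supermartingale by \ref{item:C1}, and its increments satisfy the one-sided Bernstein condition by \ref{item:C2}.

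Since $X_{\tau^-} \le X_0 - h$, on the event $\{\tau^-\le\min\{T,\tau^*\}\}$ we have $M_{\tau^-} \le -h - \drift\tau^- \le -h + (-\drift)T \le -\epsilon h$, using $T \le (1-\epsilon)h/(-\drift)$. This sign needs care, because $M$ is bounded below only in the right direction. The cleanest route is to apply \cref{lem:Freedman stopping time additive}(i) to $Y_t \defeq -(X_t - \drift t)$ with zero drift and deviation $\epsilon h$. Then $z = \epsilon h$, and we read off
\[
\exp\!\Bigl(-\frac{\epsilon^2 h^2/2}{sT + \epsilon h D/3}\Bigr).
\]
If the one-sided condition points the wrong way for this use of (i), I would argue that \ref{item:C2} as used in \cref{lem:Freedman stopping time additive} controls exactly the deviation relevant for downward hitting. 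Failing that, I would fall back to the remark's condition \ref{item:C2'}.

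\emph{Item~\ref{item:positive drift useful}.} Take $\tau = \min\{\tau^+,\tau^-,\tau^*\}$, as the hypotheses allow. We split the bad event into two parts. First, $\{\tau^- \le T,\ \tau^-<\tau^+\}$: the process falls by $X_0 - I^-$ despite a positive drift. This is bounded by the same Freedman estimate with deviation $X_0-I^-$ and zero effective drift (dropping $\drift T\ge 0$ only helps). That gives the first exponential term. Second, the event that neither $\tau^+$ nor $\tau^-$ occurs by $T$ while $\tau^*>T$. Then $\min\{\tauxplus,\tau\}>T$, and \cref{lem:Freedman stopping time additive}(ii), applied to $-X$ with drift $-\drift$ and $h = I^+-X_0$, gives $z = \drift T - h \ge \epsilon h$ because $T\ge(1+\epsilon)h/\drift$. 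That gives the second term. Both displayed inequalities follow, since each event is contained in the union of these two. For the event $\{\tau^+>\tau^-\}$ we use that either $\tau^-\le T$ (first term) or $\tau^->T$, which forces $\tau^+>T$ (second term).

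The main obstacle is bookkeeping the direction of the one-sided Bernstein condition when reflecting the process. I would handle it by phrasing every estimate in terms of the upward deviations of $X_t - \E_{t-1}[X_t]$ that \ref{item:C2}/\ref{item:C2'} actually control.
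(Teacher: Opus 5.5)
Your reduction is the paper's: reflect the process and read both items off \cref{lem:Freedman stopping time additive}. In item~2, your two-term split (part~(ii) for $-X_t$ with drift $-\drift$ and $h=I^+-X_0$, plus a zero-drift part~(i) estimate with $h=X_0-I^-$) is exactly the paper's.

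The gap is in item~1. You never settle which direction the hypotheses act in, and the reading you state explicitly makes item~1 false. That reading treats (C1)--(C2) as upward conditions on $X_t$ itself, not to be transferred to $-X_t$. Under it, neither the supermartingale $M_t$ nor your fallbacks can work. Consider the deterministic process $X_t=X_0-2\lvert\drift\rvert t$. It satisfies $\E_{t-1}[X_t]\le X_{t-1}+\drift$. Its increment $X_t-X_{t-1}-\drift=-\lvert\drift\rvert$ satisfies the one-sided $(\bounded,\variance)$-Bernstein condition for every $\bounded,\variance\ge0$. Yet with $\epsilon=1/4$ and $h=X_0-I^-\ge 4\lvert\drift\rvert$, it hits $I^-$ at time $\lceil h/(2\lvert\drift\rvert)\rceil\le(1-\epsilon)h/\lvert\drift\rvert$ with probability one, while the claimed bound tends to $0$ as $\bounded,\variance\to0$. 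Falling back to (C2$'$) as written does not help either: it bounds $X_t-\E_{t-1}[X_t]$, the same upward direction, which is irrelevant for hitting $I^-$.

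The missing step is to read the hypotheses as (C1)--(C2) for $-X_t$ with drift $-\drift$. Explicitly, $\indicator_{\tau>t-1}(X_{t-1}+\drift-\E_{t-1}[X_t])\le 0$, and $\indicator_{\tau>t-1}(X_{t-1}+\drift-X_t)$ satisfies the one-sided $(\bounded,\variance)$-Bernstein condition. This is how the paper's proof and every application of the lemma use them, often through the reflected (C2$'$), a Bernstein bound on $\E_{t-1}[X_t]-X_t$.

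With that reading, your ``cleanest route'' is immediate. The process $Y_t=-X_t+\drift t$ has $Y_t-Y_{t-1}=X_{t-1}+\drift-X_t$, so it satisfies (C1)--(C2) with zero drift verbatim. On the bad event, $Y_{\tau^-}\ge Y_0+h+\drift\tau^-\ge Y_0+h+\drift T\ge Y_0+\epsilon h$. This differs only cosmetically from the paper, which applies part~(i) to $-X_t$ with drift $-\drift$ at $T=(1-\epsilon)h/(-\drift)$. Your version also handles $T$ strictly below that threshold without the monotonicity-in-$z$ step.

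Your item~2 already relies silently on the same reflected hypotheses. Applying part~(ii) to $-X_t$ requires them, and so does dominating $X_{t-1}-X_t\le X_{t-1}+\drift-X_t$ for the zero-drift term. Stating the reflected hypotheses once at the outset makes the whole proposal consistent and complete.
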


\section{Key Quantities} \label{sec:key quantities}
In this section, we introduce the quantities that will be used throughout our
analysis of USD.  For each quantity, we establish its conditional expectation,
conditional variance, and a one-sided Bernstein condition.  These properties
form the backbone of the drift arguments used later.

Since the definitions are intentionally consolidated here, the presentation is
dense; readers are encouraged to refer back to this section as needed when
following the proofs in \cref{sec:proof_gossip,sec:proof_pp}.

\begin{definition}[Key Quantities] \label{def:key-quantities}
Let $(\opn_t)_{t\ge 0}$ be either the gossip USD or the population protocol USD.
For each round $t\ge 0$, define:

\begin{itemize}

\item For each $i\in[k]$, let $\alpha_t(i) := \frac{1}{n}\abs{\{u\in V : \opn_t(u)=i\}}$, and let $\beta_t := \sum_{i\in[k]} \alpha_t(i)$ denote the total fraction of vertices holding a decided opinion.

\item $\npt_t := \norm{\alpha_t}_2^2 = \sum_{i\in[k]} \alpha_t(i)^2$.

\item $\psi_t := \beta_t(2\beta_t-1) - \npt_t$.

\item 
$\tnpt_t := \npt_t/\beta_t^2$. If $\beta_t=0$, we define $\tnpt_t=0$.

\item $\npm_t := \norm{\alpha_t}_\infty = \max_{i\in[k]} \alpha_t(i)$.

\item 
$\tnpm_t := \npm_t / \beta_t$. If $\beta_t=0$, we define $\tnpm_t=0$.

\item For $i,j\in[k]$ and $\varepsilon\in[0,1]$, define $\delta_t^{(\epsilon)}(i,j)\defeq \alpha_t(i)-(1+\varepsilon)\alpha_t(j)$ and $\delta_t(i,j)\defeq \delta_t^{(0)}(i,j)$. If $i$ and $j$ are clear from the context, we write $\delta_t$ or $\delta_t^{(\epsilon)}$ instead of $\delta_t(i,j)$ or $\delta_t^{(\epsilon)}(i,j)$, respectively.

\item Let $I_t\defeq \min\{i\in [k]:\alpha_t(i)=\npm_t\}$ be the most popular opinion at round $t$.
For $i\in[k]$, let $\eta_t(i) := \delta_t^{(\ceta)}(I_t,i) = \npm_t - (1+\ceta)\alpha_t(i)$, where the constant $\ceta>0$ is fixed in \cref{def:strong}.
\end{itemize}
\end{definition}

Throughout this paper, we shall use $\E_{t-1}[\cdot]$ to denote the expectation conditioned on 
round $t-1$.
For example, $\E_{t-1}[\alpha_t]$ is the expectation of $\alpha_t$ conditioned on 
round $t-1$.

\subsection{Basic Properties in the Gossip Model} \label{sec:basic properties in the gossip model}

We now present the basic properties (expectations, variances, and Bernstein conditions) of the key quantities introduced in \cref{def:key-quantities} in the gossip model.
These results are applied repeatedly in the phase analysis.
All proofs of the lemmas in this section are deferred to \cref{sec:proof of basic properties},
as they are routine calculations or applications of lemmas in \cref{sec:Bernstein condition}.

\begin{lemma}[Basic properties for $\alpha_t(i)$ in the gossip model]
  \label{lem:basic inequalities for alpha}
  For the gossip USD, the quantity $\alpha_t(i)$ satisfies the following:
  \begin{enumerate}
    \item (Expectation) \label{item:expectation of alpha} 
    $\E_{t-1}[\alpha_t(i)]= \alpha_{t-1}(i)\qty(\alpha_{t-1}(i)+2\qty(1-\beta_{t-1}))$.
    \item (Variance) \label{item:variance of alpha} 
    $\Var_{t-1}[\alpha_t(i)] = \frac{\alpha_{t-1}(i)}{n}\qty[ (1-\beta_{t-1})(1+\beta_{t-1}-2\alpha_{t-1}(i))+\alpha_{t-1}(i)(\beta_{t-1}-\alpha_{t-1}(i))]$.
    Specifically, $\Var_{t-1}[\alpha_t(i)] \leq \frac{\alpha_{t-1}(i)}{n}$ and $\Var_{t-1}[\alpha_t(i)] \geq \frac{\qty(1-\beta_{t-1})^2\alpha_{t-1}(i)}{n}$ hold.
    \item (Bernstein condition) \label{item:bernstein condition of alpha}
    $\alpha_t(i)-\E_{t-1}[\alpha_t(i)]$ conditioned on round $t-1$ satisfies $\qty(\frac{1}{n},\frac{\alpha_{t-1}(i)}{n})$-Bernstein condition.
  \end{enumerate}
\end{lemma}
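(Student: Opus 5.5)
The plan is to write $\alpha_t(i)$, conditioned on round $t-1$, as an average of $n$ independent indicators and read off all three items. For each vertex $u\in V$, let $Z_u := \indicator_{\opn_t(u)=i}$, so that $\alpha_t(i)=\frac1n\sum_{u}Z_u$. Given $\calF_{t-1}$, the $Z_u$ are independent, since each vertex samples its partner independently. Write $a=\alpha_{t-1}(i)$ and $b=\beta_{t-1}$. By \cref{def:USD update rule}, there are three types of vertex:
\begin{itemize}
\item a vertex holding $i$ keeps $i$ exactly when its sample is $i$ or $\bot$, which happens with probability $p_1 = a + (1-b)$;
\item an undecided vertex becomes $i$ exactly when it samples $i$, with probability $p_2=a$;
\item a vertex holding some other decided opinion never becomes $i$, so $p_3=0$.
\end{itemize}

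For the expectation, the vertex counts of the first two types are $na$ and $n(1-b)$. Hence
\[
\E_{t-1}[\alpha_t(i)] = a(a+1-b)+(1-b)a = a\bigl(a+2(1-b)\bigr),
\]
which is item~\ref{item:expectation of alpha}.

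For the variance, independence gives
\[
\Var_{t-1}[\alpha_t(i)]=\frac{1}{n^2}\bigl(na\,p_1(1-p_1)+n(1-b)\,p_2(1-p_2)\bigr)=\frac{a}{n}\bigl[(a+1-b)(b-a)+(1-b)(1-a)\bigr].
\]
I will expand this and regroup it as $(1-b)(1+b-2a)+a(b-a)$. Checking the regrouping: $(a+1-b)(b-a)=a(b-a)+(1-b)(b-a)$, and $(1-b)(b-a)+(1-b)(1-a)=(1-b)(1+b-2a)$. This gives the exact formula in item~\ref{item:variance of alpha}.

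The two bounds come from elementary inequalities. For the upper bound, the bracket is $p_1(1-p_1)+(1-b)a(1-a)$. Using $p_1(1-p_1)\le 1-p_1 = b-a$ and $(1-b)a(1-a)\le 1-b$, it is at most $(b-a)+(1-b)\le 1$. For the lower bound, both $1+b-2a\ge 1-b$ (because $a\le b$) and $a(b-a)\ge 0$ hold. So the bracket is at least $(1-b)^2$.

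For the Bernstein condition, I write
\[
\alpha_t(i)-\E_{t-1}[\alpha_t(i)]=\sum_u \tfrac1n\bigl(Z_u-\E_{t-1}[Z_u]\bigr).
\]
Each summand has mean zero, absolute value at most $1/n$, and variance $\Var[Z_u]/n^2$. By \cref{lem:Bernstein condition}\ref{item:BC for bounded rv}, each summand satisfies the $(1/n,\Var[Z_u]/n^2)$-Bernstein condition. Summing the independent terms with \cref{lem:Bernstein condition}\ref{item:BC for independent rvs} gives the $(1/n,\Var_{t-1}[\alpha_t(i)])$-Bernstein condition. Then \cref{lem:Bernstein condition}\ref{item:BC for upper bounded rv}, together with the variance upper bound $a/n$, yields item~\ref{item:bernstein condition of alpha}.

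The only real obstacle is keeping the algebra of the variance formula straight. Everything else is a direct application of the listed closure properties.
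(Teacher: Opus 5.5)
Your proof is correct and follows the paper's argument essentially step for step: the per-vertex probabilities $a+1-b$, $a$, $0$ give the expectation, the independent Bernoulli variances give the exact variance formula, and items (i), (v), (ii) of \cref{lem:Bernstein condition} give the Bernstein condition. There is one typo in your upper-bound step: the bracket should be $p_1(1-p_1)+(1-b)(1-a)$, as in your own display, not $p_1(1-p_1)+(1-b)a(1-a)$; since $(1-b)(1-a)\le 1-b$ also holds, the bound $\Var_{t-1}[\alpha_t(i)]\le a/n$ is unaffected.
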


\begin{lemma}[Basic properties for $\beta_t$ in the gossip model]
  \label{lem:basic inequalities for beta}
  For the gossip USD, the quantity $\beta_t$ satisfies the following:
  \begin{enumerate}
    \item (Expectation) $\E_{t-1}[\beta_t] = 2\beta_{t-1}(1-\beta_{t-1}) + \gamma_{t-1} = \beta_{t-1} - \psi_{t-1}$. \label{item:expectation of beta} 
    \item (Variance) $\Var_{t-1}[\beta_t]=\frac{(\beta_{t-1}-\gamma_{t-1})(1-\beta_{t-1}+\gamma_{t-1})+\gamma_{t-1}^2-\norm{\alpha_{t-1}}_3^3}{n}$.
     In particular, $\Var_{t-1}[\beta_t] \leq \frac{\beta_{t-1}}{n}$. \label{item:variance of beta} 
     \item (Bernstein condition) $\beta_t-\E_{t-1}[\beta_t]$ conditioned on round $t-1$ satisfies $\qty( \frac{1}{n}, \frac{\beta_{t-1}}{n} )$-Bernstein condition. \label{item:bernstein condition of beta}
  \end{enumerate}
\end{lemma}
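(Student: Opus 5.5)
We handle the three items of the lemma on $\beta_t$ in turn, using the decomposition $\beta_t=\frac1n\sum_{u\in V}Z_u$. Here $Z_u=\indicator\{\opn_t(u)\in[k]\}$, and conditioned on round $t-1$ the $Z_u$ are independent Bernoulli variables, because each vertex samples its partner independently. We compute the success probability of each $Z_u$ from the USD update rule (\cref{def:USD update rule}), writing $\opn=\opn_{t-1}$ and $\alpha=\alpha_{t-1}$, $\beta=\beta_{t-1}$, $\gamma=\gamma_{t-1}$.

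\emph{Success probabilities.} If $\opn(u)=\bot$, then $u$ is decided at round $t$ iff its sampled partner is decided, so $p_u=\beta$. If $\opn(u)=i\in[k]$, then $u$ stays decided iff the partner holds $i$ or $\bot$, so $p_u=\alpha(i)+1-\beta$.

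\emph{Expectation.} Summing over vertices gives
\[
\E_{t-1}[\beta_t]=(1-\beta)\beta+\sum_i\alpha(i)\bigl(\alpha(i)+1-\beta\bigr)=2\beta(1-\beta)+\gamma .
\]
By the definition of $\psi$, this equals $\beta-\psi_{t-1}$. This is just \cref{item:expectation of alpha} of \cref{lem:basic inequalities for alpha} summed over $i$, plus the contribution of the undecided vertices.

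\emph{Variance.} By independence, $\Var_{t-1}[\beta_t]=\frac1{n^2}\sum_u p_u(1-p_u)$. The undecided vertices contribute $n(1-\beta)\beta(1-\beta)$. The vertices holding opinion $i$ contribute $n\alpha(i)\bigl(\alpha(i)+1-\beta\bigr)\bigl(\beta-\alpha(i)\bigr)$. Expanding $\sum_i\alpha(i)(\alpha(i)+1-\beta)(\beta-\alpha(i))$ in terms of $\beta$, $\gamma=\sum\alpha(i)^2$ and $\norm{\alpha}_3^3=\sum\alpha(i)^3$, and then adding $\beta(1-\beta)^2$, should rearrange into the stated form
\[
(\beta-\gamma)(1-\beta+\gamma)+\gamma^2-\norm{\alpha}_3^3 .
\]
This regrouping is the only place where I expect to need care, and it is purely algebraic.

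For the bound $\Var_{t-1}[\beta_t]\le\beta/n$, a cruder route suffices. Since $p_u(1-p_u)\le 1-p_u$:
\begin{itemize}
\item for decided vertices, $1-p_u=\beta-\alpha(i)\le\beta$, so their total contribution is at most $n\beta\cdot\beta$;
\item for undecided vertices, $p_u(1-p_u)\le p_u=\beta$, so their total contribution is at most $n(1-\beta)\beta$.
\end{itemize}
Adding these gives at most $n\beta$, hence $\Var_{t-1}[\beta_t]\le\beta/n$.

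\emph{Bernstein condition.} Each summand $\frac1n(Z_u-p_u)$ has mean zero and absolute value at most $1/n$. By \cref{item:BC for bounded rv} of \cref{lem:Bernstein condition}, it satisfies the $\bigl(\frac1n,\frac{p_u(1-p_u)}{n^2}\bigr)$-Bernstein condition. Since the summands are independent conditioned on round $t-1$, \cref{item:BC for independent rvs} of \cref{lem:Bernstein condition} shows that $\beta_t-\E_{t-1}[\beta_t]$ satisfies the $\bigl(\frac1n,\Var_{t-1}[\beta_t]\bigr)$-Bernstein condition. Monotonicity (\cref{item:BC for upper bounded rv} of \cref{lem:Bernstein condition}) together with $\Var_{t-1}[\beta_t]\le\beta/n$ then upgrades this to the $\bigl(\frac1n,\frac{\beta_{t-1}}n\bigr)$-Bernstein condition, as claimed.
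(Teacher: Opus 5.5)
Your proof is correct. Items (i) and (iii) match the paper's argument: the paper also writes $\beta_t-\E_{t-1}[\beta_t]$ as a sum of the independent centered indicators $\frac1n(\indicator_{\opn_t(v)\neq\bot}-\E_{t-1}[\indicator_{\opn_t(v)\neq\bot}])$ and applies items (i), (v) and (ii) of \cref{lem:Bernstein condition}.

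The difference is in item (ii). The paper does not use this per-vertex decomposition for the variance. It first computes $\Cov_{t-1}[\alpha_t(i),\alpha_t(j)]=-\frac{(1-\beta_{t-1})\alpha_{t-1}(i)\alpha_{t-1}(j)}{n}$ and a closed form for $\Cov_{t-1}[\alpha_t(i),\beta_t]$, then sums the latter over $i$. It derives $\Var_{t-1}[\beta_t]\le\beta_{t-1}/n$ from the resulting closed form, using $\gamma_{t-1}\le\beta_{t-1}$ and the Cauchy--Schwarz bound $\gamma_{t-1}^2\le\beta_{t-1}\norm{\alpha_{t-1}}_3^3$.

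Your route sums the Bernoulli variances of the ``decided'' indicators and bounds each by $\min\{p_u,1-p_u\}$. It is shorter and needs no norm inequality. The paper's detour is not wasted, though: its covariance claims, notably $\Cov_{t-1}[\alpha_t(i),\beta_t]\le\alpha_{t-1}(i)/n$, are reused later in the analysis of $\delta_t$ and $\tnpm_t$.

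The regrouping you left unchecked does go through. We have $\sum_i\alpha(i)(\alpha(i)+1-\beta)(\beta-\alpha(i))=\beta\gamma-\norm{\alpha}_3^3+(1-\beta)(\beta^2-\gamma)$. Adding $\beta(1-\beta)^2$ gives $(1-\beta)(\beta-\gamma)+\beta\gamma-\norm{\alpha}_3^3$. Writing $\beta\gamma=\gamma(\beta-\gamma)+\gamma^2$ turns this into the stated expression.

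One aside is slightly off. Summing \cref{item:expectation of alpha} of \cref{lem:basic inequalities for alpha} over $i$ already yields $2\beta(1-\beta)+\gamma$, because $\E_{t-1}[\alpha_t(i)]$ already counts undecided vertices adopting $i$. Adding the undecided contribution again would double count. This does not affect your direct computation, which is correct.
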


\begin{lemma}[Basic properties for $\delta^{(\varepsilon)}_t$ in the gossip model]
  \label{lem:basic inequalities for delta_epsilon}
  For the gossip USD, the quantity $\delta_t^{(\varepsilon)}\defeq\delta_t^{(\epsilon)}(i,j)$ satisfies the following:
  \begin{enumerate}
    \item (Expectation) \label{item:expectation of delta_epsilon} 
    $\E_{t-1}\qty[\delta_t^{(\epsilon)}]=\delta_{t-1}^{(\epsilon)}\qty(\alpha_{t-1}(i)+\alpha_{t-1}(j)+2(1-\beta_{t-1}))+\varepsilon\alpha_{t-1}(i)\alpha_{t-1}(j)$. In particular, $\E_{t-1}[\delta_t]=\delta_{t-1}+\delta_{t-1}\npm_{t-1}\qty(\frac{\alpha_{t-1}(i)+\alpha_{t-1}(j)}{\npm_{t-1}}-\frac{\gamma_{t-1}+\psi_{t-1}}{\beta_{t-1}\npm_{t-1}})$.
    \item (Variance) \label{item:variance of delta}
    $\Var_{t-1}\qty[\delta_t]\geq \frac{(1-\beta_{t-1})^2}{n}\qty(\alpha_{t-1}(i)+\alpha_{t-1}(j))$.
    \item (Bernstein condition) \label{item:bernstein condition for delta_epsilon}
    The difference $\delta_t^{(\epsilon)}-\E_{t-1}\qty[\delta_t^{(\epsilon)}]$ conditioned on round $t-1$ satisfies 
    $\qty(\frac{2(1+\epsilon)}{n},s)$-Bernstein condition for $s=\frac{2}{n}\qty(\alpha_{t-1}(i)+(1+\epsilon)^2\alpha_{t-1}(j))$
  \end{enumerate}
\end{lemma}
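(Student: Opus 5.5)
The plan is to write each vertex's new opinion as an independent indicator and derive all three items from that decomposition. Condition on round $t-1$. For each vertex $u$ define $X_u := \indicator[\opn_t(u)=i]$ and $Y_u := \indicator[\opn_t(u)=j]$. In the gossip model the pairs $(X_u,Y_u)$ are independent across $u$, because each vertex samples its partner independently. Then
\[
\delta_t^{(\epsilon)} = \frac1n\sum_{u\in V}(X_u-(1+\epsilon)Y_u),
\]
a sum of $n$ independent terms, each bounded in absolute value by $(1+\epsilon)/n$. All three items will be read off this representation.

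\textbf{Expectation.} I use linearity together with \cref{lem:basic inequalities for alpha}(\ref{item:expectation of alpha}), which gives $\E_{t-1}[\alpha_t(i)]=\alpha_{t-1}(i)(\alpha_{t-1}(i)+2(1-\beta_{t-1}))$ and the analogous formula for $j$. Write $a=\alpha_{t-1}(i)$, $b=\alpha_{t-1}(j)$ and $r=2(1-\beta_{t-1})$. Then
\[
a^2+ar-(1+\epsilon)(b^2+br) = (a-(1+\epsilon)b)(a+b+r) + \epsilon ab,
\]
which one checks by expanding: the right-hand side is $a^2+ab+ar-(1+\epsilon)(ab+b^2+br)+\epsilon ab$, and the $ab$ terms cancel. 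This is the first claim. For the ``in particular'' form, set $\epsilon=0$ and rewrite $r=2-2\beta_{t-1}=1-(2\beta_{t-1}-1)$. Using $2\beta_{t-1}-1=(\gamma_{t-1}+\psi_{t-1})/\beta_{t-1}$, which is the definition of $\psi$, the expression regroups as $\delta_{t-1}+\delta_{t-1}\npm_{t-1}(\cdots)$.

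\textbf{Variance.} Independence across vertices gives $\Var_{t-1}[\delta_t]=\frac1{n^2}\sum_u \Var(X_u-Y_u)$, so it suffices to bound each vertex's contribution from below. The intended lower bound comes from undecided vertices. A vertex $u$ with $\opn_{t-1}(u)=\bot$ has $X_u-Y_u$ equal to $1$ with probability $a$, to $-1$ with probability $b$, and to $0$ otherwise. Its variance is therefore
\[
a+b-(a-b)^2 \ge a+b-(a+b)^2 = (a+b)(1-a-b) \ge (a+b)(1-\beta_{t-1}).
\]
There are $n(1-\beta_{t-1})$ undecided vertices, so their total contribution is at least $\frac{(1-\beta_{t-1})^2(a+b)}{n}$. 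All other vertices contribute nonnegative variance, which proves the bound. I expect this step to need the most care, but the contribution of undecided vertices alone already suffices.

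\textbf{Bernstein condition.} Each centered summand $\frac1n(Z_u-\E Z_u)$ with $Z_u=X_u-(1+\epsilon)Y_u$ has absolute value at most $2(1+\epsilon)/n$. By \cref{lem:Bernstein condition}(\ref{item:BC for bounded rv}) it satisfies the Bernstein condition with that $D$ and with its variance. That variance is at most $\frac{1}{n^2}\E[Z_u^2]\le\frac1{n^2}(\Pr[X_u=1]+(1+\epsilon)^2\Pr[Y_u=1])$, since $X_uY_u=0$. Summing over $u$ with \cref{lem:Bernstein condition}(\ref{item:BC for independent rvs}) gives
\[
\frac1n\bigl(\E_{t-1}[\alpha_t(i)]+(1+\epsilon)^2\E_{t-1}[\alpha_t(j)]\bigr).
\]
By the expectation formula, $\E_{t-1}[\alpha_t(i)]\le a(a+2)\le 2a$ only if I bound crudely, so instead I use $a+r\le 2-\beta_{t-1}+a\le2$, which gives $\E_{t-1}[\alpha_t(i)]\le 2a$ and likewise for $j$. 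This yields $s=\frac2n(a+(1+\epsilon)^2b)$, and monotonicity (\cref{lem:Bernstein condition}(\ref{item:BC for upper bounded rv})) finishes the proof.
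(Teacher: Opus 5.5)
Your proof is correct, and item (i) is the same computation as in the paper. For (ii) and (iii) you argue at the level of individual vertices, whereas the paper reuses facts already proved for the opinion fractions $\alpha_t(\cdot)$.

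\textbf{Item (ii).} The paper writes $\Var_{t-1}[\delta_t]=\Var_{t-1}[\alpha_t(i)]+\Var_{t-1}[\alpha_t(j)]-2\Cov_{t-1}[\alpha_t(i),\alpha_t(j)]$. It then combines the bound $\Var_{t-1}[\alpha_t(i)]\ge(1-\beta_{t-1})^2\alpha_{t-1}(i)/n$ with the fact that the covariance is nonpositive (\cref{claim:covariance of alpha}). You instead use independence across vertices and count only the undecided ones.

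\textbf{Item (iii).} The paper rescales the $(\frac1n,\frac{\alpha_{t-1}(\cdot)}{n})$-conditions of $\alpha_t(i)$ and $\alpha_t(j)$. It then adds the two dependent terms with the H\"older-based \cref{lem:Bernstein condition 2} (\cref{item:BC for sum of rvs}); the factor-$2$ loss in both $D$ and $s$ there is exactly where the $2$'s in the statement come from. In your version, the $2$ in $D$ comes from the range bound $2+\epsilon\le 2(1+\epsilon)$, and the $2$ in $s$ comes from $\E_{t-1}[\alpha_t(i)]\le 2\alpha_{t-1}(i)$.

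\textbf{Comparison.} Your route is more self-contained, since it needs neither the covariance claim nor the H\"older lemma. It could even give slightly sharper parameters, e.g.\ $D=(2+\epsilon)/n$ and $s=\Var_{t-1}[\delta_t^{(\epsilon)}]$. The paper's route has the advantage of using the $\alpha$-level lemma as a black box.

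\textbf{One fix.} The phrase ``$\E_{t-1}[\alpha_t(i)]\le a(a+2)\le 2a$'' is false as written, because $a(a+2)\le 2a$ fails for every $a>0$. The bound you actually need is $a\,(a+2(1-\beta_{t-1}))\le 2a$, which follows from $a\le\beta_{t-1}$. Just delete the incorrect intermediate step.
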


\begin{lemma}[Basic properties for $\npt_t$ in the gossip model]
  \label{lem:basic inequalities for gamma}
  For the gossip USD, the quantity $\gamma_t$ satisfies the following:
  \begin{enumerate}
    \item (Expectation Upper Bound) $\E_{t-1}[\gamma_t] \leq 10\gamma_{t-1}$. \label{item:expectation of gamma}
    \item (Expectation Lower Bound) $\beta_{t-1}^2\E_{t-1}[\gamma_t]\geq \E_{t-1}[\beta_t]^2\gamma_{t-1}+\frac{\beta_{t-1}^3(1-\beta_{t-1})^2}{n}$. \label{item:expectation of gammatilde}
    \item (Bernstein condition) $\npt_t-\E_{t-1}[\npt_t]$ conditioned on round $t-1$ satisfies $\qty(\frac{2}{n}, \frac{20\gamma_{t-1}}{n})$-Bernstein condition. \label{item:bernstein condition of gamma}
  \end{enumerate}
\end{lemma}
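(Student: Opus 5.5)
We treat the three items separately. Throughout, fix round $t-1$ and write $\alpha(i)=\alpha_{t-1}(i)$, $\beta=\beta_{t-1}$, $\gamma=\gamma_{t-1}$. In the gossip model the new opinions $Z_u\in\Sigma$ ($u\in V$) are independent. Vertex $u$ ends with opinion $i$ with probability $1-(\beta-\alpha(i))$ if $\opn_{t-1}(u)=i$, with probability $\alpha(i)$ if $\opn_{t-1}(u)=\bot$, and with probability $0$ otherwise. Hence $\alpha_t(i)=\frac1n\sum_u\indicator_{Z_u=i}$. Put $m_i:=\E_{t-1}[\alpha_t(i)]=\alpha(i)(\alpha(i)+2(1-\beta))$ as in \cref{lem:basic inequalities for alpha}. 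The starting point for both expectation bounds is
\[
\E_{t-1}[\gamma_t]=\sum_{i\in[k]}\bigl(m_i^2+\Var_{t-1}[\alpha_t(i)]\bigr).
\]

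\textbf{Item 1 (upper bound).} We have $m_i\le 3\alpha(i)$, so $\sum_i m_i^2\le 9\gamma$. By \cref{lem:basic inequalities for alpha}, $\sum_i\Var_{t-1}[\alpha_t(i)]\le \beta/n$. Every nonzero $\alpha(i)$ is at least $1/n$, so $\alpha(i)^2\ge\alpha(i)/n$, which gives $\beta/n\le\gamma$. Together these yield $\E_{t-1}[\gamma_t]\le 10\gamma$.

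\textbf{Item 2 (lower bound).} For the variance part, the lower bound $\Var_{t-1}[\alpha_t(i)]\ge(1-\beta)^2\alpha(i)/n$ from \cref{lem:basic inequalities for alpha} sums to $(1-\beta)^2\beta/n$. Multiplying by $\beta^2$ gives exactly the additive term. It remains to show
\[
\beta^2\sum_i m_i^2\ge\Bigl(\sum_i m_i\Bigr)^2\gamma ,
\]
using $\E_{t-1}[\beta_t]=\sum_i m_i$. We normalize by setting $w_i=\alpha(i)/\beta$ (a probability vector) and $f_i=\alpha(i)+2(1-\beta)$. The inequality then becomes
\[
\frac{\sum_i w_i^2f_i^2}{\sum_i w_i^2}\ge\Bigl(\sum_i w_if_i\Bigr)^2 .
\]
We prove this in two steps. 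First, Jensen with weights $w_i^2/\sum_j w_j^2$ bounds the left side below by $\bigl(\sum_i w_i^2f_i/\sum_i w_i^2\bigr)^2$. Second, since $f_i$ is increasing in $w_i$, Chebyshev's sum inequality under the distribution $w$ gives $\E_w[wf]\ge\E_w[w]\,\E_w[f]$, i.e.\ $\sum_i w_i^2f_i/\sum_i w_i^2\ge\sum_i w_if_i$. Both sides are nonnegative, so squaring is legitimate and the inequality follows.

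\textbf{Item 3 (Bernstein condition).} This is the main obstacle. Write $\xi_i=\alpha_t(i)-m_i$ and decompose
\[
\gamma_t-\E_{t-1}[\gamma_t]=\underbrace{\sum_i 2m_i\xi_i}_{L}+\underbrace{\sum_i(\xi_i^2-\Var_{t-1}\xi_i)}_{Q}.
\]
The linear part satisfies $L=\sum_u Y_u$ with $Y_u=\frac2n\bigl(m_{Z_u}-\E m_{Z_u}\bigr)$, which are independent, centered, and bounded by $2\max_i m_i/n\le 2/n$. Their total variance is at most $\frac4{n}\sum_i \E[\alpha_t(i)]\,m_i^2\lesssim\gamma/n$. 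Hence \cref{lem:Bernstein condition}(i),(v) give a $(2/n,O(\gamma/n))$ condition for $L$.

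For $Q$, we expand $\xi_i^2$ as $\frac1{n^2}\sum_{u,v}\bar\chi_u(i)\bar\chi_v(i)$, where $\bar\chi_u(i)=\indicator_{Z_u=i}-\Pr[Z_u=i]$. The diagonal terms $u=v$ are independent across $u$, each of size $O(1/n^2)$ with variance $O(\alpha/n^4)$, so they are harmless. For the off-diagonal U-statistic, the first thing we would try is to organize it as a read-$\ell$ family with $\ell=O(1)$, e.g.\ grouping terms by vertex via a decoupling or martingale ordering $\sum_u\bar\chi_u(i)\sum_{v<u}\bar\chi_v(i)$. We would then apply \cref{lem:Bernstein condition 2}(ii) (or the read-$k$ moment bound it relies on) to get a $(O(1/n),O(\gamma/n))$ condition. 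Finally, \cref{lem:Bernstein condition 2}(i) with two summands, followed by adjusting constants, would give the claimed $(2/n,20\gamma/n)$ bound. We expect the constant bookkeeping here, and especially controlling the off-diagonal quadratic form with variance of order $\gamma/n$ rather than $\beta/n$, to be the delicate part.
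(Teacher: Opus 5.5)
\textbf{Items 1 and 2.} Both are correct. Item 1 is exactly the paper's argument.

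Item 2 reaches the same key inequality $\beta^2\sum_i m_i^2\ge\gamma(\sum_i m_i)^2$ by a different route. The paper expands $\sum_i m_i^2=4(1-\beta)^2\gamma+4(1-\beta)\|\alpha\|_3^3+\|\alpha\|_4^4$. It then applies the two Cauchy--Schwarz bounds $\gamma^2\le\beta\|\alpha\|_3^3$ and $\|\alpha\|_3^6\le\gamma\|\alpha\|_4^4$ term by term, to match the expansion of $\gamma^2(2\beta(1-\beta)+\gamma)^2$. Your argument normalizes, applies Jensen, and then applies Chebyshev, using that $f_i=\beta w_i+2(1-\beta)$ is an increasing affine function of $w_i$. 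It explains why the inequality holds without any bookkeeping. Both routes are fine; the degenerate case $\beta=0$ is trivial.

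\textbf{Item 3: the mechanism fails.} There is a genuine gap in the off-diagonal quadratic part, which you correctly flag as the crux. The mechanism you propose cannot work. In the martingale ordering, the grouped term
\[
W_u=\frac{1}{n^2}\sum_i\bar\chi_u(i)\sum_{v<u}\bar\chi_v(i)
\]
depends on $Z_1,\dots,Z_u$. So $Z_1$ is read by all $n$ terms, which makes the family read-$n$, not read-$O(1)$. Moreover, $|W_u|$ can be of order $u/n^2$, i.e.\ up to order $1/n$. Then \cref{lem:Bernstein condition 2}(ii) yields $D=\Theta(1)$.

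Regrouping in some other way does not help either. When there are $\Theta(n^2)$ interacting pairs, which is the typical case, any read-$O(1)$ family covering them must contain a summand depending on $\Omega(n)$ of the $Z_u$. Bounding that summand is the original problem again.

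\textbf{Item 3: the diagnosis is off.} The off-diagonal form has variance $O(\frac{1}{n^2}\sum_i m_i^2)=O(\gamma_{t-1}/n^2)$. That is a full factor $n$ \emph{below} the target, so its variance is not the obstacle. Separately, even if you completed the argument, combining $L$ and $Q$ via \cref{lem:Bernstein condition 2}(i) gives $D\ge 4/n$. Bernstein parameters can only be enlarged, so the stated $(2/n,20\gamma_{t-1}/n)$ cannot be reached by ``adjusting constants''.

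\textbf{The missing idea.} Spend that factor-$n$ slack on a large read parameter, and do not split into $L+Q$ at all. Write
\[
\gamma_t-\E_{t-1}[\gamma_t]=\sum_{u,v\in V}\frac{1}{n^2}\bigl(\indicator_{Z_u=Z_v\neq\bot}-\Pr_{t-1}[Z_u=Z_v\neq\bot]\bigr).
\]
Each summand is centered and bounded by $1/n^2$, so it satisfies the $(1/n^2,\sigma_{uv}^2)$-Bernstein condition. The total variance satisfies
\[
\sum_{u,v}\sigma_{uv}^2\le\frac{1}{n^4}\sum_{u,v}\Pr_{t-1}[Z_u=Z_v\neq\bot]=\frac{1}{n^2}\E_{t-1}[\gamma_t]\le\frac{10\gamma_{t-1}}{n^2},
\]
where the last step is item 1. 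Each $Z_w$ enters exactly the $2n-1$ ordered pairs with $w\in\{u,v\}$, so this is a read-$(2n-1)$ family. \Cref{lem:Bernstein condition 2}(ii) then gives the $\bigl(\frac{2n-1}{n^2},\frac{10(2n-1)\gamma_{t-1}}{n^2}\bigr)$-Bernstein condition, i.e.\ $(2/n,20\gamma_{t-1}/n)$. This is the paper's proof, and it handles your linear and quadratic parts simultaneously.
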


\begin{lemma}[Basic properties for $\psi_t$ in the gossip model]
  \label{lem:basic inequalities for psi}
  For the gossip USD, the quantity $\psi_t$ satisfies the following:
  \begin{enumerate}
    \item (Expectation) $\E_{t-1}[\psi_t] \leq \frac{\beta_{t-1}}{n}$. \label{item:expectation of psi}
    \item (Bernstein condition) $\psi_t - \E_{t-1}[\psi_t]$ conditioned on round $t-1$ satisfies $\qty( \frac{24}{n}, \frac{400\beta_{t-1}}{n} )$-Bernstein condition. \label{item:bernstein condition of psi}
  \end{enumerate}
\end{lemma}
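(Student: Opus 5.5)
The proof will be a direct computation for part (i) and a vertex-by-vertex martingale (Doob) decomposition for part (ii).

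\textbf{Notation.} Fix round $t-1$ and write $b=\beta_{t-1}$, $a_i=\alpha_{t-1}(i)$, $g=\gamma_{t-1}$ and $u=1-b$. For each vertex $w$ let $Y_{w,i}$ be the indicator that $w$ holds $i$ after the round, and let $X_w=\sum_i Y_{w,i}$. These are independent across $w$. Then $\beta_t=\frac1n\sum_w X_w$ and $\alpha_t(i)=\frac1n\sum_w Y_{w,i}$. Set $m=\E_{t-1}[\beta_t]=2bu+g$ and $m_i=\E_{t-1}[\alpha_t(i)]=a_i(a_i+2u)$, using \cref{lem:basic inequalities for alpha,lem:basic inequalities for beta}.

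\textbf{Expectation.} Write $\psi_t=2\beta_t^2-\beta_t-\gamma_t$, so that
\[
\E_{t-1}[\psi_t]=\bigl(2m^2-m-\textstyle\sum_i m_i^2\bigr)+\bigl(2\Var_{t-1}[\beta_t]-\textstyle\sum_i\Var_{t-1}[\alpha_t(i)]\bigr).
\]
I treat the two brackets separately.

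The first bracket is the deterministic (mean-field) value of $\psi$ after one step. I will expand it in $b,u,g,\sum_i a_i^3,\sum_i a_i^4$ and show it is $\le 0$. I expect to need only power-mean inequalities such as $b\sum_i a_i^3\ge g^2$ and $g\sum_i a_i^4\ge(\sum_i a_i^3)^2$, together with $\sum_i a_i^4\ge g^2$ and $g\le b$. I expect this to be the main obstacle: it is a polynomial inequality whose sign has to come out exactly right. If it does not close directly, I would try writing it as a negative sum of squares in the $a_i$ weighted by $b$ and $u$.

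The second bracket is a sum over vertices. For vertex $w$, write $p_w=\Pr[X_w=1]$ and $p_{w,i}=\Pr[Y_{w,i}=1]$. Since the $Y_{w,i}$ are disjoint indicators,
\[
\frac1{n^2}\Bigl[2p_w(1-p_w)-\textstyle\sum_i p_{w,i}(1-p_{w,i})\Bigr]
=\frac1{n^2}\Bigl[p_w-2p_w^2+\textstyle\sum_i p_{w,i}^2\Bigr]\le\frac{p_w}{n^2},
\]
where the last step uses $\sum_i p_{w,i}^2\le p_w^2$. Summing over $w$ bounds the bracket by $m/n$. I still need to sharpen this to $b/n$. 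I would do that either by absorbing the slack $m-b=-\psi_{t-1}$ into the negative mean-field term, or by bounding $p_w$ vertex-wise: vertices decided at $t-1$ have $p_w\le1$ and there are $bn$ of them, and undecided vertices have $p_w=b$.

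\textbf{Bernstein condition.} The quantity $\psi_t$ is a function of the $n$ independent choices of the sampled neighbours. Changing a single vertex's choice changes $\beta_t$ by at most $1/n$, $\beta_t^2$ by at most $2/n$, and $\gamma_t$ by at most $2/n$. Hence $\psi_t$ changes by at most $c/n$ with $c\le 8$.

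Consider the Doob martingale obtained by revealing the vertices one at a time. Each increment is centred and bounded by $c/n$, so by \cref{lem:Bernstein condition}(i) its conditional MGF satisfies a Bernstein bound. The conditional variance of the increment for vertex $w$ is at most $(c/n)^2$ times the probability that $w$'s choice matters, i.e.\ that $w$ ends decided or changes state. This probability is $O(p_w+\Pr[w \text{ decided at } t-1])$, which sums over $w$ to $O(bn)$. The total predictable variance is therefore $O(b/n)$ deterministically.

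Chaining the one-step MGF bounds through the tower property gives a $(24/n,\,400b/n)$-Bernstein condition once constants are tracked. As a fallback if the decomposition is messy, I would instead write
\[
\psi_t-\E_{t-1}[\psi_t]=4m(\beta_t-m)+2\bigl((\beta_t-m)^2-\Var_{t-1}[\beta_t]\bigr)-(\beta_t-m)-(\gamma_t-\E_{t-1}[\gamma_t]).
\]
I would then handle the linear terms with \cref{lem:basic inequalities for beta,lem:basic inequalities for gamma} (using $g\le b$), handle the quadratic term as a read-$2$ family via \cref{lem:Bernstein condition 2}(ii), and combine the pieces with \cref{lem:Bernstein condition 2}(i). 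The constants $24$ and $400$ suggest this is the route the authors took.
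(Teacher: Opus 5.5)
\textbf{Gap in part (i): the second bracket.} Your decomposition
$\E_{t-1}[\psi_t]=\bigl(2m^2-m-\sum_i m_i^2\bigr)+\bigl(2\Var_{t-1}[\beta_t]-\sum_i\Var_{t-1}[\alpha_t(i)]\bigr)$
is correct. In fact the paper's displayed identity carries $\Var_{t-1}[\beta_t]$ where $2\Var_{t-1}[\beta_t]$ belongs. However, part (i) does not close as you propose.

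The problem is the per-vertex step. From $\sum_i p_{w,i}^2\le p_w^2$ you get $p_w-2p_w^2+\sum_i p_{w,i}^2\le p_w(1-p_w)$, but you then also discard $-p_w^2$. That leaves $\frac1{n^2}\sum_w p_w=m/n=(\beta_{t-1}-\psi_{t-1})/n$, which exceeds $\beta_{t-1}/n$ whenever $\psi_{t-1}<0$. Neither of your remedies repairs this:
\begin{itemize}
\item Counting vertices cannot help, because $\sum_w p_w=nm$ exactly. Your crude count gives the even weaker bound $\beta_{t-1}(2-\beta_{t-1})/n$.
\item Absorbing the slack would require the mean-field bracket to be at most $\psi_{t-1}/n$. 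The power-mean inequalities only give $-m\psi_{t-1}^2/\beta_{t-1}^2$. This is tight when all nonzero $a_i$ are equal, and it is quadratic in $\psi_{t-1}$, so for real-valued profiles it loses to $|\psi_{t-1}|/n$ as $\psi_{t-1}\to 0^-$. Rescuing absorption would need an extra lattice argument using that the $a_i$ are multiples of $1/n$, which you do not supply.
\end{itemize}
The fix is to keep $-p_w^2$. Then the second bracket is at most $\frac1{n^2}\sum_w p_w(1-p_w)=\Var_{t-1}[\beta_t]\le\beta_{t-1}/n$ by \cref{lem:basic inequalities for beta}. Combined with the first bracket being $\le 0$, this gives exactly the paper's final chain $\E_{t-1}[\psi_t]\le\Var_{t-1}[\beta_t]\le\beta_{t-1}/n$, now correctly justified.

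\textbf{The first bracket.} One inequality you list is backwards: in fact $\sum_i a_i^4\le g^2$. What you need is $b^2\sum_i m_i^2\ge m^2 g$, which is the core of the expectation lower bound in \cref{lem:basic inequalities for gamma}. Expanded, this reads $4ub\bigl(b\sum_i a_i^3-g^2\bigr)+\bigl(b^2\sum_i a_i^4-g^3\bigr)\ge 0$. It follows from $b\sum_i a_i^3\ge g^2$ and $b^2\sum_i a_i^4\ge g^3$, where the latter comes from chaining your two Cauchy--Schwarz inequalities: $g^4\le b^2(\sum_i a_i^3)^2\le b^2 g\sum_i a_i^4$. Then, using $m=b-\psi_{t-1}$ and $2b^2-g=b+\psi_{t-1}$,
\[
2m^2-m-\sum_i m_i^2\le\frac{m}{b^2}\bigl(m(2b^2-g)-b^2\bigr)=-\frac{m\psi_{t-1}^2}{b^2}\le 0,
\]
which is the paper's argument. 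No sum-of-squares search is needed.

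\textbf{Part (ii).} This part is fine and takes a genuinely different route from the paper. Since $\psi_t$ depends on vertex $w$ only through its new state, compare every outcome with the outcome $\bot$. This bounds the conditional variance of the $w$-th Doob increment by $(c/n)^2p_w$, which sums deterministically to $c^2m/n\le 2c^2b/n$. The tower property then yields an $(8/n,128\beta_{t-1}/n)$-Bernstein condition, stronger than required.

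The paper instead splits $\psi_t-\E_{t-1}[\psi_t]$ into the fluctuations of $\beta_t^2$, $\beta_t$ and $\gamma_t$. It treats the quadratic ones as read-$(2n-1)$ families of pair indicators, and adds the three pieces with the H\"older bound of \cref{lem:Bernstein condition 2}; that is where $24$ and $400$ come from. Your martingale avoids the read-$\ell$ machinery and the factor-$3$ losses. In your fallback, note that the pair products form a read-$(2n-1)$ family, not a read-$2$ family.
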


\begin{lemma}[Basic properties for $\tnpm_t$ in the gossip model]
  \label{lem:basic inequalities for tnpm}
  For the gossip USD, the quantity $\tnpm_t$ satisfies the following:
  Let $C>0$ be a sufficiently large constant.
  Suppose that $\beta_{t-1}\geq 1/2-o(1)$ and $\psi_{t-1} \le o(1)$. Then,
  \begin{enumerate}
    \item (Expectation) $\E_{t-1}[\tnpm_t]\geq 
    \tnpm_{t-1}\qty(1+\frac{\npm_{t-1}-\npt_{t-1}/\npo_{t-1}}{2(1-\npo_{t-1})+\npt_{t-1}/\npo_{t-1}})-\frac{9\npm_{t-1}}{n}$. \label{item:expectation of tnpm}
    \item (Bernstein condition) $\tnpm_{t-1}\qty(1+\frac{\npm_{t-1}-\npt_{t-1}/\npo_{t-1}}{2(1-\npo_{t-1})+\npt_{t-1}/\npo_{t-1}})-\tnpm_t-\frac{9\npm_{t-1}}{n}$ conditioned on round $t-1$ satisfies one-sided $\qty(\frac{C}{n}, \frac{C\npm_{t-1}}{n})$-Bernstein condition. \label{item:bernstein condition of tnpm}
  \end{enumerate}
\end{lemma}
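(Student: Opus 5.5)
The plan is to fix the opinion $i\defeq I_{t-1}$ that is most popular at round $t-1$ and use $\npm_t\ge\alpha_t(i)$, so that $\tnpm_t\ge \alpha_t(i)/\beta_t$ whenever $\beta_t>0$. Both claimed items then only need a lower bound on $\alpha_t(i)/\beta_t$ in expectation and a one-sided upper-tail bound for $\tnpm_{t-1}(1+\cdots)-\alpha_t(i)/\beta_t$. Write $a\defeq\E_{t-1}[\alpha_t(i)]$ and $b\defeq \E_{t-1}[\beta_t]$, with centered parts $A\defeq\alpha_t(i)-a$ and $B\defeq\beta_t-b$. Since $x\mapsto 1/x$ is convex, the tangent line at $b$ gives $1/\beta_t\ge 1/b-B/b^2$, and this also covers $\beta_t=0$, where $\alpha_t(i)=0$. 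Hence
\[
  \frac{\alpha_t(i)}{\beta_t}\ \ge\ \frac{a+A}{b}-\frac{(a+A)B}{b^2}.
\]
By \cref{lem:basic inequalities for beta}, $b=\beta_{t-1}-\psi_{t-1}=\beta_{t-1}\bigl(2(1-\beta_{t-1})+\gamma_{t-1}/\beta_{t-1}\bigr)$, and the hypotheses give $b\ge 1/2-o(1)$, so $1/b^2\le 4+o(1)$.

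\textbf{Expectation.} Taking $\E_{t-1}$ kills the linear terms and leaves $a/b-\Cov_{t-1}(\alpha_t(i),\beta_t)/b^2$. By \cref{lem:basic inequalities for alpha}, $a=\npm_{t-1}(\npm_{t-1}+2(1-\beta_{t-1}))$. Dividing by the factored form of $b$ gives
\[
  \frac ab=\tnpm_{t-1}\Bigl(1+\frac{\npm_{t-1}-\npt_{t-1}/\npo_{t-1}}{2(1-\npo_{t-1})+\npt_{t-1}/\npo_{t-1}}\Bigr),
\]
which is exactly the main term. For the covariance, $\alpha_t(i)=\frac1n\sum_u X_u$ and $\beta_t=\frac1n\sum_u Y_u$, where the pairs $(X_u,Y_u)$ are independent over $u$ and $X_u\le Y_u$ are indicators. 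Then $\Cov(X_u,Y_u)=\Pr[X_u](1-\Pr[Y_u])\le \Pr[X_u]$, so
\[
  \Cov_{t-1}(\alpha_t(i),\beta_t)\le a/n\le 2\npm_{t-1}/n .
\]
With $1/b^2\le 4+o(1)$, the error is at most $9\npm_{t-1}/n$, which proves the expectation claim.

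\textbf{Bernstein condition.} Let $Q$ denote the quantity in the second item. By the displayed inequality and the identity for $a/b$,
\[
  Q\ \le\ -\frac{A}{b}+\frac{aB}{b^2}+\frac{AB}{b^2}-\frac{9\npm_{t-1}}{n}.
\]
By \cref{lem:Bernstein condition}\cref{item:BC for dominated rv}, it suffices to prove the one-sided condition for the right-hand side, and I would split it into three pieces.
\begin{itemize}
  \item $-A/b$ satisfies $(O(1/n),O(\npm_{t-1}/n))$ by \cref{lem:basic inequalities for alpha} and scaling.
  \item $aB/b^2$ satisfies $(O(a/n),O(a^2\beta_{t-1}/n))$ by \cref{lem:basic inequalities for beta}. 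Since $a=O(\npm_{t-1})$, this is within $(O(1/n),O(\npm_{t-1}/n))$.
  \item $\bigl(AB-\E_{t-1}[AB]\bigr)/b^2$ remains after using $\E_{t-1}[AB]/b^2\le 9\npm_{t-1}/n$ to absorb the constant.
\end{itemize}
These pieces would then be combined with \cref{lem:Bernstein condition 2}\cref{item:BC for sum of rvs}, which costs only constant factors for a constant number of summands, so the constant $C$ is large enough.

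\textbf{Main obstacle.} I expect the quadratic term $AB=\frac1{n^2}\sum_{u,v}(X_u-p_u)(Y_v-q_v)$ to be the hardest step. Its typical size is $\sqrt{\npm_{t-1}}/n$, which is much smaller than the target standard deviation $\sqrt{\npm_{t-1}/n}$, but this product of dependent sums needs an MGF bound rather than a variance bound.
\begin{itemize}
  \item The diagonal part $\frac1{n^2}\sum_u (X_u-p_u)(Y_u-q_u)$ is a sum of independent terms bounded by $1/n^2$ with total variance $O(\npm_{t-1}/n^3)$, so \cref{lem:Bernstein condition}\cref{item:BC for independent rvs} applies.
  \item For the off-diagonal part, my first attempt is to write it as $\sum_u (X_u-p_u)\cdot\frac1n\sum_{v\ne u}(Y_v-q_v)/n$ and treat the $n$ summands as a read-$2$-type family, in the style of the read-$k$ bounds \cite{read-k,Duppala} via \cref{lem:Bernstein condition 2}\cref{item:BC for read-k family}.
  \item If that fails, the fallback is to truncate on the event $|B|\le C\sqrt{\log n/n}$, which holds with high probability by \cref{lem:basic inequalities for beta}. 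On this event the term is at most $\frac{C\sqrt{\log n/n}}{b^2}|A|$, whose Bernstein parameters are dominated by those of $A$ and fit within $(C/n,C\npm_{t-1}/n)$. The excluded event is then handled by a crude bound $|AB|\le 1$.
\end{itemize}
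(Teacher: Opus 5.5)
Your expectation argument and your handling of the linear pieces are correct and follow the paper. Your tangent-line bound for $1/\beta_t$ is exactly \cref{lem:basic inequalities for ratio distribution}. Your bound $\Cov_{t-1}(\alpha_t(i),\beta_t)\le a/n$ plays the role of \cref{claim:covariance of alpha and beta}. Your right-hand side $-A/b+aB/b^2+(AB-\E_{t-1}[AB])/b^2$ equals the paper's $-2A/b+(\alpha_t(i)\beta_t-\E_{t-1}[\alpha_t(i)\beta_t])/b^2$.

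The gap is the quadratic piece $(AB-\E_{t-1}[AB])/b^2$. Neither of your routes gives an MGF bound valid for all $0\le\lambda<3n/C$, which is what the Bernstein condition demands.

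\textbf{First route.} Each of the $n$ summands $(X_u-p_u)\cdot\frac{1}{n^2}\sum_{v\ne u}(Y_v-q_v)$ depends on every vertex's new opinion. So the family is read-$n$, not read-$2$. Each summand is only bounded by about $1/n$. Hence \cref{lem:Bernstein condition 2} (\cref{item:BC for read-k family}) returns a range parameter of order $n\cdot\frac1n=1$ instead of $O(1/n)$. The variance proxy would come out fine; the range does not.

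\textbf{Truncation fallback.} This yields a tail bound, not a Bernstein condition. The discarded event $\{|B|>C\sqrt{\log n/n}\}$ has probability only $n^{-\Theta(1)}$. With the crude bound $|AB|/b^2=O(1)$ there, its contribution to $\E_{t-1}[e^{\lambda W}]$ is controlled only by about $n^{-\Theta(1)}e^{\Theta(\lambda)}$. At $\lambda=\Theta(n)$ this is $e^{\Theta(n)}$. That exceeds the required $\exp(O(\lambda^2\npm_{t-1}/n))=\exp(O(n\npm_{t-1}))$ as soon as $\npm_{t-1}=o(1)$, which is exactly the regime where the lemma is used.

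\textbf{The missing idea.} Expand the product into its $n^2$ pairwise terms instead of grouping by $u$; this is what the paper does in \cref{claim:bernstein condition of alpha beta}. Write $\alpha_t(i)\beta_t=\frac1{n^2}\sum_{u,v}X_uY_v$. Each centered term $\frac1{n^2}(X_uY_v-\E_{t-1}[X_uY_v])$ is bounded by $1/n^2$ and depends only on $\opn_t(u)$ and $\opn_t(v)$. Each $\opn_t(w)$ is read by at most $2n-1$ terms, so the family is read-$(2n-1)$. The total variance satisfies $\sum_{u,v}\Var_{t-1}[\cdot]\le\frac1{n^2}\E_{t-1}[\alpha_t(i)\beta_t]\le\frac{7\npm_{t-1}\beta_{t-1}}{n^2}$. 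Then \cref{lem:Bernstein condition 2} (\cref{item:BC for read-k family}) gives the $\bigl(\frac2n,\frac{14\npm_{t-1}\beta_{t-1}}{n}\bigr)$-Bernstein condition.

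The same device works directly on your centered off-diagonal sum $\frac1{n^2}\sum_{u\ne v}(X_u-p_u)(Y_v-q_v)$. It is read-$(2n-2)$, each term is bounded by $1/n^2$, and the total variance is at most $ab/n^2$. That gives $(O(1/n),O(\npm_{t-1}/n))$. With this replacement, the rest of your argument goes through unchanged.
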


\begin{lemma}[Basic properties for $\tnpt_t$ in the gossip model]
  \label{lem:basic inequalities for tnpt}
  For the gossip USD, the quantity $\tnpt_t$ satisfies the following:
  Suppose $\beta_{t-1}\geq 1/2-o(1)$, $\psi_{t-1} \le o(1)$, and $\gamma_{t-1} \le o(1)$. Then, $\E_{t-1}[\tnpt_t] \ge \tnpt_{t-1} + \frac{1}{12n}$.
\end{lemma}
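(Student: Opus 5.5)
The plan is to compare $\tnpt_t=\npt_t/\beta_t^2$ with the ratio of expectations $\E_{t-1}[\npt_t]/\E_{t-1}[\beta_t]^2$ using a convexity (Jensen-type) bound on $1/\beta_t^2$, and then to apply \cref{lem:basic inequalities for gamma}\cref{item:expectation of gammatilde}. Write $b\defeq\E_{t-1}[\beta_t]=\beta_{t-1}-\psi_{t-1}$. Since $\beta_{t-1}\ge 1/2-o(1)$ and $\psi_{t-1}=o(1)$, we have $b\ge 1/2-o(1)$. The map $x\mapsto x^{-2}$ is convex on $(0,\infty)$, so its tangent line at $b$ gives
\[
\frac{1}{\beta_t^2}\ \ge\ \frac{1}{b^2}-\frac{2(\beta_t-b)}{b^3}
\]
for every $\beta_t>0$. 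On $\{\beta_t=0\}$ we have $\npt_t=0$, so the inequality $\tnpt_t\ge \npt_t/b^2-2\npt_t(\beta_t-b)/b^3$ holds there trivially. Multiplying by $\npt_t\ge 0$ and taking expectations yields
\[
\E_{t-1}[\tnpt_t]\ \ge\ \frac{\E_{t-1}[\npt_t]}{b^2}-\frac{2}{b^3}\Cov_{t-1}(\npt_t,\beta_t).
\]

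For the main term, \cref{item:expectation of gammatilde} gives $\E_{t-1}[\npt_t]\ge b^2\gamma_{t-1}/\beta_{t-1}^2+\beta_{t-1}(1-\beta_{t-1})^2/n$. Hence
\[
\frac{\E_{t-1}[\npt_t]}{b^2}\ \ge\ \tnpt_{t-1}+\frac{\beta_{t-1}(1-\beta_{t-1})^2}{b^2 n}.
\]
With $\beta_{t-1}\approx 1/2$ and $b\approx 1/2$, this additive term is $\approx (1/2)(1/4)/(1/4)/n=1/(2n)$. More precisely, it is at least roughly $1/(2n)$ as long as $\beta_{t-1}$ stays in a neighbourhood of $1/2$. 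I would first check that $\beta_{t-1}\le 1/2+o(1)$ also follows from the hypotheses: from $\psi_{t-1}=\beta_{t-1}(2\beta_{t-1}-1)-\npt_{t-1}$ with $\psi_{t-1},\npt_{t-1}=o(1)$ we get $2\beta_{t-1}-1=o(1)$. This is why the hypothesis $\npt_{t-1}=o(1)$ is needed.

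The main obstacle is bounding the covariance term, i.e.\ showing $\Cov_{t-1}(\npt_t,\beta_t)\le (1/2-1/12)\,b^3/(2n)$ up to $o(1/n)$; roughly $\Cov\lesssim 0.05/n$ suffices. I would compute it exactly, coordinatewise, using the gossip structure. Write $\alpha_t(i)=\frac1n\sum_u X_u^i$ with $X_u^i$ the indicator that $u$ holds $i$ after the round, independent across $u$. Then $\Cov(\alpha_t(i)^2,\alpha_t(j))$ reduces, by independence across vertices, to $O(1/n)$ terms of the form $\frac{2}{n}\E[\alpha_t(i)]\Cov_u(X^i,X^j)$ plus $O(1/n^2)$. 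Since $X_u^i X_u^j=0$ for $i\ne j$, the covariances with $j\ne i$ are nonpositive. Hence
\[
\Cov_{t-1}(\npt_t,\beta_t)\ \le\ \sum_i \Cov_{t-1}(\alpha_t(i)^2,\alpha_t(i))\ \le\ \frac{2}{n}\sum_i \E_{t-1}[\alpha_t(i)]\Var\text{-per-vertex}+O(1/n^2),
\]
which is $\lesssim \frac{2}{n}\sum_i\E_{t-1}[\alpha_t(i)]\cdot\alpha_{t-1}(i)\cdot O(1)$. By \cref{item:expectation of alpha}, $\E_{t-1}[\alpha_t(i)]\le 3\alpha_{t-1}(i)$, so the whole term is $O(\npt_{t-1}/n)=o(1/n)$ under $\npt_{t-1}=o(1)$. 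If the $O(1/n^2)$ and cubic terms cause trouble, I would bound them with \cref{item:expectation of gamma} and $\norm{\alpha_{t-1}}_3^3\le\npt_{t-1}$.

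Combining the two bounds gives $\E_{t-1}[\tnpt_t]\ge\tnpt_{t-1}+\frac{1}{2n}-o(1/n)\ge\tnpt_{t-1}+\frac{1}{12n}$ for large $n$, with ample slack in the constant.
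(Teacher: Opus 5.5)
Your proof is correct, but it linearizes the denominator differently from the paper.

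The paper applies its general ratio lemma (\cref{lem:basic inequalities for ratio distribution}) with $X=\gamma_t$ and $Y=\beta_t^2$, i.e.\ it expands $1/y$ around $\E_{t-1}[\beta_t^2]$. This costs two error terms. The first is the correction $\E_{t-1}[\gamma_t]\Var_{t-1}[\beta_t]/(\E_{t-1}[\beta_t^2]\E_{t-1}[\beta_t]^2)$, needed to replace $\E_{t-1}[\beta_t^2]$ by $\E_{t-1}[\beta_t]^2$. The second is $\Cov_{t-1}[\gamma_t,\beta_t^2]/\E_{t-1}[\beta_t^2]^2$, which the paper bounds by $4\E_{t-1}[\gamma_t]/n$ by counting index quadruples $(v,u,a,b)$ with $\{v,u\}\cap\{a,b\}\neq\emptyset$.

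Your tangent line of $x\mapsto x^{-2}$ at $b=\E_{t-1}[\beta_t]$ lands directly on $\E_{t-1}[\gamma_t]/b^2$. So the variance correction never appears, and only the lower-degree covariance $\Cov_{t-1}(\gamma_t,\beta_t)$ has to be controlled. In both proofs the $\Theta(1/n)$ gain comes from \cref{item:expectation of gammatilde}. The assumption $\gamma_{t-1}=o(1)$ serves both to pin $\beta_{t-1}$ near $1/2$ and to make an $O(\gamma_{t-1}/n)$ error negligible. The paper's route buys reuse of a single ratio lemma for both $\tnpm_t$ and $\tnpt_t$; yours is more tailored and somewhat shorter.

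Your exact coordinatewise covariance computation is more than you need. The paper's counting trick gives at once
$\Cov_{t-1}(\gamma_t,\beta_t)\le\frac{1}{n^3}\sum_{v,u}\sum_{a\in\{v,u\}}\E_{t-1}[\indicator_{\opn_t(v)=\opn_t(u)\neq\bot}]\le\frac{2}{n}\E_{t-1}[\gamma_t]\le\frac{20\gamma_{t-1}}{n}$.

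If you keep your computation, state explicitly that the per-opinion remainder is at most $\E_{t-1}[\alpha_t(i)]/n^2$. Then the remainders sum to at most $1/n^2$ independently of $k$. An unweighted $O(1/n^2)$ per opinion would only give $O(k/n^2)$, which is not $o(1/n)$ when $k=\Theta(n)$.

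Finally, the upper bound $b\le 1/2+o(1)$ that you use implicitly in the main term follows from $b=2\beta_{t-1}(1-\beta_{t-1})+\gamma_{t-1}\le 1/2+\gamma_{t-1}$.
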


\subsection{Basic Properties in the Population Protocol Model} \label{sec:basic properties in the population protocol model}

We now present the basic properties (expectations, variances, and Bernstein conditions) of the key quantities introduced in \cref{def:key-quantities} in the population protocol model, which are analogous of \cref{sec:basic properties in the gossip model}.
The proofs can be found in \cref{sec:proof of basic properties in the population protocol model}.

\begin{lemma}[Basic properties for $\alpha_t(i)$ in the population protocol model]
  \label{lem:basic inequalities for alpha PP}
  For the population protocol USD, the quantity $\alpha_t(i)$ satisfies the following:
  \begin{enumerate}
    \item (Expectation) \label{item:expectation of alpha PP}
     $\E_{t-1}[\alpha_t(i)]=\alpha_{t-1}(i)\qty(1+\frac{\alpha_{t-1}(i)+1-2\beta_{t-1}}{n})$.
    \item (Variance) \label{item:variance of alpha PP} 
    $\Var_{t-1}[\alpha_t(i)]=\frac{\alpha_{t-1}(i)}{n^2}\qty(1-\alpha_{t-1}(i)-\alpha_{t-1}(i)\qty(1-2\beta_{t-1}+\alpha_{t-1}(i))^2)$.
    In particular, $\Var_{t-1}[\alpha_t(i)] \leq \frac{\alpha_{t-1}(i)}{n}$. 
    \item (Bernstein condition) \label{item:bernstein condition of alpha PP}
    $\alpha_t(i)-\E_{t-1}[\alpha_t(i)]$ conditioned on round $t-1$ satisfies $\qty(\frac{1}{n},\frac{\alpha_{t-1}(i)}{n^2})$-Bernstein condition.
  \end{enumerate}
\end{lemma}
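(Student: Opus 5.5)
The plan is to compute the one-step transition law of $\alpha_t(i)$ exactly, since in the population protocol model only the initiator $u$ changes its opinion. Conditioned on round $t-1$, write $\alpha=\alpha_{t-1}(i)$ and $\beta=\beta_{t-1}$. Then $\alpha_t(i)-\alpha_{t-1}(i)=Z$ with $Z\in\{-1/n,0,1/n\}$.

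\begin{itemize}
\item $Z=+1/n$ exactly when $u$ is undecided and $v$ holds $i$. This has probability $p_+=(1-\beta)\alpha$.
\item $Z=-1/n$ exactly when $u$ holds $i$ and $v$ holds a different decided opinion. This has probability $p_-=\alpha(\beta-\alpha)$.
\item In every other case $Z=0$, by \cref{def:USD update rule}.
\end{itemize}

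From these probabilities, the expectation follows immediately:
\[
\E_{t-1}[Z]=\frac{p_+-p_-}{n}=\frac{\alpha(1-2\beta+\alpha)}{n},
\]
which gives item~\ref{item:expectation of alpha PP}.

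For the variance, I would compute $\E_{t-1}[Z^2]=(p_++p_-)/n^2=\alpha(1-\alpha)/n^2$. Subtracting $\E_{t-1}[Z]^2=\alpha^2(1-2\beta+\alpha)^2/n^2$ yields exactly the stated formula. Dropping the nonpositive terms gives $\Var_{t-1}[\alpha_t(i)]\le \alpha/n^2\le \alpha/n$.

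The Bernstein condition (item~\ref{item:bernstein condition of alpha PP}) needs a little care. The centered variable $Y=Z-\E_{t-1}[Z]$ can have magnitude up to $2/n$, so a black-box use of \cref{lem:Bernstein condition}\ref{item:BC for bounded rv} would only give $D=2/n$. To obtain $D=1/n$, I would bound the moment generating function directly. Set $a=\lambda/n$, so that $\lambda\E_{t-1}[Z]=a(p_+-p_-)$. Then
\[
\E_{t-1}[e^{\lambda Y}]=e^{-a(p_+-p_-)}\bigl(1+p_+(e^{a}-1)+p_-(e^{-a}-1)\bigr).
\]
Apply $1+x\le e^x$ to the bracket. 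The resulting exponent is
\[
p_+(e^{a}-1-a)+p_-(e^{-a}-1+a).
\]
Now use the scalar inequality $e^{x}-1-x\le \frac{x^2/2}{1-|x|/3}$, valid for $|x|<3$, which follows from comparing Taylor coefficients ($k!\ge 2\cdot 3^{k-2}$). This bounds the exponent by
\[
\frac{(p_++p_-)a^2/2}{1-|a|/3}=\frac{\lambda^2\,\alpha(1-\alpha)/(2n^2)}{1-|\lambda|/(3n)}.
\]
Finally, \cref{lem:Bernstein condition}\ref{item:BC for upper bounded rv} lets us replace $\alpha(1-\alpha)$ by $\alpha$, which gives the $\bigl(\frac1n,\frac{\alpha}{n^2}\bigr)$-Bernstein condition.

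The only step I expect to require attention is this MGF bound, because the naive route through bounded centered variables loses a factor of $2$ in $D$. Everything else is direct bookkeeping of the three transition probabilities.
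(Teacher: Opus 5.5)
Your expectation and variance computations are the same as the paper's: both use the three-point law of the increment $Z$. The Bernstein part is where you genuinely differ. The paper invokes the bounded-variable clause (i) of \cref{lem:Bernstein condition} for $\alpha_t(i)-\E_{t-1}[\alpha_t(i)]$ with $D=1/n$, then enlarges the variance proxy by monotonicity. You instead bound the moment generating function of the three-point increment directly.

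Your route is the more careful one. The centered increment $Y$ is not bounded by $1/n$ in general, because $Z=-1/n$ can occur while the drift $\alpha(1-2\beta+\alpha)/n$ is positive. For instance, with $\alpha=0.3$ and $\beta=0.4$ one gets $Y=-1.15/n$ with positive probability. So the paper's one-liner, read literally, only justifies a slightly larger $D$. Since $|1-2\beta+\alpha|\le 1-\alpha$, the true bound is $|Y|\le \frac{5}{4n}$, somewhat better than the $\frac{2}{n}$ you state. The paper also cites the gossip bound $\alpha_{t-1}(i)/n$ at that step, whereas the proxy $\alpha_{t-1}(i)/n^2$ from the population-protocol variance formula is what is needed; you use the right one.

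Your computation is sound. The inequality $1+x\le e^x$ holds for all real $x$. The estimate $k!\ge 2\cdot 3^{k-2}$ gives the scalar bound for both $\pm a$ whenever $|a|<3$, i.e.\ $|\lambda|<3n$. The result is exactly $D=1/n$ with proxy $\E_{t-1}[Z^2]=\alpha(1-\alpha)/n^2\le \alpha/n^2$, valid two-sidedly.

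The paper's shortcut buys brevity, and a constant-factor loss in $D$ would be harmless in every downstream application. For the lemma exactly as stated, however, your argument is the one that actually closes.
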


\begin{lemma}[Basic properties for $\beta_t$ in the population protocol model]
  \label{lem:basic inequalities for beta PP}
  For the population protocol USD, the quantity $\beta_t$ satisfies the following:
  \begin{enumerate}
    \item (Expectation) $\E_{t-1}\qty[\beta_t]=\beta_{t-1}+\frac{\beta_{t-1}\qty(1-2\beta_{t-1})+\gamma_{t-1}}{n}$. \label{item:expectation of beta PP}
    \item (Variance) $\Var_{t-1}[\beta_t]\leq \frac{\beta_{t-1}-\gamma_{t-1}}{n^2}$.
    \label{item:variance of beta PP}
     \item (Bernstein condition) $\beta_t-\E_{t-1}[\beta_t]$ conditioned on round $t-1$ satisfies $\qty( \frac{1}{n}, \frac{\beta_{t-1}}{n^2} )$-Bernstein condition. \label{item:bernstein condition of beta PP}
  \end{enumerate}
\end{lemma}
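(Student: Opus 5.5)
The proof is a direct one-step calculation. In the population protocol model only the initiator $u$ of the sampled pair $(u,v)$ can change its opinion, so the increment $\Delta_t\defeq\beta_t-\beta_{t-1}$ takes only the values $-1/n$, $0$, $+1/n$. Everything below is conditioned on $\calF_{t-1}$, and I write $\alpha_i,\beta,\gamma$ for $\alpha_{t-1}(i),\beta_{t-1},\gamma_{t-1}$.

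\emph{Step 1 (law of $\Delta_t$).} I read the two nonzero cases off \cref{def:USD update rule}.
\begin{itemize}
\item $\Delta_t=+1/n$ exactly when $u$ is undecided and $v$ is decided. This has probability $(1-\beta)\beta$.
\item $\Delta_t=-1/n$ exactly when $u$ and $v$ hold distinct decided opinions. This has probability $\sum_{i\in[k]}\alpha_i(\beta-\alpha_i)=\beta^2-\gamma$.
\end{itemize}
In every other case $\Delta_t=0$.

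\emph{Step 2 (expectation and variance).} From the law in Step 1,
\[
\E_{t-1}[\Delta_t]=\frac{\beta(1-\beta)-\beta^2+\gamma}{n}=\frac{\beta(1-2\beta)+\gamma}{n},
\]
which is \cref{item:expectation of beta PP}. For the variance,
\[
\Var_{t-1}[\beta_t]\le\E_{t-1}[\Delta_t^2]=\frac{\beta(1-\beta)+\beta^2-\gamma}{n^2}=\frac{\beta-\gamma}{n^2}\le\frac{\beta}{n^2},
\]
which gives \cref{item:variance of beta PP}.

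\emph{Step 3 (Bernstein condition).} This is the only step needing care. I cannot simply invoke \cref{lem:Bernstein condition}\cref{item:BC for bounded rv} with $D=1/n$: the centered variable $X\defeq\Delta_t-\E_{t-1}[\Delta_t]$ is a priori bounded only by $2/n$. Instead I bound the moment generating function of the uncentered variable $\Delta_t$, which satisfies $|\Delta_t|\le 1/n$, and then recentre.
\begin{itemize}
\item For $|x|<3$, expanding the exponential and using $k!\ge 2\cdot 3^{k-2}$ for $k\ge 2$ gives
\[
e^x\le 1+x+\frac{x^2/2}{1-|x|/3}.
\]
\item Apply this with $x=\lambda\Delta_t$, where $|\lambda|/n<3$, so that $|x|\le|\lambda|/n$. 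Taking expectations and using $1+y\le e^y$, with $\mu\defeq\E_{t-1}[\Delta_t]$,
\[
\E_{t-1}[e^{\lambda\Delta_t}]\le\exp\qty(\lambda\mu+\frac{\lambda^2\E_{t-1}[\Delta_t^2]/2}{1-|\lambda|/(3n)}).
\]
\item Multiply by $e^{-\lambda\mu}$ and use $\E_{t-1}[\Delta_t^2]\le\beta/n^2$ from Step 2. This gives
\[
\E_{t-1}[e^{\lambda X}]\le\exp\qty(\frac{\lambda^2\,(\beta/n^2)/2}{1-|\lambda|/(3n)}).
\]
\end{itemize}
Since $\beta_t-\E_{t-1}[\beta_t]=X$, this is exactly the $\qty(\frac1n,\frac{\beta_{t-1}}{n^2})$-Bernstein condition of \cref{item:bernstein condition of beta PP}. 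The only anticipated obstacle is this off-by-a-factor-of-two boundedness issue, and it is resolved by working with $\Delta_t$ rather than $X$.
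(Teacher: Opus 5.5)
Your proof is correct. Items (i) and (ii) are computed exactly as in the paper, from the three-point law of the increment $\Delta_t=\beta_t-\beta_{t-1}$.

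For item (iii) you take a different and more careful route. The paper applies item (i) of \cref{lem:Bernstein condition} directly to $\beta_t-\E_{t-1}[\beta_t]$ with $D=1/n$, then enlarges the variance parameter. You instead bound the moment generating function of the uncentred increment $\Delta_t$, which genuinely satisfies $|\Delta_t|\le 1/n$, via $e^x\le 1+x+\frac{x^2/2}{1-|x|/3}$, and then recentre. In your argument the raw second moment $\E_{t-1}[\Delta_t^2]=(\beta_{t-1}-\gamma_{t-1})/n^2$ plays the role of the variance.

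Your worry about the factor of two is well founded: the centred variable is not bounded by $1/n$ in general. Take $\beta_{t-1}=1-1/n$ with the $n-1$ decided vertices holding pairwise distinct opinions. Then $\E_{t-1}[\Delta_t]=(-1+4/n-3/n^2)/n\approx -1/n$. On the event $\Delta_t=+1/n$, which has probability $(1-\beta_{t-1})\beta_{t-1}>0$, the centred value is about $2/n$. So the paper's black-box step, read literally, only yields $D=2/n$.

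The paper's route is shorter, and the weaker constant would be harmless downstream, where only $D=O(1/n)$ is ever used. Your direct argument is self-contained and delivers the stated parameters $\bigl(\frac1n,\frac{\beta_{t-1}}{n^2}\bigr)$ exactly.
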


\begin{lemma}[Basic properties for $\delta_t^{(\epsilon)}$ in the population protocol model]
  \label{lem:basic inequalities for delta PP} 
  For the population protocol USD, the quantity $\delta_t^{(\epsilon)}\defeq\delta_t^{(\epsilon)}(i,j)$ 
  satisfy the following:
  \begin{enumerate}
    \item (Expectation) \label{item:expectation of delta PP} 
    $\E_{t-1}\qty[\delta_t^{(\epsilon)}]
    =\delta_{t-1}^{(\epsilon)}\qty(1+\frac{\alpha_{t-1}(i)+\alpha_{t-1}(j)+1-2\beta_{t-1}}{n})+\frac{\epsilon\alpha_{t-1}(i)\alpha_{t-1}(j)}{n}$.
    \item (Variance) \label{item:variance of delta PP}
    $\Var_{t-1}\qty[\delta_t]
    \geq \frac{\alpha_{t-1}(i)\qty(1-\alpha_{t-1}(i))}{n^2}
    +\frac{\alpha_{t-1}(j)\qty(1-\alpha_{t-1}(j))}{n^2}
    -\frac{4\delta_{t-1}^2}{n^2}.$
    \item (Bernstein condition) \label{item:bernstein condition of delta PP}
    $\delta_t^{(\epsilon)}-\E_{t-1}\qty[\delta_t^{(\epsilon)}]$ conditioned on round $t-1$ satisfies 
    $\qty(\frac{2(1+\epsilon)}{n},s)$-Bernstein condition for $s=\frac{2}{n^2}\qty(\alpha_{t-1}(i)+(1+\epsilon)^2\alpha_{t-1}(j))$.
  \end{enumerate}
\end{lemma}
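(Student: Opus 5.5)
The statement to prove is \cref{lem:basic inequalities for delta PP}. I would compute the one-step transition of the population protocol USD directly. In one interaction an ordered pair $(u,v)$ is drawn uniformly. Let $a=\alpha_{t-1}(i)$, $b=\alpha_{t-1}(j)$, $\beta=\beta_{t-1}$.

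The count of opinion $i$ moves as follows.
\begin{itemize}
\item It increases by one (so $\alpha$ grows by $1/n$) when $u$ is undecided and $v$ holds $i$, which has probability $(1-\beta)a$.
\item It decreases by one when $u$ holds $i$ and $v$ holds another decided opinion, which has probability $a(\beta-a)$.
\end{itemize}
Hence $\E_{t-1}[\alpha_t(i)]-a=\frac{1}{n}\,a(1-2\beta+a)$, consistent with \cref{lem:basic inequalities for alpha PP}.

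For the expectation, I use linearity on $\delta^{(\epsilon)}_t=\alpha_t(i)-(1+\epsilon)\alpha_t(j)$:
\[
\E_{t-1}\bigl[\delta^{(\epsilon)}_t\bigr]-\delta^{(\epsilon)}_{t-1}=\frac{1}{n}\Bigl[a(1-2\beta+a)-(1+\epsilon)b(1-2\beta+b)\Bigr].
\]
Write $a=\delta^{(\epsilon)}_{t-1}+(1+\epsilon)b$. Then
\[
a^2-(1+\epsilon)b^2=\delta^{(\epsilon)}_{t-1}(a+b)+\bigl[(1+\epsilon)b(a+b)-(1+\epsilon)ab\bigr]\cdot 0+\dots,
\]
and I expand directly: $a^2-(1+\epsilon)b^2-\delta^{(\epsilon)}_{t-1}(a+b)=(1+\epsilon)b(a+b)-(1+\epsilon)b^2-\dots$. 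Carrying this out gives exactly $\epsilon ab$. So the formula in item~\ref{item:expectation of delta PP} follows by algebra.

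For the variance, note that the increment $\Delta=\delta_t-\delta_{t-1}$ takes values in $\{\pm 1/n,0\}$. Only one vertex changes per step, so $i$ and $j$ cannot both change. It follows that
\[
\E_{t-1}[\Delta^2]=\frac{1}{n^2}\bigl[a(1-\beta)+a(\beta-a)+b(1-\beta)+b(\beta-b)\bigr]=\frac{a(1-a)+b(1-b)}{n^2}.
\]
The squared mean is $(\E\Delta)^2=\frac{1}{n^2}\,\delta_{t-1}^2(a+b+1-2\beta)^2$. Since $\abs{a+b+1-2\beta}\le 2$, this is at most $4\delta_{t-1}^2/n^2$, and item~\ref{item:variance of delta PP} follows.

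For the Bernstein condition, the centered increment $\delta^{(\epsilon)}_t-\E_{t-1}[\delta^{(\epsilon)}_t]$ is bounded in absolute value by $\frac{1+\epsilon}{n}+\frac{1+\epsilon}{n}\le\frac{2(1+\epsilon)}{n}$. Its variance is at most the second moment of the raw increment, namely $\frac{1}{n^2}\bigl[a+(1+\epsilon)^2 b\bigr]\le s$. I would then apply \cref{lem:Bernstein condition}\cref{item:BC for bounded rv} followed by \cref{item:BC for upper bounded rv}.

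The only mildly delicate step is the algebraic identity isolating $\epsilon ab$. Everything else is a direct case enumeration over the interaction outcomes.
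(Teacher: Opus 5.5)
Your proof is correct. Items (i) and (ii) follow the paper's route. For (i) you apply linearity to the one-step formula for $\E_{t-1}[\alpha_t(\cdot)]$. For (ii) you take the second moment minus the squared mean, using that $\alpha(i)$ and $\alpha(j)$ never change in the same interaction. That fact comes from the USD rule: an updating vertex only moves between a decided opinion and $\bot$. It does not follow merely from the fact that a single vertex updates. One cosmetic fix in (i): your display with the `$\cdot 0+\dots$' term is not a derivation. The identity you need is $a^2-(1+\epsilon)b^2=\delta^{(\epsilon)}_{t-1}(a+b)+\epsilon ab$, obtained by expanding $(a-(1+\epsilon)b)(a+b)$. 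Combined with the linear $(1-2\beta)$ terms, it gives item (i).

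For (iii) your route genuinely differs from the paper's. The paper writes $\delta^{(\epsilon)}_t-\E_{t-1}[\delta^{(\epsilon)}_t]=(\alpha_t(i)-\E_{t-1}[\alpha_t(i)])-(1+\epsilon)(\alpha_t(j)-\E_{t-1}[\alpha_t(j)])$. It takes the $(\frac1n,\frac{\alpha_{t-1}(\cdot)}{n^2})$ Bernstein condition for each $\alpha$ from \cref{lem:basic inequalities for alpha PP}. It then merges the two dependent summands with the H\"older-based \cref{lem:Bernstein condition 2} (\cref{item:BC for sum of rvs}). The factor $2$ in both $D=\frac{2(1+\epsilon)}{n}$ and $s$ is exactly the price of that H\"older step.

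You instead treat the whole centered increment as a single bounded variable. You bound it by $\frac{2(1+\epsilon)}{n}$, with variance at most $\frac{1}{n^2}\bigl(a(1-a)+(1+\epsilon)^2b(1-b)\bigr)$, and apply \cref{item:BC for bounded rv,item:BC for upper bounded rv} of \cref{lem:Bernstein condition}. Your argument is more elementary and yields a variance proxy half of $s$; the stated $s$ then follows by monotonicity. However, it relies on the single-interaction structure of the population protocol. The paper's argument is modular: it reuses the $\alpha$-level Bernstein bound and the generic sum lemma exactly as in the gossip-model counterpart, at the cost of a constant factor.
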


\begin{lemma}[Basic properties for $\npt_t$ in the population protocol model]
  \label{lem:basic inequalities for gamma PP}
  For the population protocol USD, the quantity $\gamma_t$ satisfies the following:
  \begin{enumerate}
    \item (Expectation) $\E_{t-1}[\gamma_t]=\gamma_{t-1}+\frac{2}{n}\qty((1-2\beta_{t-1})\gamma_{t-1}+\norm{\alpha_{t-1}}_3^3)+\frac{\beta_{t-1}-\gamma_{t-1}}{n^2}$. \label{item:expectation of gamma PP}
    \item (Variance) $\Var_{t-1}[\gamma_t]\leq \frac{9\norm{\alpha_{t-1}}_3^3}{n^2}$, where $\norm{\alpha_{t-1}}_3^3=\sum_{i\in[k]}\alpha_{t-1}(i)^3$. \label{item:variance of gamma PP}
    \item (Bernstein condition) $\gamma_t-\E_{t-1}[\gamma_t]$ conditioned on round $t-1$ satisfies $\qty(\frac{8\npm_{t-1}}{n},\frac{9\norm{\alpha_{t-1}}_3^3
    }{n^2})$-Bernstein condition. \label{item:bernstein condition of gamma PP}
  \end{enumerate}
\end{lemma}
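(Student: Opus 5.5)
In the population protocol model, a single interaction changes the configuration at one vertex only: the initiator $u$. So $\gamma_t-\gamma_{t-1}$ is determined by which opinion $u$ leaves and which it joins, and I would compute everything by conditioning on the types of the ordered pair $(u,v)$. Write $a=\alpha_{t-1}(i)$, $\beta=\beta_{t-1}$, $\gamma=\gamma_{t-1}$. The only changes to the count vector are:
\begin{itemize}
\item An undecided $u$ adopts opinion $i$. This happens with probability $(1-\beta)a$, and $\alpha(i)$ rises by $1/n$.
\item A $u$ holding $i$ meets a different decided opinion and becomes $\bot$. This happens with probability $a(\beta-a)$, and $\alpha(i)$ falls by $1/n$.
\end{itemize}
If $\alpha(i)$ changes by $\pm 1/n$, then $\gamma$ changes by $\pm 2a/n+1/n^2$.

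\textbf{Expectation.} Summing the two cases over $i$ gives
\[
\E_{t-1}[\gamma_t]-\gamma=\sum_i\Big[\tfrac{2a}{n}\big((1-\beta)a-a(\beta-a)\big)+\tfrac{1}{n^2}\big((1-\beta)a+a(\beta-a)\big)\Big].
\]
The first bracket is $\frac{2}{n}\big((1-2\beta)a^2+a^3\big)$, which sums to $\frac{2}{n}\big((1-2\beta)\gamma+\norm{\alpha_{t-1}}_3^3\big)$. The second bracket is $\frac{1}{n^2}(a-a^2)$, which sums to $(\beta-\gamma)/n^2$. This is exactly the claimed identity.

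\textbf{Variance.} I would bound $\Var_{t-1}[\gamma_t]\le\E_{t-1}[(\gamma_t-\gamma_{t-1})^2]$. Since $a\ge 1/n$ whenever $\alpha(i)$ can decrease, and $a+1/n$ is the relevant size when it increases, each step satisfies $|\gamma_t-\gamma_{t-1}|\le (2a+1)/n$ … but this crude bound is too weak, so I will work more carefully.

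The increment of $\gamma$ in a step with $\pm 1/n$ change in $\alpha(i)$ has magnitude at most $(2a+1/n)/n\le 3a/n$, using $a\ge 1/n$ (note that $\alpha(i)$ can only increase when $a\ge1/n$, since $v$ must hold $i$). The probability of such a step for fixed $i$ is at most $a$. Hence
\[
\E_{t-1}[(\gamma_t-\gamma_{t-1})^2]\le\sum_i a\cdot\frac{9a^2}{n^2}=\frac{9\norm{\alpha_{t-1}}_3^3}{n^2}.
\]

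\textbf{Bernstein condition.} By the same estimate, every jump has size at most $3\npm_{t-1}/n$. The mean change is at most $4\npm_{t-1}/n$ in absolute value (the terms are $a\cdot 3a/n$ summed, at most $3\npm_{t-1}\beta/n$). So $|\gamma_t-\E_{t-1}[\gamma_t]|\le 8\npm_{t-1}/n$, and item (i) of \cref{lem:Bernstein condition} gives the $\big(\frac{8\npm_{t-1}}{n},\Var\big)$-Bernstein condition. Item (ii) of the same lemma then lets me enlarge the variance parameter to the bound $\frac{9\norm{\alpha_{t-1}}_3^3}{n^2}$ obtained above.

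\textbf{Main obstacle.} The only delicate point is the per-step bound $2a/n+1/n^2\le 3a/n$. It needs $a\ge1/n$, and this holds because any change to $\alpha(i)$ requires some vertex (either $u$ or $v$) to hold opinion $i$.
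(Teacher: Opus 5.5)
Your proof is correct and follows the paper's route. Both use the same one-step case analysis: only one $\alpha(i)$ moves, by $\pm 1/n$, so $\gamma$ moves by $\pm 2a/n+1/n^2$. This gives the exact expectation, and the Bernstein condition then follows from the a.s.\ bound $|\gamma_t-\E_{t-1}[\gamma_t]|\le 8\npm_{t-1}/n$ together with items (i) and (ii) of \cref{lem:Bernstein condition}.

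The only difference is the variance step. The paper computes $\E_{t-1}[(\gamma_t-\gamma_{t-1})^2]$ exactly and then uses the inequalities $\norm{\alpha_{t-1}}_3^3\le n\norm{\alpha_{t-1}}_4^4$, $\gamma_{t-1}\le n\norm{\alpha_{t-1}}_3^3$ and $\beta_{t-1}\le n^2\norm{\alpha_{t-1}}_3^3$. You instead bound each opinion's contribution directly by $a\cdot 9a^2/n^2$, using $a\ge 1/n$. This is shorter, rests on the same fact, and gives the same constant $9$.
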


\begin{lemma}[Basic properties for $\psi_t$ in the population protocol model]
  \label{lem:basic inequalities for psi PP}
  For the population protocol USD, the quantity $\psi_t$ satisfies the following:
  \begin{enumerate}
    \item (Expectation) $\E_{t-1}[\psi_t]\leq \psi_{t-1}\qty(1-\frac{1}{n})+\frac{\beta_{t-1}-\gamma_{t-1}}{n^2}$. \label{item:expectation of psi PP}
    \item (Bernstein condition) $\psi_t - \E_{t-1}[\psi_t]$ conditioned on round $t-1$ satisfies $\qty( \frac{150}{n}, \frac{150\beta_{t-1}}{n^2} )$-Bernstein condition. \label{item:bernstein condition of psi PP}
  \end{enumerate}
\end{lemma}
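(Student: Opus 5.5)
We verify both items by a direct one-step computation, since in the population protocol model at most one vertex changes per interaction and $\psi_t$ is an explicit polynomial in $\alpha_t$. We write $\beta=\beta_{t-1}$, $\gamma=\gamma_{t-1}$, $a_3=\norm{\alpha_{t-1}}_3^3$, and $\Delta=\psi_t-\psi_{t-1}$. Using $\psi=2\beta^2-\beta-\gamma$, we classify the possible changes. If opinion $i$ gains a vertex (an undecided initiator meets a responder holding $i$, probability $(1-\beta)\alpha_{t-1}(i)$), or loses one (an initiator holding $i$ meets a different decided opinion, probability $\alpha_{t-1}(i)(\beta-\alpha_{t-1}(i))$), then $\beta$ and $\alpha(i)$ both move by $\pm 1/n$. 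Expanding $\beta'^2-\beta^2=\pm 2\beta/n+1/n^2$ and $\gamma'-\gamma=\pm 2\alpha_{t-1}(i)/n+1/n^2$ gives
\[
\Delta=\pm\frac{4\beta-1-2\alpha_{t-1}(i)}{n}+\frac{1}{n^2}.
\]
In all other cases $\Delta=0$. The total probability of a change is $(1-\beta)\beta+\beta^2-\gamma=\beta-\gamma$.

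For the expectation, we sum over both cases. The gain terms contribute $\frac1n(1-\beta)\bigl(\beta(4\beta-1)-2\gamma\bigr)$. The loss terms contribute $-\frac1n\bigl[(4\beta-1)(\beta^2-\gamma)-2\beta\gamma+2a_3\bigr]$. The $1/n^2$ term contributes $(\beta-\gamma)/n^2$. Expanding and subtracting $-\psi/n=(-2\beta^2+\beta+\gamma)/n$, we expect
\[
n\,\E_{t-1}[\Delta]+\psi-\frac{\beta-\gamma}{n}=-2\Bigl[\beta(2\beta-1)^2-2\gamma(2\beta-1)+a_3\Bigr].
\]
To show this is nonpositive, we apply Cauchy--Schwarz in the form $a_3\ge\gamma^2/\beta$ (i.e.\ $(\sum\alpha_i^2)^2\le \sum\alpha_i\sum\alpha_i^3$). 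With $x=2\beta-1$, the bracket is then at least
\[
\beta x^2-2\gamma x+\frac{\gamma^2}{\beta}=\Bigl(\sqrt\beta\,x-\frac{\gamma}{\sqrt\beta}\Bigr)^2\ge 0.
\]
This yields $\E_{t-1}[\psi_t]\le\psi_{t-1}(1-1/n)+(\beta-\gamma)/n^2$. If $\beta=0$, nothing changes and the bound is trivial.

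For the Bernstein condition, we use the bound $|4\beta-1-2\alpha(i)|\le 4$, so $|\Delta|\le 4/n+1/n^2\le 5/n$. Hence
\[
|\psi_t-\E_{t-1}[\psi_t]|\le 10/n, \qquad \Var_{t-1}[\psi_t]\le\E_{t-1}[\Delta^2]\le \frac{25(\beta-\gamma)}{n^2}\le\frac{25\beta}{n^2}.
\]
\cref{lem:Bernstein condition} \cref{item:BC for bounded rv} then gives a $(10/n,25\beta/n^2)$-Bernstein condition. Monotonicity, \cref{item:BC for upper bounded rv}, upgrades this to $(150/n,150\beta_{t-1}/n^2)$.

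The only step that is not mechanical is the sign of the residual in the expectation. The algebraic expansion must be done carefully, and we expect the identification of the perfect-square structure via $a_3\ge\gamma^2/\beta$ to be the main obstacle.
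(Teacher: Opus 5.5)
Your proof is correct.

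\textbf{Expectation.} This part is the paper's argument with different bookkeeping. The paper writes $\psi=2\beta^2-\beta-\gamma$ and substitutes the already established formulas for $\E_{t-1}[\beta_t]$, $\E_{t-1}[(\beta_t-\beta_{t-1})^2]$ and $\E_{t-1}[\gamma_t]$. You instead enumerate the two kinds of transitions directly. Both arguments use the same Cauchy--Schwarz step $\norm{\alpha_{t-1}}_3^3\ge\gamma_{t-1}^2/\beta_{t-1}$. Your perfect square is exactly $\psi_{t-1}^2/\beta_{t-1}$, which is the term $-2\psi_{t-1}^2/(n\beta_{t-1})$ the paper drops at the end.

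\textbf{Bernstein condition.} Here your route genuinely differs. The paper splits $\psi_t-\E_{t-1}[\psi_t]$ into three pieces: $2(\beta_t^2-\E_{t-1}[\beta_t^2])$, $-(\beta_t-\E_{t-1}[\beta_t])$ and $-(\gamma_t-\E_{t-1}[\gamma_t])$. It establishes a Bernstein condition for each piece via a jump bound plus a second-moment computation. The three pieces are dependent, so combining them needs the H\"older-based sum rule and costs a factor $3$.

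You instead observe that the one-step increment of $\psi$ is itself at most $5/n$ in absolute value. It is also nonzero only with probability $\beta_{t-1}-\gamma_{t-1}$. A single application of the bounded-variable rule then gives $(10/n,\,25\beta_{t-1}/n^2)$.

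Your route is shorter and gives better constants. It also avoids the separate second-moment computation for $\beta_t^2$. There, the paper's intermediate bound $9\beta_{t-1}^4/n^2$ is actually too strong when $\beta_{t-1}$ is small. The true value is about $4\beta_{t-1}^2(\beta_{t-1}-\gamma_{t-1})/n^2$. This is still $O(\beta_{t-1}/n^2)$, so the stated constants $(150/n,\,150\beta_{t-1}/n^2)$ survive. The paper's decomposition buys only modularity, reusing Bernstein bounds for $\beta_t$ and $\gamma_t$ that it needs elsewhere anyway.
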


\begin{lemma}[Basic properties for $\tnpm_t$ in the population protocol model]
  \label{lem:basic inequalities for tnpm PP}
  For the population protocol USD, the quantity $\tnpm_t$ satisfies the following:
  Let $C>0$ be a sufficiently large constant. Suppose $\beta_{t-1}\geq 1/2-o(1)$. Then,
  \begin{enumerate}
    \item (Expectation) $\E_{t-1}\qty[\tnpm_t]\geq \tnpm_{t-1}\qty(1+\frac{\npm_{t-1}-\gamma_{t-1}/\beta_{t-1}}{2n})-\frac{18\npm_{t-1}}{n^2}$. \label{item:expectation of tnpm PP}
    \item (Bernstein condition) 
$\tnpm_{t-1}\qty(1+\frac{\npm_{t-1}-\gamma_{t-1}/\beta_{t-1}}{2n})-\tnpm_t-\frac{18\npm_{t-1}}{n^2}$ conditioned on round $t-1$ satisfies one-sided $\qty(\frac{C}{n}, \frac{C\npm_{t-1}}{n^2})$-Bernstein condition. \label{item:bernstein condition of tnpm PP}
  \end{enumerate}
\end{lemma}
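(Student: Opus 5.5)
Throughout, fix round $t-1$ and let $I\defeq I_{t-1}$ be the most popular opinion, so that $\alpha_{t-1}(I)=\npm_{t-1}$. The key simplification is that $\npm_t\ge \alpha_t(I)$, so $\tnpm_t\ge Z_t\defeq \alpha_t(I)/\beta_t$. I would prove both items for $Z_t$ in place of $\tnpm_t$. Since $\tnpm_{t-1}=Z_{t-1}$ and $Z_t\le\tnpm_t$, the expectation bound transfers directly. The Bernstein statement transfers by \cref{lem:Bernstein condition}(\ref{item:BC for dominated rv}): the variable in item (\ref{item:bernstein condition of tnpm PP}) is pointwise at most the same expression with $\tnpm_t$ replaced by $Z_t$. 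This also removes the difficulty that the identity of the maximum opinion may change in a step.

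\textbf{Expectation.} In one interaction the pair $(a,b)=(\alpha(I),\beta)$ moves by one of four elementary transitions. Write $a=\alpha_{t-1}(I)$, $b=\beta_{t-1}$.
\begin{itemize}
\item $(+\tfrac1n,+\tfrac1n)$ with probability $(1-b)a$: an undecided vertex adopts $I$.
\item $(-\tfrac1n,-\tfrac1n)$ with probability $a(b-a)$: $I$ meets another decided opinion.
\item $(0,+\tfrac1n)$ with probability $(1-b)(b-a)$.
\item $(0,-\tfrac1n)$ with probability $\sum_{j\ne I}\alpha(j)(b-\alpha(j))$.
\end{itemize}
By convexity, $1/(b+h)\ge 1/b-h/b^2$. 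Hence
\[
Z_t\ge \frac{\alpha_t(I)}{b}-\frac{\alpha_t(I)(\beta_t-b)}{b^2}
=\frac{\alpha_t(I)}{b}-\frac{a(\beta_t-b)}{b^2}-\frac{(\alpha_t(I)-a)(\beta_t-b)}{b^2}.
\]
Taking expectations, the first-order part uses \cref{lem:basic inequalities for alpha PP,lem:basic inequalities for beta PP}:
\[
\frac{a}{bn}\Bigl[a+1-2b-(1-2b)-\frac{\npt_{t-1}}{b}\Bigr]=\tnpm_{t-1}\,\frac{\npm_{t-1}-\npt_{t-1}/\npo_{t-1}}{n}.
\]
This is at least the claimed $\tnpm_{t-1}\cdot(\cdots)/(2n)$, because $\npt_{t-1}\le \npm_{t-1}\npo_{t-1}$ makes the bracket nonnegative.

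The cross term $(\alpha_t(I)-a)(\beta_t-b)$ is nonzero only in the first two transitions. Its magnitude is $1/n^2$ with total probability at most $a$. Dividing by $b^2\ge 1/4-o(1)$ gives an error of at most about $4a/n^2\le 18\npm_{t-1}/n^2$. This yields item (\ref{item:expectation of tnpm PP}).

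\textbf{Bernstein condition.} Let $\mu\defeq\tnpm_{t-1}\bigl(1+\frac{\npm_{t-1}-\npt_{t-1}/\npo_{t-1}}{2n}\bigr)-\frac{18\npm_{t-1}}{n^2}$. By the expectation computation, $\mu\le \E_{t-1}[Z_t]$. Therefore the variable in question is at most $\E_{t-1}[Z_t]-Z_t$, a centered variable, and it suffices to bound this one via \cref{lem:Bernstein condition}(\ref{item:BC for bounded rv}) and then apply (\ref{item:BC for dominated rv}) and (\ref{item:BC for upper bounded rv}).

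I would estimate the range and variance of $Z_t-Z_{t-1}$ directly from the four transitions, using $b\ge 1/2-o(1)$ and $\beta_t\ge b-1/n$.
\begin{itemize}
\item In the first two transitions, $|Z_t-Z_{t-1}|=O(1/n)$, and these occur with probability $O(a)$.
\item In the last two transitions only $\beta$ changes, so $|Z_t-Z_{t-1}|=O(a/n)$, with probability $O(1)$.
\end{itemize}
Thus $|Z_t-\E_{t-1}Z_t|\le C/n$, and
\[
\Var_{t-1}[Z_t]\le O(a/n^2)+O(a^2/n^2)\le Ca/n^2,
\]
which gives the one-sided $(C/n,\,C\npm_{t-1}/n^2)$-Bernstein condition.

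\textbf{Main obstacle.} The only delicate step is the bookkeeping in the second-order terms. Every error must carry a factor $\npm_{t-1}$ rather than being merely $O(1/n^2)$, since the drift is only of order $\npm_{t-1}/n$. The convexity inequality handles this cleanly, because its only remainder is the cross term, which is proportional to $a$. The lower bound on $\beta$ keeps all denominators bounded away from $0$ throughout.
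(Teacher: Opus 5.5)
Your proof is correct, and it differs from the paper's mainly in the Bernstein part.

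\textbf{Expectation.} The paper invokes \cref{lem:basic inequalities for ratio distribution}, which is the same tangent-line bound for $1/y$ but taken at $\E_{t-1}[\beta_t]$ rather than at $\beta_{t-1}$. This yields $\E_{t-1}[\alpha_t(I_{t-1})]/\E_{t-1}[\beta_t]$, whose drift term has denominator $n\E_{t-1}[\beta_t]/\beta_{t-1}$. The paper bounds that denominator crudely by $2n$, which is where the $2n$ in the statement comes from. It then subtracts $\Cov_{t-1}[\alpha_t(I_{t-1}),\beta_t]/\E_{t-1}[\beta_t]^2\le 18\npm_{t-1}/n^2$. Your expansion at $\beta_{t-1}$ gives the drift $(\npm_{t-1}-\npt_{t-1}/\npo_{t-1})/n$ directly, with the cross moment $a(1-a)/n^2$ playing the role of the covariance, so it is marginally cleaner.

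\textbf{Bernstein condition.} The paper uses the pointwise half of the ratio lemma, with $X=\alpha_t(I_{t-1})$ and $Y=\beta_t$, to dominate the target by $2(\E_{t-1}[X]-X)/\E_{t-1}[Y]+(XY-\E_{t-1}[XY])/\E_{t-1}[Y]^2$. It then proves a Bernstein condition for each term separately, including a range and second-moment computation for the product $\alpha_t(i)\beta_t$, and adds them via \cref{lem:Bernstein condition 2}. You instead dominate the target by the single centered variable $\E_{t-1}[Z_t]-Z_t$. You read off its $O(1/n)$ range and $O(\npm_{t-1}/n^2)$ variance directly from the four transitions. This is shorter and uses exactly the fact that $\beta$ moves by at most $1/n$ per interaction.

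What the paper's route buys is uniformity with the gossip model. There $Z_t-Z_{t-1}$ has no small deterministic range, so linearizing the ratio into sums (or read-$\ell$ families) of independent terms is essential. Your direct approach would not carry over to that setting, but for this lemma it is fully adequate.
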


\begin{lemma}[Basic properties for $\tnpt_t$ in the population protocol model]
  \label{lem:basic inequalities for tnpt PP}
  Suppose that $\beta_{t-1}\geq 1/2-o(1)$ and $\gamma_{t-1}\leq o(1)$.
  Then for the population protocol USD, 
  $\E_{t-1}\qty[\tnpt_t]\geq \tnpt_{t-1}+\frac{1}{12n^2}$.

\end{lemma}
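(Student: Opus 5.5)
The plan is a direct exact computation of $\E_{t-1}[\tnpt_t]$. This is possible because in the population protocol model a single interaction changes at most one vertex, and only in one of two ways. Write $a_i=\alpha_{t-1}(i)$, $b=\beta_{t-1}$, $g=\npt_{t-1}$ and $c=\norm{\alpha_{t-1}}_3^3$. There are three outcomes of one step.

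\emph{Creation step.} With probability $(1-b)a_i$, the initiator is undecided and the responder holds opinion $i$. Then $\alpha(i)$ gains $1/n$, so $\npt$ becomes $g+2a_i/n+1/n^2$ and $\beta$ becomes $b+1/n$.

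\emph{Cancellation step.} With probability $a_i(b-a_i)$, the initiator holds $i$ and the responder holds a different decided opinion. Then $\npt$ becomes $g-2a_i/n+1/n^2$ and $\beta$ becomes $b-1/n$.

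\emph{No change.} With the remaining probability, $\tnpt$ stays at $g/b^2$.

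Since $b\ge 1/2-o(1)$, we have $b-1/n>0$, so $\tnpt_t$ is given by the ratio formula in every case.

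I would then expand $(b\pm 1/n)^{-2}=b^{-2}\bigl(1\mp \tfrac{2}{bn}+O(\tfrac{1}{b^2n^2})\bigr)$. The cubic remainder is explicitly $O(n^{-3})$ because $b$ is bounded below by a constant. With this expansion I would collect terms by order in $1/n$.

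\emph{First-order terms.} The creation contribution is
\[
\frac{1-b}{b^2}\sum_i a_i\Bigl(\frac{2a_i}{n}-\frac{2g}{bn}\Bigr)=\frac{1-b}{b^2}\Bigl(\frac{2g}{n}-\frac{2g}{n}\Bigr)=0.
\]
The cancellation contribution is
\[
\frac{1}{b^2}\sum_i a_i(b-a_i)\Bigl(-\frac{2a_i}{n}+\frac{2g}{bn}\Bigr)=\frac{2}{b^2 n}\Bigl(c-\frac{g^2}{b}\Bigr).
\]
This is nonnegative, since $g^2=(\sum_i a_i^{1/2}a_i^{3/2})^2\le bc$ by Cauchy--Schwarz. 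The key point is that the first-order drift of $\tnpt$ is never negative, which is exactly what fails for the unnormalized $\npt$.

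\emph{Second-order terms.} The $+1/n^2$ in the new value of $\npt$ is weighted by the total probability of a change. That probability is $(1-b)b+(b^2-g)=b-g$, so this term contributes $(b-g)/(b^2n^2)\ge (1-o(1))/(bn^2)\ge (1-o(1))/n^2$. The remaining second-order terms come from two sources: the $3g/(b^4n^2)$-type terms of the expansion, and the cross terms $a_i\cdot\tfrac{2}{bn}\cdot\tfrac{2a_i}{n}$. Summed against the step probabilities, these are $O(g/n^2)$ and $O(\npm_{t-1}/n^2)$ respectively. Both are $o(1/n^2)$, because $g=o(1)$ and $\npm_{t-1}\le\sqrt{g}=o(1)$.

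The main obstacle is bookkeeping rather than ideas. I need the error terms to be bounded rigorously, not only asymptotically. To handle this I would avoid the series and use the exact identities
\[
(b+1/n)^{-2}\ge b^{-2}\bigl(1-\tfrac{2}{bn}\bigr)
\quad\text{and}\quad
(b-1/n)^{-2}\ge b^{-2}\bigl(1+\tfrac{2}{bn}+\tfrac{3}{b^2n^2}\bigr),
\]
which hold because $1/(1\pm x)^2$ is convex. Then every leftover term has an explicit constant times $g/n^2$ or $\npm_{t-1}/n^2$, and the leading $1/n^2$ term comfortably exceeds $1/(12n^2)$. Combining the nonnegative first-order part with the second-order bound gives $\E_{t-1}[\tnpt_t]\ge\tnpt_{t-1}+\frac{1}{12n^2}$.
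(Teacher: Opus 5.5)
Your proposal is correct, and it takes a different route from the paper.

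\textbf{What the paper does.} It never enumerates the values of $\tnpt_t$. It applies the generic ratio bound of \cref{lem:basic inequalities for ratio distribution} with $X=\gamma_t$ and $Y=\beta_t^2$, which is legitimate since $\gamma_t\le\beta_t^2$. It first lower-bounds $\E_{t-1}[\gamma_t]/\E_{t-1}[\beta_t]^2$ using the drift formulas for $\beta_t$ and $\gamma_t$, together with Jensen's inequality $\norm{\alpha_{t-1}}_3^3\ge\gamma_{t-1}^2/\beta_{t-1}$. This is the same fact as your Cauchy--Schwarz step $g^2\le bc$. It then pays separately for $\Var_{t-1}[\beta_t]$ and for $\Cov_{t-1}[\gamma_t,\beta_t^2]=O(\gamma_{t-1}/n^2)$, each computed by one-step enumeration.

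\textbf{What your route does differently.} You use the fact that in the population protocol $\tnpt_t$ takes only $2k+1$ values with explicit probabilities, so its conditional expectation can be computed essentially exactly. This makes the argument cleaner and more informative:
\begin{itemize}
\item the first-order creation terms cancel exactly;
\item the first-order cancellation terms equal $\frac{2}{b^2n}\bigl(c-g^2/b\bigr)\ge 0$;
\item the $1/n^2$ drift is visibly the $+1/n^2$ self-term weighted by the change probability $b-g$.
\end{itemize}
You get a drift of roughly $(1-o(1))/(bn^2)$ instead of $1/(12n^2)$, with no covariance bookkeeping. The paper's route, in exchange, is the same template as its gossip-model proof (\cref{lem:basic inequalities for tnpt}). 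There all $n$ vertices update simultaneously, so your enumeration is unavailable, and the ratio lemma is reused for $\tnpm_t$ as well.

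\textbf{Two small points to tidy up.}
\begin{itemize}
\item The inequality $(b-1/n)^{-2}\ge b^{-2}\bigl(1+\frac{2}{bn}+\frac{3}{b^2n^2}\bigr)$ does not follow from convexity alone, since the tangent line gives only the first two terms. It follows from the nonnegative power series of $(1-u)^{-2}$, or from $(1-u)^2(1+2u+3u^2)=1-u^3(4-3u)\le 1$ for $0\le u<1$. You do not actually need the quadratic term, because it only contributes positively.
\item Multiplying these reciprocal lower bounds by the new value of $\npt$ requires that value to be nonnegative. This holds automatically since it is a sum of squares; for example, $g-2a_i/n+1/n^2\ge(a_i-1/n)^2$.
\end{itemize}
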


\subsection{Stopping Times} \label{sec:stopping_times}

In the remainder of the paper we frequently refer to the first time at which
one of the key quantities introduced above crosses a given threshold.
Since these stopping times are used uniformly in both the gossip model
and the population–protocol model, we introduce them here in a unified way.

For a stochastic process $(Z_t)_{t\ge 0}$ and a threshold $\theta\in\mathbb{R}$,
we consider the upward and downward hitting times of the form
$\tau^+_Z := \inf\{t\ge 0 : Z_t \ge \theta\}$ and $\tau^-_Z := \inf\{t\ge 0 : Z_t \le \theta\}$.
We will consider such stopping times for $\beta_t$, $\npm_t$, $\delta_t$, $\tnpt_t$, etc.
These stopping times serve as canonical milestones in our drift and
concentration arguments later on.

Whenever a stopping time refers to a quantity that depends on a pair of
opinions (e.g., $\delta_t$), the stopping time is understood to be taken
with respect to that fixed pair unless otherwise stated.

\begin{definition}[Stopping Times] \label{def:stopping_times}
Consider quantities defined in \cref{def:key-quantities}.
\begin{itemize}
  \item (Stopping times for $\beta_t$) For a parameter $x>0$, define the stopping times
\[
  \taubetaplus(x) = \inf\qty{t\ge 0: \beta_t \ge \frac{1}{2}-x}
\text{ and }
\taubetaminus(x) = \inf\qty{t\ge 0: \beta_t < \frac{1}{2}-x}.
\]
By default, we set the parameter $x$ to be any positive function $\xbeta=\xbeta(n)$ satisfying $\xbeta(n) = \omega\qty( \sqrt{\log n/n} )$ and $\xbeta(n)=o(\log n/\sqrt{n})$ (for example, $\xbeta = (\log n)^{3/5}/\sqrt{n}$).
We sometimes abbreviate $\taubetaplus(\xbeta)$ and $\taubetaminus(\xbeta)$ to $\taubetaplus$ and $\taubetaminus$, respectively.

  \item (Stopping times for $\psi_t$) For a parameter $x>0$, 
  define the stopping times
\[
  \taupsiplus(x)=\inf\qty{t\geq 0:\psi_t> x}\text{ and }\taupsiminus(x)=\inf\qty{t\ge 0\colon \psi_t\le x}.
\]
By default, we set the parameter $x$ to be a function $x=\xpsi(n)$ satisfying $\xpsi=\xbeta/4$.
We sometimes abbreviate $\taupsiplus(\xpsi)$ and $\taupsiminus(\xpsi)$ to $\taupsiplus$ and $\taupsiminus$, respectively.

  \item (Stopping time for $\npt_t$)   For a parameter $\xgamma\in (0,1)$, 
  define the stopping time
  \begin{align*}
    \taugammaplus\defeq \inf\qty{t\geq 0:\gamma_t \geq \xgamma}.
  \end{align*}
  Throughout this paper, we fix $\xgamma = (\log n)^2/\sqrt{n}$.
  \item (Stopping times for $\npm_t$ and $\tnpm_t$)   For constants $\cmaxup,\cmaxdown\in (0,1)$, define the stopping times
  \begin{align*}
    \taumaxup &\defeq \inf\qty{t\geq 0:\npm_t\geq (1+\cmaxup)\npm_0}, \\
    \taumaxdown &\defeq \inf\qty{t\geq 0:\npm_t\leq (1-\cmaxdown)\npm_0}.
  \end{align*}
  For positive constants $\ctildemaxup,\ctildemaxdown\in (0,1)$, define the stopping times
  \begin{align*}
    \tautildemaxup&\defeq \inf\qty{t\geq 0:\tnpm_t\geq (1+\ctildemaxup)\tnpm_0}, \\
    \tautildemaxdown&\defeq \inf\qty{t\geq 0:\tnpm_t\leq (1-\ctildemaxdown)\tnpm_0}.  
  \end{align*}
  By default, we set $\cmaxup=\cmaxdown=\ctildemaxup=\ctildemaxdown=0.1$.

  \item (Stopping times for $\delta_t$) For constants $\cdeltaup,\cdeltadown\in(0,1)$, define the stopping times
      \begin{align*}
    \taudeltaup\defeq \inf\qty{t\geq 0:\delta_t\geq (1+\cdeltaup)\delta_0} \quad \text{and} \quad 
    \taudeltadown\defeq \inf\qty{t\geq 0:\delta_t\leq (1-\cdeltadown)\delta_0}.
    \end{align*}
    be the stopping times for $\delta_t$ growing up and down multiplicatively.
    The constants $\cdeltaup$ and $\cdeltadown$ are defined as appropriate within the proofs. 
    
    For a threshold parameter $x>0$, define the stopping time
    \begin{align*}
      \taudeltaplus(x)\defeq \inf\qty{t\geq 0\colon \abs{\delta_t} \geq x}.
    \end{align*}
    By default, we set the parameter $x$ to be $x=\cdeltaplus/\sqrt{n}$, where $\cdeltaplus$ is a positive constant defined as appropriate within the proofs.
    We sometimes abbreviate $\taudeltaplus(\xdelta)$ to $\taudeltaplus$, respectively.
    \item (Stopping times for $\eta_t$) 
    For a parameter $x>0$ and an opinion $j\in [k]$, define the stopping times
    \begin{align*}
      \tauetaplus(x)\defeq \inf\{t\geq 0: \eta_t(j)\geq  x\}\quad \text{and} \quad
      \tauetaminus(x)\defeq \inf\{t\geq 0: \eta_t(j)< x\}.
    \end{align*}
    By default, we set the parameter $x$ to be any positive function $x=\xeta(n)$ satisfying $\xeta(n) = \omega\qty( \xbeta(n) )$ and $\xeta(n)=o(\log n/\sqrt{n})$ (for example, $\xeta = (\log n)^{4/5}/\sqrt{n}$).
  We sometimes abbreviate $\tauetaplus(\xeta)$ and $\tauetaminus(\xeta)$ to $\tauetaplus$ and $\tauetaminus$, respectively.
\end{itemize}
\end{definition}

\section{Analysis for the Gossip Model} \label{sec:proof_gossip}
In this section we present the full proof of \cref{thm:main} for the gossip
model.  The structure of the argument follows the outline given in
\cref{sec:proof_overview}: each subsection implements one of the steps
described there, in the same order and with the corresponding technical tools.
Because every component of the overview requires its own drift and
concentration analysis, the section is necessarily long; however, its
organization mirrors the conceptual roadmap of \cref{sec:proof_overview}, and
the reader may consult that discussion to track how each ingredient fits into
the overall proof.

\subsection{Probability of Failure at the First Round} \label{sec:probability of failure at the first round}
In this subsection, we shall consider the configuration after the first synchronous update.
Specifically, we obtain the probability that the dynamics fails at the first round.

\begin{lemma} \label{lem:fail at first round}
  Consider USD in the gossip model. Let $\pbot=\Pr[\beta_1=0]$ be the probability that all vertices hold $\bot$ after the first round.
  Then, we have the following:
  \begin{enumerate}
    \item If $\beta_0=1$, then $\pbot = \prod_{i\in[k]}\qty(1-\alpha_0(i))^{n\alpha_0(i)} \le \exp(-n\gamma_0) \le \exp(-n/k)$.
    \item If $0<\beta_0<1$, then $\Pr[\beta_1=0] \le \frac{1}{n}$.
  \end{enumerate}
\end{lemma}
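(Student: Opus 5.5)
The plan is to exploit the independence of the vertices' samples within the first round and compute the probability that every vertex ends up undecided.

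For part 1, every vertex is decided when $\beta_0=1$. A vertex $u$ with opinion $i$ ends up holding $\bot$ iff its sampled partner holds a decided opinion different from $i$. Since $\beta_0=1$, this happens with probability $\sum_{j\ne i}\alpha_0(j) = 1-\alpha_0(i)$. Otherwise it samples opinion $i$ and keeps $i$. The choices are independent across vertices, and there are $n\alpha_0(i)$ vertices holding opinion $i$. Hence
\[
\pbot=\prod_{i\in[k]}(1-\alpha_0(i))^{n\alpha_0(i)}.
\]
Using $1-x\le \e^{-x}$, this is at most $\exp(-n\sum_i\alpha_0(i)^2)=\exp(-n\gamma_0)$. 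By Cauchy--Schwarz, $\gamma_0\ge \beta_0^2/k=1/k$, which gives $\exp(-n/k)$.

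For part 2, there is at least one undecided vertex and at least one decided vertex. For $\beta_1=0$, each undecided vertex must stay undecided, i.e., it must sample an undecided vertex, which happens with probability $1-\beta_0$. Each decided vertex with opinion $i$ must sample a different decided opinion, which happens with probability $\beta_0-\alpha_0(i)\le \beta_0$. By independence,
\[
\Pr[\beta_1=0]\le (1-\beta_0)^{n(1-\beta_0)}\,\beta_0^{n\beta_0}.
\]
Write $x=\beta_0\in[1/n,1-1/n]$. It suffices to bound $f(x)=x^{nx}(1-x)^{n(1-x)}=\exp(-nH(x))$, where $H$ is the natural-log entropy. $H$ is concave and symmetric about $1/2$, so on $[1/n,1-1/n]$ it is minimized at the endpoints, where
\[
nH(1/n)=\log n+(n-1)\log\frac{n}{n-1}\ge \log n.
\]
Hence $f(x)\le 1/n$.

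The only mildly delicate step is this endpoint minimization, and concavity of $H$ handles it directly; the rest is bookkeeping of independence. I would also check that dropping the $-\alpha_0(i)$ term is harmless: it only loosens the bound, so it is fine.
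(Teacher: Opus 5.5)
Your proposal is correct and follows essentially the same argument as the paper. Like the paper, you write down the exact product formula for $\pbot$ from the independence of the first-round samples, use $1-x\le \e^{-x}$ and Cauchy--Schwarz for part 1, and for part 2 drop the $-\alpha_0(i)$ terms to reduce to an entropy bound minimized at $\beta_0=1/n$; the paper uses binary rather than natural-log entropy, which gives the same $1/n$ bound.
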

\begin{proof}
  By definition, we have
  \begin{align}
    \pbot &= (1-\beta_0)^{(1-\beta_0)n} \cdot \prod_{i\in[k]} (\beta_0 - \alpha_0(i))^{\alpha_0(i)n} \label{eq:pbot2}
  \end{align}
  Here, note that $(1-\beta_0)^{(1-\beta_0)n}$ amounts for the probability that all initially undecided vertices keep $\bot$ after the first round,
  and $\prod_{i\in[k]} (\beta_0 - \alpha_0(i))^{\alpha_0(i)n}$ amounts for the probability that all initially decided vertices becomes undecided after the first round.

  The first item follows by substituting $\beta_0=1$ in \cref{eq:pbot2}.
  Note that, since $\beta_0=1$, we have $1=\sum_{i\in[k]} \alpha_0(i) \le \sqrt{\sum_{i\in[k]} 1^2}\cdot \sqrt{\sum_{i\in[k]} \alpha_0(i)^2} = \sqrt{k\gamma_0}$ by the Cauchy-Schwarz inequality,
  so $\gamma_0 \ge 1/k$.

  We now prove the second item. From \cref{eq:pbot2}, we have
  \begin{align*}
    \pbot &\le (1-\beta_0)^{(1-\beta_0)n} \cdot \prod_{i\in[k]} \beta_0^{\alpha_0(i) n} \\
    &= (1-\beta_0)^{(1-\beta_0)n} \cdot \beta_0^{\beta_0 n} \\
    & = 2^{-n \mathrm{H}(\beta_0)},
  \end{align*}
  where $\mathrm{H}(x) = -x \log_2 x - (1-x) \log_2 (1-x)$ denotes the binary entropy function.
  Since $\frac{1}{n} \le \beta_0 \le 1-\frac{1}{n}$, we have $\mathrm{H}(\beta_0) \ge \mathrm{H}(1/n) \ge \frac{\log_2 n}{n}$. This proves the second item.
\end{proof}

\subsection{Behavior of the Fraction of Decided Vertices} \label{sec:behavior of beta}
From \cref{lem:fail at first round,lem:beta1_upper_bound}, we know that the fraction of undecided vertices is $0$ with probability $\pbot$ or otherwise lies between $1/n$ and $\gamma_0 \log n$ with high probability.
In this subsection we show that 
when the latter occurs, then it holds with high probability that $\beta_t \ge 1/2 - o(1)$ for all $\Omega(\log n) \le t \le n^{O(1)}$; \cref{lem:taubeta}.
In particular, this allows us to treat $\beta_t$ as $\Omega(1)$ for all such $t$.
See \cref{fig:beta} for an illustration.

In addition, we prove 
\cref{lem:taupsi} regarding the potential $\psi_t = \beta_t (2\beta_t - 1) - \gamma_t$, which is related to a lower bound of $\beta_t$.
Note that by definition, $\E_{t-1}[\beta_t] = \beta_{t-1} - \psi_{t-1}$.

\begin{figure}[htbp]
    \centering
    \includegraphics[width=0.8\textwidth]{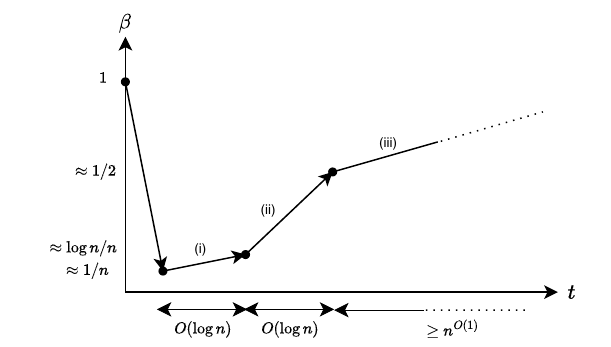}
    \caption{Behavior of $\beta_t$ over time in the gossip model.}
    \label{fig:beta}
\end{figure}


\begin{lemma}[Growth of $\beta_t$]
    \label{lem:taubeta}
    We have the following:
    \begin{enumerate}
        \item \label{item:taubeta_logn} Let $C>0$ be an arbitrary constant. 
        Suppose that $\beta_0>0$. Then, for some $T=O(\log n)$, 
        $\beta_T\geq \frac{C\log n}{n}$ with probability at least $1-O\qty(\frac{\log n}{n})$.
        \item \label{item:taubetaplus} 
        For some $T=O(\log n)$, 
      $
        \Pr\qty[ \taubetaplus > T ] \le T \qty(\exp\qty(-\Omega(n \beta_0))+\exp\qty(-\Omega(n\xbeta^2))).
      $
        \item \label{item:taubetaminus} 
        Suppose that $\psi_0 \leq \xpsi$ and $\beta_0 \ge 1/2-\xbeta$ hold.
        Then, for any $T\ge 1$, 
        \[
        \Pr\qty[ \taubetaminus \le T \text{ and } \taupsiplus > T] \le T \exp\qty(-\Omega(n \xbeta^2)).
        \]
    \end{enumerate}
\end{lemma}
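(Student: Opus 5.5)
For \cref{lem:taubeta}, each of the three items is reduced to a drift-plus-concentration argument built on \cref{lem:basic inequalities for beta}. The expectation identity $\E_{t-1}[\beta_t]=2\beta_{t-1}(1-\beta_{t-1})+\gamma_{t-1}$ supplies the drift. The $(\frac1n,\frac{\beta_{t-1}}{n})$-Bernstein condition supplies the concentration, through \cref{remark:Bernstein condition and concentration} for single rounds and \cref{lem:Useful drift lemma} for longer stretches.

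\textbf{Item 1 (from $1/n$ to $C\log n/n$).} If $\beta_0=1$, then $\beta_1\ge$ (the fraction of vertices that sample their own opinion) $\ge\gamma_0$. If $\beta_0<1$, the undecided vertices adopt decided opinions. In either case, after dropping $\gamma$, the drift gives $\E_{t-1}[\beta_t]\ge 2\beta_{t-1}(1-\beta_{t-1})\ge 1.9\beta_{t-1}$ whenever $\beta_{t-1}\le C\log n/n$.

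When $n\beta_{t-1}$ is small, however, concentration is unavailable. I would therefore argue combinatorially. Let $m=n\beta_{t-1}$ be the number of decided vertices, and consider only the $n(1-\beta_{t-1})\ge n/2$ undecided vertices. Each adopts a decided opinion independently with probability $\beta_{t-1}$, so $n\beta_t$ stochastically dominates $\mathrm{Bin}(n/2,m/n)$, which has mean $m/2$. The decided vertices themselves stay decided with probability at least $1-\beta_{t-1}+\alpha(i)$.

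This makes $n\beta_t$ dominate a supercritical branching-type process with mean offspring about $2$. The $O(\log n/n)$ failure probability is the extinction-type risk in the early rounds. For $m\ge 2$, the Chernoff failure probability is $e^{-\Omega(m)}$, and summing over the geometrically growing $m$ gives $O(1/n)$-type bounds. The first step from $m=1$ has failure probability $O(1/n)$ by the computation in \cref{lem:fail at first round}. I expect this small-population phase to be the main obstacle: the constants must be controlled so that the failure probabilities sum to $O(\log n/n)$ over $O(\log n)$ rounds, while handling the case $\beta_0=1$ separately.

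\textbf{Item 2 (reaching $1/2-\xbeta$ within $O(\log n)$ rounds).} I split into phases. While $\beta_{t-1}\le c$, the bound $\E_{t-1}[\beta_t]\ge 2(1-c)\beta_{t-1}$ together with the Bernstein tail implies $\beta_t\ge 1.5\beta_{t-1}$ with failure probability $\exp(-\Omega(n\beta_{t-1}))\le\exp(-\Omega(n\beta_0))$, assuming $\beta$ never drops. After $O(\log n)$ rounds this gives $\beta=\Omega(1)$.

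Once $\beta_{t-1}\in[c,1/2-\xbeta]$, I use $\gamma\ge0$ to get
\[
\E_{t-1}[\beta_t]-\tfrac12\ge -2\left(\beta_{t-1}-\tfrac12\right)^2 .
\]
Writing $d_t=\tfrac12-\beta_t$, this means $d_t$ contracts quadratically in expectation, $\E[d_t]\le 2d_{t-1}^2$, plus fluctuations of order $\sqrt{\log n/n}$. Within $O(\log\log n)$ rounds $d_t$ reaches the noise level, and a single-round Bernstein bound then gives $d_t\le\xbeta$ with failure probability $\exp(-\Omega(n\xbeta^2))$, using $\xbeta=\omega(\sqrt{\log n/n})$.

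The remaining case is $\beta_{t-1}>1/2$. There $\E_{t-1}[\beta_t]\ge 2\beta_{t-1}(1-\beta_{t-1})$, which is at least $1/2-2(\beta_{t-1}-1/2)^2$. This is again near $1/2$ unless $\beta_{t-1}$ is close to $1$, in which case $\gamma$ or the previous argument restarts the growth. A union bound over the $T$ rounds yields the stated form $T(\exp(-\Omega(n\beta_0))+\exp(-\Omega(n\xbeta^2)))$.

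\textbf{Item 3 (staying above $1/2-\xbeta$).} For $t\le\taupsiplus$ we have $\psi_{t-1}\le\xpsi=\xbeta/4$, so $\E_{t-1}[\beta_t]\ge\beta_{t-1}-\xbeta/4$. A per-round argument is cleaner than a drift lemma here. If $\beta_{t-1}\ge 1/2-\xbeta$, I use
\[
\E_{t-1}[\beta_t]=2\beta_{t-1}(1-\beta_{t-1})+\gamma_{t-1}\ge \tfrac12-2\xbeta^2 ,
\]
which holds when $\beta_{t-1}\le 1/2+\xbeta$. If instead $\beta_{t-1}>1/2+\xbeta$, I use $\E_{t-1}[\beta_t]\ge\beta_{t-1}-\xbeta/4>1/2$. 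In both cases $\E_{t-1}[\beta_t]\ge 1/2-\xbeta/2$. The Bernstein tail with deviation $\xbeta/2$ then bounds $\Pr[\beta_t<1/2-\xbeta]$ by $\exp(-\Omega(n\xbeta^2))$, and a union bound over $t\le T$ gives the claim.
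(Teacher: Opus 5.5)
Your items 2 and 3 are correct and essentially follow the paper. In item 2 the paper contracts $B_t=1-2\beta_t$ linearly via $\Pr_{t-1}[B_t\ge\frac{3}{4}B_{t-1}]\le\exp(-\Omega(nB_{t-1}^2))$, where you use the quadratic bound $\E_{t-1}[d_t]\le 2d_{t-1}^2$. In item 3 the paper splits at $1/2-\xbeta/2$ and uses $\beta_{t-1}(1-2\beta_{t-1})\ge\xbeta/4$ below that point; your split at $1/2+\xbeta$ works just as well. Your separate treatment of $\beta_0=1$ in item 1 and of $\beta_{t-1}>1/2$ in item 2 is unnecessary: these are hitting-time statements, so the relevant stopping time is already $0$ in those cases. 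Also, $\beta_1\ge\gamma_0$ holds only in expectation.

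The genuine gap is in item 1. You claim that summing the per-round failure probabilities $e^{-\Omega(m)}$ over geometrically growing $m$ gives an $O(1/n)$-type bound; this is false. The sum is $\Theta(1)$, because for $m=2,3,\dots$ a single round fails to grow with constant probability. For instance, with $m=2$ decided vertices, no undecided vertex samples a decided one with probability $(1-2/n)^{n-2}\approx e^{-2}$. So any argument that requires every round of the small-population phase to succeed gives only constant success probability, and no choice of constants repairs this.

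The missing idea is that a failed round is not fatal, so you can simply retry. As long as $\beta_t>0$ the process keeps a constant chance of growing, and collapse to $0$ from $0<\beta_{t-1}<1$ has probability at most $1/n$ per round by \cref{lem:fail at first round}. (From $m=1$ the lone decided vertex cannot become undecided at all, so the ``extinction risk'' picture is misleading.)

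The paper formalizes this with the restart lemma \cref{lem:nazo lemma}, applied to $\varphi(\opn_t)=\sqrt{n\beta_t}$ with $T=1$, $x_0=1$, $x^*=\sqrt{C\log n}$, and $\tau$ the first time $\beta_t=0$. Condition (i) holds with $C_1=1$, because every configuration other than all-$\bot$ has $\varphi\ge 1$. Condition (ii) is exactly the one-round Bernstein bound $\Pr_{t-1}[\beta_t\le\frac{5}{4}\beta_{t-1}]\le\exp(-\Omega(n\beta_{t-1}))$. With $\varepsilon=n^{-10}$, this shows that $\beta_t$ reaches $C\log n/n$ or hits $0$ within $O(\log n)$ rounds with probability $1-n^{-10}$. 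The $O(\log n/n)$ term in the statement is then just the union bound of the $1/n$ collapse probability over these $O(\log n)$ rounds, not an accumulation of growth failures.
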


\begin{lemma}[Decay of $\psi_t$]
    \label{lem:taupsi}
    We have the following:
    \begin{enumerate}
        \item \label{item:taupsiminus}
        $\Pr[\taupsiminus>1]\leq \exp(-\Omega(n\xpsi^2))$ for any initial configuration.
        \item \label{item:taupsiplus} 
        Suppose $\psi_0\leq \xpsi$. Then, for any $T\geq 1$, 
        $\Pr[\taupsiplus\leq T]\leq T\exp(-\Omega(n\xpsi^2))$.
    \end{enumerate}
\end{lemma}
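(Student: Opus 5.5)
Both items follow from \cref{lem:basic inequalities for psi}, which gives $\E_{t-1}[\psi_t]\le \beta_{t-1}/n\le 1/n$ for every configuration at round $t-1$. It also says that $\psi_t-\E_{t-1}[\psi_t]$ satisfies the $(24/n,400\beta_{t-1}/n)$-Bernstein condition, hence the $(24/n,400/n)$ one by \cref{lem:Bernstein condition}\cref{item:BC for upper bounded rv}. Note that $\xpsi=\xbeta/4=\omega(\sqrt{\log n/n})$, so in particular $\xpsi\gg 1/n$. The key feature is that these bounds are uniform over all configurations, so no lower-bound assumption on $\beta$ is needed.

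\emph{Item~\ref{item:taupsiminus}.} If $\psi_0\le\xpsi$ we are done. Otherwise we must show $\psi_1\le \xpsi$ with the stated probability. Put $X=\psi_1-\E_0[\psi_1]$ and $h=\xpsi-1/n$. Then $\Pr[\psi_1>\xpsi]\le \Pr[X\ge h]$. By \cref{remark:Bernstein condition and concentration} this is at most
\[
\exp\qty(-\frac{h^2/2}{400/n+24h/(3n)}).
\]
Since $h=(1-o(1))\xpsi$ and $\xpsi=o(1)$, the denominator is $O(1/n)$, so the bound is $\exp(-\Omega(n\xpsi^2))$.

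\emph{Item~\ref{item:taupsiplus}.} The same one-step bound holds at every round $t\ge1$, regardless of what $\psi_{t-1}$ is. Conditioned on $\calF_{t-1}$, we have $\Pr_{t-1}[\psi_t>\xpsi]\le \exp(-\Omega(n\xpsi^2))$. Because $\psi_0\le\xpsi$, the event $\taupsiplus\le T$ means some $t\in\{1,\dots,T\}$ has $\psi_t>\xpsi$. Taking expectations of the conditional bound and applying a union bound over $t\le T$ gives $T\exp(-\Omega(n\xpsi^2))$.

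The only step that needs any care is checking that the threshold exceeds the drift: $\E_{t-1}[\psi_t]\le 1/n=o(\xpsi)$. This is what lets us avoid \cref{lem:Freedman stopping time additive} altogether. The fresh randomness in each round makes $\psi_t$ "reset" to near zero, rather than accumulate drift, so a plain per-round tail bound suffices.
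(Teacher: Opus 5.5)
Your proof is correct and follows the paper's argument. You derive a one-step tail bound from $\E_{t-1}[\psi_t]\le 1/n$ and the configuration-uniform Bernstein condition of \cref{lem:basic inequalities for psi}, via \cref{remark:Bernstein condition and concentration}, and then take a union bound over $t\le T$ for the second item. The only difference is cosmetic: you use the margin $h=\xpsi-1/n$, whereas the paper uses $\xpsi/2$ together with $1/n+\xpsi/2\le\xpsi$.
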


\begin{proof}[Proof of \cref{item:taubeta_logn} of \cref{lem:taubeta}.]
First, for $\beta_0< C\log n/n\leq 1/4$, we have
\begin{align}
  \E_{t-1}[\beta_t] 
  &= 2\beta_{t-1}(1-\beta_{t-1}) + \gamma_{t-1} 
  \geq \frac{3}{2}\beta_{t-1}.
  \label{eq:Ebeta_1}
\end{align}
Note that we use \cref{lem:basic inequalities for beta} (\cref{item:expectation of beta}).
From \cref{lem:basic inequalities for beta} (\cref{item:bernstein condition of beta}) and \cref{remark:Bernstein condition and concentration}, 
\begin{align}
  \Pr_{t-1}\qty[ \beta_t \le \frac{5}{4}\beta_{t-1}]
  &\le \Pr_{t-1}\qty[ \beta_t \le \E_{t-1}[\beta_t]-\frac{\beta_{t-1}}{4}] 
  \le \exp\qty(-\,\frac{\frac{1}{2}\frac{\beta_{t-1}^2}{16}}{\frac{\beta_{t-1}}{n}+\frac{\beta_{t-1}}{12n}}) 
  \le \exp\qty(-\Omega(n\beta_{t-1})).
  \label{eq:one step concentration of beta}
\end{align}

\newcommand{\tauzero}{\tau_\mathrm{zero}}
Let $\tauzero=\inf\{t\ge 0: \beta_t=0\}$ be the first time when $\beta_t$ becomes $0$.
From \cref{lem:fail at first round}, we have $\Pr\qty[\tauzero\leq T]\leq T/n$.
Further, by assumption of $\beta_0>0$, we have $\tauzero > 0$.
We apply \cref{lem:nazo lemma} for $Z_t=\opn_t$, $T=1$, $\varphi(Z_t)=\sqrt{n \beta_t}$, $\tau=\tauzero$, $\cphiup=\sqrt{5/4}-1$, $x_0 = 1$, and $x^*=\sqrt{C\log n}$. 
From these settings, we have
$\tauphiplus(x_0) = \inf\{t\geq 0:\sqrt{n\beta_t}\geq 1\}=0$,
$\tauphiplus(x_*) = \inf\{t\geq 0:\sqrt{n\beta_t}\geq \sqrt{C\log n/n}\}=\inf\{t\geq 0: \beta_t\geq C\log n/n\}$, and
$\tauphiup = \inf\{t\geq 0:\sqrt{n\beta_t}\geq (1+\cphiup)\sqrt{n\beta_0}\}=\inf\{t\geq 0: \beta_t\geq (5/4)\beta_0\}$.
Hence,
\begin{align*}
\Pr\qty[\min\qty{\tauphiplus(x_0), \tau}\leq 1] = 1
\end{align*}
holds, i.e., the first condition of \cref{lem:nazo lemma} is satisfied for $C_1=1$.
Next, for any configuration of $\beta_0\leq C\log n/n$, we have
\begin{align*}
  \Pr\qty[\min\qty{\tauphiup, \tau}>1]
  \geq \Pr\qty[\beta_1\leq \frac{5}{4}\beta_0]
  \geq 1-\exp\qty(-\Omega(n\beta_0))
  =1-\exp\qty(-\Omega(\varphi(Z_0)^2)),
\end{align*}
i.e., the second condition of \cref{lem:nazo lemma} is satisfied for some positive constant $C_2>0$.
Note that we use \cref{eq:one step concentration of beta} in the second inequality.

Thus, from \cref{lem:nazo lemma} with $\varepsilon=n^{-10}$, 
for some $T'=O(\log n)$,
\begin{align*}
  \Pr\qty[\tauphiplus(x_*)>T']
  &\leq \Pr\qty[\min\qty{\tauphiplus(x_*), \tauzero}>T']+\Pr\qty[\tauzero\leq T']\\
  &\leq n^{-10}+T'/n.
\end{align*}

\end{proof}

\begin{proof}[Proof of \cref{item:taubetaplus} of \cref{lem:taubeta}.]
First, consider the case where $\beta_{t-1} \le \frac{1}{4}$.
    Recall \cref{eq:one step concentration of beta}: For any $\beta_{t-1}\leq 1/4$, 
        $\Pr_{t-1}\qty[ \beta_t \le (5/4)\beta_{t-1}]\le \exp\qty(-\Omega(n\beta_{t-1}))$.
Therefore, for some $T=O(\log n)$ rounds, 
$\beta_T \ge \frac{1}{4}$ with probability at least $1-T\exp\qty(-\Omega(n \beta_0))$.

Second, consider the case where $\frac{1}{4}\leq \beta_{t-1} \le \frac{1}{2}-\xbeta$.
Write $B_t=1-2\beta_t$ for convenience.
Recall that $\E_{t-1}[\beta_t] = 2\beta_{t-1}(1-\beta_{t-1}) + \gamma_{t-1}$.
Since $2\xbeta \leq B_{t-1}\leq 1/2$, we have
    \begin{align*}
        \E_{t-1}[B_t]
        &=1-4\beta_{t-1}(1-\beta_{t-1})-2\gamma_{t-1}
        \leq B_{t-1}^2
        \leq \frac{B_{t-1}}{2}.
    \end{align*}
    Furthermore, from $B_t-\E_{t-1}[B_t]=2(\E_{t-1}[\beta_t]-\beta_t)$, \cref{lem:Bernstein condition} (\cref{item:BC for linear transformation}), and \cref{lem:basic inequalities for beta} (\cref{item:bernstein condition of beta}), 
    $B_t-\E_{t-1}[B_t]$ satisfies $\qty(\frac{2}{n}, \frac{4\beta_{t-1}}{n})$-Bernstein condition.
    Hence,
    \begin{align*}
        \Pr_{t-1}\qty[B_t\geq \frac{3}{4}B_{t-1}]
        &\leq \Pr_{t-1}\qty[B_t\geq \E_{t-1}[B_t]+\frac{B_{t-1}}{2}]
        \leq \exp(-\Omega(nB_{t-1}^2)).
    \end{align*}
    Therefore, for some $T'=O(\log n)$ rounds, $B_{T'}\leq 2\xbeta$ holds with a probability at least $1-T_1\exp(-\Omega(n\xbeta^2))$.
    Note that $B_t\leq 2\xbeta$ implies $\beta_t\geq 1/2-\xbeta$.
    \end{proof}

\begin{proof}[Proof of \cref{item:taubetaminus} of \cref{lem:taubeta}.]
  First, consider the case where $\frac{1}{2}-\xbeta\leq \beta_{t-1}\leq \frac{1}{2}-\frac{\xbeta}{2}$.
  Note that $\frac{1}{4}\leq \frac{1}{2}-\xbeta$. 
  From \cref{lem:basic inequalities for beta} (\cref{item:expectation of beta}),
  \begin{align*}
      \E_{t-1}[\beta_t]-\beta_{t-1}
      \geq \beta_{t-1}(1-2\beta_{t-1})
      \geq \qty(\frac{1}{2}-\frac{\xbeta}{2})\xbeta
      \geq \frac{\xbeta}{4}.
  \end{align*}
  Hence,
  from \cref{lem:basic inequalities for beta} (\cref{item:bernstein condition of beta}), 
  \begin{align*}
      \Pr_{t-1}\qty[\beta_t \leq \frac{1}{2}-\xbeta]
      \leq \Pr_{t-1}\qty[\beta_{t}\leq \beta_{t-1}]
      \leq \Pr_{t-1}\qty[\beta_{t}\leq \E_{t-1}[\beta_t]-\frac{\xbeta}{4}]
      \leq \exp(-\Omega(n\xbeta^2)).
  \end{align*}

Second, consider the case where $\frac{1}{2}-\frac{\xbeta}{2}\leq \beta_{t-1}$.
Recall $\xbeta\geq 4\xpsi$.
  Then, from \cref{lem:basic inequalities for beta} (\cref{item:expectation of beta,item:bernstein condition of beta}), for $\taupsiplus>t-1$, we have
  \begin{align*}
    \Pr_{t-1}\qty[\beta_t \leq \frac{1}{2}-\xbeta]
      \leq \Pr_{t-1}\qty[\beta_{t}\leq \beta_{t-1}-\psi_{t-1}-\frac{\xbeta}{4}]
      = \Pr_{t-1}\qty[\beta_{t}\leq \E_{t-1}[\beta_t]-\frac{\xbeta}{4}]
      \leq \exp(-\Omega(n\xbeta^2)).
  \end{align*}
  Thus, we obtain
  \begin{align*}
      \Pr\qty[\taubetaminus \leq T \text{ and } \taupsiplus>T]
      &=\Pr\qty[\exists t\leq T: \beta_t\leq \frac{1}{2}-\xbeta\text{ and }\taupsiplus>T]\\
      &\leq \sum_{t=1}^T\E\qty[\indicator_{\taupsiplus>t-1}\Pr_{t-1}\qty[\beta_t \leq \frac{1}{2}-\xbeta]]\\
      &\leq T\exp(-\Omega(n\xbeta^2)).
  \end{align*}
  Note that $\Pr[\beta_t\leq \frac{1}{2}-\xbeta\text{ and }\taupsiplus>t-1]=\E\qty[\indicator_{\taupsiplus>t-1}\Pr_{t-1}\qty[\beta_t \leq \frac{1}{2}-\xbeta]]$.
\end{proof}

\begin{proof}[Proof of \cref{item:taupsiminus} of \cref{lem:taupsi}.]
  From \cref{lem:basic inequalities for psi} (\cref{item:expectation of psi}),
  $\E_{t-1}[\psi_t]+\frac{\xpsi}{2}\leq \frac{1}{n}+\frac{\xpsi}{2}\leq \xpsi$ holds.
  Further, from \cref{lem:basic inequalities for psi} (\cref{item:bernstein condition of psi}), $\psi_t-\E_{t-1}[\psi_t]$ satisfies $\qty(O(1/n),O(1/n))$-Bernstein condition.
  Hence, from \cref{remark:Bernstein condition and concentration},
  \begin{align}
    \Pr_{t-1}\qty[\psi_t> \xpsi]
    \leq \Pr_{t-1}\qty[\psi_t> \E_{t-1}[\psi_t]+\frac{\xpsi}{2}]
    \leq \exp\qty(-\Omega\qty(\frac{\xpsi^2}{\frac{1}{n}+\frac{\xpsi}{n}}))
    \leq \exp\qty(-\Omega\qty(n\xpsi^2)).
    \label{eq:psi onestep}
  \end{align}
  We obtain the claim from $\Pr\qty[\taupsiminus>1]\leq \Pr\qty[\psi_1> \xpsi]$.
\end{proof}
\begin{proof}[Proof of \cref{item:taupsiplus} of \cref{lem:taupsi}.]
      From the union bound and \cref{eq:psi onestep}, we have
      \begin{align*}
          \Pr[\taupsiplus\leq T]
          =\Pr[\bigvee_{t=1}^T \qty{\psi_t>  \xpsi}]
          \leq \sum_{t=1}^T \Pr[\psi_t>  \xpsi]
          =\sum_{t=1}^T \E\qty[\Pr_{t-1}\qty[\psi_t>\xpsi]]
          \leq  T\exp\qty(-\Omega(n\xpsi^2)).
      \end{align*}
      Note that $\taupsiplus\neq 0$ since $\psi_0\leq \xpsi$.
\end{proof}

\subsection{Behavior of Squared \texorpdfstring{$\ell^2$}{ell-2} Norm of Normalized Population} \label{sec:behavior of tnpt}
In this subsection, we track the change of the quantity $\tnpt_t=\norm{\alpha_t}_2^2/\beta_t^2$ defined in \cref{def:key-quantities}.
We show that $\tnpt_t$ becomes $\Omega((\log n)^2/\sqrt{n})$ within $\Otilde(\sqrt{n})$ rounds with high probability.
Recall the definition of $\xgamma=(\log n)^2/\sqrt{n}$ in \cref{def:stopping_times}.


\begin{lemma}[Growth of $\npt_t$]
    \label{lem:growth of tnpt}
    Suppose $\beta_0\geq 1/2-\xbeta$ and $\psi_0 \le \xpsi$.
    Then, for some $T = O(n\xgamma \log n)$, 
    \[\Pr\qty[\taugammaplus>T \text{ or } \min\{\taubetaminus,\taupsiplus\}\leq T]\leq n^{-\Omega(1)}.\]
  \end{lemma}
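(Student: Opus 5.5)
The plan is to run the additive-drift machinery of \cref{lem:Useful drift lemma} (item~\ref{item:positive drift useful}) on the normalized potential $X_t=\tnpt_t$. The stopping time is $\tau^*=\min\{\taubetaminus,\taupsiplus,\taugammaplus\}$. I will take $I^+=4\xgamma$ and $I^-=-1$; the lower barrier is never hit because $\tnpt_t\ge 0$. The target conclusion is that $\tau^*\le T$ with high probability. Given that, \cref{lem:taubeta} (\cref{item:taubetaminus}) and \cref{lem:taupsi} (\cref{item:taupsiplus}) show that $\min\{\taubetaminus,\taupsiplus\}\le T$ has probability at most $T\exp(-\Omega(n\xbeta^2))+T\exp(-\Omega(n\xpsi^2))=n^{-\omega(1)}$, since $n\xbeta^2=\omega(\log n)$ and $T$ is polynomial. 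Hence $\tau^*\le T$ must come from $\taugammaplus\le T$.

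First I need the link between $\tnpt_t$ and $\gamma_t$. Before $\tau^*$ we have $\beta_t\ge 1/2-\xbeta$ and $\beta_t\le1$, so $\gamma_t=\beta_t^2\tnpt_t\ge(1/4-o(1))\tnpt_t$. Thus $\tnpt_t\ge 4\xgamma(1+o(1))$ forces $\gamma_t\ge\xgamma$, and $\tau^+\ge\taugammaplus$ up to adjusting the constant in $I^+$. For condition \ref{item:C1}: on $\{\tau^*>t-1\}$ we have $\beta_{t-1}\ge1/2-o(1)$, $\psi_{t-1}\le\xpsi=o(1)$ and $\gamma_{t-1}<\xgamma=o(1)$. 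So \cref{lem:basic inequalities for tnpt} gives drift $\drift=1/(12n)$. Choosing $T=(1+\epsilon)\cdot 12n\cdot 4\xgamma=O(n\xgamma)$ makes the first term of the bound in \cref{lem:Useful drift lemma} essentially vanish, because $X_0-I^-\ge1$. The $\log n$ slack in the claimed $T=O(n\xgamma\log n)$ will absorb the variance term.

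The main obstacle is condition \ref{item:C2}: I need a one-sided Bernstein condition for $\E_{t-1}[\tnpt_t]-\tnpt_t$ (the lower tail), with parameters $(D,s)$ such that $(I^+)^2/(sT+I^+D)=\Omega(\log n)$. I will follow the Taylor idea from \cref{sec:concentration_overview}. Write $\mu=\E_{t-1}[\beta_t]=\beta_{t-1}-\psi_{t-1}\ge 1/2-o(1)$. Convexity of $x\mapsto x^{-2}$ gives
\[
\frac{\gamma_t}{\beta_t^2}\ge\frac{\gamma_t}{\mu^2}-\frac{2\gamma_t(\beta_t-\mu)}{\mu^3}.
\]
The first term is controlled by \cref{lem:basic inequalities for gamma} (\cref{item:bernstein condition of gamma}), with parameters $(O(1/n),O(\gamma_{t-1}/n))$. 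For the cross term I will use a high-probability truncation $\gamma_t\le 20\gamma_{t-1}+O(\sqrt{\gamma_{t-1}\log n/n})$, obtained from the same Bernstein condition. I then write $\gamma_t(\beta_t-\mu)=\E_{t-1}[\gamma_t](\beta_t-\mu)+(\gamma_t-\E_{t-1}[\gamma_t])(\beta_t-\mu)$. The first piece has parameters $(O(\gamma_{t-1}/n),O(\gamma_{t-1}^2/n))$ by \cref{lem:basic inequalities for beta}. The product piece is a quadratic form in the $n$ independent neighbor choices, and I will handle it with the read-$\ell$ bound of \cref{lem:Bernstein condition 2} (\cref{item:BC for read-k family}), or crudely as a product of two $O(\sqrt{\log n/n})$ deviations. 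Either way it contributes $O(\sqrt{\gamma_{t-1}}\log n/n)$, which is negligible once it is compared with the $1/(12n)$ drift summed over $T$ rounds. Combining via \cref{lem:Bernstein condition 2} (\cref{item:BC for sum of rvs}) with constantly many terms, and using $\gamma_{t-1}\le\xgamma$, should give $D=O(1/n)$ and $s=O(\xgamma/n)$. The difference $\E_{t-1}[\tnpt_t]-\E_{t-1}[\text{RHS}]$ is also small, so it only shifts the drift by lower-order terms.

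Finally I plug in the parameters. We have $sT=O(\xgamma^2)$ and $(I^+)^2=\Theta(\xgamma^2)$, so the exponent is only $\Omega(1)$ at $T=\Theta(n\xgamma)$. To get $n^{-\Omega(1)}$ I will either enlarge $T$ by a $\log n$ factor, which is where the stated $T=O(n\xgamma\log n)$ comes from, or restart the argument in $O(\log n)$ independent-ish epochs of length $O(n\xgamma)$, each succeeding with constant probability. The epoch version uses the strong Markov property together with \cref{lem:taubeta}/\cref{lem:taupsi} to keep the hypotheses valid at each restart. I expect the epoch version to be the cleaner route if the direct bound comes out too weak.
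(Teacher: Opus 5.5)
Your proposal is correct in substance but takes a different route from the paper. The difference is in how the drift $\E_{t-1}[\tnpt_t]\ge\tnpt_{t-1}+1/(12n)$ is turned into a hitting-time bound.

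The paper never proves any concentration for $\tnpt_t$:
\begin{itemize}
\item It applies optional stopping to $\tnpt_{t\wedge\tau}-(t\wedge\tau)/(12n)$ to get $\E[\tau]\le 12n\,\E[\tnpt_\tau]$.
\item It bounds the overshoot $\E[\tnpt_\tau]\le 55\xgamma+\E[\tau]/(24n)$ using only the \emph{upper} tail of $\npt_t$ from \cref{lem:basic inequalities for gamma}, which gives $\E[\tau]\le 1320\,n\xgamma$.
\item It then applies Markov's inequality and the Markov property over $C\log n$ blocks of length $O(n\xgamma)$.
\end{itemize}
At each restart the hypotheses hold automatically on $\{\tau>s\}$, so \cref{lem:taubeta,lem:taupsi} are needed only once at the end.

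Your route instead needs a new one-sided Bernstein condition for the \emph{lower} deviations of the ratio $\tnpt_t$. That is exactly the ratio-concentration difficulty the paper sidesteps at this step. It is obtainable. Your convexity bound, together with treating $\npt_t\beta_t=n^{-3}\sum_{v,u,a\in V}\indicator_{\opn_t(v)=\opn_t(u)\neq\bot}\indicator_{\opn_t(a)\neq\bot}$ as a read-$3n^2$ family (as in \cref{claim:bernstein condition of alpha beta}), yields parameters $(O(1/n),O(\npt_{t-1}/n))$. Drop the ``high-probability truncation'' and ``product of two deviations'' shortcuts: an MGF-type condition cannot be obtained that way.

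What your version buys is a one-shot bound. With $T=Cn\xgamma\log n$, Freedman gives failure probability $n^{-\Omega(C)}$ with no restarts and no overshoot estimate. Your epoch alternative, by contrast, just reproduces the paper's restart step with heavier machinery for the constant-probability phase.

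Two small fixes are needed:
\begin{itemize}
\item $\tnpt_0$ can be as large as $\xgamma/(1/2-\xbeta)^2\approx4\xgamma$. Take, say, $I^+=8\xgamma$ so that $I^+-X_0=\Omega(\xgamma)$ and $\tnpt_t\ge I^+$ still forces $\npt_t\ge\xgamma$.
\item \cref{lem:Useful drift lemma} with a constant $\epsilon$ does not improve as $T$ grows, because the deviation is fixed at $\epsilon(I^+-X_0)$ while $sT$ increases. For the ``enlarge $T$'' option, either take $\epsilon=\Theta(\log n)$ or apply \cref{lem:Freedman stopping time additive} directly with $z=\drift T-h$.
\end{itemize}
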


The intuition behind \cref{lem:growth of tnpt} is as follows.
First, we derive an upper bound on the expected value of $\tau=\min\{\taugammaplus,\taubetaminus,\taupsiplus\}$ in terms of $\E[\tnpt_\tau]$.
The key technique is to combine the additive drift of $\tnpt$ with the optional stopping theorem (\cref{thm:OST}).
Second, to obtain an upper bound on $\E[\tau]$, we provide an upper bound of $\E[\tnpt_\tau]$.

\begin{lemma}
  \label{lem:hitting time for tnpt for gossip model}
  $
  \E\qty[\min\{\taugammaplus,\taubetaminus,\taupsiplus\}]\leq 1320n\xgamma.
  $
\end{lemma}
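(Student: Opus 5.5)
Let $\tau\defeq\min\{\taugammaplus,\taubetaminus,\taupsiplus\}$. The plan is to apply the optional stopping theorem (\cref{thm:OST}) to the process $Y_t\defeq \tnpt_{t\wedge\tau}-\frac{t\wedge\tau}{12n}$. Since we work under the hypotheses of \cref{lem:growth of tnpt}, namely $\beta_0\ge 1/2-\xbeta$ and $\psi_0\le\xpsi$, we have $\tau\ge 1$ unless $\gamma_0\ge\xgamma$. In that case $\tau=0$ and there is nothing to prove.

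\textbf{Submartingale property.} On $\{\tau>t-1\}$ we have $\beta_{t-1}\ge 1/2-\xbeta=1/2-o(1)$, $\psi_{t-1}\le\xpsi=o(1)$ and $\gamma_{t-1}<\xgamma=o(1)$. Hence \cref{lem:basic inequalities for tnpt} applies and gives $\E_{t-1}[\tnpt_t]\ge\tnpt_{t-1}+\frac{1}{12n}$. Therefore $Y_t$ is a submartingale.

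\textbf{Optional stopping.} Truncate at a fixed $T$, so that $\E[Y_{\tau\wedge T}]\ge Y_0=\tnpt_0\ge 0$. Rearranging yields
\[
\E[\tau\wedge T]\le 12n\,\E[\tnpt_{\tau\wedge T}].
\]
Letting $T\to\infty$ by monotone convergence gives $\E[\tau]\le 12n\sup_T\E[\tnpt_{\tau\wedge T}]$. It therefore suffices to show $\E[\tnpt_{\tau\wedge T}]\le 110\,\xgamma$ uniformly in $T$; this gives the constant $1320=12\cdot 110$.

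\textbf{Bounding the stopped potential.} Write $s=\tau\wedge T$. If $s<\tau$, then $\gamma_s<\xgamma$ and $\beta_s\ge 1/2-\xbeta\ge 1/3$, so $\tnpt_s\le 9\xgamma$. If $s=\tau\ge 1$, the last step may overshoot, and this is the main obstacle. At time $\tau-1$ we still have $\gamma_{\tau-1}<\xgamma$ and $\beta_{\tau-1}\ge 1/3$, but $\beta_\tau$ may be small (for instance when $\taubetaminus$ fires), which would blow up $\tnpt_\tau=\gamma_\tau/\beta_\tau^2$.

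To control this, I use the crude bound $\tnpt_t\le 1$, which holds because $\sum_i\alpha_t(i)^2\le\beta_t^2$. I then split on the event $E=\{\beta_\tau\ge 1/4\}$:
\begin{itemize}
\item On $E$, $\tnpt_\tau\le 16\gamma_\tau$. By \cref{lem:basic inequalities for gamma} (\cref{item:expectation of gamma}), $\E_{\tau-1}[\gamma_\tau]\le 10\gamma_{\tau-1}\le 10\xgamma$. Conditioning on $\calF_{t-1}$ at each step $t=\tau$, I would write $\E[\tnpt_\tau\indicator_E]\le\sum_t\E[\indicator_{\tau>t-1}\,16\,\E_{t-1}[\gamma_t]\,\indicator_{\tau\le t}]$. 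This needs care: summing over $t$ is not directly summable. Instead I would bound the single-step term via $\E_{t-1}[\tnpt_t\indicator_{\tau=t}]\le 16\E_{t-1}[\gamma_t]$ and combine it with the drift-identity form of optional stopping. That is, I would use the increment $\E_{t-1}[\tnpt_t-\tnpt_{t-1}]$ only through the submartingale inequality. This gives a constant of order $16\cdot 10\xgamma$; I would tighten it to match $110\xgamma$ by using $\beta_\tau\ge 1/2-o(1)$ with high probability instead of $1/4$.
\item Off $E$, we need $\beta_\tau<1/4$ while $\beta_{\tau-1}\ge 1/3$. Since $\E_{\tau-1}[\beta_\tau]=\beta_{\tau-1}-\psi_{\tau-1}\ge 1/3-o(1)$, the Bernstein bound in \cref{lem:basic inequalities for beta} (\cref{item:bernstein condition of beta}) gives probability $e^{-\Omega(n)}$ per step. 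Using $\tnpt\le 1$, this contributes $\sum_t\Pr[\tau\ge t]e^{-\Omega(n)}\le\E[\tau]e^{-\Omega(n)}$, which can be absorbed into the left-hand side.
\end{itemize}

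Combining the two cases gives $\E[\tnpt_s]\le 110\xgamma+\E[\tau]e^{-\Omega(n)}$. Hence $\E[\tau](1-12ne^{-\Omega(n)})\le 12n\cdot 110\xgamma$, up to constant slack, which is the claimed bound $\E[\tau]\le 1320n\xgamma$ after adjusting constants.
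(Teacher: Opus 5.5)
Your framework is the paper's. You use the same stopped process $\tnpt_{t\wedge\tau}-\frac{t\wedge\tau}{12n}$, the same drift from \cref{lem:basic inequalities for tnpt}, and the same reduction $\E[\tau]\le 12n\,\E[\tnpt_\tau]$. You then aim, as the paper does, for $\E[\tnpt_\tau]\le O(\xgamma)+(\text{small})\cdot\E[\tau]$, with the second term absorbed. Two of your steps are more careful than the paper's proof. Your truncation at $T$ avoids applying optional stopping without knowing $\E[\tau]<\infty$. Your split on $\beta_\tau$ addresses the point where the paper bounds $\tnpt_t>L$ using the lower bound on $\beta_{t-1}$ rather than on $\beta_t$. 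The genuine gap is your first bullet: on $E$ you have no valid bound on the overshoot of $\gamma$ at the stopping step.

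\textbf{Why the first bullet fails.} The per-step estimate $\E_{t-1}[\tnpt_t\indicator_{\tau=t}\indicator_E]\le 16\,\E_{t-1}[\gamma_t]\le 160\xgamma$, summed against $\indicator_{\tau>t-1}$, gives $160\xgamma\,\E[\tau]$. After multiplying by $12n$, the coefficient of $\E[\tau]$ is of order $n\xgamma=\sqrt n(\log n)^2\gg 1$, so nothing can be absorbed. Your proposed remedy does not repair this. The submartingale inequality only gives a \emph{lower} bound $\E[\tnpt_\tau]\ge \tnpt_0+\E[\tau]/(12n)$. No upper bound of order $1/n$ on the one-step drift of $\tnpt_t$ or $\gamma_t$ is available; you only have $\E_{t-1}[\gamma_t]\le 10\gamma_{t-1}$. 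So any Dynkin-type identity for $\E[\gamma_\tau]$ again costs $\Theta(\xgamma)$ per step. Also, $\E_{\tau-1}[\cdot]$ is not meaningful, since $\tau-1$ is not a stopping time. No bound on conditional expectations can work here, because $\tau$ fires precisely when $\gamma$ is large. You need a tail bound instead.

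\textbf{The missing ingredient.} Use the Bernstein condition for $\gamma_t$ in \cref{lem:basic inequalities for gamma}. On $\{\tau>t-1\}$ you have $\E_{t-1}[\gamma_t]\le 10\xgamma$, hence $\Pr_{t-1}[\gamma_t>11\xgamma]\le \exp(-\Omega(n\xgamma))$. Take the deterministic cap $L=55\xgamma$. The events $\gamma_t\le 11\xgamma$ and $\beta_t\ge 0.45$ together force $\tnpt_t\le L$. Their failure has conditional probability at most $e^{-\Omega(n\xgamma)}+e^{-\Omega(n)}\le 1/(24n)$, the second term coming from the Bernstein condition for $\beta_t$. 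Using $\tnpt\le 1$ on the exceptional event,
\[
\E[\tnpt_{\tau\wedge T}]\le L+\sum_{t\le T}\E\bigl[\indicator_{\tau>t-1}\Pr_{t-1}[\tnpt_t>L]\bigr]\le L+\frac{\E[\tau\wedge T]}{24n}.
\]
This gives $\E[\tau\wedge T]\le 24nL=1320\,n\xgamma$ uniformly in $T$. This is exactly the thresholding you already used for $\beta_\tau<1/4$. Applying it to $\gamma_\tau$ as well is the paper's argument and closes your proof.
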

\begin{proof}
  For $\min\{\taugammaplus,\taubetaminus,\taupsiplus\}>t-1$, we have
  \begin{align*}
    \E_{t-1}\qty[\tnpt_t]
    &\geq \tnpt_{t-1}+ \frac{1}{12n}.
  \end{align*}
  Note that we use \cref{lem:basic inequalities for tnpt}.
  Let $\tau=\min\{\taugammaplus,\taubetaminus,\taupsiplus\}$, $X_t=\tnpt_t-\frac{t}{12}$ and $Y_t=X_{t\wedge \tau}$.
  Then, we have
  \begin{align*}
    \E_{t-1}\qty[Y_t-Y_{t-1}]
    &= \indicator_{\tau>t-1}\E_{t-1}\qty[X_t-X_{t-1}]
    = \indicator_{\tau>t-1}\qty(\E_{t-1}[\tnpt_t]-\frac{t}{12}-\tnpt_{t-1}+\frac{t-1}{12})
    \geq 0,
  \end{align*}
  i.e., $(Y_t)_{t\in \Nat_0}$ is a submartingale.
  Hence, from \cref{thm:OST}, we have $\E[Y_\tau]\geq \E[Y_0] = \tnpt_0\geq 0$.
  Thus, 
  \begin{align*}
 0\leq  \E[Y_\tau]=\E[X_\tau] = \E[\tnpt_\tau] -\frac{\E[\tau]}{12n}
  \end{align*}
and we obtain $\E[\tau]\leq 12n\E[\tnpt_\tau]$.

Now, we give an upper bound on $\E[\tnpt_\tau]$.
Write $L=55\xgamma$.
    We have
    \begin{align*}
        \E\qty[\tnpt_\tau]
        &=\E\qty[\indicator_{\tnpt_\tau\leq L}\tnpt_\tau]+\E\qty[\indicator_{\tnpt_\tau> L}\tnpt_\tau]
        \leq L+\E\qty[\indicator_{\tnpt_\tau> L}\tnpt_\tau]
    \end{align*}
    and
    \begin{align*}
    \E\qty[\indicator_{\tnpt_\tau> L}\tnpt_\tau]
    &=\sum_{t=1}^\infty \E\qty[\indicator_{\tau=t}\tnpt_t\indicator_{\tnpt_t> L}]
    \leq \sum_{t=1}^\infty \E\qty[\indicator_{\tau>t-1}\tnpt_t\indicator_{\tnpt_t> L}]
    =\sum_{t=1}^\infty \E\qty[\indicator_{\tau>t-1}\E_{t-1}\qty[\tnpt_t\indicator_{\tnpt_t> L}]].
    \end{align*}
    For $t-1<\tau$, we have $\beta_{t-1}\geq 1/2-\xbeta$ and $\gamma_{t-1}\leq \xgamma$.
    Hence, we have
    \begin{align*}
        \E_{t-1}\qty[\tnpt_\tau\indicator_{\tnpt_\tau> L}]
        &\leq \Pr_{t-1}\qty[\tnpt_t> L]
        \leq \Pr_{t-1}\qty[\gamma_t> \qty(\frac{1}{2}-\xbeta)^2L]
        \leq \Pr_{t-1}\qty[\gamma_t> 11\xgamma]
        \leq \Pr_{t-1}\qty[\gamma_t> \E_{t-1}[\gamma_t]+\xgamma].
    \end{align*}
    Note that $\E_{t-1}[\gamma_t]\leq 10\gamma_{t-1}$.
    Since $\npt_t-\E_{t-1}[\npt_t]$ satisfies $\qty(\frac{2}{n}, \frac{20\gamma_{t-1}}{n})$-Bernstein condition, we obtain
    \begin{align*}
        \indicator_{\tau>t-1}\E_{t-1}\qty[\tnpt_\tau\indicator_{\tnpt_\tau> L}]
        &\leq \indicator_{\tau>t-1}\Pr_{t-1}\qty[\gamma_t> \E_{t-1}[\gamma_t]+\xgamma]\\
        &\leq \indicator_{\tau>t-1}\exp\left(-\frac{\xgamma^2/2}{\frac{20\gamma_{t-1}}{n}+\frac{2\xgamma}{3n}}\right)\\
        &\leq \indicator_{\tau>t-1}\exp\qty(-\Omega(n\xgamma))\\
        &\leq \frac{\indicator_{\tau>t-1}}{24n}.
    \end{align*}
    Consequently, 
    \begin{align*}
        \E\qty[\tnpt_\tau]
        &\leq L+\sum_{t=1}^\infty \E\qty[\frac{\indicator_{\tau>t-1}}{100n}]
        =55\xgamma+\frac{\E[\tau]}{24n}.
    \end{align*}
  Thus, we have
  \begin{align*}
      \E[\tau]\leq 12\E[\tnpt_\tau]
      \leq 12n\qty(55\xgamma+\frac{\E[\tau]}{24n})
      \leq 660n\xgamma+\frac{\E[\tau]}{2},
  \end{align*}
  i.e., $\E[\tau]\leq 1320n\xgamma$.
\end{proof}

\begin{proof}[Proof of \cref{lem:growth of tnpt}]
    Let $\tau=\min\{\taugammaplus,\taubetaminus,\taupsiplus\}$ and $T'=1320\e n^2\xgamma$.
  From the Markov inequality and \cref{lem:hitting time for tnpt for gossip model}, 
\begin{align*}
  \Pr\qty[\tau>T']
  \leq \frac{\E[\tau]}{T'}\leq \frac{1}{\e}. 
\end{align*}
Hence, for any $\ell\geq 1$, the Markov property implies that
  \begin{align*}
    \Pr\qty[\tau>\ell T' \mid \tau> (\ell-1) T']
    &=\E\qty[\Pr_{(\ell-1)T'}\qty[\tau>\ell T'] \mid \tau> (\ell-1) T']
    \leq \frac{1}{\e}.
  \end{align*}
Thus, we obtain
\begin{align*}
  \Pr\qty[\tau>\ell T']
  &=\Pr\qty[\tau>\ell T' \mid \tau> (\ell-1) T']\Pr\qty[\tau> (\ell-1) T']
  \leq \frac{1}{\e}\Pr\qty[\tau> (\ell-1) T']
  \leq \cdots \leq \frac{1}{\e^\ell}.
\end{align*}
Applying \cref{lem:taubeta} (\cref{item:taubetaminus}) and \cref{lem:taupsi} (\cref{item:taupsiplus}), we obtain 
\begin{align*}
  &\Pr\qty[\taugammaplus> \ell T' \text{ or } \min\{\taubetaminus,\taupsiplus\}\leq \ell T']\\
  &\leq \Pr\qty[\qty{\taugammaplus> \ell T' \text{ or } \min\{\taubetaminus,\taupsiplus\}\leq \ell T'} \text{ and } \min\{\taubetaminus,\taupsiplus\} > \ell T']+ \Pr\qty[\min\{\taubetaminus,\taupsiplus\}\leq \ell T']\\
  &\leq \Pr\qty[\tau>\ell T']+ \Pr\qty[\taubetaminus\leq \ell T'\text{ and } \taupsiplus > \ell T']+\Pr\qty[\taupsiplus\leq \ell T']\\
  &\leq \frac{1}{\e^\ell} + \ell T'\exp\qty(-\Omega\qty(n\xbeta^2))+\ell T'\exp\qty(-\Omega\qty(n\xpsi^2)).
\end{align*}
Taking $\ell=C\log n$ for a sufficiently large constant $C>0$, we obtain the claim.
\end{proof}

\subsection{Behavior of Maximum Population} \label{sec:behavior of tnpm}
In this subsection, we track the change of the maximum fractional population $\npm_t=\max_{i\in[k]} \alpha_t(i)$ and its normalized version $\tnpm_t = \npm_t/\beta_t$.
Note that, from \cref{lem:growth of tnpt}, we know that $\npm_t = \Omega(\tnpm_t)$ becomes $\Omega((\log n)^2/\sqrt{n})$ within $\Otilde(\sqrt{n})$ rounds with high probability.
The aim of this subsection is to show that, with high probability,
(i) $\tnpm_t \geq (1-\ctildemaxdown)\tnpm_0$ holds for all $t \leq O(n\tnpm_0/\log n)$ (see \cref{lem:hitting time for tnpm and npm}, \cref{item:tildemaxdown}), and
(ii) $\npm_t \leq (1+\cmaxup)\npm_0$ holds for all $t \leq O(1/\npm_0)$ (see \cref{lem:hitting time for tnpm and npm}, \cref{item:taualphamaxup}).

\begin{lemma}[Key properties of $\tnpm_t$ and $\npm_t$]
  \label{lem:hitting time for tnpm and npm}
  Suppose that $\psi_0\leq \xpsi$ and $\beta_0\geq 1/2-\xbeta$.
  We have the following:
  \begin{enumerate}
    \item \label{item:tildemaxdown}
    Let $\constrefs{lem:hitting time for tnpm and npm}{item:tildemaxdown}= \ctildemaxdown/36$ be a positive constant. 
    Then, for any $T\le \constrefs{lem:hitting time for tnpm and npm}{item:tildemaxdown}n$, 
    \begin{align*}
        \Pr\qty[\tautildemaxdown \le T \text{ and } \min\{\taubetaminus,\taupsiplus\}> T] \le T\exp\qty(-\Omega\qty(\frac{\tnpm_0n}{T})).
    \end{align*}
    \item \label{item:taualphamaxup}
    Suppose $\npm_0=\omega(\log n/\sqrt{n})$.
    Let $\constrefs{lem:hitting time for tnpm and npm}{item:taualphamaxup}=\frac{\cmaxup}{6(1+\cmaxup)^2}$ be a positive constant.
    Then, for any $T\leq \frac{\constrefs{lem:hitting time for tnpm and npm}{item:taualphamaxup}}{\npm_0}$,
    \begin{align*}
      \Pr\qty[\taumaxup\leq \min\{T,\taubetaminus\}]
      \leq k\exp\qty(-\Omega\qty(\frac{\npm_0n}{T})).
      \end{align*}
    \end{enumerate}
\end{lemma}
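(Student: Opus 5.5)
Both items will come from \cref{lem:Useful drift lemma} (\cref{item:negative drift useful}), applied to a suitably chosen process with a small negative additive drift. In each case I stop the process when the regularity conditions fail, so that the one-step properties of \cref{sec:basic properties in the gossip model} can be used on the event that we have not yet stopped.

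\textbf{Item (\cref{item:tildemaxdown}).} Set $X_t=\tnpm_t$ and $\tau^*=\min\{\taubetaminus,\taupsiplus,\tautildemaxup\}$. Before $\tau^*$ we have $\beta_{t-1}\ge 1/2-\xbeta$ and $\psi_{t-1}\le\xpsi$, so \cref{lem:basic inequalities for tnpm} applies. The multiplicative factor there is at least $1$, because $\npm_{t-1}\ge\npt_{t-1}/\npo_{t-1}$ (recall $\npt\le\npm\beta$). Hence
\[
\E_{t-1}[\tnpm_t]\ge\tnpm_{t-1}-\frac{9\npm_{t-1}}{n}.
\]
Since $\npm_{t-1}=\tnpm_{t-1}\beta_{t-1}\le(1+\ctildemaxup)\tnpm_0$ before $\tautildemaxup$, condition \cref{item:C1} holds with $\drift=-18\tnpm_0/n$, say.

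The Bernstein statement in \cref{lem:basic inequalities for tnpm} gives \cref{item:C2} with $\bounded=C/n$ and $\variance=C\npm_{t-1}/n=O(\tnpm_0/n)$. The quantity bounded there is the lower deviation, which is exactly the direction needed for a downward hitting time.

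Take $I^-=(1-\ctildemaxdown)\tnpm_0$, so that $X_0-I^-=\ctildemaxdown\tnpm_0$. The admissible range $T\le (1-\epsilon)\ctildemaxdown\tnpm_0 n/(18\tnpm_0)$ then covers $T\le \ctildemaxdown n/36$ with $\epsilon=1/2$. The resulting tail bound is
\[
\exp\bigl(-\Omega(\tnpm_0^2/(\tnpm_0T/n+\tnpm_0/n))\bigr)=\exp(-\Omega(\tnpm_0 n/T)).
\]

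The main obstacle is removing the auxiliary stop at $\tautildemaxup$. I would instead cap the variance and drift using $\npm_{t-1}\le1$, but that is too crude. A better route is a union over time windows, or simply noting that upward excursions only help: restart the argument at the last time $\tnpm$ is at most $(1+\ctildemaxup)\tnpm_0$ before the downward hit. This restart argument is where I expect the factor $T$ in the bound to come from.

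\textbf{Item (\cref{item:taualphamaxup}).} Union bound over the $k$ opinions. For each $i$, let $X_t=\alpha_t(i)$ and $\tau^*=\min\{\taubetaminus,\taumaxup\}$. By \cref{lem:basic inequalities for alpha},
\[
\E_{t-1}[\alpha_t(i)]-\alpha_{t-1}(i)=\alpha_{t-1}(i)\bigl(\alpha_{t-1}(i)+1-2\beta_{t-1}\bigr)\le \alpha_{t-1}(i)\bigl(\alpha_{t-1}(i)+2\xbeta\bigr).
\]
Before $\taumaxup$ this is at most $(1+\cmaxup)^2\npm_0^2(1+o(1))$, using $\xbeta=o(\npm_0)$, which follows from $\npm_0=\omega(\log n/\sqrt n)$. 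The fluctuations satisfy $(1/n,(1+\cmaxup)\npm_0/n)$-Bernstein.

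Apply the positive-drift tail of \cref{lem:Freedman stopping time additive} (\cref{item:positive drift}) with $h=(1+\cmaxup)\npm_0-\alpha_0(i)\ge\cmaxup\npm_0$. The condition $T\le \cmaxup/(6(1+\cmaxup)^2\npm_0)$ makes $\drift T\le\cmaxup\npm_0/6$, so $z=\Omega(\npm_0)$. This gives probability at most $\exp(-\Omega(\npm_0^2/(\npm_0T/n+\npm_0/n)))=\exp(-\Omega(\npm_0n/T))$ for each opinion, and summing over the $k$ opinions yields the factor $k$.
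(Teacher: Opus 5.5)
Your stopped estimate for item~(i) is exactly the paper's \cref{lem:tildemaxdown}: drift $-18\tnpm_0/n$, one-sided $(C/n,O(\tnpm_0/n))$-Bernstein increments, $I^-=(1-\ctildemaxdown)\tnpm_0$, $\epsilon=1/2$. Your item~(ii) is the paper's argument, since applying Freedman's positive-drift tail to $\alpha_t(i)$ is the same as applying \cref{lem:Useful drift lemma} to $-\alpha_t(i)$.

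The gap is the step you yourself flag: discharging the auxiliary stop at $\tautildemaxup$. The restart point you name does not work.
\begin{itemize}
\item Taken literally, the last time before $\tautildemaxdown$ at which $\tnpm\le(1+\ctildemaxup)\tnpm_0$ is typically $\tautildemaxdown-1$ itself. Restarting there leaves no guaranteed gap above $(1-\ctildemaxdown)\tnpm_0$, so the Freedman bound from that point is vacuous.
\item If you meant the last time the cap $(1+\ctildemaxup)\tnpm_0$ is exceeded, then the restart value $\tnpm_s$ is unbounded above. The drift and variance of the first restarted step scale with $\npm_s$, so your fixed-level bounds do not cover them without extra work.
\end{itemize}

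What works, and what the paper does via \cref{lem:Iterative Drift theorem} (\cref{item:bounded decrease under small negative drift}), is to restart at the last time $s<\tautildemaxdown$ at which $\tnpm$ attains its maximum over $[0,\tautildemaxdown)$. Both thresholds are then rescaled by $\tnpm_s$ rather than $\tnpm_0$. This choice gives three properties:
\begin{itemize}
\item $\tnpm_s\ge\tnpm_0$.
\item On $[s,\tautildemaxdown]$ the process never exceeds $\tnpm_s$, so the relative upward stop at $(1+\ctildemaxup)\tnpm_s$ is never triggered.
\item Since $(1-\ctildemaxdown)\tnpm_s\ge(1-\ctildemaxdown)\tnpm_0$, the relative downward level is reached no later than $\tautildemaxdown\le T$.
\end{itemize}
On $\{\min\{\taubetaminus,\taupsiplus\}>s\}$ you have $\beta_s\ge 1/2-\xbeta$ and $\psi_s\le\xpsi$. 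By the Markov property, your stopped estimate therefore applies from time $s$ with drift $-18\tnpm_s/n$. The admissible horizon $\ctildemaxdown n/36$ does not depend on the starting value, and the resulting bound $\exp(-\Omega(\tnpm_s n/T))$ is at most $\exp(-\Omega(\tnpm_0 n/T))$. A union bound over $s\in\{0,\dots,T-1\}$ then gives the factor $T$, as you anticipated.
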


First, we introduce the following key lemma for the proof of \cref{lem:hitting time for tnpm and npm} (\cref{item:tildemaxdown}).
\begin{lemma}
    \label{lem:tildemaxdown}
    Suppose that $\psi_0\leq \xpsi$ and $\beta_0\geq 1/2-\xbeta$.
    Let $\tau^\uparrow=\max\{\taumaxup,\tautildemaxup\}$.
    Let $\constrefs{lem:hitting time for tnpm and npm}{item:tildemaxdown}= \ctildemaxdown/36$ be a positive constant defined in \cref{lem:hitting time for tnpm and npm} (\cref{item:tildemaxdown}).
    Then, for any $T\leq \constrefs{lem:hitting time for tnpm and npm}{item:tildemaxdown}n$, 
    \begin{align*}
        \Pr\qty[\tautildemaxdown \leq \min\{T,\tau^\uparrow,\taupsiplus,\taubetaminus\}]
        \leq \exp\qty(-\Omega\qty(\frac{\tnpm_0n}{T})).
      \end{align*}
\end{lemma}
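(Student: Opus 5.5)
We apply \cref{lem:Useful drift lemma} (\cref{item:negative drift useful}) to $X_t=\tnpm_t$, with $I^-=(1-\ctildemaxdown)\tnpm_0$ and stopping time $\tau^*=\min\{\tau^\uparrow,\taupsiplus,\taubetaminus\}$. With this choice, $\tau^-=\tautildemaxdown$, and the event in the statement is exactly $\{\tau^-\le\min\{T,\tau^*\}\}$. We therefore need to verify \cref{item:C1,item:C2} for a suitable negative drift $\drift$ and Bernstein parameters $(\bounded,\variance)$, under the indicator $\indicator_{\tau^*>t-1}$.

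\textbf{Drift condition \cref{item:C1}.} On $\{\tau^*>t-1\}$ we have $\beta_{t-1}\ge 1/2-\xbeta$ and $\psi_{t-1}\le\xpsi$, so \cref{lem:basic inequalities for tnpm} applies. The multiplicative factor $1+\frac{\npm_{t-1}-\npt_{t-1}/\npo_{t-1}}{2(1-\npo_{t-1})+\npt_{t-1}/\npo_{t-1}}$ is at least $1$, because $\npt_{t-1}\le\npm_{t-1}\npo_{t-1}$ and the denominator is positive. Hence
\[
\E_{t-1}[\tnpm_t]\ge\tnpm_{t-1}-\frac{9\npm_{t-1}}{n}.
\]
Since $t-1<\tau^\uparrow$, the maximum has not grown beyond a constant factor: $\npm_{t-1}\le(1+\cmaxup)\npm_0$. (If $\tau^\uparrow$ is really meant as the minimum of the two up-crossing times, the same bound holds; in any case we use the $\npm$-bound, which is available before $\taumaxup$.) Also $\npm_0=\tnpm_0\beta_0\le\tnpm_0$. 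Together these give $\drift=-\frac{9(1+\cmaxup)\tnpm_0}{n}\ge -\frac{10\tnpm_0}{n}$, so $-\drift=O(\tnpm_0/n)$.

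\textbf{Bernstein condition \cref{item:C2}.} Item 2 of \cref{lem:basic inequalities for tnpm} states that
\[
\tnpm_{t-1}\Bigl(1+\tfrac{\cdots}{\cdots}\Bigr)-\tnpm_t-\tfrac{9\npm_{t-1}}{n}
\]
satisfies the one-sided $(C/n, C\npm_{t-1}/n)$-Bernstein condition. Since the bracketed factor is at least $1$, the quantity $-(X_t-X_{t-1}-\drift)$ is stochastically dominated by this expression. One delicate point here: \cref{item:C2} as stated bounds the upward deviation of $X_t-X_{t-1}-\drift$, whereas for the negative-drift direction we really need the one-sided bound on the downward deviation. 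I would apply the lemma to $-X_t$, or equivalently invoke the symmetric form of \cref{lem:Freedman stopping time additive}, so that the orientation matches what \cref{lem:basic inequalities for tnpm} provides. Using $\npm_{t-1}\le 2\tnpm_0$ and \cref{lem:Bernstein condition}, this yields $\bounded=C/n$ and $\variance=2C\tnpm_0/n$.

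\textbf{Conclusion.} The admissibility condition of \cref{lem:Useful drift lemma} is $T\le(1-\epsilon)(X_0-I^-)/(-\drift)=(1-\epsilon)\ctildemaxdown n/10$ up to constants, which holds since $T\le\ctildemaxdown n/36$. Plugging in,
\[
\exp\left(-\frac{\epsilon^2\ctildemaxdown^2\tnpm_0^2/2}{2C\tnpm_0T/n+\epsilon\ctildemaxdown\tnpm_0C/(3n)}\right),
\]
and since $T\ge1$ the denominator is $O(\tnpm_0T/n)$, giving $\exp(-\Omega(\tnpm_0 n/T))$ as claimed. The step I expect to require the most care is the Bernstein/orientation step: showing that the dropped nonnegative drift term and the sign convention still fit the one-sided hypothesis, and making the constant bookkeeping yield exactly $36$.
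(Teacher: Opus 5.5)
Your overall route is the paper's. You apply \cref{lem:Useful drift lemma} (negative-drift case) to $X_t=\tnpm_t$ with $I^-=(1-\ctildemaxdown)\tnpm_0$. You take the drift and the one-sided Bernstein bound from \cref{lem:basic inequalities for tnpm}, after discarding the nonnegative term via $\npt_{t-1}\le\npm_{t-1}\npo_{t-1}$. You verify the conditions for $X_{t-1}+\drift-X_t$, i.e.\ for $-X_t$, which is exactly how the paper resolves the orientation issue you flagged.

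There is, however, an error in the one step that uses $\tau^\uparrow$. The statement defines $\tau^\uparrow=\max\{\taumaxup,\tautildemaxup\}$, so $t-1<\tau^\uparrow$ only tells you that \emph{at least one} of the two up-crossings has not yet occurred. On $\{\taumaxup\le t-1<\tautildemaxup\}$ you are past $\taumaxup$, so $\npm_{t-1}\le(1+\cmaxup)\npm_0$ is not available. Indeed $\npm=\npo\,\tnpm$ can grow through $\npo$ while $\tnpm$ stays put. Your parenthetical has this backwards: it is the \emph{minimum} for which your bound is automatic. The minimum also gives a weaker lemma, since the event with $\min\{\taumaxup,\tautildemaxup\}$ is contained in the event with the maximum. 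That weaker version would not suffice later. In the proof of \cref{lem:hitting time for tnpm and npm} (\cref{item:tildemaxdown}), the restarted process is stopped only at the $\tnpm$ up-crossing $\tau_s^\uparrow$, and covering $\{\tau_s^\downarrow\le\min\{T,\tau_s^\uparrow,\tau^*\}\}$ requires the max.

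The repair is the paper's two-case observation:
\begin{itemize}
\item if $t-1<\taumaxup$, then $\npm_{t-1}\le(1+\cmaxup)\npm_0\le 2\tnpm_0$;
\item if $t-1<\tautildemaxup$, then $\npm_{t-1}=\npo_{t-1}\tnpm_{t-1}\le(1+\ctildemaxup)\tnpm_0\le 2\tnpm_0$.
\end{itemize}
Hence $\npm_{t-1}\le 2\tnpm_0$ on $\{\tau^\uparrow>t-1\}$ in either case; this is also the bound you silently use in your Bernstein step. It gives $\drift=-18\tnpm_0/n$ and variance proxy $O(\tnpm_0/n)$. With $\epsilon=1/2$, the admissibility condition $T\le (X_0-I^-)/(-2\drift)$ reads exactly $T\le\ctildemaxdown n/36$, which is where the constant comes from. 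With that change, the rest of your argument goes through as written: the pointwise domination of $X_{t-1}+\drift-X_t$ by the quantity in \cref{lem:basic inequalities for tnpm} (\cref{item:bernstein condition of tnpm}), and the final exponent $\Omega(\tnpm_0 n/T)$ using $T\ge 1$.
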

\begin{proof}[Proof of \cref{lem:tildemaxdown}]
Let $\tau^\uparrow=\max\{\taumaxup,\tautildemaxup\}$
and $\tau=\min\{\tautildemaxdown,\tau^\uparrow,\taubetaminus,\taupsiplus\}$.
For $\tau>t-1$, 
\cref{item:expectation of tnpm} of \cref{lem:basic inequalities for tnpm} yields that
\begin{align*}
\E_{t-1}\qty[\tnpm_t]
\geq \tnpm_{t-1}\qty(1+\frac{\npm_{t-1}-\npt_{t-1}/\npo_{t-1}}{2(1-\npo_{t-1})+\npt_{t-1}/\npo_{t-1}})-\frac{\npm_{t-1}}{9n}
 \geq \tnpm_{t-1}-\frac{18\tnpm_{0}}{n}.
\end{align*}
Note that for $\tau^\uparrow > t-1$, we have either $\npm_{t-1} \leq (1+\cmaxup)\npm_{0} \leq 2\tnpm_{0}$ or $\npm_{t-1} \leq (1+\ctildemaxup)\tnpm_{0} \beta_{t-1} \leq 2\tnpm_{0}$, so in either case, $\npm_{t-1} \leq 2\tnpm_{0}$ holds whenever $\tau^\uparrow > t-1$.

Hence, letting $X_t=\tnpm_t$ and $\drift = -\frac{18\tnpm_{0}}{n}< 0$, we have
    \begin{align*}
      \indicator_{\tau>t-1}\qty(X_{t-1}+\drift-\E_{t-1}[X_t])
      =\indicator_{\tau>t-1}\qty(\tnpm_{t-1}-\frac{18\tnpm_{0}}{n}-\E_{t-1}[\tnpm_t])\leq 0.
    \end{align*}
    Furthermore, 
    \[\indicator_{\tau>t-1}\qty(X_{t-1}+\drift-X_t)\leq \indicator_{\tau>t-1}\qty(\tnpm_{t-1}\qty(1+\frac{\npm_{t-1}-\npt_{t-1}/\npo_{t-1}}{2(1-\npo_{t-1})+\npt_{t-1}/\npo_{t-1}})-\frac{9\npm_{t-1}}{n}-\tnpm_t)\]
     satisfies one-sided $\qty(O\qty(\frac{1}{n}),O\qty(\frac{\npm_0}{n}))$-Bernstein condition. 
  Note that we use  \cref{lem:basic inequalities for tnpm} (\cref{item:bernstein condition of tnpm}) and \cref{lem:Bernstein condition} (\cref{item:BC for upper bounded rv,item:BC for linear transformation}). 
  
    Applying \cref{lem:Useful drift lemma} (\cref{item:negative drift useful}) with $I^-=(1-\ctildemaxdown)\tnpm_{0}$, $I^+=(1+\ctildemaxup)\tnpm_{0}$, and $T\leq \constrefs{lem:hitting time for tnpm and npm}{item:tildemaxdown}n= \frac{\ctildemaxdown}{36}n= \frac{X_0-I^-}{-2\drift}$, we have
             \begin{align*}
              \Pr\qty[\tautildemaxdown \leq \min\{T,\tau^\uparrow,\taubetaminus,\taupsiplus\}] 
              \leq \exp\qty(-\Omega\qty(\frac{(\tnpm_0)^2}{\frac{\tnpm_0}{n}T+\frac{\tnpm_0}{n}}))
              \le \exp\qty(- \Omega\qty(\frac{\tnpm_0 n}{T})).
             \end{align*}
\end{proof}

  \begin{proof}[Proof of \cref{lem:hitting time for tnpm and npm} (\cref{item:tildemaxdown})]
    For $s\geq 0$, let $\tau_s^\downarrow=\inf\{t\geq s: \tnpm_t\leq (1-\ctildemaxdown)\tnpm_{s}\}$ and $\tau_s^\uparrow=\inf\{t\geq s: \tnpm_t\geq (1+\ctildemaxup)\tnpm_{s}\}$.
    Applying \cref{lem:Iterative Drift theorem} (\cref{item:bounded decrease under small negative drift}) with $\tau^*=\min\{\taubetaminus,\taupsiplus\}$ and $T\leq \constrefs{lem:hitting time for tnpm and npm}{item:tildemaxdown}n$, we have
    \begin{align*}
      \Pr\qty[\tautildemaxdown \leq T \text{ and } \tau^*> T]
      &\leq 
      \sum_{s=0}^{T-1}\E\qty[\indicator_{\tnpm_s\geq \tnpm_0 \text{ and } \tau^*> s}\Pr_s\qty[\tau_s^\downarrow \leq \min\{T,\tau_s^\uparrow,\tau^*\}]]
       \\
       &\leq \sum_{s=0}^{T-1}\E\qty[\indicator_{\tnpm_s\geq \tnpm_0 \text{ and } \tau^*> s}\exp\qty(-\Omega\qty(\frac{\tnpm_sn}{T}))]
       \\
       &\leq T\exp\qty(-\Omega\qty(\frac{\tnpm_0n}{T})).
    \end{align*}
    Note that we apply \cref{lem:tildemaxdown} in the second inequality.
  \end{proof}

\begin{proof}[Proof of \cref{lem:hitting time for tnpm and npm} (\cref{item:taualphamaxup})]
    Let $\tau=\min\{\taumaxup,\taubetaminus\}$.
  For $\tau>T$, we have
  \begin{align}
  \E_{t-1}[\alpha_t(i)]-\alpha_{t-1}(i)
  &=\alpha_{t-1}(i)\qty(\alpha_{t-1}(i)+1-2\beta_{t-1}) \nonumber\\
  &\leq \npm_{t-1}\qty(\npm_{t-1}+2\xbeta) \nonumber\\
  &\leq 3(1+\cmaxup)^2(\npm_{0})^2. \label{eq:alphamaxup drift}
  \end{align}
Note that we use assumptions on $\npm_0=\omega(\log n/\sqrt{n})$ and $\xbeta=o(\log n/\sqrt{n})$ in the second inequality.

  Hence, letting $X_t=-\alpha_t(i)$ and $\drift = -3(1+\cmaxup)^2(\npm_{0})^2<0$, we have
  \[
    \indicator_{\tau>t-1}\qty(X_{t-1}+\drift-\E_{t-1}[X_t])
    =\indicator_{\tau>t-1}\qty(\E_{t-1}[\alpha_t(i)]-\alpha_{t-1}(i)-3(1+\cmaxup)^2(\npm_{0})^2)\leq 0
  \]
  and $\indicator_{\tau>t-1}\qty(\E_{t-1}[X_t]-X_t)=\indicator_{\tau>t-1}\qty(\alpha_t(i)-\E_{t-1}[\alpha_t(i)])$ satisfies $\qty(1/n,O(\npm_0/n))$-Bernstein condition.
  Note that we use  \cref{lem:basic inequalities for alpha} (\cref{item:bernstein condition of alpha}) and \cref{lem:Bernstein condition} (\cref{item:BC for upper bounded rv,item:BC for linear transformation}). 

  Let $\tau_i=\inf\{t\geq 0: \alpha_t(i)\geq (1+\cmaxup)\npm_0\}=\inf\{t\geq 0: X_t\leq -(1+\cmaxup)\npm_0\}$.
  Applying \cref{lem:Useful drift lemma} (\cref{item:negative drift useful}) with 
 $I^-=-(1+\cmaxup)\npm_0$, for $T\leq \frac{\constrefs{lem:hitting time for tnpm and npm}{item:taualphamaxup}}{\npm_0}= \frac{\cmaxup}{6(1+\cmaxup)^2\npm_0}= \frac{X_0-I^-}{-2\drift}$, we have
 \begin{align*}
  \Pr\qty[\tau_i\leq \min\{T,\tau\}]
  \leq \exp\qty(-\Omega\qty(\frac{(\npm_0)^2}{\frac{\npm_0}{n}T+\frac{\npm_0}{n}}))
  \leq \exp\qty(-\Omega\qty(\frac{\npm_0 n}{T})).
 \end{align*}
 Thus, applying the union bound,
 \begin{align*}
  \Pr\qty[\taumaxup\leq \min\{T,\tau\}]
  &\leq \Pr\qty[\exists i\in [k]: \tau_i\leq \min\{T,\tau\}]
  \leq k\exp\qty(-\Omega\qty(\frac{\npm_0 n}{T})).
 \end{align*}
\end{proof}

Finally, as a natural consequence of the discussion so far, we introduce the following useful lemma.
\begin{lemma}
  \label{lem:tauinitial lemma}
  Suppose that $\beta_0\geq 1/2-\xbeta$ and $\psi_0\leq \xpsi$. 
  Then, for any $T\leq \constrefs{lem:hitting time for tnpm and npm}{item:tildemaxdown}n$,
  \begin{align*}
    \Pr\qty[\min\{\tautildemaxdown,\taubetaminus,\taupsiplus\} \leq  T] \leq T\exp\qty(-\Omega\qty(\frac{n\tnpm_0}{T}))+T\exp\qty(-\Omega(n\xpsi^2)).
  \end{align*}
\end{lemma}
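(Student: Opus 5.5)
Since $\min\{\tautildemaxdown,\taubetaminus,\taupsiplus\}\le T$ means at least one of the three stopping times is at most $T$, I will bound the event by a union over three disjoint pieces. The first piece is $\{\taupsiplus\le T\}$. The second is $\{\taubetaminus\le T \text{ and } \taupsiplus>T\}$. The third is $\{\tautildemaxdown\le T \text{ and } \min\{\taubetaminus,\taupsiplus\}>T\}$. The assumptions $\beta_0\ge 1/2-\xbeta$ and $\psi_0\le\xpsi$ are exactly the hypotheses needed to invoke each of the earlier lemmas at time $0$.

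For the first piece, I apply \cref{lem:taupsi} (\cref{item:taupsiplus}), which uses $\psi_0\le\xpsi$ and gives $\Pr[\taupsiplus\le T]\le T\exp(-\Omega(n\xpsi^2))$. For the second piece, I apply \cref{lem:taubeta} (\cref{item:taubetaminus}), which gives $T\exp(-\Omega(n\xbeta^2))$. Since $\xpsi=\xbeta/4$, this is also $T\exp(-\Omega(n\xpsi^2))$ after adjusting the constant in $\Omega$.

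For the third piece, I apply \cref{lem:hitting time for tnpm and npm} (\cref{item:tildemaxdown}) directly. It is valid because $T\le\constrefs{lem:hitting time for tnpm and npm}{item:tildemaxdown}n$ and its hypotheses coincide with ours, and it yields $T\exp(-\Omega(n\tnpm_0/T))$.

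Summing the three bounds and absorbing the two $\exp(-\Omega(n\xpsi^2))$ terms into one gives the claim. I expect no real obstacle, because the statement is just a union-bound packaging of the earlier results. The only point needing care is that the three events genuinely cover the target event. If $\taupsiplus>T$ and $\taubetaminus>T$, the only remaining way for the minimum to be at most $T$ is $\tautildemaxdown\le T$, so the cover is complete.
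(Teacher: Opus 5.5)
Your proposal is correct and matches the paper's proof. The paper uses the same decomposition into $\{\taupsiplus\le T\}$, $\{\taubetaminus\le T,\ \taupsiplus>T\}$ and $\{\tautildemaxdown\le T,\ \min\{\taubetaminus,\taupsiplus\}>T\}$, and bounds these by \cref{lem:taupsi}, \cref{lem:taubeta} and \cref{lem:hitting time for tnpm and npm} respectively. Your use of $\xpsi=\xbeta/4$ to absorb the $\exp(-\Omega(n\xbeta^2))$ term is what turns the paper's final display into the stated form.
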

\begin{proof}
  First, we observe that, for any stopping times $\tau_1$ and $\tau_2$, we have
  \begin{align}
    \Pr\qty[\min\{\tau_1,\tau_2\}\leq T]
    &\leq \Pr\qty[\{\tau_1 \leq T \text{ or } \tau_2\leq T\}\text{ and } \tau_2>T]+\Pr\qty[\tau_2\leq T] \nonumber\\
    &= \Pr\qty[\tau_1 \leq T \text{ and } \tau_2>T]+\Pr\qty[\tau_2\leq T]. \label{eq:min of two stopping times}
  \end{align}
  Hence, combining \cref{lem:hitting time for tnpm and npm} (\cref{item:tildemaxdown}), \cref{lem:taubeta}, and \cref{lem:taupsi}, 
  \begin{align*}
    &\Pr\qty[\min\{\tautildemaxdown,\taubetaminus,\taupsiplus\} \leq T] \\
    &\leq \Pr\qty[\tautildemaxdown \leq T \text{ and }\min\{\taubetaminus,\taupsiplus\}>T]+\Pr\qty[\taubetaminus\leq T \text{ and }\taupsiplus>T]+\Pr\qty[\taupsiplus\leq T] \nonumber\\
    &\leq T\exp\qty(-\Omega\qty(\frac{n\tnpm_0}{T}))
    +T\exp\qty(-\Omega\qty(n\xbeta^2))+T\exp\qty(-\Omega\qty(n\xpsi^2)).
  \end{align*}
\end{proof}

\subsection{Behavior of Gap between Two Opinions} \label{sec:behavior of delta}
For two opinions $i,j\in [k]$, recall $\delta_t=\delta_t(i,j)=\alpha_t(i)-\alpha_t(j)$ is the gap between the two opinions at time $t$ (\cref{def:key-quantities}).
We define the \emph{weak} opinion and its stopping time as follows:
\begin{definition}[Weak Opinion]\label{def:weak}
  For a constant $\cweak\in (0,1/2)$ and an opinion $i\in [k]$, define
  \begin{align*}
    \tauiweak \defeq \inf\{t\geq 0: \alpha_t(i)\le (1-\cweak)\npm_t\}.
  \end{align*}
  By default, we set $\cweak=0.1$.
  We call that an opinion $i\in [k]$ is \emph{weak} at time $t$ if $\alpha_t(i)\le (1-\cweak)\npm_t$.
\end{definition}

The main results of this section are the following two lemmas. 
Intuitively, they show that if both $i$ and $j$ are non-weak, then: 
(i) the gap $\delta_t(i,j)$ between them grows to at least $\Omega(\sqrt{\log n/n})$ within $O(\log n/\npm_0)$ rounds (\cref{lem:pair of non-weak opinions} (\cref{item:pair of non-weak opinions becomes weak})), and 
(ii) if the initial gap $\delta_0(i,j)$ is at least $\Omega(\sqrt{\log n/n})$, then $j$ becomes weak within $O(\log n/\npm_0)$ rounds (\cref{lem:pair of non-weak opinions} (\cref{item:delta multipricative with initial bias})).

\begin{lemma}[Either of two non-weak opinions becomes weak]
  \label{lem:pair of non-weak opinions}
  Let $i,j\in [k]$ be an arbitrary pair of two non-weak opinions.
  Suppose that $\beta_0 \geq 1/2-\xbeta$, $\psi_0 \leq \xpsi$, and $\npm_0\geq \omega(\log n/\sqrt{n})$.
  We have the following:
  \begin{enumerate}
    \item \label{item:pair of non-weak opinions becomes weak}
    Let $C$ be an arbitrary positive constant.
    Then, for some $T=O(\log n/\npm_0)$, we have
    \begin{align*}
        \Pr\qty[\min\qty{\taudeltaplus\qty(\sqrt{\frac{C\log n}{n}}),\tauiweak,\taujweak} > T \text{ or } \min\{\tautildemaxdown,\taubetaminus,\taupsiplus\} \leq T] \le n^{-10}.
    \end{align*}
    \item \label{item:delta multipricative with initial bias}
    Suppose $\delta_0(i,j)\geq \sqrt{C\log n/n}$ for a sufficiently large constant $C>0$.
  Then, for some $T=O\qty(\log n/\npm_0)$,
  \begin{align*}
      \Pr\qty[ \taujweak>T \text{ or } \min\{\tautildemaxdown,\taubetaminus,\taupsiplus\} \leq T  ] \le n^{-10}.
  \end{align*}
  \end{enumerate}
\end{lemma}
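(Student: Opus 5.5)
Both items follow the same plan. We work up to the stopping time $\tau^*=\min\{\tautildemaxdown,\taubetaminus,\taupsiplus\}$. By \cref{lem:tauinitial lemma} with $T=O(\log n/\npm_0)\le \constrefs{lem:hitting time for tnpm and npm}{item:tildemaxdown}n$, the event $\tau^*\le T$ has probability $T\exp(-\Omega(n\tnpm_0\npm_0/\log n))+T\exp(-\Omega(n\xpsi^2))$. This is $n^{-\omega(1)}$ because $\npm_0=\omega(\log n/\sqrt n)$ and $\xpsi=\omega(\sqrt{\log n/n})$. So it suffices to bound the bad event intersected with $\{\tau^*>T\}$.

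On $\{\tau^*>t-1\}$ we have $\npm_{t-1}\ge \beta_{t-1}(1-\ctildemaxdown)\tnpm_0=\Omega(\npm_0)$ and $\beta_{t-1}\ge 1/2-\xbeta$. We also need an upper bound $\npm_{t-1}\le 2\npm_0$, which we get from \cref{lem:hitting time for tnpm and npm} (\cref{item:taualphamaxup}) on windows of length $\Theta(1/\npm_0)$, restarting $O(\log n)$ times; alternatively we stop at $\taumaxup$ and restart.

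While $i,j$ are both non-weak, $\alpha_{t-1}(i)+\alpha_{t-1}(j)\ge 1.8\npm_{t-1}$. Using $\gamma_{t-1}\le \npm_{t-1}\beta_{t-1}$ and $\psi_{t-1}\le\xpsi=o(\npm_{t-1})$, the expectation formula of \cref{lem:basic inequalities for delta_epsilon} gives
\[
\E_{t-1}[\delta_t]=\delta_{t-1}\bigl(1+\rho_{t-1}\bigr),\qquad \rho_{t-1}\ge 0.8\npm_{t-1}-o(\npm_{t-1})\ge c\npm_0 .
\]
Since $2(1-\beta_{t-1})\approx 1$, we also have $\rho_{t-1}\le 3\npm_{t-1}=O(\npm_0)$. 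This multiplicative drift applies to $|\delta_t|$ in the sense $\E_{t-1}[\delta_t]\,\mathrm{sgn}(\delta_{t-1})=|\delta_{t-1}|(1+\rho_{t-1})$.

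For (\cref{item:pair of non-weak opinions becomes weak}), I would use a second-moment (variance) argument in the style of \cite{Doerr11}. Let
\[
\tau=\min\{\taudeltaplus(\sqrt{C\log n/n}),\tauiweak,\taujweak,\tau^*\}.
\]
On $\{\tau>t-1\}$, the drift above together with the variance bound of \cref{lem:basic inequalities for delta_epsilon} (\cref{item:variance of delta}) gives
\[
\E_{t-1}[\delta_t^2]\ge \delta_{t-1}^2+\Var_{t-1}[\delta_t]\ge \delta_{t-1}^2+\Omega(\npm_0/n),
\]
using $(1-\beta_{t-1})^2=\Omega(1)$. Better, $\E_{t-1}[\delta_t^2]\ge(1+c\npm_0)\delta_{t-1}^2+\Omega(\npm_0/n)$. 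I will use the potential $\log(1+n\delta_t^2)$, or a direct optional-stopping bound on $\E[\tau]$ via $\delta_t^2$ with an overshoot control from the Bernstein condition, exactly as in \cref{lem:hitting time for tnpt for gossip model}. The multiplicative form should give $\E[\tau]=O(\log n/\npm_0)$: additive growth brings $\delta^2$ to $\Theta(1/n)$ in $O(1/\npm_0)$ rounds, and multiplicative growth then reaches $C\log n/n$ in $O(\log\log n/\npm_0)$ further rounds. Markov's inequality then gives constant failure probability per window of length $O(\log n/\npm_0)$...

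That naive restart would cost $\log n$ windows of length $\log n/\npm_0$. To avoid this, I will take windows of length $O(1/\npm_0)$ for reaching $|\delta|\ge c/\sqrt n$ with constant probability. From $|\delta|\ge c/\sqrt n$, the multiplicative drift with Bernstein concentration (\cref{lem:Useful drift lemma} (\cref{item:positive drift useful}) applied to $\delta_t$ with additive drift $c\npm_0|\delta_0|$ per phase of doubling) doubles $|\delta|$ every $O(1/\npm_0)$ rounds. Each doubling fails with probability $\exp(-\Omega(n\delta^2))$, which is constant at scale $1/\sqrt n$; a failed attempt just restarts. The number of attempts needed is geometric with constant success, so $O(\log n)$ attempts of total length $O(\log n/\npm_0)$ suffice w.p.\ $1-n^{-10}$. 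This balancing is the main obstacle: the early phase $|\delta|\approx 1/\sqrt n$ has only constant success probability, and the amplification must be organized so that the total time is $O(\log n/\npm_0)$ and not $O(\log^2 n/\npm_0)$.

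For (\cref{item:delta multipricative with initial bias}), starting from $\delta_0\ge\sqrt{C\log n/n}$, I apply \cref{lem:Useful drift lemma} (\cref{item:positive drift useful}) repeatedly (via \cref{lem:Iterative Drift theorem}) to $X_t=\delta_t$. The drift is $R=c\npm_0\delta_0$, with $I^-=\delta_0/2$ and $I^+=2\delta_0$. The Bernstein parameters are $(O(1/n),O(\npm_0/n))$, so the failure probability per phase is $\exp(-\Omega(\delta_0^2 n))\le n^{-11}$ for large $C$, and each phase lasts $O(1/\npm_0)$ rounds. This works as long as $j$ is non-weak; once $\delta$ exceeds $0.1\npm_t$, which needs $O(\log n)$ doublings and hence $O(\log n/\npm_0)$ rounds, $j$ is weak. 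If $i$ becomes weak first, that is impossible while $\delta>0$ and $j$ is non-weak, so only $j$ can exit. A union bound over the phases completes the proof.
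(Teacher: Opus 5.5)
Your plan is essentially the paper's proof. For (i) you combine a constant-probability additive phase reaching $|\delta_t|\ge c/\sqrt n$ within $O(1/\npm_0)$ rounds (second-moment drift, optional stopping and overshoot control as in \cref{lem:taudeltaplus,lem:taudeltaplus bounded jump}; the paper treats constant $\npm_0$ separately by the CLT in \cref{lem:delta CLT}, which you can avoid by taking the overshoot threshold to be a large constant over $n$) with multiplicative growth failing with probability $\exp(-\Omega(n\delta_0^2))$ as in \cref{lem:delta multiplicative and additive drift}, and for (ii) you iterate the multiplicative drift exactly as the paper does via \cref{lem:Iterative Drift theorem}. The only loose point is your restart accounting: attempts do not all last $O(1/\npm_0)$ rounds (the successful one alone needs $\Theta(\log\log n)$ doublings, and failed ones may climb several levels first), and this is precisely what the paper outsources to the Doerr-type amplification \cref{lem:nazo lemma}, whose bound $O\bigl(T(\log(1/\varepsilon)+\log(x^*/x_0))\bigr)$ with $T=O(1/\npm_0)$ yields $O(\log n/\npm_0)$.
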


To this end, we begin with proving that $\delta_t$ employs both multiplicative and additive drifts as long as both $i$ and $j$ are non-weak.
\begin{lemma}[Multiplicative and additive drifts of $\delta_t$]
  \label{lem:delta multiplicative and additive drift}
  Let $i,j\in [k]$ be an arbitrary pair of two non-weak opinions.
  Suppose $\psi_0\leq \xpsi$, $\beta_0\geq 1/2-\xbeta$, and $\npm_0=\omega(\log n/\sqrt{n})$.
  Let $\constrefs{lem:hitting time for tnpm and npm}{item:taualphamaxup}=\frac{\cmaxup}{6(1+\cmaxup)^2}$ be a positive constant defined in \cref{lem:hitting time for tnpm and npm} (\cref{item:taualphamaxup}).
  We have the following:
  \begin{enumerate}
    \item \label{item:taudeltaup}
  Suppose $\delta_0(i,j)\geq 0$. 
  Let $\cdeltaup=\constrefs{lem:hitting time for tnpm and npm}{item:taualphamaxup}\frac{(1-\ctildemaxdown)\qty(1-2\cweak)(1-\cdeltadown)}{12}$.
  Then,
 \begin{align*}
  \Pr\qty[\min\{\tau_\delta^\uparrow,\taujweak,\tautildemaxdown,\taubetaminus,\taupsiplus\} >\frac{\constrefs{lem:hitting time for tnpm and npm}{item:taualphamaxup}}{\npm_0}]\
  \leq \exp\qty(-\Omega\qty(n\delta_0(i,j)^2))+n \exp\qty(-\Omega\qty(n(\npm_0)^2)).
\end{align*}
    \item \label{item:taudeltaplus const prob}
    Let $\xdelta=\cdeltaplus/\sqrt{n}$ for $\cdeltaplus=\frac{\cmaxup(1-2\cweak)^2(1-\ctildemaxdown)}{6\cdot 96\cdot 64(1+\cmaxup)^2(1-\cweak)}$.
    Then, there exists a positive constant $c_*\in (0,1)$ such that
    \begin{align*}
        \Pr\qty[\min\qty{\taudeltaplus,\tauiweak,\taujweak,\tautildemaxdown,\taubetaminus,\taupsiplus} > \frac{\constrefs{lem:hitting time for tnpm and npm}{item:taualphamaxup}}{\npm_0}] \leq 1-c_*.
    \end{align*}
  \end{enumerate}
\end{lemma}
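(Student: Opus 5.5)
Both items come down to turning the multiplicative growth of $\delta_t$ into an additive drift and then applying \cref{lem:Useful drift lemma}, working only up to the stopping time
\[
\tau=\min\{\tau_\delta^\uparrow,\taujweak,\tautildemaxdown,\taubetaminus,\taupsiplus,\taumaxup\}
\]
and only up to times $T\le \constrefs{lem:hitting time for tnpm and npm}{item:taualphamaxup}/\npm_0$.

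\textbf{Drift before $\tau$.} The first step is to bound the drift of $\delta_t$ for $t-1<\tau$. We have $\beta_{t-1}\ge 1/2-\xbeta$ and $\psi_{t-1}\le\xpsi$, so $(\gamma_{t-1}+\psi_{t-1})/\beta_{t-1}\le \npm_{t-1}+O(\xpsi)$. Since $i$ and $j$ are non-weak and $i$ is at least $j$ (because $\delta_{t-1}\ge0$), we get $\alpha_{t-1}(i)+\alpha_{t-1}(j)\ge 2(1-\cweak)\npm_{t-1}$. The expectation formula in \cref{lem:basic inequalities for delta_epsilon} then gives
\[
\E_{t-1}[\delta_t]\ge \delta_{t-1}\bigl(1+(1-2\cweak)\npm_{t-1}-o(\npm_0)\bigr),
\]
using $\xpsi=o(\log n/\sqrt n)=o(\npm_0)$. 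Before $\tautildemaxdown$ we have $\npm_{t-1}\ge(1-\ctildemaxdown)\tnpm_0\beta_{t-1}\ge \tfrac{1}{2}(1-\ctildemaxdown)\npm_0\cdot(1-o(1))$, because $\beta_{t-1}\ge 1/2-o(1)$ and $\tnpm_0\ge\npm_0$. Before $\taudeltadown$ (which I add to $\tau$) we have $\delta_{t-1}\ge(1-\cdeltadown)\delta_0$. Together these give an additive drift
\[
R=c\,\npm_0\,\delta_0,\qquad c\asymp(1-2\cweak)(1-\ctildemaxdown)(1-\cdeltadown)/2 .
\]
By \cref{lem:basic inequalities for delta_epsilon} and \cref{lem:Bernstein condition}, the increments satisfy a $(4/n, O(\npm_0/n))$-Bernstein condition, since $\alpha_{t-1}(i),\alpha_{t-1}(j)\le(1+\cmaxup)\npm_0$ before $\taumaxup$.

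\textbf{Item 1.} Take $I^+=(1+\cdeltaup)\delta_0$ and $I^-=(1-\cdeltadown)\delta_0$. The horizon needed is $T\approx(1+\epsilon)\cdeltaup\delta_0/R\approx \cdeltaup/(c\npm_0)$, and the choice of $\cdeltaup$ in the statement is exactly what makes this at most $\constrefs{lem:hitting time for tnpm and npm}{item:taualphamaxup}/\npm_0$. \cref{lem:Useful drift lemma} (\cref{item:positive drift useful}) then bounds the probability of not hitting $I^+$, or of hitting $I^-$ first, by
\[
\exp\bigl(-\Omega(\delta_0^2/(\tfrac{\npm_0}{n}T+\tfrac{\delta_0}{n}))\bigr)=\exp(-\Omega(n\delta_0^2)),
\]
since $\tfrac{\npm_0}{n}T=O(1/n)$. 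The event $\taumaxup\le\min\{T,\taubetaminus\}$ is controlled by \cref{lem:hitting time for tnpm and npm} (\cref{item:taualphamaxup}). This produces the $k\exp(-\Omega(n\npm_0^2))\le n\exp(-\Omega(n(\npm_0)^2))$ term, because $T=\Theta(1/\npm_0)$.

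\textbf{Item 2 (constant probability).} This is the step I expect to be the main obstacle, because $\delta_0$ may be $0$ and so the multiplicative drift gives nothing. I would use a second-moment drift, as in \cite{Doerr11}. For $t-1<\tau'$, where $\tau'$ is the minimum of the listed stopping times and $\taumaxup$, we have $|\delta_{t-1}|<\xdelta$. The drift of $\delta$ has the sign of $\delta_{t-1}$ (after swapping $i$ and $j$ if needed), so $\delta_{t-1}(\E_{t-1}[\delta_t]-\delta_{t-1})\ge0$. Combined with the variance bound of \cref{lem:basic inequalities for delta_epsilon} and $1-\beta_{t-1}\ge 1/2-o(1)$, this gives
\[
\E_{t-1}[\delta_t^2]\ge\delta_{t-1}^2+\Omega(\npm_0/n).
\]
The optional stopping theorem (\cref{thm:OST}), applied to $\delta_{t\wedge\tau'}^2-\Omega(\npm_0/n)(t\wedge\tau')$, gives
\[
\E[\tau'\wedge T]\cdot\Omega(\npm_0/n)\le \E[\delta^2_{\tau'\wedge T}] .
\]
The overshoot is controlled by the Bernstein condition, so $\E[\delta^2_{\tau'\wedge T}]\le O(\xdelta^2)=O(\cdeltaplus^2/n)$. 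Hence $\E[\tau'\wedge T]\le O(\cdeltaplus^2/\npm_0)$. Choosing $\cdeltaplus$ small relative to $\constrefs{lem:hitting time for tnpm and npm}{item:taualphamaxup}$ makes this at most $T/2$, and Markov's inequality then gives $\Pr[\tau'>T]\le 1/2$. Finally, subtracting the $o(1)$ probability of $\taumaxup$ given by \cref{lem:hitting time for tnpm and npm} leaves a constant $c_*$.
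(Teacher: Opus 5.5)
Your item~1 is essentially the paper's argument: the same stopped additive drift $R=\Theta(\npm_0\delta_0)$, \cref{lem:Useful drift lemma} with $I^\pm=(1\pm c)\delta_0$, and \cref{lem:hitting time for tnpm and npm} to handle $\taumaxup$.

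\textbf{Item~2 has a genuine gap.} Your claim $\E[\delta^2_{\tau'\wedge T}]=O(\xdelta^2)$ is false unless $\npm_0$ is below a small constant multiple of $(\cdeltaplus)^2$. In the gossip model, one round moves $\delta$ by a fluctuation of variance $\Theta(\npm_0/n)$. By contrast, $\xdelta^2=(\cdeltaplus)^2/n$ with $\cdeltaplus$ a tiny fixed constant, and the hypotheses allow $\npm_0=\Omega(1)$. When $|\delta_0|<\xdelta$, the event $|\delta_1|\ge\xdelta$ forces $\tau'=1$. Hence $\E[\delta^2_{\tau'\wedge T}]\ge\E[\delta_1^2]-\xdelta^2=\Omega(\npm_0/n)$.

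Your optional-stopping inequality then yields at best $\E[\tau'\wedge T]=O(1)$ with a large constant (the one-round variance divided by the drift $\Omega(\npm_0/n)$). That bound is not below $T/2=\Theta(1/\npm_0)$ once $\npm_0$ is a suitable constant, so Markov gives nothing. Shrinking $\cdeltaplus$ only makes this worse. In that regime the process does exit in the first round with constant probability, but optional stopping cannot detect it.

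Even for small $\npm_0$, the overshoot bound does not follow from the one-step Bernstein condition alone. The stopping step is selected, so you must sum the tail $\E_{t-1}[\delta_t^2\indicator_{\delta_t^2>L}]$ over all steps before stopping. This sum is negligible against the drift only when the tail factor $\exp(-\Omega((\cdeltaplus)^2/\npm_0))$ is small.

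\textbf{The missing idea is the paper's case split.}

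For $\npm_0\le 16(\cdeltaplus)^2/C'$ with $C'$ large, \cref{lem:taudeltaplus bounded jump} integrates the Bernstein tail over $y\ge L=16(\cdeltaplus)^2/n$ and obtains $\E[\delta_\tau^2]\le 16\xdelta^2+\frac{R}{2}\E[\tau]$. With that bound your argument goes through. Your truncation at $T$ is in fact a cleaner way to justify optional stopping and the absorption of the $\E[\tau]$ term than the paper's use of the unbounded $\tau$.

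For $\npm_0$ above that constant, you need single-round anti-concentration instead. Here $n\delta_1$ is a sum of $n$ independent bounded variables with variance $\Omega(n)$ and $|\E[\delta_1]|\le 2\xdelta$. The central limit theorem (\cref{lem:delta CLT}) then gives $|\delta_1|\ge\xdelta$ with constant probability. If $\npm_0>\frac{\cmaxup}{6(1+\cmaxup)^2}$, the horizon is below one round and the inequality cannot hold literally, so the horizon has to be read as at least $1$.

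\textbf{A minor point.} The bound $1-\beta_{t-1}\ge 1/2-o(1)$ holds only when $\npm_{t-1}=o(1)$. In general, use $\npm_{t-1}\le 1/(2(1-\cweak))$, which two non-weak opinions force, to get $1-\beta_{t-1}\ge\frac{1-2\cweak}{8(1-\cweak)}$.
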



\paragraph{Multiplicative Drift: Proof of \cref{lem:delta multiplicative and additive drift} (\cref{item:taudeltaup}).}
This part is devoted to proving that $\delta_t$ grows by a constant factor within $O(1/\npm_0)$ rounds.
\begin{proof}[Proof of \cref{lem:delta multiplicative and additive drift} (\cref{item:taudeltaup})]
Let $\tau^*=\min\{\taujweak,\taumaxup,\tautildemaxdown,\taubetaminus,\taupsiplus\}$ and $\tau=\min\{\taudeltaup,\taudeltadown,\tau^*\}$.
For $\tau>t-1$, we have
\begin{align}
    &\beta_{t-1}\tnpm_{t-1}\qty(\frac{\alpha_{t-1}(i)+\alpha_{t-1}(j)}{\npm_{t-1}}-\frac{\gamma_{t-1}}{\beta_{t-1}\npm_{t-1}}-\frac{\psi_{t-1}}{\beta_{t-1}^2\tnpm_{t-1}}) \nonumber \\
    &\geq \frac{1-\ctildemaxdown}{3}\tnpm_0\qty(2(1-\cweak)-1-\underbrace{\frac{\xpsi}{(1/3)^2(1-\ctildemaxdown)\tnpm_{0}}}_{o(1)}) \nonumber \\
    &\geq \frac{(1-\ctildemaxdown)(1-2\cweak)}{6}\npm_0. \label{eq:drift of delta key}
\end{align}
Note that $\gamma_{t-1}\leq \beta_{t-1}\npm_{t-1}$ and $\tnpm_0\geq \npm_0=\omega(\sqrt{\log n/n})$.
Hence, for $\tau>t-1$, we have
\begin{align*}
    \E_{t-1}[\delta_t]
    &=\delta_{t-1}+\delta_{t-1}\beta_{t-1}\tnpm_{t-1}\qty(\frac{\alpha_{t-1}(i)+\alpha_{t-1}(j)}{\npm_{t-1}}-\frac{\gamma_{t-1}}{\beta_{t-1}\npm_{t-1}}-\frac{\psi_{t-1}}{\beta_{t-1}^2\tnpm_{t-1}})\\
    &\geq \delta_{t-1}+\frac{(1-\ctildemaxdown)(1-\cdeltadown)(1-2\cweak)}{6}\delta_{0} \npm_0.
\end{align*}
Note that we use \cref{lem:basic inequalities for delta_epsilon} (\cref{item:expectation of delta_epsilon}).

Letting $X_t=\delta_t$ and $\drift = \frac{(1-\ctildemaxdown)(1-\cdeltadown)(1-2\cweak)}{6}\delta_{0} \npm_0> 0$, we have
  \[
    \indicator_{\tau>t-1}\qty(X_{t-1}+\drift-\E_{t-1}[X_t])
    =\indicator_{\tau>t-1}\qty(\delta_{t-1}+\drift-\E_{t-1}[\delta_t])\leq 0
  \]
  and $\indicator_{\tau>t-1}\qty(\E_{t-1}[X_t]-X_t)=\indicator_{\tau>t-1}\qty(\E_{t-1}[\delta_t]-\delta_t)$ satisfies $\qty(O(1/n),O(\npm_0/n))$-Bernstein condition.
  Note that we use \cref{lem:basic inequalities for delta_epsilon} (\cref{item:bernstein condition for delta_epsilon}) and \cref{lem:Bernstein condition} (\cref{item:BC for upper bounded rv,item:BC for linear transformation}). 

Recall $\cdeltaup=\constrefs{lem:hitting time for tnpm and npm}{item:taualphamaxup}\frac{(1-\ctildemaxdown)\qty(1-2\cweak)(1-\cdeltadown)}{12}$.
Let $I^+=(1+c)\delta_0$ and $I^-=(1-c)\delta_0$.
Let $T=\frac{\constrefs{lem:hitting time for tnpm and npm}{item:taualphamaxup}}{\npm_0}$.
We have $T= \frac{12\cdeltaup}{(1-\ctildemaxdown)\qty(1-2\cweak)(1-\cdeltadown)} \cdot \frac{1}{\npm_0} = \frac{2(I^+-\delta_0)}{\drift}$.
Thus, we can apply \cref{lem:Useful drift lemma} (\cref{item:positive drift useful}) and obtain
  \begin{align}
    \Pr\qty[\min\{\tau_\delta^\uparrow,\tau^*\} >T]\
    \leq \exp\qty(-\Omega\qty(n\delta_0^2)). 
    \label{eq:taudeltaup lemma}
  \end{align}

  Applying \cref{lem:hitting time for tnpm and npm} (\cref{item:taualphamaxup}), we have
  \begin{align*}
    &\Pr\qty[\min\{\taudeltaup,\taujweak,\tautildemaxdown,\taupsiplus\}>T \text{ and } \taubetaminus>T]\\
    &\leq \underbrace{\Pr\qty[\min\{\taudeltaup,\taujweak,\tautildemaxdown,\taupsiplus\}>T \text{ and } \taubetaminus>T \text{ and } \taumaxup>T]}_{\cref{eq:taudeltaup lemma}}
    +\underbrace{\Pr\qty[\taumaxup\leq T \text{ and } \taubetaminus>T]}_{\cref{lem:hitting time for tnpm and npm} (\cref{item:taualphamaxup})}\\
    &\leq \exp\qty(-\Omega\qty(n\delta_0^2)) + n \exp\qty(-\Omega(n(\npm_0)^2)).
  \end{align*}
\end{proof}

\paragraph{Additive Drift: Proof of \cref{lem:delta multiplicative and additive drift} (\cref{item:taudeltaplus const prob}).}
Next, we prove that $\abs{\delta_t}$ grows at least $\Omega(1/\sqrt{n})$ within $O(1/\npm_0)$ rounds with constant probability.
To prove \cref{lem:delta multiplicative and additive drift} (\cref{item:taudeltaplus const prob}), we first prove the following lemma.

\begin{lemma}
    \label{lem:taudeltaplus}
    Suppose that $\beta_0\geq 1/2-\xbeta$, $\psi_0\leq \xpsi$, and $\npm_0\geq \omega(\log n/\sqrt{n})$.
    Let $\tau=\min\{\taudeltaplus,\tauiweak,\taujweak,\tautildemaxdown,\taubetaminus,\taupsiplus\}$.
    Let $\constref{lem:taudeltaplus}=\frac{(1-2\cweak)^2(1-\ctildemaxdown)}{96(1-\cweak)}$.
    Then, we have
    \begin{align*}
        \E[\tau]\leq \frac{n\E[\delta_\tau^2]}{\constref{lem:taudeltaplus}\npm_0}.
    \end{align*}
  \end{lemma}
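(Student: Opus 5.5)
The plan is to show that $\delta_t^2$ has an additive drift of order $\npm_0/n$ while $\tau$ has not yet occurred, and then apply the optional stopping theorem in the same way as in the proof of \cref{lem:hitting time for tnpt for gossip model}. We write $\E_{t-1}[\delta_t^2]=\E_{t-1}[\delta_t]^2+\Var_{t-1}[\delta_t]$ and bound the two terms separately on the event $\tau>t-1$.

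\textbf{Mean term.} By \cref{lem:basic inequalities for delta_epsilon} (\cref{item:expectation of delta_epsilon}), $\E_{t-1}[\delta_t]=\delta_{t-1}(1+\Delta_{t-1})$, where
\[
\Delta_{t-1}=\alpha_{t-1}(i)+\alpha_{t-1}(j)-\frac{\gamma_{t-1}+\psi_{t-1}}{\beta_{t-1}}.
\]
Both $i$ and $j$ are non-weak at $t-1$, so $\alpha_{t-1}(i)+\alpha_{t-1}(j)\ge 2(1-\cweak)\npm_{t-1}$. We also have $\gamma_{t-1}\le\beta_{t-1}\npm_{t-1}$, and $\psi_{t-1}\le\xpsi=o(\npm_0)$ because $\npm_0=\omega(\log n/\sqrt n)$ and $\tnpm_{t-1}$ has not dropped below $(1-\ctildemaxdown)\tnpm_0$. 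Together these give $\Delta_{t-1}\ge 0$; this is the same computation as in \cref{eq:drift of delta key}. Hence $\E_{t-1}[\delta_t]^2\ge\delta_{t-1}^2$, and we discard the multiplicative gain.

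\textbf{Variance term.} By \cref{lem:basic inequalities for delta_epsilon} (\cref{item:variance of delta}),
\[
\Var_{t-1}[\delta_t]\ge \frac{(1-\beta_{t-1})^2(\alpha_{t-1}(i)+\alpha_{t-1}(j))}{n}.
\]
We lower bound each factor:
\begin{itemize}
\item Since $\beta_{t-1}\le 1/2+o(1)$, we get $(1-\beta_{t-1})^2\ge 1/4-o(1)$. This upper bound on $\beta_{t-1}$ needs justification: it follows from $\psi_{t-1}\le\xpsi$, which gives $2\beta_{t-1}-1\le \npm_{t-1}+o(1)$, provided $\npm_{t-1}$ is small. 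If $\npm_{t-1}$ is not small, then $1-\beta_{t-1}$ is controlled differently, and I would have to check whether extra slack is needed there.
\item Since $\alpha_{t-1}(i)+\alpha_{t-1}(j)\ge 2(1-\cweak)\npm_{t-1}\ge 2(1-\cweak)\beta_{t-1}(1-\ctildemaxdown)\tnpm_0$ and $\beta_{t-1}\ge 1/3$, this factor is $\Omega(\npm_0)$.
\end{itemize}
The constants should combine to at least $\constref{lem:taudeltaplus}\npm_0/n$. The factors $(1-2\cweak)^2/(1-\cweak)$ and $1/96$ in $\constref{lem:taudeltaplus}$ suggest the authors route the bound slightly differently, for instance through $\npm_{t-1}\ge$ something involving the weak margin. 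I would simply verify that my constant dominates theirs.

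\textbf{Optional stopping.} Set $Y_t=\delta_{t\wedge\tau}^2-\constref{lem:taudeltaplus}\npm_0 (t\wedge\tau)/n$. By the two bounds above, $(Y_t)$ is a submartingale. To apply \cref{thm:OST} we need integrability, and I would first check that $\E[\tau]<\infty$. This holds because the increments are bounded and $\delta^2$ lies in $[0,1]$: the positive drift forces $\tau$ to have geometric-type tails. Alternatively, apply optional stopping to $\tau\wedge m$ and let $m\to\infty$ by monotone convergence. This gives
\[
0\le\delta_0^2\le\E[\delta_\tau^2]-\frac{\constref{lem:taudeltaplus}\npm_0\E[\tau]}{n},
\]
which rearranges to the claim.

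The main obstacle is the variance step, specifically lower bounding $(1-\beta_{t-1})^2$ uniformly while only the stopping-time conditions are available. I would bound $\beta_{t-1}$ above via $\psi_{t-1}\le\xpsi$ together with $\gamma_{t-1}\le\beta_{t-1}\npm_{t-1}$. If $\npm_{t-1}$ could be large, I would fall back on the fact that $1-\beta_{t-1}\ge 1-\beta_{t-1}(1+\npm_{t-1})/2\cdots$, which is still a constant as long as $\npm_{t-1}$ is bounded away from $1$. Near consensus, weakness or the $\tnpm$ bounds should be used to exclude this case.
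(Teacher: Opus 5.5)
\textbf{Gap: the uniform lower bound on $1-\beta_{t-1}$ in the variance term.} Your skeleton matches the paper's. You discard the multiplicative gain using $\alpha_{t-1}(i)+\alpha_{t-1}(j)+1-2\beta_{t-1}\ge 0$, lower-bound $\Var_{t-1}[\delta_t]$ via \cref{lem:basic inequalities for delta_epsilon}, and then apply optional stopping to $\delta_{t\wedge\tau}^2-R\,(t\wedge\tau)$. Your truncation at $\tau\wedge m$ is in fact more careful than the paper's direct appeal to \cref{thm:OST}, which presupposes $\E[\tau]<\infty$. The remaining problem is the step you flagged yourself. Your primary claim $\beta_{t-1}\le 1/2+o(1)$ is false in general. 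From $\psi_{t-1}=\beta_{t-1}(2\beta_{t-1}-1)-\gamma_{t-1}$ and $\gamma_{t-1}\le\beta_{t-1}\npm_{t-1}$ you only get $1-\beta_{t-1}=\frac12-\frac{\psi_{t-1}+\gamma_{t-1}}{2\beta_{t-1}}\ge\frac{1-\npm_{t-1}}{2}-O(\xpsi)$. Nothing in the hypotheses or in $\tau$ keeps $\npm_{t-1}$ small: only $\npm_0=\omega(\log n/\sqrt n)$ is assumed, and $\tau$ contains no upper stopping time for $\npm$ (no $\taumaxup$). The $\tnpm$ bounds cannot fix this either, because $\tautildemaxdown$ only gives a \emph{lower} bound on $\tnpm_{t-1}$.

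\textbf{The missing observation.} Non-weakness of the two \emph{distinct} opinions caps the maximum. For $t-1<\tau$ you have $2(1-\cweak)\npm_{t-1}\le\alpha_{t-1}(i)+\alpha_{t-1}(j)\le 1$, so $\npm_{t-1}\le\frac{1}{2(1-\cweak)}<1$ because $\cweak<1/2$. Substituting gives $1-\beta_{t-1}\ge\frac{1-2\cweak}{4(1-\cweak)}-O(\xpsi)\ge\frac{1-2\cweak}{8(1-\cweak)}$. Hence $\frac{(1-\beta_{t-1})^2(\alpha_{t-1}(i)+\alpha_{t-1}(j))}{n}\ge\frac{(1-2\cweak)^2}{32(1-\cweak)}\cdot\frac{\npm_{t-1}}{n}\ge\constref{lem:taudeltaplus}\frac{\npm_0}{n}$, using $\npm_{t-1}=\beta_{t-1}\tnpm_{t-1}\ge\frac{1-\ctildemaxdown}{3}\npm_0$. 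This is exactly where the factor $(1-2\cweak)^2/(1-\cweak)$ in $\constref{lem:taudeltaplus}$ comes from. The larger constant you expected from $(1-\beta_{t-1})^2\ge 1/4-o(1)$ is valid only when $\npm_{t-1}=o(1)$. With this step inserted, the rest of your argument goes through as written.
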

  \begin{proof}
    First, for $\min\{\tauiweak,\taujweak\}>t-1$, we observe that
    \begin{align}
        \npm_{t-1}\leq \frac{\min\{\alpha_{t-1}(i),\alpha_{t-1}(j)\}}{1-\cweak}\leq \frac{1}{2(1-\cweak)}
        \label{eq:weak npm upper bound}
      \end{align}
      holds.
      Hence, for $\min\{\taubetaminus, \taupsiplus,\tauiweak,\taujweak\}>t-1$, we have
        \begin{align*}
            1-\beta_{t-1}=\frac{1}{2}-\frac{\psi_{t-1}+\gamma_{t-1}}{2\beta_{t-1}}
            \geq \frac{1}{2}-6\xpsi-\frac{\npm_{t-1}}{2}
            \geq \frac{1}{2}-6\xpsi-\frac{1}{4(1-\cweak)}
            \geq \frac{1-2\cweak}{8(1-\cweak)}. 
        \end{align*}
        Note that we use our assumption of $\xpsi=o(1)$.
        Further, for $\min\{\taubetaminus, \taupsiplus,\tauiweak,\taujweak\}>t-1$, 
\begin{align}
    &\alpha_{t-1}(i)+\alpha_{t-1}(j)+1-2\beta_{t-1}\nonumber \\
    &=\beta_{t-1}\tnpm_{t-1}\qty(\frac{\alpha_{t-1}(i)+\alpha_{t-1}(j)}{\npm_{t-1}}-\frac{\gamma_{t-1}}{\beta_{t-1}\npm_{t-1}}-\frac{\psi_{t-1}}{\beta_{t-1}^2\tnpm_{t-1}}) \nonumber \\
    &\geq \frac{1-\ctildemaxdown}{3}\tnpm_0\qty(2(1-\cweak)-1-\underbrace{\frac{\xpsi}{(1/3)^2(1-\ctildemaxdown)\tnpm_{0}}}_{o(1)}) \geq 0. \label{eq:drift of delta plus key}
\end{align}
Note that $\gamma_{t-1}\leq \beta_{t-1}\npm_{t-1}$ and $\tnpm_0\geq \npm_0=\omega(\sqrt{\log n/n})$.
        
Let $\tau=\min\{\taudeltaplus,\tauiweak,\taujweak,\tautildemaxdown,\taubetaminus,\taupsiplus\}$.
For $t-1<\tau$, 
        \begin{align*}
            \E_{t-1}[\delta_t^2]
            &=\E_{t-1}[\delta_t]^2+\Var_{t-1}[\delta_t]\\
            &\geq \delta_{t-1}^2\qty(1+\alpha_{t-1}(i)+\alpha_{t-1}(j)+1-2\beta_{t-1})^2
            +\frac{(1-\beta_{t-1})^2}{n}(\alpha_{t-1}(i)+\alpha_{t-1}(j))\\
            &\geq \delta_{t-1}^2
            +\frac{(1-2\cweak)^2}{32(1-\cweak)n}\npm_{t-1}\\
            &\geq \delta_{t-1}^2+\frac{(1-2\cweak)^2(1-\ctildemaxdown)}{96(1-\cweak)} \cdot \frac{\npm_0}{n}.
        \end{align*}
        Note that we use \cref{lem:basic inequalities for delta_epsilon} (\cref{item:expectation of delta_epsilon,item:variance of delta}).

  Let 
  $\drift=\constref{lem:taudeltaplus} \frac{\npm_0}{n}$, 
  $X_t=\delta_t^2-\drift t$ and $Y_t=X_{t\wedge \tau}$.
  Then, we have
  \begin{align*}
  \E_{t-1}[Y_t-Y_{t-1}]
  =\indicator_{\tau>t-1}\E_{t-1}[X_t-X_{t-1}]
  =\indicator_{\tau>t-1}\qty(\E_{t-1}[\delta_t^2]-\drift t-\delta_{t-1}^2+\drift (t-1))
  \geq 0,
  \end{align*}
  i.e., $(Y_t)_{t\in \Nat_0}$ is a submartingale.
  From \cref{thm:OST}, we have
  \begin{align*}
  \E[Y_\tau]\geq \E[Y_0] = \delta_0^2\geq 0
  \end{align*}
  and
  \begin{align*}
  \E[Y_\tau]=\E[X_\tau] = \E[\delta_\tau^2]-\drift \E[\tau].
  \end{align*}
  Thus, we obtain $\E[\tau]\leq \frac{\E[\delta_\tau^2]}{\drift}$.
  \end{proof}

    \begin{lemma}
        \label{lem:taudeltaplus bounded jump}
        Let $c\in (0,1)$ be an arbitrary constant and let $\xdelta=c/\sqrt{n}$.
        Let $\constref{lem:taudeltaplus bounded jump}$ be a sufficiently large positive constant.
        Suppose $\psi_0\leq \xpsi$ and $\beta_0\geq 1/2-\xbeta$.
        Suppose $\npm_0=\omega(\log n/\sqrt{n})$ and $\npm_0\leq 16c^2/\constref{lem:taudeltaplus bounded jump}$.
        Let $\tau=\min\{\taudeltaplus,\taubetaminus,\taupsiplus,\tautildemaxdown,\tauiweak,\taujweak,\taumaxup\}$.
        Then, \[\E[\delta_\tau^2]\leq 16\xdelta^2+\frac{\constref{lem:taudeltaplus}\npm_0\E[\tau]}{2n},\] where $\constref{lem:taudeltaplus}$ is the constant defined in \cref{lem:taudeltaplus}.
    \end{lemma}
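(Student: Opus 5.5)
We bound the overshoot of $|\delta_t|$ past $\xdelta$ at the stopping time. Before $\tau$, $|\delta_{t-1}|<\xdelta$ (since $\taudeltaplus>t-1$). We decompose
\[
\delta_\tau^2 \le 2\bigl(\E_{\tau-1}[\delta_\tau]\bigr)^2 + 2\bigl(\delta_\tau-\E_{\tau-1}[\delta_\tau]\bigr)^2 ,
\]
and bound the two terms separately.

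For the drift term, \cref{lem:basic inequalities for delta_epsilon} (\cref{item:expectation of delta_epsilon}) gives $\E_{t-1}[\delta_t]=\delta_{t-1}(1+\alpha_{t-1}(i)+\alpha_{t-1}(j)+1-2\beta_{t-1})$. On $\{\tau>t-1\}$ the factor is at most $1+2\npm_{t-1}+2\xbeta\le 1+O(\npm_0)$, using $\taumaxup>t-1$ and $\taubetaminus>t-1$. Since $\npm_0\le 16c^2/\constref{lem:taudeltaplus bounded jump}$ is small, this factor is at most $2$. Hence $|\E_{t-1}[\delta_t]|\le 2\xdelta$, and the drift term contributes at most $2\cdot 4\xdelta^2=8\xdelta^2$. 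The exact bookkeeping of constants (to land on $16\xdelta^2$) will be adjusted, possibly by using $(a+b)^2\le (1+\theta)a^2+(1+1/\theta)b^2$.

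For the noise term, write $\xi_t=\delta_t-\E_{t-1}[\delta_t]$. We have
\[
\E\bigl[\xi_\tau^2\bigr]=\E\Bigl[\sum_{t\ge1}\indicator_{\tau=t}\,\xi_t^2\Bigr]\le \sum_{t\ge 1}\E\bigl[\indicator_{\tau>t-1}\E_{t-1}[\xi_t^2]\bigr]=\sum_{t\ge1}\E\bigl[\indicator_{\tau>t-1}\Var_{t-1}[\delta_t]\bigr].
\]
By \cref{lem:basic inequalities for delta_epsilon} (\cref{item:bernstein condition for delta_epsilon}) with $\epsilon=0$, or directly from \cref{lem:basic inequalities for alpha}, $\Var_{t-1}[\delta_t]\le \frac{2}{n}(\alpha_{t-1}(i)+\alpha_{t-1}(j))\le \frac{4(1+\cmaxup)\npm_0}{n}$ on $\{\tau>t-1\}$. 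This yields $\E[\xi_\tau^2]\le O(\npm_0/n)\,\E[\tau]$.

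The main obstacle is that this crude variance summation gives a constant $O(1)\cdot\npm_0/n$ that is much larger than the required $\constref{lem:taudeltaplus}\npm_0/(2n)$, since $\constref{lem:taudeltaplus}$ is a small specific constant. So the naive route fails, and we fix it as follows. Split $\xi_t^2\le \xdelta^2+\xi_t^2\indicator_{|\xi_t|>\xdelta}$. The first part goes into the $16\xdelta^2$ budget. For the second part, Bernstein (with $D=O(1/n)$, $s=O(\npm_0/n)$) gives that $\E_{t-1}[\xi_t^2\indicator_{|\xi_t|>\xdelta}]$ is at most roughly $\exp(-\Omega(\xdelta^2 n/\npm_0))\cdot O(\npm_0/n)$, obtained by integrating the tail $\int_{\xdelta}^\infty 2h\Pr[|\xi_t|>h]\,dh$. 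Since $\xdelta^2n/\npm_0=c^2/\npm_0\ge \constref{lem:taudeltaplus bounded jump}/16$, choosing $\constref{lem:taudeltaplus bounded jump}$ large makes this exponential factor small enough that the per-step bound is at most $\constref{lem:taudeltaplus}\npm_0/(4n)$. This is exactly why that hypothesis appears.

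Summing over $t$ as above then gives $\E[\delta_\tau^2]\le 16\xdelta^2+\constref{lem:taudeltaplus}\npm_0\E[\tau]/(2n)$.
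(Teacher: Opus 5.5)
Your proposal is correct and is essentially the paper's argument. Both truncate at level $\Theta(\xdelta^2)$ and use $|\E_{t-1}[\delta_t]|\le 2|\delta_{t-1}|<2\xdelta$ on $\{\tau>t-1\}$. Both then bound the per-step truncated excess by a Bernstein tail integral, which is a small fraction of $\constref{lem:taudeltaplus}\npm_0/n$ because $c^2/\npm_0\ge \constref{lem:taudeltaplus bounded jump}/16$, and sum over $t\le\tau$ to produce the $\E[\tau]$ term.

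The difference is only bookkeeping: you split off the predictable part and truncate the centred increment at $\xdelta$, whereas the paper truncates $\delta_t^2$ directly at $L=16\xdelta^2$. One small correction: your tail-integral formula omits the boundary term $\xdelta^2\Pr_{t-1}[|\xi_t|>\xdelta]$. It is harmless, being of the same form $O(\npm_0/n)\cdot u\,e^{-\Omega(u)}$ with $u=c^2/\npm_0$.
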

    \begin{proof}
    Write $L=16c^2/n$ and $\drift=\constref{lem:taudeltaplus}\npm_0/n$.
    We have
    \begin{align*}
    \E[\delta_\tau^2]
    &=\E[\indicator_{\delta_\tau^2\leq L}]+\E[\indicator_{\delta_\tau^2>L}]
    \leq L+\E[\indicator_{\delta_\tau^2>L}]
    \end{align*}
    and
    \begin{align*}
        \E[\indicator_{\delta_\tau^2>L}]
        &=\sum_{t=1}^\infty\E[\indicator_{\tau=t}\delta_t^2\indicator_{\delta_t^2>L}]
        \leq \sum_{t=1}^\infty\E[\indicator_{\tau>t-1}\delta_t^2\indicator_{\delta_t^2>L}]
        =\sum_{t=1}^\infty\E\qty[\indicator_{\tau>t-1}\E_{t-1}[\delta_t^2\indicator_{\delta_t^2>L}]].
    \end{align*}
    In the following, we show
    \begin{align}
        \indicator_{\tau>t-1}\E_{t-1}[\delta_t^2\indicator_{\delta_t^2>L}]\leq \frac{\drift}{2} \label{eq:key delta jump}.
    \end{align}
Note that by establishing this inequality, we have
\[
\E[\delta_\tau^2]
\leq L + \frac{\drift}{2}\sum_{t= 1}^\infty\E[\indicator_{\tau>t-1}]
= L + \frac{\drift}{2}\E[\tau],
\]
which proves the claim.

First, we observe that
\begin{align*}
\E_{t-1}[\delta_t^2\indicator_{\delta_t^2>L}]
&=\int_0^1\Pr_{t-1}[\delta_t^2\indicator_{\delta_t^2>L}>y]dy
=\int_0^1\Pr_{t-1}[\delta_t^2>(y\vee L)]dy
=\int_0^1\Pr_{t-1}\qty[\abs{\delta_t}>\sqrt{y\vee L}]dy
\end{align*}
holds.
For $\taudeltaplus>t-1$, we have
\begin{align*}
\abs{\E_{t-1}[\delta_t]}
&=\abs{\delta_{t-1}\qty(\alpha_{t-1}(i)+\alpha_{t-1}(j)+2(1-\beta_{t-1}))}
\leq 2\abs{\delta_{t-1}} \leq 2\xdelta
\leq \frac{\sqrt{L}}{2}
\leq \frac{\sqrt{y\vee L}}{2}.
\end{align*}
Hence, for $\tau>t-1$, we have
\begin{align*}
    \Pr_{t-1}\qty[\abs{\delta_t}>\sqrt{y\vee L}]
    &\leq \Pr_{t-1}\qty[\abs{\delta_t-\E_{t-1}[\delta_t]}+\frac{\sqrt{y\vee L}}{2}>\sqrt{y\vee L}]\\
    &\leq 2\exp\qty(-\frac{(y\vee L)/4}{\frac{2(\alpha_{t-1}(i)+\alpha_{t-1}(j))}{n}+\frac{\sqrt{y\vee L}}{3n}})\\
    &\leq 2\exp\qty(-\frac{(y\vee L)n/4}{4(1+\cmaxup)\npm_0+\sqrt{y\vee L}/3})\\
    &\leq 2\exp\qty(-\frac{(y\vee L)n}{32(1+\cmaxup)\npm_0})+2\exp\qty(-\frac{3\sqrt{y\vee L}n}{8}).
\end{align*}
 Note that the random variable $\delta_t-\E_{t-1}[\delta_t]$ conditioned on round $t-1$ satisfies $\qty(\frac{2}{n},\frac{2(\alpha_{t-1}(i)+\alpha_{t-1}(j))}{n})$-Bernstein condition.
Then, 
\begin{align*}
\int_0^1 \exp\qty(-\frac{3n}{8}\sqrt{y\vee L}) dy
&=L\exp\qty(-\frac{3n}{8}\sqrt{L})+\int_{L}^1 \exp\qty(-\frac{3n}{8}\sqrt{y}) dy\\
&\leq L\exp\qty(-\frac{3n}{8}\sqrt{L})+
2\frac{(3/8)n\sqrt{L}+1}{(3/8)^2n^2}\exp\qty(-\frac{3n}{8}\sqrt{L})\\
&\leq \exp\qty(-\Omega(\sqrt{n})).
\end{align*}
Note that $\int \exp(-a\sqrt{x})dx = -\frac{2\exp(-a\sqrt{x})\qty(a\sqrt{x}+1)}{a^2}$.
Furthermore, we have
\begin{align*}
&\int_0^1\exp\qty(-\frac{(y\vee L)n}{32(1+\cmaxup)\npm_0})dy\\
&=L\exp\qty(-\frac{Ln}{32(1+\cmaxup)\npm_0})+\int_L^1\exp\qty(-\frac{yn}{32(1+\cmaxup)\npm_0})dy\\
&\leq L\exp\qty(-\frac{Ln}{32(1+\cmaxup)\npm_0})
+\frac{32(1+\cmaxup)\npm_0}{n}\exp\qty(-\frac{Ln}{32(1+\cmaxup)\npm_0})\\
&\leq 2L\exp\qty(-\frac{Ln}{32(1+\cmaxup)\npm_0}).
\end{align*}
Note that $\int \exp(-ax)dx = -\frac{\exp(-ax)}{a}$.
The last inequality follows from $\frac{\npm_0}{n}\leq \frac{16c^2}{n\constref{lem:taudeltaplus bounded jump}}$ holds for sufficiently large $\constref{lem:taudeltaplus bounded jump}$.

Consequently, for $\tau>t-1$, 
\begin{align*}
\E_{t-1}[\delta_t^2\indicator_{\delta_t^2>L}]
&\leq \drift \qty( \frac{n}{\constref{lem:taudeltaplus}\npm_0} \exp\qty(-\Omega(\sqrt{n})) 
+ \frac{64c^2}{\constref{lem:taudeltaplus}\npm_0}\exp\qty(-\frac{c^2}{2(1+\cmaxup)\npm_0}) )\\
&\leq \drift/2
\end{align*}
holds since $\frac{c^2}{\npm_0}\geq \frac{\constref{lem:taudeltaplus bounded jump}}{16}$ holds for sufficiently large $\constref{lem:taudeltaplus bounded jump}$.
\end{proof}

\begin{lemma}
    \label{lem:delta CLT}
    Let $c_1, c_2\in (0,1)$ be arbitrary constants and let $\xdelta=c_1/\sqrt{n}$.
    Suppose that $\psi_0\leq \xpsi$, $\beta_0\geq 1/2-\xbeta$, $\min\{\alpha_0(i),\alpha_0(j)\}\geq (1-\cweak)\npm_0$ and $\npm_0\geq c_2$ hold.
    Then,
    \begin{align*}
        \Pr[\taudeltaplus>1]\leq 1-c_*
    \end{align*}
    for some $c_*\in (0,1)$ depending only on $c_1$, $c_2$, and $\constref{lem:taudeltaplus}=\frac{(1-2\cweak)^2(1-\ctildemaxdown)}{96(1-\cweak)}$.
\end{lemma}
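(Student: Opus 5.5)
Since $\taudeltaplus=0$ if $\abs{\delta_0}\ge \xdelta$, we may assume $\abs{\delta_0}<\xdelta=c_1/\sqrt n$. The event $\taudeltaplus\le 1$ then contains $\{\abs{\delta_1}\ge \xdelta\}$. The plan is to show that the one-round increment $\delta_1-\E_0[\delta_1]$ is a sum of $n$ independent bounded terms with variance $\Theta(1/n)$. An anti-concentration bound then makes $\abs{\delta_1}\ge c_1/\sqrt n$ hold with constant probability.

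\textbf{Step 1: lower-bound the variance.} By \cref{lem:basic inequalities for delta_epsilon} (\cref{item:variance of delta}), $\Var_0[\delta_1]\ge \frac{(1-\beta_0)^2}{n}(\alpha_0(i)+\alpha_0(j))$. The hypotheses give $\alpha_0(i)+\alpha_0(j)\ge 2(1-\cweak)\npm_0\ge 2(1-\cweak)c_2$. Arguing as in the proof of \cref{lem:taudeltaplus}, from $\psi_0\le\xpsi$, $\beta_0\ge 1/2-\xbeta$ and \cref{eq:weak npm upper bound} we get $1-\beta_0\ge \frac{1-2\cweak}{8(1-\cweak)}$. Hence $\Var_0[\delta_1]\ge \sigma^2/n$ with $\sigma^2$ a positive constant depending only on $c_2$ and $\constref{lem:taudeltaplus}$.

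\textbf{Step 2: use the independent-sum structure.} Write $n\delta_1=\sum_{u\in V}Z_u$, where $Z_u=\indicator_{\opn_1(u)=i}-\indicator_{\opn_1(u)=j}\in\{-1,0,1\}$. These are independent given $\calF_0$, since each vertex samples its neighbour independently. So $S:=\sqrt n(\delta_1-\E_0[\delta_1])$ is a normalized sum of independent terms bounded by $1/\sqrt n$, with variance $\ge\sigma^2$ (and $\le 2$, by \cref{lem:basic inequalities for delta_epsilon} (\cref{item:bernstein condition for delta_epsilon})).

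\textbf{Step 3: anti-concentration.} Berry--Esseen (third moments are at most $1/\sqrt n$ times the variance) shows that $S$ is within $O(1/(\sigma^3\sqrt n))$ of a centered Gaussian with variance $\ge \sigma^2$ in Kolmogorov distance. Let $m=\sqrt n\,\E_0[\delta_1]$, which has $\abs{m}\le 2c_1$. For large $n$, either $S\ge c_1-m$ or $S\le -c_1-m$ has probability at least a constant, because one of the two thresholds has absolute value at most $c_1$. For example, if $m\ge 0$ then $\Pr[S\ge c_1]\ge \Pr[N(0,\sigma^2)\ge c_1]-o(1)$, and symmetrically if $m<0$. This gives $\Pr[\abs{\delta_1}\ge\xdelta]\ge c_*$, with $c_*$ depending only on $c_1,\sigma$.

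To stay within the paper's tools instead of Berry--Esseen, one can use Paley--Zygmund on $S^2$ instead. The Bernstein condition gives $\E[S^4]=O(1)$, so $\Pr[S^2\ge \sigma^2/2]\ge \Omega(\sigma^4)$. However, this only yields $\abs{S}\ge\sigma/\sqrt2$, which need not exceed $c_1+\abs m$.

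\textbf{Main obstacle.} The hard step is exactly this mismatch between the attainable deviation $\sigma$ and the required threshold $c_1$ plus the drift offset $m$. Since $c_1$ is arbitrary and possibly larger than $\sigma$, the Gaussian-tail route via Berry--Esseen is what I would commit to. Its probability $\Pr[N(0,1)\ge c_1/\sigma]$ is a positive constant depending only on $c_1,c_2$, and absorbing the $O(n^{-1/2})$ error is routine for large $n$.
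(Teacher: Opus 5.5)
Your proposal is correct and follows essentially the paper's route: the independent-sum structure of $n\delta_1$, the variance lower bound $\Var_0[\delta_1]=\Omega(c_2/n)$ from \cref{lem:basic inequalities for delta_epsilon}, and Gaussian anti-concentration at scale $c_1/\sqrt n$. You differ only in two minor choices. You use Berry--Esseen, which makes the paper's bare appeal to the CLT uniform over configurations. You also take a one-sided tail chosen by the sign of $\sqrt n\,\E_0[\delta_1]$, whereas the paper reduces $\abs{\delta_1}<\xdelta$ to $\abs{\delta_1-\E[\delta_1]}<3\xdelta$ and bounds that two-sided window.
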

\begin{proof}
Since $n\delta_1=\sum_{v\in V}(\indicator_{\opn_1(v)=i}-\indicator_{\opn_1(v)=j})$ is the sum of $n$ independent random variables, 
$\lim_{n\to \infty}\Pr\qty[\frac{n\delta_1-\E[n\delta_1]}{\sqrt{\Var[n\delta_1]}}\leq x]=\Phi(x)$ holds from the central limit theorem.
Here, $\Phi(x)=\int_{-\infty}^x \frac{1}{\sqrt{2\pi}}\exp\qty(-\frac{y^2}{2})dy$ is the cumulative distribution function of the standard normal distribution.
Noting
that
\begin{align*}
    \abs{\delta_1}
    &\geq \abs{\delta_1-\E[\delta_1]}-\abs{\E[\delta_1]}
    =\abs{\delta_1-\E[\delta_1]}-\abs{\delta_0(\alpha_1(i)+\alpha_1(j)+2(1-\beta_0))}
    \geq \abs{\delta_1-\E[\delta_1]}-2\xdelta
\end{align*}
and
\begin{align*}
\Var[\delta_1]\geq \frac{\constref{lem:taudeltaplus}\npm_0}{n}\geq \frac{\constref{lem:taudeltaplus}c_2}{n}
\end{align*}
hold, we obtain
\begin{align*}
\Pr[\taudeltaplus>1]
&=\Pr\qty[\taudeltaplus>1 \text{ and } \abs{\delta_1}<\xdelta]\\
&\leq \Pr\qty[\abs{\delta_1-\E[\delta_1]}<3\xdelta\text{ and } \taudeltaplus>1 ]  \\
&\leq \Pr\qty[\abs{\frac{\delta_1-\E[\delta_1]}{\sqrt{\Var[\delta_1]}}}<\frac{3\xdelta}{\sqrt{\Var[\delta_1]}}]\\
&\leq \Pr\qty[\abs{\frac{n\delta_1-\E[n\delta_1]}{\sqrt{\Var[n\delta_1]}}}<3\sqrt{\frac{c_1^2}{\constref{lem:taudeltaplus}c_2}}]\\
&\leq 1-2\Psi\qty(-3\sqrt{\frac{c_1^2}{\constref{lem:taudeltaplus}c_2}})-o(1)\\
&\leq 1-c_*.
\end{align*}
\end{proof}

\begin{proof}[Proof of \cref{lem:delta multiplicative and additive drift} (\cref{item:taudeltaplus const prob})]
Note that $(\cdeltaplus)^2=\constrefs{lem:hitting time for tnpm and npm}{item:taualphamaxup}\frac{\constref{lem:taudeltaplus}}{64}$ from definition.
First, consider the case where $\npm_0\geq 16(\cdeltaplus)^2/\constref{lem:taudeltaplus bounded jump}$.
For convenience, let $\tau=\min\{\taudeltaplus,\taubetaminus,\taupsiplus,\tautildemaxdown,\tauiweak,\taujweak,\taumaxup\}$.
In this case, Combining \cref{lem:taudeltaplus,lem:taudeltaplus bounded jump}, we have
\begin{align*}
    \E[\tau]\leq \frac{n}{\constref{lem:taudeltaplus}\npm_0}\qty(\frac{16(\cdeltaplus)^2}{n}+\frac{\constref{lem:taudeltaplus}\npm_0}{2n}\E[\tau])
    =\frac{16(\cdeltaplus)^2}{\constref{lem:taudeltaplus}\npm_0}+\frac{\E[\tau]}{2},
\end{align*}
i.e, $\E[\tau]\leq \frac{32(\cdeltaplus)^2}{\constref{lem:taudeltaplus}}\cdot \frac{1}{\npm_0}=\frac{\constrefs{lem:hitting time for tnpm and npm}{item:taualphamaxup}}{2\npm_0}$.
Thus, from the Markov inequality, we obtain 
\begin{align}
    \Pr[\tau>\frac{\constrefs{lem:hitting time for tnpm and npm}{item:taualphamaxup}}{2\npm_0}]\leq \frac{1}{2}.
    \label{eq:taudeltaplus const prob 1}
\end{align}
We obtain
\begin{align*}
    &\Pr\qty[\min\{\taudeltaplus,\tauiweak,\taujweak,\tautildemaxdown,\taupsiplus,\taubetaminus\}>T]\\
    &\leq \Pr\qty[\tau>T]+\Pr\qty[\taubetaminus>T \text{ and }\taumaxup\leq T]\\
    &\leq \frac{1}{2}+n\exp(-\Omega(n(\npm_0)^2)) &&(\text{by \cref{eq:taudeltaplus const prob 1,lem:hitting time for tnpm and npm} (\cref{item:taualphamaxup}}))\\
    &\leq 1-c_*.
\end{align*}

Second, consider the case where $\npm_0\leq 16(\cdeltaplus)^2/\constref{lem:taudeltaplus bounded jump}$.
In this case, we have
\begin{align*}
\Pr[\taudeltaplus>1]\leq 1-c_*
\end{align*}
from \cref{lem:delta CLT}.
Taking $\constref{lem:taudeltaplus bounded jump}\geq 16(\cdeltaplus)^2/\constrefs{lem:hitting time for tnpm and npm}{item:taualphamaxup}$, 
$\frac{\constrefs{lem:hitting time for tnpm and npm}{item:taualphamaxup}}{\npm_0}\geq \constrefs{lem:hitting time for tnpm and npm}{item:taualphamaxup}\frac{\constref{lem:taudeltaplus bounded jump}}{16(\cdeltaplus)^2}\geq 1$ and we obtain the claim.

\end{proof}

\paragraph{Either of two non-weak opinions becomes weak: Proof of \cref{lem:pair of non-weak opinions}.} 
Now, we are ready to prove \cref{lem:pair of non-weak opinions} (\cref{item:pair of non-weak opinions becomes weak,item:delta multipricative with initial bias}).
%
To this end, we invoke the known drift analysis result of \cite{Doerr11}.
Specifically, we use \cref{lem:nazo lemma}, which is a modified version from \cite{SS25_sync}.

\begin{proof}[Proof of \cref{lem:pair of non-weak opinions} (\cref{item:pair of non-weak opinions becomes weak})]
We apply \cref{lem:nazo lemma} for $Z_t=\opn_t$, $\varphi(Z_t)=\sqrt{n}\cdot \abs{\delta_t(i,j)}$, 
\[\tau=\min\qty{\tauiweak,\taujweak,\tautildemaxdown,\taubetaminus,\taupsiplus},\] $\cphiup=\cdeltaup$, $x_0 = \cdeltaplus$, and $x^*=\sqrt{C\log n}$.
Define 
\begin{align*}
  \tau^\uparrow=\begin{cases}
  \taudeltaup(i,j) & \text{if } \delta_0(i,j)\geq 0,\\
  \taudeltaup(j,i) & \text{if } \delta_0(j,i)> 0 \;(\text{i.e., } \delta_0(i,j)< 0)
  \end{cases},
\end{align*}
where $\taudeltaup(i,j)=\taudeltaup=\inf\{t\geq 0: \delta_t(i,j)\geq (1+\cdeltaup)\delta_0(i,j)\}$.
Then, from these settings, we have
\begin{align*}
    &\tauphiplus(x_0) = \inf\{t\geq 0: \sqrt{n}\abs{\delta_t}\geq \cdeltaplus\}=\taudeltaplus,\\
    &\tauphiup = \inf\{t\geq 0: \sqrt{n}\abs{\delta_t}\geq (1+\cdeltaup)\sqrt{n}\abs{\delta_0}\}\leq \tau^\uparrow.
\end{align*}
From \cref{lem:delta multiplicative and additive drift} (\cref{item:taudeltaplus const prob}), we have
\begin{align*}
  \Pr\qty[\min\qty{\tauphiplus(x_0),\tau} > \frac{\constrefs{lem:hitting time for tnpm and npm}{item:taualphamaxup}}{\npm_0}] \leq 1-c_*
\end{align*}
for some constant $c_*>0$, i.e., the first condition of \cref{lem:nazo lemma} holds for $C_1=c_*$.
Note that if $\beta_0 < \frac{1}{2}-\xbeta$ or $\psi_0 > \xpsi$ or $\npm_0\leq O(\sqrt{\log n/n})$, then $\tau=0$.
Next, from \cref{lem:delta multiplicative and additive drift} (\cref{item:taudeltaup}),
\begin{align*}
  \Pr\qty[\min\{\tauphiup,\tau\}> \frac{\constrefs{lem:hitting time for tnpm and npm}{item:taualphamaxup}}{\npm_0}]
  &\leq \Pr\qty[\min\{\tau^\uparrow,\tau\}> \frac{\constrefs{lem:hitting time for tnpm and npm}{item:taualphamaxup}}{\npm_0}]\\
  &\leq \exp\qty(-\Omega\qty(n\delta_0(i,j)^2))+n \exp\qty(-\Omega\qty(n(\npm_0)^2))\\
  &\leq \exp\qty(-\Omega\qty(\varphi(Z_0)^2))
\end{align*}
holds for $\delta_0(i,j)\leq C\sqrt{\log n/n}$, i.e., the second condition of \cref{lem:nazo lemma} holds for some positive constant $C_2$.
Note that $n\exp\qty(-\Omega\qty(n(\npm_0)^2))\leq n^{-\omega(1)}\leq \exp\qty(-\Omega\qty(n\delta_0(i,j)^2))$ from the assumption on $\npm_0=\omega(\sqrt{\log n/n})$.
Note that if $\beta_0 < \frac{1}{2}-\xbeta$ or $\psi_0 > \xpsi$ or $\npm_0\leq O(\sqrt{\log n/n})$, then $\tau=0$.

Write $\tauweak=\min\{\tauiweak,\taujweak\}$ and $\tau_{ini}=\min\{\tautildemaxdown,\taubetaminus,\taupsiplus\}$.
From \cref{lem:nazo lemma,lem:tauinitial lemma} with $\varepsilon=n^{-30}$, for some $T'=O(\log n/\npm_0)$, 
\begin{align*}
  &\Pr\qty[\min\{\taudeltaplus(\sqrt{C\log n/n}),\tauweak\}> T' \text{ or } \tau_{ini}\leq  T']\\
  &\leq \Pr\qty[\qty{\min\{\tauphiplus(x_*),\tauweak\}> T' \text{ or } \tau_{ini}\leq  T'}\text{ and } \tau_{ini}>T']+\Pr\qty[\tau_{ini}\leq T']\\
  &\leq \Pr\qty[\min\{\tauphiplus(x_*),\tauweak,\tau_{ini}\}> T']+\Pr\qty[\tau_{ini}\leq T']\\
  &\leq n^{-20}.
\end{align*}
\end{proof}

\begin{proof}[Proof of \cref{lem:pair of non-weak opinions} (\cref{item:delta multipricative with initial bias})]
Let $\tau^{\uparrow (s)}=\inf\{t\geq 0: \delta_t\geq (1+c)^s \delta_0\}$.
Let $\tau^*=\min\{\tautildemaxdown,\taubetaminus,\taupsiplus\}$.
From definition, for some $\ell=\Theta(\log n)$, $\taujweak\leq \tau^{\uparrow (\ell)}$ holds.
Let $T'=\frac{3\constrefs{lem:hitting time for tnpm and npm}{item:taualphamaxup}}{(1-\ctildemaxdown)\npm_0}$.
Applying \cref{lem:Iterative Drift theorem} (\cref{item:iterative updrift}), we have
\begin{align*}
&\Pr\qty[\min\{\taujweak,\tau^*\}>\ell T']\\
&\leq \Pr\qty[\min\{\tau^{\uparrow (\ell)},\tau^*\}>\ell T']\\
&\leq \sum_{s=1}^{\ell}\E\qty[\indicator_{\tau^*>\tau^{\uparrow (s-1)}} \Pr_{\tau^{\uparrow (s-1)}}\qty[\min\{\tau^{\uparrow (s)},\tau^*\}> \tau^{\uparrow (s-1)}+\frac{3\constrefs{lem:hitting time for tnpm and npm}{item:taualphamaxup}}{(1-\ctildemaxdown)\npm_{0}}]]\\
&\leq \sum_{s=1}^{\ell}\E\qty[\indicator_{\tau^*>\tau^{\uparrow (s-1)}} \Pr_{\tau^{\uparrow (s-1)}}\qty[\min\{\tau^{\uparrow (s)},\tau^*\}> \tau^{\uparrow (s-1)}+\frac{\constrefs{lem:hitting time for tnpm and npm}{item:taualphamaxup}}{\npm_{\tau^{\uparrow (s-1)}}}]]\\
&\leq \ell \qty(\exp(-\Omega(n\delta_0^2))+n\exp\qty(-\Omega\qty(n (\npm_0)^2))).
\end{align*}
Note that $\npm_{\tau^{\uparrow (s-1)}}\geq (1-\ctildemaxdown)\tnpm_0/3\geq (1-\ctildemaxdown)\npm_0/3\geq \omega(\sqrt{\log n/n})$ holds.
\end{proof}

\subsection{Emergence of Unique Strong Opinion} \label{sec:unique strong opinion}
In this section, we show that once an opinion becomes weak, it cannot become \emph{strong}—a key concept defined in \cref{def:strong}—for a sufficiently long time.
Recall $\eta_t(j)=\delta_t^{(\ceta)}(I_t,j)=\npm_t-(1+\ceta)\alpha_t(j)$ and stopping times for $\eta_t(j)$ as defined in \cref{def:key-quantities,def:stopping_times}.
We define the strong opinion and the stopping time for the emergence and persistence of the unique strong opinion, $\tau_{us}^+$ and $\tau_{us}^-$:
\begin{definition}[Strong Opinion]\label{def:strong}
  We call that an opinion $j\in [k]$ is \emph{strong} at time $t$ if $\alpha_t(j)\ge (1-\cstrong)\npm_t$.
  By default, we set $\cstrong=0.05<\cweak=0.1$. 

  Set $\ceta\defeq \frac{\cstrong}{1-\cstrong}$.
  For a positive parameter $x$, define
  \begin{align*}
    \tau_{us}^-(x)\defeq \inf\qty{t\geq 0: \min_{j\neq I_t}\eta_t(j)< x} \quad \text{and} \quad
    \tau_{us}^+(x)\defeq \inf\qty{t\geq 0: \min_{j\neq I_t}\eta_t(j)\geq  x}.
  \end{align*}
\end{definition}
The fundamental relationships between the notions of weak, strong, and the value of $\eta_t(j)$ are as follows:
\begin{enumerate}
  \item \label{item:weak opinion eta} For any weak opinion $j$, we have $\eta_t(j)\geq \frac{\cweak-\cstrong}{1-\cstrong}\npm_t$. 
  \item \label{item:not strong} The opinion $j$ is not strong if $\eta_t(j)> 0$.
\end{enumerate}
In summary, if an opinion $j$ is weak, then $\eta_t(j) = \Omega(\npm_t) > 0$ (see \cref{item:weak opinion eta}), and if $\eta_t(j) > 0$, then $j$ is not a strong opinion (\cref{item:not strong}). Therefore, if $\min_{j \neq I_t} \eta_t(j) \geq x$ (with $x > 0$), it means that all opinions other than $I_t$ are not strong; that is, $I_t$ is the unique strong opinion at time $t$.
Note that \cref{item:weak opinion eta} follows from $\eta_t(j) = \npm_t - \frac{1}{1-\cstrong}\alpha_t(j) \geq \npm_t - \frac{1-\cweak}{1-\cstrong}\npm_t$, while \cref{item:not strong} follows from the inequality $\frac{\alpha_t(j)}{\npm_t} < \frac{1}{1+\ceta} = 1-\cstrong$.


We show in this section that, with high probability, 
(i) there exists exactly one strong opinion within $O(\log n/\npm_0)$ rounds (\cref{item:tauusplus} of \cref{lem:unique strong opinion}), and
(ii) any non-strong opinions keep non-strong for a sufficiently long time (\cref{item:tauusminus} of \cref{lem:unique strong opinion}).

\begin{lemma}[Unique Strong Opinion Lemma]\label{lem:unique strong opinion}
  Suppose that $\psi_0\leq \xpsi$, $\beta_0\geq 1/2-\xbeta$, and $\npm_0=\omega(\log n/\sqrt{n})$.
  Then, we have the following:
  \begin{enumerate}
    \item \label{item:tauusplus} 
     For some $T=O(\log n/\npm_0)$, we have 
    \[\Pr\qty[\tau_{us}^+(x_{\eta})> T \text{ or } \min\{\tautildemaxdown,\taubetaminus,\taupsiplus\}\leq T]\leq O(n^{-10}).\]
    \item \label{item:tauusminus} 
    Suppose $\eta_0(j)\geq x_{\eta}$ for all $j\neq I_0$.
    Let $\constrefs{lem:hitting time for tnpm and npm}{item:tildemaxdown}= \ctildemaxdown/36$ be a positive constant that is defined in \cref{lem:hitting time for tnpm and npm} (\cref{item:tildemaxdown}).
    Then, for any $T\leq \constrefs{lem:hitting time for tnpm and npm}{item:tildemaxdown}n$, we have
    \[\Pr\qty[\min\{\tau_{us}^-(x_{\eta}/2),\tautildemaxdown,\taubetaminus,\taupsiplus\}\leq T ]\leq n^{-10}.\]
  \end{enumerate}
\end{lemma}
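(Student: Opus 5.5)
The plan is to prove both items by analysing, for each opinion $j$, the process $\eta_t(j)=\npm_t-(1+\ceta)\alpha_t(j)$. Item (1) is reduced to \cref{lem:pair of non-weak opinions}, and item (2) to a drift-plus-union-bound argument in the style of \cref{lem:hitting time for tnpm and npm}. Throughout, failures of $\beta_t,\psi_t,\tnpm_t$ are handled by \cref{lem:tauinitial lemma}. On $\{\min\{\tautildemaxdown,\taubetaminus,\taupsiplus\}>t\}$ this gives $\npm_t\ge \beta_t\tnpm_t=\Omega(\npm_0)=\omega(\log n/\sqrt n)$.

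\textbf{Item (1).} We first show that within $T=O(\log n/\npm_0)$ rounds every opinion other than the leader becomes weak. Fix $i^*=I_0$. For each non-weak $j\ne i^*$, apply \cref{lem:pair of non-weak opinions} (\cref{item:pair of non-weak opinions becomes weak}) to the pair $(i^*,j)$. Within $O(\log n/\npm_0)$ rounds, either one of them becomes weak, or $|\delta_t|\ge\sqrt{C\log n/n}$. In the latter case, restart at that time (strong Markov property) and apply \cref{item:delta multipricative with initial bias} of the same lemma to the pair ordered by sign. This makes the smaller one weak within another $O(\log n/\npm_0)$ rounds. In particular, among any two non-weak opinions one becomes weak.

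We then need that weak opinions do not recover before the target time. By \cref{item:weak opinion eta}, a weak $j$ has $\eta_t(j)\ge\frac{\cweak-\cstrong}{1-\cstrong}\npm_t=\Omega(\npm_0)\gg \xeta$. I would show that $\eta_t(j)$ stays above $\xeta$ using the argument planned for item (2) below, started from $\Omega(\npm_0)$ instead of $\xeta$. Only opinions with $\alpha_0(j)\ge x$ for small $x$ matter, and there are at most $O(1/\npm_0)\le n$ non-weak opinions, so a union bound over pairs costs a factor of $n^2$ against $n^{-10}$ (adjusting constants). This yields a single non-weak opinion, and all others have $\eta\ge\Omega(\npm_t)\ge\xeta$. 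Hence $\tau_{us}^+(\xeta)\le T$.

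\textbf{Item (2).} Fix $j\neq I_0$, using $\ceta=\cstrong/(1-\cstrong)$. Since $\npm_t\ge\alpha_t(I_0)$, we have $\eta_t(j)\ge\delta^{(\ceta)}_t(I_0,j)$, so it suffices to control $X_t=\delta^{(\ceta)}_t(I_0,j)$. This works as long as $I_0$ remains the leader. Otherwise, the opinion $I_t$ changes only to some opinion whose $\eta$ was already $\ge \xeta/2$, which is impossible while $\eta\ge\xeta/2$ for all non-leaders, because the new leader $I'$ would have $\eta_t(I')\le -\ceta\npm_t<0$.

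By \cref{lem:basic inequalities for delta_epsilon}, $\E_{t-1}[X_t]=X_{t-1}(1+\alpha_{t-1}(I_0)+\alpha_{t-1}(j)+1-2\beta_{t-1})+\ceta\alpha_{t-1}(I_0)\alpha_{t-1}(j)$. The bracket correction is bounded below exactly as in \cref{eq:drift of delta plus key} using $\psi\le\xpsi$, and the added term is nonnegative. So whenever $X_{t-1}>0$, the drift is at least $-o(\xeta)/n$, essentially nonnegative. The increments satisfy an $(O(1/n),O(\npm_0/n))$-Bernstein condition. Applying \cref{lem:Useful drift lemma} (\cref{item:negative drift useful}) with $X_0\ge\xeta$, $I^-=\xeta/2$ and $T\le\constrefs{lem:hitting time for tnpm and npm}{item:tildemaxdown}n$, the failure probability is
\[
\exp\bigl(-\Omega(\xeta^2/(\npm_0 T/n+\xeta/n))\bigr).
\]
Then we union bound over $j$ and combine with \cref{lem:tauinitial lemma} via \cref{eq:min of two stopping times}.

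\textbf{Main obstacle.} The hard part is this last estimate. With $T=\Theta(n)$, the variance term is $\npm_0 T/n=\Theta(\npm_0)$, while $\xeta^2=o((\log n)^2/n)$. So the naive bound fails unless $\npm_0$ is tiny. To fix this I would exploit the positive multiplicative drift: when $X_{t-1}\ge\xeta/2$, the drift is at least $\Omega(\npm_0)X_{t-1}>0$. I would therefore use the iterative restart scheme of \cref{lem:Iterative Drift theorem} on windows of length $\Theta(1/\npm_0)$. Over one window the variance is only $O(1/n)$, and the drift $\Omega(\xeta)$ per window absorbs the fluctuations, giving per-window failure probability $\exp(-\Omega(n\xeta^2))=n^{-\omega(1)}$, since $\xeta=\omega(\sqrt{\log n/n})$. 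A union bound over $O(n\npm_0)$ windows and over $j$ then gives $n^{-10}$.
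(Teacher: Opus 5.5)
Your item (1) follows the paper's route: pairwise resolution via \cref{lem:pair of non-weak opinions}, then persistence of weakness, then a union bound over pairs. Your plan for item (2) restarts a drift argument that exploits the positive multiplicative drift of $\eta$, which is in the spirit of the paper's \cref{lem:tauetaminus}. The gap is in the drift bounds your window scheme rests on.

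For $X_t=\delta^{(\ceta)}_t(I_0,j)$ with $I_0=I_{t-1}$, the multiplicative correction is $\npm_{t-1}+\alpha_{t-1}(j)-(\gamma_{t-1}+\psi_{t-1})/\beta_{t-1}$. The bound \cref{eq:drift of delta plus key} is for two non-weak opinions and does not apply to an arbitrary $j$. In general you only get a correction $\ge\alpha_{t-1}(j)-\psi_{t-1}/\beta_{t-1}\ge-3\xpsi$, i.e.\ drift $\ge-3\xpsi X_{t-1}$ (as in \cref{lem:one step eta}), not $-o(\xeta)/n$.

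Likewise, ``drift $\ge\Omega(\npm_0)X_{t-1}$ whenever $X_{t-1}\ge\xeta/2$'' is false. Take $\beta_{t-1}=1$ and $\alpha_{t-1}(I_0)=1-g$ with $g=\xpsi/2$ held by small opinions. Then $\psi_{t-1}=2g-g^2-\sum_{j\ne I_0}\alpha_{t-1}(j)^2\le\xpsi$ and all $\eta_{t-1}(j)\ge\xeta$. For $\alpha_{t-1}(j)=1/n$ one gets $\E_{t-1}[X_t]-X_{t-1}=-g(1-g)+(1+\ceta)\alpha_{t-1}(j)(1-\alpha_{t-1}(j))<0$. The drift $\Omega(\npm)X$ is guaranteed only when $\alpha_{t-1}(j)=\Omega(\npm_{t-1})$. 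In particular it holds near the threshold, where $\eta(j)=o(\npm)$ and $\alpha(j)\approx\npm/(1+\ceta)$; this is \cref{eq:drift of eta}. So your per-window step (start $\ge\xeta$, end $\ge\xeta$ by positive drift) is unjustified for windows that start far above the threshold, and the chaining breaks there.

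The missing idea is to localize the drift argument to a thin band just above the threshold. The paper fixes $[y-I,y+2I]$ with $I=\omega(\xbeta)$, $y=\omega(I)$ and $y=o(\log n/\sqrt n)$, and argues in three steps:
\begin{enumerate}
\item Inside the band the drift of $\eta_t(j)$ is $\Omega((y-I)\npm_0)$. So from $[y,y+I]$ the process reaches $y+2I$ before $y-I$ with probability $1-n^{-\omega(1)}$ (\cref{lem:hitting time for eta}).
\item \cref{lem:one step eta} rules out a one-round drop of size $I$. This is the only place the crude bound $-3\xpsi\eta_{t-1}(j)$ is used.
\item \cref{lem:Iterative Drift theorem} (\cref{item:no increase under negative drift}) reduces any downward crossing to a union bound over times $s\le T$ at which $\eta_s(j)$ lies in the band. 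Hence no drift bound far above the band is ever needed.
\end{enumerate}

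Your windows could alternatively be rescued by a case split. When $X_s\le\npm_s/2$ the drift is positive; otherwise the loss per window is only $O(\xpsi X_s/\npm_s)=o(X_s)$. But window lengths must then scale with the current $\npm_s$. Your variance proxy $O(\npm_0/n)$ presumes $\npm_t=O(\npm_0)$, and \cref{lem:hitting time for tnpm and npm} (\cref{item:taualphamaxup}) guarantees this only for $O(1/\npm_0)$ rounds, not $\Theta(n)$.

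Finally, in item (1) the persistence step cannot reuse a fixed comparison opinion $I_0$. When $j$ becomes weak, other non-weak opinions are still competing, so the leader can change, and $I_0$ itself can become weak. The paper instead compares with the current leader one step at a time, using $\eta_t(j)\ge\delta^{(\ceta)}_t(I_{t-1},j)$ and $\eta_{t-1}(j)=\delta^{(\ceta)}_{t-1}(I_{t-1},j)$. This comparison is insensitive to leader changes.
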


The key lemma to prove \cref{lem:unique strong opinion} is the following lemma, which shows that once $\eta_t(j)$ becomes large, it stays large for a sufficiently long time—that is, once an opinion becomes weak, it cannot become strong for a long time.
\begin{lemma}[Bounded decrease of $\eta_t(j)$]
  \label{lem:tauetaminus}
  Suppose that $\psi_0\leq \xpsi$, $\beta_0\geq 1/2-\xbeta$, $\npm_0=\omega(\log n/\sqrt{n})$, and $\eta_0(j)\geq x_{\eta}$.
  Then, for any $T\geq 0$, 
  \begin{align*}
    \Pr\qty[\tauetaminus(x_{\eta}/2) \leq T \text{ and } \min\{\tautildemaxdown,\taubetaminus,\taupsiplus\}>T]
    \leq Tn^{-\omega(1)}.
\end{align*}
\end{lemma}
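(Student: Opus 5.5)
The plan is a one-step analysis combined with a union bound over rounds, in the spirit of \cref{lem:taubeta} (\cref{item:taubetaminus}). The reason is that in the relevant regime $\eta_t(j)$ has nonnegative drift up to a tiny correction, and one-step fluctuations are of order $\sqrt{\npm/n}$, which is much smaller than $\xeta$. The difficulty is that $\eta_t(j)=\npm_t-(1+\ceta)\alpha_t(j)$ involves the maximum $\npm_t$, whose identity $I_t$ may change. I would handle this by lower bounding $\npm_t\ge \alpha_t(I_{t-1})$. Conditioned on round $t-1$, this gives
\[
\eta_t(j)\ge \delta^{(\ceta)}_t(I_{t-1},j),
\]
which is a linear quantity covered by \cref{lem:basic inequalities for delta_epsilon}. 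It therefore suffices to show, for each $t$ with $\min\{\tautildemaxdown,\taubetaminus,\taupsiplus\}>t-1$ and $\eta_{t-1}(j)\ge \xeta/2$, that $\Pr_{t-1}[\delta^{(\ceta)}_t(I_{t-1},j)<\xeta/2]\le n^{-\omega(1)}$. The overall bound then follows by summing over $t\le T$, exactly as in the proof of \cref{lem:taubeta} (\cref{item:taubetaminus}). Strictly, one also needs the previous value to be at least $\xeta/2$, which holds since $t-1<\tauetaminus$.

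I would split into two cases according to the size of $\eta_{t-1}(j)$.

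\textbf{Expectation.} By \cref{lem:basic inequalities for delta_epsilon} (\cref{item:expectation of delta_epsilon}) with $i=I_{t-1}$ and $\epsilon=\ceta$,
\[
\E_{t-1}[\delta^{(\ceta)}_t]=\eta_{t-1}(j)\bigl(1+\npm_{t-1}+\alpha_{t-1}(j)-(2\beta_{t-1}-1)\bigr)+\ceta\npm_{t-1}\alpha_{t-1}(j).
\]
Using $\psi_{t-1}\le\xpsi$ and $\gamma_{t-1}\le\beta_{t-1}\npm_{t-1}$, as in the overview, we get $2\beta_{t-1}-1\le \npm_{t-1}+O(\xpsi)$. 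Hence the multiplicative factor is at least $1-O(\xpsi)$, and the additive term is nonnegative. So $\E_{t-1}[\eta\text{-proxy}]\ge \eta_{t-1}(j)(1-O(\xpsi))$.

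\textbf{Case $\eta_{t-1}(j)\ge \xeta$.} Here $\E_{t-1}\ge \xeta(1-o(1))$, so a deviation of at least $\xeta/3$ is needed to fall below $\xeta/2$.

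\textbf{Case $\xeta/2\le\eta_{t-1}(j)<\xeta$.} Here the opinion $j$ is not weak, so $\alpha_{t-1}(j)=\Theta(\npm_{t-1})$. Then $\ceta\npm_{t-1}\alpha_{t-1}(j)=\Omega(\npm_{t-1}^2)$. This dominates both the loss $O(\xpsi\,\eta_{t-1}(j))=O(\xpsi\xeta)$ and the deviation scale, because $\npm_{t-1}\ge \Omega(\npm_0)=\omega(\log n/\sqrt n)$ by $\tautildemaxdown>t-1$ and $\beta_{t-1}\ge 1/3$. In fact this additive term is at least $\xeta/2$ (indeed much larger, since $\npm^2\gg \log^2 n/n\gg \xeta^2$, though that alone is not enough). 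So the expectation stays above about $\xeta/2+\Omega(\npm^2)$, and a downward deviation of $\Omega(\npm_{t-1}^2)$ is required. If the additive margin were only $\npm^2\ge \omega(\log^2n/n)$, the tail estimate below gives $\exp(-\Omega(n\npm^4/\npm))=\exp(-\Omega(n\npm^3))$, which is $n^{-\omega(1)}$ because $n\npm^3\ge \log^3 n/\sqrt n$. That does not suffice directly when $\npm$ is small. I would therefore instead use the cruder bound $\Omega(\xpsi\xeta)$ loss versus the needed margin $\xeta/2-\xeta/2\cdot(1-O(\xpsi))$, and compare against the deviation $\xeta\cdot\Theta(1)$. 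This is the main obstacle, and I would resolve it by choosing the threshold comparison with slack: show that the proxy stays at least $\eta_{t-1}(j)(1-O(\xpsi))-h$ with $h=\xeta/n^{\Omega(1)}$ is not enough either, so the natural fix is to instead prove the downward crossing via \cref{lem:Useful drift lemma} (\cref{item:negative drift useful}). Apply it to $X_t$, the proxy started at $\xeta$, with $\drift=-O(\xpsi\xeta)$, so that descending by $\xeta/2$ takes $\Omega(1/\xpsi)$ rounds. Restart the argument every such block via \cref{lem:Iterative Drift theorem}. Between blocks the $\Omega(\npm^2)$ term pushes the process back up, since $\npm^2\ge\xpsi\xeta$.

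\textbf{Tail bound.} By \cref{lem:basic inequalities for delta_epsilon} (\cref{item:bernstein condition for delta_epsilon}), the deviation satisfies the $(O(1/n),O(\npm_{t-1}/n))$-Bernstein condition. A deviation $h=\Theta(\xeta)$ therefore has probability at most
\[
\exp\bigl(-\Omega(n\xeta^2/\npm_{t-1})\bigr)+\exp\bigl(-\Omega(n\xeta)\bigr)=n^{-\omega(1)}
\]
whenever $\npm_{t-1}\le$ a constant, because $n\xeta^2=(\log n)^{8/5}=\omega(\log n)$. Summing over $t\le T$ gives the factor $T$.
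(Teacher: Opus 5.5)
Your Case $\eta_{t-1}(j)\ge\xeta$ and your tail estimate are fine. The case $\xeta/2\le\eta_{t-1}(j)<\xeta$, however, carries the whole difficulty and is not resolved, and no per-round union bound can work there.

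\textbf{Why the one-round bound fails.} Your scheme needs $\Pr_{t-1}[\delta^{(\ceta)}_t(I_{t-1},j)<\xeta/2]\le n^{-\omega(1)}$ uniformly over good states with $\eta_{t-1}(j)\ge\xeta/2$. Take $\eta_{t-1}(j)$ just above $\xeta/2$ and $\npm_{t-1}=n^{-0.4}$; this is allowed, since only $\npm_0=\omega(\log n/\sqrt n)$ is assumed. Then the conditional mean exceeds $\xeta/2$ by only $O((\npm_{t-1})^2+\npm_{t-1}\xeta+\xeta\xbeta)=O(n^{-0.8})$. The one-round standard deviation of $\delta^{(\ceta)}_t$ is of order $\sqrt{\npm_{t-1}/n}=n^{-0.7}$. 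So that single round crosses $\xeta/2$ with probability close to $1/2$. Relatedly, your claim that the additive term $\ceta\npm_{t-1}\alpha_{t-1}(j)$ is at least $\xeta/2$ is false for such $\npm$. Also, $n(\npm)^3\ge\log^3 n/\sqrt n$ gives nothing, because the right-hand side tends to $0$.

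\textbf{Why the fallback fails.} Take $\drift=-O(\xpsi\xeta)$ and blocks of $T=\Theta(1/\xpsi)$ rounds. Then \cref{lem:Useful drift lemma} (\cref{item:negative drift useful}) gives only $\exp(-\Omega(n\xeta^2\xpsi/\npm))$. Since $n\xeta^2\xpsi=o(\log^3 n/\sqrt n)$, this exponent is $o(1)$ both for $\npm=n^{-0.4}$ and for $\npm=\Theta(1)$. Over about $\sqrt n$ rounds, the accumulated variance $\npm/(n\xpsi)$ swamps $\xeta^2$. Shorter blocks help only if you show the process is pushed back up between them. The sentence ``between blocks the $\Omega((\npm)^2)$ term pushes the process back up'' is exactly the claim that needs a proof.

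\textbf{What is missing.} You need a multi-round argument that actually uses the positive drift; this is how the paper proceeds (\cref{lem:hitting time for eta}). Fix $I=\omega(\xbeta)$ with $I=o(\xeta)$, and set $y=\xeta/2+I$.
\begin{enumerate}
\item \emph{Positive drift on a thin window.} While $\eta_{t-1}(j)\in[y-I,y+2I]$, one has $\eta_{t-1}(j)=o(\npm_{t-1})$. Hence $\npm_{t-1}+\alpha_{t-1}(j)-(2\beta_{t-1}-1)\ge\alpha_{t-1}(j)-\psi_{t-1}/\beta_{t-1}=\Omega(\npm_0)$. Through your proxy $\npm_t\ge\alpha_t(I_{t-1})$, this gives $\E_{t-1}[\eta_t(j)]\ge\eta_{t-1}(j)+\Omega((y-I)\npm_0)$. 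The same proxy gives \cref{item:C2} by domination.
\item \emph{Upward exit first.} This drift and the per-round variance $O(\npm_0/n)$ both scale with $\npm$. Keeping $\npm_t=O(\npm_0)$ over the short horizon uses $\taumaxup$ and \cref{lem:hitting time for tnpm and npm} (\cref{item:taualphamaxup}). The positive-drift part of \cref{lem:Useful drift lemma} then shows the following: from $[y,y+I]$, the process exits above $y+2I$ before falling below $y-I=\xeta/2$, except with probability $\exp(-\Omega(nI(y-I)))=n^{-\omega(1)}$. This holds whatever the size of $\npm$.
\item \emph{Last-visit decomposition.} Via the last visit to $[y,y+I]$ (\cref{lem:Iterative Drift theorem}, \cref{item:no increase under negative drift}), the event $\{\tauetaminus(\xeta/2)\le T\}$ reduces to at most $T$ such downward exits. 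This requires that no single round lowers $\eta(j)$ by $I$ or more.
\end{enumerate}
Only this no-jump condition is a genuine one-round estimate (\cref{lem:one step eta}, using $nI^2=\omega(\log n)$). Your Case~1 computation is essentially that jump bound, but it cannot replace the window argument.
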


\paragraph{Weak cannot be strong: Proof of \cref{lem:tauetaminus}.}
To prove \cref{lem:tauetaminus}, we first prove the following lemma:
\begin{lemma}
    \label{lem:hitting time for eta}
    Let $y=y(n)$ and $I=I(n)$ be positive parameters satisfying the following:
    $I=\omega(\xbeta)$, $y=\omega(I)$, and $y=o(\log n/\sqrt{n})$.  
    Suppose that $\beta_0\geq 1/2-\xbeta$, $\psi_0\leq \xpsi$, $\npm_0\geq \omega\qty(\log n/\sqrt{n})$, and $\eta_0(j)\in [y,y+I]$.
    Then, for any $T\geq \frac{24(1+\ceta)}{1-\ctildemaxdown}\cdot \frac{I}{(y-I)\npm_0}$, we have
    \begin{align*}
      \Pr\qty[\tauetaminus(y-I)<\tauetaplus(y+2I) \text{ and } \min\{\tautildemaxdown,\taumaxup,\taubetaminus,\taupsiplus\}>T]\leq n^{-\omega(1)}.
    \end{align*}
  \end{lemma}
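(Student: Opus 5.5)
The plan is to show that $\eta_t(j)$ has a positive additive drift while it lies in the window $[y-I,y+2I]$. With that drift, \cref{lem:Useful drift lemma} (\cref{item:positive drift useful}) shows that leaving the window through the bottom before leaving through the top has probability $n^{-\omega(1)}$.

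\textbf{Drift.} Set $\tau^*=\min\{\tautildemaxdown,\taumaxup,\taubetaminus,\taupsiplus\}$ and $X_t=\eta_t(j)$, with $I^-=y-I$ and $I^+=y+2I$. The variable $\eta_t(j)=\npm_t-(1+\ceta)\alpha_t(j)$ dominates $\delta_t^{(\ceta)}(I_{t-1},j)=\alpha_t(I_{t-1})-(1+\ceta)\alpha_t(j)$, because $\npm_t\ge\alpha_t(I_{t-1})$. It therefore suffices to bound the expectation of the latter from below. The opinion $I_{t-1}$ is fixed given $\calF_{t-1}$.

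By \cref{lem:basic inequalities for delta_epsilon} (\cref{item:expectation of delta_epsilon}),
\[
\E_{t-1}[\eta_t(j)]\ge \eta_{t-1}(j)\bigl(\npm_{t-1}+\alpha_{t-1}(j)+2(1-\beta_{t-1})\bigr)+\ceta\npm_{t-1}\alpha_{t-1}(j).
\]
Write $2(1-\beta_{t-1})=1-(2\beta_{t-1}-1)$. The identity $\psi=\beta(2\beta-1)-\gamma$, together with $\gamma_{t-1}\le\beta_{t-1}\npm_{t-1}$, gives $2\beta_{t-1}-1\le \npm_{t-1}+O(\xpsi)$. Hence the multiplicative factor is at least $1+\alpha_{t-1}(j)-O(\xpsi)$.

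Inside the window, $\eta_{t-1}(j)\le y+2I=o(\npm_0)$, so opinion $j$ is nearly tied with the maximum: $\alpha_{t-1}(j)\ge(\npm_{t-1}-\eta_{t-1})/(1+\ceta)=\Omega(\npm_0)$. Here we use $\tautildemaxdown>t-1$ and $\beta_{t-1}\ge1/3$ to get $\npm_{t-1}\ge(1-\ctildemaxdown)\npm_0/3$.

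This yields the additive drift
\[
\E_{t-1}[\eta_t]-\eta_{t-1}\ge (y-I)\bigl(\alpha_{t-1}(j)-O(\xpsi)\bigr)\ge \drift:=\frac{(1-\ctildemaxdown)(y-I)\npm_0}{c(1+\ceta)}>0
\]
for a suitable constant $c$. The correction $O(\xpsi)$ is negligible next to $\npm_0=\omega(\log n/\sqrt n)$, and the additive term $\ceta\npm\alpha$ is nonnegative. The choice of $\drift$ is what makes the hypothesis $T\ge \frac{24(1+\ceta)}{1-\ctildemaxdown}\cdot\frac{I}{(y-I)\npm_0}$ match $T\ge (1+\epsilon)(I^+-X_0)/\drift$, since $I^+-X_0\le 2I$.

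\textbf{Concentration.} By \cref{lem:basic inequalities for delta_epsilon} (\cref{item:bernstein condition for delta_epsilon}), $\delta^{(\ceta)}_t-\E_{t-1}[\delta^{(\ceta)}_t]$ satisfies a $(O(1/n),O(\npm_0/n))$-Bernstein condition, using $\taumaxup>t-1$ to bound $\alpha$ by $O(\npm_0)$. The replacement of $\alpha_t(I_{t-1})$ by $\npm_t$ only increases $X_t$, so one-sided condition \cref{item:C2} holds for the lower tail via domination (\cref{lem:Bernstein condition}, \cref{item:BC for dominated rv}).

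\textbf{Conclusion.} Applying \cref{lem:Useful drift lemma} (\cref{item:positive drift useful}) with $\tau=\min\{\tau^+,\tau^-,\tau^*\}$, the second inequality bounds $\Pr[\tau^+>\tau^-,\tau^*>T]$ by
\[
\exp\Bigl(-\Omega\Bigl(\frac{I^2}{\npm_0T/n+I/n}\Bigr)\Bigr)+\exp\Bigl(-\Omega\Bigl(\frac{I^2}{\npm_0T/n+I/n}\Bigr)\Bigr).
\]
At $T\approx I/((y-I)\npm_0)$ the exponent is of order $\min\{nIy,\,nI\}$, and $nIy\ge nI^2=\omega(n\xbeta^2)=\omega(\log n)$.

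\textbf{Main obstacle.} For larger $T$ the variance term grows, and the first term of the bound would degrade. I expect this to be the hardest step. The plan is to apply the lemma with the smallest admissible $T_0$ and note that the event in the statement is monotone. If $\tau^->T_0$, the process must hit $\tau^+$ before $\tau^-$ anyway, with the same bound, because the second inequality only needs $\tau^*>T$ at the chosen $T_0$. A secondary difficulty is the $\omega$-bookkeeping needed to ensure $I^2 n/(y)\to\infty$ faster than $\log n$; the hypotheses $I=\omega(\xbeta)$ and $\xbeta=\omega(\sqrt{\log n/n})$ are exactly what provides this.
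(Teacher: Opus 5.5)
Your proposal is correct and follows the paper's proof essentially step for step. You bound $\E_{t-1}[\eta_t(j)]$ from below through $\delta_t^{(\ceta)}(I_{t-1},j)$ with an additive drift of order $(y-I)\npm_0$ inside the window, obtain the one-sided Bernstein condition by domination, and apply \cref{lem:Useful drift lemma} (\cref{item:positive drift useful}); your explicit reduction to the smallest admissible $T$ by monotonicity of the event is a worthwhile addition, since the paper's final estimate $\exp(-\Omega(I^2/(\npm_0T/n+I/n)))\le\exp(-\Omega(nI(y-I)))$ only holds for $T$ of the order of that threshold.
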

  \begin{proof}
    Let $\tau^*=\min\{\tautildemaxdown,\taumaxup,\taubetaminus,\taupsiplus\}$ and
  let $\tau=\min\{\tauetaplus(y+2I),\tauetaminus(y-I),\tau^*\}$.
  For $t-1<\tau$, we have
  \begin{align}
    &\npm_{t-1}+\alpha_{t-1}(j)+1-2\beta_{t-1}\nonumber \\
    &=\npm_{t-1}\qty(1+\frac{\npm_{t-1}-\eta_{t-1}(j)}{(1+\ceta)\npm_{t-1}}-\frac{\gamma_{t-1}}{\beta_{t-1}\npm_{t-1}}-\frac{\psi_{t-1}}{\beta_{t-1}^2\tnpm_{t-1}})\nonumber \\
    &\geq \frac{1}{3}(1-\ctildemaxdown)\tnpm_{0}
    \qty(\frac{1}{1+\ceta}-\underbrace{\frac{y+2I}{(1+\ceta)(1/3)(1-\ctildemaxdown)\tnpm_{0}}}_{o(1)}-\underbrace{\frac{\xpsi}{(1/3)^2(1-\ctildemaxdown)\tnpm_{0}}}_{o(1)  }) \nonumber \\
    &\geq \frac{1-\ctildemaxdown}{6(1+\ceta)}\cdot \npm_{0}. \label{eq:drift of eta}
  \end{align}
  Hence, for $t-1<\tau$, we have
  \begin{align*}
    \E_{t-1}[\eta_t(j)]
    &\geq \E_{t-1}[\delta_{t}^{(\ceta)}(I_{t-1},j)]\\
    &\geq \delta_{t-1}^{(\ceta)}(I_{t-1},j)+\delta_{t-1}^{(\ceta)}(I_{t-1},j)\qty(\alpha_{t-1}(I_{t-1})+\alpha_{t-1}(j)+1-2\beta_{t-1})\\
    &\geq \eta_{t-1}(j)+\frac{1-\ctildemaxdown}{6(1+\ceta)}\cdot (y-I)\npm_{0}.
  \end{align*}
  Note that we use \cref{lem:basic inequalities for delta_epsilon} (\cref{item:expectation of delta_epsilon}).
  
  Letting $X_t=\eta_t(j)$ and $\drift = \frac{1-\ctildemaxdown}{6(1+\ceta)}\cdot (y-I)\npm_{0}>0$, we have
    $
      \indicator_{\tau>t-1}\qty(X_{t-1}+\drift-\E_{t-1}[X_t])
      =\indicator_{\tau>t-1}\qty(\eta_{t-1}(j)+\drift-\E_{t-1}[\eta_t(j)])\leq 0
    $
    and 
    \begin{align*}
      \indicator_{\tau>t-1}\qty(X_{t-1}+\drift-X_t)
      &=\indicator_{\tau>t-1}\qty(\eta_{t-1}(j)+\drift-\eta_t(j))\\
      &\leq \indicator_{\tau>t-1}\qty(\E_{t-1}[\delta_{t}^{(\ceta)}(I_{t-1},j)]-\delta_{t}^{(\ceta)}(I_{t-1},j)),
    \end{align*}
    i.e., $\indicator_{\tau>t-1}\qty(X_{t-1}+\drift-X_t)$ satisfies one-sided $\qty(O(1/n),O(\npm_0/n))$-Bernstein condition.
    Note that we use \cref{lem:basic inequalities for delta PP} (\cref{item:bernstein condition of delta PP}) and \cref{lem:Bernstein condition} (\cref{item:BC for upper bounded rv,item:BC for linear transformation}).
  
    Let $I^-=y-I$, $I^-_*=y$, $I^+_*=y+I$, and $I^+=y+2I$.
    Note that $I^+-I^+_*=I^+_*-I^-_*=I^-_*-I^-=I$.
    Let $T\geq \frac{24(1+\ceta)}{1-\ctildemaxdown}\frac{I}{(y-I)\npm_0}\geq \frac{2(I^+-X_0)}{\drift}$.
    Applying \cref{lem:Useful drift lemma} (\cref{item:positive drift useful}) for $\eta_0(j)\in [I_*^-,I^+_*]$, we have
    \begin{align*}
      \Pr\qty[\tauetaminus(y-I)<\tauetaplus(y+2I) \text{ and } \tau^*>T]
      \leq \exp\qty(-\Omega\qty(\frac{I^2}{\frac{\npm_0}{n}T+\frac{I}{n}}))
      \leq \exp\qty(-\Omega\qty(nI(y-I))).
    \end{align*}
  \end{proof}
  
  Next, we prove the following lemma assuring $\eta_t(j)$ does not decrease too much in one step.
  \begin{lemma}
    \label{lem:one step eta}
Let $I=I(n)$ be a positive parameter satisfying $I=\omega(\xpsi)$ and 
$\tau^{jump}_\eta \defeq \inf\{t\geq 0: \eta_t(j)\leq \eta_{t-1}(j)-I\}$.
Suppose that $\beta_0\geq 1/2-\xbeta$ and $\psi_0\leq \xpsi$.
Then, for any $T\geq 0$, 
    \begin{align*}
        \Pr\qty[\tau^{jump}_\eta \leq T \text{ and } \min\{\taubetaminus, \taupsiplus\}>T]
        \leq Tn^{-\omega(1)}.
    \end{align*}
\end{lemma}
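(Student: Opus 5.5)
We reduce the claim to a one-step bound: on the event $\min\{\taubetaminus,\taupsiplus\}>t-1$ we show
\[
\Pr_{t-1}\qty[\eta_t(j)\le \eta_{t-1}(j)-I]\le n^{-\omega(1)},
\]
and then sum over $t=1,\dots,T$. Indeed, $\{\tau^{jump}_\eta=t\}\cap\{\min\{\taubetaminus,\taupsiplus\}>T\}$ is contained in $\{\eta_t(j)\le\eta_{t-1}(j)-I\}\cap\{\min\{\taubetaminus,\taupsiplus\}>t-1\}$. Hence
\[
\Pr\qty[\tau^{jump}_\eta\le T\text{ and }\min\{\taubetaminus,\taupsiplus\}>T]\le\sum_{t=1}^T\E\qty[\indicator_{\min\{\taubetaminus,\taupsiplus\}>t-1}\Pr_{t-1}[\eta_t(j)\le\eta_{t-1}(j)-I]],
\]
exactly as in the proof of \cref{lem:taubeta} (\cref{item:taubetaminus}).

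For the one-step bound, fix $I'=I_{t-1}$, the leader at round $t-1$. Since $\npm_t\ge\alpha_t(I')$, we have $\eta_t(j)\ge\delta_t^{(\ceta)}(I',j)$, while $\eta_{t-1}(j)=\delta_{t-1}^{(\ceta)}(I',j)$. It therefore suffices to bound the probability that $\delta^{(\ceta)}_t(I',j)\le\delta^{(\ceta)}_{t-1}(I',j)-I$. (If $j=I'$, then $\eta_t(j)\ge -\ceta\alpha_t(j)$ and the argument is analogous with $\alpha$ in place of $\delta$; we can also simply restrict to $j\ne I_{t-1}$.)

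By \cref{lem:basic inequalities for delta_epsilon} (\cref{item:expectation of delta_epsilon}),
\[
\E_{t-1}[\delta^{(\ceta)}_t]\ge\delta^{(\ceta)}_{t-1}\qty(\alpha_{t-1}(I')+\alpha_{t-1}(j)+2(1-\beta_{t-1})),
\]
using $\ceta\alpha\alpha\ge0$. The key point is that the multiplicative factor is at least $1-O(\xpsi)$. From $\psi_{t-1}=\beta_{t-1}(2\beta_{t-1}-1)-\gamma_{t-1}\le\xpsi$ and $\gamma_{t-1}\le\beta_{t-1}\npm_{t-1}$ we get $2\beta_{t-1}-1\le\npm_{t-1}+O(\xpsi)$, so the factor is at least $1+\alpha_{t-1}(j)-O(\xpsi)\ge1-O(\xpsi)$.

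This yields a lower bound on the drift. When $\delta^{(\ceta)}_{t-1}\ge0$, we get $\E_{t-1}[\delta_t^{(\ceta)}]\ge\delta_{t-1}^{(\ceta)}-O(\xpsi)$, since $|\delta|\le 2$. When $\delta^{(\ceta)}_{t-1}<0$, we also need an upper bound on the factor; it is at most $2$, which is not enough. We handle this case by writing $\E-\delta_{t-1}=\delta_{t-1}(\text{factor}-1)$ with $\text{factor}-1=\alpha(I')+\alpha(j)+1-2\beta$. On the event $\beta_{t-1}\ge1/2-\xbeta$, this quantity is at most $2\npm_{t-1}+2\xbeta$. Here $|\delta_{t-1}|\le(1+\ceta)\alpha_{t-1}(j)-\npm_{t-1}\le\ceta\npm_{t-1}$. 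This loss is $O(\npm^2)$, which is not $o(I)$ in general. We resolve this by choosing $I$ large relative to the actual drift: the statement only requires $I=\omega(\xpsi)$, so the negative regime is the main obstacle. We expect it to be fine in applications, since $\eta$ is positive there, and we would simply assume $\eta_{t-1}(j)\ge0$ on the relevant event (or bound the negative case using $\npm^2\ll I$ when applied).

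For concentration, by \cref{lem:basic inequalities for delta_epsilon} (\cref{item:bernstein condition for delta_epsilon}) the deviation $\delta_t^{(\ceta)}-\E_{t-1}[\delta_t^{(\ceta)}]$ satisfies the $(O(1/n),O(1/n))$-Bernstein condition. Then \cref{remark:Bernstein condition and concentration} with $h=I/2$ gives probability at most $\exp(-\Omega(nI^2/(1+I)))$. Since $I=\omega(\xpsi)$ and $\xpsi=\omega(\sqrt{\log n/n})$, this is $n^{-\omega(1)}$, because the drift loss $O(\xpsi)\le I/2$. Summing over $t$ gives $Tn^{-\omega(1)}$.
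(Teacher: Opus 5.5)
Your argument is the paper's argument. You lower-bound $\eta_t(j)$ by $\delta^{(\ceta)}_t(I_{t-1},j)$. You use \cref{lem:basic inequalities for delta_epsilon} with $2\beta_{t-1}-1\le\npm_{t-1}+3\xpsi$ to get a multiplicative factor of at least $1-3\xpsi$. You apply the Bernstein tail at a level of order $I$, which gives $n^{-\omega(1)}$ because $nI^2=\omega(\log n)$. Then you take a union bound over $t$. When $\eta_{t-1}(j)\ge 0$ this is complete and correct.

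The case you could not close, $\eta_{t-1}(j)<0$, is a genuine gap, but it cannot be closed, because the statement is false there. The paper's proof also passes over this case silently. Its step
\[
\eta_{t-1}(j)\bigl(1+\npm_{t-1}+\alpha_{t-1}(j)+1-2\beta_{t-1}\bigr)\ge\eta_{t-1}(j)(1-3\xpsi)
\]
requires $\eta_{t-1}(j)\ge 0$.

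Here is a counterexample. Take $k=2$, $\alpha_0(1)=\alpha_0(2)=1/4$ and $j=2$, so that $\beta_0=1/2$ and $\psi_0=-1/8\le\xpsi$.
\begin{itemize}
\item We have $\E_0[\alpha_1(i)]=\tfrac14(\tfrac14+1)=5/16$ for $i=1,2$, and both values concentrate within $O(\sqrt{\log n/n})$.
\item Hence, with high probability, $\eta_1(2)\le -5\ceta/16+O(\sqrt{\log n/n})$, while $\eta_0(2)=-\ceta/4$.
\item This is a drop of $\ceta/16-o(1)$, which exceeds $I$ for every $I=o(1)$. That includes the regime $I=o(\log n/\sqrt n)$ used in \cref{lem:tauetaminus}.
\item Meanwhile $\beta_1\approx 5/8$ and $\psi_1\approx -10/256\le\xpsi$ with high probability.
\end{itemize}
So at $T=1$ the probability in question is $1-o(1)$, not $n^{-\omega(1)}$.

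This also rules out the two escape routes you mention:
\begin{itemize}
\item The case $j=I_{t-1}$ is not ``analogous'' in any helpful sense. There $\eta_{t-1}(j)=-\ceta\npm_{t-1}<0$, which is exactly the bad regime.
\item The bound ``$\npm^2\ll I$ in applications'' is not available. In \cref{lem:tauetaminus}, $\npm_0$ may be $\Theta(1)$ while $I=o(\log n/\sqrt n)$.
\end{itemize}

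The right repair is the one you suggest: in $\tau^{jump}_\eta$, count only jumps with $\eta_{t-1}(j)\ge 0$. With that change, your nonnegative-case argument proves the lemma. This restricted version is all that \cref{lem:tauetaminus} needs. In \cref{lem:Iterative Drift theorem} (\cref{item:no increase under negative drift}), only jumps that occur before $\eta$ first reaches $I^-=y-I>0$ matter, and every such jump starts from a positive value.
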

\begin{proof}
    Let $\tau=\min\{\taubetaminus, \taupsiplus\}$.
    For $t-1<\tau$, we have $2\beta_{t-1}-1\leq \npm_{t-1}+3\xpsi$.
    Hence, for $t-1<\tau$, we have
    \begin{align*}
    \E_{t-1}[\eta_t(j)]
    &\geq \E_{t-1}\qty[\delta_t^{(\ceta)}(I_{t-1},j)]\\
    &\geq \eta_{t-1}(j)\qty(1+\npm_{t-1}+\alpha_{t-1}(j)+1-2\beta_{t-1})\\
    &\geq \eta_{t-1}(j)\qty(1-3\xpsi)\\
    &\geq \eta_{t-1}(j)-3\xpsi.
    \end{align*}
    Thus, 
    \begin{align*}
        \indicator_{\tau>t-1}\qty(\eta_{t-1}(j)-3\xpsi-\eta_t(j))
        &\leq \indicator_{\tau>t-1}\qty(\E_{t-1}\qty[\delta_t^{(\ceta)}(I_{t-1},j)]-\delta_t^{(\ceta)}(I_{t-1},j))
    \end{align*}
    holds.
    From \cref{lem:Bernstein condition} (\cref{item:BC for linear transformation,item:BC for upper bounded rv}) and \cref{lem:basic inequalities for delta_epsilon} (\cref{item:bernstein condition for delta_epsilon}), we have that the random variable $\indicator_{\tau>t-1}\qty(\E_{t-1}\qty[\delta_t^{(\ceta)}(I_{t-1},j)]- \delta_t^{(\ceta)}(I_{t-1},j))$ conditioned on round $t-1$ satisfies $\qty(\frac{4}{n},\frac{10}{n})$-Bernstein condition.
From \cref{item:BC for dominated rv} of \cref{lem:Bernstein condition}, 
$\indicator_{\tau>t-1}\qty(\eta_{t-1}(j)-3\xpsi-\eta_t(j))$ satisfies one-sided $\qty(\frac{4}{n},\frac{10}{n})$-Bernstein condition.
Thus,
\begin{align*}
    \indicator_{\tau>t-1}\Pr_{t-1}\qty[\eta_t(j)\leq \eta_{t-1}(j)-I]
    &=\E_{t-1}\qty[\indicator_{\tau>t-1}\indicator_{\eta_{t-1}(j)-3\xpsi-\eta_t(j)\geq I-3\xpsi}]\\
    &=\Pr_{t-1}\qty[\tau>t-1 \text{ and } \eta_{t-1}(j)-3\xpsi-\eta_t(j)\geq I-3\xpsi]\\
    &=\Pr_{t-1}\qty[\indicator_{\tau>t-1}\qty(\eta_{t-1}(j)-3\xpsi-\eta_t(j))\geq I-3\xpsi]\\
    &\leq \exp\qty(-\Omega\qty(nI^2))\\
    &\leq n^{-\omega(1)}.
\end{align*}
Thus, we obtain
\begin{align*}
\Pr\qty[\tau^{jump}_\eta \leq T \text{ and } \tau>T]
&=\Pr\qty[\exists t\leq T: \eta_t(j)\leq \eta_{t-1}(j)-I\text{ and }\tau>T]\\
&\leq \sum_{t=1}^T\E\qty[\indicator_{\tau>t-1}\Pr_{t-1}\qty[\eta_t(j)\leq \eta_{t-1}(j)-I]]\\
& \leq Tn^{-\omega(1)}.
\end{align*}
\end{proof}

  \begin{proof}[Proof of \cref{lem:tauetaminus}]
    Let $y=y(n)$ and $I=I(n)$ be positive parameters satisfying the following:
    $I=\omega(\xbeta)$, $y=\omega(I)$, and $y=o(\log n/\sqrt{n})$.  
    Let $I^-=y-I$, $I^-_*=y$, $I^+_*=y+I$, and $I^+=y+2I$.
    Note that $I^+-I^+_*=I^+_*-I^-_*=I^-_*-I^-=I$.
    Let $T'=\frac{24(1+\ceta)}{1-\ctildemaxdown}\frac{I}{(y-I)\npm_0}=o\qty(\frac{1}{\npm_0})$.
    Then, combining 
    \cref{lem:hitting time for eta,lem:hitting time for tnpm and npm}, \cref{lem:taubeta},
    if $\beta_0\geq 1/2-\xbeta$, $\psi_0\leq \xpsi$, $\npm_0=\omega(\log n/\sqrt{n})$, and $\eta_0(j)\in [I^-_*,I^+_*]$, 
    \begin{align*}
      &\Pr\qty[\tauetaminus(y-I)<\tauetaplus(y+2I)]\\
      &\leq \Pr\qty[\tauetaminus(y-I)<\tauetaplus(y+2I) \text{ and } \min\{\tautildemaxdown,\taumaxup,\taubetaminus,\taupsiplus\}>T']\\
      &+\Pr\qty[\tautildemaxdown\leq T'\text{ and } \min\{\taumaxup,\taubetaminus,\taupsiplus\}>T']
      +\Pr\qty[\taumaxup\leq T'\text{ and } \min\{\taubetaminus,\taupsiplus\}>T']\\
      &+\Pr\qty[\taubetaminus\leq T'\text{ and } \taupsiplus>T']
      +\Pr\qty[\taupsiplus\leq T']\\
      &\leq n^{-\omega(1)}.
    \end{align*}

    Let $\tau^*=\min\{\tau^{jump}_\eta,\tautildemaxdown,\taubetaminus,\taupsiplus\}$.
    Let $\tau_s^-=\inf\{t\geq s: \eta_t(j)\leq I^-\}$ and $\tau_s^+=\inf\{t\geq s: \eta_t(j)\geq I^+\}$.
    Suppose $\eta_0(j)\in [I^-_*,I^+_*]$, $\beta_0\geq 1/2-x$, $\psi_0\leq x$, and $\npm_0\geq \omega(\log n/\sqrt{n})$.
    Applying \cref{lem:Iterative Drift theorem} (\cref{item:no increase under negative drift}), we have
    \begin{align*}
      \Pr\qty[\tauetaminus(y-I) \leq T \text{ and } \tau^*>T]
      \leq \sum_{s=0}^{T-1}\E\qty[\indicator_{\eta_s(j)\in [I^-_*,I^{+}_*] \text{ and } \tau^*>s}\Pr_s\qty[\tau^-_s<\tau^+_s]]
      \leq Tn^{-\omega(1)}.
  \end{align*}
\end{proof}

\paragraph{Emergence of unique strong opinion: Proof of \cref{lem:unique strong opinion}.}
\begin{proof}[Proof of \cref{lem:unique strong opinion} (\cref{item:tauusplus})]
  Fix an arbitrary pair of distinct opinions $i$ and $j$.
  Observe the following holds:
  \begin{itemize}
    \item Combining \cref{item:delta multipricative with initial bias,item:pair of non-weak opinions becomes weak} of \cref{lem:pair of non-weak opinions}, 
    for some $T_1=O(\log n/\npm_0)$, $\min\{\alpha_{T_1}(i),\alpha_{T_1}(j)\}\leq (1-\cweak)\npm_{T_1}$, $\beta_{T_1}\geq 1/2-\xbeta$, $\psi_{T_1}\leq \xpsi$, and $\npm_{T_1}=\Omega(\npm_0)=\omega(\log n/\sqrt{n})$ with probability at least $1-n^{-10}$.
    Suppose that $\alpha_{T_1}(j) \le (1-\cweak)\npm_{T_1}$ without loss of generality.
    From the definition of $\eta_t(j)$, we have $\eta_{T_1}(j)\geq \frac{\cweak-\cstrong}{1-\cstrong}\npm_{T_1}=\Omega(\npm_0)\geq 2x_{\eta}$.
    \item From \cref{lem:tauetaminus}, for any $t\in [T_1,T_1+T_2]$, where $T_2\geq n \npm_{T_1} /C\log n=\Omega(n\npm_0/\log n)$, $\eta_t(j)\geq x_{\eta}$, $\beta_t\geq 1/2-\xbeta$, $\psi_t\leq \xpsi$, and $\npm_t=\Omega(\npm_0)=\omega(\log n/\sqrt{n})$ with probability at least $1-n^{-\omega(1)}$.
  \end{itemize}
  
  Since $T_1=O(\log n/\npm_0)=o(\sqrt{n})$ and $T_2=\Omega(n\npm_0/\log n)=\omega(\sqrt{n})$, we have $T_1<T_2$.
  Thus, by taking the union bound over all pairs of distinct opinions $i,j\in[k]$, with high probability, there exists $t\leq T_1$ such that $\eta_t(j)\geq x_{\eta}$ for all $j\neq I_t$.
\end{proof}
\begin{proof}[Proof of \cref{lem:unique strong opinion} (\cref{item:tauusminus})]
  Consider an arbitrary opinion $j\neq I_0$. Combining \cref{lem:tauetaminus,lem:tauinitial lemma}, 
  \begin{align}
    &\Pr\qty[\tauetaminus(x_{\eta}/2) \leq T \text{ or } \min\{\tautildemaxdown,\taubetaminus,\taupsiplus\}\leq T] \nonumber\\
    &\leq \Pr\qty[\qty{\tauetaminus(x_{\eta}/2) \leq T \text{ or } \min\{\tautildemaxdown,\taubetaminus,\taupsiplus\}\leq T} \text{ and } \min\{\tautildemaxdown,\taubetaminus,\taupsiplus\}>T] \nonumber\\
    &+ \Pr\qty[\min\{\tautildemaxdown,\taubetaminus,\taupsiplus\}\leq T] \nonumber\\
    &\leq n^{-11}. \label{eq:tauetaminus and tauinitial lemma}
  \end{align}
  Thus, from the union bound,
  \begin{align*}
    \Pr\qty[\min\{\tau_{us}^-(x_{\eta}/2),\tautildemaxdown,\taubetaminus,\taupsiplus\}\leq T ]
    \leq \sum_{j\neq I_0}\Pr\qty[\min\{\tau_\eta^-(x_{\eta}/2),\tautildemaxdown,\taubetaminus,\taupsiplus\}\leq T]
  \leq n^{-10}.
  \end{align*}
  \end{proof}

\subsection{Towards Consensus} \label{sec:towards consensus}
We now turn to establishing an upper bound on the consensus time.
Specifically, we show that once there remains exactly one strong opinion, the process will reach consensus within $O(\log n/\npm_0)$ rounds with high probability (see \cref{lem:towards consensus}).
\begin{lemma}[Unique strong opinion leads to consensus]
  \label{lem:towards consensus}
  Suppose $\psi_0\leq \xpsi$, $\beta_0\geq 1/2-\xbeta$, $\npm_0=\omega(\log n/\sqrt{n})$, and $\eta_0(j)\geq x_{\eta}$ for all $j\neq I_0$.
  Then, $\Pr\qty[\taucons>T]\leq 1/n$ holds for some $T=O\qty(\log n/\npm_0)$.
  \end{lemma}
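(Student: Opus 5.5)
Let $I=I_0$ denote the unique strong opinion; by \cref{lem:unique strong opinion} (\cref{item:tauusminus}) it stays the unique strong opinion throughout. I will run three phases. In Phase~1, $\npm_t$ grows from $\npm_0$ to a constant. In Phase~2, every other opinion is driven to $0$. In Phase~3, the undecided vertices are absorbed. The main driver is the growth of $\alpha_t(I)$ through \cref{lem:basic inequalities for alpha}: $\E_{t-1}[\alpha_t(I)]=\alpha_{t-1}(I)\bigl(1+\alpha_{t-1}(I)+1-2\beta_{t-1}\bigr)$. As long as $\psi_{t-1}\le\xpsi$ we have $2\beta_{t-1}-1\le(\gamma_{t-1}+\psi_{t-1})/\beta_{t-1}$. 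Since $\eta_{t-1}(j)\ge x_\eta/2>0$ for all $j\ne I$ by \cref{lem:unique strong opinion} (\cref{item:tauusminus}), every other opinion satisfies $\alpha(j)\le \npm/(1+\ceta)$. Hence $\gamma_{t-1}/\beta_{t-1}\le \npm_{t-1}\bigl(\alpha(I)+\sum_{j\ne I}\alpha(j)/(1+\ceta)\bigr)/\beta_{t-1}$, which is strictly smaller than $\npm_{t-1}$ by a constant factor whenever $\alpha(I)\le(1-c)\beta$. This gives a multiplicative drift $\E_{t-1}[\npm_t]\ge \npm_{t-1}(1+\Omega(\npm_{t-1}))-O(\xpsi\npm_{t-1})$, and the error term is negligible because $\npm=\omega(\log n/\sqrt n)$.

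\textbf{Phase 1.} I will restart the analysis each time $\npm$ doubles. At each restart, the hypotheses ($\psi\le\xpsi$, $\beta\ge1/2-\xbeta$, unique strong opinion) are preserved with high probability by \cref{lem:taubeta,lem:taupsi,lem:unique strong opinion}. Inside a block starting at level $\npm$, \cref{lem:Useful drift lemma} (\cref{item:positive drift useful}) with $X_t=\alpha_t(I)$, $\drift=\Omega(\npm^2)$, Bernstein parameters $(1/n,O(\npm/n))$ from \cref{lem:basic inequalities for alpha}, and $I^\pm$ at $2\npm$ and $\npm/2$ shows that the level doubles within $O(1/\npm)$ rounds. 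The failure probability per block is $\exp(-\Omega(n\npm))$. Summing the geometric series $\sum O(1/(2^s\npm_0))$ gives $O(1/\npm_0)$ rounds to reach $\alpha(I)\ge c\beta$ for a constant $c$. \textbf{Phase 2.} Once $I$ holds a constant fraction, I bound the total mass of the others, $\zeta_t:=\beta_t-\alpha_t(I)$. The first moment gives $\E_{t-1}[\alpha_t(j)]=\alpha_{t-1}(j)(\alpha_{t-1}(j)+2(1-\beta_{t-1}))$, and $2(1-\beta)+\alpha(j)\le 1-\Omega(\alpha(I))$ by the identity above. So $\zeta$ contracts by a constant factor per round in expectation, and Bernstein concentration for $\beta$ and $\alpha(I)$ keeps this true with high probability while $\zeta\ge \log n/n$. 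After $O(\log n)$ further rounds $\E[n\zeta_t]\le n^{-2}$, and Markov's inequality kills all other opinions. \textbf{Phase 3.} With only opinion $I$ left, no new undecided vertices are created. Each undecided vertex adopts $I$ with probability $\alpha(I)\ge\Omega(1)$, so the undecided count decays geometrically and vanishes after $O(\log n)$ rounds, again by Markov's inequality.

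The total time is $O(1/\npm_0)+O(\log n)=O(\log n/\npm_0)$. A union bound over the $O(\log n)$ phases and blocks, with each failure probability $n^{-\omega(1)}$ or $n^{-2}$, gives failure probability at most $1/n$.

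\textbf{Main obstacle.} The hardest step is the transition between Phase~1 and Phase~2, where $\npm$ becomes constant. The conditions $\psi\le \xpsi$ and $\beta\ge 1/2-\xbeta$ are tailored to the regime $\npm=o(1)$: there $\beta\approx 1/2$, whereas near consensus $\beta\to1$. So I will need to re-derive the drift of $\alpha(I)$ directly from $\psi\le\xpsi$, which is still valid by \cref{lem:taupsi} since $\beta\ge 1/2$ always holds there. I will use $2\beta-1\le(\gamma+\xpsi)/\beta$ together with the unique-strong property to get $\E[\alpha_t(I)]\ge \alpha_{t-1}(I)+\Omega(\alpha_{t-1}(I)\zeta_{t-1})$. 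This avoids tracking $\tnpm$ at all in the late phase.
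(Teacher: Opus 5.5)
Your Phase~1 is fine. Working with $\alpha_t(I)$ and the Bernstein bound of \cref{lem:basic inequalities for alpha} instead of $\tnpm_t$, together with a doubling and geometric-sum argument, is a legitimate simplification. One correction: the per-block failure probability is $\exp(-\Omega(n\npm^2))$, not $\exp(-\Omega(n\npm))$, but this is still $n^{-\omega(1)}$.

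\textbf{The gap is in Phase~2.} The inequality $\alpha_{t-1}(j)+2(1-\beta_{t-1})\le 1-\Omega(\alpha_{t-1}(I))$ needs a \emph{lower} bound on $2\beta_{t-1}-1$. The identity $2\beta-1=(\gamma+\psi)/\beta$ combined with $\psi\le\xpsi$ only bounds $2\beta-1$ from \emph{above}. Nothing in the paper bounds $\psi$ from below, and $\psi$ can be $-\Theta(1)$ once $\gamma=\Theta(1)$.

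Concretely, take $\beta_{t-1}=1/2$, $\alpha_{t-1}(I)=0.3$, and a single other opinion with $\alpha_{t-1}(j)=0.2$. Then $\psi_{t-1}=-\gamma_{t-1}=-0.13\le\xpsi$, $\beta_{t-1}\ge 1/2-\xbeta$, and $\eta_{t-1}(j)=0.3-0.2/0.95>0$. Every invariant you track holds, yet $\E_{t-1}[\alpha_t(j)]=1.2\,\alpha_{t-1}(j)$, so $\zeta$ grows.

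The aggregate identity does not rescue this. It reads $\E_{t-1}[\zeta_t]=\zeta_{t-1}(1-\psi_{t-1}/\beta_{t-1})-\alpha_{t-1}(I)\bigl(\alpha_{t-1}(I)-\gamma_{t-1}/\beta_{t-1}\bigr)$. You would need it anyway, since the per-opinion bound can fail even at $\psi=0$ when $\alpha(I)/\beta$ is a small constant. It still contains the term $-\psi/\beta$, so it too requires a lower bound on $\psi$.

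The remedy in your ``Main obstacle'' paragraph is a drift $\Omega(\alpha(I)\zeta)$ for $\alpha_t(I)$. It has the right sign but does not close the gap. It vanishes as $\zeta\to0$ while $\beta$ may still be near $1/2$, so it neither forces $\zeta$ to hit $0$ nor pushes $\alpha(I)$ toward $1$.

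\textbf{How the paper avoids this.} It never lower-bounds $\psi$; it gets the needed lower bound on $\beta$ from $\beta\ge\alpha(I)$.
\begin{enumerate}
\item It drives $\tnpm_t$ up to $1-c$ for a small constant $c$ (\cref{lem:unique strong opinion evolves}, \cref{item:tauplus}). Your Phase~1 can deliver this if you run it until $\alpha(I)\ge(1-c)\beta$, rather than merely $\alpha(I)\ge c\beta$.
\item It uses the exact identity $\E_{t-1}[\alpha_t(I)]-\alpha_{t-1}(I)=\alpha_{t-1}(I)\bigl(1-\alpha_{t-1}(I)(2/\tnpm_{t-1}-1)\bigr)$ to push $\alpha(I)$ above $7/8$ in $O(\log n)$ rounds (\cref{item:taumaxmajority}). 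This needs only $\tnpm$ close to $1$, i.e.\ an upper bound on $\beta$ in terms of $\alpha(I)$.
\item Only then is $\beta\ge\alpha(I)\ge 3/4$. This gives $\alpha(j)+2(1-\beta)\le 3/4$ for every $j\ne I$, and hence the geometric decay of $\zeta$ you want (\cref{item:tauall}). Meanwhile $\alpha(I)\ge 3/4$ is maintained via the weak drift $\E_{t-1}[\alpha_t(I)]\ge\alpha_{t-1}(I)-3\xpsi$.
\end{enumerate}
Your Phase~3 then matches \cref{item:taucons}.
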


\newcommand{\ctildemaxplus}{c^{+}_{\widetilde{\max}}}
\newcommand{\tautildemaxplus}{\tau^{+}_{\widetilde{\max}}}
\newcommand{\cmaxplus}{c^{+}_{\max}}
\newcommand{\cmaxminus}{c^{-}_{\max}}
\newcommand{\taumaxminus}{\tau^{-}_{\max}}
\newcommand{\tauall}{\tau_{\textrm{all}}}
Define the following stopping times:
\begin{align*}
&\taumaxplus(x)\defeq \inf\{t\geq 0:\tnpm_t\geq x\}, &&\taumaxminus(x)\defeq \inf\{t\geq 0:\tnpm_t\leq x\}, \\
&\tautildemaxplus(x)\defeq \inf\{t\geq 0:\tnpm_t\geq x\}, &&\tauall\defeq \inf\{t\geq 0:\npm_t=\beta_t\}.
\end{align*}
For simplicity, we sometimes assume that $\alpha_0(1)=\npm_0$. 
The following lemma describes how the unique strong opinion evolves, which in turn implies \cref{lem:towards consensus}.

\begin{lemma}
  \label{lem:unique strong opinion evolves}
  We have the following:
  \begin{enumerate}
    \item \label{item:tauplus}
    Let $c\in (0,1)$ be an arbitrary constant.
  Suppose $\psi_0\leq \xpsi$, $\beta_0\geq 1/2-\xbeta$, $\npm_0=\omega(\log n/\sqrt{n})$, and $\eta_0(j)\geq x_{\eta}$ for all $j\neq I_0$.
  Then, for some $T=O\qty(\log n/\npm_0)$, 
\begin{align*}
  \Pr\qty[\tautildemaxplus(1-c)>T \text{ or } \min\{\taupsiplus,\taubetaminus\}\leq T]\leq n^{-10}. 
 \end{align*}
    \item \label{item:taumaxmajority}
    Suppose that $\psi_0\leq \xpsi$, $\beta_0\geq 1/2-\xbeta$, and $\alpha_0(1)\geq (1-\ctildemaxdown)\beta_0$.
    Then, for some $T=O(\log n)$,
    \begin{align*}
      \Pr\qty[\taumaxplus(1-4\ctildemaxdown)>T \text{ or } \min\{\taupsi,\taubetaminus\}\leq T]\leq n^{-10}. 
     \end{align*}
    \item \label{item:tauall}
    Suppose that $\alpha_0(1)\geq 7/8$. 
    Then,
      $\Pr\qty[\tau_{all}> 8\log n \text{ or } \taumaxminus(3/4)\leq 8\log n]\leq 1/n.$
     \item \label{item:taucons}
     Suppose $\alpha_0(1)=\beta_0$ and $\beta_0\geq 1/2-\xbeta$.
     Then, $\Pr[\taucons>6\log n]\leq 1/n$.
  \end{enumerate}
\end{lemma}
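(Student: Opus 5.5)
I will prove the four items of \cref{lem:unique strong opinion evolves} separately, in reverse order of difficulty. Each item follows the same template: derive a drift for a suitable quantity from the basic properties in \cref{sec:basic properties in the gossip model}, then convert it into a hitting-time bound with \cref{lem:Useful drift lemma} or a one-step concentration bound.

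\textbf{Item (\ref{item:taucons}).} If $\alpha_0(1)=\beta_0$, then no other decided opinion exists. No vertex can ever become undecided, so the only change is undecided vertices adopting opinion $1$. Let $u_t=1-\beta_t$. By \cref{lem:basic inequalities for beta}, $\E_{t-1}[u_t]=u_{t-1}^2\le u_{t-1}/2$ once $\beta_{t-1}\ge 1/2-\xbeta$. Equivalently, each undecided vertex stays undecided with probability $u_{t-1}$. Hence $\E[n u_t]\le n\, u_0\prod_{s<t} u_s$, which decays doubly exponentially once $u<1/2-\Omega(1)$. A crude version suffices: $\E[nu_t]\le n 2^{-t}$ after controlling the first round by Chernoff. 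At $t=6\log n$, Markov's inequality gives $\Pr[u_t>0]\le 1/n$.

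\textbf{Item (\ref{item:tauall}).} If $\alpha_0(1)\ge 7/8$, the other decided opinions have total mass $m_t=\beta_t-\alpha_t(1)\le 1/8$. A vertex holding $j\ne1$ keeps $j$ only if it samples $j$ or $\bot$, so $\E_{t-1}[m_t]\le m_{t-1}(m_{t-1}+2(1-\beta_{t-1}))\le m_{t-1}/2$ while $\alpha_{t-1}(1)\ge 3/4$. By \cref{lem:basic inequalities for alpha}, opinion $1$ has nonnegative drift whenever $\alpha(1)\ge 1/2$, namely $\alpha(1)+2(1-\beta)\ge 1$. Bernstein concentration then keeps $\tnpm_t>3/4$ for polynomially many rounds, failing with probability $e^{-\Omega(n)}$. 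Markov's inequality on $n\,m_{8\log n}$ finishes the item.

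\textbf{Item (\ref{item:taumaxmajority}).} Here opinion $1$ already holds a $(1-\ctildemaxdown)$ fraction of the decided mass. I will track $\tnpm_t$ via \cref{lem:basic inequalities for tnpm}. Its multiplicative drift factor is $1+\frac{\npm-\gamma/\beta}{2(1-\beta)+\gamma/\beta}$. Since $\gamma/\beta\le \npm(\npm/\beta)+(\beta-\npm)^2/\beta$, the numerator is $\Omega(\npm(1-\tnpm))$, which is bounded below by a constant until $\tnpm\ge 1-4\ctildemaxdown$. So $\tnpm_t$ grows by a constant factor per round, and this gives $O(\log n)$, in fact $O(1)$ phases. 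Concentration comes from the one-sided Bernstein condition together with \cref{lem:Useful drift lemma}.

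\textbf{Item (\ref{item:tauplus}), the main obstacle.} Starting from a unique strong opinion with only $\npm_0=\omega(\log n/\sqrt n)$, I must show $\tnpm$ climbs to $1-c$ within $O(\log n/\npm_0)$ rounds. The plan has three steps.
\begin{enumerate}
\item Use \cref{lem:unique strong opinion} (\cref{item:tauusminus}) to keep $\eta_t(j)\ge \xeta/2$ for all $j\ne I_0$. This guarantees the leader never changes and all other opinions stay non-strong.
\item Use \cref{lem:pair of non-weak opinions} (\cref{item:delta multipricative with initial bias}) pairwise to make every other opinion weak within $O(\log n/\npm_0)$ rounds. This holds with high probability after a union bound over $j$.
\item Once all $j\ne1$ are weak, $\gamma_t/\beta_t\le \npm_t(\tnpm_t+(1-\cweak)(1-\tnpm_t))$, so the drift of $\tnpm$ is at least $\tnpm\cdot\Omega(\cweak\npm(1-\tnpm))$. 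Hence $\tnpm_t$ doubles every $O(1/\npm_t)$ rounds, and the phase lengths form a geometric series summing to $O(1/\npm_0)$.
\end{enumerate}
Concentration in each doubling phase uses the Bernstein condition with variance $O(\npm/n)$. The step I expect to be delicate is preserving the weak property of all $j$ as $\npm$ grows. I would first try to get this directly from \cref{lem:tauetaminus}, since $\eta_t(j)\ge\xeta$ only controls non-strongness. If that is insufficient, I will replace weakness by the bound $\alpha_t(j)\le(1-\cstrong/2)\npm_t$, which follows from the $\eta$ lower bound, and rerun the drift computation with the constant $\cstrong$ in place of $\cweak$.
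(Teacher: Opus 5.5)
\textbf{Item~(\ref{item:taumaxmajority}) tracks the wrong quantity.} The hypothesis $\alpha_0(1)\ge(1-\ctildemaxdown)\beta_0$ already says $\tnpm_0\ge 1-\ctildemaxdown>1-4\ctildemaxdown$, so a drift argument for $\tnpm_t$ has nothing left to prove. The threshold in $\taumaxplus(1-4\ctildemaxdown)$ is on the absolute fraction $\npm_t=\alpha_t(1)$. The displayed definition of $\taumaxplus$ has a typo ($\tnpm_t$ for $\npm_t$), but the paper's proof uses $\npm_t$, and so does Lemma~\ref{lem:towards consensus}, which needs $\alpha(1)\ge 1-4\ctildemaxdown=7/8$ to enter item~(\ref{item:tauall}). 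Since $\beta_0$ can be close to $1/2$, you must show that the decided mass itself rises towards $1$. Knowing $\tnpm_t\approx 1$ says nothing about that. The missing step is the drift of $\alpha_t(1)$ itself. Substituting $\beta=\alpha(1)/\tnpm$ into Lemma~\ref{lem:basic inequalities for alpha} gives $\E_{t-1}[\alpha_t(1)]-\alpha_{t-1}(1)=\alpha_{t-1}(1)\bigl(1-\alpha_{t-1}(1)(2/\tnpm_{t-1}-1)\bigr)$. This is at least a positive constant of order $(\ctildemaxdown)^2$ times $\alpha_{t-1}(1)$, as long as $\tnpm_{t-1}\ge(1-\ctildemaxdown)^2$ (kept via $\tautildemaxdown$ and Lemma~\ref{lem:tauinitial lemma}) and $\alpha_{t-1}(1)<1-4\ctildemaxdown$. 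Constant-factor phases of $O(1)$ rounds, handled by Lemmas~\ref{lem:Useful drift lemma} and~\ref{lem:Iterative Drift theorem}, then finish the item.

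\textbf{Item~(\ref{item:tauall}) rests on a false drift claim.} Opinion $1$ does not have nonnegative drift whenever $\alpha(1)\ge 1/2$. The condition $\alpha(1)+2(1-\beta)\ge 1$ is equivalent to $\beta-\alpha(1)\le 1-\beta$, i.e., the other opinions' mass is at most the undecided mass. For example, $\beta=1$ and $\alpha(1)=7/8$ give $\E[\alpha_1(1)]=49/64$. Switching to $\tnpm_t$ does not rescue this, because your bound $\E_{t-1}[m_t]\le m_{t-1}/2$ needs $\alpha_{t-1}(1)\ge 3/4$, and $\taumaxminus(3/4)$ likewise refers to $\alpha_t(1)$ in the paper's proof. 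What controls $\alpha_t(1)$ is $\psi$. Since $2\beta-1=(\gamma+\psi)/\beta$ and $\gamma/\beta\le\npm=\alpha(1)$, one has $\alpha(1)+2(1-\beta)\ge 1-\psi/\beta$. Hence the drift is at least $-3\xpsi$ per round while $\psi\le\xpsi$, which is negligible over $8\log n$ rounds. This is Claim~\ref{claim:taumaxminus}, combined with Lemmas~\ref{lem:taubeta} and~\ref{lem:taupsi}. If $\psi_0\le\xpsi$ is not available, one preliminary round suffices: $\E_0[\alpha_1(1)]\ge\alpha_0(1)^2\ge 49/64$, and $\psi_1\le\xpsi$ w.h.p.\ by Lemma~\ref{lem:taupsi}.

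The remaining items are fine. Item~(\ref{item:taucons}) is correct, and using monotonicity of $\beta$ under a single opinion is a neat shortcut. In item~(\ref{item:tauplus}), your fallback is essentially the paper's argument: non-strongness from the $\eta$-bound gives $\npm-\gamma/\beta\ge\cstrong\npm(1-\tnpm)$. You should drop Step~2, since making all other opinions weak is unnecessary and, as you note, weakness is not preserved.
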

\begin{proof}[Proof of \cref{lem:unique strong opinion evolves} (\cref{item:tauplus})]
  Let $\tautildemaxup = \inf\{t\geq 0:\tnpm_t\geq (1+\ctildemaxup)\tnpm_0\}$ for some positive constant $\ctildemaxup\in (0,1)$.
Let $\tau^*=\min\{\tautildemaxplus,\taubetaminus,\tau_{us}^-(x_{\eta})\}$ and $\tau=\min\{\tautildemaxup,\tautildemaxdown,\tau^*\}$.
For $t-1<\tau$, we have 
\begin{align}
  \npm_{t-1}-\frac{\gamma_{t-1}}{\beta_{t-1}}
  &\geq \npm_{t-1}-\frac{(\npm_{t-1})^2}{\beta_{t-1}}-\frac{(1-\cstrong)\npm_{t-1}\qty(\beta_{t-1}-\npm_{t-1})}{\beta_{t-1}}\nonumber\\
  &=\cstrong\npm_{t-1}\qty(1-\tnpm_{t-1})\nonumber\\
  &\geq \cstrong c\npm_{t-1}. \label{eq:tnpm additive drift}
\end{align}
Hence, for $t-1<\tau$, we have
\begin{align*}
  \E_{t-1}\qty[\tnpm_t]
  &\geq  \tnpm_{t-1}\qty(1+\frac{\npm_{t-1}-\gamma_{t-1}/\beta_{t-1}}{2})-\frac{9\npm_{t-1}}{n}\\
  &\geq \tnpm_{t-1}\qty(1+\frac{\cstrong c\npm_{t-1}}{2})-\frac{9\npm_{t-1}}{n}\\
  &= \tnpm_{t-1}+\frac{\cstrong c\beta_{t-1}(\tnpm_{t-1})^2}{2}-\frac{9\npm_{t-1}}{n}\\
  &\geq \tnpm_{t-1}+\frac{\cstrong c(1-\ctildemaxdown)^2(\npm_{0})^2}{12}.
\end{align*}
Note that we use \cref{lem:basic inequalities for tnpm} (\cref{item:expectation of tnpm})
and $\npm_{t-1}\geq \frac{(1-\ctildemaxdown)\tnpm_{0}}{3}=\omega(1/n)$.

By letting $X_t=\tnpm_t$ and $\drift = \frac{\cstrong c(1-\ctildemaxdown)^2(\npm_{0})^2}{12}> 0$, we have
  $
    \indicator_{\tau>t-1}\qty(X_{t-1}+\drift-\E_{t-1}[X_t])
    =\indicator_{\tau>t-1}\qty(\tnpm_{t-1}+\drift-\E_{t-1}[\tnpm_t])\leq 0.
  $
  Moreover, it holds that the random variable
  \begin{align*}
    \indicator_{\tau>t-1}\qty(X_{t-1}+\drift-X_t)
    &\leq \indicator_{\tau>t-1}\qty(\tnpm_{t-1}\qty(1+\frac{\npm_{t-1}-\gamma_{t-1}/\beta_{t-1}}{2})-\frac{9\npm_{t-1}}{n}-\tnpm_t)
  \end{align*}
   satisfies one-sided $\qty(O\qty(\frac{1}{n}),O\qty(\frac{\npm_0}{n}))$-Bernstein condition. 
  Note that we use \cref{lem:basic inequalities for tnpm} (\cref{item:bernstein condition of tnpm}) and \cref{lem:Bernstein condition} (\cref{item:BC for upper bounded rv,item:BC for linear transformation}).

  Applying \cref{lem:Useful drift lemma} (\cref{item:positive drift useful}) for $I^+=(1+\ctildemaxup)\tnpm_{0}$, $I^-=(1+\ctildemaxdown)\tnpm_{0}$, and for $T'=\frac{2(I^+-X_0)}{\drift} = \frac{24}{\cstrong c^+(1-\ctildemaxdown)^2} \frac{1}{\npm_0}$, we have
  \begin{align*}
            \Pr\qty[\tautildemaxup > T' \text{ and } \tau^* > T'] 
            \le \exp\qty(- \Omega\qty( n (\npm_0)^2)).
    \end{align*}

  Let $\tau^{\uparrow (s)}=\inf\{t\geq 0: \tnpm_t\geq (1+\ctildemaxup)^s \tnpm_0\}$.
  From definition, for some $\ell=\Theta(\log n)$, $\tautildemaxplus\leq \tau^{\uparrow (\ell)}$ holds.
Applying \cref{lem:Iterative Drift theorem} (\cref{item:iterative updrift}), we obtain
\begin{align*}
  \Pr\qty[\min\{\tautildemaxplus,\tau^*\}>\ell T']
  &\leq \Pr\qty[\min\{\tau^{\uparrow (\ell)},\tau^*\}>\ell T']\\
  &\leq \sum_{s=1}^{\ell}\E\qty[\indicator_{\tau^*>\tau^{\uparrow (s-1)}} \Pr_{\tau^{\uparrow (s-1)}}\qty[\min\{\tau^{\uparrow (s)},\tau^*\}> \tau^{\uparrow (s-1)}+T']]\\
  &\leq \ell \exp\qty(-\Omega\qty(n (\npm_0)^2)).
\end{align*}
Note that $\npm_{\tau^{\uparrow (s-1)}}\geq \Omega(\npm_0)$ holds.

Finally, combining the above and \cref{eq:tauetaminus and tauinitial lemma}, we have
\begin{align*}
  &\Pr\qty[\tautildemaxplus>T \text{ or }\min\{\taupsi,\taubetaminus\}\leq T]\\
  &\leq \Pr\qty[\qty{\tautildemaxplus>T \text{ or } \min\{\taupsi,\taubetaminus\}\leq T} \text{ and } \min\{\taupsi,\taubetaminus,\tau_{us}^-(x_{\eta}/2)\}>T]\\
  &+\Pr\qty[\min\{\taupsi, \taubetaminus,\tau_{us}^-(x_{\eta}/2)\}\leq T]\\
  &\leq n^{-10}.
\end{align*}
\end{proof}
\begin{proof}[Proof of \cref{lem:unique strong opinion evolves} (\cref{item:taumaxmajority})]
Let $\tau_{\max}^\downarrow=\inf\{t\geq 0:\npm_t\leq (1-c_{\max}^\downarrow)\npm_0\}$.
Let $\tau^*=\min\{\taumaxplus(1-4\ctildemaxdown),\tautildemaxdown,\taubetaminus\}$ and $\tau=\min\{\tau_{\max}^\uparrow,\tau_{\max}^\downarrow,\tau^*\}$.
For $t-1<\tau$, we have
\begin{align*}
  1-\alpha_{t-1}(1)\qty(\frac{2}{\tnpm_{t-1}}-1)
  &\geq 1-(1-4\ctildemaxdown)\qty(\frac{2}{(1-\ctildemaxdown)^2}-1)
  \geq \frac{8(\ctildemaxdown)^2}{1-2\ctildemaxdown}.
\end{align*}
Note that $\tnpm_{t-1}\geq (1-\ctildemaxdown)\tnpm_0\geq (1-\ctildemaxdown)^2\geq 1-2\ctildemaxdown$ holds.
Hence, for any $t-1<\tau$, we have
\begin{align*}
  \E_{t-1}[\alpha_t(1)]-\alpha_{t-1}(1)
  &=\alpha_{t-1}(1)\qty(1-\alpha_{t-1}(1)\qty(\frac{2}{\tnpm_{t-1}}-1))
  \geq \frac{8(\ctildemaxdown)^2(1-c_{\max}^\downarrow)}{1-2\ctildemaxdown}\alpha_{0}(1).
\end{align*}

By setting $X_t=\alpha_t(1)$ and $\drift = \frac{8(\ctildemaxdown)^2(1-c_{\max}^\downarrow)}{1-2\ctildemaxdown}\alpha_{0}(1)> 0$, we have
  $
    \indicator_{\tau>t-1}\qty(X_{t-1}+\drift-\E_{t-1}[X_t])
    =\indicator_{\tau>t-1}\qty(\alpha_{t-1}(1)+\drift-\E_{t-1}[\alpha_t(1)])\leq 0
  $.
  Moreover, it holds that the random variable
  \[
    \indicator_{\tau>t-1}\qty(\E_{t-1}[X_t]-X_t)
    = \indicator_{\tau>t-1}\qty(\E_{t-1}[\alpha_t(1)]-\alpha_t(1))\\
  \]
   satisfies one-sided $\qty(O\qty(\frac{1}{n}),O\qty(\frac{\alpha_{0}(1)}{n}))$-Bernstein condition. 
   Note that we use \cref{lem:basic inequalities for alpha} (\cref{item:bernstein condition of alpha}) and \cref{lem:Bernstein condition} (\cref{item:BC for upper bounded rv,item:BC for linear transformation}).

  Applying \cref{lem:Useful drift lemma} (\cref{item:positive drift useful}) for $I^+=(1+\cmaxup)\alpha_0(1)$, $I^-=(1-c_{\max}^\downarrow)\alpha_0(1)$, and 
  $T'= \frac{c_{\max}^\uparrow (1-2\ctildemaxdown)}{4(\ctildemaxdown)^2(1-c_{\max}^\downarrow)}=\frac{2(I^+-X_0)}{\drift}$, we have
           \begin{align*}
            \Pr\qty[\taumaxup > T' \text{ and } \tau^* > T'] 
            \le \exp\qty(- \Omega\qty( n (\npm_0)^2)).
           \end{align*}
  Let $\tau^{\uparrow (s)}=\inf\{t\geq 0: \npm_t\geq (1+c_{\max}^\uparrow)^s \npm_0\}$.
  From definition, for some $\ell=\Theta(\log n)$, $\taumaxplus(1-4\ctildemaxdown)\leq \tau^{\uparrow (\ell)}$ holds.
Applying \cref{lem:Iterative Drift theorem} (\cref{item:iterative updrift}), we have
\begin{align*}
  \Pr\qty[\min\{\taumaxplus(1-4\ctildemaxdown),\tau^*\}>\ell T']
  &\leq \Pr\qty[\min\{\tau^{\uparrow (\ell)},\tau^*\}>\ell T']\\
  &\leq \sum_{s=1}^{\ell}\E\qty[\indicator_{\tau^*>\tau^{\uparrow (s-1)}} \Pr_{\tau^{\uparrow (s-1)}}\qty[\min\{\tau^{\uparrow (s)},\tau^*\}> \tau^{\uparrow (s-1)}+T']]\\
  &\leq \ell \exp\qty(-\Omega\qty(n (\npm_0)^2)).
\end{align*}
Note that $\npm_{\tau^{\uparrow (s-1)}}\geq \Omega(\npm_0)$ holds.

Finally, combining the above and \cref{lem:tauinitial lemma}, we have
\begin{align*}
  &\Pr\qty[\taumaxplus(1-4\ctildemaxdown)>T \text{ or } \min\{\taupsi,\taubetaminus\}\leq T]\\
  &\leq \Pr\qty[\qty{\taumaxplus(1-4\ctildemaxdown)>T \text{ or } \min\{\taupsi,\taubetaminus\}\leq T} \text{ and } \min\{\taupsi,\taubetaminus,\tautildemaxdown\}>T]\\
  &+\Pr\qty[\min\{\taupsi, \taubetaminus,\tautildemaxdown\}\leq T]\\
  &\leq n^{-10}.
\end{align*}
\end{proof}
\begin{proof}[Proof of \cref{lem:unique strong opinion evolves} (\cref{item:tauall})]
  To begin with, we prove the following claim.
  \begin{claim}\label{claim:taumaxminus}
    Let $c\in (0,1/2)$ be an arbitrary constant.
    Suppose that $\psi_0\leq \xpsi$, $\beta_0\geq 1/2-\xbeta$, and $\alpha_0(1)\geq 1-c$.
    Then, for some $T=\Omega(1/\xpsi)$, 
    \begin{align*}
      \Pr\qty[\taumaxminus(1-2c) \leq \min\{T,\taubetaminus,\taupsiplus\}]
        \leq \exp\qty(-\Omega\qty(n)).
    \end{align*}
  \end{claim}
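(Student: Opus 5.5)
We plan to run the negative-drift form of \cref{lem:Useful drift lemma} (\cref{item:negative drift useful}) on $X_t=\tnpm_t$. Take $I^-=1-2c$ and $\tau^*=\min\{\taubetaminus,\taupsiplus\}$. Note that $X_0=\tnpm_0=\alpha_0(1)/\beta_0\ge\alpha_0(1)\ge 1-c$, so $X_0-I^-\ge c$ is a constant gap. This constant gap is what produces the $\exp(-\Omega(n))$ bound.

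\textbf{Step 1 (drift).} For $t-1<\tau^*$, apply \cref{lem:basic inequalities for tnpm} (\cref{item:expectation of tnpm}). The multiplicative factor $1+\frac{\npm_{t-1}-\npt_{t-1}/\npo_{t-1}}{2(1-\npo_{t-1})+\npt_{t-1}/\npo_{t-1}}$ is at least $1$, because $\npt_{t-1}\le\npm_{t-1}\beta_{t-1}$. Hence
\[
\E_{t-1}[\tnpm_t]\ge\tnpm_{t-1}-\frac{9\npm_{t-1}}{n}\ge\tnpm_{t-1}-\frac{9}{n}.
\]
This gives condition \cref{item:C1} with $\drift=-9/n$.

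\textbf{Step 2 (concentration).} By \cref{lem:basic inequalities for tnpm} (\cref{item:bernstein condition of tnpm}), together with \cref{lem:Bernstein condition} (\cref{item:BC for dominated rv,item:BC for upper bounded rv}), the quantity $\indicator_{\tau^*>t-1}(X_{t-1}+\drift-X_t)$ is bounded above by the variable appearing there. That variable satisfies the one-sided $(C/n,C/n)$-Bernstein condition, using $\npm_{t-1}\le 1$. This gives \cref{item:C2}.

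\textbf{Step 3 (conclusion).} Choose $T=\min\{\frac{(1-\epsilon)c\,n}{9},\,1/\xpsi\}$; this is $\Omega(1/\xpsi)$ because $\xpsi=o(1)$ and $1/\xpsi=o(n)$. \cref{lem:Useful drift lemma} then yields
\[
\Pr\qty[\taumaxminus(1-2c)\le\min\{T,\tau^*\}]\le\exp\qty(-\frac{\Omega(c^2)}{CT/n+Cc/n}).
\]
Since $T\le n$, the exponent is $\Omega(n)$, giving $\exp(-\Omega(n))$. In fact the natural $T$ is $\Theta(n)$, which is even stronger than required.

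\textbf{Main obstacle.} The delicate point is checking that the hypotheses of \cref{lem:basic inequalities for tnpm} hold before $\tau^*$. We need $\beta_{t-1}\ge 1/2-o(1)$ and $\psi_{t-1}=o(1)$, and both follow directly from $t-1<\min\{\taubetaminus,\taupsiplus\}$. We also need the denominator $2(1-\beta_{t-1})+\npt_{t-1}/\beta_{t-1}$ to be positive so that the factor is well defined and at least $1$. This holds because $\npt_{t-1}/\beta_{t-1}\ge\npm_{t-1}^2/\beta_{t-1}>0$ whenever $\beta_{t-1}>0$.

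A second option, if the $\tnpm$ lemma is awkward, is to run the same argument on $\alpha_t(1)$ using \cref{lem:basic inequalities for alpha}. However, that route has a drift of order $\alpha(1)(\alpha(1)+1-2\beta)$, which can be negative, so the normalized $\tnpm$ route is preferable.
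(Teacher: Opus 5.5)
\textbf{The final exponent is miscomputed.} With $D=s=C/n$ and $X_0-I^-\ge c$, \cref{lem:Useful drift lemma} gives the bound $\exp\bigl(-\epsilon^2c^2n/(2CT+2\epsilon cC/3)\bigr)=\exp(-\Theta(n/T))$. So ``$T\le n$'' yields only an $\Omega(1)$ exponent. At the ``natural'' $T=\Theta(n)$ the bound is a constant, so a larger $T$ is not ``even stronger''. With your actual choice $T=1/\xpsi$ the conclusion is $\exp(-\Omega(n\xpsi))$, not $\exp(-\Omega(n))$. In fairness, the paper's own computation (per-round variance $O(1/n)$ over $T=c/(6\xpsi)$ rounds) also only supports $\exp(-\Omega(n\xpsi))$. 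What is used later is $T=8\log n$, where $\exp(-\Omega(n/\log n))$ is enough. Still, you should state the exponent as $\Omega(n/T)$.

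\textbf{You are controlling a different quantity from the paper.} The displayed definition of $\tau^{-}_{\max}$ in \cref{sec:towards consensus} does use $\tnpm_t$. This is evidently a typo, since $\tau^{+}_{\max}$ and $\tau^{+}_{\widetilde{\max}}$ are defined identically there. For that literal reading your Steps 1--2 are sound, and the drift $-9/n$ even holds without knowing which opinion is the maximum. But the paper proves the claim for $X_t=\alpha_t(1)$. Its only use, in \cref{item:tauall}, needs $\alpha_{t-1}(1)\ge 3/4$ (hence $\beta_{t-1}\ge 3/4$ and $\alpha_{t-1}(j)\le 1/4$) before $\tau^{-}_{\max}(3/4)$. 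From $\tnpm_t>1-2c$ you only get $\npm_t>(1-2c)\beta_t$, and nothing in your argument keeps $\beta_t$ near $1$. The stopping times give only $\beta_t\ge 1/2-\xbeta$, and $\E_{t-1}[\beta_t]=\beta_{t-1}-\psi_{t-1}$ permits a loss of up to $\xpsi$ per round over your $1/\xpsi$ rounds.

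Your reason for discarding the $\alpha_t(1)$ route also does not hold up. Writing $2(1-\beta)=1-(\gamma+\psi)/\beta$, \cref{lem:basic inequalities for alpha} gives $\E_{t-1}[\alpha_t(1)]=\alpha_{t-1}(1)\bigl(1+\alpha_{t-1}(1)-(\gamma_{t-1}+\psi_{t-1})/\beta_{t-1}\bigr)\ge\alpha_{t-1}(1)-3\xpsi$. This uses $\gamma_{t-1}\le\alpha_{t-1}(1)\beta_{t-1}$ whenever opinion $1$ is the plurality (e.g.\ while $\alpha_{t-1}(1)>1/2$), together with $\psi_{t-1}\le\xpsi$ and $\beta_{t-1}\ge 1/3$. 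The drift is negative, but only of size $O(\xpsi)$, and this is exactly where the horizon $T=\Theta(1/\xpsi)$ in the statement comes from. Running \cref{lem:Useful drift lemma} on $\alpha_t(1)$ with $R=-3\xpsi$ and $I^-=1-2c$ is the paper's proof.
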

  \begin{proof}
    Let $\tau=\min\{\taubetaminus,\taupsiplus\}$.
    Then, for $t-1<\tau$, we have
    \begin{align*}
        \E_{t-1}[\alpha_t(1)]
        &= \alpha_{t-1}(1)\qty(1+\alpha_{t-1}(1)-\frac{\gamma_{t-1}+\xpsi}{\beta_{t-1}})
        \geq \alpha_{t-1}(1)-3\xpsi.
      \end{align*}
      Note that we use \cref{lem:basic inequalities for alpha} (\cref{item:expectation of alpha}) in the first equality.
  
      Hence, letting $X_t=\alpha_t(1)$ and $\drift = -3\xpsi<0$, we have
      $
        \indicator_{\tau>t-1}\qty(X_{t-1}+\drift-\E_{t-1}[X_t])
        =\indicator_{\tau>t-1}\qty(\alpha_{t-1}(1)-3\xpsi-\E_{t-1}[\alpha_t(1)])\leq 0
      $
      and $\indicator_{\tau>t-1}\qty(\E_{t-1}[X_t]-X_t)=\indicator_{\tau>t-1}\qty(\E_{t-1}[\alpha_t(1)]-\alpha_t(1))$ satisfies $\qty(O(1/n),O(1/n))$-Bernstein condition. 
      Note that we use \cref{lem:basic inequalities for alpha} (\cref{item:bernstein condition of alpha}) and \cref{lem:Bernstein condition} (\cref{item:BC for upper bounded rv,item:BC for linear transformation}).
    
      Hence, applying \cref{lem:Useful drift lemma} (\cref{item:negative drift useful}) with $I^-=1-2c$, for $T\leq \frac{c}{6\xpsi}\leq \frac{X_0-I^-}{-2\drift}$, we have
      \begin{align}
        \Pr\qty[\taumaxminus(1-2c) \leq \min\{T,\taubetaminus,\taupsiplus\}]
        \leq \exp\qty(-\Omega\qty(n)). 
      \end{align}
  \end{proof}
  Write $g_t=\beta_t-\alpha_t(1)=\sum_{j\geq 2}\alpha_t(j)$ for convenience.
  Let $\tau=\min\{\tau_{all},\taumaxminus(3/4)\}$.
  Then, for $t-1<\tau$, we have
  \begin{align*}
    \E_{t-1}[g_t]
    &=\sum_{j\geq 2}\alpha_{t-1}(j)\qty(1+\alpha_{t-1}(j)+1-2\beta_{t-1})
    \leq g_{t-1}\qty(1-\frac{1}{4}).
  \end{align*}
  Note that $\alpha_{t-1}(j)\leq 1-\alpha_{t-1}(1)\leq 1/4$ and $\beta_{t-1}\geq \alpha_{t-1}(1)\geq 3/4$.
  
  Let $r=1-\frac{1}{4}=\frac{3}{4}$, $X_t=r^{-t}g_t$, and $Y_t=X_{t\wedge \tau}$.
Then, 
\begin{align*}
\E_{t-1}[Y_t]-Y_{t-1}
=\indicator_{\tau>t-1}\qty(\E_{t-1}[X_t]-X_{t-1})
\leq \indicator_{\tau>t-1}\qty(r^{-t}\E_{t-1}[g_t]-r^{-(t-1)}g_{t-1})
\leq 0,
\end{align*}
i.e., $Y_t$ is a submartingale.
Hence, we have $\E[Y_T]\leq \E[Y_0]=g_0\leq 1$ and 
\begin{align*}
\E[Y_T]
\geq \E[X_{T}\mid \tau>T]\Pr[\tau>T]
=r^{-T}\E[g_T\mid \tau>T]\Pr[\tau>T]
\geq r^{-T}n^{-1}\Pr[\tau>T].
\end{align*}
Consequently, we have
\begin{align*}
    \Pr[\tau>T]\leq nr^T\leq n\exp\qty(-\frac{T}{4})\leq 1/n^2.
\end{align*}
Thus, combining the above, \cref{claim:taumaxminus,lem:taubeta,lem:taupsi} gives
\begin{align*}
  &\Pr\qty[\tau_{all}> T \text{ or } \taumaxminus(3/4)\leq T]\\
  &\leq \Pr\qty[\qty{\tau_{all}> T \text{ or } \taumaxminus(3/4)\leq T} \text{ and } \min\{\taumaxminus(3/4),\taubetaminus,\taupsiplus\}>T]\\
  &+\Pr\qty[\min\{\taumaxminus(3/4),\taubetaminus,\taupsiplus\}\leq T]\\
  &\leq \Pr\qty[\tau>T]+\Pr\qty[\taumaxminus(3/4)\leq T \text{ and } \min\{\taubetaminus,\taupsiplus\}>T] + \Pr\qty[\min\{\taubetaminus,\taupsiplus\}\leq T]\\
  &\leq 1/n.
\end{align*}
\end{proof}
\begin{proof}[Proof of \cref{lem:unique strong opinion evolves} (\cref{item:taucons})]
  Let $\tau=\min\{\taucons,\taubetaminus\}$.
  Let $u_t=1-\alpha_t(1)$. Then, 
  \begin{align*}
    \E_{t-1}[u_t]=1-\alpha_{t-1}(1)\qty(1+1+\alpha_{t-1}(1)-2\beta_{t-1})
    =u_{t-1}\qty(1-\alpha_{t-1}(1))
    \leq u_{t-1}\qty(1-\frac{1}{3}).
  \end{align*}
  Let $r=1-\frac{1}{3}=\frac{2}{3}$, $X_t=r^{-t}u_t$, and $Y_t=X_{t\wedge \tau}$.
Then, 
\begin{align*}
\E_{t-1}[Y_t]-Y_{t-1}
=\indicator_{\tau>t-1}\qty(\E_{t-1}[X_t]-X_{t-1})
\leq \indicator_{\tau>t-1}\qty(r^{-t}\E_{t-1}[u_t]-r^{-(t-1)}u_{t-1})
\leq 0,
\end{align*}
i.e., $Y_t$ is a submartingale.
Hence, we have $\E[Y_T]\leq \E[Y_0]=u_0\leq 1$ and 
\begin{align*}
\E[Y_T]
\geq \E[X_{T}\mid \tau>T]\Pr[\tau>T]
=r^{-T}\E[u_T\mid \tau>T]\Pr[\tau>T]
\geq r^{-T}n^{-1}\Pr[\tau>T].
\end{align*}
Consequently, we have
\begin{align*}
    \Pr[\tau>T]\leq nr^T\leq n\exp\qty(-\frac{T}{3})\leq 1/n^2.
\end{align*}
Thus, from \cref{lem:taubeta,lem:taupsi}, we have
\begin{align*}
  \Pr[\taucons>T]
  &\leq \Pr[\tau>T]+\Pr[\taubetaminus\leq T]\\
  &\leq \Pr[\tau>T]+\Pr[\taubetaminus\leq T \text{ and } \taupsiplus>T]+\Pr[\taupsiplus\leq T]\\
  &\leq 1/n.
\end{align*}
\end{proof}

\begin{proof}[Proof of \cref{lem:towards consensus}]
  We have the following:
\begin{itemize}
  \item From \cref{lem:unique strong opinion evolves} (\cref{item:tauplus}), for some $T_1=O\qty(\log n/\npm_0)$, we have $\alpha_{T_1}(1)\geq (1-\ctildemaxdown)\beta_{T_1}$, $\beta_{T_1}\geq 1/2-\xbeta$, and $\psi_{T_1}\leq \xpsi$, with probability at least $1-n^{-10}$.
  \item From \cref{lem:unique strong opinion evolves} (\cref{item:taumaxmajority}), for some $T_2=O(\log n)$, we have $\alpha_{T_1+T_2}(1)\geq 1-4\ctildemaxdown$, $\beta_{T_1+T_2}\geq 1/2-\xbeta$, and $\psi_{T_1+T_2}\leq \xpsi$, with probability at least $1-n^{-10}$.
  \item Assume that $\ctildemaxdown = 1/32$. From \cref{lem:unique strong opinion evolves} (\cref{item:tauall}), for some $T_3= O(\log n)$, we have $\alpha_{T_1+T_2+T_3}(1)=\beta_{T_1+T_2+T_3}$ and $\alpha_{T_1+T_2+T_3}(1)\geq 3/4$ with probability at least $1-1/n^2$.
  \item From \cref{lem:unique strong opinion evolves} (\cref{item:taucons}), for some $T_4=O(\log n)$, we have $\alpha_{T_1+T_2+T_3+T_4}(1)=1$ with probability at least $1-1/n^2$.
\end{itemize}
Thus, we obtain the claim.
\end{proof}

\subsection{Putting All Together}

\begin{lemma}
    \label{lem:good configuration}
    Let $C>0$ be any constant.
    For some 
    \begin{align}
        T=
    \begin{cases}
        O\qty(\log n) & \qty(\text{if } k\le \frac{C\sqrt{n}}{(\log n)^2}), \\
        O\qty(\sqrt{n}(\log n)^3) & (\text{otherwise}),
    \end{cases}
    \label{eq:good configuration}
    \end{align}
    we have
    \begin{align*}
    \Pr\qty[\psi_T\leq \xpsi \text{ and } \beta_T\geq \frac{1}{2}-\xbeta \text{ and } \npm_T\geq \frac{(\log n)^{1.5}}{\sqrt{n}}] \geq 1 -\pbot -O\qty(\frac{\log n}{n}).
    \end{align*}
\end{lemma}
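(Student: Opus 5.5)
We combine three earlier results in sequence: \cref{lem:fail at first round} for the first round, \cref{lem:taubeta,lem:taupsi} to reach a good configuration for $\beta$ and $\psi$, and then \cref{lem:growth of tnpt} (only when $k$ is large) to make $\npm$ large. Throughout, we restart the process at intermediate times by the Markov property.

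\textbf{Step 1 (first round and $\psi$).} With probability at least $1-\pbot$ we have $\beta_1>0$. By \cref{lem:taupsi} (\cref{item:taupsiminus}), $\psi_1\le\xpsi$ with probability $1-\exp(-\Omega(n\xpsi^2))=1-n^{-\omega(1)}$, since $n\xpsi^2=\omega(\log n)$.

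\textbf{Step 2 (growth of $\beta$).} Restart from round $1$. By \cref{lem:taubeta} (\cref{item:taubeta_logn}), within $O(\log n)$ rounds $\beta\ge C'\log n/n$, failing with probability $O(\log n/n)$; this is the dominant error term. From there, \cref{lem:taubeta} (\cref{item:taubetaplus}) gives $\beta\ge 1/2-\xbeta$ within a further $O(\log n)$ rounds, with failure probability $O(\log n)\,(n^{-\Omega(C')}+n^{-\omega(1)})$. Meanwhile $\psi$ stays at most $\xpsi$ throughout, by \cref{lem:taupsi} (\cref{item:taupsiplus}) applied from round $1$. Call the resulting time $T_0=O(\log n)$. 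At $T_0$ we have $\psi_{T_0}\le\xpsi$ and $\beta_{T_0}\ge 1/2-\xbeta$. By \cref{lem:taubeta} (\cref{item:taubetaminus}), the bound on $\beta$ then persists for polynomially many rounds with probability $1-n^{-\omega(1)}$.

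\textbf{Step 3, small $k$.} If $k\le C\sqrt n/(\log n)^2$, take $T=T_0$. Since $\gamma\le\npm\beta$, we get
\[
\npm_{T}\ge\beta_{T}/k\ge (1/2-o(1))(\log n)^2/(C\sqrt n)\ge(\log n)^{1.5}/\sqrt n .
\]

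\textbf{Step 3, large $k$.} Otherwise, restart at $T_0$ and apply \cref{lem:growth of tnpt}. Within $T_1=O(n\xgamma\log n)=O(\sqrt n(\log n)^3)$ rounds, $\gamma$ reaches $\xgamma=(\log n)^2/\sqrt n$ while $\taubetaminus$ and $\taupsiplus$ have not occurred, except with probability $n^{-\Omega(1)}$. At the hitting time $\tau=\taugammaplus$ we get
\[
\npm_\tau\ge\gamma_\tau/\beta_\tau\ge(\log n)^2/\sqrt n .
\]

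\textbf{Main obstacle.} The lemma asks for the conclusion at a fixed deterministic $T$, not at the random time $\tau$. So we must show that $\npm$ does not drop from $(\log n)^2/\sqrt n$ to $(\log n)^{1.5}/\sqrt n$ over the remaining at most $T_1$ rounds. Restarting at $\tau$, \cref{lem:hitting time for tnpm and npm} (\cref{item:tildemaxdown}) is stated only for windows $T\le \constrefs{lem:hitting time for tnpm and npm}{item:tildemaxdown}n$, and its error term $T\exp(-\Omega(\tnpm_0 n/T))$ is useless when $T=\Theta(\sqrt n(\log n)^3)$ and $\tnpm_0\approx(\log n)^2/\sqrt n$. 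The plan is to run the negative-drift argument of \cref{lem:tildemaxdown} directly with $I^-=(\log n)^{1.5}/(\sqrt n\,\beta)$ instead of $(1-\ctildemaxdown)\tnpm_0$.

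The drift of $\tnpm$ is at least $-18\tnpm_0/n$ as long as $\npm\le 2\tnpm_0$. Over $T_1$ rounds this accumulates to $O(\tnpm_0 T_1/n)=o(\tnpm_0)$, far below the allowed drop, which is a constant fraction of $\tnpm_0$. If $\npm$ instead rises above $2\tnpm_0$, the variance proxy grows, but rises only help, so we handle this by the iterative doubling argument of \cref{lem:Iterative Drift theorem}, measuring the drop relative to the current level. With $z=\Theta(\tnpm_0)$ in \cref{lem:Useful drift lemma}, the failure probability is
\[
\exp\!\big(-\Omega(\tnpm_0^2/(\tnpm_0T_1/n+\tnpm_0/n))\big)=\exp\!\big(-\Omega(\sqrt n/\log n)\big),
\]
because $\tnpm_0 n/T_1=\Omega(\sqrt n/\log n)$; this is negligible. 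Hence $\npm_T\ge\beta_T\cdot(1-o(1))\tnpm$ at any deterministic $T=T_0+T_1$ with the required probability.

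A union bound over Steps 1--3 gives total failure probability at most $\pbot+O(\log n/n)$, as claimed.
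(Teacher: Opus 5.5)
Your Steps 1 and 2 and the small-$k$ case are the paper's items (1)--(3). Your large-$k$ step, up to the hitting time $\tau=\taugammaplus$, is the paper's item (4). (For small $k$, the inequality you need is $\beta\le k\,\npm$; $\npt\le\npm\beta$ alone does not give $\npm\ge\beta/k$.)

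You are right that \cref{lem:growth of tnpt} controls only $\taugammaplus$, not the value at a deterministic time. The paper does not address this. Its item (4) simply reads $\npt\ge(\log n)^2/\sqrt n$ off that lemma, so it effectively uses ``for some $T$'' to mean a stopping time bounded by $T$. It does the same in \cref{lem:taubeta}, \cref{item:taubeta_logn}, whose proof bounds a hitting time. That reading is all \cref{thm:main} needs, since \cref{lem:consensus time starting from a goood intial configuration} can be started at the stopping time by the strong Markov property.

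The bridging step you add to reach a deterministic $T$ is where the proposal fails.

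\textbf{The arithmetic.} With $\tnpm_0\approx(\log n)^2/\sqrt n$ and $T_1=\Theta(\sqrt n(\log n)^3)$, one has $\tnpm_0 n/T_1=\Theta(1/\log n)$, not $\Omega(\sqrt n/\log n)$. So the bound you derive from \cref{lem:Useful drift lemma} is $\exp(-\Omega(1/\log n))$. This is exactly the useless quantity you identified one paragraph earlier.

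\textbf{Why your two adjustments do not help.}
\begin{itemize}
\item Replacing $I^-$ by $(\log n)^{1.5}/(\sqrt n\,\beta)$ changes $X_0-I^-$ only by a constant factor.
\item The restart-at-the-maximum device of \cref{lem:Iterative Drift theorem} only neutralizes upward excursions.
\end{itemize}

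\textbf{The real obstruction is variance, not drift.} The one-sided Bernstein proxy for $\tnpm$ in \cref{lem:basic inequalities for tnpm} is $C\npm_{t-1}/n$ per round. It cannot be sharpened in that form, because it tracks the previous leader. That leader's share genuinely satisfies $\Var_{t-1}[\alpha_t(i)]\ge(1-\beta_{t-1})^2\alpha_{t-1}(i)/n$ (\cref{lem:basic inequalities for alpha}). Over $T_1$ rounds the accumulated standard deviation is therefore about $\sqrt{\tnpm_0T_1/n}\approx(\log n)^{2.5}/\sqrt n\gg\tnpm_0$. No negative-drift concentration argument on $\tnpm$ can keep it within a constant factor over that window.

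You have two options.
\begin{itemize}
\item State the conclusion at the stopping time, as the paper effectively does. Your Steps 1--3 then already constitute the paper's proof.
\item Supply a genuinely new stability argument. For example, work with $\npt$ or $\tnpt$ and combine their positive drift with a variance bound finer than the crude $20\npt_{t-1}/n$ of \cref{lem:basic inequalities for gamma}. With that crude bound and drift of order $1/n$, a drop of size $\Theta(\xgamma)$ from level $\Theta(\xgamma)$ is again excluded only with constant probability.
\end{itemize}
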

\begin{proof}
  Combining \cref{lem:taubeta,lem:taupsi,lem:growth of tnpt}, we have the following:
  \begin{enumerate}
  \item \label{item:good configuration 1} From \cref{lem:taupsi} (\cref{item:taupsiminus}) and the definition of $\pbot$, $\psi_1\leq \xpsi$ and $\beta_1\geq 1/n$ hold with probability at least $1-\pbot - n^{-\Omega(1)}$.
  \item \label{item:good configuration 2} From \cref{lem:taubeta} (\cref{item:taubeta_logn,item:taubetaplus}), 
  for some $T=O(\log n)$, $\beta_{1+T}\geq 1/2-\xbeta$ holds with probability at least $1-O(\log n/n)$.
  \item \label{item:good configuration 3} From \cref{lem:taupsi} (\cref{item:taupsiplus}), $\max_{t\in [n^2]}\psi_t\leq \xpsi$ holds with probability at least $1-n^{-10}$.
  \item \label{item:good configuration 4} From \cref{lem:growth of tnpt}, for some $T'=O\qty(\sqrt{n}(\log n)^3)$, $\npt_{1+T+T'}\geq (\log n)^2/\sqrt{n}$, $\psi_{1+T+T'}\leq \xpsi$, and $\beta_{1+T+T'}\geq 1/2-\xbeta$ with probability at least $1-n^{-\Omega(1)}$.
  \end{enumerate}
  Since $\npm_t\geq \npt_t/\beta_t\geq \npt_t$ holds for any $t$, combining \cref{item:good configuration 1,item:good configuration 2,item:good configuration 3,item:good configuration 4}, we obtain the claim for general $k$.
 
  For the case where $k=O\qty(\sqrt{n}/(\log n)^2)$, 
  $\npm_t\geq \beta_t/k=\Omega(\beta_t(\log n)^2/\sqrt{n})$ holds for any $t$.
  Hence, from \cref{item:good configuration 1,item:good configuration 2,item:good configuration 3}, we obtain the claim.

\end{proof}

\begin{lemma}
    \label{lem:consensus time starting from a goood intial configuration}
    Suppose that $\psi_0\leq \xpsi$, $\beta_0\geq 1/2-\xbeta$, and $\npm_0=\omega(\log n/\sqrt{n})$.
    Then, $\taucons= O\qty(\log n/\npm_0)$ with high probability. 
\end{lemma}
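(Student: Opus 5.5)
The plan is to chain \cref{lem:unique strong opinion} (\cref{item:tauusplus}) with \cref{lem:towards consensus}, gluing the two phases by the strong Markov property at a suitable stopping time.

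\textbf{Phase 1.} Starting from a configuration with $\psi_0\le\xpsi$, $\beta_0\ge 1/2-\xbeta$ and $\npm_0=\omega(\log n/\sqrt n)$, apply \cref{lem:unique strong opinion} (\cref{item:tauusplus}). This gives some $T_1=O(\log n/\npm_0)$ such that, with probability $1-O(n^{-10})$, the following hold:
\begin{itemize}
\item $\tau_{us}^+(\xeta)\le T_1$;
\item $\min\{\tautildemaxdown,\taubetaminus,\taupsiplus\}>T_1$.
\end{itemize}
Let $\sigma:=\tau_{us}^+(\xeta)$. On this event we have $\sigma\le T_1$, and at time $\sigma$:
\begin{itemize}
\item $\eta_\sigma(j)\ge\xeta$ for all $j\ne I_\sigma$;
\item $\beta_\sigma\ge 1/2-\xbeta$ and $\psi_\sigma\le\xpsi$;
\item $\tnpm_\sigma\ge(1-\ctildemaxdown)\tnpm_0$, hence $\npm_\sigma=\beta_\sigma\tnpm_\sigma\ge \tfrac13(1-\ctildemaxdown)\npm_0$.
\end{itemize}
The last bound uses $\beta_\sigma\ge1/3$ and $\tnpm_0\ge\npm_0$. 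In particular, $\npm_\sigma=\Omega(\npm_0)=\omega(\log n/\sqrt n)$.

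\textbf{Phase 2.} Condition on $\calF_\sigma$. By the strong Markov property, the process restarted at $\sigma$ satisfies every hypothesis of \cref{lem:towards consensus}. That lemma gives consensus within $T_2=O(\log n/\npm_\sigma)=O(\log n/\npm_0)$ further rounds with probability at least $1-1/n$. A union bound over the two phases then yields $\taucons\le T_1+T_2=O(\log n/\npm_0)$ with probability $1-O(1/n)$.

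\textbf{Main obstacle.} The step needing care is confirming that \emph{all} hypotheses of \cref{lem:towards consensus} hold at the random time $\sigma$. The delicate one is the lower bound on $\npm_\sigma$, which comes from $\tautildemaxdown>\sigma$ together with $\beta_\sigma=\Omega(1)$. I would also check that the constants inside $T_2$ stay uniform, since $\npm_\sigma$ is only a constant factor below $\npm_0$.

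If ``with high probability'' has to mean failure probability $n^{-c}$ for larger $c$, I would boost \cref{lem:towards consensus}. Its proof already yields $n^{-\Omega(1)}$-type bounds stage by stage, via \cref{lem:unique strong opinion evolves}, and the final stage's $1/n$ can be improved by running $C\log n$ rounds in \cref{item:taucons} instead of $6\log n$. That change affects $T$ only by a constant factor.
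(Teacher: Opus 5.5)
Your proposal is correct and matches the paper's proof: it chains \cref{lem:unique strong opinion} (\cref{item:tauusplus}) with \cref{lem:towards consensus} and takes a union bound, giving $\taucons\le T_1+T_2=O(\log n/\npm_0)$. The one difference is that you restart at the stopping time $\sigma=\tau_{us}^+(\xeta)$ via the strong Markov property, whereas the paper asserts the unique-strong-opinion condition at the fixed time $T_1$ itself, which \cref{item:tauusplus} alone does not literally give. Your version is therefore slightly more careful than the paper's.
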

\begin{proof}
  We have the following:
  \begin{itemize}
    \item From \cref{lem:unique strong opinion} (\cref{item:tauusplus}), for some $T_1=O\qty(\log n/\npm_0)$, we have $\min_{j\neq I_{T_1}}\eta_{T_1}(j)\geq x_{\eta}$, $\beta_{T_1}\geq 1/2-\xbeta$, $\psi_{T_1}\leq \xpsi$, and $\npm_{T_1}=\Omega(\npm_0)=\omega(\log n/\sqrt{n})$ with probability at least $1-n^{-10}$.
    \item From \cref{lem:towards consensus}, for some $T_2=O\qty(\log n/\npm_0)$, we have $\taucons\leq T_1+T_2$ with probability at least $1-1/n$.
  \end{itemize}
  Thus, we obtain the claim.
\end{proof}
\begin{proof}[Proof of \cref{thm:main}]
    From \cref{lem:good configuration}, we have that $\psi_T\leq \xpsi$ and $\beta_T\geq 1/2-\xbeta$ and $\npm_T\geq (\log n)^{1.5}/\sqrt{n}$ hold with high probability for some $T$ as defined in \cref{eq:good configuration} (\cref{lem:good configuration}).
    Then, from \cref{lem:consensus time starting from a goood intial configuration}, we reach a consensus within additional $O\qty(\frac{\log n}{\npm_T}) = O\qty( \min\qty{ k\log n, \sqrt{n/\log n} } )$ rounds with high probability.
    Here, we use $\npm_T \ge \frac{\beta_T}{k} = \Omega(k)$ if $k$ is small and $\npm_T\ge (\log n)^{1.5}/\sqrt{n}$ if $k$ is large.
    Therefore, the consensus time is bounded by
    \begin{align*}
        \taucons \le T + O\qty( \min\qty{ k\log n, \sqrt{n/\log n} } ) = \Otilde(\min\qty{k, \sqrt{n}})
    \end{align*}
    and obtain the claim.
\end{proof}
\subsection{Lower Bound}

In this subsection, we prove \cref{thm:main_tight}.
\begin{lemma}
    \label{lem:consensus time lower bound exercise}
    If $\beta_0\ge 1/2 - \xbeta$, $\psi_0\le \xpsi$, and $\npm_0 \ge C\sqrt{\frac{\log n}{n}}$ for a sufficiently large constant $C>0$, then $\taucons=\Omega(1/\npm_0)$ with high probability.
\end{lemma}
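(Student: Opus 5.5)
We need to show that consensus cannot happen before $\Omega(1/\npm_0)$ rounds. Consensus requires $\npm_t = 1$. The plan is to show that the maximum opinion cannot grow faster than a constant factor over $c/\npm_0$ rounds. \Cref{lem:hitting time for tnpm and npm} (\cref{item:taualphamaxup}) does exactly this: for $T\le \constrefs{lem:hitting time for tnpm and npm}{item:taualphamaxup}/\npm_0$,
\[
\Pr\qty[\taumaxup\leq \min\{T,\taubetaminus\}] \le k\exp\qty(-\Omega\qty(\npm_0 n/T)).
\]
Its hypothesis $\npm_0=\omega(\log n/\sqrt n)$ is not given here; we only have $\npm_0\ge C\sqrt{\log n/n}$. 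So I would redo that argument directly rather than cite it.

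\textbf{Step 1 (drift).} For each $i$ and $t-1<\min\{\taumaxup,\taubetaminus,\taupsiplus\}$, \cref{lem:basic inequalities for alpha} gives
\[
\E_{t-1}[\alpha_t(i)]-\alpha_{t-1}(i)=\alpha_{t-1}(i)\bigl(\alpha_{t-1}(i)+1-2\beta_{t-1}\bigr)\le \npm_{t-1}(\npm_{t-1}+2\xbeta).
\]
Since $\xbeta=o(\log n/\sqrt n)$ may exceed $\npm_0$, I would instead use $1-2\beta_{t-1}\le 2\xbeta$ only crudely. An alternative is to set $X_t=-\alpha_t(i)$ and bound the drift by $R=-(2(\npm_0)^2+4\xbeta\npm_0)$ up to constants.

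\textbf{Step 2 (concentration).} Apply \cref{lem:Useful drift lemma} (\cref{item:negative drift useful}) with $I^-=-(1+\cmaxup)\npm_0$ and $T=c/(\npm_0+\xbeta)$. This yields failure probability $\exp(-\Omega(\npm_0 n/T))\le \exp(-\Omega(n\npm_0^2))\le n^{-\Omega(C^2)}$, using $\npm_0\ge C\sqrt{\log n/n}$. A union bound over the $k\le n$ opinions still gives a high-probability bound for $C$ large.

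\textbf{Step 3 (remove stopping times).} Remove $\taubetaminus,\taupsiplus$ via \cref{lem:taubeta} (\cref{item:taubetaminus}) and \cref{lem:taupsi} (\cref{item:taupsiplus}), whose failure probabilities are $T\exp(-\Omega(n\xbeta^2))=n^{-\omega(1)}$. On the good event, $\npm_t\le(1+\cmaxup)\npm_0<1$ for all $t\le T$, as long as $\npm_0<1/(1+\cmaxup)$. If instead $\npm_0=\Omega(1)$, the claim $\taucons\ge 1=\Omega(1/\npm_0)$ is trivial, since $\beta_0\le$ about $1/2+o(1)$ forces $\npm_0<1$ and hence no consensus at time $0$. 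Therefore $\taucons>T$.

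The main obstacle is the $\xbeta$ term in Step 1: $T$ becomes $\Omega(1/(\npm_0+\xbeta))$, which is $\Omega(1/\npm_0)$ only when $\npm_0\gtrsim\xbeta$. To handle smaller $\npm_0$, I would first try a sharper bound on $1-2\beta_{t-1}$. Indeed, $\psi_{t-1}\ge -\xpsi$-type control is not available, but $2\beta-1\ge -2\xbeta$ is. The better route is through $\tnpm$, using the upper analogue of \cref{lem:basic inequalities for tnpm}. There the drift factor is $(\npm-\gamma/\beta)/(2(1-\beta)+\gamma/\beta)=O(\npm)$, which contains no $\xbeta$ term. I would then control $\tautildemaxup$ by a negative-drift argument on $-\tnpm_t$ with drift $O((\npm_0)^2)$. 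This gives $T=\Omega(1/\npm_0)$, and converting back via $\npm_t\le\tnpm_t$ finishes the proof.
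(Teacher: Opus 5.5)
Your Steps 1--3 are exactly the paper's proof. The paper sets $T=C/\npm_0$ with the constant of \cref{lem:hitting time for tnpm and npm} (\cref{item:taualphamaxup}), bounds $\Pr[\taucons\le T]\le\Pr[\taumaxup\le T]$, and removes $\taubetaminus$ via \cref{lem:taubeta}; your Step 3, which also removes $\taupsiplus$, is slightly more careful. The obstacle you flag is genuine, and the paper's proof has it too. The cited lemma assumes $\npm_0=\omega(\log n/\sqrt n)$ precisely so that $\npm_{t-1}(\npm_{t-1}+2\xbeta)$ can be absorbed into $O(\npm_0^2)$. The hypothesis $\npm_0\ge C\sqrt{\log n/n}$ still permits $\npm_0\ll\xbeta$ (e.g.\ $\xbeta=(\log n)^{3/5}/\sqrt n$). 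In the one application, \cref{thm:main_tight}, $\npm_0=\Omega(\log n/\sqrt n)\gg\xbeta$, so the argument is valid there.

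Your $\tnpm$ repair has the right idea: dividing by $\beta_t$ cancels the common term $1-2\beta_{t-1}$. However, two steps do not go through as stated.

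First, the paper has no upper analogue. \cref{lem:basic inequalities for ratio distribution} gives only lower bounds, because it discards the nonnegative remainder in $x/y=2x/\mu-xy/\mu^2+x(y-\mu)^2/(\mu^2y)$. For your direction this remainder must be bounded in two places:
\begin{itemize}
\item in expectation, where it is $O(\alpha_{t-1}(i)/n)$ plus an $e^{-\Omega(n)}$ term from the event $\beta_t<\mu/2$;
\item inside the upper-tail Bernstein condition, e.g.\ by truncating on $|\beta_t-\E_{t-1}\beta_t|=O(\sqrt{\log n/n})$.
\end{itemize}

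Second, a drift argument on $\tnpm_t$ itself fails, because in the upward direction the maximum works against you. If $m\ge2$ opinions tie at the top (as in \cref{thm:main_tight}), then $\E_{t-1}[\tnpm_t]-\tnpm_{t-1}$ is of order $\sqrt{\npm_{t-1}\log m/n}$. This is polynomially larger than $\npm_0^2$ when $\npm_0\approx C\sqrt{\log n/n}$. You must track each $\alpha_t(i)/\beta_t$ separately and union bound over $i$, as in your Step 2. Its multiplicative factor is $1+(\alpha_{t-1}(i)-\gamma_{t-1}/\beta_{t-1})/(2(1-\beta_{t-1})+\gamma_{t-1}/\beta_{t-1})\le 1+O(\npm_{t-1})$.

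A simpler repair is the sharper bound you abandoned. It needs no lower bound on $\psi$, because $\E_{t-1}[\beta_t]=\tfrac12-\tfrac12(1-2\beta_{t-1})^2+\gamma_{t-1}$ contracts quadratically. We have $|1-2\beta_{t-1}|\le 2\xbeta+\npm_{t-1}+3\xpsi$, where the upper side comes from $\psi_{t-1}\le\xpsi$ and $\gamma_{t-1}\le\npm_{t-1}\beta_{t-1}$. The Bernstein bound of \cref{lem:basic inequalities for beta} then gives $1-2\beta_t\le O(\sqrt{\log n/n}+\npm_{t-1}^2)$ w.h.p.\ for all $1\le t\le T$, using $\xbeta^2=o(\sqrt{\log n/n})$. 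Plugged into your Step 1, this yields drift $O(\npm_0^2+\npm_0\sqrt{\log n/n})=O(\npm_0^2)$ from round $2$ on. Round $1$ only multiplies by $1+\npm_0+2\xbeta=1+o(1)$, up to $o(\npm_0)$ noise. Hence $T=\Omega(1/\npm_0)$, with no relation between $\npm_0$ and $\xbeta$ required.

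Finally, your trivial case is misjustified. The condition $\psi_0\le\xpsi$ only gives $2\beta_0-1\le\npm_0+\xpsi/\beta_0$, so for $\npm_0=\Theta(1)$ nothing forces $\beta_0\approx\tfrac12$. Indeed, the consensus configuration satisfies every hypothesis and has $\taucons=0$. The lemma needs $\npm_0$ bounded away from $1$; the paper uses this implicitly in $\taucons\le T\Rightarrow\taumaxup\le T$. You should assume it rather than try to derive it.
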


\begin{proof}
    Let $T=\frac{\constrefs{lem:hitting time for tnpm and npm}{item:taualphamaxup}}{\npm_0}$.
    From \cref{lem:hitting time for tnpm and npm} (\cref{item:taualphamaxup}) and \cref{lem:taubeta} (\cref{item:taubetaminus}), we have
    \begin{align*}
    \Pr\qty[\taucons\leq T]
    &\leq \Pr\qty[\taumaxup\leq T]\\
    &\leq \Pr\qty[\taumaxup\leq T \text{ and } \taubetaminus> T]+\Pr\qty[\taubetaminus\leq T]\\
    &\leq \Pr\qty[\taumaxup\leq \min\{T,\taubetaminus\}]+\Pr\qty[\taubetaminus\leq T]\\
    &\leq k\underbrace{\exp\qty(-\Omega\qty(\frac{\npm_0n}{T}))}_{=\exp(-\Omega((\npm_0)^2 n))}+T\exp\qty(-\Omega\qty(n\xbeta^2))\\
    &\leq 1/n.
    \end{align*}
\end{proof}

\begin{proof}[Proof of \cref{thm:main_tight}]
    For a constant $c>0$ such that $k\le (1/2-c)\cdot n$, define
    \[
    k^* = \frac{2\sqrt{n}}{c\log n}.
    \]

    Suppose that $k \le k^*$.
    Consider the initial configuration defined by
    \[
        \alpha_0(i) = \frac{1}{2k} \text{ for all } i\in [k].
    \]
    Then, $\beta_0 = 1/2$, $\psi_0 \le 0$, and $\npm_0 = \frac{1}{2k} \ge \Omega\qty(\frac{\log n}{\sqrt{n}})$.
    Therefore, from \cref{lem:consensus time lower bound exercise}, with high probability, $\taucons=\Omega(1/\npm_0)=\Omega(k)$.

    Suppose that $k > k^*$.
    Consider the initial configuration defined by
    \begin{align*}
        \alpha_0(i) = \begin{cases}
            \frac{1}{2k^*} - \frac{k-k^*}{nk^*} & \text{if } i\in [k^*], \\
            \frac{1}{n} & \text{if } k^* < i \le k.
        \end{cases}
    \end{align*}
    
    Then, $\beta_0 = 1/2$, $\psi_0 \le 0$, and
    \begin{align*}
        \npm_0 &= \frac{1}{2k^*} \qty(1 - \frac{2k-2k^*}{n}) \\
        &\ge \frac{1}{2k^*} \qty(1 - \frac{2k}{n}) \\
        &\ge \frac{c}{2k^*} & & \because k\le (1/2-c)\cdot n \\
        &= \Omega\qty(\frac{\log n}{\sqrt{n}}).
    \end{align*}
    Therefore, from \cref{lem:consensus time lower bound exercise}, with high probability, $\taucons=\Omega(1/\npm_0)=\Omega(k^*) = \widetilde{\Omega}(\sqrt{n})$.
\end{proof}

\section{Analysis for the Population Protocol Model}\label{sec:proof_pp}
Throughout this section, we recall quantities and hitting times defined in \cref{def:key-quantities,def:stopping_times}.
We frequently use results from \cref{sec:basic properties in the population protocol model}.

The key distinction from the gossip model is that, in the population protocol model, we introduce a \emph{margin} to the parameters in $\taubetaminus$ and $\taupsiplus$.
In the gossip model, we can show that once $\beta_t$ exceeds the threshold $1/2-\xbeta$, it never drops below this threshold due to concentration in every round.
In contrast, in the population protocol model, even after $\beta_t$ exceeds $1/2-\xbeta$, it is still possible for it to fall below the threshold.
To address this, we instead prove that $\beta_t$ does not fall below a lower threshold, $1/2-2\xbeta$, for a sufficiently long period of time (\cref{item:hitting times for beta for PP:keeping phase} of \cref{lem:taubeta for PP}).
We also introduce a similar margin for the parameter $\xpsi$ in the definition of $\taupsiplus(x)$ (\cref{item:taupsiplus for PP} of \cref{lem:taupsi for PP}).

For clarity and readability, we present complete proofs for the population protocol model separately from those for the gossip model, including all necessary lemmas and arguments. While the two models share a common high-level structure, interleaving their proofs would obscure the exposition and hinder readability.


\subsection{Behavior of the Fraction of Decided Vertices}
Consider the stopping times defined in \cref{def:stopping_times}.
In this section, we present the following two lemmas:
(i) $\beta_t$ reaches at least $1/2 - x$ within $O(n \log n)$ steps and then stays at least $1/2 - O(x)$ for a sufficiently long period (\cref{lem:taubeta for PP}); and
(ii) $\psi_t$ drops to at most $x$ within $O(n \log n)$ steps and then remains at most $O(x)$ for a sufficiently long period (\cref{lem:taupsi for PP}).
Specifically, we intend to apply the following results for $x$ such that $x = \omega(\sqrt{\log n / n})$ and $x = o(\log n / \sqrt{n})$.

\begin{lemma}[Growth of $\beta_t$]
  \label{lem:taubeta for PP}
  Let $x=x(n)$ be an arbitrary positive function such that $x=\omega(\sqrt{\log n / n})$ and $x=o(\log n/\sqrt{n})$.
  We have the following:
  \begin{enumerate}
    \item \label{item:taubeta_logn for PP}
    Let $C>0$ be an arbitrary constant.
    Suppose that $\beta_0>0$.
    Then, for some $\ell=O(\log n)$,
    $\beta_{\ell n}\geq \frac{C\log n}{n}$ with probability at least $1-O(n^{-10})$.
    \item \label{item:taubetaplus for PP}
    For some $\ell=O(\log n)$, 
    $\Pr[\tau_\beta^+(x)>\ell n]\leq 
    \ell\qty(\exp\qty(-\Omega\qty(n\beta_0))+\exp\qty(-\Omega\qty(nx^2))).$
    \item \label{item:hitting times for beta for PP:keeping phase}
    Suppose $\beta_0\geq 1/2-x$.
    Then, for any $T>0$, $\Pr\qty[\tau_\beta^-(2x)\leq T]\leq T\exp\qty(-\Omega\qty(nx^2))$.
  \end{enumerate}
\end{lemma}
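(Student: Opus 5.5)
The plan is to mirror the gossip-model proof of \cref{lem:taubeta}, working in blocks of $n$ interactions (one ``parallel round''). The ingredients are the one-step expectation $\E_{t-1}[\beta_t]=\beta_{t-1}+\frac{\beta_{t-1}(1-2\beta_{t-1})+\gamma_{t-1}}{n}$ and the $(\frac1n,\frac{\beta_{t-1}}{n^2})$-Bernstein condition from \cref{lem:basic inequalities for beta PP}. A single interaction changes $\beta$ by at most $1/n$. Moreover, $\beta_t$ never reaches $0$ in the population protocol model, since an undecided vertex is created only when two distinct decided opinions meet (\cref{rem:all-bot}). So no analogue of $\pbot$ arises.

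\textbf{Item 1 and the first half of item 2.} While $\beta_{t-1}\le 1/4$, the drift is at least $\beta_{t-1}/(2n)$, which is multiplicative. Over a block of $n$ steps started at $s$, I would apply \cref{lem:Useful drift lemma} to $X_t=\beta_t$ stopped at $\beta_t\le \beta_s/2$ or $\beta_t\ge 1/4$. In that window the drift is at least $\beta_s/(4n)$ and the variance proxy is $O(\beta_s/n^2)$ per step. After $n$ steps, $\beta$ has therefore grown by a factor of at least $1+c$ (with some constant $c>0$), except with probability $\exp(-\Omega(n\beta_s))$. For tiny $\beta$ this bound is too weak, so for item 1 I would feed this one-block statement into \cref{lem:nazo lemma} exactly as in the gossip proof. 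The inputs there are $\varphi=\sqrt{n\beta}$, $x_0=1$, $x^*=\sqrt{C\log n}$, and a block length of $n$ in place of $1$. The output is $\beta_{\ell n}\ge C\log n/n$ within $\ell=O(\log n)$ blocks, with probability at least $1-n^{-10}$. Once $\beta\ge C\log n/n$, iterating the block statement $O(\log n)$ times reaches $\beta\ge 1/4$, and a union bound gives error $\ell\exp(-\Omega(n\beta_0))$.

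\textbf{Second half of item 2.} For $1/4\le\beta\le 1/2-x$, I would set $B_t=1-2\beta_t$. Its drift is $\E_{t-1}[B_t]-B_{t-1}\le -\frac{2\beta_{t-1}B_{t-1}}{n}\le -\frac{B_{t-1}}{2n}$, using $\gamma\ge 0$. As before, each block of $n$ steps shrinks $B$ by a constant factor, now with failure probability $\exp(-\Omega(nB^2))\le\exp(-\Omega(nx^2))$. After $O(\log n)$ blocks, $B\le 2x$, i.e.\ $\beta\ge 1/2-x$.

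\textbf{Item 3.} This is where the margin is used. Whenever $1/2-2x\le\beta_{t-1}\le 1/2-x$, the drift of $\beta$ is at least $\beta(1-2\beta)/n\ge x/(3n)>0$. So $\beta$ can fall below $1/2-2x$ only through an excursion that starts at the level $1/2-x$ and crosses a band of width $x$ against a positive drift. I would define excursion start times $s$ as the times $\beta$ enters $[1/2-x-1/n,\,1/2-x]$ from above. For each such $s$ I would apply \cref{lem:Useful drift lemma}, item \ref{item:positive drift useful} (second inequality), with $I^-=1/2-2x$ and $I^+=1/2-x/2$, drift $R=x/(3n)$, and variance proxy $O(1/n^2)$. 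With $T_0=\Theta(n)$, the probability of hitting $I^-$ before $I^+$ is $\exp(-\Omega(x^2/(T_0/n^2+x/n)))=\exp(-\Omega(nx^2))$. Aggregating over the at most $T$ start times via \cref{lem:Iterative Drift theorem}, in the same pattern as the proof of \cref{lem:tauetaminus}, gives $T\exp(-\Omega(nx^2))$. Above $1/2-x/2$ I need no drift control, because returning to the band just starts a new excursion.

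\textbf{Main obstacle.} The hardest step is item 3, in making the excursion decomposition rigorous. The bad event ``$I^-$ is hit before $I^+$'' must have probability $\exp(-\Omega(nx^2))$ uniformly, and the excursion length must not matter. Ensuring this requires the positive-drift branch of \cref{lem:Useful drift lemma}, so that an excursion that neither exits quickly nor reaches $I^+$ is itself unlikely.
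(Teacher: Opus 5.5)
Your proposal is correct and follows essentially the same route as the paper: block-wise multiplicative drift of $\beta$ fed into \cref{lem:nazo lemma} with $\varphi=\sqrt{n\beta}$ for item~1, iterated blocks for $\beta$ and then for $B_t=1-2\beta_t$ for item~2, and for item~3 an excursion decomposition combining the positive-drift part of \cref{lem:Useful drift lemma} with \cref{lem:Iterative Drift theorem}. One adjustment is needed: \cref{lem:Iterative Drift theorem} assumes a start band wider than a single step, which your band of width $1/n$ is not, so either argue directly via the last entry from above (valid since $\beta$ moves by at most $1/n$ per interaction; also include $s=0$ when $\beta_0=1/2-x$), or use the paper's band $[1/2-x,\,1/2-x/2]$ with outer target $1/2-x/4$.
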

\begin{lemma}[Decay of $\psi_t$]
  \label{lem:taupsi for PP}
  Let $x=x(n)$ be an arbitrary positive function such that $x=\omega(\sqrt{\log n / n})$ and $x=o(\log n/\sqrt{n})$.
  We have the following:
  \begin{enumerate}
    \item 
    \label{item:taupsiminus for PP}
    Suppose $\psi_0\geq x$. 
    Then, for some $\ell=O(\log n)$,
   $\Pr[\tau_\psi^-(x)>\ell n]\leq \ell\exp\qty(-\Omega\qty(nx^2))$.
    \item \label{item:taupsiplus for PP} 
    Suppose $\psi_0\leq x$.
    Then, for any $T\geq 1$, $\Pr[\taupsiplus(2x)\leq T]\leq T\exp\qty(-\Omega\qty(nx^2))$.
  \end{enumerate}
\end{lemma}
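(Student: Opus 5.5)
We prove the two items of \cref{lem:taupsi for PP} with \cref{lem:basic inequalities for psi PP}. By its expectation bound, $\psi_t$ contracts by a factor $(1-1/n)$ per step, up to an additive error $\le \beta_{t-1}/n^2\le 1/n^2$. By its Bernstein bound, the one-step fluctuation $\psi_t-\E_{t-1}[\psi_t]$ satisfies the $(150/n,150/n^2)$-Bernstein condition. The task is to turn this slow contraction with small noise into (i) a hitting time of order $n\log n$ and (ii) long-term stability below $2x$.

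\textbf{Item (ii).} Let $\tau=\taupsiplus(2x)$ and define $Z_t=\psi_t$. The expectation bound gives $\E_{t-1}[\psi_t]\le \psi_{t-1}-\psi_{t-1}/n+1/n^2$. Whenever $\psi_{t-1}\ge x$ this increment is at most $-x/n+1/n^2<0$, so in that region the process is a supermartingale with drift about $-x/n$.

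We decompose the event $\{\psi \text{ reaches } 2x\}$ into excursions. Each excursion starts at the last time $s$ with $\psi_s\le x$ (or $s=0$). Since single jumps are at most $O(1/n)=o(x)$, we may take $\psi_s\le x+O(1/n)$ at the start of the excursion. The excursion must then climb by at least $x-O(1/n)$ while staying above $x$, and in that region the drift is nonpositive.

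We apply \cref{lem:Freedman stopping time additive}(i) to $X_t=\psi_{s+t}$ with $\drift=0$, $\bounded=150/n$, $\variance=150/n^2$, and the stopping time being the return below $x$. For a horizon of length $m$ this gives
\[
\exp\!\left(-\frac{x^2/8}{150m/n^2+50x/n}\right).
\]
This is small only if $m\lesssim n^2x$, so we must control long excursions separately, and here the negative drift helps. We therefore apply the same lemma with $\drift=-x/(2n)$: on any horizon $m$ the climb must beat an accumulated deficit $xm/(2n)$. This yields the bound
\[
\exp\!\left(-\Omega\!\left(\frac{(x+xm/n)^2}{m/n^2+x/n}\right)\right)\le \exp(-\Omega(nx^2))
\]
uniformly in $m$.

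Finally we take a union bound over the at most $T$ possible excursion start times $s\le T$. This is the mechanism of \cref{lem:Iterative Drift theorem} (\cref{item:no increase under negative drift}), and it gives $T\exp(-\Omega(nx^2))$.

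\textbf{Item (i).} Define $W_t=\psi_t-2/n$. While $\psi_{t-1}>x$ we have $\E_{t-1}[W_t]\le(1-1/n)W_{t-1}$, because $1/n^2\le (1/n)\cdot(2/n)$ up to constants. Over a block of $n$ steps this multiplicative drift reduces $\psi$ by a constant factor in expectation.

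We run the phases $\psi\in[2^{j}x,2^{j+1}x]$ for $j$ ranging over $O(\log n)$ values, since initially $\psi_0\le 1$. In each phase we use \cref{lem:Freedman stopping time additive}(ii) with $\drift=-2^{j}x/(2n)$ over $T=O(n)$ steps to get from $2^{j+1}x$ down to $2^{j}x$. The failure probability is
\[
\exp\!\left(-\Omega\!\left(\frac{(2^jx)^2}{n\cdot n^{-2}+2^jx/n}\right)\right)=\exp(-\Omega(nx^2)).
\]
Chaining the $\ell=O(\log n)$ phases with \cref{lem:Iterative Drift theorem} (\cref{item:iterative updrift}), used in its downward form, gives $\Pr[\taupsiminus(x)>\ell n]\le \ell\exp(-\Omega(nx^2))$.

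\textbf{Main obstacle.} The difficulty is item (ii): the noise over $m$ steps has variance $m/n^2$, which eventually exceeds $x^2$. The excursion bound stays uniform in $m$ only because we exploit the negative drift $-x/n$ in the region $\psi\ge x$. Getting this requires choosing the intermediate band $[x,2x]$ and a drift parameter of order $x/n$, so that $z^2/(\variance T)$ remains $\Omega(nx^2)$ for every horizon.
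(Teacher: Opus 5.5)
Your item (i) is correct and is essentially the paper's argument. While $\psi_{t-1}$ exceeds the current level, the drift is at most $-\Theta(\mathrm{level}/n)$. Applying \cref{lem:Freedman stopping time additive} with $R<0$ over $\Theta(n)$ steps gives a constant-factor decrease except with probability $\exp(-\Omega(nx^2))$, and $O(\log n)$ such phases are chained. (The auxiliary $W_t$ is never used.)

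The gap is in item (ii), at exactly the step you call the main obstacle. \cref{lem:Freedman stopping time additive}(i) is only available for $R\ge 0$, and your use of it with $R=-x/(2n)$ is not valid. The event that $\psi$ reaches $2x$ within horizon $m$ includes climbs that complete at some time $t\ll m$, when the accumulated deficit is only $xt/(2n)$, not $xm/(2n)$. Your displayed bound tends to $0$ as $m\to\infty$, whereas the probability of reaching $2x$ within $m$ steps is positive and nondecreasing in $m$. So it cannot be a valid bound, and as written the per-excursion estimate, hence item (ii), is not proved.

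The step can be repaired in either of two ways.
\begin{itemize}
\item Peel according to the time at which the climb completes. Let $\tau$ be the return time to $\le x$, and suppose $\psi$ first reaches $2x$ at a time in $[m,2m)$ before $\tau$. Then the supermartingale $\psi_{t\wedge\tau}+\frac{x}{2n}(t\wedge\tau)$ has risen by at least $x-O(1/n)+\frac{xm}{2n}$ within $2m$ steps. \cref{lem:Freedman stopping time additive}(i) with $R=0$ then gives your expression for that dyadic block. Summing over $m=2^i$ gives $O(\log n)\,e^{-\Omega(nx^2)}+\sum_{m\ge n}e^{-\Omega(x^2m)}=e^{-\Omega(nx^2)}$, because $nx^2=\omega(\log n)$.
\item Note that $e^{\lambda\psi_t}$ with $\lambda=cnx$, for a small constant $c$, is a supermartingale while $\psi>x$. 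This follows from the one-sided Bernstein bound together with drift $\le -x/(2n)$. Optional stopping then gives the horizon-free bound $e^{-\lambda(x-O(1/n))}=e^{-\Omega(nx^2)}$ directly.
\end{itemize}

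The paper avoids the unbounded horizon differently, by separating the restart band from the lower barrier. It uses drift $\le -x/(8n)$ on $(x/4,2x)$ and restarts from $[x/2,x]$. \cref{lem:Useful drift lemma} (\cref{item:positive drift useful}) then shows that $\psi$ falls to $x/4$ within $\Theta(n)$ steps w.h.p. In that window the total variance is only $O(1/n)$, so a climb of size $x$ costs $e^{-\Omega(nx^2)}$. \cref{lem:Iterative Drift theorem} (\cref{item:no increase under negative drift}) then takes the union over the $T$ restart times. Because your excursions start at the barrier $x$ itself, the ``excursion ends quickly'' estimate is vacuous for you, so you need one of the horizon-free arguments above.
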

\begin{proof}[Proof of \cref{item:taubeta_logn for PP} of \cref{lem:taubeta for PP}]
  For positive constants $c_\beta^\uparrow, c_\beta^\downarrow\in (0,1)$, let $\tau_\beta^\uparrow \defeq \inf\{t\geq 0: \beta_t\geq (1+c_\beta^\uparrow)\beta_0\}$ and $\tau_\beta^\downarrow \defeq \inf\{t\geq 0: \beta_t\leq (1-c_\beta^\downarrow)\beta_0\}$.
  For $y\leq 1/2-c$, let $\tau=\min\{\tau_\beta^\uparrow,\tau_\beta^\downarrow,\taubetaplus(y)\}$.
  Then, for $\tau>t-1$, we have
  \begin{align}
    \E_{t-1}[\beta_t]=\beta_{t-1}+\frac{\beta_{t-1}\qty(1-2\beta_{t-1})+\gamma_{t-1}}{n}
    \geq \beta_{t-1}+\frac{2c(1-c_\beta^\downarrow)\beta_0}{n}.
    \label{eq:beta hitting time for population protocol model:1}
  \end{align}
  Note that we use \cref{lem:basic inequalities for beta PP} (\cref{item:expectation of beta PP}) in the first equality.

  Hence, by setting $X_t=\beta_t$ and $\drift = \frac{2c(1-c_\beta^\downarrow)\beta_0}{n}> 0$, we obtain
  $
    \indicator_{\tau>t-1}\qty(X_{t-1}+\drift-\E_{t-1}[X_t])
    =\indicator_{\tau>t-1}\qty(\beta_{t-1}+\drift-\E_{t-1}[\beta_t])\leq 0.
  $
  Moreover, $\indicator_{\tau>t-1}\qty(\E_{t-1}[X_t]-X_t)=\indicator_{\tau>t-1}\qty(\E_{t-1}[\beta_t]-\beta_t)$ satisfies $\qty(\frac{1}{n},\frac{(1+c_\beta^\uparrow)\beta_0}{n^2})$-Bernstein condition. 
  Note that we use \cref{lem:basic inequalities for beta PP} (\cref{item:bernstein condition of beta PP}), $\indicator_{\tau>t-1}\beta_{t-1}\leq (1+c_\beta^\downarrow)\beta_0$, and \cref{lem:Bernstein condition} (\cref{item:BC for upper bounded rv,item:BC for linear transformation}).

  Applying \cref{lem:Useful drift lemma} (\cref{item:positive drift useful}) with $I^+=(1+c_\beta^\uparrow)\beta_0$, $I^-=(1-c_\beta^\downarrow)\beta_0$, and 
  $T'= \frac{(1+\epsilon)(I^+-X_0)}{\drift}=\frac{(1+\epsilon)c_\beta^\uparrow}{2c(1-c_\beta^\downarrow)}n=\Theta(n)$ for an arbitrary constant $\epsilon>0$, we have
  \begin{align}
    \Pr\qty[\min\{\tau_\beta^\uparrow,\taubetaplus(y)\} >T']
    \leq \exp\qty(-\Omega\qty(\frac{\beta_0^2}{\frac{\beta_0}{n^2}\cdot n+\frac{\beta_0}{n}}))
    \leq \exp\qty(-\Omega\qty(n\beta_0)).
    \label{eq:beta hitting time for population protocol model:1/4}
  \end{align}
  
  \newcommand{\tauzero}{\tau_\mathrm{zero}}
  We apply \cref{lem:nazo lemma} for $Z_t=\opn_t$, $T=T'$, $\varphi(Z_t)=\sqrt{n \beta_t}$, $\tau=\taubetaplus(y)$, $\cphiup=c_\beta^\uparrow$, $x_0 = 1$, and $x^*=\sqrt{C\log n}$. 
  From these settings, we have
  $\tauphiplus(x_0) = \inf\{t\geq 0:\sqrt{n\beta_t}\geq 1\}=0$,
  $\tauphiplus(x_*) = \inf\{t\geq 0:\beta_t\geq C\log n/n\}$, and
  $\tauphiup = \inf\{t\geq 0:\sqrt{n\beta_t}\geq (1+\cphiup)\sqrt{n\beta_0}\}=\tau_\beta^\uparrow$.
  Hence,
  \begin{align*}
  \Pr\qty[\min\qty{\tauphiplus(x_0), \tau}\leq T'] = 1
  \end{align*}
  holds, i.e., the first condition of \cref{lem:nazo lemma} is satisfied for $C_1=1$.
  Next, for any configuration of $\beta_0\leq C\log n/n$, we have
  \begin{align*}
    \Pr\qty[\min\qty{\tauphiup, \tau}>T']
    \geq 1-\exp\qty(-\Omega(n\beta_0))
    =1-\exp\qty(-\Omega(\varphi(Z_0)^2)).
  \end{align*}
  i.e., the second condition of \cref{lem:nazo lemma} is satisfied for some positive constant $C_2>0$.
  Note that we use \cref{eq:beta hitting time for population protocol model:1/4} in the second inequality.
  
  Thus, from \cref{lem:nazo lemma} with $\varepsilon=n^{-10}$, 
  for some $\ell=O(\log n)$,
  $
    \Pr\qty[\tauphiplus(x_*)>\ell T']
    \leq  n^{-10}.
$
\end{proof}
\begin{proof}[Proof of \cref{item:taubetaplus for PP} of \cref{lem:taubeta for PP}]
  Let $c$ be an arbitrary constant such that $0<c<1/2$.
  We split the proof into two phases: $0<\beta_0< \frac{1}{2}-c$ and $\frac{1}{2}-c\leq \beta_0< \frac{1}{2}-x$.
  \paragraph{Phase 1. $0<\beta_0< \frac{1}{2}-c$.}

  Thus, applying \cref{eq:beta hitting time for population protocol model:1/4} for some $\ell=O(\log n)$ times, $\beta_t$ gets greater than or equal to $1/2-c$ within $\ell n=O(n\log n)$ steps with probability at least $1-\ell\exp\qty(-\Omega\qty(n\beta_0))$.

  \paragraph{Case 2. $\frac{1}{2}-c\leq \beta_0< \frac{1}{2}-x$.}
  In this case, consider the parameter $B_t=1-2\beta_t$ (note that $\beta_t=(1-B_t)/2$ and $2x \leq B_0\leq 2c$).
  For positive constants $c_B^\uparrow, c_B^\downarrow\in (0,1)$, let $\tau_B^\uparrow \defeq \inf\{t\geq 0: B_t\geq (1+c_B^\uparrow)B_0\}$, $\tau_B^\downarrow \defeq \inf\{t\geq 0: B_t\leq (1-c_B^\downarrow)B_0\}$, 
  $\tau^{2x} \defeq \inf\{t\geq 0: B_t\leq 2x\}=\tau_\beta^+(x)$, 
  and $\tau=\min\{\tau_B^\uparrow,\tau_B^\downarrow,\tau^{2x}\}$.
  For $\tau>t-1$, we have
  \begin{align}
    \E_{t-1}[B_t]=1-2\E_{t-1}[\beta_t]
    = B_{t-1}-\frac{B_{t-1}(1-B_{t-1})+\gamma_{t-1}}{n}
    \leq B_{t-1}-\frac{(1-c_B^\downarrow)(1-(1+c_B^\downarrow)B_0)B_0}{n}. \label{eq:B hitting time for population protocol model:1}
  \end{align}
  We use \cref{lem:basic inequalities for beta PP} (\cref{item:expectation of beta PP}) in the first equality.

  Hence, letting $X_t=-B_t$ and $\drift = \frac{(1-c_B^\downarrow)(1-(1+c_B^\downarrow)B_0)B_0}{n}\geq 0$, we have
  $
    \indicator_{\tau>t-1}\qty(X_{t-1}+\drift-\E_{t-1}[X_t])
    =\indicator_{\tau>t-1}\qty(\E_{t-1}[B_t]-B_{t-1}+\drift)\leq 0
  $
  and $\indicator_{\tau>t-1}\qty(\E_{t-1}[X_t]-X_t)=2\indicator_{\tau>t-1}\qty(\E_{t-1}[\beta_t]-\beta_t)$ satisfies $\qty(\frac{2}{n},\frac{4}{n^2})$-Bernstein condition. 
  Note that we use \cref{lem:basic inequalities for beta PP} (\cref{item:bernstein condition of beta PP}), $\beta_{t-1}\leq 1$, and \cref{lem:Bernstein condition} (\cref{item:BC for upper bounded rv,item:BC for linear transformation}).

Applying \cref{lem:Useful drift lemma} (\cref{item:positive drift useful}) with $I^+=X_0+c^\downarrow_BB_0$, $I^-=X_0-c^\uparrow_BB_0$, and $T' = \frac{(1+\epsilon)c_B^\downarrow}{(1-c_B^\downarrow)(1-(1+c_B^\downarrow)B_0)}n=\Theta(n)$ for an arbitrary constant $\epsilon>0$, we have
  \begin{align}
    \Pr\qty[\min\{\tau_B^\downarrow,\tau_\beta^+(x) \} >T']
    =
    \Pr\qty[\min\{\tau_B^\downarrow,\tau^{2x} \} >T']
    \leq \exp\qty(-\Omega\qty(nB_0^2))
    \leq \exp\qty(-\Omega\qty(nx^2)).
    \label{eq:beta hitting time for population protocol model:B}
  \end{align}
  Thus, applying \cref{eq:beta hitting time for population protocol model:B} for some $\ell = O(\log n)$ times, $B_t$ gets less than or equal to $2x$, i.e., $\beta_t\geq 1/2-x$ within $T=O(n\log n)$ steps with probability at least $1-\ell \exp\qty(-\Omega\qty(nx^2))$.
\end{proof}
\begin{proof}[Proof of \cref{item:hitting times for beta for PP:keeping phase} of \cref{lem:taubeta for PP}]
  Let $I^-=1/2-2x$, $I_*^-=1/2-x$, $I_*^+=1/2-x/2$ and $I^+=1/2-x/4$.
  Note that $I^+-I^+_*=I^+_*-I^-_*=I^-_*-I^-=\Theta(x)$.
  Let $\tau^+=\inf\{t\geq 0: \beta_t\geq I_*^+\}$ and $\tau=\min\{\tau^+,\tau_\beta^-(2x)\}$.
  Then, for $\tau>t-1$, we have
  \begin{align}
    \E_{t-1}[\beta_t]
    &=\beta_{t-1}+\frac{\beta_{t-1}\qty(1-2\beta_{t-1})+\gamma_{t-1}}{n}
    \geq \beta_{t-1}+\frac{x}{6n}.
  \end{align}
  We use \cref{lem:basic inequalities for beta PP} in the first equality.
  Hence, letting $X_t=\beta_t$ and $\drift = \frac{x}{6n}>0$, we have
  $
    \indicator_{\tau>t-1}\qty(X_{t-1}+\drift-\E_{t-1}[X_t])
    =\indicator_{\tau>t-1}\qty(\beta_{t-1}+\frac{x}{6n}-\E_{t-1}[\beta_t])\leq 0
  $
  and $\indicator_{\tau>t-1}\qty(\E_{t-1}[X_t]-X_t)=\indicator_{\tau>t-1}\qty(\E_{t-1}[\beta_t]-\beta_t)$ satisfies $\qty(\frac{1}{n},\frac{1}{n^2})$-Bernstein condition. 
  Note that we use \cref{lem:basic inequalities for beta PP} (\cref{item:bernstein condition of beta PP}), $\beta_{t-1}\leq 1$, and \cref{lem:Bernstein condition} (\cref{item:BC for upper bounded rv,item:BC for linear transformation}).

  Applying \cref{lem:Useful drift lemma} (\cref{item:positive drift useful}) for $\beta_0\in [I_*^+,I^+]$, we have
  \begin{align}
    \Pr\qty[\tau_\beta^-(2x)<\tau^+]\leq \exp\qty(-\Omega\qty(nx^2)).
  \end{align}
  Consequently, from \cref{lem:Iterative Drift theorem} (\cref{item:no increase under negative drift}) with $\tau_s^-=\inf\{t\geq s: \beta_t\leq I^-\}$ and $\tau_s^+=\inf\{t\geq s: \beta_t\geq I^+\}$, we obtain
    \begin{align*}
      \Pr\qty[\tau^-_\beta(2x)\leq T]
      \leq \sum_{s=0}^{T-1}\E\qty[\indicator_{\beta_s\in [I^-_*,I^{+}_*]}\Pr_s\qty[\tau^-_s<\tau^+_s]]\leq T\exp\qty(-\Omega\qty(n x^2)).
  \end{align*}
\end{proof}

\begin{proof}[Proof of \cref{item:taupsiminus for PP} of \cref{lem:taupsi for PP}]
  Let $c_\psi^\uparrow,c_\psi^\downarrow\in (0,1)$ be constants, 
  $\tau_\psi^\uparrow=\inf\{t\geq 0: \psi_t\geq (1+c_\psi^\uparrow)\psi_0\}$, and
  $\tau_\psi^\downarrow=\inf\{t\geq 0: \psi_t\leq (1-c_\psi^\downarrow)\psi_0\}$.
  Write $\tau=\min\{\tau_\psi^\uparrow,\tau_\psi^\downarrow,\tau_\psi^-(x)\}$.

  For any $\tau>t-1$, we have
  \begin{align}
    \E_{t-1}[\psi_t]
    &=\psi_{t-1}-\frac{\psi_{t-1}}{n}+\frac{\beta_{t-1}-\gamma_{t-1}}{n^2}
    \leq \psi_{t-1}-\frac{\psi_{t-1}}{n}+\frac{\psi_{t-1}}{2n}
    \leq \psi_{t-1}-\frac{(1-c_\psi^\downarrow)\psi_{0}}{2n}.
  \end{align}
  We use \cref{lem:basic inequalities for psi PP} (\cref{item:expectation of psi PP}) in the first equality.
  Note that $\psi_{t-1}\geq x\geq \frac{2}{n}$.

  Hence, letting $X_t=-\psi_t$ and $\drift = \frac{(1-c_\psi^\downarrow)\psi_{0}}{2n}\geq 0$, we have
  $
    \indicator_{\tau>t-1}\qty(X_{t-1}+\drift-\E_{t-1}[X_t])
    =\indicator_{\tau>t-1}\qty(\E_{t-1}[\psi_t]-\psi_{t-1}+\frac{(1-c_\psi^\downarrow)\psi_{0}}{2n})\leq 0
  $
  and $\indicator_{\tau>t-1}\qty(\E_{t-1}[X_t]-X_t)=\indicator_{\tau>t-1}\qty(\psi_t-\E_{t-1}[\psi_t])$ satisfies $\qty(O(1/n),O(1/n^2))$-Bernstein condition.
  Note that we use \cref{lem:basic inequalities for psi PP} (\cref{item:bernstein condition of psi PP}), $\beta_{t-1}\leq 1$, and \cref{lem:Bernstein condition} (\cref{item:BC for upper bounded rv,item:BC for linear transformation}). 

  Applying \cref{lem:Useful drift lemma} (\cref{item:positive drift useful}) with $I^+=X_0+c^\downarrow_\psi \psi_0$, $I^-=X_0-c^\uparrow_\psi \psi_0$, and $T'=\frac{2(1+\epsilon)c_\psi^\downarrow}{1-c_\psi^\downarrow}n=\Theta(n)$ for an arbitrary constant $\epsilon>0$, we have
  \begin{align}
    \Pr\qty[\min\{\tau_\psi^\downarrow,\tau_\psi^- \} >T']
    \leq \exp\qty(-\Omega\qty(n\psi_0^2))
    \leq \exp\qty(-\Omega\qty(nx^2)).
    \label{eq:psi hitting time for population protocol model:down}
  \end{align}
  Thus, applying \cref{eq:psi hitting time for population protocol model:down} for some $\ell=O(\log n)$ times repeatedly, $\psi_t$ gets less than or equal to $x$ within $\ell n=O(n\log n)$ steps with probability at least $1-\ell\exp\qty(-\Omega\qty(nx^2))$.
\end{proof}
\begin{proof}[Proof of \cref{item:taupsiplus for PP} of \cref{lem:taupsi for PP}]
  Let $\tau^{x/4}=\inf\{t\geq 0: \psi_t\leq x/4\}$.
  Let $\tau=\min\{\tau_\psi^+(2x),\tau^{x/4}\}$.
  Then, for $\tau>t-1$, we have
  \begin{align}
    \E_{t-1}[\psi_t]
    &=\psi_{t-1}-\frac{\psi_{t-1}}{n}+\frac{\beta_{t-1}-\gamma_{t-1}}{n^2}
    \leq \psi_{t-1}-\frac{\psi_{t-1}}{n}+\frac{\psi_{t-1}}{2n}
    \leq \psi_{t-1}-\frac{x}{8n}.
  \end{align}
  We use \cref{lem:basic inequalities for psi PP} in the first equality.
  Note that $\psi_{t-1}\geq x/4\geq \frac{2}{n}$.
  Hence, letting $X_t=-\psi_t$ and $\drift = \frac{x}{8n}>0$, we have
  $
    \indicator_{\tau>t-1}\qty(X_{t-1}+\drift-\E_{t-1}[X_t])
    =\indicator_{\tau>t-1}\qty(\E_{t-1}[\psi_t]-\psi_{t-1}+\frac{x}{8n})\leq 0
  $
  and $\indicator_{\tau>t-1}\qty(\E_{t-1}[X_t]-X_t)=\indicator_{\tau>t-1}\qty(\psi_t-\E_{t-1}[\psi_t])$ satisfies $\qty(O(1/n),O(1/n^2))$-Bernstein condition. 
  Note that we use \cref{lem:basic inequalities for psi PP} (\cref{item:bernstein condition of psi PP}), $\beta_{t-1}\leq 1$, and \cref{lem:Bernstein condition} (\cref{item:BC for upper bounded rv,item:BC for linear transformation}). 

  Let $I^-=-2x$, $I^-_*=-x$, $I^+_*=-x/2$, and $I^+=-x/4$.
  Let $\tau^+=\inf\{t\geq 0: X_t\geq I^+\}=\tau^{x/4}$ and $\tau^-=\inf\{t\geq 0: X_t\leq I^-\}=\tau_\psi^+(2x)$.
  Note that $I^+-I^+_*=I^+_*-I^-_*=I^-_*-I^-=\Theta(x)$.
  Applying \cref{lem:Useful drift lemma} (\cref{item:positive drift useful}) for $X_0\in [I_*^-,I^+_*]$, i.e., $\psi_0\in [x/2,x]$, we have
  \begin{align}
    \Pr[\tau^-<\tau^+]=\Pr\qty[\taupsiplus(2x)<\tau^-]\leq \exp\qty(-\Omega\qty(nx^2)).
  \end{align}
  Consequently, applying \cref{lem:Iterative Drift theorem} (\cref{item:no increase under negative drift}) with $\tau_s^-=\inf\{t\geq s: X_t\leq I^-\}=\inf\{t\geq s:\psi_t\geq 2x\}$ and $\tau_s^+=\inf\{t\geq s: X_t\geq I^+\}=\inf\{t\geq s:\psi_t\leq x/4\}$, we obtain
    \begin{align*}
      \Pr\qty[\taupsiplus (2x)\leq T]
      \leq \sum_{s=0}^{T-1}\E\qty[\indicator_{\psi_s\in [I^-_*,I^{+}_*]}\Pr_s\qty[\tau^-_s<\tau^+_s]]\leq T\exp\qty(-\Omega\qty(nx^2)).
  \end{align*}
\end{proof}

\subsection{Behavior of Squared \texorpdfstring{$\ell^2$}{ell-2} Norm of Normalized Population}
In this section, we show that $\npt_t$ reaches $\xgamma$ within $O(n^2\xgamma \log n)$ steps with high probability (\cref{lem:growth of tnpt for PP}).
Recall $\taugammaplus=\inf\{t\geq 0: \gamma_t\geq \xgamma\}$ for some $\xgamma=(\log n)^2/\sqrt{n}$ (see \cref{def:stopping_times}). 
The main difference from the gossip model is the magnitude of the drift of $\tnpt_t$:
In the gossip model, we have $\E_{t-1}[\tnpt_t]\geq \tnpt_{t-1}+\Omega(1/n)$, whereas in the population protocol model, $\E_{t-1}[\tnpt_t]\geq \tnpt_{t-1}+\Omega(1/n^2)$ holds.
\begin{lemma}[Growth of $\npt_t$]
  \label{lem:growth of tnpt for PP}
  Let $x=x(n)$ be an arbitrary positive function such that $x=\omega(\sqrt{\log n / n})$ and $x=o(\log n/\sqrt{n})$.
  Suppose $\beta_0\geq 1/2-x$.
  Then, for some $T = O(n^2\xgamma \log n)$, 
  \begin{align*}
    \Pr\qty[\taugammaplus>T \text{ or } \taubetaminus(2x)\leq T]\leq n^{-10}.
  \end{align*}
\end{lemma}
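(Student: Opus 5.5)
We follow the gossip-model argument of \cref{lem:growth of tnpt}, adapted to the slower drift $\Omega(1/n^2)$ per step. Set $\tau=\min\{\taugammaplus,\taubetaminus(2x)\}$. For $t-1<\tau$ we have $\beta_{t-1}\ge 1/2-2x=1/2-o(1)$ and $\gamma_{t-1}<\xgamma=o(1)$. \Cref{lem:basic inequalities for tnpt PP} therefore gives
\[
\E_{t-1}[\tnpt_t]\ge \tnpt_{t-1}+\tfrac{1}{12n^2}.
\]
Hence $Y_t=\tnpt_{t\wedge\tau}-\frac{t\wedge\tau}{12n^2}$ is a submartingale. The optional stopping theorem (\cref{thm:OST}) yields
\[
\E[\tau]\le 12n^2\,\E[\tnpt_\tau].
\]
Unlike the gossip case, we do not need $\psi$ in the stopping time, since the PP version of the $\tnpt$ drift lemma only assumes bounds on $\beta$ and $\gamma$.

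The next step is to bound $\E[\tnpt_\tau]$. This is simpler than in the gossip model because a single interaction changes only one vertex. Each $\alpha(i)$ moves by at most $1/n$, and at most one coordinate changes. So $\gamma_\tau\le(\sqrt{\gamma_{\tau-1}}+1/n)^2\le \xgamma+3/n$, using $\gamma_{\tau-1}<\xgamma$ (when $\tau\ge1$). Also $\beta_\tau\ge\beta_{\tau-1}-1/n\ge 1/2-2x-1/n$. Thus $\tnpt_\tau\le 5\xgamma$ deterministically, with no tail integral needed. If $\tau=0$ the claim is trivial. This gives $\E[\tau]\le 60n^2\xgamma$.

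We then amplify. Markov's inequality gives $\Pr[\tau>\e\cdot 60n^2\xgamma]\le 1/\e$. Restarting via the Markov property at multiples of $T'=60\e n^2\xgamma$ gives $\Pr[\tau>\ell T']\le \e^{-\ell}$.

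One subtlety needs care. On restart, the hypothesis $\beta\ge 1/2-x$ of \cref{lem:basic inequalities for tnpt PP}'s use only needs $\beta\ge 1/2-o(1)$. The restarted chain is only known to satisfy $\beta>1/2-2x$, but that still suffices for the drift bound, so the expectation bound holds from any state with $\beta\ge 1/2-2x$ and $\gamma<\xgamma$. I expect this restart bookkeeping to be the only mild obstacle. One can avoid it by bounding $\E[\tau\mid\calF_s]$ directly from any such state, rather than reapplying the lemma's hypothesis.

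Finally, we combine these bounds using \eqref{eq:min of two stopping times}:
\[
\Pr[\taugammaplus>T\text{ or }\taubetaminus(2x)\le T]\le \Pr[\tau>T]+\Pr[\taubetaminus(2x)\le T].
\]
Take $T=\ell T'$ with $\ell=C\log n$, which is $O(n^2\xgamma\log n)$. The first term is at most $n^{-C}$. The second term is at most $T\exp(-\Omega(nx^2))=n^{-\omega(1)}$ by \cref{lem:taubeta for PP} (\cref{item:hitting times for beta for PP:keeping phase}), since $x=\omega(\sqrt{\log n/n})$. Choosing $C\ge 11$ gives the bound $n^{-10}$.
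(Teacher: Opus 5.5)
Your proposal is correct and follows the paper's proof essentially step for step. The shared steps are: the additive drift of $\tnpt_t$ from \cref{lem:basic inequalities for tnpt PP}; optional stopping to bound $\E[\tau]=O(n^2\xgamma)$ from any state with $\beta\ge 1/2-2x$; Markov's inequality plus the Markov-property restart to get $\Pr[\tau>\ell T']\le \e^{-\ell}$; and \cref{lem:taubeta for PP} (\cref{item:hitting times for beta for PP:keeping phase}) for the $\beta$ term. The paper resolves the restart bookkeeping exactly as you suggest, by re-applying its expectation lemma with $2x$ in place of $x$. Your explicit one-step overshoot bound $\tnpt_\tau\le 5\xgamma$ is a cleaner version of the paper's bound on $\E[\tnpt_\tau]$, whose displayed inequality $36n^2(\xgamma+1)\le 72n^2\xgamma$ only holds if the $+1$ is read as an $O(1/n)$ overshoot term.
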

The key idea of the proof of \cref{lem:growth of tnpt for PP} is to combine the additive drift of $\tnpt_t$ (\cref{lem:basic inequalities for tnpt PP}) and the optional stopping theorem.
\begin{lemma}
  \label{lem:hitting time for gamma for population protocol model}
Let $x=x(n)$ be an arbitrary positive function such that $x=\omega(\sqrt{\log n / n})$ and $x=o(\log n/\sqrt{n})$.
  Suppose $\beta_0\geq 1/2-x$.
  Then, $\E\qty[\min\{\taugammaplus,\taubetaminus(x)\}]\leq 72 n^2\xgamma$.
\end{lemma}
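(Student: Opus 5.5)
We mirror the proof of \cref{lem:hitting time for tnpt for gossip model}, replacing the gossip drift by the population protocol drift of \cref{lem:basic inequalities for tnpt PP}. Let $\tau=\min\{\taugammaplus,\taubetaminus(x)\}$. For $\tau>t-1$ we have $\beta_{t-1}\ge 1/2-x=1/2-o(1)$ and $\gamma_{t-1}<\xgamma=o(1)$. \Cref{lem:basic inequalities for tnpt PP} therefore gives $\E_{t-1}[\tnpt_t]\ge \tnpt_{t-1}+\frac{1}{12n^2}$. Unlike the gossip version, no $\psi$ condition is needed here.

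Set $X_t=\tnpt_t-\frac{t}{12n^2}$ and $Y_t=X_{t\wedge\tau}$. Then $Y_t$ is a submartingale, and the optional stopping theorem (\cref{thm:OST}) yields $0\le \tnpt_0\le \E[Y_\tau]=\E[\tnpt_\tau]-\E[\tau]/(12n^2)$. Hence $\E[\tau]\le 12n^2\E[\tnpt_\tau]$. To justify OST we can first work with $\tau\wedge N$ and let $N\to\infty$ by monotone convergence, since $\tnpt$ is bounded by $1$.

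It remains to bound $\E[\tnpt_\tau]$. In the population protocol model a single step changes one vertex's opinion, so this part is easier than in the gossip case. If $\tau>t-1$, then $\gamma_{t-1}<\xgamma$. One interaction changes one $\alpha(i)$ by $-1/n$ and possibly another by $+1/n$. Since each $\alpha(i)\le\sqrt{\gamma}$, this gives $\gamma_t\le \gamma_{t-1}+\frac{2}{n}\sqrt{\gamma_{t-1}}+\frac{2}{n^2}\le 2\xgamma$ deterministically, using $\xgamma=\omega(1/n)$. Similarly $\beta_t\ge\beta_{t-1}-1/n\ge 1/2-x-1/n\ge 1/3$. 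Therefore $\tnpt_\tau=\gamma_\tau/\beta_\tau^2\le 9\cdot 2\xgamma=18\xgamma$.

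Combining the two bounds gives $\E[\tau]\le 12n^2\cdot 18\xgamma$. This is within the claimed $72n^2\xgamma$ if the crude constants are tightened: use $\beta_\tau\ge 1/2-o(1)$, which gives $1/\beta_\tau^2\le 4+o(1)$, and $\gamma_\tau\le(1+o(1))\xgamma$, which gives $\tnpt_\tau\le(4+o(1))\xgamma\le 6\xgamma$, hence $\E[\tau]\le 72n^2\xgamma$.

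The main point to verify carefully is the one-step deterministic control of $\gamma_\tau$ and $\beta_\tau$ at the stopping time. Everything else is a direct transcription of the gossip argument, and this control is what replaces the Bernstein tail estimate used there.
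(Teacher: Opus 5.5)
Your proof is correct and follows the paper's argument. The additive drift $\frac{1}{12n^2}$ of $\tnpt_t$ from \cref{lem:basic inequalities for tnpt PP}, combined with optional stopping, gives $\E[\tau]\le 12n^2\E[\tnpt_\tau]$, and the one-vertex-per-step nature of the population protocol model then bounds $\tnpt_\tau$ deterministically. Your explicit estimates $\gamma_\tau\le(1+o(1))\xgamma$ and $\beta_\tau\ge 1/2-x-1/n$, together with the truncation $\tau\wedge N$ to justify \cref{thm:OST}, make the final constant chase cleaner than the paper's: its last step $36n^2(\xgamma+1)\le 72n^2\xgamma$ is not valid as written, since $\xgamma<1$.
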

\begin{proof}
  From \cref{lem:basic inequalities for tnpt PP},
  for $\min\{\taugammaplus,\taubetaminus(x)\}>t-1$, we have
  \begin{align*}
    \E_{t-1}\qty[\tnpt_t]
    &\geq \tnpt_{t-1}+\frac{1}{12n^2}.
  \end{align*}
  Let $\tau=\min\{\taugammaplus,\taubetaminus(x)\}$, $X_t=\tnpt_t-\frac{t}{12n^2}$ and $Y_t=X_{t\wedge \tau}$.
  Then, we have
  \begin{align*}
    \E_{t-1}\qty[Y_t-Y_{t-1}]
    &= \indicator_{\tau>t-1}\E_{t-1}\qty[X_t-X_{t-1}]
    = \indicator_{\tau>t-1}\qty(\E_{t-1}[\tnpt_t]-\frac{t}{12n^2}-\tnpt_{t-1}+\frac{t-1}{12n^2})
    \geq 0,
  \end{align*}
  i.e., $(Y_t)_{t\in \Nat_0}$ is a submartingale.
  Hence, from \cref{thm:OST}, we have $\E[Y_\tau]\geq \E[Y_0] = \tnpt_0\geq 0$.
  Thus, 
$  
 0\leq  \E[Y_\tau]=\E[X_\tau] = \E[\tnpt_\tau] -\frac{\E[\tau]}{12n^2}
 $
and we obtain 
\begin{align*}
  \E[\tau]\leq 12n^2\E[\tnpt_\tau]\leq 36n^2(\xgamma+1)\leq 72n^2\xgamma.
\end{align*}
\end{proof}

\begin{proof}[Proof of \cref{lem:growth of tnpt for PP}.]
  Let $\tau=\min\{\taugammaplus,\taubetaminus(2x)\}$ and $T'=72\e n^2\xgamma$.
  From the Markov inequality and \cref{lem:hitting time for gamma for population protocol model}, if $\beta_0\geq 1/2-2x$ for some $x=\omega(\sqrt{\log n/n})$ and $x=o(1)$ (i.e., $2x=\omega(\sqrt{\log n/n})$ and $2x=o(1)$), 
$
  \Pr\qty[\tau>T']
  \leq \frac{\E[\tau]}{T'}\leq \frac{1}{\e}. 
$
Hence, for any $\ell\geq 1$, the Markov property implies that
  \begin{align*}
    \Pr\qty[\tau>\ell T' \mid \tau> (\ell-1) T']
    &=\E\qty[\Pr_{(\ell-1)T'}\qty[\tau>\ell T'] \mid \tau> (\ell-1) T']
    \leq \frac{1}{\e}.
  \end{align*}
Thus, we obtain
\begin{align*}
  \Pr\qty[\tau>\ell T']
  &=\Pr\qty[\tau>\ell T' \mid \tau> (\ell-1) T']\Pr\qty[\tau> (\ell-1) T']
  \leq \frac{1}{\e}\Pr\qty[\tau> (\ell-1) T']
  \leq \cdots \leq \frac{1}{\e^\ell}.
\end{align*}
Applying \cref{lem:taubeta for PP} (\cref{item:hitting times for beta for PP:keeping phase}), 
we obtain 
\begin{align*}
  &\Pr\qty[\taugammaplus> \ell T' \text{ or } \taubetaminus(2x) \leq \ell T']\\
  &\leq \Pr\qty[\qty{\taugammaplus> \ell T' \text{ or } \taubetaminus(2x) \leq \ell T'} \text{ and } \taubetaminus(2x) > \ell T']+ \Pr\qty[\taubetaminus(2x)\leq \ell T']\\
  &\leq \frac{1}{\e^\ell} + \ell T'\exp\qty(-\Omega\qty(nx^2)). 
\end{align*}
Taking $\ell=C\log n$ for a sufficiently large constant $C>0$, we obtain the claim.
\end{proof}

\subsection{Behavior of Maximum Population}
\label{sec:bounded decrease of tnpm}
In this section, we show that 
(i) $\tnpm_t\geq (1-\ctildemaxdown)\tnpm_0$ holds for all $t\leq O(n^2\npm_0/\log n)$ with high probability (\cref{lem:hitting time for tnpm and npm for PP} (\cref{item:tautildemaxdown is large for PP})) and
(ii) $\npm_t\leq (1+\cmaxup)\tnpm_0$ holds for all $t\leq O(n/\npm_0)$ with high probability (\cref{lem:hitting time for tnpm and npm for PP} (\cref{item:taualphamaxup for PP})).
Intuitively, in the population protocol model, both the drift and the variance $\variance$ in the Bernstein condition for $\tnpm_t$ (or $\npm_t$) are roughly a factor of $1/n$ smaller compared to those in the gossip model.

\begin{lemma}[Key properties of $\tnpm_t$ and $\npm_t$]
  \label{lem:hitting time for tnpm and npm for PP}
  Let $x=x(n)$ be an arbitrary positive function such that $x=\omega(\sqrt{\log n / n})$ and $x=o(\log n/\sqrt{n})$.
  Suppose $\beta_0\geq 1/2-x$. 
  We have the following:
  \begin{enumerate}
    \item \label{item:tautildemaxdown is large for PP}
  Let $\constrefs{lem:hitting time for tnpm and npm for PP}{item:tautildemaxdown is large for PP}=\ctildemaxdown/72$ be a positive constant.
  Then, for any $T\leq \constrefs{lem:hitting time for tnpm and npm for PP}{item:tautildemaxdown is large for PP}n^2$, 
  \[\Pr\qty[\tautildemaxdown \leq T \text{ and } \taubetaminus(x)> T]\leq T\exp\qty(-\Omega\qty(\frac{\tnpm_0n^2}{T+n})).\]
  \item \label{item:taualphamaxup for PP}
  Suppose $\npm_0=\omega(\log n/\sqrt{n})$.
  Let $\constrefs{lem:hitting time for tnpm and npm for PP}{item:taualphamaxup for PP}=\frac{\cmaxup}{6(1+\cmaxup)^2}$.
  Then, for any $T\leq \frac{\constrefs{lem:hitting time for tnpm and npm for PP}{item:taualphamaxup for PP}n}{\npm_0}$, we have
  \begin{align*}
    \Pr\qty[\taumaxup\leq \min\qty{T,\taubetaminus(x)}]\leq k\exp\qty(-\Omega\qty(\frac{\npm_0 n^2}{T+n})).
  \end{align*}
  \end{enumerate}
\end{lemma}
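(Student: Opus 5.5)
The plan is to mirror the gossip-model argument (\cref{lem:tildemaxdown}, \cref{lem:hitting time for tnpm and npm}), with drifts and variances rescaled by $1/n$. The stopping time $\taubetaminus(x)$ guarantees $\beta_{t-1}\ge 1/2-x$ before it fires, which is the hypothesis needed to invoke \cref{lem:basic inequalities for tnpm PP}. The population-protocol lemmas do not need a $\psi$ condition, so $\taupsiplus$ can be dropped here.

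\emph{Item (i).} Let $\tau^\uparrow=\max\{\taumaxup,\tautildemaxup\}$ and $\tau=\min\{\tautildemaxdown,\tau^\uparrow,\taubetaminus(x)\}$. For $\tau>t-1$ we have $\npm_{t-1}\le 2\tnpm_0$, as in the gossip proof. We also have $\npm_{t-1}-\gamma_{t-1}/\beta_{t-1}\ge 0$ because $\gamma\le\beta\npm$. So \cref{lem:basic inequalities for tnpm PP} (\cref{item:expectation of tnpm PP}) gives $\E_{t-1}[\tnpm_t]\ge\tnpm_{t-1}-36\tnpm_0/n^2$. Item (\cref{item:bernstein condition of tnpm PP}) of the same lemma, together with \cref{lem:Bernstein condition}, supplies the one-sided $(O(1/n),O(\tnpm_0/n^2))$-Bernstein condition. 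We then apply \cref{lem:Useful drift lemma} (\cref{item:negative drift useful}) with $X_t=\tnpm_t$, $\drift=-36\tnpm_0/n^2$, $I^-=(1-\ctildemaxdown)\tnpm_0$ and $\epsilon=1/2$. The admissible range is $T\le(X_0-I^-)/(-2\drift)=\ctildemaxdown n^2/72$, which is exactly the constant in the statement. This yields
\[
\exp\qty(-\Omega\qty(\frac{\tnpm_0^2}{\tnpm_0T/n^2+\tnpm_0/n}))=\exp\qty(-\Omega\qty(\frac{\tnpm_0n^2}{T+n})),
\]
and explains the $T+n$ in the statement: the jump size $D=O(1/n)$ is no longer negligible against $sT$.

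\emph{Removing $\tau^\uparrow$.} As in the gossip case, we restart at every time $s$ via \cref{lem:Iterative Drift theorem} (\cref{item:bounded decrease under small negative drift}), with $\tau_s^\downarrow$ and $\tau_s^\uparrow$ defined relative to $\tnpm_s$ and $\tau^*=\taubetaminus(x)$. Summing over $s<T$ gives the factor $T$. This restart step is the main obstacle. The subtle point is justifying $\npm\le 2\tnpm_0$ inside each restarted window, since $\tnpm_s\ge\tnpm_0$ could be large. The fix I would use is to normalize by $\tnpm_s$ in the restarted window, as the iterative lemma permits, so the bound becomes $\exp(-\Omega(\tnpm_s n^2/(T+n)))\le\exp(-\Omega(\tnpm_0 n^2/(T+n)))$.

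\emph{Item (ii).} Fix $i$ and let $\tau=\min\{\taumaxup,\taubetaminus(x)\}$. By \cref{lem:basic inequalities for alpha PP}, for $\tau>t-1$,
\[
\E_{t-1}[\alpha_t(i)]-\alpha_{t-1}(i)=\frac{\alpha_{t-1}(i)\qty(\alpha_{t-1}(i)+1-2\beta_{t-1})}{n}\le\frac{\npm_{t-1}(\npm_{t-1}+2x)}{n}\le\frac{3(1+\cmaxup)^2\npm_0^2}{n},
\]
where the last step uses $x=o(\npm_0)$. The fluctuation $\alpha_t(i)-\E_{t-1}[\alpha_t(i)]$ satisfies a $(1/n,O(\npm_0/n^2))$-Bernstein condition. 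We apply \cref{lem:Useful drift lemma} (\cref{item:negative drift useful}) to $X_t=-\alpha_t(i)$ with $I^-=-(1+\cmaxup)\npm_0$. This is valid for $T\le\cmaxup n/(6(1+\cmaxup)^2\npm_0)$ and gives $\exp(-\Omega(\npm_0^2/(\npm_0T/n^2+\npm_0/n)))=\exp(-\Omega(\npm_0n^2/(T+n)))$. A union bound over $i\in[k]$ gives the factor $k$.
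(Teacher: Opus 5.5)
Your proposal is correct and matches the paper's proof essentially step for step. The three components are the same:
\begin{itemize}
\item the single-window negative-drift bound from \cref{lem:basic inequalities for tnpm PP}, with $\tau^\uparrow=\max\{\taumaxup,\tautildemaxup\}$, $\drift=-36\tnpm_0/n^2$ and horizon $(X_0-I^-)/(-2\drift)=\ctildemaxdown n^2/72$;
\item the restart via \cref{lem:Iterative Drift theorem} (\cref{item:bounded decrease under small negative drift}), where the paper handles the point you flagged exactly as you propose: it applies the one-window lemma from time $s$ with initial value $\tnpm_s$ and then uses $\tnpm_s\ge\tnpm_0$;
\item for item (ii), the per-opinion drift bound for $X_t=-\alpha_t(i)$ followed by a union bound over $i\in[k]$.
\end{itemize}
Your threshold $I^-=(1-\ctildemaxdown)\tnpm_0$ is the intended one; the paper's text has a sign typo at that point.
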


The key tool for the proof of \cref{lem:hitting time for tnpm and npm for PP} (\cref{item:tautildemaxdown is large for PP}) is the following lemma.
\begin{lemma}
  \label{lem:tildemaxdown for population protocol model}
  Let $x=x(n)$ be an arbitrary positive function such that $x=\omega(\sqrt{\log n / n})$ and $x=o(\log n/\sqrt{n})$.
  Suppose $\beta_0\geq 1/2-x$.
  Let $\tau^\uparrow=\max\{\taumaxup,\tautildemaxup\}$.
  Let $\constrefs{lem:hitting time for tnpm and npm for PP}{item:tautildemaxdown is large for PP}=\ctildemaxdown/72$ be a positive constant defined in \cref{lem:hitting time for tnpm and npm for PP} (\cref{item:tautildemaxdown is large for PP}).
  Then, for any $T\leq \constrefs{lem:hitting time for tnpm and npm for PP}{item:tautildemaxdown is large for PP}n^2$, 
  \begin{align*}
    \Pr\qty[\tautildemaxdown \leq \min\{T,\tau^\uparrow,\taubetaminus(x)\}] \leq \exp\qty(- \Omega\qty(\frac{\tnpm_0 n^2}{T+n})).
  \end{align*}
\end{lemma}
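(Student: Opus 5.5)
We mirror the proof of \cref{lem:tildemaxdown} for the gossip model, replacing the gossip estimates by their population protocol counterparts from \cref{lem:basic inequalities for tnpm PP}. Set $\tau=\min\{\tautildemaxdown,\tau^\uparrow,\taubetaminus(x)\}$ and $X_t=\tnpm_t$.

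\textbf{Drift.} On $\{\tau>t-1\}$ we have $\beta_{t-1}\ge 1/2-x=1/2-o(1)$. Hence \cref{lem:basic inequalities for tnpm PP} (\cref{item:expectation of tnpm PP}) applies. It is not immediate that $\npm_{t-1}-\gamma_{t-1}/\beta_{t-1}\ge 0$, but this follows from $\gamma_{t-1}\le \npm_{t-1}\beta_{t-1}$. Therefore the multiplicative factor is at least $1$, and
\[
\E_{t-1}[\tnpm_t]\ge \tnpm_{t-1}-\frac{18\npm_{t-1}}{n^2}.
\]
As in the gossip proof, $\tau^\uparrow>t-1$ gives $\npm_{t-1}\le 2\tnpm_0$. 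Indeed, either $\npm_{t-1}\le(1+\cmaxup)\npm_0\le 2\tnpm_0$, or $\npm_{t-1}\le(1+\ctildemaxup)\tnpm_0\beta_{t-1}\le 2\tnpm_0$. Thus condition \cref{item:C1} holds with $\drift=-36\tnpm_0/n^2$.

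\textbf{Concentration.} By \cref{lem:basic inequalities for tnpm PP} (\cref{item:bernstein condition of tnpm PP}), the variable $\tnpm_{t-1}(1+\cdots)-\tnpm_t-18\npm_{t-1}/n^2$ satisfies a one-sided $(C/n,C\npm_{t-1}/n^2)$-Bernstein condition. The variable $\indicator_{\tau>t-1}(X_{t-1}+\drift-X_t)$ is dominated by this quantity on $\{\tau>t-1\}$, using $\npm_{t-1}\le 2\tnpm_0$. Combining this with \cref{lem:Bernstein condition} (\cref{item:BC for upper bounded rv,item:BC for dominated rv}) gives condition \cref{item:C2} with parameters $(O(1/n),O(\tnpm_0/n^2))$.

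\textbf{Conclusion.} Apply \cref{lem:Useful drift lemma} (\cref{item:negative drift useful}) with $I^-=(1-\ctildemaxdown)\tnpm_0$, so that $X_0-I^-=\ctildemaxdown\tnpm_0$, and with $\epsilon=1/2$. The admissible range is
\[
T\le \frac{(1/2)\,\ctildemaxdown\tnpm_0}{36\tnpm_0/n^2}=\frac{\ctildemaxdown}{72}\,n^2=\constrefs{lem:hitting time for tnpm and npm for PP}{item:tautildemaxdown is large for PP}\,n^2,
\]
which matches the constant in the statement. The resulting bound is
\[
\exp\qty(-\Omega\qty(\frac{\tnpm_0^2}{\frac{\tnpm_0}{n^2}T+\frac{\tnpm_0}{n}}))=\exp\qty(-\Omega\qty(\frac{\tnpm_0n^2}{T+n})).
\]
The $+n$ in the denominator reflects the jump-size term $D=O(1/n)$.

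\textbf{Main obstacle.} The delicate step is the domination needed for \cref{item:C2}. The Bernstein statement is phrased with $\npm_{t-1}$ rather than $\tnpm_0$, so it must be made uniform through the stopping time. This relies on $\tau^\uparrow$ being a maximum of two stopping times, so that either bound on $\npm_{t-1}$ is available before $\tau$. Everything else is routine bookkeeping.
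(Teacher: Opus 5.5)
Your proposal is correct and follows the paper's proof essentially step for step. It obtains the same drift bound $\E_{t-1}[\tnpm_t]\ge\tnpm_{t-1}-36\tnpm_0/n^2$ from \cref{lem:basic inequalities for tnpm PP} together with $\npm_{t-1}\le 2\tnpm_0$ on $\{\tau^\uparrow>t-1\}$, uses the same pointwise domination to get the one-sided Bernstein condition, and applies \cref{lem:Useful drift lemma} (\cref{item:negative drift useful}) with $\epsilon=1/2$, giving exactly the range $T\le \ctildemaxdown n^2/72$; you also correctly take $I^-=(1-\ctildemaxdown)\tnpm_0$, where the paper's text has a sign typo.
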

\begin{proof}
  Let $\tau^\uparrow=\max\{\taumaxup,\tautildemaxup\}$ and $\tau=\min\{\tautildemaxdown,\tau^\uparrow,\taubetaminus(x)\}$.
  From \cref{lem:basic inequalities for tnpm PP} (\cref{item:expectation of tnpm PP}),
for $\tau>t-1$, we have
\begin{align*}
  \E_{t-1}\qty[\tnpm_t]
  &\geq \tnpm_{t-1}\qty(1+\frac{\npm_{t-1}-\gamma_{t-1}/\beta_{t-1}}{2n})-\frac{18\npm_{t-1}}{n^2}
  \geq \tnpm_{t-1}-\frac{36\tnpm_{0}}{n^2}.
\end{align*}
Note that
$\npm_{t-1}\leq (1+\cmaxup)\npm_{0}\leq 2\tnpm_{0}$ 
or $\npm_{t-1}\leq (1+\ctildemaxup)\tnpm_{0}\beta_{t-1}\leq 2\tnpm_{0}$, i.e., $\npm_{t-1}\leq 2\tnpm_{0}$ holds for $\tau^\uparrow>t-1$.

Hence, letting $X_t=\tnpm_t$ and $\drift = -\frac{36\tnpm_{0}}{n^2}< 0$, we have
  \begin{align*}
    \indicator_{\tau>t-1}\qty(X_{t-1}+\drift-\E_{t-1}[X_t])
    =\indicator_{\tau>t-1}\qty(\tnpm_{t-1}-\frac{36\tnpm_{0}}{n^2}-\E_{t-1}[\tnpm_t])\leq 0.
  \end{align*}
  Furthermore, 
  \[\indicator_{\tau>t-1}\qty(X_{t-1}+\drift-X_t)\leq \indicator_{\tau>t-1}\qty(\tnpm_{t-1}\qty(1+\frac{\npm_{t-1}-\gamma_{t-1}/\beta_{t-1}}{2n})-\frac{18\npm_{t-1}}{n^2}-\tnpm_t)\]
   satisfies one-sided $\qty(O\qty(\frac{1}{n}),O\qty(\frac{\npm_0}{n^2}))$-Bernstein condition. 
Note that we use  \cref{lem:basic inequalities for tnpm PP} (\cref{item:bernstein condition of tnpm PP}) and \cref{lem:Bernstein condition} (\cref{item:BC for upper bounded rv,item:BC for linear transformation}).

  Applying \cref{lem:Useful drift lemma} (\cref{item:negative drift useful}) with $I^-=(1+\ctildemaxdown)\tnpm_{0}$, $I^+=(1+\ctildemaxup)\tnpm_{0}$, and $T\leq \frac{X_0-I^-}{-2\drift}=\frac{\ctildemaxdown}{72}n^2=\constrefs{lem:hitting time for tnpm and npm for PP}{item:tautildemaxdown is large for PP}n^2$, 
           \begin{align*}
            \Pr\qty[\tautildemaxdown \leq \min\{T,\tau^\uparrow,\taubetaminus(x)\}] 
            \le \exp\qty(- \Omega\qty(\frac{\tnpm_0 n^2}{T+n})).
           \end{align*}
\end{proof}

\begin{proof}[Proof of \cref{lem:hitting time for tnpm and npm for PP} (\cref{item:tautildemaxdown is large for PP})]
  For $s\geq 0$, let $\tau_s^\downarrow=\inf\{t\geq s: \tnpm_t\leq (1+\ctildemaxdown)\tnpm_{s}\}$ and $\tau_s^\uparrow=\inf\{t\geq s: \tnpm_t\geq (1+\ctildemaxup)\tnpm_{s}\}$.
  Applying \cref{lem:Iterative Drift theorem} (\cref{item:bounded decrease under small negative drift}) with $T\leq \constrefs{lem:hitting time for tnpm and npm for PP}{item:tautildemaxdown is large for PP}n^2$, we have
  \begin{align*}
    \Pr\qty[\tautildemaxdown \leq T \text{ and } \taubetaminus(x)> T]
    &\leq 
    \sum_{s=0}^{T-1}\E\qty[\indicator_{\tnpm_s\geq \tnpm_0 \text{ and } \taubetaminus(x)> s}\Pr_s\qty[\tau_s^\downarrow \leq \min\{T,\tau_s^\uparrow,\taubetaminus(x)\}]]
     \\
     &\leq \sum_{s=0}^{T-1}\E\qty[\indicator_{\tnpm_s\geq \tnpm_0 \text{ and } \taubetaminus(x)> s}\exp\qty(-\Omega\qty(\frac{\tnpm_sn^2}{T+n}))]
     \\
     &\leq T\exp\qty(-\Omega\qty(\frac{\tnpm_0n^2}{T+n})).
  \end{align*}
  Note that we apply \cref{lem:tildemaxdown for population protocol model} in the second inequality.
\end{proof}


\begin{proof}[Proof of \cref{lem:hitting time for tnpm and npm for PP} (\cref{item:taualphamaxup for PP})]
  Let $\tau=\min\{\taumaxup,\taubetaminus(x)\}$.
  For $\tau > t-1$, similarly to \cref{eq:alphamaxup drift}, we have
  \begin{align*}
    \E_{t-1}[\alpha_t(i)]-\alpha_{t-1}(i)
    &=\alpha_{t-1}(i) \cdot \frac{\alpha_{t-1}(i)+1-2\beta_{t-1}}{n}
    \leq \alpha_{t-1}(i)+\frac{3(1+\cmaxup)^2(\npm_{0})^2}{n}.
  \end{align*}
  Note that we use \cref{lem:basic inequalities for alpha PP} (\cref{item:expectation of alpha PP}).
  Hence, letting $X_t=-\alpha_t(i)$ and $\drift = -\frac{3(1+\cmaxup)^2(\npm_{0})^2}{n}<0$, we have
  \[
    \indicator_{\tau>t-1}\qty(X_{t-1}+\drift-\E_{t-1}[X_t])
    =\indicator_{\tau>t-1}\qty(\E_{t-1}[\alpha_t(i)]-\alpha_{t-1}(i)-\frac{3(1+\cmaxup)^2(\npm_{0})^2}{n})\leq 0
  \]
  and $\indicator_{\tau>t-1}\qty(\E_{t-1}[X_t]-X_t)=\indicator_{\tau>t-1}\qty(\alpha_t(i)-\E_{t-1}[\alpha_t(i)])$ satisfies $\qty(1/n,O(\npm_0/n^2))$-Bernstein condition.
  Note that we use  \cref{lem:basic inequalities for alpha PP} (\cref{item:bernstein condition of alpha PP}) and \cref{lem:Bernstein condition} (\cref{item:BC for upper bounded rv,item:BC for linear transformation}). 

  Let $\tau_i=\inf\{t\geq 0: \alpha_t(i)\geq (1+\cmaxup)\npm_0\}=\inf\{t\geq 0: X_t\leq -(1+\cmaxup)\npm_0\}$.
  Recall $\constrefs{lem:hitting time for tnpm and npm for PP}{item:taualphamaxup for PP}=\frac{\cmaxup}{6(1+\cmaxup)^2}$.
  Applying \cref{lem:Useful drift lemma} (\cref{item:negative drift useful}) with 
 $I^-=-(1+\cmaxup)\npm_0$, for $T\leq \frac{\constrefs{lem:hitting time for tnpm and npm for PP}{item:taualphamaxup for PP}n}{\npm_0}= \frac{X_0-I^-}{-2\drift}$, we have
 \begin{align*}
  \Pr\qty[\tau_i\leq \min\{T,\tau\}]\leq \exp\qty(-\Omega\qty(\frac{\npm_0 n^2}{T+n})).
 \end{align*}
 Thus, applying the union bound,
 \begin{align*}
  \Pr\qty[\taumaxup\leq \min\{T,\tau\}]
  &\leq \Pr\qty[\exists i\in [k]: \tau_i\leq \min\{T,\tau\}]
  \leq k\exp\qty(-\Omega\qty(\frac{\npm_0 n^2}{T+n})).
 \end{align*}
\end{proof}

The following lemma follows naturally from the preceding discussion.
Note that the bounds concerning to $\beta_t$ and $\psi_t$ are different from those for the gossip model.
\begin{lemma}
  \label{lem:tauinitial lemma for PP}
  Let $x=x(n)$ be an arbitrary positive function such that $x=\omega(\sqrt{\log n / n})$ and $x=o(\log n/\sqrt{n})$.
  Suppose that $\beta_0\geq 1/2-x$ and $\psi_0\leq x$. 
  Then, for any $T\leq \constrefs{lem:hitting time for tnpm and npm for PP}{item:tautildemaxdown is large for PP}n^2$,
  \begin{align*}
    \Pr\qty[\min\{\tautildemaxdown,\taubetaminus(2x),\taupsiplus(2x)\} \leq  T] \leq T\exp\qty(-\Omega\qty(\frac{n^2\tnpm_0}{T+n}))+T\exp\qty(-\Omega(nx^2)).
  \end{align*}
\end{lemma}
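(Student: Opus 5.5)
This is the population-protocol analogue of \cref{lem:tauinitial lemma}, and I would prove it the same way: split the event by the decomposition \cref{eq:min of two stopping times}. Applying it twice gives
\begin{align*}
&\Pr\qty[\min\{\tautildemaxdown,\taubetaminus(2x),\taupsiplus(2x)\}\le T]\\
&\le \Pr\qty[\tautildemaxdown\le T \text{ and } \taubetaminus(2x)>T]+\Pr\qty[\taubetaminus(2x)\le T]+\Pr\qty[\taupsiplus(2x)\le T].
\end{align*}
The last two terms are handled directly. Since $\beta_0\ge 1/2-x$, \cref{lem:taubeta for PP} (\cref{item:hitting times for beta for PP:keeping phase}) bounds the middle term by $T\exp(-\Omega(nx^2))$. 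Since $\psi_0\le x$, \cref{lem:taupsi for PP} (\cref{item:taupsiplus for PP}) gives the same bound for the third term. Unlike the gossip case, no conditioning on $\taupsiplus$ is needed for the $\beta$ bound, because the PP lemma for $\beta$ is unconditional.

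For the first term I would use \cref{lem:hitting time for tnpm and npm for PP} (\cref{item:tautildemaxdown is large for PP}). It applies with threshold parameter $2x$ in place of $x$: the function $2x$ still satisfies $2x=\omega(\sqrt{\log n/n})$ and $2x=o(\log n/\sqrt n)$, and $\beta_0\ge 1/2-x\ge 1/2-2x$. It therefore gives
\[
\Pr\qty[\tautildemaxdown\le T \text{ and } \taubetaminus(2x)>T]\le T\exp\qty(-\Omega\qty(\frac{\tnpm_0 n^2}{T+n}))
\]
for all $T\le \constrefs{lem:hitting time for tnpm and npm for PP}{item:tautildemaxdown is large for PP}n^2$. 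The hidden constants are unchanged, since the drift computation in \cref{lem:tildemaxdown for population protocol model} only uses $\beta_{t-1}\ge 1/2-o(1)$. Summing the three bounds gives the claim. The two $T\exp(-\Omega(nx^2))$ terms merge into one by absorbing constants.

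The only point needing care is this reparametrization of the threshold in \cref{lem:hitting time for tnpm and npm for PP}. If one insists on using it literally with $x$, one would instead need $\Pr[\taubetaminus(x)\le T]$, which is not controlled. So the key step is noting that the lemma's hypotheses are stated for an arbitrary function in the admissible range and hence hold for $2x$. Everything else is a union bound.
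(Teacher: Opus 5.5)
Your proof is correct and is essentially the paper's own argument: the same three-term split via \cref{eq:min of two stopping times}, with the $\tautildemaxdown$ term bounded by \cref{lem:hitting time for tnpm and npm for PP} (\cref{item:tautildemaxdown is large for PP}), and the other two terms bounded by \cref{lem:taubeta for PP} (\cref{item:hitting times for beta for PP:keeping phase}) and \cref{lem:taupsi for PP} (\cref{item:taupsiplus for PP}). You also note explicitly that the first lemma must be invoked with threshold $2x$, which is legitimate since $2x$ lies in the admissible range and $\beta_0\ge 1/2-2x$; this makes precise a step the paper leaves implicit.
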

\begin{proof}
  Combining \cref{eq:min of two stopping times,lem:hitting time for tnpm and npm for PP} (\cref{item:tautildemaxdown is large for PP}), \cref{lem:taubeta for PP}, \cref{lem:taupsi for PP}, 
  \begin{align*}
    &\Pr\qty[\min\{\tautildemaxdown,\taubetaminus(2x),\taupsiplus(2x)\} \leq T] \\
    &\leq \Pr\qty[\tautildemaxdown \leq T \text{ and }\min\{\taubetaminus(2x),\taupsiplus(2x)\}>T]+\Pr\qty[\taubetaminus(2x)\leq T]+\Pr\qty[\taupsiplus(2x)\leq T] \nonumber\\
    &\leq T\exp\qty(-\Omega\qty(\frac{n^2\tnpm_0}{T+n}))
    +T\exp\qty(-\Omega\qty(nx^2)). 
  \end{align*}
\end{proof}

\subsection{Behavior of Gap between Two Opinions}
Recall the definition of the weak opinion and its stopping time in \cref{def:weak}.
The main results of this section is the following lemma.

\begin{lemma}[Either of two non-weak opinions becomes weak]
  \label{lem:pair of non-weak opinions for PP}
  Let $i,j\in [k]$ be an arbitrary pair of two non-weak opinions.
  Let $x=x(n)$ be an arbitrary positive function such that $x=\omega(\sqrt{\log n / n})$ and $x=o(\log n/\sqrt{n})$.
  Suppose $\psi_0\leq x$, $\beta_0\geq 1/2-x$, and $\npm_0=\omega(\log n/\sqrt{n})$.
  We have the following:
  \begin{enumerate}
    \item \label{item:pair of non-weak opinions becomes weak for PP}
    Let $C$ be an arbitrary positive constant.
    Then, for some $T=O(n\log n/\npm_0)$, we have
    \begin{align*}
        \Pr\qty[\min\qty{\taudeltaplus\qty(\sqrt{\frac{C\log n}{n}}),\tauiweak,\taujweak} > T \text{ or } \min\{\tautildemaxdown,\taubetaminus(2x),\taupsiplus(2x)\} \leq T] \le n^{-10}.
    \end{align*}
    \item \label{item:delta multipricative with initial bias for PP}
    Suppose $\delta_0(i,j)\geq C\sqrt{\log n/n}$ for a sufficiently large constant $C>0$.
  Then, for some $T=O\qty(n\log n/\npm_0)$,
  \begin{align*}
      \Pr\qty[ \taujweak>T \text{ or } \min\{\tautildemaxdown,\taubetaminus(2x),\taupsiplus(2x)\} \leq T  ] \le n^{-10}.
  \end{align*}
  \end{enumerate}
\end{lemma}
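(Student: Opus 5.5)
We follow the structure of the gossip proof of \cref{lem:pair of non-weak opinions} step by step, rescaling time by a factor $n$. Drifts and Bernstein variance proxies in \cref{sec:basic properties in the population protocol model} are a factor $1/n$ smaller than in the gossip model; the jump parameter stays $O(1/n)$.

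\textbf{Multiplicative drift.} Let $\tau^*=\min\{\taujweak,\taumaxup,\tautildemaxdown,\taubetaminus(2x),\taupsiplus(2x)\}$. For $t-1<\tau^*$, \cref{lem:basic inequalities for delta PP} (\cref{item:expectation of delta PP}) gives
\[
\E_{t-1}[\delta_t]=\delta_{t-1}\qty(1+\tfrac{\alpha_{t-1}(i)+\alpha_{t-1}(j)+1-2\beta_{t-1}}{n}).
\]
The computation \cref{eq:drift of delta key} applies verbatim, because it uses only $2\beta-1\le(\gamma+\psi)/\beta$, non-weakness of $i,j$, and $\tnpm_{t-1}\ge(1-\ctildemaxdown)\tnpm_0$. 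With $\psi\le 2x=o(\npm_0)$ it yields the additive drift $\Omega(\delta_0\npm_0/n)$ while $\delta_t\ge(1-\cdeltadown)\delta_0$. The Bernstein parameters are $(O(1/n),O(\npm_0/n^2))$. Apply \cref{lem:Useful drift lemma} (\cref{item:positive drift useful}) over a window $T=\Theta(n/\npm_0)$, and control $\taumaxup$ with \cref{lem:hitting time for tnpm and npm for PP} (\cref{item:taualphamaxup for PP}). Together these give the analogue of \cref{lem:delta multiplicative and additive drift} (\cref{item:taudeltaup}): within $\Theta(n/\npm_0)$ steps, $\delta$ grows by a factor $1+\cdeltaup$, with failure probability $\exp(-\Omega(n\delta_0^2))+k\exp(-\Omega(n\npm_0^2))$. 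Note that $\frac{(\delta_0)^2}{(\npm_0/n^2)T+\delta_0/n}=\Theta(n\delta_0^2)$. Iterating $\Theta(\log n)$ times with \cref{lem:Iterative Drift theorem} (\cref{item:iterative updrift}) proves item~(2), as in the gossip case.

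\textbf{Additive drift to $\Omega(1/\sqrt n)$.} For item~(1), first use \cref{lem:basic inequalities for delta PP} (\cref{item:variance of delta}) on $\{\abs{\delta}<\xdelta\}$. Since $4\delta^2/n^2=O(1/n^3)$ is negligible, this gives
\[
\E_{t-1}[\delta_t^2]\ge\delta_{t-1}^2+\Omega(\npm_0/n^2).
\]
The optional stopping argument of \cref{lem:taudeltaplus} then gives $\E[\tau]\le O(n^2\E[\delta_\tau^2]/\npm_0)$. Overshoot is easy here: a single step changes $\delta$ by at most $2/n\ll\xdelta$, so $\delta_\tau^2\le(\xdelta+2/n)^2$. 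This removes the need for both \cref{lem:taudeltaplus bounded jump} and the CLT case \cref{lem:delta CLT}. Hence $\E[\tau]=O(n/\npm_0)$, and Markov's inequality gives a constant success probability within $\Theta(n/\npm_0)$ steps. The residual probability that $\taumaxup$ occurs first is handled by \cref{lem:hitting time for tnpm and npm for PP} (\cref{item:taualphamaxup for PP}).

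\textbf{Boosting.} Feed these two facts into \cref{lem:nazo lemma}, with $\varphi=\sqrt n\abs{\delta_t}$, $x_0=\cdeltaplus$, $x^*=\sqrt{C\log n}$, and window $\Theta(n/\npm_0)$. This yields $\taudeltaplus(\sqrt{C\log n/n})$ or weakness within $O(n\log n/\npm_0)$ steps, except with probability $n^{-20}$. Finally, remove the conditioning on $\min\{\tautildemaxdown,\taubetaminus(2x),\taupsiplus(2x)\}$ via \cref{lem:tauinitial lemma for PP}. This is valid because $T=O(n\log n/\npm_0)\le\constrefs{lem:hitting time for tnpm and npm for PP}{item:tautildemaxdown is large for PP}n^2$ and $\frac{n^2\tnpm_0}{T}=\Omega(n\npm_0^2/\log n)=\omega(\log n)$.

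\textbf{Main obstacle.} The delicate step is the drift bound in the population-protocol setting. The thresholds now carry the margins $2x$, so we must check that $1-2\beta_{t-1}\ge-(\npm_{t-1}+O(x))$ still leaves $0.8\npm_{t-1}-O(x)=\Omega(\npm_0)$. This holds since $x=o(\log n/\sqrt n)=o(\npm_0)$. The tail exponents from \cref{lem:Useful drift lemma} also have the form $T\npm_0/n^2+1/n$ in the denominator, and we must verify they remain $\Omega(n\delta_0^2)$.
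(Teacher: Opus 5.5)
Your proposal is correct and follows the paper's proof essentially step for step. The pieces are: the multiplicative- and additive-drift facts of \cref{lem:delta multiplicative and additive drift for PP}, with the overshoot controlled by the $O(1/n)$ one-step change of $\delta_t$ (which is exactly how the paper avoids \cref{lem:taudeltaplus bounded jump} and \cref{lem:delta CLT} in this model); boosting via \cref{lem:nazo lemma} for item~(1); iteration via \cref{lem:Iterative Drift theorem} for item~(2); and removal of the conditioning with \cref{lem:tauinitial lemma for PP}. The only cosmetic difference is that you also stop at $\taumaxup$ in the additive step, which the paper omits there because no upper bound on $\npm_t$ is needed once the jumps of $\delta_t$ are bounded.
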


In the case of the population protocol model, both the drift and the variance $\variance$ in the Bernstein condition for $\delta_t$ become smaller by about a factor of $1/n$ compared to the gossip model.
We show the following lemma.
\begin{lemma}[Multiplicative and additive drifts of $\delta_t$]
  \label{lem:delta multiplicative and additive drift for PP}
  Let $i,j\in [k]$ be an arbitrary pair of two non-weak opinions.
  Let $x=x(n)$ be an arbitrary positive function such that $x=\omega(\sqrt{\log n / n})$ and $x=o(\log n/\sqrt{n})$.
  Suppose that $\beta_0\geq 1/2-x$, $\psi_0\leq x$, and $\npm_0=\omega(\log n/\sqrt{n})$.
  We have the following:
  \begin{enumerate}
    \item \label{item:taudeltaup for population protocol model}
    Suppose $\delta_0(i,j)\geq 0$. 
     Let $\cdeltaup= \constrefs{lem:hitting time for tnpm and npm for PP}{item:taualphamaxup for PP}\frac{(1-\ctildemaxdown)(1-\cdeltadown)\qty(1-2\cweak)}{12}$, where $\constrefs{lem:hitting time for tnpm and npm for PP}{item:taualphamaxup for PP}$ is a positive constant defined in \cref{lem:hitting time for tnpm and npm for PP} (\cref{item:taualphamaxup for PP}).
    Then, 
    \begin{align*}
     &\Pr\qty[\min\{\tau_\delta^\uparrow,\taujweak,\tautildemaxdown,\taubetaminus(2x),\taupsiplus(2x)\} >\frac{\constrefs{lem:hitting time for tnpm and npm for PP}{item:taualphamaxup for PP} n}{\npm_0}]\\
     &\leq \exp\qty(-\Omega\qty(n\delta_0(i,j)^2)) + n \exp\qty(-\Omega(n(\npm_0)^2)).
   \end{align*}
    \item \label{item:taudeltaplus const prob for population protocol model}
    Let $\xdelta=c/\sqrt{n}$, where $c\in (0,1)$ is an arbitrary constant.
    Let $\constrefs{lem:delta multiplicative and additive drift for PP}{item:taudeltaplus const prob for population protocol model}=\frac{24c}{(1-2\cweak)(1-\ctildemaxdown)}$.
    Then, 
    \begin{align*}
      \Pr\qty[\min\qty{\taudeltaplus,\tauiweak,\taujweak,\tautildemaxdown,\taubetaminus(x),\taupsiplus(x)} >\frac{\constrefs{lem:delta multiplicative and additive drift for PP}{item:taudeltaplus const prob for population protocol model} n}{\npm_0}]\leq \frac{1}{2}.
    \end{align*}
    
  \end{enumerate}
\end{lemma}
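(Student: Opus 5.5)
The plan is to mirror the gossip-model proof of \cref{lem:delta multiplicative and additive drift}. The only change is a rescaling of time by a factor $n$: in the population protocol model both the drift and the Bernstein variance of $\delta_t$ are smaller by a factor $1/n$ (\cref{lem:basic inequalities for delta PP}).

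For \cref{item:taudeltaup for population protocol model}, I set $\tau^*=\min\{\taujweak,\taumaxup,\tautildemaxdown,\taubetaminus(2x),\taupsiplus(2x)\}$ and $\tau=\min\{\taudeltaup,\taudeltadown,\tau^*\}$. For $t-1<\tau$ I would rewrite $\alpha_{t-1}(i)+\alpha_{t-1}(j)+1-2\beta_{t-1}$ exactly as in \cref{eq:drift of delta key}, using $1-2\beta_{t-1}=-(\gamma_{t-1}+\psi_{t-1})/\beta_{t-1}+(1-\beta_{t-1})\cdot 0$. Concretely, $\psi=\beta(2\beta-1)-\gamma$ gives $1-2\beta=-(\gamma+\psi)/\beta$. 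I would then use $\gamma_{t-1}\le\beta_{t-1}\npm_{t-1}$, non-weakness of $i,j$, $\psi_{t-1}\le 2x=o(\npm_0)$, and $\beta_{t-1}\tnpm_{t-1}\ge(1-\ctildemaxdown)\tnpm_0/3$. This gives a lower bound of $\frac{(1-\ctildemaxdown)(1-2\cweak)}{6}\npm_0$. By \cref{lem:basic inequalities for delta PP} (\cref{item:expectation of delta PP}) with $\epsilon=0$, the result is an additive drift $\drift=\frac{(1-\ctildemaxdown)(1-\cdeltadown)(1-2\cweak)}{6}\cdot\frac{\delta_0\npm_0}{n}$. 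The centred increment is $(O(1/n),O(\npm_0/n^2))$-Bernstein, using $\alpha_{t-1}(i),\alpha_{t-1}(j)\le(1+\cmaxup)\npm_0$ before $\taumaxup$. I would then apply \cref{lem:Useful drift lemma} (\cref{item:positive drift useful}) with $I^\pm=(1\pm c)\delta_0$ and $T=\constrefs{lem:hitting time for tnpm and npm for PP}{item:taualphamaxup for PP}n/\npm_0=2(I^+-\delta_0)/\drift$. Both exponents become $\Omega(\delta_0^2/(\npm_0 T/n^2+\delta_0/n))=\Omega(n\delta_0^2)$. Finally, I would remove $\taumaxup$ from $\tau^*$ via \cref{lem:hitting time for tnpm and npm for PP} (\cref{item:taualphamaxup for PP}). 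At $T=\Theta(n/\npm_0)$ that lemma contributes $k\exp(-\Omega(\npm_0 n^2/T))\le n\exp(-\Omega(n(\npm_0)^2))$.

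For \cref{item:taudeltaplus const prob for population protocol model}, I would use a second-moment/optional-stopping argument as in \cref{lem:taudeltaplus}. Let $\tau=\min\{\taudeltaplus,\tauiweak,\taujweak,\tautildemaxdown,\taubetaminus(x),\taupsiplus(x)\}$. Before $\tau$, the multiplicative factor in $\E_{t-1}[\delta_t]$ is at least $1$, by the same positivity computation as \cref{eq:drift of delta plus key}. Hence $\E_{t-1}[\delta_t^2]\ge\delta_{t-1}^2+\Var_{t-1}[\delta_t]$. By \cref{lem:basic inequalities for delta PP} (\cref{item:variance of delta PP}), $\Var_{t-1}[\delta_t]\ge\frac{\alpha(i)(1-\alpha(i))+\alpha(j)(1-\alpha(j))-4\delta_{t-1}^2}{n^2}$. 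Non-weakness gives $\alpha(i),\alpha(j)\ge(1-\cweak)\npm_{t-1}$ and $\npm_{t-1}\le 1/(2(1-\cweak))$, so $1-\alpha\ge\Omega(1)$. Also $\delta_{t-1}^2\le c^2/n=o(\npm_0)$, and $\npm_{t-1}\ge(1-\ctildemaxdown)\tnpm_0\beta_{t-1}\ge(1-\ctildemaxdown)\npm_0/3$. Together these give a drift of at least $R=\Omega(\npm_0/n^2)$; I would arrange constants so that $R\ge c(1-2\cweak)(1-\ctildemaxdown)\npm_0/(12 n^2)$, matching $\constrefs{lem:delta multiplicative and additive drift for PP}{item:taudeltaplus const prob for population protocol model}$. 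Applying \cref{thm:OST} to $\delta_{t\wedge\tau}^2-R(t\wedge\tau)$ yields $\E[\tau]\le\E[\delta_\tau^2]/R$.

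Bounding $\E[\delta_\tau^2]$ is much easier than in the gossip model, because a single interaction changes $\delta$ by at most $2/n$. Hence $|\delta_\tau|\le c/\sqrt n+2/n\le 2c/\sqrt n$, giving $\E[\delta_\tau^2]\le 4c^2/n$. No analogue of \cref{lem:taudeltaplus bounded jump} or of the CLT case split is needed. This gives $\E[\tau]\le 4c^2 n/(R n)=O(cn/\npm_0)$, and Markov's inequality with the stated constant yields probability at most $1/2$.

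The main obstacle is the constant bookkeeping in the variance lower bound: the $-4\delta_{t-1}^2/n^2$ term and the factor $(1-\alpha)$ must be absorbed so that the final constant is the stated $\frac{24c}{(1-2\cweak)(1-\ctildemaxdown)}$. If the constants do not match exactly, I would adjust the Markov step, since any $\Theta(cn/\npm_0)$ bound suffices structurally.
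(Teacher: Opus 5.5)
Your proposal is the paper's proof essentially step for step. In Part 1 you use the same $\tau^*$, the same drift bound via \cref{eq:drift of delta key}, \cref{lem:Useful drift lemma} with $T=2(I^+-\delta_0)/\drift$, and the same removal of $\taumaxup$ via \cref{lem:hitting time for tnpm and npm for PP}. Part 2 is the same second-moment/optional-stopping argument, and the paper likewise bounds $\E[\delta_\tau^2]$ just by the one-step jump, with no bounded-jump lemma or CLT split.

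The one thing that does not close as written is the constant in Part 2, which you flagged yourself. Write $C=\frac{24c}{(1-2\cweak)(1-\ctildemaxdown)}$. The statement fixes both the time threshold $Cn/\npm_0$ and the probability $1/2$, so Markov needs $\E[\tau]\le Cn/(2\npm_0)$; you cannot ``adjust the Markov step''. Your proposed target $\drift\ge c(1-2\cweak)(1-\ctildemaxdown)\npm_0/(12n^2)$, combined with $\E[\delta_\tau^2]\le 4c^2/n$, gives only $\E[\tau]\le 2Cn/\npm_0$, which makes the Markov bound vacuous.

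The repair is what the paper does.
\begin{itemize}
\item Your own variance computation already yields $\drift=\frac{(1-2\cweak)(1-\ctildemaxdown)\npm_0}{6n^2}$, with no factor $c$.
\item For $\tau\ge 1$ we have $\abs{\delta_{\tau-1}}<c/\sqrt n$, and one interaction moves $\delta$ by at most $1/n$ (your $2/n$ also works). Hence $\E[\delta_\tau^2]\le(c/\sqrt n+2/n)^2=(1+o(1))c^2/n$. The loss comes only from rounding $c/\sqrt n+2/n$ up to $2c/\sqrt n$.
\item Then $\E[\tau]\le(1+o(1))\frac{6c^2n}{(1-2\cweak)(1-\ctildemaxdown)\npm_0}\le\frac{Cn}{2\npm_0}$, since $c<1$.
\end{itemize}
If you kept the crude $4c^2/n$ even with the correct $\drift$, you would get only $\Pr\le c$, which is insufficient for $c>1/2$.
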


\begin{proof}[Proof of \cref{lem:delta multiplicative and additive drift for PP} (\cref{item:taudeltaup for population protocol model})]
Let $\tau^*=\min\{\taujweak,\taumaxup,\tautildemaxdown,\taubetaminus(2x),\taupsiplus(2x)\}$ and $\tau=\min\{\taudeltaup,\taudeltadown,\tau^*\}$.
For $\tau>t-1$, we have
\begin{align*}
  \E_{t-1}[\delta_t]
  &=\delta_{t-1}+\frac{\delta_{t-1}\beta_{t-1}\tnpm_{t-1}}{n}\qty(\frac{\alpha_{t-1}(i)+\alpha_{t-1}(j)}{\npm_{t-1}}-\frac{\gamma_{t-1}}{\beta_{t-1}\npm_{t-1}}-\frac{\psi_{t-1}}{\beta_{t-1}^2\tnpm_{t-1}})\\
  &\geq \delta_{t-1}+\frac{(1-\ctildemaxdown)(1-\cdeltadown)\qty(1-2\cweak)}{6}\cdot \frac{\delta_{0}\npm_{0}}{n}.
\end{align*}
Note that we use \cref{lem:basic inequalities for delta PP} (\cref{item:expectation of delta PP}) and \cref{eq:drift of delta key}.

  Letting $X_t=\delta_t$ and $\drift = \frac{(1-\ctildemaxdown)(1-\cdeltadown)\qty(1-2\cweak)}{6}\cdot \frac{\delta_{0}\npm_{0}}{n}> 0$, we have
  \[
    \indicator_{\tau>t-1}\qty(X_{t-1}+\drift-\E_{t-1}[X_t])
    =\indicator_{\tau>t-1}\qty(\delta_{t-1}+\drift-\E_{t-1}[\delta_t])\leq 0
  \]
  and $\indicator_{\tau>t-1}\qty(\E_{t-1}[X_t]-X_t)=\indicator_{\tau>t-1}\qty(\E_{t-1}[\delta_t]-\delta_t)$ satisfies $\qty(O(1/n),O(\npm_0/n^2))$-Bernstein condition.
  Note that we use \cref{lem:basic inequalities for delta PP} (\cref{item:bernstein condition of delta PP}) and \cref{lem:Bernstein condition} (\cref{item:BC for upper bounded rv,item:BC for linear transformation}). 

Let $I^+=(1+c)\delta_0$ and $I^-=(1-c)\delta_0$.
Recall that $\cdeltaup= \constrefs{lem:hitting time for tnpm and npm for PP}{item:taualphamaxup for PP}\frac{(1-\ctildemaxdown)(1-\cdeltadown)\qty(1-2\cweak)}{12}$. 
Let $T=\frac{\constrefs{lem:hitting time for tnpm and npm for PP}{item:taualphamaxup for PP}n}{\npm_0}$.
Then, we have $T= \frac{2(I^+-\delta_0)}{\drift}$.
Thus, we can apply \cref{lem:Useful drift lemma} (\cref{item:positive drift useful}) and obtain
  \begin{align*}
    \Pr\qty[\min\{\tau_\delta^\uparrow,\tau^*\} >T]\
    \leq \exp\qty(-\Omega\qty(n\delta_0^2)). 
  \end{align*}
  Applying \cref{lem:hitting time for tnpm and npm for PP} (\cref{item:taualphamaxup for PP}), we have
  \begin{align*}
    &\Pr\qty[\min\{\taudeltaup,\taujweak,\tautildemaxdown,\taupsiplus(2x)\}>T \text{ and } \taubetaminus(2x)>T]\\
    &\leq \Pr\qty[\min\{\taudeltaup,\taujweak,\tautildemaxdown,\taupsiplus(2x)\}>T \text{ and } \taubetaminus(2x)>T \text{ and } \taumaxup>T]\\
    &+\Pr\qty[\taumaxup\leq T \text{ and } \taubetaminus(2x)>T]\\
    &\leq \exp\qty(-\Omega\qty(n\delta_0^2)) + n \exp\qty(-\Omega(n(\npm_0)^2)).
  \end{align*}
  
\end{proof}
%
\begin{proof}[Proof of \cref{lem:delta multiplicative and additive drift for PP} (\cref{item:taudeltaplus const prob for population protocol model})]
Let $\tau=\min\{\taudeltaplus,\tauiweak,\taujweak,\tautildemaxdown,\taubetaminus(x),\taupsiplus(x)\}$.
For $t-1<\tau$, 
\begin{align*}
  \Var_{t-1}[\delta_t]
  &\geq \frac{\alpha_{t-1}(i)\qty(1-\alpha_{t-1}(i))}{n^2}
  +\frac{\alpha_{t-1}(j)\qty(1-\alpha_{t-1}(j))}{n^2}
  -\frac{4\delta_{t-1}(i,j)^2}{n^2} & &(\text{by \cref{lem:basic inequalities for delta PP}})\\
  &\geq \frac{(1-2\cweak)\npm_{t-1}-4c^2/n}{n^2} & &(\text{by \cref{eq:weak npm upper bound}})\\
  &\geq \frac{(1-2\cweak)(1-\ctildemaxdown)\npm_{0}-12c^2/n}{3n^2}\\
  &\geq \frac{(1-2\cweak)(1-\ctildemaxdown)\npm_{0}}{6n^2}.
\end{align*}
Note that $\npm_0=\omega(1/n)$.
Hence, for $t-1<\tau$, we have 
\begin{align*}
  \E_{t-1}[\delta_t^2]
  =\E_{t-1}[\delta_t]^2+\Var_{t-1}[\delta_t]
  \geq \delta_{t-1}^2+\frac{(1-2\cweak)(1-\ctildemaxdown)\npm_{0}}{6n^2}.
\end{align*}
Note that $\E_{t-1}[\delta_t]^2\geq \delta_{t-1}^2$ holds for $\tau>t-1$ (\cref{eq:drift of delta plus key}).
Let 
$\drift=\frac{(1-2\cweak)(1-\ctildemaxdown)\npm_{0}}{6n^2}$ and 
$X_t=\delta_t^2-\drift t$ and $Y_t=X_{t\wedge \tau}$.
Then, we have
\begin{align*}
\E_{t-1}[Y_t-Y_{t-1}]
=\indicator_{\tau>t-1}\E_{t-1}[X_t-X_{t-1}]
=\indicator_{\tau>t-1}\qty(\E_{t-1}[\delta_t^2]-\drift t-\delta_{t-1}^2+\drift (t-1))
\geq 0,
\end{align*}
i.e., $(Y_t)_{t\in \Nat_0}$ is a submartingale.
From \cref{thm:OST}, we have
$
\E[Y_\tau]\geq \E[Y_0] = \delta_0^2\geq 0
$
and
$
\E[Y_\tau]=\E[X_\tau] = \E[\delta_\tau^2]-\drift \E[\tau]
$
hold.
Thus, 
\begin{align*}
  \E[\tau]\leq \frac{\E[\delta_\tau^2]}{\drift}\leq \frac{\qty(\frac{c}{\sqrt{n}}+\frac{1}{n})^2}{\drift}
  \leq \frac{24c}{(1-2\cweak)(1-\ctildemaxdown)}\frac{n}{\npm_0}= \frac{\constrefs{lem:delta multiplicative and additive drift for PP}{item:taudeltaplus const prob for population protocol model}n}{2\npm_0}.
\end{align*}
We obtain the claim from the Markov inequality.
\end{proof}


\begin{proof}[Proof of \cref{lem:pair of non-weak opinions for PP} (\cref{item:pair of non-weak opinions becomes weak for PP})] 
  We apply \cref{lem:nazo lemma} for $Z_t=\opn_t$, $\varphi(Z_t)=\sqrt{n}\cdot \abs{\delta_t(i,j)}$, 
  \[\tau=\min\qty{\tauiweak,\taujweak,\tautildemaxdown,\taubetaminus(2x),\taupsiplus(2x)},\] $\cphiup=\cdeltaup$, $x_0 = \cdeltaplus$, and $x^*=\sqrt{C\log n}$.
  Define 
  \begin{align*}
    \tau^\uparrow=\begin{cases}
    \taudeltaup(i,j) & \text{if } \delta_0(i,j)\geq 0,\\
    \taudeltaup(j,i) & \text{if } \delta_0(j,i)> 0 \;(\text{i.e., } \delta_0(i,j)< 0)
    \end{cases},
  \end{align*}
  where $\taudeltaup(i,j)=\taudeltaup=\inf\{t\geq 0: \delta_t(i,j)\geq (1+\cdeltaup)\delta_0(i,j)\}$.
  Then, from these settings, we have
  \begin{align*}
      &\tauphiplus(x_0) = \inf\{t\geq 0: \sqrt{n}\abs{\delta_t}\geq \cdeltaplus\}=\taudeltaplus,\\
      &\tauphiup = \inf\{t\geq 0: \sqrt{n}\abs{\delta_t}\geq (1+\cdeltaup)\sqrt{n}\abs{\delta_0}\}\leq \tau^\uparrow.
  \end{align*}
  From \cref{lem:delta multiplicative and additive drift for PP} (\cref{item:taudeltaplus const prob for population protocol model}), we have
  \begin{align*}
    \Pr\qty[\min\qty{\tauphiplus(x_0),\tau} > \frac{\constrefs{lem:delta multiplicative and additive drift for PP}{item:taudeltaplus const prob for population protocol model} n}{\npm_0}] \leq 1/2,
  \end{align*}
  i.e., the first condition of \cref{lem:nazo lemma} holds for $C_1=1/2$.
  Note that if $\beta_0 < 1/2-2x$ or $\psi_0 > 2x$ or $\npm_0\leq O(\sqrt{\log n/n})$, then $\tau=0$.
  Next, from \cref{lem:delta multiplicative and additive drift for PP} (\cref{item:taudeltaup for population protocol model}),
  \begin{align*}
    \Pr\qty[\min\{\tauphiup,\tau\}> \frac{\constrefs{lem:hitting time for tnpm and npm for PP}{item:taualphamaxup for PP} n}{\npm_0}]
    &\leq \Pr\qty[\min\{\tau^\uparrow,\tau\}> \frac{\constrefs{lem:hitting time for tnpm and npm for PP}{item:taualphamaxup for PP} n}{\npm_0}]\\
    &\leq \exp\qty(-\Omega\qty(n\delta_0(i,j)^2))+n \exp\qty(-\Omega\qty(n(\npm_0)^2))\\
    &\leq \exp\qty(-\Omega\qty(\varphi(Z_0)^2))
  \end{align*}
  holds for $\delta_0(i,j)\leq C\sqrt{\log n/n}$, i.e., the second condition of \cref{lem:nazo lemma} holds for some positive constant $C_2$.
  Note that $n\exp\qty(-\Omega\qty(n(\npm_0)^2))\leq n^{-\omega(1)}\leq \exp\qty(-\Omega\qty(n\delta_0(i,j)^2))$ from the assumption on $\npm_0=\omega(\sqrt{\log n/n})$.
  Note that if $\beta_0 < 1/2-2x$ or $\psi_0 > 2x$ or $\npm_0\leq O(\sqrt{\log n/n})$, then $\tau=0$.
  
  Write $\tauweak=\min\{\tauiweak,\taujweak\}$ and $\tau_{ini}=\min\{\tautildemaxdown,\taubetaminus(2x),\taupsiplus(2x)\}$.
  From \cref{lem:nazo lemma,lem:tauinitial lemma for PP} with $\varepsilon=n^{-30}$, for some $T'=O(n\log n/\npm_0)$, 
  \begin{align*}
    &\Pr\qty[\min\{\taudeltaplus(\sqrt{C\log n/n}),\tauweak\}> T' \text{ or } \tau_{ini}\leq  T']\\
    &\leq \Pr\qty[\qty{\min\{\tauphiplus(x_*),\tauweak\}> T' \text{ or } \tau_{ini}\leq  T'}\text{ and } \tau_{ini}>T']+\Pr\qty[\tau_{ini}\leq T']\\
    &\leq \Pr\qty[\min\{\tauphiplus(x_*),\tauweak,\tau_{ini}\}> T']+\Pr\qty[\tau_{ini}\leq T']\\
    &\leq n^{-20}.
  \end{align*}
  \end{proof}

  \begin{proof}[Proof of \cref{lem:pair of non-weak opinions for PP} (\cref{item:delta multipricative with initial bias for PP})]
  Let $\tau^{\uparrow (s)}=\inf\{t\geq 0: \delta_t\geq (1+c)^s \delta_0\}$.
  Let $\tau^*=\min\{\tautildemaxdown,\taubetaminus(2x),\taupsiplus(2x)\}$.
  From definition, for some $\ell=\Theta(\log n)$, $\taujweak\leq \tau^{\uparrow (\ell)}$ holds.
  Let $T'=\frac{3\constrefs{lem:hitting time for tnpm and npm for PP}{item:taualphamaxup for PP}n}{(1-\ctildemaxdown)\npm_0}$.
  Applying \cref{lem:Iterative Drift theorem} (\cref{item:iterative updrift}), we have $\Pr\qty[\min\{\taujweak,\tau^*\}>\ell T']\geq \Pr\qty[\min\{\tau^{\uparrow (\ell)},\tau^*\}>\ell T']$ and 
  \begin{align*}
    \Pr\qty[\min\{\tau^{\uparrow (\ell)},\tau^*\}>\ell T']
  &\leq \sum_{s=1}^{\ell}\E\qty[\indicator_{\tau^*>\tau^{\uparrow (s-1)}} \Pr_{\tau^{\uparrow (s-1)}}\qty[\min\{\tau^{\uparrow (s)},\tau^*\}> \tau^{\uparrow (s-1)}+\frac{3\constrefs{lem:hitting time for tnpm and npm for PP}{item:taualphamaxup for PP}}{(1-\ctildemaxdown)\npm_{0}}]]\\
  &\leq \sum_{s=1}^{\ell}\E\qty[\indicator_{\tau^*>\tau^{\uparrow (s-1)}} \Pr_{\tau^{\uparrow (s-1)}}\qty[\min\{\tau^{\uparrow (s)},\tau^*\}> \tau^{\uparrow (s-1)}+\frac{\constrefs{lem:hitting time for tnpm and npm for PP}{item:taualphamaxup for PP}}{\npm_{\tau^{\uparrow (s-1)}}}]]\\
  &\leq \ell \qty(\exp(-\Omega(n\delta_0^2))+n\exp\qty(-\Omega\qty(n (\npm_0)^2))).
  \end{align*}
  Note that $\npm_{\tau^{\uparrow (s-1)}}\geq (1-\ctildemaxdown)\tnpm_0/3\geq (1-\ctildemaxdown)\npm_0/3\geq \omega(\sqrt{\log n/n})$ holds.
  \end{proof}

\subsection{Emergence of Unique Strong Opinion} 
Recall the definition of stopping times in \cref{def:strong}.
We can show the following lemma that assures the emergence and persistence of the unique strong opinion, in a similar manner as in the gossip model.
\begin{lemma}[Unique strong opinion lemma]\label{lem:unique strong opinion for PP}
  Let $x=x(n)$ and $y=y(n)$ be arbitrary positive functions such that $x=\omega(\sqrt{\log n / n})$, $y=\omega(x)$, and $y=o(\log n/\sqrt{n})$.
  Suppose $\psi_0\leq x$, $\beta_0\geq 1/2-x$, and $\npm_0=\omega(\log n/\sqrt{n})$.
  We have the following:
  \begin{enumerate}
    \item \label{item:tauusplus for PP} 
    For some $T=O(n\log n/\npm_0)$, we have 
    \[\Pr\qty[\tau_{us}^+(y)> T \text{ or } \min\{\tautildemaxdown,\taubetaminus(8x),\taupsiplus(8x)\}\leq T]\leq O(n^{-10}).\]
    \item \label{item:tauusminus for PP} 
    Suppose $\eta_0(j)\geq y$ for all $j\neq I_0$.
    Then, for any $T\leq \constrefs{lem:hitting time for tnpm and npm for PP}{item:tautildemaxdown is large for PP}n^2$, we have
    \[\Pr\qty[\min\{\tau_{us}^-(y/2),\tautildemaxdown,\taubetaminus(2x),\taupsiplus(2x)\}\leq T ]\leq n^{-10}.\]
  \end{enumerate}
\end{lemma}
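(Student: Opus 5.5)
We follow the gossip-model proof of \cref{lem:unique strong opinion}, replacing each ingredient with its population-protocol counterpart. The time scale is multiplied by $n$. The thresholds for $\beta_t$ and $\psi_t$ get a constant-factor margin ($2x$, then $8x$), because in this model these quantities are only kept within a doubled threshold rather than pinned (\cref{lem:taubeta for PP}, \cref{lem:taupsi for PP}).

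\textbf{Bounded decrease of $\eta_t(j)$.} The first step is a PP analogue of \cref{lem:tauetaminus}. We choose parameters $I=\omega(x)$ with $y=\omega(I)$, and assume $\eta_0(j)\in[y,y+I]$. While $\beta_{t-1}\ge 1/2-2x$, $\psi_{t-1}\le 2x$, $\tnpm_{t-1}\ge(1-\ctildemaxdown)\tnpm_0$ and $\npm_{t-1}\le(1+\cmaxup)\npm_0$, the bound \cref{eq:drift of eta} still holds. Hence \cref{lem:basic inequalities for delta PP} gives the drift
\[
\E_{t-1}[\eta_t(j)]\ge \eta_{t-1}(j)+\Omega\bigl((y-I)\npm_0/n\bigr),
\]
using $\eta_t(j)\ge\delta^{(\ceta)}_t(I_{t-1},j)$. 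This is with Bernstein parameters $(O(1/n),O(\npm_0/n^2))$. \cref{lem:Useful drift lemma} (\cref{item:positive drift useful}) then bounds the probability of exiting downward through $y-I$ before upward through $y+2I$ by $\exp(-\Omega(nI(y-I)))=n^{-\omega(1)}$.

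A one-step jump bound, analogous to \cref{lem:one step eta}, is trivial here: each interaction changes $\eta$ by $O(1/n)=o(I)$. We then chain through \cref{lem:Iterative Drift theorem} (\cref{item:no increase under negative drift}), absorbing $\taumaxup$ via \cref{lem:hitting time for tnpm and npm for PP} (\cref{item:taualphamaxup for PP}) and the other stopping times via \cref{lem:tauinitial lemma for PP}. This shows that $\eta_t(j)\ge y/2$ for all $t\le T$ with probability $1-Tn^{-\omega(1)}$, on the event that $\min\{\tautildemaxdown,\taubetaminus(2x),\taupsiplus(2x)\}>T$. A union bound over $j\neq I_0$, combined with \cref{lem:tauinitial lemma for PP} for $T\le \constrefs{lem:hitting time for tnpm and npm for PP}{item:tautildemaxdown is large for PP}n^2$, gives item (2). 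The error term $T\exp(-\Omega(n^2\tnpm_0/(T+n)))$ is $n^{-\omega(1)}$ because $\tnpm_0=\omega(\log n/\sqrt n)$ and $T=O(n^2)$.

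\textbf{Item (1).} Fix a pair $i,j$. By \cref{lem:pair of non-weak opinions for PP}, within $T_1=O(n\log n/\npm_0)$ steps one of them, say $j$, becomes weak, with probability $1-n^{-10}$. At that time the good conditions hold with parameter $2x$. If both were already weak this step is vacuous; if one is weak and the other non-weak, we apply \cref{item:delta multipricative with initial bias for PP} or simply note that $j$ is already weak.

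Weakness gives $\eta_{T_1}(j)\ge \frac{\cweak-\cstrong}{1-\cstrong}\npm_{T_1}=\Omega(\npm_0)\gg 2y$. Restarting at $T_1$ with $x'=2x$ and applying the bounded-decrease step keeps $\eta_t(j)\ge y$ for $\Omega(n^2\npm_0/\log n)\gg T_1$ further steps, with $\beta\ge 1/2-4x$ and $\psi\le 4x$.

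The main obstacle is this parameter bookkeeping. Each restart doubles the margin of $x$, and applying the pair lemma at time $T_1$ to later pairs may need a further doubling. This is why the statement allows $8x$, and we need $y=\omega(8x)$, which holds. A union bound over all $\binom{k}{2}\le n^2$ pairs, each done sequentially on overlapping intervals all of length $O(n\log n/\npm_0)$, shows that by some $T=O(n\log n/\npm_0)$ every $j\ne I_T$ has $\eta_T(j)\ge y$. Hence $\tau^+_{us}(y)\le T$, with the stopping-time failure events controlled by \cref{lem:tauinitial lemma for PP}.
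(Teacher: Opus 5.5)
Your overall route matches the paper's. You prove a population-protocol version of the bounded-decrease lemma for $\eta_t(j)$: the drift comes from \cref{eq:drift of eta}, you apply \cref{lem:Useful drift lemma}, and you chain with \cref{lem:Iterative Drift theorem}. The jump condition is automatic because one interaction moves $\eta$ by only $O(1/n)=o(I)$. You then get item (2) by a union bound over $j\neq I_0$ together with \cref{lem:tauinitial lemma for PP}. For item (1) you run the pair lemma for every pair, turn weakness into $\eta=\Omega(\npm_0)\ge 2y$, and restart the bounded-decrease argument for $\Omega(n^2\npm_0/\log n)$ steps.

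\textbf{The step that fails.} You claim the error term $T\exp(-\Omega(n^2\tnpm_0/(T+n)))$ from \cref{lem:tauinitial lemma for PP} is $n^{-\omega(1)}$ ``because $\tnpm_0=\omega(\log n/\sqrt n)$ and $T=O(n^2)$''. This is false.
\begin{itemize}
\item For $T=\Theta(n^2)$ the exponent is only $\Theta(\tnpm_0)=O(1)$, and it is $o(1)$ in the relevant regime $\tnpm_0\approx(\log n)^{1.5}/\sqrt n$. The bound is then of order $T$, i.e.\ vacuous.
\item It becomes $n^{-\Omega(C)}$ only when $T\le n^2\tnpm_0/(C\log n)$. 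So your argument proves item (2) only in that range, not for all $T$ up to a constant times $n^2$.
\item The paper's one-line proof of item (2) cites the same lemma without checking the range, so it has exactly the same limitation. The tools in the paper do not give the full range.
\item The restricted range is all that is ever used. Item (1) needs $T_2=\Theta(n^2\npm_0/\log n)$, which you handle correctly there. \Cref{lem:unique strong opinion evolves for PP} needs $T=O(n\log n/\npm_0)$, and this is $\le n^2\npm_0/(C\log n)$ because $\npm_0=\omega(\log n/\sqrt n)$.
\end{itemize}
You should state and prove item (2) for $T\le c\,n^2\tnpm_0/\log n$.

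\textbf{Minor bookkeeping.} These points do not affect correctness, since $y=\omega(x)$ absorbs any constant factor.
\begin{itemize}
\item At the time $j$ becomes weak, the guarantee is $4x$, not $2x$. Item (1) of \cref{lem:pair of non-weak opinions for PP} leaves $2x$ at the moment $|\delta|$ reaches $\sqrt{C\log n/n}$. Restarting its item (2) there doubles this to $4x$, and restarting the bounded-decrease step at the weak time gives $8x$.
\item There is no extra doubling from ``applying the pair lemma at time $T_1$ to later pairs''. All pairs run in parallel from time $0$.
\item Each restart happens at the stopping time of the relevant event (strong Markov property), not at the deterministic $T_1$.
\end{itemize}
Your observation that the $\eta$-drift survives under the $2x$ thresholds is exactly what matches the stopping times $\taubetaminus(2x),\taupsiplus(2x)$ in item (2). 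The paper's \cref{lem:tauetaminus for PP} is stated with $x$ instead.
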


The proof of \cref{lem:unique strong opinion for PP} relies on the following lemma.
Essentially, this lemma follows from the drift of $\eta_t(j)$ in the population protocol model, which is approximately $1/n$ times as large as the drift in the gossip model.
\begin{lemma}[Bounded decrease of $\eta_t(j)$]
  \label{lem:tauetaminus for PP}
  Let $x=x(n)$ and $y=y(n)$ be arbitrary positive functions such that $x=\omega(\sqrt{\log n / n})$, $y=\omega(x)$, and $y=o(\log n/\sqrt{n})$.
  Suppose $\psi_0\leq x$, $\beta_0\geq 1/2-x$, $\npm_0=\omega(\log n/\sqrt{n})$, and $\eta_0(j)\geq y$ for all $j\neq I_0$.
  Then, for any $T\geq 0$, 
  \begin{align*}
    \Pr\qty[\tauetaminus(y/2) \leq T \text{ and } \min\{\tautildemaxdown,\taubetaminus(x),\taupsiplus(x)\}>T]
    \leq Tn^{-\omega(1)}.
\end{align*}
\end{lemma}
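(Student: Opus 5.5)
We follow the proof of \cref{lem:tauetaminus} for the gossip model, rescaling all drifts and Bernstein variances by $1/n$, and organise it in three steps.

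\textbf{Step 1: a window lemma.} Fix parameters $I=\omega(x)$ with $y=\omega(I)$. Consider the window $[y'-I,y'+2I]$ around a level $y'\in[y/2+I,\,y]$. We first prove a population-protocol analogue of \cref{lem:hitting time for eta}. Let $\tau^*=\min\{\tautildemaxdown,\taumaxup,\taubetaminus(x),\taupsiplus(x)\}$, and suppose $t-1<\tau^*$ with $\eta_{t-1}(j)$ inside the window. Then \cref{lem:basic inequalities for delta PP} (\cref{item:expectation of delta PP}), applied to $\delta^{(\ceta)}_t(I_{t-1},j)\le\eta_t(j)$, gives
\[
\E_{t-1}[\eta_t(j)]\ge \eta_{t-1}(j)\qty(1+\frac{\npm_{t-1}+\alpha_{t-1}(j)+1-2\beta_{t-1}}{n}).
\]
The bracket is bounded exactly as in \cref{eq:drift of eta}: we have $2\beta_{t-1}-1\le \npm_{t-1}+O(x)$, while $x$ and $y$ are $o(\npm_0)$ and $\npm_{t-1}=\Theta(\npm_0)$. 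This yields an additive drift $\drift=\Omega((y'-I)\npm_0/n)$.

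\textbf{Step 2: concentration inside the window.} The increment $\E_{t-1}[\delta^{(\ceta)}_t]-\delta^{(\ceta)}_t$ satisfies an $(O(1/n),O(\npm_0/n^2))$-Bernstein condition by \cref{lem:basic inequalities for delta PP} (\cref{item:bernstein condition of delta PP}). Domination (\cref{lem:Bernstein condition}) then transfers this to $\eta_{t-1}(j)+\drift-\eta_t(j)$. We apply \cref{lem:Useful drift lemma} (\cref{item:positive drift useful}) with $T'=\Theta(n I/(y\npm_0))$. The resulting failure probability is
\[
\exp\qty(-\Omega\qty(\frac{I^2}{\npm_0 T'/n^2+I/n}))=\exp(-\Omega(nI y)),
\]
which is $n^{-\omega(1)}$ because $nIy\ge n x^2=\omega(\log n)$. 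Before this window estimate can be used, the stopping times in $\tau^*$ must be removed on the short horizon $T'$. For $\taumaxup$ we use \cref{lem:hitting time for tnpm and npm for PP} (\cref{item:taualphamaxup for PP}); this is valid because $T'=o(n/\npm_0)$. For $\tautildemaxdown$ we use \cref{lem:hitting time for tnpm and npm for PP} (\cref{item:tautildemaxdown is large for PP}). The remaining stopping times $\taubetaminus(x)$ and $\taupsiplus(x)$ are kept as conditioning events, exactly as they appear in the statement.

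\textbf{Step 3: no large one-step jumps, then iteration.} A single interaction changes $\eta_t(j)$ by at most $O(1/n)=o(I)$, so the jump lemma (the analogue of \cref{lem:one step eta}) is trivial in this model. Starting from $\eta_0(j)\ge y$, we apply \cref{lem:Iterative Drift theorem} (\cref{item:no increase under negative drift}) with $I^-=y-I\ge y/2$ and $I^+=y+2I$. A union bound over the at most $T$ starting times $s$ then bounds $\Pr[\tauetaminus(y/2)\le T \text{ and }\ldots]$ by $T n^{-\omega(1)}$. One point needs care if $\eta_0(j)$ is much larger than $y$: the window must then be placed at level $y$, not around $\eta_0(j)$. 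This is exactly how the gossip proof handles it via \cref{lem:Iterative Drift theorem}, since every downcrossing of $y/2$ must first pass through $[y,y+I]$.

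\textbf{Main obstacle.} The main difficulty is the bookkeeping in Step 2: the stopping times $\tautildemaxdown$ and $\taumaxup$ have to be dropped from $\tau^*$ on the correct time scale. For $\taumaxup$ this requires the window time $T'=o(n/\npm_0)$. It also requires checking that $\npm$ stays $\Theta(\npm_0)$, using only $\tautildemaxdown>T$ and $\beta\ge 1/2-x$, so that the drift constant from \cref{eq:drift of eta} remains valid.
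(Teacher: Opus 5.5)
Your decomposition (window drift estimate, Bernstein concentration, trivial jump bound because one interaction moves $\eta_t(j)$ by $O(1/n)=o(I)$, then iteration over restart times) matches the paper's. However, the step where $\taubetaminus(x)$ and $\taupsiplus(x)$ are ``kept as conditioning events'' does not go through with the lemma you iterate with.

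In \cref{lem:Iterative Drift theorem} (\cref{item:no increase under negative drift}) the event $\{\tau^*>T\}$ is weakened to $\{\tau^*>s\}$. Each summand then contains the conditional probability $\Pr_s[\tau^-_s<\tau^+_s]$, which says nothing about $\beta_t,\psi_t$ for $t>s$. To make it $n^{-\omega(1)}$ you must remove \emph{every} stopping time of the window estimate on the horizon $T'$ after $s$, including those for $\beta$ and $\psi$. With threshold $x$ this is impossible. The event $\tau^*>s$ only gives $\beta_s\ge 1/2-x$, and if $\beta_s$ sits at the threshold, a single interaction pushes it below with constant probability.

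This is exactly the population-protocol margin issue, and the paper handles it as follows. It runs the window estimate (\cref{lem:hitting time for eta for population protocol model}) with $\taubetaminus(2x)$ and $\taupsiplus(2x)$. It removes these via \cref{lem:taubeta for PP} (\cref{item:hitting times for beta for PP:keeping phase}) and \cref{lem:taupsi for PP} (\cref{item:taupsiplus for PP}), at cost $T'\exp(-\Omega(nx^2))=n^{-\omega(1)}$. The threshold-$x$ stopping times are used only through $\tau^*>s$, to certify the starting state at time $s$.

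Similarly, $\taumaxup$ and $\tautildemaxdown$ in the window must be anchored at the restart time $s$, using $\npm_s=\Omega(\npm_0)$, not at time $0$. Nothing in the hypotheses prevents $\npm_s\gg\npm_0$. In that case a $\taumaxup$ anchored at $0$ cannot be removed from time $s$, and drift and variance bounds anchored at $0$ no longer match.

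Alternatively, you can genuinely keep $\taubetaminus(x)$ and $\taupsiplus(x)$ as conditioning, but only with a modified iteration lemma. Let $s$ be the last time before $\tauetaminus(y-I)$ with $\eta_s(j)\ge y$. Then $\eta_s(j)\in[y,y+I)$ and $\tau^-_s=\tauetaminus(y-I)\le T<\tau^*$. So each summand can be taken to be $\Pr_s[\tau^-_s<\min\{\tau^+_s,\tau^*\}]$. The two estimates in the proof of \cref{lem:Useful drift lemma} (\cref{item:positive drift useful}) control this term without removing $\tau^*$. Only the $s$-anchored stopping times for $\npm$ and $\tnpm$ then need removal, via \cref{lem:hitting time for tnpm and npm for PP}, which requires just $\taubetaminus(x)$.

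Either route closes the gap. As written, the proposal cites the paper's iteration lemma but relies on a conditioning that lemma does not provide.
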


\paragraph{Weak cannot be strong: Proof of \cref{lem:tauetaminus for PP}.}
To prove \cref{lem:tauetaminus for PP}, we use the following lemma:
\begin{lemma}
  \label{lem:hitting time for eta for population protocol model}
  Let $x=x(n)$, $y=y(n)$, and $I=I(n)$ be arbitrary positive functions such that $x=\omega(\sqrt{\log n / n})$, $I=\omega(x)$, $y=\omega(I)$, and $y=o(\log n/\sqrt{n})$.
  Suppose that $\beta_0\geq 1/2-x$, $\psi_0\leq x$, $\npm_0\geq \omega\qty(\log n/\sqrt{n})$, and $\eta_0(j)\in [y,y+I]$.
  Then, for any $T\geq \frac{24(1+\ceta)}{1-\ctildemaxdown}\frac{In}{(y-I)\npm_0}$, we have
  \begin{align*}
    \Pr\qty[\tauetaminus(y-I)<\tauetaplus(y+2I) \text{ and } \min\{\tautildemaxdown,\taumaxup,\taubetaminus(x),\taupsiplus(x)\}>T]\leq n^{-\omega(1)}.
  \end{align*}
\end{lemma}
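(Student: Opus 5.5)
We mirror the proof of \cref{lem:hitting time for eta} for the gossip model, with every drift and every variance parameter scaled down by a factor $1/n$. Let $\tau^*=\min\{\tautildemaxdown,\taumaxup,\taubetaminus(x),\taupsiplus(x)\}$ and $\tau=\min\{\tauetaplus(y+2I),\tauetaminus(y-I),\tau^*\}$.

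The first step is a lower bound on the one-step drift of $\eta_t(j)$ for $t-1<\tau$. Since $I_{t-1}$ attains the maximum at round $t-1$, we have $\eta_t(j)\ge \delta_t^{(\ceta)}(I_{t-1},j)$. \cref{lem:basic inequalities for delta PP} (\cref{item:expectation of delta PP}) then gives
\[
\E_{t-1}[\eta_t(j)]\ge \eta_{t-1}(j)+\frac{\eta_{t-1}(j)}{n}\qty(\npm_{t-1}+\alpha_{t-1}(j)+1-2\beta_{t-1}),
\]
where we drop the nonnegative term $\ceta\alpha\alpha/n$. The bracket is bounded below exactly as in \cref{eq:drift of eta}. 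We write $\alpha_{t-1}(j)=(\npm_{t-1}-\eta_{t-1}(j))/(1+\ceta)$ and $1-2\beta_{t-1}=-(\gamma_{t-1}+\psi_{t-1})/\beta_{t-1}$, and use $\gamma_{t-1}\le\beta_{t-1}\npm_{t-1}$ and $\psi_{t-1}\le x$. We also use $\beta_{t-1}\ge 1/2-x\ge1/3$ and $\npm_{t-1}\ge\beta_{t-1}(1-\ctildemaxdown)\tnpm_0\ge(1-\ctildemaxdown)\npm_0/3$, valid before $\taubetaminus(x)$ and $\tautildemaxdown$. Together with $\eta_{t-1}(j)\le y+2I=o(\npm_0)$ and $x=o(\npm_0)$, these give a bracket of at least $\frac{1-\ctildemaxdown}{6(1+\ceta)}\npm_0$. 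Using $\eta_{t-1}(j)\ge y-I$, the additive drift is therefore
\[
R=\frac{1-\ctildemaxdown}{6(1+\ceta)}\cdot\frac{(y-I)\npm_0}{n}.
\]
One point to check is that the psi-term is handled by $\psi_{t-1}\le x=o(\npm_0)$. This holds since $\npm_0=\omega(\log n/\sqrt n)$ while $x=o(\log n/\sqrt n)$.

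The second step is concentration. By \cref{lem:basic inequalities for delta PP} (\cref{item:bernstein condition of delta PP}) with $\epsilon=\ceta$, the variable $\indicator_{\tau>t-1}(\E_{t-1}[\delta_t^{(\ceta)}(I_{t-1},j)]-\delta_t^{(\ceta)}(I_{t-1},j))$ satisfies the $(O(1/n),O(\npm_0/n^2))$-Bernstein condition. This uses $\alpha_{t-1}(\cdot)\le(1+\cmaxup)\npm_0$ before $\taumaxup$. Because $\eta_t(j)\ge\delta^{(\ceta)}_t(I_{t-1},j)$, stochastic domination (\cref{lem:Bernstein condition}, \cref{item:BC for dominated rv}) yields the one-sided condition \cref{item:C2} for $X_t=\eta_t(j)$, while \cref{item:C1} is exactly the first step.

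The third step applies \cref{lem:Useful drift lemma} (\cref{item:positive drift useful}) with $I^+=y+2I$, $I^-=y-I$ and $X_0\in[y,y+I]$. Then $I^+-X_0\le 2I$ and $X_0-I^-\ge I$, and the hypothesis $T\ge\frac{24(1+\ceta)}{1-\ctildemaxdown}\frac{In}{(y-I)\npm_0}\ge 2(I^+-X_0)/R$ is met. The resulting tail bound $\exp(-\Omega(I^2/(sT+ID)))$, with $s=O(\npm_0/n^2)$ and $D=O(1/n)$, is the main obstacle. If $T$ is taken at its minimal value $T_0$, then $sT_0=O(I/(n(y-I)))\cdot\npm_0/\npm_0$, and the exponent is of order $nI(y-I)\ge nI^2=\omega(nx^2)=\omega(\log n)$. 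This is $n^{-\omega(1)}$, matching the gossip computation.

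For larger $T$, the lemma's event $\{\tauetaminus<\tauetaplus\text{ and }\tau^*>T\}$ is contained in $\{\tauetaminus<\tauetaplus\text{ and }\tau^*>T_0\}\cup\{\tau^*>T_0,\ \min(\tauetaminus,\tauetaplus)>T_0\}$. The second event is exactly what the first inequality of \cref{lem:Useful drift lemma} controls at time $T_0$. So it suffices to apply the lemma with $T=T_0$, and both probabilities are $n^{-\omega(1)}$. This reduction is the step I would verify most carefully.
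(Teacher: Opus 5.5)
Your proposal is correct and follows the paper's proof: the same drift bound via $\eta_t(j)\ge\delta_t^{(\ceta)}(I_{t-1},j)$ and \cref{eq:drift of eta}, the same Bernstein domination step, and the same application of \cref{lem:Useful drift lemma} (\cref{item:positive drift useful}) with $I^-=y-I$ and $I^+=y+2I$. Your two refinements are sound and tidy up points the paper's written proof leaves implicit. First, keeping $\taumaxup$ in $\tau^*$ is what justifies the $O(\npm_0/n^2)$ variance parameter. Second, reducing to $T=T_0$ is valid, and in fact trivially so: $\{\tau^*>T\}\subseteq\{\tau^*>T_0\}$, so the second event in your union is unnecessary.
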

\begin{proof}
  Let $\tau^*=\min\{\tautildemaxdown,\taubetaminus(x),\taupsiplus(x)\}$ and
let $\tau=\min\{\tauetaplus(y+2I),\tauetaminus(y-I),\tau^*\}$.
For $t-1<\tau$, we have $\E_{t-1}[\eta_t(j)]\geq \E_{t-1}[\delta_{t}^{(\ceta)}(I_{t-1},j)]$ and 
\begin{align*}
  \E_{t-1}[\delta_{t}^{(\ceta)}(I_{t-1},j)]
  &\geq \delta_{t-1}^{(\ceta)}(I_{t-1},j)+\frac{\delta_{t-1}^{(\ceta)}(I_{t-1},j)}{n}\qty(\alpha_{t-1}(I_{t-1})+\alpha_{t-1}(j)+1-2\beta_{t-1})\\
    &\geq \eta_{t-1}(j)+\frac{1-\ctildemaxdown}{6(1+\ceta)}\cdot \frac{(y-I)\npm_{0}}{n}.
\end{align*}
Note that we use \cref{lem:basic inequalities for delta PP} (\cref{item:expectation of delta PP}) and \cref{eq:drift of eta}.

Letting $X_t=\eta_t(j)$ and $\drift = \frac{1-\ctildemaxdown}{6(1+\ceta)}\cdot \frac{(y-I)\npm_{0}}{n}>0$, we have
  $
    \indicator_{\tau>t-1}\qty(X_{t-1}+\drift-\E_{t-1}[X_t])
    =\indicator_{\tau>t-1}\qty(\eta_{t-1}(j)+\drift-\E_{t-1}[\eta_t(j)])\leq 0
  $. Further, we have $X_{t-1}+\drift-X_t
    =\eta_{t-1}(j)+\drift-\eta_t(j)$
  and 
  \begin{align*}
    \indicator_{\tau>t-1}\qty(\eta_{t-1}(j)+\drift-\eta_t(j))
    &\leq \indicator_{\tau>t-1}\qty(\E_{t-1}[\delta_{t}^{(\ceta)}(I_{t-1},j)]-\delta_{t}^{(\ceta)}(I_{t-1},j)),
  \end{align*}
  i.e., $\indicator_{\tau>t-1}\qty(X_{t-1}+\drift-X_t)$ satisfies one-sided $\qty(O(1/n),O(\npm_0/n^2))$-Bernstein condition.
  Note that we use \cref{lem:basic inequalities for delta PP} (\cref{item:bernstein condition of delta PP}) and \cref{lem:Bernstein condition} (\cref{item:BC for upper bounded rv,item:BC for linear transformation}).

  Let $I^-=y-I$, $I^-_*=y$, $I^+_*=y+I$, and $I^+=y+2I$.
  Note that $I^+-I^+_*=I^+_*-I^-_*=I^-_*-I^-=I$.
  Let $T\geq \frac{24(1+\ceta)}{1-\ctildemaxdown}\frac{In}{(y-I)\npm_0}\geq \frac{2(I^+-X_0)}{\drift}$.
  Applying \cref{lem:Useful drift lemma} (\cref{item:positive drift useful}) for $\eta_0(j)\in [I_*^-,I^+_*]$, we have
  \begin{align*}
    \Pr\qty[\tauetaminus(y-I)<\tauetaplus(y+2I) \text{ and } \tau^*>T]
      \leq \exp\qty(-\Omega\qty(\frac{I^2}{\frac{\npm_0}{n^2}T+\frac{I}{n}}))
      \leq \exp\qty(-\Omega\qty(nI(y-I))).
  \end{align*}
\end{proof}

\begin{proof}[Proof of \cref{lem:tauetaminus for PP}]
  Let $I=I(n)$ be a positive parameter such that $I=\omega(x)$ and $I=o(y)$.
  Let $I^-=y-I$, $I^-_*=y$, $I^+_*=y+I$, and $I^+=y+2I$.
  Note that $I^+-I^+_*=I^+_*-I^-_*=I^-_*-I^-=I$.
  Let $T'=\frac{24(1+\ceta)}{1-\ctildemaxdown}\frac{In}{(y-I)\npm_0}=o\qty(\frac{n}{\npm_0})$.
  Then, combining 
  \cref{lem:taubeta for PP,lem:taupsi for PP,lem:tildemaxdown for population protocol model,lem:hitting time for tnpm and npm for PP} (\cref{item:taualphamaxup for PP}), \cref{lem:hitting time for eta for population protocol model}
  if $\beta_0\geq 1/2-x\geq 1/2-2x$, $\psi_0\leq x\leq 2x$, $\npm_0\geq \omega\qty(\log n/\sqrt{n})$, and $\eta_0(j)\in [I^-_*,I^+_*]$, 
  \begin{align*}
    &\Pr\qty[\tauetaminus(y-I)<\tauetaplus(y+2I)]\\
    &\leq \Pr\qty[\tauetaminus(y-I)<\tauetaplus(y+2I) \text{ and } \min\{\tautildemaxdown,\taumaxup,\taubetaminus(2x),\taupsiplus(2x)\}>T']\\
    &+\Pr\qty[\min\{\tautildemaxdown,\taumaxup,\taubetaminus(2x)\}\leq T']+\Pr\qty[\taupsiplus(2x)\leq T']\\
    &\leq n^{-\omega(1)}.
  \end{align*}
  Let $\tau^*=\min\{\tautildemaxdown,\taubetaminus(x),\taupsiplus(x)\}$.
  Let $\tau_s^-=\inf\{t\geq s: \eta_t(j)\leq I^-\}$ and $\tau_s^+=\inf\{t\geq s: \eta_t(j)\geq I^+\}$.
  Suppose $\eta_0(j)\in [I^-_*,I^+_*]$, $\beta_0\geq 1/2-x$, $\psi_0\leq x$, and $\npm_0\geq \omega(\log n/\sqrt{n})$.
  Applying \cref{lem:Iterative Drift theorem} (\cref{item:no increase under negative drift}), we have
  \begin{align*}
    \Pr\qty[\tauetaminus(y-I) \leq T \text{ and } \tau^*>T]
    \leq \sum_{s=0}^{T-1}\E\qty[\indicator_{\eta_s(j)\in [I^-_*,I^{+}_*] \text{ and } \tau^*>s}\Pr_s\qty[\tau^-_s<\tau^+_s]]
    \leq Tn^{-\omega(1)}.
\end{align*}
\end{proof}

\paragraph{Emergence of unique strong opinion: Proof of \cref{lem:unique strong opinion for PP}.}
\begin{proof}[Proof of \cref{lem:unique strong opinion for PP} (\cref{item:tauusplus for PP})]
  Fix an arbitrary pair of distinct opinions $i$ and $j$.
  Observe the following holds:
  \begin{itemize}
    \item Combining \cref{item:pair of non-weak opinions becomes weak for PP,item:delta multipricative with initial bias for PP} of \cref{lem:pair of non-weak opinions for PP}, 
    for some $T_1=O(n\log n/\npm_0)$, $\min\{\alpha_{T_1}(i),\alpha_{T_1}(j)\}\leq (1-\cweak)\npm_{T_1}$, $\beta_{T_1}\geq 1/2-4x$, $\psi_{T_1}\leq 4x$, and $\npm_{T_1}=\Omega(\npm_0)=\omega(\log n/\sqrt{n})$ with probability at least $1-n^{-10}$.
    Suppose that $\alpha_{T_1}(j) \le (1-\cweak)\npm_{T_1}$ without loss of generality.
    From the definition of $\eta_t(j)$, we have $\eta_{T_1}(j)\geq \frac{\cweak-\cstrong}{1-\cstrong}\npm_{T_1}=\Omega(\npm_0)\geq 2y$.
    \item From \cref{lem:tauetaminus for PP}, for any $t\in [T_1,T_1+T_2]$, where $T_2\geq n^2 \npm_{T_1} /C\log n=\Omega(n^2\npm_0/\log n)$, $\eta_t(j)\geq y$, $\beta_t\geq 1/2-8x$, $\psi_t\leq 8x$, and $\npm_t=\Omega(\npm_0)=\omega(\log n/\sqrt{n})$ with probability at least $1-n^{-\omega(1)}$.
  \end{itemize}
  
  Since $T_1=O(n\log n/\npm_0)=o(n\sqrt{n})$ and $T_2=\Omega(n^2\npm_0/\log n)=\omega(n\sqrt{n})$, we have $T_1<T_2$.
  Thus, by taking the union bound over all pairs of distinct opinions $i,j\in[k]$, with high probability, there exists $t\leq T_1$ such that $\eta_t(j)\geq y$ for all $j\neq I_t$.
\end{proof}
\begin{proof}[Proof of \cref{lem:unique strong opinion for PP} (\cref{item:tauusminus for PP})]
  Consider an arbitrary opinion $j\neq I_0$. Combining \cref{lem:tauetaminus for PP,lem:tauinitial lemma for PP}, 
  \begin{align}
    &\Pr\qty[\tauetaminus(y/2) \leq T \text{ or } \min\{\tautildemaxdown,\taubetaminus(2x),\taupsiplus(2x)\}\leq T] \nonumber\\
    &\leq \Pr\qty[\qty{\tauetaminus(y/2) \leq T \text{ or } \min\{\tautildemaxdown,\taubetaminus(2x),\taupsiplus(2x)\}\leq T} \text{ and } \min\{\tautildemaxdown,\taubetaminus(2x),\taupsiplus(2x)\}>T] \nonumber\\
    &+ \Pr\qty[\min\{\tautildemaxdown,\taubetaminus(2x),\taupsiplus(2x)\}\leq T] \nonumber\\
    &\leq n^{-11}. \label{eq:tauetaminus and tauinitial lemma for PP}
  \end{align}
  Thus, from the union bound, we obtain the claim.
  \end{proof}

  \subsection{Towards Consensus} 
  Once we have shown that there exists exactly one strong opinion within $O(n\log n/\npm_0)$ rounds, we can show that with high probability, the consensus is reached within $O(n\log n/\npm_0)$ rounds.
  The main lemma in this section is the following:
  \begin{lemma}[Unique strong opinion leads to consensus]
    \label{lem:towards consensus for PP}
    Let $x=x(n)$ and $y=y(n)$ be arbitrary positive functions such that $x=\omega(\sqrt{\log n / n})$, $y=\omega(x)$, and $y=o(\log n/\sqrt{n})$.
    Suppose $\psi_0\leq x$, $\beta_0\geq 1/2-x$, $\npm_0=\omega(\log n/\sqrt{n})$, and $\eta_0(j)\geq y$ for all $j\neq I_0$.
    Then, $\Pr\qty[\taucons>T]\leq 1/n$ holds for some $T=O\qty(n\log n/\npm_0)$.
    \end{lemma}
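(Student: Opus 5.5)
We will mirror the gossip-model proof of \cref{lem:towards consensus}, transplanting each of the four stages of \cref{lem:unique strong opinion evolves} to the population protocol model. There, all drifts and Bernstein variances shrink by a factor $1/n$ while the jump bound $D=O(1/n)$ is unchanged. Every stage that took $O(\log n/\npm_0)$ or $O(\log n)$ rounds in the gossip model should therefore take $O(n\log n/\npm_0)$ or $O(n\log n)$ interactions. Throughout, we work on the event that $\min\{\tautildemaxdown,\taubetaminus(2x),\taupsiplus(2x),\tau_{us}^-(y/2)\}>T$. By \cref{lem:unique strong opinion for PP} (\cref{item:tauusminus for PP}) and \cref{lem:tauinitial lemma for PP}, this event fails with probability at most $n^{-10}$ for every $T\le \constrefs{lem:hitting time for tnpm and npm for PP}{item:tautildemaxdown is large for PP}n^2$. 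Since $n\log n/\npm_0=o(n^{1.5})\ll n^2$, this window suffices.

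\textbf{Stage 1 (growth of $\tnpm_t$ to $1-c$).} While $I_0=1$ is the unique strong opinion and $\tnpm_{t-1}\le 1-c$, the computation \cref{eq:tnpm additive drift} gives $\npm_{t-1}-\gamma_{t-1}/\beta_{t-1}\ge \cstrong c\,\npm_{t-1}$. Plugging this into \cref{lem:basic inequalities for tnpm PP} (\cref{item:expectation of tnpm PP}) yields a drift
\[
\E_{t-1}[\tnpm_t]\ge \tnpm_{t-1}+\Omega\qty(\frac{(\npm_{0})^2}{n}),
\]
since the error term $18\npm_{t-1}/n^2$ is negligible when $\npm_0=\omega(1/n)$. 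The one-sided Bernstein condition $(O(1/n),O(\npm_0/n^2))$ holds by \cref{item:bernstein condition of tnpm PP}.

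\cref{lem:Useful drift lemma} (\cref{item:positive drift useful}) then shows that $\tnpm$ grows by a factor $(1+\ctildemaxup)$ within $T'=O(n/\npm_0)$ steps, with failure probability $\exp(-\Omega(n(\npm_0)^2))=n^{-\omega(1)}$. Iterating this $\ell=O(\log n)$ times via \cref{lem:Iterative Drift theorem} (\cref{item:iterative updrift}) gives $\tnpm\ge 1-c$ within $O(n\log n/\npm_0)$ steps. The iteration restarts each time with the current $\npm\ge\Omega(\npm_0)$, which is guaranteed by $\tautildemaxdown$ not yet occurring.

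\textbf{Stages 2--4 (absorption).} Once $\alpha(1)\ge (1-\ctildemaxdown)\beta$, \cref{lem:basic inequalities for alpha PP} gives
\[
\E_{t-1}[\alpha_t(1)]-\alpha_{t-1}(1)=\frac{\alpha_{t-1}(1)\qty(1-\alpha_{t-1}(1)(2/\tnpm_{t-1}-1))}{n}\ge \Omega\qty(\frac{\alpha_{0}(1)}{n}).
\]
The same multiplicative iteration as in Stage 1 then brings $\alpha(1)\ge 1-4\ctildemaxdown$ within $O(n\log n)$ steps. For Stage 3, set $g_t=\beta_t-\alpha_t(1)$. From \cref{lem:basic inequalities for alpha PP} we get $\E_{t-1}[g_t]\le g_{t-1}(1-1/(4n))$ while $\alpha(1)\ge 3/4$. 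The process $(1-1/(4n))^{-t}g_t$ stopped at $\min\{\tauall,\taumaxminus(3/4)\}$ is a supermartingale. Since $g_t\ge 1/n$ before absorption, this gives $\Pr[\tau>T]\le n(1-1/(4n))^T\le n^{-2}$ for $T=O(n\log n)$.

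Meanwhile, the analogue of \cref{claim:taumaxminus} must show that $\alpha(1)$ stays above $3/4$. Here the drift is at least $-O(x/n)$ and the variance is $O(1/n^2)$, so \cref{lem:Useful drift lemma} (\cref{item:negative drift useful}) works over $\Omega(n/x)\gg n\log n$ steps. For Stage 4, when $\beta=\alpha(1)$ the quantity $u_t=1-\alpha_t(1)$ satisfies $\E_{t-1}[u_t]=u_{t-1}(1-\alpha_{t-1}(1)/n)$. The same supermartingale argument then gives consensus within $O(n\log n)$ further steps with probability $1-n^{-2}$. A union bound over the four stages yields $T=O(n\log n/\npm_0)$ with failure probability at most $1/n$.

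\textbf{Main obstacle.} The delicate point is Stage 1: we must maintain uniqueness of the strong opinion and the lower bound $\npm_t=\Omega(\npm_0)$ across all $\Theta(\log n)$ iterations. Uniqueness is needed for \cref{eq:tnpm additive drift}, and the lower bound keeps the per-step drift at $\Omega((\npm_0)^2/n)$. In the population protocol model, $\beta_t$ and $\psi_t$ are controlled only with the weaker margins $2x$ (or $8x$). We will therefore have to recheck that the $o(1)$ error terms in \cref{eq:tnpm additive drift} and \cref{lem:basic inequalities for tnpm PP} remain negligible with these margins. We will also fix the order in which $\tau_{us}^-(y/2)$ and $\tautildemaxdown$ are excluded via \cref{eq:tauetaminus and tauinitial lemma for PP}.
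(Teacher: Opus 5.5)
Your proposal is correct and is essentially the paper's proof. The paper chains the four items of \cref{lem:unique strong opinion evolves for PP}: growth of $\tnpm_t$ to $1-\ctildemaxdown$ within $O(n\log n/\npm_0)$ steps via iterated additive drift $\Omega((\npm_0)^2/n)$; then $\alpha(1)\ge 1-4\ctildemaxdown$; then elimination of the other opinions via the supermartingale $(1-\frac{1}{4n})^{-t}(\beta_t-\alpha_t(1))$; then consensus via $u_t=1-\alpha_t(1)$. It uses the same margin-doubling for $\beta_t,\psi_t$ and a fresh $\tautildemaxdown$, relative to the current value of $\tnpm$, at each restart. Your global event only controls $\tnpm$ relative to the original $\tnpm_0$, so Stage~2 needs this fresh stopping time.

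One detail you leave implicit is that the paper fixes $\ctildemaxdown=1/32$. This makes Stage~2 end at $\alpha(1)\ge 7/8$, so that the analogue of \cref{claim:taumaxminus for PP} keeps $\alpha(1)\ge 3/4$ throughout Stage~3. With the default $\ctildemaxdown=0.1$, the stages would not connect.
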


    Recall the stopping times defined in \cref{sec:towards consensus}.
  We introduce the following lemmas to show \cref{lem:towards consensus for PP}.
  These are analogous to the lemmas in the gossip model, but with different drifts and variances in the Bernstein condition.

\begin{lemma}
\label{lem:unique strong opinion evolves for PP}
Let $x=x(n)$ and $y=y(n)$ be arbitrary positive functions such that $x=\omega(\sqrt{\log n / n})$, $y=\omega(x)$, and $y=o(\log n/\sqrt{n})$.
We have the following:
\begin{enumerate}
  \item \label{item:tauplus for PP}
  Let $c\in (0,1)$ be an arbitrary constant.
  Suppose $\psi_0\leq x$, $\beta_0\geq 1/2-x$, $\npm_0=\omega(\log n/\sqrt{n})$, and $\eta_0(j)\geq y$ for all $j\neq I_0$.
  Then, for some $T=O\qty(n\log n/\npm_0)$, 
\begin{align*}
  \Pr\qty[\tautildemaxplus(1-c)>T \text{ or } \min\{\taupsiplus(2x),\taubetaminus(2x)\}\leq T]\leq n^{-10}. 
 \end{align*}
 \item \label{item:taumaxmajority for PP}
 Suppose that $\psi_0\leq x$, $\beta_0\geq 1/2-x$, and $\alpha_0(1)\geq (1-\ctildemaxdown)\beta_0$.
 Then, for some $T=O(n\log n)$,
 \begin{align*}
   \Pr\qty[\taumaxplus(1-4\ctildemaxdown)>T \text{ or } \min\{\taupsi(2x),\taubetaminus(2x)\}\leq T]\leq n^{-10}. 
  \end{align*}
  \item \label{item:tauall for PP}
  Suppose that $\alpha_0(1)\geq 7/8$. 
  Then, 
  $
    \Pr\qty[\tau_{all}> 8n\log n \text{ or } \taumaxminus(3/4)\leq 8n\log n]\leq 1/n.
  $
  \item \label{item:taucons for PP}
  Suppose $\alpha_0(1)=\beta_0$ and $\beta_0\geq 1/2-x$.
  Then, $\Pr[\taucons>6n\log n]\leq 1/n$.
\end{enumerate}
\end{lemma}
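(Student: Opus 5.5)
The plan is to mirror the proof of \cref{lem:unique strong opinion evolves} item by item, in the time scale of the population protocol. One gossip round corresponds to roughly $n$ interactions. Every drift from \cref{lem:basic inequalities for alpha PP,lem:basic inequalities for tnpm PP} is smaller by a factor $1/n$. The variance parameter in the Bernstein condition is smaller by $1/n$, while the jump parameter stays $O(1/n)$.

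\textbf{Item~\ref{item:tauplus for PP}.} Using $\eta_t(j)\ge y>0$ for $j\ne I_t$, I first reproduce \cref{eq:tnpm additive drift}, which gives $\npm_{t-1}-\gamma_{t-1}/\beta_{t-1}\ge \cstrong c\,\npm_{t-1}$ as long as $\tnpm_{t-1}\le 1-c$. Plugging this into \cref{lem:basic inequalities for tnpm PP} yields the additive drift $\drift=\Omega((\npm_0)^2/n)$ for $\tnpm_t$. This holds up to the stopping time $\min\{\tautildemaxup,\tautildemaxdown,\tautildemaxplus(1-c),\taubetaminus(2x),\tau_{us}^-(y/2)\}$, since the $-18\npm_{t-1}/n^2$ term is negligible when $\npm_0=\omega(1/n)$.

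I then apply \cref{lem:Useful drift lemma} (\cref{item:positive drift useful}) with $T'=\Theta(n/\npm_0)$ and one-sided $(O(1/n),O(\npm_0/n^2))$-Bernstein increments. This gives failure probability $\exp(-\Omega(n(\npm_0)^2))$. Iterating $\ell=\Theta(\log n)$ multiplicative stages via \cref{lem:Iterative Drift theorem} gives $T=O(n\log n/\npm_0)$. Finally, I absorb the events $\tau_{us}^-$, $\taubetaminus(2x)$ and $\taupsiplus(2x)$ using \cref{eq:tauetaminus and tauinitial lemma for PP}, which is valid because $T\ll \constrefs{lem:hitting time for tnpm and npm for PP}{item:tautildemaxdown is large for PP}n^2$.

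\textbf{Item~\ref{item:taumaxmajority for PP}.} The computation is the same as in the gossip proof. It gives $\E_{t-1}[\alpha_t(1)]-\alpha_{t-1}(1)\ge \Omega(\alpha_0(1)/n)$ until $\alpha_t(1)$ reaches $1-4\ctildemaxdown$. Each constant-factor increase then takes $\Theta(n)$ steps with failure probability $\exp(-\Omega(n))$, and $O(\log n)$ stages give $O(n\log n)$ steps. The $\tautildemaxdown$ control comes from \cref{lem:tauinitial lemma for PP}.

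\textbf{Items~\ref{item:tauall for PP} and~\ref{item:taucons for PP}.} I use the supermartingale argument.

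For item~\ref{item:tauall for PP}, set $g_t=\beta_t-\alpha_t(1)$. Then $\E_{t-1}[g_t]\le g_{t-1}(1-\tfrac{1}{4n})$ while $\alpha_t(1)\ge 3/4$, so $r^{-t}g_{t\wedge\tau}$ with $r=1-\tfrac1{4n}$ is a supermartingale. Since $g_t\ge 1/n$ whenever it is nonzero, $\Pr[\tau>T]\le n\,r^{T}\le n e^{-T/(4n)}$, which is small for $T=8n\log n$ (adjusting constants if needed). The event that $\alpha_t(1)$ drops below $3/4$ is excluded by the analogue of \cref{claim:taumaxminus}. There, the drift is $-O(x/n)$ and the variance is $O(1/n^2)$, so over $O(n\log n)\ll n/x$ steps the decrease is $o(1)$ with probability $1-e^{-\Omega(n)}$.

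For item~\ref{item:taucons for PP}, once $\alpha_t(1)=\beta_t$ no other decided opinion can reappear. Then $u_t=1-\alpha_t(1)$ satisfies $\E_{t-1}[u_t]=u_{t-1}(1-\alpha_{t-1}(1)/n)\le u_{t-1}(1-\tfrac{1}{3n})$ while $\beta\ge 1/3$, and the same argument finishes in $6n\log n$ steps (up to constants).

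\textbf{Main obstacle.} The bottleneck is the beta/psi margin in item~\ref{item:tauall for PP}. In the population protocol, $\beta_t$ need not stay above $1/2-x$, so I will use \cref{lem:taubeta for PP} (\cref{item:hitting times for beta for PP:keeping phase}) and \cref{lem:taupsi for PP} (\cref{item:taupsiplus for PP}) with the doubled thresholds $2x$. This is also why the hypotheses are stated with $x$ but the conclusions use $2x$.
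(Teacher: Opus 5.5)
Your proposal is correct and follows the paper's proof essentially step for step. For item~1 you use the $\tnpm$ additive drift from \cref{eq:tnpm additive drift}, \cref{lem:Useful drift lemma} and \cref{lem:Iterative Drift theorem}, and absorb the remaining stopping times via \cref{eq:tauetaminus and tauinitial lemma for PP}; item~2 uses the same $\alpha_t(1)$ drift, and items~3--4 use the $r^{-t}g_t$ and $r^{-t}u_t$ supermartingale arguments with the analogue of \cref{claim:taumaxminus} and the doubled $2x$ thresholds (your drift scaling $\drift=\Omega((\npm_0)^2/n)$ in item~1 is the right one; the $1/n^2$ in the paper's display there is a typo). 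The constant slack you flag in items~3--4 is also present in the paper and is easily removed. For example, in item~3 you have $\alpha_{t-1}(j)+1-2\beta_{t-1}\le 1-2\alpha_{t-1}(1)\le -\tfrac12$, so $r=1-\tfrac{1}{2n}$ gives $n r^{8n\log n}\le n^{-3}$.
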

  \begin{proof}[Proof of \cref{lem:unique strong opinion evolves for PP} (\cref{item:tauplus for PP})]
    Let $\tautildemaxup = \inf\{t\geq 0:\tnpm_t\geq (1+\ctildemaxup)\tnpm_0\}$ for some positive constant $\ctildemaxup\in (0,1)$.
  Let $\tau^*=\min\{\tautildemaxplus,\taubetaminus(2x),\tau_{us}^-(y/2)\}$ and $\tau=\min\{\tautildemaxup,\tautildemaxdown,\tau^*\}$.
  For $t-1<\tau$, in a similar calculation as the proof of \cref{lem:unique strong opinion evolves} (\cref{item:tauplus}), we have 
  \begin{align*}
    \E_{t-1}\qty[\tnpm_t]
    &\geq  \tnpm_{t-1}\qty(1+\frac{\npm_{t-1}-\gamma_{t-1}/\beta_{t-1}}{2n})-\frac{18\npm_{t-1}}{n^2}
    \geq \tnpm_{t-1}+\frac{\cstrong c(1-\ctildemaxdown)^2(\npm_{0})^2}{12n^2}.
  \end{align*}
  Note that we use \cref{lem:basic inequalities for tnpm PP}, \cref{eq:tnpm additive drift}
  and $\npm_{t-1}\geq \frac{(1-\ctildemaxdown)\tnpm_{0}}{3}=\omega(1/n)$.
  
  Hence, letting $X_t=\tnpm_t$ and $\drift = \frac{\cstrong c(1-\ctildemaxdown)^2(\npm_{0})^2}{12n}> 0$, we have
    $
      \indicator_{\tau>t-1}\qty(X_{t-1}+\drift-\E_{t-1}[X_t])
      =\indicator_{\tau>t-1}\qty(\tnpm_{t-1}+\drift-\E_{t-1}[\tnpm_t])\leq 0
    $
    and 
    \begin{align*}
      \indicator_{\tau>t-1}\qty(X_{t-1}+\drift-X_t)
      &\leq \indicator_{\tau>t-1}\qty(\tnpm_{t-1}\qty(1+\frac{\npm_{t-1}-\gamma_{t-1}/\beta_{t-1}}{2n})-\frac{18\npm_{t-1}}{n^2}-\tnpm_t)
    \end{align*}
     satisfies one-sided $\qty(O\qty(\frac{1}{n}),O\qty(\frac{\npm_0}{n^2}))$-Bernstein condition. 
    Note that we use \cref{lem:basic inequalities for tnpm PP} and \cref{lem:Bernstein condition} (\cref{item:BC for upper bounded rv,item:BC for linear transformation}).
  
    Applying \cref{lem:Useful drift lemma} (\cref{item:positive drift useful}) for $I^+=(1+\ctildemaxup)\tnpm_{0}$, $I^-=(1+\ctildemaxdown)\tnpm_{0}$, and for $T'=\frac{2(I^+-X_0)}{\drift} = \frac{24}{\cstrong c^+(1-\ctildemaxdown)^2} \frac{n}{\npm_0}$, we have
    \begin{align*}
              \Pr\qty[\tautildemaxup > T' \text{ and } \tau^* > T'] 
              \le \exp\qty(- \Omega\qty( n (\npm_0)^2)).
      \end{align*}
  
    Let $\tau^{\uparrow (s)}=\inf\{t\geq 0: \tnpm_t\geq (1+\ctildemaxup)^s \tnpm_0\}$.
    From definition, for some $\ell=\Theta(\log n)$, $\tautildemaxplus\leq \tau^{\uparrow (\ell)}$ holds.
  Applying \cref{lem:Iterative Drift theorem} (\cref{item:iterative updrift}), we obtain $\Pr\qty[\min\{\tautildemaxplus,\tau^*\}>\ell T'] \leq \Pr\qty[\min\{\tau^{\uparrow (\ell)},\tau^*\}>\ell T']$ and 
  \begin{align*}
    \Pr\qty[\min\{\tau^{\uparrow (\ell)},\tau^*\}>\ell T']
    &\leq \sum_{s=1}^{\ell}\E\qty[\indicator_{\tau^*>\tau^{\uparrow (s-1)}} \Pr_{\tau^{\uparrow (s-1)}}\qty[\min\{\tau^{\uparrow (s)},\tau^*\}> \tau^{\uparrow (s-1)}+T']]\\
    &\leq \ell \exp\qty(-\Omega\qty(n (\npm_0)^2)).
  \end{align*}
  Note that $\npm_{\tau^{\uparrow (s-1)}}\geq \Omega(\npm_0)$ holds.
  Finally, combining the above and \cref{eq:tauetaminus and tauinitial lemma for PP}, we have
  \begin{align*}
    &\Pr\qty[\tautildemaxplus>T \text{ or }\min\{\taupsi(2x),\taubetaminus(2x)\}\leq T]\\
    &\leq \Pr\qty[\qty{\tautildemaxplus>T \text{ or } \min\{\taupsi(2x),\taubetaminus(2x)\}\leq T} \text{ and } \min\{\taupsi(2x),\taubetaminus(2x),\tau_{us}^-(y/2)\}>T]\\
    &+\Pr\qty[\min\{\taupsi(2x), \taubetaminus(2x),\tau_{us}^-(y/2)\}\leq T]\\
    &\leq n^{-10}.
  \end{align*}
  \end{proof}
  \begin{proof}[Proof of \cref{lem:unique strong opinion evolves for PP} (\cref{item:taumaxmajority for PP})]
  Let $\tau_{\max}^\downarrow=\inf\{t\geq 0:\npm_t\leq (1-c_{\max}^\downarrow)\npm_0\}$.
  Let $\tau^*=\min\{\taumaxplus(1-4\ctildemaxdown),\tautildemaxdown,\taubetaminus(2x)\}$ and $\tau=\min\{\tau_{\max}^\uparrow,\tau_{\max}^\downarrow,\tau^*\}$.
  For $t-1<\tau$, in a similar calculation as the proof of \cref{lem:unique strong opinion evolves} (\cref{item:taumaxmajority}), 
  \begin{align*}
    \E_{t-1}[\alpha_t(1)]-\alpha_{t-1}(1)
    &=\frac{\alpha_{t-1}(1)}{n}\qty(1-\alpha_{t-1}(1)\qty(\frac{2}{\tnpm_{t-1}}-1))
    \geq \frac{8(\ctildemaxdown)^2(1-c_{\max}^\downarrow)}{1-2\ctildemaxdown}\frac{\alpha_{0}(1)}{n}.
  \end{align*}
  
  Hence, letting $X_t=\alpha_t(1)$ and $\drift = \frac{8(\ctildemaxdown)^2(1-c_{\max}^\downarrow)}{1-2\ctildemaxdown}\frac{\alpha_{0}(1)}{n}> 0$, we have
    $
      \indicator_{\tau>t-1}\qty(X_{t-1}+\drift-\E_{t-1}[X_t])
      =\indicator_{\tau>t-1}\qty(\alpha_{t-1}(1)+\drift-\E_{t-1}[\alpha_t(1)])\leq 0
    $
    and 
    $
      \indicator_{\tau>t-1}\qty(\E_{t-1}[X_t]-X_t)
      = \indicator_{\tau>t-1}\qty(\E_{t-1}[\alpha_t(1)]-\alpha_t(1))\\
    $
     satisfies one-sided $\qty(O\qty(\frac{1}{n}),O\qty(\frac{\alpha_{0}(1)}{n^2}))$-Bernstein condition. 
     Note that we use \cref{lem:basic inequalities for alpha} (\cref{item:bernstein condition of alpha}) and \cref{lem:Bernstein condition} (\cref{item:BC for upper bounded rv,item:BC for linear transformation}).
  
    Applying \cref{lem:Useful drift lemma} (\cref{item:positive drift useful}) for $I^+=(1+\cmaxup)\alpha_0(1)$, $I^-=(1-c_{\max}^\downarrow)\alpha_0(1)$, and 
    $T'= \frac{c_{\max}^\uparrow (1-2\ctildemaxdown)}{4(\ctildemaxdown)^2(1-c_{\max}^\downarrow)}n=\frac{2(I^+-X_0)}{\drift}$, we have
             \begin{align*}
              \Pr\qty[\taumaxup > T' \text{ and } \tau^* > T'] 
              \le \exp\qty(- \Omega\qty( n (\npm_0)^2)).
             \end{align*}
    Let $\tau^{\uparrow (s)}=\inf\{t\geq 0: \npm_t\geq (1+c_{\max}^\uparrow)^s \npm_0\}$.
    From definition, for some $\ell=\Theta(\log n)$, $\taumaxplus(1-4\ctildemaxdown)\leq \tau^{\uparrow (\ell)}$ holds.
  Applying \cref{lem:Iterative Drift theorem} (\cref{item:iterative updrift}), we have
  \begin{align*}
    \Pr\qty[\min\{\taumaxplus(1-4\ctildemaxdown),\tau^*\}>\ell T']
    &\leq \Pr\qty[\min\{\tau^{\uparrow (\ell)},\tau^*\}>\ell T']\\
    &\leq \sum_{s=1}^{\ell}\E\qty[\indicator_{\tau^*>\tau^{\uparrow (s-1)}} \Pr_{\tau^{\uparrow (s-1)}}\qty[\min\{\tau^{\uparrow (s)},\tau^*\}> \tau^{\uparrow (s-1)}+T']]\\
    &\leq \ell \exp\qty(-\Omega\qty(n (\npm_0)^2)).
  \end{align*}
  Note that $\npm_{\tau^{\uparrow (s-1)}}\geq \Omega(\npm_0)$ holds.
  Finally, combining the above and \cref{lem:tauinitial lemma for PP}, we have
  \begin{align*}
    &\Pr\qty[\taumaxplus(1-4\ctildemaxdown)>T \text{ or } \min\{\taupsi(2x),\taubetaminus(2x)\}\leq T]\\
    &\leq \Pr\qty[\qty{\taumaxplus(1-4\ctildemaxdown)>T \text{ or } \min\{\taupsi(2x),\taubetaminus(2x)\}\leq T} \text{ and } \min\{\taupsi(2x), \taubetaminus(2x),\tautildemaxdown\}>T]\\
    &+\Pr\qty[\min\{\taupsi(2x), \taubetaminus(2x),\tautildemaxdown\}\leq T]\\
    &\leq n^{-10}.
  \end{align*}
  \end{proof}
  \begin{proof}[Proof of \cref{lem:unique strong opinion evolves for PP} (\cref{item:tauall for PP})]
    To begin with, we prove the following claim.
    \begin{claim}\label{claim:taumaxminus for PP}
      Let $c\in (0,1/2)$ be an arbitrary constant.
      Let $x=x(n)$ be an arbitrary positive function such that $x=\omega(\sqrt{\log n / n})$ and $x=o(\log n/\sqrt{n})$.
      Suppose that $\psi_0\leq x$, $\beta_0\geq 1/2-x$, and $\alpha_0(1)\geq 1-c$.
      Then, for some $T=\Omega(n/x)$, 
      \begin{align*}
        \Pr\qty[\taumaxminus(1-2c) \leq \min\{T,\taubetaminus(x),\taupsiplus(x)\}]
          \leq \exp\qty(-\Omega\qty(n)).
      \end{align*}
    \end{claim}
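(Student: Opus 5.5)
The plan is to mirror the proof of \cref{claim:taumaxminus} from the gossip model, rescaling drift and variance by the factor $1/n$ appropriate to the population protocol model. Set $\tau=\min\{\taubetaminus(x),\taupsiplus(x)\}$ and $X_t=\alpha_t(1)$. While $\tau>t-1$ we have $\beta_{t-1}\ge 1/2-x$ and $\psi_{t-1}\le x$.

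Using \cref{lem:basic inequalities for alpha PP} (\cref{item:expectation of alpha PP}),
\[
\E_{t-1}[\alpha_t(1)]-\alpha_{t-1}(1)=\frac{\alpha_{t-1}(1)\qty(\alpha_{t-1}(1)+1-2\beta_{t-1})}{n}.
\]
The identity $\psi_{t-1}=\beta_{t-1}(2\beta_{t-1}-1)-\gamma_{t-1}$ gives $2\beta_{t-1}-1=(\gamma_{t-1}+\psi_{t-1})/\beta_{t-1}$. Together with $\gamma_{t-1}\le \npm_{t-1}\beta_{t-1}$ this yields $1-2\beta_{t-1}\ge -\npm_{t-1}-3x$. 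Since $\alpha_{t-1}(1)\ge 1-2c>1/2-x$ before the stopping time, opinion $1$ is the maximum, so $\alpha_{t-1}(1)=\npm_{t-1}$. Hence the bracket is at least $-3x$, and the drift satisfies
\[
\E_{t-1}[X_t]\ge X_{t-1}-\frac{3x}{n}.
\]
This is the one step needing care: identifying $\alpha_{t-1}(1)$ with $\npm_{t-1}$ so that the $\gamma$-term cancels. It can also be handled by stopping additionally at $\taumaxminus(1-2c)$, which is harmless because that is exactly the event being bounded.

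For concentration, \cref{lem:basic inequalities for alpha PP} (\cref{item:bernstein condition of alpha PP}) gives that $\indicator_{\tau>t-1}(\E_{t-1}[X_t]-X_t)$ satisfies the $(1/n,1/n^2)$-Bernstein condition, using $\alpha_{t-1}(1)\le1$ and \cref{lem:Bernstein condition}. Thus conditions \cref{item:C1,item:C2'} hold with $\drift=-3x/n$, $\bounded=1/n$ and $\variance=1/n^2$.

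Next apply \cref{lem:Useful drift lemma} (\cref{item:negative drift useful}) with $I^-=1-2c$ and $\epsilon=1/2$. Then $X_0-I^-\ge c$, and the lemma allows any
\[
T\le \frac{(X_0-I^-)}{2\cdot 3x/n}, \quad\text{for instance } T=\frac{cn}{6x}=\Omega(n/x).
\]
The resulting bound is
\[
\exp\qty(-\Omega\qty(\frac{c^2}{T/n^2+c/n}))=\exp\qty(-\Omega\qty(\frac{1}{1/(nx)+1/n})).
\]
Since $nx\ge 1$, this is $\exp(-\Omega(n))$, which proves the claim.
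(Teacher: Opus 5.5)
\textbf{The final step is wrong.} Your route is exactly the paper's: the drift bound $\ge -3x/n$ for $\alpha_t(1)$, the $(1/n,1/n^2)$ proxy from \cref{lem:basic inequalities for alpha PP}, then \cref{lem:Useful drift lemma} (\cref{item:negative drift useful}). With $T=cn/(6x)$, the variance term $T/n^2=c/(6nx)$ dominates $c/n$, because $x=o(1)$. Hence
\[
\frac{c^2}{T/n^2+c/n}=\Theta\Bigl(\frac{1}{1/(nx)+1/n}\Bigr)=\Theta\Bigl(\frac{nx}{1+x}\Bigr)=\Theta(nx).
\]
So what you have proved is $\exp(-\Omega(nx))$, not $\exp(-\Omega(n))$. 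The remark ``$nx\ge 1$'' only bounds $1/(nx)$ by $1$. Tuning constants cannot rescue this: with drift $-3x/n$, proxy $1/n^2$ and any $T\ge Cn/x$, Freedman's exponent is at most of order $c^2/(T/n^2)=O(nx)$. The paper's proof of this claim has the same shortfall; its stated proxy $(O(1/n),O(1/n))$ would give even less. The weaker bound $\exp(-\Omega(nx))$ is all that the application in \cref{lem:unique strong opinion evolves for PP} (\cref{item:tauall for PP}) needs, but it is not the stated bound.

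\textbf{How to get $\exp(-\Omega(n))$.} You must use that the drift is actually positive in the band that has to be crossed. Put $a=\alpha_{t-1}(1)$, $u=1-\beta_{t-1}$ and $b=\beta_{t-1}-a$. The drift is $a(u-b)/n$. Also $\psi_{t-1}=\beta_{t-1}^2-\gamma_{t-1}-\beta_{t-1}u\ge 2ab-u$, since the cross terms of $\beta_{t-1}^2$ contain $2ab$. So $\psi_{t-1}\le x$ gives $b\le (u+x)/(2a)$. If $a\in[1-2c,1-c]$ and $1-2c>1/2$, this yields $u\ge (c-x)/2$ and $u-b\ge u(2a-1)/(2a)-x=\Omega(1)$, i.e.\ drift $+\Omega(1/n)$. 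Now apply \cref{lem:Useful drift lemma} (\cref{item:positive drift useful}) over windows of $\Theta(n)$ steps, where the accumulated variance is only $\Theta(1/n)$. Each downward crossing of a constant-width sub-band then has probability $\exp(-\Omega(n))$. Since jumps are $1/n$, \cref{lem:Iterative Drift theorem} (\cref{item:no increase under negative drift}) turns this into $T\exp(-\Omega(n))=\exp(-\Omega(n))$ for polynomial $T$.

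\textbf{The maximality step also fails for larger $c$.} This argument replaces your maximality step, which rests on $1-2c>1/2-x$; that inequality is false once $c\ge 1/4+x/2$. For such $c$, $\alpha_{t-1}(1)>1-2c$ does not make opinion $1$ the maximum, and stopping at $\taumaxminus(1-2c)$ does not help. For these $c$, run the argument on the band $[\max\{1-2c,\,3/4-c/2\},\,1-c]$. This band lies above $1/2$, and $\alpha_t(1)$ must cross it before it can reach $1-2c$.
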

    \begin{proof}
      Let $\tau=\min\{\taubetaminus(x),\taupsiplus(x)\}$.
      Then, for $t-1<\tau$, we have
      \begin{align*}
          \E_{t-1}[\alpha_t(1)]
          &= \alpha_{t-1}(1)\qty(1+\frac{\alpha_{t-1}(1)-\frac{\gamma_{t-1}+x}{\beta_{t-1}}}{n})
          \geq \alpha_{t-1}(1)-\frac{3x}{n}.
        \end{align*}
        Note that we use \cref{lem:basic inequalities for alpha} (\cref{item:expectation of alpha}) in the first equality.
    
        Hence, letting $X_t=\alpha_t(1)$ and $\drift = -\frac{3x}{n}<0$, we have
        $
          \indicator_{\tau>t-1}\qty(X_{t-1}+\drift-\E_{t-1}[X_t])
          =\indicator_{\tau>t-1}\qty(\alpha_{t-1}(1)-3x-\E_{t-1}[\alpha_t(1)])\leq 0
        $
        and $\indicator_{\tau>t-1}\qty(\E_{t-1}[X_t]-X_t)=\indicator_{\tau>t-1}\qty(\E_{t-1}[\alpha_t(1)]-\alpha_t(1))$ satisfies $\qty(O(1/n),O(1/n))$-Bernstein condition. 
        Note that we use \cref{lem:basic inequalities for alpha PP} (\cref{item:bernstein condition of alpha PP}) and \cref{lem:Bernstein condition} (\cref{item:BC for upper bounded rv,item:BC for linear transformation}).
        Hence, applying \cref{lem:Useful drift lemma} (\cref{item:negative drift useful}) with $I^-=1-2c$, and $T\leq \frac{c}{6x}n\leq \frac{X_0-I^-}{-2\drift}$, we obtain the claim.
    \end{proof}
    Write $g_t=\beta_t-\alpha_t(1)=\sum_{j\geq 2}\alpha_t(j)$ for convenience.
    Let $\tau=\min\{\tau_{all},\taumaxminus(3/4)\}$.
    Then, for $t-1<\tau$, we have
    \begin{align*}
      \E_{t-1}[g_t]
      &=\sum_{j\geq 2}\alpha_{t-1}(j)\qty(1+\frac{\alpha_{t-1}(j)+1-2\beta_{t-1}}{n})
      \leq g_{t-1}\qty(1-\frac{1}{4n}).
    \end{align*}
    Note that $\alpha_{t-1}(j)\leq 1-\alpha_{t-1}(1)\leq 1/4$ and $\beta_{t-1}\geq \alpha_{t-1}(1)\geq 3/4$.
    
    Let $r=1-\frac{1}{4n}$, $X_t=r^{-t}g_t$, and $Y_t=X_{t\wedge \tau}$.
  Then, 
  \begin{align*}
  \E_{t-1}[Y_t]-Y_{t-1}
  =\indicator_{\tau>t-1}\qty(\E_{t-1}[X_t]-X_{t-1})
  \leq \indicator_{\tau>t-1}\qty(r^{-t}\E_{t-1}[g_t]-r^{-(t-1)}g_{t-1})
  \leq 0,
  \end{align*}
  i.e., $Y_t$ is a submartingale.
  Hence, we have $\E[Y_T]\leq \E[Y_0]=g_0\leq 1$ and 
  \begin{align*}
  \E[Y_T]
  \geq \E[X_{T}\mid \tau>T]\Pr[\tau>T]
  =r^{-T}\E[g_T\mid \tau>T]\Pr[\tau>T]
  \geq r^{-T}n^{-1}\Pr[\tau>T].
  \end{align*}
  Consequently, we have
     $ \Pr[\tau>T]\leq nr^T\leq n\exp\qty(-\frac{T}{4n})\leq 1/n^2.$
  Thus, combining the above, \cref{claim:taumaxminus for PP,lem:taubeta for PP,lem:taupsi for PP} gives
  \begin{align*}
    &\Pr\qty[\tau_{all}> T \text{ or } \taumaxminus(3/4)\leq T]\\
    &\leq \Pr\qty[\qty{\tau_{all}> T \text{ or } \taumaxminus(3/4)\leq T} \text{ and } \min\{\taumaxminus(3/4),\taubetaminus(2x),\taupsiplus(2x)\}>T]\\
    &+\Pr\qty[\min\{\taumaxminus(3/4),\taubetaminus(2x),\taupsiplus(2x)\}\leq T]\\
    &\leq \Pr\qty[\tau>T]+\Pr\qty[\taumaxminus(3/4)\leq T \text{ and } \min\{\taubetaminus(2x),\taupsiplus(2x)\}>T] + \Pr\qty[\min\{\taubetaminus(2x),\taupsiplus(2x)\}\leq T]\\
    &\leq 1/n.
  \end{align*}
  \end{proof}
  \begin{proof}[Proof of \cref{lem:unique strong opinion evolves for PP} (\cref{item:taucons for PP})]
    Let $\tau=\min\{\taucons,\taubetaminus\}$.
    Let $u_t=1-\alpha_t(1)$. Then, 
    \begin{align*}
      \E_{t-1}[u_t]=1-\alpha_{t-1}(1)\qty(1+\frac{1+\alpha_{t-1}(1)-2\beta_{t-1}}{n})
      =u_{t-1}\qty(1-\frac{\alpha_{t-1}(1)}{n})
      \leq u_{t-1}\qty(1-\frac{1}{3n}).
    \end{align*}
    Let $r=1-\frac{1}{3n}$, $X_t=r^{-t}u_t$, and $Y_t=X_{t\wedge \tau}$.
  Then, 
  \begin{align*}
  \E_{t-1}[Y_t]-Y_{t-1}
  =\indicator_{\tau>t-1}\qty(\E_{t-1}[X_t]-X_{t-1})
  \leq \indicator_{\tau>t-1}\qty(r^{-t}\E_{t-1}[u_t]-r^{-(t-1)}u_{t-1})
  \leq 0,
  \end{align*}
  i.e., $Y_t$ is a submartingale.
  Hence, we have $\E[Y_T]\leq \E[Y_0]=u_0\leq 1$ and 
  \begin{align*}
  \E[Y_T]
  \geq \E[X_{T}\mid \tau>T]\Pr[\tau>T]
  =r^{-T}\E[u_T\mid \tau>T]\Pr[\tau>T]
  \geq r^{-T}n^{-1}\Pr[\tau>T].
  \end{align*}
  Consequently, we have
     $ \Pr[\tau>T]\leq nr^T\leq n\exp\qty(-\frac{T}{3n})\leq 1/n^2.
$
  Thus, from \cref{lem:taubeta for PP}, we have
  \begin{align*}
    \Pr[\taucons>T]
    &\leq \Pr[\min\{\taucons,\taubetaminus(2x)\}>T]+\Pr[\taubetaminus(2x)\leq T]
    \leq 1/n.
  \end{align*}
  \end{proof}

  \begin{proof}[Proof of \cref{lem:towards consensus for PP}]
    We have the following:
  \begin{itemize}
    \item From \cref{lem:unique strong opinion evolves for PP} (\cref{item:tauplus for PP}), for some $T_1=O\qty(n\log n/\npm_0)$, we have $\alpha_{T_1}(1)\geq (1-\ctildemaxdown)\beta_{T_1}$, $\beta_{T_1}\geq 1/2-2x$, and $\psi_{T_1}\leq 2x$, with probability at least $1-n^{-10}$.
    \item From \cref{lem:unique strong opinion evolves for PP} (\cref{item:taumaxmajority for PP}), for some $T_2=O(n\log n)$, we have $\alpha_{T_1+T_2}(1)\geq 1-4\ctildemaxdown$, $\beta_{T_1+T_2}\geq 1/2-4x$, and $\psi_{T_1+T_2}\leq 4x$, with probability at least $1-n^{-10}$.
    \item Assume that $\ctildemaxdown = 1/32$. From \cref{lem:unique strong opinion evolves for PP} (\cref{item:tauall for PP}), for some $T_3= O(n\log n)$, we have $\alpha_{T_1+T_2+T_3}(1)=\beta_{T_1+T_2+T_3}$ and $\alpha_{T_1+T_2+T_3}(1)\geq 3/4$ with probability at least $1-1/n^2$.
    \item From \cref{lem:unique strong opinion evolves for PP} (\cref{item:taucons for PP}), for some $T_4=O(n\log n)$, we have $\alpha_{T_1+T_2+T_3+T_4}(1)=1$ with probability at least $1-1/n^2$.
  \end{itemize}
  Thus, we obtain the claim.
  \end{proof}

\subsection{Putting All Together}

\begin{lemma}
    \label{lem:good configuration for PP}
    Let $C>0$ be any constant.
    Let $x=x(n)$ be an arbitrary positive function such that $x=\omega(\sqrt{\log n / n})$ and $x=o(\log n/\sqrt{n})$.
    For some 
    \begin{align}
        T=
    \begin{cases}
        O\qty(n\log n) & \qty(\text{if } k\le \frac{C\sqrt{n}}{(\log n)^2}), \\
        O\qty(n^{1.5}(\log n)^3) & (\text{otherwise})
    \end{cases},
    \label{eq:good configuration for PP}
    \end{align}
    we have
    \begin{align*}
    \Pr\qty[\psi_T\leq x \text{ and } \beta_T\geq \frac{1}{2}-x \text{ and } \npm_T\geq \frac{(\log n)^{1.5}}{\sqrt{n}}] \geq 1 -O\qty(n^{-10}).
    \end{align*}
\end{lemma}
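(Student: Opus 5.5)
We follow the proof of \cref{lem:good configuration} for the gossip model, replacing each ingredient by its population-protocol analogue. There is no analogue of $\pbot$: in the population protocol model an undecided vertex is created only when two distinct decided opinions interact, so $\beta_t$ never drops to $0$ once it is positive.

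We assume, as the theorem does, that the initial configuration is not all-$\bot$, so $\beta_0\ge 1/n$. We run the process in four consecutive stages, working with a threshold $x' = x/8$. This margin absorbs the factor-$2$ (or $4$) losses that the population protocol lemmas incur in the thresholds, and $x'$ still satisfies $x'=\omega(\sqrt{\log n/n})$ and $x'=o(\log n/\sqrt n)$.

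\begin{enumerate}
\item By \cref{lem:taubeta for PP} (\cref{item:taubeta_logn for PP}), within $O(n\log n)$ steps we reach $\beta\ge C\log n/n$, with probability $1-O(n^{-10})$.
\item By \cref{lem:taubeta for PP} (\cref{item:taubetaplus for PP}), restarted at that time, we reach $\beta\ge 1/2-x'$ within another $O(n\log n)$ steps. The failure probability is $O(\log n)\qty(e^{-\Omega(C\log n)}+e^{-\Omega(nx'^2)})=n^{-\omega(1)}$ once $C$ is large.
\item We bring $\psi$ down with \cref{lem:taupsi for PP} (\cref{item:taupsiminus for PP}): within $O(n\log n)$ steps we have $\psi\le x'$, unless $\psi$ is already below $x'$.
\item \cref{lem:taubeta for PP} (\cref{item:hitting times for beta for PP:keeping phase}) and \cref{lem:taupsi for PP} (\cref{item:taupsiplus for PP}) then give $\beta_t\ge 1/2-2x'$ and $\psi_t\le 2x'$ for all $t$ in a polynomial window, each failing with probability $n^{O(1)}e^{-\Omega(nx'^2)}=n^{-\omega(1)}$.
\end{enumerate}

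Stages 2 and 3 must be combined with care, because $\beta$ could drift down while we wait for $\psi$ to fall. The plan is to wait for $\psi$ first, which needs no assumption on $\beta$, and then apply stage 2. Alternatively, we run stage 2 first and invoke the keeping bound of stage 4 with threshold $2x'$. Either order leaves a configuration with $\beta\ge 1/2-2x'$ and $\psi\le 2x'$ at some time $T_0=O(n\log n)$.

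\textbf{Small $k$.} If $k\le C\sqrt n/(\log n)^2$, then at time $T_0$ we have $\npm_{T_0}\ge \beta_{T_0}/k\ge (\log n)^2/(3C\sqrt n)\ge (\log n)^{1.5}/\sqrt n$ for large $n$. Taking $T=T_0$ finishes this case, since $2x'\le x$.

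\textbf{Large $k$.} Restarted at $T_0$, \cref{lem:growth of tnpt for PP} with parameter $2x'$ yields $\taugammaplus\le T_1=O(n^2\xgamma\log n)=O(n^{1.5}(\log n)^3)$ with probability $1-n^{-10}$, while $\beta\ge 1/2-4x'$ throughout. At time $\taugammaplus$ we then have $\npm\ge\gamma/\beta\ge\gamma\ge\xgamma\ge(\log n)^{1.5}/\sqrt n$. The stage 4 bound over a window of length $n^2$ gives $\psi\le 4x'\le x$ on the same event. Since $4x'\le x$, we may take $T=T_0+\taugammaplus$.

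\textbf{Main obstacle: $T$ is random here, while the statement asks for a fixed $T$.} We handle this by showing that the lower bound on $\npm$ persists: once $\npm$ exceeds $2(\log n)^{1.5}/\sqrt n$ (we aim for $\xgamma$, which is larger by a $\sqrt{\log n}$ factor), \cref{lem:hitting time for tnpm and npm for PP} (\cref{item:tautildemaxdown is large for PP}) shows that $\tnpm$ stays above $(1-\ctildemaxdown)$ times its value for $\Omega(n^2)$ steps. The failure probability is $T\exp(-\Omega(\tnpm n^2/T))=n^{-\omega(1)}$ for $T\le n^{1.6}$. Because $\beta\ge 1/3$, this gives $\npm\ge(\log n)^{1.5}/\sqrt n$ at the deterministic time $T=T_0+T_1$. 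The remaining work is union-bounding the finitely many stage failures, each at most $O(n^{-10})$ or $n^{-\omega(1)}$.
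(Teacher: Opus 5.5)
Up to the point where you set $T=T_0+\taugammaplus$, your argument is the paper's. You use \cref{lem:taubeta for PP,lem:taupsi for PP} to reach $\beta\ge 1/2-O(x)$ and $\psi\le O(x)$ within $O(n\log n)$ steps and keep them over a polynomial window. For small $k$ you use $\npm\ge\beta/k$; for large $k$ you use \cref{lem:growth of tnpt for PP} together with $\npm\ge\gamma$. (A small slip: the stage-2 failure term is $n^{-\Omega(C)}$, not $n^{-\omega(1)}$, though it is still $O(n^{-10})$ for large $C$.)

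The step that fails is the upgrade from the hitting time to a fixed time. \cref{lem:hitting time for tnpm and npm for PP} (\cref{item:tautildemaxdown is large for PP}) gives failure probability $T\exp(-\Omega(\tnpm_0 n^2/(T+n)))$. You need it from a level $\tnpm_0$ that may be as low as $\xgamma$, over a window as long as $T_1=\Theta(n^2\xgamma\log n)$. The exponent is then $\Theta(\xgamma n^2/T_1)=\Theta(1/\log n)$, so the bound is vacuous. Your claim that it is $n^{-\omega(1)}$ for $T\le n^{1.6}$ is an arithmetic slip, since $\xgamma n^2/n^{1.6}=(\log n)^2n^{-0.1}\to 0$. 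The lemma is useful only for windows of length $o(\xgamma n^2/\log n)$, which is a $(\log n)^2$ factor short. Both this window and the hitting-time bound scale linearly in the threshold, so aiming higher than $\xgamma$ does not help. Nor is this a matter of constants. The lemma's only inputs are a drift of $-O(\npm/n^2)$ and a one-step variance of $O(\npm/n^2)$, and a process with these parameters started at $\xgamma$ accumulates variance $\Theta(\xgamma^2\log n)\gg\xgamma^2$ over the window.

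There are two ways out. The simplest is to let $T$ be a stopping time that is $O(n^{1.5}(\log n)^3)$ with probability $1-O(n^{-10})$. That is all the proof of \cref{thm:main} needs, since it restarts there by the strong Markov property. This is effectively what the paper does: its proof asserts $\npt_{T+T'}\ge\xgamma$ at a fixed time while citing only \cref{lem:growth of tnpt for PP}, which bounds the hitting time $\taugammaplus$. So the paper passes over your obstacle rather than resolving it.

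If you want a genuinely deterministic $T$, track the aggregate $\tnpt$ instead of $\tnpm$. It has drift at least $1/(12n^2)$. Its one-step conditional variance is $O((\norm{\alpha}_3^3+\gamma^2)/n^2)=O(\gamma^{3/2}/n^2)$, using $\norm{\alpha}_3^3\le\npm\gamma\le\gamma^{3/2}$. Freedman's inequality then keeps $\tnpt\ge\xgamma/2$ over the window except with probability $\exp(-\Omega(n^{1/4}/(\log n)^2))$, and so $\npm\ge\beta\tnpt\ge(\log n)^{1.5}/\sqrt n$. This route, however, needs a Bernstein condition for $\tnpt_t$ in the population protocol model, which the paper does not supply.
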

\begin{proof}
 First, from \cref{lem:taubeta for PP,lem:taupsi for PP}, we have the following:
  \begin{enumerate}
    \item \label{item:good configuration 1 for PP} Combining \cref{item:taubeta_logn for PP,item:taubetaplus for PP,item:hitting times for beta for PP:keeping phase} of \cref{lem:taubeta for PP}, for some $T_1=O(n\log n)$ and $T_2=\Omega(n^3)$, $\min_{t\in [T_2]}\beta_{T_1+t}\geq 1/2-x/2$ holds with probability at least $1-O(n^{-10})$. 
  \item \label{item:good configuration 2 for PP} Combining \cref{item:taupsiminus for PP,item:taupsiplus for PP} of \cref{lem:taupsi for PP}, for some $T_1'=O(n\log n)$ and $T_2'=\Omega(n^3)$, $\min_{t\in [T_2']}\psi_{T_1'+t}\leq x/2$ holds with probability at least $1-n^{-\omega(1)}$. 
  \end{enumerate}
\Cref{item:good configuration 1 for PP,item:good configuration 2 for PP} implies that, for some $T=O(n\log n)$, $\beta_T\geq 1/2-x/2$ and $\psi_T\leq x/2$ hold with probability at least $1-O(n^{-10})$.
For the case where $k=O\qty(\sqrt{n}/(\log n)^2)$, 
since $\npm_t\geq \beta_t/k=\Omega(\beta_t(\log n)^2/\sqrt{n})$ holds for any $t$, we obtain the claim.
For the general case, from \cref{lem:growth of tnpt for PP}, for some $T'=O\qty(n^{1.5}(\log n)^3)$, we have that $\npt_{T+T'}\geq (\log n)^2/\sqrt{n}$, $\psi_{T+T'}\leq x$, and $\beta_{T+T'}\geq 1/2-x$ with probability at least $1-n^{-\omega(1)}$.

\end{proof}

\begin{lemma}
    \label{lem:consensus time starting from a goood intial configuration for PP}
    Let $x=x(n)$ be an arbitrary positive function such that $x=\omega(\sqrt{\log n / n})$ and $x=o(\log n/\sqrt{n})$.
    Suppose that $\psi_0\leq x$, $\beta_0\geq 1/2-x$, and $\npm_0=\omega(\log n/\sqrt{n})$.
    Then, $\taucons= O\qty(n\log n/\npm_0)$ with high probability. 
\end{lemma}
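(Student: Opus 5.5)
The plan is to mirror the proof of \cref{lem:consensus time starting from a goood intial configuration} for the gossip model, composing two population-protocol results through the strong Markov property. Run the process from a configuration with $\psi_0\le x$, $\beta_0\ge 1/2-x$ and $\npm_0=\omega(\log n/\sqrt n)$. Fix auxiliary functions $x'$ and $y$ with $x'=\omega(x)$, $y=\omega(x')$, $y=o(\log n/\sqrt n)$ and $8x'=o(\log n/\sqrt n)$; for example $x'=x(\log n/(x\sqrt n))^{1/3}$ and $y=x(\log n/(x\sqrt n))^{2/3}$. The hypotheses with $x$ then imply the same hypotheses with $x$ replaced by $x'$.

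First apply \cref{lem:unique strong opinion for PP} (\cref{item:tauusplus for PP}) with parameters $(x',y)$. It gives some $T_1=O(n\log n/\npm_0)$ such that, with probability $1-O(n^{-10})$, the following holds at time $\tau^+_{us}(y)\le T_1$:
\begin{itemize}
\item every $j\neq I_t$ satisfies $\eta_t(j)\ge y$;
\item $\beta_t\ge 1/2-8x'$ and $\psi_t\le 8x'$ (from $\tau>T_1$ for the $\beta$ and $\psi$ stopping times);
\item $\tnpm_t\ge(1-\ctildemaxdown)\tnpm_0$ (from $\tautildemaxdown>T_1$).
\end{itemize}
The third item gives $\npm_t\ge \beta_t\tnpm_t=\Omega(\npm_0)=\omega(\log n/\sqrt n)$. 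Stopping at this time, the state satisfies the hypotheses of \cref{lem:towards consensus for PP} with the parameter pair $(8x',y)$. This is valid because $8x'$ still satisfies the required growth conditions and $y=\omega(8x')$.

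Next apply \cref{lem:towards consensus for PP} from that stopping time via the strong Markov property. It yields consensus within a further $O(n\log n/\npm_{\tau^+_{us}})=O(n\log n/\npm_0)$ interactions with probability $1-1/n$. A union bound gives $\taucons=O(n\log n/\npm_0)$ with probability $1-O(1/n)$.

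The main obstacle is the bookkeeping of the margins. In the population protocol model each invocation weakens the thresholds on $\beta$ and $\psi$ by a constant factor ($x\to 2x\to 8x$), while the subsequent lemma requires $y=\omega(\text{current }x)$. I would resolve this by choosing the scales $x\ll x'\ll y\ll \log n/\sqrt n$ geometrically from the outset, as above. I would also check that $\npm$ at the intermediate time stays $\Omega(\npm_0)$ using $\tautildemaxdown$ together with $\beta\ge 1/3$. If the $1/n$ failure probability must be strengthened to a general high-probability guarantee, I would repeat the final phase independently, since \cref{lem:towards consensus for PP}'s bound holds from any good configuration.
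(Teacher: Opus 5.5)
Your proposal is correct and follows the paper's own proof: apply \cref{lem:unique strong opinion for PP} (\cref{item:tauusplus for PP}) to reach a configuration with a unique strong opinion, $\beta\ge 1/2-8x$, $\psi\le 8x$ and $\npm=\Omega(\npm_0)$; then invoke \cref{lem:towards consensus for PP} from there and take a union bound. Using the stopping time $\tau^+_{us}(y)$ with the strong Markov property, and constructing $y$ explicitly, is slightly more careful than the paper's write-up; the intermediate scale $x'$ is harmless but unnecessary, since $y=\omega(x)$ already gives $y=\omega(8x)$.
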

\begin{proof}
  We have the following:
  \begin{itemize}
    \item From \cref{lem:unique strong opinion for PP} (\cref{item:tauusplus for PP}), for some $T_1=O\qty(n\log n/\npm_0)$, we have $\min_{j\neq I_{T_1}}\eta_{T_1}(j)\geq y$, $\beta_{T_1}\geq 1/2-8x$, $\psi_{T_1}\leq 8x$, and $\npm_{T_1}=\Omega(\npm_0)=\omega(\log n/\sqrt{n})$ with probability at least $1-n^{-10}$.
    \item From \cref{lem:towards consensus for PP}, for some $T_2=O\qty(n\log n/\npm_0)$, we have $\taucons\leq T_1+T_2$ with probability at least $1-1/n$.
  \end{itemize}
  Thus, we obtain the claim.
\end{proof}
\begin{proof}[Proof of \cref{thm:main}]
    From \cref{lem:good configuration for PP}, we have that $\psi_T\leq x$ and $\beta_T\geq 1/2-x$ and $\npm_T\geq (\log n)^2/\sqrt{n}$ hold with high probability for some $T$ as defined in \cref{eq:good configuration for PP} (\cref{lem:good configuration for PP}).
    Then, from \cref{lem:consensus time starting from a goood intial configuration}, we reach a consensus within additional $O\qty(\frac{n\log n}{\npm_T}) = O\qty( \min\qty{ kn\log n, n^{1.5}/\sqrt{\log n} } )$ steps with high probability.
    Here, we use $\npm_T \ge \frac{\beta_T}{k} = \Omega(k)$ if $k$ is small and $\npm_T\ge (\log n)^{1.5}/\sqrt{n}$ if $k$ is large.
    Therefore, the consensus time is bounded by
    \begin{align*}
        \taucons \le T + O\qty( \min\qty{ kn\log n, n^{1.5}/\sqrt{\log n}} ) = \Otilde(\min\qty{kn, n^{1.5}})
    \end{align*}
    and obtain the claim.
\end{proof}

\printbibliography

\appendix
\section{Technical Inequalities and Auxiliary Lemmas} \label{sec:tools}

This appendix collects several technical tools that are used repeatedly in the analysis but are not specific to USD.
We include variants of Chernoff-type bounds, concentration results for read-$\ell$ families, the optional stopping theorem, and basic notions such as stochastic domination.
None of these results are new; they are provided here for completeness and to keep the main proofs self-contained.

\begin{lemma}[\mbox{\cite[Corollary 1.10.4]{Doe18}}] \label{lem:chernoff-variant}
  Let $X_1,\dots,X_n$ be independent random variables that take values in $[0,1]$.
  Let $X=\sum_{i=1}^n X_i$.
  Then, for any $h\ge 2\mathrm{e}\E[X]$, we have
  \begin{align*}
    \Pr\qty[X \ge h] \le 2^{-h}.
  \end{align*}
\end{lemma}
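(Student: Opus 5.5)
The plan is the standard exponential-moment (Chernoff) argument, specialised to the threshold $h\ge 2\mathrm{e}\,\E[X]$. Write $\mu=\E[X]$. For any $\lambda>0$, Markov's inequality applied to $\mathrm{e}^{\lambda X}$ gives $\Pr[X\ge h]\le \mathrm{e}^{-\lambda h}\E[\mathrm{e}^{\lambda X}]$. Independence then factorises the moment generating function as $\E[\mathrm{e}^{\lambda X}]=\prod_i \E[\mathrm{e}^{\lambda X_i}]$.

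Since each $X_i\in[0,1]$ and $x\mapsto \mathrm{e}^{\lambda x}$ is convex, we have $\mathrm{e}^{\lambda X_i}\le 1+X_i(\mathrm{e}^{\lambda}-1)$. Taking expectations and using $1+y\le \mathrm{e}^y$ yields $\E[\mathrm{e}^{\lambda X_i}]\le \exp\bigl(\E[X_i](\mathrm{e}^\lambda-1)\bigr)$. Multiplying over $i$ gives
\[
\Pr[X\ge h]\le \exp\bigl(\mu(\mathrm{e}^\lambda-1)-\lambda h\bigr).
\]

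Next we choose $\lambda=\ln(h/\mu)$, assuming $\mu>0$. With this choice the bound becomes
\[
\exp\bigl(h-\mu-h\ln(h/\mu)\bigr)\le \Bigl(\frac{\mathrm{e}\mu}{h}\Bigr)^{h}.
\]
The hypothesis $h\ge 2\mathrm{e}\mu$ gives $\mathrm{e}\mu/h\le 1/2$, so the right-hand side is at most $2^{-h}$.

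It remains to handle the degenerate cases. If $\mu=0$, then $X=0$ almost surely, so for $h>0$ the probability is $0\le 2^{-h}$. If $h=0$, the bound $2^{0}=1$ is trivial. No step here is a real obstacle; the only point needing care is the convexity bound on the moment generating function for $[0,1]$-valued variables, which is what allows non-Bernoulli $X_i$.
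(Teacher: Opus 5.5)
Your proof is correct. The convexity bound on $\E[\mathrm{e}^{\lambda X_i}]$, the choice $\lambda=\ln(h/\mu)$ (positive, since $h/\mu\ge 2\mathrm{e}$), and the final estimate $(\mathrm{e}\mu/h)^h\le 2^{-h}$ all hold, and you handle the degenerate cases properly. The paper gives no proof and only cites Doerr's Corollary~1.10.4, which comes from this same standard Chernoff argument (the multiplicative bound $\bigl(\mathrm{e}^{\delta}/(1+\delta)^{1+\delta}\bigr)^{\mu}\le(\mathrm{e}\mu/h)^h$ with $h=(1+\delta)\mu$), so your route matches the source.
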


While the Chernoff bound gives a tail bound for the sum of independent random variables, we will need a more general concentration result for the sum of random variables that are not necessarily independent.
In the following, we introduce the notion of \emph{read-$\ell$ family}, which is a family of random variables that are not necessarily independent but have a certain structure \cite{read-k,Duppala}.

\begin{definition}[Read-$\ell$ family; \cite{read-k}] \label{def:read-ell family}
  A family of real-valued random variables $(Y_1,\dots,Y_n)$ is called a \emph{read-$\ell$ family} if there exist $m\in\Nat$, independent random variables $X_1,\dots,X_m$, and subsets $S_1,\dots,S_n \subseteq [m]$ that satisfy the following:
  \begin{enumerate}
    \item Each $Y_i$ can be written as a function of $(x_s)_{s\in S_i}$ for $i\in [n]$.
    \item For each $j\in [m]$, the number of subsets $S_i$ that contain $j$ is at most $\ell$.
  \end{enumerate}
\end{definition}

\citet{read-k} shows for the first time a tail bound for the sum of a read-$\ell$ family using an information-theoretic argument, which does not yield a bound for the moment generating function.
Very recently, \citet{Duppala} shows a concentration result for the sum of a read-$\ell$ family based on the moment generating function using the following general inequality.

\begin{lemma}[\mbox{\cite[Lemma 16]{Duppala}}]
    \label{lem:moment of read-k family}
    Let $F_1,\ldots,F_n$ be a read-$\ell$ family.
    Then, $\E\qty[\prod_{i\in [n]}F_i] \leq  \qty(\prod_{i\in [n]}\E[F_i^\ell])^{1/\ell}$.
\end{lemma}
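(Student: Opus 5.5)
The statement is \cref{lem:moment of read-k family}: for a read-$\ell$ family $F_1,\dots,F_n$, presumably of nonnegative random variables as in the application to $F_i=e^{\lambda Y_i}$, we must show
\[
\E\Bigl[\prod_{i} F_i\Bigr]\le \Bigl(\prod_i \E[F_i^\ell]\Bigr)^{1/\ell}.
\]
I plan to induct on the number $m$ of underlying independent variables $X_1,\dots,X_m$, proving a slightly stronger statement. Let $F_i=f_i((X_s)_{s\in S_i})$ with $f_i\ge 0$. For each $i$ put $d_i=|\{$levels touched$\}|$. The cleanest invariant is a generalized Hölder/Shearer-type inequality. If each coordinate $j$ lies in at most $\ell$ of the sets $S_i$, then
\[
\E\Bigl[\prod_i f_i\Bigr]\le \prod_i \bigl\|f_i\bigr\|_{\ell},
\]
where the norm is over the product measure. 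This is the Finner inequality (a product-measure generalization of Hölder/Loomis--Whitney), and it is exactly the claim.

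The proof integrates out one coordinate at a time. Suppose the inequality holds for families on $m-1$ coordinates with exponents $p_i$, where each coordinate satisfies $\sum_{i\ni j} 1/p_i\le 1$. For the step, fix $X_1,\dots,X_{m-1}$ and integrate over $X_m$. Only the functions with $m\in S_i$, at most $\ell$ of them, depend on $X_m$. By Hölder in $X_m$ with exponents $\ell$ (padding with the constant $1$ if fewer than $\ell$ functions are involved),
\[
\E_{X_m}\Bigl[\prod_{i\ni m} f_i\Bigr]\le \prod_{i\ni m}\bigl(\E_{X_m} f_i^{\ell}\bigr)^{1/\ell}.
\]
Define $g_i=(\E_{X_m} f_i^\ell)^{1/\ell}$ for $i\ni m$, which depends only on the coordinates $S_i\setminus\{m\}$, and set $g_i=f_i$ otherwise. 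The family $(g_i)$ is still read-$\ell$ on $m-1$ coordinates. Applying the induction hypothesis gives $\E[\prod f_i]\le \prod_i \|g_i\|_\ell$. Moreover $\|g_i\|_\ell^\ell=\E[\E_{X_m}f_i^\ell]=\E[f_i^\ell]$ by Fubini, which closes the induction. The base case $m=0$ is trivial: all $f_i$ are constants, and $\prod f_i=\prod (f_i^\ell)^{1/\ell}$.

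The main obstacle is bookkeeping. Hölder in the integrated coordinate must use the same exponent $\ell$ for every function, which is why I state the invariant with the uniform exponent $\ell$ rather than the variable $d_i$. I also have to check that the $L^\ell$ norms compose correctly through Fubini, namely that $\|(\E_{X_m}f^\ell)^{1/\ell}\|_\ell=\|f\|_\ell$, and that nonnegativity and measurability are preserved. Integrability can be handled by allowing $+\infty$ on the right-hand side, since all quantities are nonnegative.
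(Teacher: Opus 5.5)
The paper gives no proof of this statement. It is quoted from Duppala et al.\ (their Lemma~16) and used only as a black box in item~(ii) of Lemma~\ref{lem:Bernstein condition 2}, with $F_i=e^{\lambda Y_i}$.

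Your argument is a correct, self-contained proof. It is the standard coordinate-by-coordinate H\"older induction behind Finner's generalized H\"older inequality, specialized to the uniform exponent $\ell$. Each step checks out:
\begin{itemize}
\item Conditioning on $X_1,\dots,X_{m-1}$ and pulling out the factors that do not read $X_m$ is valid because the $f_i$ are nonnegative.
\item Padding with constants lets you apply H\"older with $\ell$ equal exponents to the at most $\ell$ functions that read $X_m$.
\item By Tonelli, $g_i$ is a measurable function of $(X_s)_{s\in S_i\setminus\{m\}}$, so $(g_i)$ is again a read-$\ell$ family over $m-1$ independent coordinates.
\item By Fubini, $\|g_i\|_\ell=\|f_i\|_\ell$, since $X_m$ is independent of the other coordinates.
\end{itemize}

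Beyond removing the reliance on an external reference, your version makes the nonnegativity hypothesis explicit. The statement as printed omits it, but it is genuinely needed when $\ell$ is odd. Take $\ell=3$, $F_1=F_2=X$ and $F_3\equiv -1$; this is a read-$3$ family. With the real cube root, the inequality reads $-\E[X^2]\le -\E[X^3]^{2/3}$. For $X$ Bernoulli$(1/2)$ this says $1/2\ge 2^{-2/3}$, which is false.

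For even $\ell$ the signed case follows from your nonnegative one applied to $|F_i|$. Since the paper only applies the lemma to $F_i=e^{\lambda Y_i}>0$, nothing downstream is affected.
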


Note that, if $(Y_1,\dots,Y_n)$ forms a read-$1$ family, then so does the family $(F_1,\dots,F_n)$ for $F_i = \exp\qty(\lambda Y_i)$ for any $\lambda\in \Real$.
Applying \cref{lem:moment of read-k family} to this family $(F_1,\dots,F_n)$ yields a bound for the moment generating function.

\begin{theorem}[Optional stopping theorem. See, e.g.,~Theorem 4.8.5 of \cite{Dur19}]\label{thm:OST}
    Let $(X_t)_{t\in \mathbb{N}_0}$ be a submartingale (resp.\ supermartingale) such that $\E_{t-1}\sbra*{|X_t-X_{t-1}|}<\infty$ a.s.\
    and let $\tau$ be a stopping time such that $\E[\tau]<\infty$.
    Then, $\E[X_\tau]\geq \E[X_0]$ (resp.\ $\E[X_\tau]\leq \E[X_0]$).
\end{theorem}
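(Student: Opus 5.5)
The statement is the optional stopping theorem for submartingales (resp. supermartingales) with conditionally integrable increments and a stopping time of finite expectation. I will prove the submartingale case; the supermartingale case follows by applying it to $-X_t$.

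The plan is to write the stopped value as a telescoping sum and exchange expectation and summation. Specifically,
\[
X_{\tau\wedge t}-X_0=\sum_{s=1}^{t}\indicator_{\tau>s-1}\,(X_s-X_{s-1}).
\]
Since $\{\tau>s-1\}\in\calF_{s-1}$, conditioning on $\calF_{s-1}$ gives
\[
\E\qty[\indicator_{\tau>s-1}(X_s-X_{s-1})]=\E\qty[\indicator_{\tau>s-1}\,\qty(\E_{s-1}[X_s]-X_{s-1})]\ge 0 .
\]
Hence $\E[X_{\tau\wedge t}]\ge\E[X_0]$ for every finite $t$. This is the easy step, provided each $X_s$ is integrable, which I take as part of the definition of a submartingale.

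The main obstacle is passing to the limit $t\to\infty$, since $\tau<\infty$ almost surely because $\E[\tau]<\infty$, but $X_{\tau\wedge t}\to X_\tau$ needs a domination argument. I would dominate by
\[
W=\sum_{s\ge1}\indicator_{\tau>s-1}\,\abs{X_s-X_{s-1}},
\]
so that $\abs{X_{\tau\wedge t}-X_0}\le W$ for all $t$. Then I would compute $\E[W]$ via monotone convergence and the tower property:
\[
\E[W]=\sum_{s\ge1}\E\qty[\indicator_{\tau>s-1}\,\E_{s-1}\abs{X_s-X_{s-1}}].
\]

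Here the hypothesis as literally stated, $\E_{t-1}[|X_t-X_{t-1}|]<\infty$ almost surely, is not enough to make $\E[W]$ finite. The standard version in Durrett requires a uniform bound $\E_{t-1}[|X_t-X_{t-1}|]\le B$ almost surely. I will interpret the hypothesis that way, which is also how the theorem is used in the paper: in the applications the increments are bounded, e.g. $\tnpt_t$ is confined to $[0,1]$. With that bound, $\E[W]\le B\sum_{s\ge1}\Pr[\tau>s-1]=B\,\E[\tau]<\infty$.

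Dominated convergence then gives $\E[X_\tau]=\lim_{t\to\infty}\E[X_{\tau\wedge t}]\ge\E[X_0]$, completing the submartingale case. The supermartingale case follows by applying the argument to $-X_t$.
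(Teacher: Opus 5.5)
Your proof is correct and is the standard argument behind the cited Durrett Theorem~4.8.5, which the paper invokes without giving a proof: you telescope $X_{\tau\wedge t}-X_0$, obtain $\E[X_{\tau\wedge t}]\ge\E[X_0]$ by conditioning each increment on $\calF_{s-1}$, and pass to the limit by dominated convergence using $W$ with $\E[W]\le B\,\E[\tau]$. You are also right to strengthen the hypothesis to $\E_{t-1}[\abs{X_t-X_{t-1}}]\le B$ a.s., which is Durrett's actual assumption; the literal statement is false, not just out of reach of your method. For example, take the doubling martingale $X_t=\sum_{s=1}^{t}2^{s-1}\xi_s$ with i.i.d.\ fair signs $\xi_s\in\{\pm1\}$, stopped at the first $s$ with $\xi_s=+1$: it has finite (but unbounded) conditional increments and $\E[\tau]=2$, yet $X_\tau=1\neq 0=X_0$. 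Every use of the theorem in the paper (e.g.\ $\tnpt_t-\frac{t}{12n}$ or $\delta_t^2-\drift t$ stopped at $\tau$) has bounded increments, so your corrected version is exactly what is needed.
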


\begin{definition}[Stochastic domination]
    \label{def:Stochastic domination}
    For two random variables $X$ and $Y$, we say that $Y$ stochastically dominates $X$, written as $X\preceq Y$, if for all $\lambda \in \mathbb{R}$ we have $\Pr\qty[X\leq \lambda]\geq \Pr\qty[Y\leq \lambda]$.
\end{definition}

\begin{lemma}[\mbox{\cite[Lemma 5.1]{SS25_sync}}] \label{lem:nazo lemma}
Let $ (Z_t)_{t\ge 0} $ be a Markov chain over a state space $ \Omega $ associated with natural filtration $\calF=(\calF_t)_{t\ge 0}$ and let $\tau$ be any stopping time with respect to $\calF$.
Let $ \varphi \colon \Omega \to \Real_{\ge 0} $ be a function.
For a parameter $ x\in\Real_{\ge 0} $, let $ \tauphiplus(x) = \inf\qty{t\ge 0 \colon \varphi(Z_t) \ge x} $.
Let $ T,x_0,\cphiup>0,x^*>x_0 $ be parameters and $\Omega^*\subseteq \Omega$ be the set of states $z\in\Omega$ such that $\varphi(z)\le x^*$.
Suppose that the following holds:
\begin{enumerate}\renewcommand{\labelenumi}{(\roman{enumi})}
    \item There exists $ C_1>0 $ such that for any $ z \in \Omega^* $, \[ \Pr\qty[\min\qty{\tauphiplus(x_0), \tau}\le T\condition Z_0=z] \ge C_1. \] 
    \item Define $ \tauphiup = \inf\qty{ t\ge 0 \colon \varphi(Z_t) \ge (1+\cphiup) \cdot \varphi(Z_0) } $. Then, there exists $ C_2>0$ such that for any $ z \in \Omega^* $, \[ \Pr\qty[ \min\qty{\tauphiup, \tau} \le T \condition Z_0 = z ] \ge 1-\exp\qty( - C_2 \varphi(z)^2 ).\] 
\end{enumerate}
Then, there exists a constant $ \constref{lem:nazo lemma} = \constref{lem:nazo lemma}(C_1,C_2,\cphiup,x_0) >0 $ (independent of $x^*$) such that,
for any $ z \in \Omega^* $ and any $ \varepsilon>0 $, we have
\begin{align*} 
  \Pr\qty[ \min\qty{\tauphiplus(x^*), \tau} \le \constref{lem:nazo lemma}\cdot T\cdot \qty(\log(1/\varepsilon) +  \log(x^*/x_0)) \condition Z_0 = z] \ge 1-\varepsilon.
\end{align*}
\end{lemma}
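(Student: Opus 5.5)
The statement immediately above is \cref{lem:nazo lemma}, a generic drift lemma quoted from \cite{SS25_sync}. The plan is a restart/phase argument. Conditions (i) and (ii) say that from any state with $\varphi\le x^*$, within $T$ steps we either hit $\tau$ or make progress. Progress is of two kinds: reaching the base level $x_0$ with constant probability $C_1$, or multiplying $\varphi$ by $(1+\cphiup)$ with failure probability $\exp(-C_2\varphi^2)$.

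\textbf{Ladder at a fixed starting level.} First I would handle runs that start at level $\varphi(Z_0)=x\ge x_0$. Applying (ii) repeatedly via the strong Markov property at the successive hitting times, the levels grow geometrically: $x, (1+\cphiup)x, (1+\cphiup)^2x,\dots$. After $m=O(\log(x^*/x))$ successful rounds, each of length $T$, we exceed $x^*$. A union bound over the rounds gives failure probability at most $\sum_{s\ge0}\exp\bigl(-C_2(1+\cphiup)^{2s}x^2\bigr)$. This sum is at most $C'\exp(-C_2x^2)$ for a constant $C'$ depending only on $C_2$ and $\cphiup$. In particular it is at most some $q<1$ once $x\ge x_1$, where $x_1\ge x_0$ is a suitable constant. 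Each round ends either at the next level or at $\tau$, so either event is a success for $\min\{\tauphiplus(x^*),\tau\}$.

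\textbf{Reaching the ladder.} If the current value is below $x_1$, I would use (i) to reach $x_0$ (or $\tau$) within $T$ steps with probability at least $C_1$. Then, since $x_1/x_0$ is a constant, (ii) needs to be applied only $O(1)$ more times to climb from $x_0$ to $x_1$. On this short climb the failure probability per round is $\exp(-C_2x_0^2)$, which need not be small. However, the probability that all rounds succeed is at least $\prod(1-\exp(-C_2x_0^2(1+\cphiup)^{2s}))$, a positive constant. So one \emph{attempt}, of length $O(T\log(x^*/x_0))$, succeeds with probability at least some constant $p=p(C_1,C_2,\cphiup,x_0)>0$, uniformly over starting states in $\Omega^*$.

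\textbf{Conclusion and main obstacle.} If an attempt fails, the chain is still either in $\Omega^*$ or has already crossed $x^*$, which counts as success. So I restart by the Markov property. After $\lceil\log(1/\varepsilon)/\log(1/(1-p))\rceil$ independent attempts, the failure probability is at most $\varepsilon$. The total time is $O(T\log(1/\varepsilon)\cdot\log(x^*/x_0))$. To get the additive form $\log(1/\varepsilon)+\log(x^*/x_0)$ rather than the product, I would separate the short attempts, each of length $O(T)$ and covering only the climb up to $x_1$, from a single long ladder climb from level $\max\{x_1,\sqrt{\log(1/\varepsilon)/C_2}\}$ onward. Above that level the ladder's failure probability is already at most $\varepsilon$, and the climb costs $O(T\log(x^*/x_0))$. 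Between $x_1$ and that level, there are only $O(\log\log(1/\varepsilon))$ levels; there a failed ladder just triggers another restart. I expect this bookkeeping, namely keeping the constant independent of $x^*$ and getting the additive rather than multiplicative dependence, to be the main delicate point. The fallback is the product bound, which suffices for all the paper's applications up to a $\log n$ factor.
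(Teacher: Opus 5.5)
The paper gives no proof of this lemma (it is imported from \cite{SS25_sync}), so your argument has to stand on its own. Your ladder-and-restart argument correctly gives the \emph{product} bound $O\bigl(T\log(1/\varepsilon)\log(x^*/x_0)\bigr)$. The lemma, however, asserts the \emph{additive} bound, and the refinement you sketch does not reach it. With thresholds $x_1$ and $x_\varepsilon=\max\{x_1,\sqrt{\log(1/\varepsilon)/C_2}\}$, an attempt can still fail anywhere on the climb from $x_1$ to $x_\varepsilon$. If each of the $\Theta(\log(1/\varepsilon))$ attempts you need is charged the full $\Theta(\log\log(1/\varepsilon))$ rounds of that climb, you get $O\bigl(T(\log(1/\varepsilon)\log\log(1/\varepsilon)+\log(x^*/x_0))\bigr)$. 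This loss matters for the paper. In its applications $x^*/x_0=\Theta(\sqrt{\log n})$ and $\varepsilon=n^{-\Theta(1)}$, so both weaker forms give $O(T\log n\log\log n)$ where the paper needs $O(T\log n)$. That affects the $O(\log n)$ in \cref{lem:taubeta}, the $O(\log n/\npm_0)$ in \cref{lem:pair of non-weak opinions}, and through them the $O(k\log n)$ bound.

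The missing idea is to charge a failed attempt only for the rounds it actually uses, and to show that this count has a uniformly bounded exponential moment; then $x_\varepsilon$ is unnecessary. Consider an attempt started at a stopping time $\sigma$.
\begin{itemize}
\item The (i)-phase length $G$ is dominated by a geometric variable with parameter $C_1$.
\item The climb from $x_0$ to $x_1$ uses at most $K_1=\lceil\log_{1+\cphiup}(x_1/x_0)\rceil$ rounds and succeeds with conditional probability at least $P_1=(1-e^{-C_2x_0^2})^{K_1}$.
\item Above $x_1$, surviving $j$ climbs and then failing has probability at most $f_j=\exp\bigl(-C_2x_1^2(1+\cphiup)^{2j}\bigr)$, with $\sum_j f_j\le q<1$. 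This is where your split at $x_1$ is essential: the analogous sum starting from $x_0$ may exceed $1$.
\end{itemize}
Hence, for the length $W$ of the attempt,
\[
\E\bigl[e^{\lambda W}\indicator_{\{\text{attempt fails}\}}\bigm|\calF_\sigma\bigr]\le\frac{C_1e^{\lambda}}{1-(1-C_1)e^{\lambda}}\,e^{\lambda K_1}\Bigl(1-P_1\Bigl(1-\sum_{j\ge 0}e^{\lambda(j+1)}f_j\Bigr)\Bigr)\le\rho<1
\]
for some $\lambda>0$ and $\rho$ depending only on $(C_1,C_2,\cphiup,x_0)$. Summing over the geometric number of failed attempts, the total number of wasted rounds satisfies $\E[e^{\lambda R_{\mathrm{fail}}}]\le 1/(1-\rho)$, so it is $O(\log(1/\varepsilon))$ with probability $1-\varepsilon/2$. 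The final attempt adds a geometric (i)-phase, $K_1$ rounds, and at most $\lceil\log_{1+\cphiup}(x^*/x_0)\rceil$ climbs. This gives the additive bound for $\varepsilon\le 1/2$; some such restriction, or an extra $+1$ inside the bound, is needed in the statement anyway.

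Finally, state explicitly that each restart and each climb applies (i) or (ii) conditionally on $\calF_\sigma$ at a stopping time $\sigma<\tau$. The lemma phrases them only for $Z_0=z$, which suffices when $\tau$ is a hitting time of a fixed set but not for an arbitrary stopping time.
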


\subsection{Drift Analysis Results} \label{sec:tools for drift analysis}
In this section, we present some drift analysis tools that are useful in our analysis for USD.

\begin{proof}[Proof of \cref{lem:Useful drift lemma} (\cref{item:negative drift useful})]
    We apply \cref{lem:Freedman stopping time additive} (\cref{item:positive drift}) to $Y_t=-X_t$ with $\drift_Y=-\drift>0$.
We have
$
    \indicator_{\tau>t-1}\qty(\E_{t-1}[Y_t]-Y_{t-1}-\drift_Y)
    =\indicator_{\tau>t-1}\qty(X_{t-1}+\drift-\E_{t-1}[X_t])\leq 0
$
and $\indicator_{\tau>t-1}\qty(Y_t-Y_{t-1}-\drift_Y)=\indicator_{\tau>t-1}\qty(X_{t-1}+\drift-X_t)$ satisfies one-sided $\qty(\bounded,\variance)$-Bernstein condition.
Hence, letting $h=X_0-I^-$, $T=\frac{(1-\epsilon)h}{-\drift}=\frac{(1-\epsilon)h}{\drift_Y}$, and $z=h-\drift_Y T=\epsilon h =\epsilon(X_0-I^-)$, we obtain
\begin{align*}
    \Pr\qty[\tau^- \le \min\{T,\tau\}] 
    \leq \exp\qty(- \frac{\epsilon^2(X_0-I^-)^2/2}{\variance T + (\epsilon(X_0-I^-) \bounded)/3}).
\end{align*}
Note that $\tau^-=\inf\{t\geq 0: X_t\leq I^-\}=\inf\{t\geq 0: Y_t\geq Y_0+h\}$.
\end{proof}
\begin{proof}[Proof of \cref{lem:Useful drift lemma} (\cref{item:positive drift useful})]
    We have
    \begin{align*}
        &\Pr\qty[\tau^+>T \text{ and } \tau^*>T] \\
        &= \Pr\qty[\tau^+>T \text{ and } \tau^*>T \text{ and } \tau^->T] 
        + \Pr\qty[\tau^+>T \text{ and } \tau^*>T \text{ and } \tau^-\leq T] \\
        &\leq \Pr\qty[\min\{\tau^+,\tau^-,\tau^*\}>T] +\Pr\qty[\tau^-\leq \min\{T,\tau^+,\tau^*\}]
    \end{align*}
    and
    \begin{align*}
        &\Pr\qty[\tau^+>\tau^- \text{ and } \tau^*>T] \\
        &= \Pr\qty[\tau^+>\tau^- \text{ and } \tau^*>T \text{ and } \tau^->T] 
        + \Pr\qty[\tau^+>\tau^- \text{ and } \tau^*>T \text{ and } \tau^-\leq T] \\
        &\leq \Pr\qty[\min\{\tau^+,\tau^-,\tau^*\}>T] +\Pr\qty[\tau^-\leq \min\{T,\tau^+,\tau^*\}].
    \end{align*}
    For $\Pr\qty[\min\{\tau^+,\tau^-,\tau^*\}>T]$, we apply \cref{lem:Freedman stopping time additive} (\cref{item:negative drift}) to $Y_t=-X_t$ with $\drift_Y=-\drift<0$.
    We have
    $
        \indicator_{\tau>t-1}\qty(\E_{t-1}[Y_t]-Y_{t-1}-\drift_Y)
        =\indicator_{\tau>t-1}\qty(X_{t-1}+\drift-\E_{t-1}[X_t])\leq 0
    $
    and $\indicator_{\tau>t-1}\qty(Y_t-Y_{t-1}-\drift_Y)=\indicator_{\tau>t-1}\qty(X_{t-1}+\drift-X_t)$ satisfies one-sided $\qty(\bounded,\variance)$-Bernstein condition.
    Hence, letting $h=I^+-X_0$, 
    $T=\frac{(1+\epsilon)h}{\drift}=\frac{(1+\epsilon)h}{-\drift_Y}$, and $z=(-\drift_Y)T-h=\epsilon h =\epsilon(I^+-X_0)$, we obtain
    \begin{align*}
        \Pr\qty[\min\{\tau^+,\tau\} >T] 
        \le \exp\qty(- \frac{\epsilon^2(I^+-X_0)^2/2}{\variance T + (\epsilon(I^+-X_0) \bounded)/3}).
    \end{align*}
    Note that $\tau^+=\inf\{t\geq 0: X_t\geq I^+\}=\inf\{t\geq 0: Y_t\leq Y_0-h\}$.

    For $\Pr\qty[\tau^-\leq \min\{T,\tau^+,\tau^*\}]$, we apply \cref{lem:Freedman stopping time additive} (\cref{item:positive drift}) to $Y_t=-X_t$ and $\drift_Y=0$.
    Then, since $\drift>0$, we have
    $
        \indicator_{\tau>t-1}\qty(\E_{t-1}[Y_t]-Y_{t-1}-\drift_Y)
        \leq \indicator_{\tau>t-1}\qty(X_{t-1}+\drift-\E_{t-1}[X_t])\leq 0
    $
    and $\indicator_{\tau>t-1}\qty(Y_t-Y_{t-1}-\drift_Y)\leq \indicator_{\tau>t-1}\qty(X_{t-1}+\drift-X_t)$ satisfies one-sided $\qty(\bounded,\variance)$-Bernstein condition.
    Hence, letting $z=h=X_0-I^-$, we obtain 
    \begin{align*}
        \Pr\qty[\tau^- \le \min\{T,\tau\}] 
        \leq \exp\qty(- \frac{(X_0-I^-)^2/2}{\variance T + ((X_0-I^-) \bounded)/3}).
    \end{align*}
    Note that $\tau^-=\inf\{t\geq 0: X_t\leq I^-\}=\inf\{t\geq 0: Y_t\geq Y_0+h\}$.
    
\end{proof}

\begin{lemma}
    \label{lem:Iterative Drift theorem}
    Let $(X_t)_{t\in \Nat_0}$ be a sequence of random variables and
    let $(\calF_t)_{t\in\Nat_0}$ be a filtration such that $ X_t $ is $ \calF_t $-measurable for all $t\ge 0$.
    Let $\tau^*$ be a stopping time with respect to $(\calF_t)_{t\in \Nat_0}$.
    We have the following:
    \begin{enumerate}
        \item \label{item:bounded decrease under small negative drift}
        Let $c^\uparrow, c^\downarrow > 0$ be positive constants.
        For $s\geq 0$, let
        \begin{align*}
            \tau^\uparrow_s \defeq \inf\{t\geq s: X_t\geq (1+c^\uparrow)X_s\}, \;
            \tau^\downarrow_s \defeq \inf\{t\geq s: X_t\leq (1-c^\downarrow)X_s\}.
        \end{align*}
        Suppose $X_0>0$.
        Then, for $\tau^\downarrow=\tau_0^\downarrow$ and any $T> 0$, we have
        \begin{align*}
            \Pr\qty[\tau^\downarrow \leq T \text{ and } \tau^*> T]\leq 
            \sum_{s=0}^{T-1}\E\qty[\indicator_{X_s\geq X_0 \text{ and } \tau^*> s}\Pr_s\qty[\tau_s^\downarrow \leq \min\{T,\tau_s^\uparrow,\tau^*\}]].
        \end{align*}
        \item \label{item:no increase under negative drift}
        Let $I^-< I^-_*<I^{+}_*<I^{+}$ be parameters.
        For $s\geq 0$, let
        \begin{align*}
            \tau^+_s\defeq \inf\qty{t\geq s\colon X_t\geq I^+},\;
            \tau^-_s \defeq \inf\qty{t\geq s\colon X_t\leq I^-}.
        \end{align*}
        Let $\tau^{jump}\defeq \inf\qty{t> 0: X_t\leq X_{t-1} - (I^{+}_*-I^-_*)}$.
        Then, for $\tau^-=\tau_0^-$, $X_0\geq I^-_*$ and $T> 0$, we have
        \begin{align*}
            \Pr\qty[\tau^-\leq T \text{ and } \min\{\tau^{jump},\tau^*\}> T]
            \leq \sum_{s=0}^{T-1}\E\qty[\indicator_{X_s\in [I^-_*,I^{+}_*] \text{ and } \tau^*>s}\Pr_s\qty[\tau^-_s<\tau^+_s]].
        \end{align*}
        \item \label{item:iterative updrift}
        Let $c^\uparrow > 0$ be a positive constant.
    For $s \geq 0$, let
    \begin{align*}
        \tau^{\uparrow (s)} \defeq \inf\{t\geq 0: X_t\geq (1+c^\uparrow)^s X_0\}.
    \end{align*}
    Suppose $X_0>0$.
    Then, for any $T>0$ and $\ell>0$, 
    \begin{align*}
        \Pr\qty[\min\{\tau^{\uparrow (\ell)},\tau^*\}>\ell T]
        \leq \sum_{s=1}^{\ell}\E\qty[\indicator_{\tau^*>\tau^{\uparrow (s-1)}} \Pr_{\tau^{\uparrow (s-1)}}\qty[\min\{\tau^{\uparrow (s)},\tau^*\}> \tau^{\uparrow (s-1)}+T]].
    \end{align*}
    \end{enumerate}
\end{lemma}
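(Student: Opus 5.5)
The statement to prove is \cref{lem:Iterative Drift theorem}, which has three parts. All three are decomposition and union-bound arguments over the time $s$ at which a "fresh start" occurs, combined with the strong Markov property. None of them needs concentration; they only reorganize events.

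\textbf{Item 1.} On the event $\{\tau^\downarrow\le T,\ \tau^*>T\}$, I define $s$ as the last time before $\tau^\downarrow$ at which $X_s\ge X_0$. Concretely, $s=\max\{r<\tau^\downarrow: X_r\ge X_0\}$, which is well defined because $r=0$ qualifies; also $s\le T-1$.

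\begin{itemize}
\item For $t\in(s,\tau^\downarrow)$ we have $X_t<X_0\le X_s$, so $X$ does not reach $(1+c^\uparrow)X_s$ in that window. Hence $\tau^\uparrow_s\ge\tau^\downarrow$.
\item Also $X_{\tau^\downarrow}\le(1-c^\downarrow)X_0\le(1-c^\downarrow)X_s$, so $\tau^\downarrow_s\le\tau^\downarrow$.
\item Together these give $\tau_s^\downarrow\le\min\{T,\tau_s^\uparrow,\tau^*\}$.
\end{itemize}

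A tie $\tau^\uparrow_s=\tau^\downarrow_s$ is impossible, since $X_t$ cannot exceed $(1+c^\uparrow)X_s$ and lie below $(1-c^\downarrow)X_s$ simultaneously. So the event is contained in $\bigcup_{s<T}\{X_s\ge X_0,\ \tau^*>s,\ \tau_s^\downarrow\le\min\{T,\tau_s^\uparrow,\tau^*\}\}$. A union bound, followed by conditioning on $\calF_s$ (the first two conditions are $\calF_s$-measurable), gives the stated sum.

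\textbf{Item 2.} On $\{\tau^-\le T,\ \min\{\tau^{jump},\tau^*\}>T\}$, I let $s$ be the last time before $\tau^-$ with $X_s\ge I^-_*$; it exists because $X_0\ge I^-_*$. I then need to show $X_s\le I^+_*$, which I get by cases on $s+1$.

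\begin{itemize}
\item If $s+1=\tau^-$, then $X_{s+1}\le I^-$. Since no jump occurs, $X_s< X_{s+1}+(I^+_*-I^-_*)\le I^-+I^+_*-I^-_*<I^+_*$.
\item Otherwise $X_{s+1}<I^-_*$, and no jump gives $X_s<I^-_*+(I^+_*-I^-_*)=I^+_*$.
\end{itemize}

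Next, between $s$ and $\tau^-$ every value is below $I^-_*<I^+$, and $X_s\le I^+_*<I^+$. So $\tau^+_s>\tau^-=\tau^-_s$, i.e.\ $\tau^-_s<\tau^+_s$. The union bound over $s<T$ and conditioning on $\calF_s$ then give the claim.

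\textbf{Item 3.} I would argue by induction, writing $\sigma_s=\tau^{\uparrow(s)}$; these times are nondecreasing in $s$. If $\min\{\sigma_\ell,\tau^*\}>\ell T$, then not every gap $\sigma_s-\sigma_{s-1}$ can be at most $T$, because $\sigma_0=0$. Take the first $s$ with $\sigma_s>\sigma_{s-1}+T$. Then $\sigma_{s-1}\le(s-1)T<\ell T<\tau^*$, and $\min\{\sigma_s,\tau^*\}>\sigma_{s-1}+T$ holds because $\tau^*>\ell T\ge \sigma_{s-1}+T$. A union bound over $s$ and the strong Markov property at the stopping time $\sigma_{s-1}$ give the sum.

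The only delicate point is measurability in that last step: the indicator $\indicator_{\tau^*>\sigma_{s-1}}$ must be $\calF_{\sigma_{s-1}}$-measurable, which holds, and the cases $\sigma_{s-1}=\infty$ must contribute zero. I expect item 2 to be the main obstacle, specifically the jump argument showing $X_s\le I^+_*$; the rest is routine bookkeeping.
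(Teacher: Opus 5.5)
Your proof is correct and follows the paper's route: in each item you choose an index $s$ pathwise (a last time in items 1--2, the first over-long gap in item 3), take a union bound over $s$, and condition on $\mathcal{F}_s$ (resp.\ $\mathcal{F}_{\tau^{\uparrow(s-1)}}$), which needs only the tower property and no (strong) Markov property, since $(X_t)$ is merely adapted. Your index choices differ slightly from the paper's (the last time with $X_s\ge X_0$ rather than the last argmax in item 1, and the untruncated $\tau^{\uparrow(s)}$ rather than the truncated $\sigma_s$ in item 3); in item 2, restricting $s$ to times before $\tau^-$ is more careful than the paper's ``largest $s\le T-1$ with $X_s\in[I^-_*,I^+_*]$'', whose claim $X_t\le I^+_*$ for all $t\ge s$ can fail because upward jumps across the interval after $\tau^-$ are not excluded; and, like the paper, your item 1 tacitly uses $c^\downarrow\le 1$.
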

    \begin{proof}[Proof of \cref{lem:Iterative Drift theorem} (\cref{item:bounded decrease under small negative drift})]
        First, we claim the following:
        Let $\omega$ be any sample path such that $\tau^\downarrow(\omega) \leq T$.
        Then, there exists $s\in \{0,1,\ldots,T-1\}$ such that $X_s(\omega)\geq X_0(\omega)$ and $\tau_s^\downarrow(\omega) \leq \min\{T,\tau_s^\uparrow(\omega)\}$ hold.
    
        To see this, define $s$ to be the largest integer such that $X_s(\omega) = \max_{0\leq t< \tau^\downarrow(\omega)} X_t(\omega)$ holds.
        For such $s$, we have $X_s(\omega)\geq X_0(\omega)$ and $\tau_s^\downarrow(\omega) < \tau_s^\uparrow(\omega)$.
        Furthermore, $\tau_s^\downarrow(\omega) \leq \tau^\downarrow(\omega) \leq T$ since $(1-c^\downarrow)X_s(\omega)\geq (1-c^\downarrow)X_0(\omega)$.
        Thus, $\tau_s^\downarrow(\omega) \leq \min\{T,\tau_s^\uparrow(\omega)\}$ holds and this completes the proof of the claim.
   
        From the above claim, we obtain
        \begin{align*}
            \Pr\qty[\tau^\downarrow \leq T \text{ and } \tau^*> T]
            &\leq \Pr\qty[\exists s\in \{0,\ldots,T-1\}: X_s\geq X_0 \text{ and } \tau_s^\downarrow \leq \min\{T,\tau_s^\uparrow\} \text{ and } \tau^*> T]\\
            &\leq \sum_{s=0}^{T-1} \Pr\qty[X_s\geq X_0 \text{ and } \tau_s^\downarrow \leq \min\{T,\tau_s^\uparrow\} \text{ and } \tau^*> T \text{ and } \tau^*> s]\\
            &\leq \sum_{s=0}^{T-1}\E\qty[\indicator_{X_s\geq X_0 \text{ and } \tau^*> s}\Pr_s\qty[\tau_s^\downarrow \leq \min\{T,\tau_s^\uparrow,\tau^*\}]].
        \end{align*}
    \end{proof}
    \begin{proof}[Proof of \cref{lem:Iterative Drift theorem} (\cref{item:no increase under negative drift})]
    First, we claim the following:
    Let $\omega$ be any sample path such that $\tau^{-}(\omega) \leq T$ and $\tau^{jump}(\omega)>T$.
    Then, there exists $s\in \{0,1,\ldots,T-1\}$ such that $I^-_*\leq X_s(\omega)\leq I^{+}_*$ and $\tau^{-}_s(\omega)<\tau^{+}_s(\omega)$.
    
    To see this, define $s\in\{0,1,\ldots,T-1\}$ to be the largest integer such that
    $X_s(\omega)\in [I^-_*,I^{+}_*]$.
    Such an $s$ exists since $X_0(\omega)\geq I^-_*$ and $\tau^{jump}(\omega)>T$.
    By the definition of $s$, we have $X_t(\omega)\notin [I^-_*,I^{+}_*]$ for any $t>s$.
    Furthermore, since $\tau^{jump}(\omega)>T$, that is, the process never jumps
    across the interval $[I^-_*,I^{+}_*]$ in one step, it follows that $X_t(\omega)\le I^{+}_*$ for
    all $t\ge s$, and hence $\tau^{-}_s(\omega)<\tau^{+}_s(\omega)$.
    Indeed, if there existed some $t\ge s$ such that $X_t(\omega)>I^{+}_*$, then the occurrence
    of $\tau^-_s(\omega)< \tau^+_s(\omega)$ would imply the existence of a time $t'>t$ for which
    $X_{t'}(\omega)\in [I^-_*,I^{+}_*]$, contradicting the definition of $s$.
    
    Thus, 
\begin{align*}
    &\Pr\qty[\tau^-\leq T \text{ and } \tau^{jump}>T \text{ and } \tau^*> T]\\
    &\leq \Pr\qty[\exists s \in \{0,1,\ldots, T-1\}: \tau^{-}_s<\tau^{+}_s \text{ and } X_s\in [I^-_*,I^{+}_*] \text{ and } \tau^*> T]\\
    &\leq \sum_{s=0}^{T-1}\Pr\qty[X_s\in [I^-_*,I^{+}_*] \text{ and } \tau^{-}_s<\tau^{+}_s \text{ and }  \tau^*> s]\\
    &\leq \sum_{s=0}^{T-1}\E\qty[\indicator_{X_s\in [I^-_*,I^{+}_*] \text{ and } \tau^*> s}\Pr_s\qty[\tau^{-}_s<\tau^{+}_s]].
\end{align*}
    \end{proof}
\begin{proof}[Proof of \cref{lem:Iterative Drift theorem} (\cref{item:iterative updrift})]
    Let $\sigma_0=0$ and for $s\geq 1$, let
    \begin{align*}
        \sigma_s \defeq \min\{\tau^{\uparrow (s)}, \tau^*, \sigma_{s-1}+T\}.
    \end{align*}
    First, we claim the following:
    If the event $\min\{\tau^{\uparrow (\ell)},\tau^*\}>\ell T$ occurs, then there exists $s\in \{1,\ldots,\ell\}$ such that the events $\sigma_{s-1}=\tau^{\uparrow (s-1)}$, $\tau^*>\sigma_{s-1}$, and $\min\{\tau^{\uparrow (s)},\tau^*\}> \sigma_{s-1}+T$ occur.
    To see this, define $s$ to be the smallest integer such that $\min\{\tau^{\uparrow (s)},\tau^*\}> \sigma_{s-1}+T$.
    By the minimality of $s$, for any $i\leq s-1$, we have $\min\{\tau^{\uparrow (i)},\tau^*\}\leq \sigma_{i-1}+T$.
    Moreover, $\min\{\tau^{\uparrow (s)},\tau^*\}> \sigma_{s-1}+T$ implies $\tau^*>\sigma_{s-1}+T$, and hence $\tau^*>\sigma_{s-1}$.
    Thus, for any $i\leq s-1$, it follows that $\tau^*>\sigma_{i-1}$, and therefore $\sigma_{i}=\tau^{\uparrow (i)}$.
    This proves the claim.
    From the claim, we obtain
    \begin{align*}
        &\Pr\qty[\min\{\tau^{\uparrow (\ell)},\tau^*\}>\ell T]\\
        &\leq \Pr\qty[\exists s \in \{1,\ldots,\ell\}: \sigma_{s-1}=\tau^{\uparrow (s-1)} \text{ and } \tau^*>\sigma_{s-1} \text{ and } \min\{\tau^{\uparrow (s)},\tau^*\}> \sigma_{s-1}+T]\\
        &\leq \sum_{s=1}^{\ell}\Pr\qty[\tau^*>\tau^{\uparrow (s-1)} \text{ and } \min\{\tau^{\uparrow (s)},\tau^*\}> \tau^{\uparrow (s-1)}+T]\\
        & = \sum_{s=1}^{\ell}\E\qty[\indicator_{\tau^*>\tau^{\uparrow (s-1)}} \Pr_{\tau^{\uparrow (s-1)}}\qty[\min\{\tau^{\uparrow (s)},\tau^*\}> \tau^{\uparrow (s-1)}+T]].
    \end{align*}
\end{proof}

\section{Proof of Basic Properties} \label{sec:proof of basic properties}
In this section, we give a proof for basic inequalities for the key quantities in the gossip and population protocol models.
We first introduce the following lemma that will be used in the proof of $\tnpt_t$ and $\tnpm_t$.
\begin{lemma}
  \label{lem:basic inequalities for ratio distribution}
  Define a function $f(x,y)$ as
  \begin{align*}
    f(x,y)=
    \begin{cases}
      \frac{x}{y} & \text{if } y>0,\\
      0 & \text{if } y=0.
    \end{cases}
  \end{align*} 
  Then, for non-negative random variables $X$ and $Y$ such that $0\leq X\leq Y$ a.s.~and $\E[Y]>0$, we have the following:
  \begin{enumerate}
    \item \label{item:expectation of ratio distribution}
    $\E\qty[f(X,Y)]\geq \frac{\E[X]}{\E[Y]}-\frac{\Cov[X,Y]}{\E[Y]^2}$.
    \item \label{item:Bernstein condition of ratio distribution}
    $\frac{\E[X]}{\E[Y]}-f(X,Y)-\frac{\Cov[X,Y]}{\E[Y]^2}\leq \frac{2\qty(\E[X]-X)}{\E[Y]}+\frac{XY-\E[XY]}{\E[Y]^2}$ a.s.
  \end{enumerate}
\end{lemma}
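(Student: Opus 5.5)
The plan is to prove both items from the first-order expansion of $1/y$ around $\mu\defeq\E[Y]>0$, namely
\[
\frac{1}{y}\;\ge\;\frac{1}{\mu}-\frac{y-\mu}{\mu^2}=\frac{2\mu-y}{\mu^2}\qquad (y>0).
\]
This holds by convexity of $1/y$; equivalently, $(y-\mu)^2\ge 0$ gives $\mu^2\ge y(2\mu-y)$. Multiplying by $X\ge 0$ yields, on the event $Y>0$,
\[
f(X,Y)\;\ge\;\frac{2X}{\mu}-\frac{XY}{\mu^2}.
\]
On the event $Y=0$, the hypothesis $0\le X\le Y$ forces $X=0$. Then both sides vanish: $f=0$ by definition and the right-hand side is $0$. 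So the pointwise bound $f(X,Y)\ge 2X/\mu - XY/\mu^2$ holds almost surely. This single inequality is the whole engine, and both items come from rearranging it.

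For item~1, take expectations of the pointwise bound. This is legitimate since $0\le f\le 1$ and $XY\le Y^2\le 1$ are integrable in our applications; generally one needs only $\E[XY]<\infty$. We get
\[
\E[f(X,Y)]\ge \frac{2\E[X]}{\mu}-\frac{\E[XY]}{\mu^2}.
\]
Substituting $\E[XY]=\Cov[X,Y]+\E[X]\mu$ gives
\[
\frac{2\E[X]}{\mu}-\frac{\E[X]}{\mu}-\frac{\Cov[X,Y]}{\mu^2}=\frac{\E[X]}{\mu}-\frac{\Cov[X,Y]}{\mu^2},
\]
as claimed.

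For item~2, subtract the pointwise bound from $\E[X]/\mu-\Cov[X,Y]/\mu^2$. Again writing $\Cov[X,Y]=\E[XY]-\E[X]\mu$, the left-hand side becomes
\[
\frac{\E[X]}{\mu}-f(X,Y)-\frac{\E[XY]}{\mu^2}+\frac{\E[X]}{\mu}\;\le\;\frac{2\E[X]}{\mu}-\frac{2X}{\mu}+\frac{XY}{\mu^2}-\frac{\E[XY]}{\mu^2}.
\]
The right-hand side is exactly $\frac{2(\E[X]-X)}{\mu}+\frac{XY-\E[XY]}{\mu^2}$, almost surely.

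The only point needing care is the degenerate event $Y=0$, which is where the hypothesis $X\le Y$ is used. Beyond that, nothing is hard: the whole lemma reduces to the tangent-line inequality for the convex function $1/y$.
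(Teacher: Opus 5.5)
Your proof is correct and is essentially the paper's argument. The paper likewise reduces everything to the pointwise tangent-line bound $\frac{a}{b}-f(x,y)\le \frac{a-x}{b}+\frac{x}{b^2}(y-b)$. It checks this via the exact remainder $-\frac{x(y-b)^2}{b^2y}$ for $y>0$, and handles the case $y=0$ using $x\le y$. It then derives item 2 from this bound and obtains item 1 by taking expectations. The only cosmetic difference is that you read item 1 off the pointwise bound directly rather than from item 2.
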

\begin{proof}
  Since \cref{item:expectation of ratio distribution} follows immediately by taking expectations on both sides of \cref{item:Bernstein condition of ratio distribution}, we give a proof of \cref{item:Bernstein condition of ratio distribution}.
  
  First, we observe that the following inequality holds: For any $a\geq 0$, $b>0$, $x\geq 0$, and $y\geq x\geq 0$, 
  \begin{align*}
      \frac{a}{b}-f(x,y)\leq \frac{a-x}{b}+\frac{x}{b^2}(y-b).
  \end{align*}
  To see this, consider the following two cases: $y=0$ and $y>0$.
  If $y=0$, then $f(x,y)=0$ and $x=0$ from the definition.
  Thus, $\frac{a}{b}-f(x,y)=\frac{a}{b}=\frac{a-x}{b}+\frac{x}{b^2}(y-b)$ and the inequality holds.
  If $y>0$, then the inequality holds since $\frac{a}{b}-f(x,y)=\frac{a}{b}-\frac{x}{y}=\frac{a-x}{b}+\frac{x}{b^2}(y-b)-\frac{x\qty(y-b)^2}{b^2y}$.
  
  Hence, we obtain
  \begin{align*}
    \frac{\E[X]}{\E[Y]}-f(X,Y)-\frac{\Cov[X,Y]}{\E[Y]^2}
    &\leq \frac{\E[X]-X}{\E[Y]}+\frac{X}{\E[Y]^2}(Y-\E[Y])-\frac{\Cov[X,Y]}{\E[Y]^2}\\
    &= \frac{2\qty(\E[X]-X)}{\E[Y]}+\frac{XY-\E[XY]}{\E[Y]^2}.
  \end{align*}
\end{proof}

\subsection{Gossip Model} \label{sec:proof of basic properties in the gossip model}
By definition of the gossip USD (Definition~\ref{def:gossip USD}), we have for any $v\in V$ and $t\ge 1$:
\begin{align}
\Pr_{t-1}[\opn_t(v)=i]
=\begin{cases}
\alpha_{t-1}(i)+1-\beta_{t-1} & (\text{if } \opn_{t-1}(v)=i),\\
\alpha_{t-1}(i) & (\text{if } \opn_{t-1}(v)=\bot),\\
0 & (\text{otherwise}).
\end{cases}
\label{eq:updating probability for alpha}
\end{align}

\begin{proof}[Proof of \cref{lem:basic inequalities for alpha}]
\medskip
\noindent\textbf{(i) Expectation.}
From \cref{eq:updating probability for alpha}, we have
\begin{align*}
\E_{t-1}[\alpha_t(i)]
&=\frac{1}{n}\sum_{v\in V:\opn_{t-1}(v)=i}\Pr_{t-1}[\opn_t(v)=i]
   +\frac{1}{n}\sum_{v\in V:\opn_{t-1}(v)=\bot}\Pr_{t-1}[\opn_t(v)=i]\\
&=\alpha_{t-1}(i)\qty(\alpha_{t-1}(i)+1-\beta_{t-1})
  +(1-\beta_{t-1})\alpha_{t-1}(i)\\
&=\alpha_{t-1}(i)\qty(\alpha_{t-1}(i)+2(1-\beta_{t-1})).
\end{align*}
\medskip
\noindent\textbf{(ii) Variance.}
Since $\alpha_t(i) = (\sum_{v\in V}\indicator_{\opn_t(v)=i})/n$, then from \cref{eq:updating probability for alpha},
\begin{align*}
\Var_{t-1}[\alpha_t(i)]
&=\frac{1}{n^2}\sum_{v\in V}\Var_{t-1}\qty[\indicator_{\opn_t(v)=i}] \\
&=\frac{1}{n^2}\sum_{v\in V}
   \Pr_{t-1}[\opn_t(v)=i]\Pr_{t-1}[\opn_t(v)\neq i] \\
&=\frac{\alpha_{t-1}(i)}{n}
   \qty(\alpha_{t-1}(i)+1-\beta_{t-1})\qty(\beta_{t-1}-\alpha_{t-1}(i))
  +\frac{1-\beta_{t-1}}{n}\alpha_{t-1}(i)\qty(1-\alpha_{t-1}(i))\\
&=\frac{\alpha_{t-1}(i)}{n}\qty[
   (1-\beta_{t-1})(1+\beta_{t-1}-2\alpha_{t-1}(i))
   +\alpha_{t-1}(i)(\beta_{t-1}-\alpha_{t-1}(i)) ].
\end{align*}
Thus,
\[
\Var_{t-1}[\alpha_t(i)]
\le \frac{\alpha_{t-1}(i)}{n}
\]
and
\[
\Var_{t-1}[\alpha_t(i)]
\ge \frac{(1-\beta_{t-1})^2\alpha_{t-1}(i)}{n}.
\]
\medskip
\noindent\textbf{(iii) Bernstein condition.}
For $v\in V$, define
\[
    X_t(v)=\frac{1}{n}\qty(\indicator_{\opn_t(v)=i}
    -\E_{t-1}[\indicator_{\opn_t(v)=i}]).
\]
Since $\lvert X_t(v)\rvert \le 1/n$, Lemma~\ref{lem:Bernstein condition}(i)
implies that $X_t(v)$ satisfies a
\[
\qty(\tfrac{1}{n},\,\Var_{t-1}[X_t(v)])
\text{-Bernstein condition}.
\]
Because
\[
\alpha_t(i)-\E_{t-1}[\alpha_t(i)]
=\sum_{v\in V} X_t(v)
\]
and the variables $X_t(v)$, $v\in V$ are independent, Lemma~\ref{lem:Bernstein condition}(v)
implies that this sum satisfies a
\[
\qty(\tfrac{1}{n},\,\sum_{v\in V}\Var_{t-1}[X_t(v)])
\text{-Bernstein condition}.
\]
Using \cref{item:variance of alpha} of Lemma~\ref{lem:basic inequalities for alpha},
\[
\sum_{v\in V}\Var_{t-1}[X_t(v)]
=\Var_{t-1}[\alpha_t(i)]
\le \frac{\alpha_{t-1}(i)}{n}.
\]
Thus the result follows.
\end{proof}

\begin{proof}[Proof of \cref{lem:basic inequalities for beta}]
\medskip
\noindent\textbf{(i) Expectation.}
  From \cref{item:expectation of alpha} of \cref{lem:basic inequalities for alpha}, we have
  \begin{align*}
      \E_{t-1}\qty[\npo_t]
      &=\sum_{i\in [k]}\np_{t-1}(i)\qty(\np_{t-1}(i)+2\qty(1-\npo_{t-1}))
      =\npt_{t-1}+2\npo_{t-1}(1-\npo_{t-1}).
  \end{align*}
\medskip
\noindent\textbf{(ii) Variance.}
  We begin by calculating $\mathbf{Cov}_{t-1}\qty[\alpha_t(i),\alpha_t(j)]$ and $\mathbf{Cov}_{t-1}\qty[\alpha_t(i),\beta_t]$, which will be used later.
  \begin{claim}
  \label{claim:covariance of alpha}
  For any distinct $i,j\in [k]$ and $t\geq 1$, 
      $\Cov_{t-1}[\alpha_t(i),\alpha_t(j)] = -\frac{\alpha_{t-1}(i)\alpha_{t-1}(j)}{n}\qty(1-\beta_{t-1})$.  
  \end{claim}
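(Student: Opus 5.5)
We write $\alpha_t(i)$ and $\alpha_t(j)$ as sums of per-vertex indicators and use the fact that, conditioned on $\calF_{t-1}$, the vertex updates in the gossip USD are independent. Concretely,
\[
n\alpha_t(i)=\sum_{v\in V}\indicator_{\opn_t(v)=i},
\qquad
n\alpha_t(j)=\sum_{w\in V}\indicator_{\opn_t(w)=j}.
\]
Independence across vertices means every cross term with $v\neq w$ has zero covariance. This gives
\[
\Cov_{t-1}[\alpha_t(i),\alpha_t(j)]=\frac{1}{n^2}\sum_{v\in V}\Cov_{t-1}\qty[\indicator_{\opn_t(v)=i},\indicator_{\opn_t(v)=j}].
\]
For a single vertex $v$, the events $\{\opn_t(v)=i\}$ and $\{\opn_t(v)=j\}$ are disjoint since $i\neq j$. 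Hence the expectation of the product of the two indicators is $0$, and each summand equals $-\Pr_{t-1}[\opn_t(v)=i]\Pr_{t-1}[\opn_t(v)=j]$.

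Next we evaluate these products using \cref{eq:updating probability for alpha}, splitting vertices by their opinion at round $t-1$.
\begin{itemize}
\item If $\opn_{t-1}(v)=i$, then $\Pr_{t-1}[\opn_t(v)=j]=0$, so the product vanishes. The same holds when $\opn_{t-1}(v)=j$.
\item If $\opn_{t-1}(v)$ is a decided opinion other than $i$ or $j$, both probabilities are $0$.
\item If $\opn_{t-1}(v)=\bot$, the two probabilities are $\alpha_{t-1}(i)$ and $\alpha_{t-1}(j)$.
\end{itemize}
Only the undecided vertices contribute, and there are $(1-\beta_{t-1})n$ of them. Summing their contributions gives
\[
\Cov_{t-1}[\alpha_t(i),\alpha_t(j)]=-\frac{1}{n^2}\,(1-\beta_{t-1})n\,\alpha_{t-1}(i)\alpha_{t-1}(j)=-\frac{\alpha_{t-1}(i)\alpha_{t-1}(j)}{n}\qty(1-\beta_{t-1}),
\]
which is the claimed formula.

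There is no real obstacle here. The only point needing care is the case split, in particular noting that a vertex holding $i$ at round $t-1$ can never move to $j$ in one step. Once that is checked, the remaining steps are direct computation.
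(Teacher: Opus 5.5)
Your proposal is correct and follows essentially the same route as the paper's proof. Both write the covariance as a double sum over vertex indicators, use conditional independence across vertices to keep only the diagonal terms, use disjointness of $\{\opn_t(v)=i\}$ and $\{\opn_t(v)=j\}$ to get $-\Pr_{t-1}[\opn_t(v)=i]\Pr_{t-1}[\opn_t(v)=j]$, and then apply \cref{eq:updating probability for alpha}, under which only the $(1-\beta_{t-1})n$ undecided vertices contribute.
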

  \begin{proof}
    From \cref{eq:updating probability for alpha},
\begin{align*}
\Cov_{t-1}\qty[\alpha_t(i), \alpha_t(j)]
&=\frac{1}{n^2}\sum_{v\in V}\sum_{u\in V}
  \Cov_{t-1}\qty[\indicator_{\opn_{t}(v)=i},
                  \indicator_{\opn_{t}(u)=j}]\\
&=\frac{1}{n^2}\sum_{v\in V}
  \Cov_{t-1}\qty[\indicator_{\opn_t(v)=i},
                  \indicator_{\opn_t(v)=j}]\\
&=-\frac{1}{n^2}\sum_{v\in V}
  \E_{t-1}[\indicator_{\opn_t(v)=i}]
  \E_{t-1}[\indicator_{\opn_t(v)=j}]\\
&=-\frac{1-\beta_{t-1}}{n}\alpha_{t-1}(i)\alpha_{t-1}(j).
\end{align*}
  \end{proof}
  \begin{claim}
  \label{claim:covariance of alpha and beta}
  For any $i\in [k]$ and $t\geq 1$, it holds that
  \[\mathbf{Cov}_{t-1}\qty[\alpha_t(i),\beta_t]=\frac{\alpha_{t-1}(i)}{n}\qty[(1-\beta_{t-1})(1-\alpha_{t-1}(i))+\alpha_{t-1}(i)(\beta_{t-1}-\alpha_{t-1}(i))].\]
  Specifically, $\mathbf{Cov}_{t-1}\qty[\alpha_t(i),\beta_t]\leq \frac{\alpha_{t-1}(i)}{n}$.
  \end{claim}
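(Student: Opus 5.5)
We expand the covariance using the vertex-level indicator decomposition, exactly as in the proof of \cref{claim:covariance of alpha}. Write $\beta_t=\frac1n\sum_{v\in V}\indicator_{\opn_t(v)\neq\bot}$ and $\alpha_t(i)=\frac1n\sum_{v}\indicator_{\opn_t(v)=i}$. Since in the gossip model the new opinions of distinct vertices are independent conditioned on round $t-1$, all cross terms $u\neq v$ vanish, leaving
\[
\Cov_{t-1}[\alpha_t(i),\beta_t]=\frac{1}{n^2}\sum_{v\in V}\Cov_{t-1}\qty[\indicator_{\opn_t(v)=i},\indicator_{\opn_t(v)\neq\bot}].
\]
Because $\{\opn_t(v)=i\}\subseteq\{\opn_t(v)\neq\bot\}$, each summand equals $p_v(1-q_v)$, where $p_v=\Pr_{t-1}[\opn_t(v)=i]$ and $q_v=\Pr_{t-1}[\opn_t(v)\neq\bot]$. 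This follows from $\E[AB]-\E[A]\E[B]=p_v-p_vq_v$.

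Next we evaluate $p_v$ and $q_v$ by the type of $\opn_{t-1}(v)$. From \cref{eq:updating probability for alpha}: if $\opn_{t-1}(v)=i$, then $p_v=\alpha_{t-1}(i)+1-\beta_{t-1}$ and $1-q_v=\beta_{t-1}-\alpha_{t-1}(i)$, since $v$ becomes $\bot$ exactly when it samples another decided opinion. If $\opn_{t-1}(v)=\bot$, then $p_v=\alpha_{t-1}(i)$ and $1-q_v=1-\beta_{t-1}$. For any other decided $v$, $p_v=0$. Summing with multiplicities $n\alpha_{t-1}(i)$ and $n(1-\beta_{t-1})$ gives
\[
\frac{\alpha_{t-1}(i)}{n}\qty[(\alpha_{t-1}(i)+1-\beta_{t-1})(\beta_{t-1}-\alpha_{t-1}(i))+(1-\beta_{t-1})^2].
\]
Expanding and regrouping, $(1-\beta_{t-1})\qty[(\beta_{t-1}-\alpha_{t-1}(i))+(1-\beta_{t-1})]=(1-\beta_{t-1})(1-\alpha_{t-1}(i))$, plus the remaining term $\alpha_{t-1}(i)(\beta_{t-1}-\alpha_{t-1}(i))$. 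This is the claimed formula.

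For the upper bound, write $a=\alpha_{t-1}(i)$ and $b=\beta_{t-1}$, with $0\le a\le b\le1$. The bracket is a convex combination: $(1-b)(1-a)+a(b-a)\le(1-b)\cdot1+b\cdot1\cdot\max(1-a,\,\cdot)$. More cleanly, $(1-b)(1-a)+a(b-a)\le(1-b)+ab\le(1-b)+b=1$. Hence $\Cov_{t-1}[\alpha_t(i),\beta_t]\le \alpha_{t-1}(i)/n$.

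The only delicate step is correctly identifying $1-q_v$ for a vertex holding $i$. Such a vertex becomes $\bot$ iff it samples a different decided opinion, which has probability $\beta_{t-1}-\alpha_{t-1}(i)$. After that, the rest is bookkeeping.
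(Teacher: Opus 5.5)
Your proof is correct, but it takes a slightly different route from the paper's.

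\textbf{The paper's route.} The paper does not go back to the vertex level for this claim. It writes $\beta_t=\sum_j\alpha_t(j)$ and uses bilinearity,
\[
\Cov_{t-1}[\alpha_t(i),\beta_t]=\Var_{t-1}[\alpha_t(i)]+\sum_{j\neq i}\Cov_{t-1}[\alpha_t(i),\alpha_t(j)].
\]
It then substitutes the variance formula from \cref{lem:basic inequalities for alpha} and the pairwise covariance from \cref{claim:covariance of alpha}. The only remaining work is the identity
\[
(1-\beta_{t-1})(1+\beta_{t-1}-2\alpha_{t-1}(i))-(1-\beta_{t-1})(\beta_{t-1}-\alpha_{t-1}(i))=(1-\beta_{t-1})(1-\alpha_{t-1}(i)).
\]

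\textbf{Your route.} You decompose $\beta_t$ directly by vertex using $\indicator_{\opn_t(v)\neq\bot}$. You then use $\{\opn_t(v)=i\}\subseteq\{\opn_t(v)\neq\bot\}$, so each summand collapses to $p_v(1-q_v)$. This needs a fresh case analysis, but it is easy, and your values of $p_v$ and $1-q_v$ for the cases $\opn_{t-1}(v)=i$ and $\opn_{t-1}(v)=\bot$ are right.

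\textbf{Trade-off.} The paper's route reuses earlier computations and needs no new probabilities. Yours is self-contained and makes the nonnegativity of the covariance immediate, since every term $p_v(1-q_v)$ is nonnegative. Both rely on the same fact: conditioned on round $t-1$, the new opinions of distinct vertices are independent in the gossip model.

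\textbf{The upper bound.} Your chain $(1-b)(1-a)+a(b-a)\le(1-b)+ab\le 1$ is a valid justification; the paper states the bound without comment. However, delete the preceding line with $\max(1-a,\,\cdot)$. It is incomplete and does not parse as an inequality, and the clean chain after it is all you need.
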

  \begin{proof}
    From \cref{claim:covariance of alpha} and \cref{lem:basic inequalities for alpha} (\cref{item:variance of alpha}), we have
    \begin{align*}
      \mathbf{Cov}_{t-1}\qty[\alpha_t(i),\beta_t]
      &=\Var_{t-1}\qty[\alpha_t(i)]+\sum_{j\in [k]\setminus \{i\}}\mathbf{Cov}_{t-1}\qty[\alpha_t(i),\alpha_t(j)] \nonumber\\
      &=\Var_{t-1}\qty[\alpha_t(i)]-\frac{(1-\beta_{t-1})\alpha_{t-1}(i)(\beta_{t-1}-\alpha_{t-1}(i))}{n}\nonumber\\
      &=\frac{\alpha_{t-1}(i)}{n}\qty[(1-\beta_{t-1})(1-\alpha_{t-1}(i))+\alpha_{t-1}(i)(\beta_{t-1}-\alpha_{t-1}(i))]\\
  &\leq \frac{\alpha_{t-1}(i)}{n}. 
    \end{align*}
  \end{proof}
  Since $\Var_{t-1}\qty[\beta_{t-1}]
  =\sum_{i\in [k]}\sum_{j\in [k]}\Cov_{t-1}\qty[\alpha_t(i), \alpha_t(j)]
  =\sum_{i\in [k]}\Cov_{t-1}\qty[\alpha_t(i), \beta_t]$, we obtain
  \begin{align*}
    \Var_{t-1}\qty[\beta_{t-1}]
    &=\frac{(1-\beta_{t-1})(\beta_{t-1}-\gamma_{t-1})+\beta_{t-1}\gamma_{t-1}-\norm{\alpha_{t-1}}_3^3}{n}\\
    &= \frac{(\beta_{t-1}-\gamma_{t-1})(1-\beta_{t-1}+\gamma_{t-1})+\gamma_{t-1}^2-\norm{\alpha_{t-1}}_3^3}{n}\\
    &\leq \frac{\beta_{t-1}}{n}.
\end{align*}
Note that $\gamma_{t-1}^2\leq \beta_{t-1}\norm{\alpha_{t-1}}_3^3$ holds from the Cauchy-Schwarz inequality.
Hence, the result follows.
\medskip
\noindent\textbf{(iii) Bernstein condition.}
  Write $Y_t(v)=\frac{1}{n}(\indicator_{\opn_t(v)\neq \bot}-\E_{t-1}[\indicator_{\opn_t(v)\neq \bot}])$ for convenience.
  From \cref{item:BC for bounded rv} of \cref{lem:Bernstein condition}, $Y_t(v)$ satisfies $\qty(\frac{1}{n}, \Var_{t-1}[Y_t(v)])$-Bernstein condition.
  From definition, we have
  $\npo_t-\E_{t-1}[\npo_t]=\sum_{v\in V}Y_t(v)$, i.e., $\npo_t-\E_{t-1}[\npo_t]$ conditioned on $t-1$ round is the sum of $n$ independent random variables $(Y_t(v))_{v\in V}$.
  Hence, by \cref{item:BC for independent rvs} of \cref{lem:Bernstein condition}, $\npo_t-\E_{t-1}[\npo_t]$ satisfies $\qty(\frac{1}{n}, \sum_{v\in V}\Var_{t-1}[Y_t(v)])$-Bernstein condition.
  Since
  \begin{align*}
      \sum_{v\in V}\Var_{t-1}[Y_t(v)]
      &=\Var[\npo_t]
      \leq \frac{\npo_{t-1}(i)}{n}
  \end{align*}
  holds from \cref{lem:basic inequalities for beta} (\cref{item:variance of beta}), we obtain the claim.
\end{proof}

\begin{proof}[Proof of \cref{lem:basic inequalities for delta_epsilon}]
  \medskip
  \noindent\textbf{(i) Expectation.}
      From \cref{item:expectation of alpha} of \cref{lem:basic inequalities for alpha}, we have
      \begin{align*}
          \E_{t-1}\qty[\delta_t^{(\epsilon)}]
          &=\alpha_{t-1}(i)\qty(\alpha_{t-1}(i)+2(1-\beta_{t-1}))
          -(1+\varepsilon)\alpha_{t-1}(j)\qty(\alpha_{t-1}(j)+2(1-\beta_{t-1}))\\
          &=\qty(\alpha_{t-1}(i)-(1+\varepsilon)\alpha_{t-1}(j)) \qty(\alpha_{t-1}(i)+\alpha_{t-1}(j)+2(1-\beta_{t-1}))+\varepsilon\alpha_{t-1}(i)\alpha_{t-1}(j).
      \end{align*}
  \medskip
  \noindent\textbf{(ii) Variance.}
      From \cref{item:variance of alpha,claim:covariance of alpha} of \cref{lem:basic inequalities for alpha}, we have
      \begin{align*}
          \Var_{t-1}\qty[\delta_t]
          =\Var_{t-1}\qty[\alpha_t(i)]+\Var_{t-1}\qty[\alpha_t(j)]-2\Cov_{t-1}\qty[\alpha_t(i),\alpha_t(j)]
          \geq \frac{(1-\beta_{t-1})^2}{n}\qty(\alpha_{t-1}(i)+\alpha_{t-1}(j)).
      \end{align*}
  \medskip
  \noindent\textbf{(iii) Bernstein condition.}
      From \cref{lem:basic inequalities for alpha} (\cref{item:bernstein condition of alpha}) and \cref{lem:Bernstein condition} (\cref{item:BC for linear transformation}), for any $l\in [k]$ and $x\geq 0$,
      $x\cdot \qty(\alpha_t(l)-\E_{t-1}[\alpha_t(l)])$ satisfies $\qty(\frac{x}{n},\frac{x^2\alpha_{t-1}(i)}{n})$-Bernstein condition.
      Since 
      \[\delta_t^{(\epsilon)}-\E_{t-1}\qty[\delta_t^{(\epsilon)}]
      =\qty(\alpha_t(i)-\E_{t-1}[\alpha_t(i)])-(1+\epsilon)\qty(\alpha_t(j)-\E_{t-1}[\alpha_t(j)]),
      \]
      We obtain the claim by \cref{lem:Bernstein condition} (\cref{item:BC for upper bounded rv}) and \cref{lem:Bernstein condition 2} (\cref{item:BC for sum of rvs}).
  \end{proof}

\begin{proof}[Proof of \cref{lem:basic inequalities for gamma}]
\medskip
\noindent\textbf{(i) Expectation Upper Bound.}
  First, we observe that
  \begin{align}
    \sum_{i\in [k]} \E_{t-1}[\alpha_t(i)]^2
    =4(1-\beta_{t-1})^2\gamma_{t-1}+4(1-\beta_{t-1})\norm{\alpha_t}_3^3+\norm{\alpha_t}_4^4
    \label{eq:sum of squared expectation of alpha}
  \end{align}
  holds from \cref{item:expectation of alpha} of \cref{lem:basic inequalities for alpha}.
  Hence, 
  \begin{align*}
    \E_{t-1}[\gamma_t]
    &=4(1-\beta_{t-1})^2\gamma_{t-1}+4(1-\beta_{t-1})\norm{\alpha_t}_3^3+\norm{\alpha_t}_4^4+\sum_{i\in [k]} \Var_{t-1}[\alpha_t(i)] 
    \leq 10\gamma_{t-1}.
  \end{align*}
  holds. Note that $\sum_{i\in [k]} \Var_{t-1}[\alpha_t(i)] \leq \frac{\beta_{t-1}}{n}$ from \cref{item:variance of alpha} of \cref{lem:basic inequalities for alpha} and $\beta_{t-1}\leq n\gamma_{t-1}$ holds.
\medskip
\noindent\textbf{(ii) Expectation Lower Bound.}
  From the Cauchy-Schwarz inequality, we have
  \begin{align*}
    \gamma_t^2 = \qty(\sum_{i\in [k]}\alpha_t(i)^{0.5}\alpha_t(i)^{1.5})^2\leq \beta_t\norm{\alpha_t}_3^3 \quad\text{and}\quad
    \norm{\alpha_t}_3^6 = \qty(\sum_{i\in [k]}\alpha_t(i)\alpha_t(i)^{2})^2\leq \gamma_t\norm{\alpha_t}_4^4.
  \end{align*}
  Hence, from \cref{eq:sum of squared expectation of alpha} and \cref{lem:basic inequalities for beta} (\cref{item:expectation of beta}), we have
  \begin{align*}
    \npo_{t-1}^2\npt_{t-1}\sum_{i\in [k]} \E_{t-1}[\alpha_t(i)]^2
    &\geq 4(1-\npo_{t-1})^2 \npo_{t-1}^2\npt_{t-1}^2 + 4(1-\npo_{t-1})\npo_{t-1}^2\npt_{t-1}\norm{\np_t}_3^3+\npo_{t-1}^2\npt_{t-1}\norm{\np_{t-1}}_4^4 \\
    &\geq 4(1-\npo_{t-1})^2 \npo_{t-1}^2\npt_{t-1}^2 + 4(1-\npo_{t-1})\npo_{t-1}\npt_{t-1}^3+\npt_{t-1}^4 \\
    &=\npt_{t-1}^2\E_{t-1}[\npo_t]^2. 
\end{align*}
  Furthermore, $\Var_{t-1}[\alpha_t(i)] \geq \frac{\qty(1-\beta_{t-1})^2\alpha_{t-1}(i)}{n}$ from \cref{lem:basic inequalities for alpha} (\cref{item:variance of alpha}), we obtain
  \begin{align*}
  \beta_{t-1}^2\E_{t-1}[\gamma_t]
  =\beta_{t-1}^2\qty(\sum_{i\in [k]} \E_{t-1}[\alpha_t(i)]^2+\sum_{i\in [k]} \Var_{t-1}[\alpha_t(i)]) 
  \geq \gamma_{t-1}\E_{t-1}[\beta_t]^2+\frac{\beta_{t-1}^3(1-\beta_{t-1})^2}{n}.
  \end{align*}
\medskip
\noindent\textbf{(iii) Bernstein condition.}
  Write 
  \[Z_t(v,u)=\frac{1}{n^2}\qty(\indicator_{\opn_t(v)=\opn_t(u)}\indicator_{\opn_t(v)\neq \bot}-\E_{t-1}[\indicator_{\opn_t(v)=\opn_t(u)}\indicator_{\opn_t(v)\neq \bot}])\]
  for convenience.
  From \cref{item:BC for bounded rv} of \cref{lem:Bernstein condition}, $Z_t(v,u)$ satisfies $\qty(\frac{1}{n^2}, \Var_{t-1}[Z_t(v,u)])$-Bernstein condition.
  Observe that $\npt_t-\E_{t-1}[\npt_t]=\sum_{v\in V}\sum_{u\in V}Z_t(v,u)$.
  In other words, $\npt_t-\E_{t-1}[\npt_t]$ conditioned on $t-1$ round is the sum of $n^2$ random variables $(Z_t(v,u))_{v,u\in V}$.
  Furthermore, $(Z_t(v,u))_{v,u\in V}$ is a read-$2n-1$ family of independent random variables $(\opn_t(v))_{v\in V}$.
  Thus, from \cref{item:BC for read-k family} of \cref{lem:Bernstein condition}, $\npt_t-\E_{t-1}[\npt_t]$ satisfies $\qty(\frac{2n-1}{n^2}, (2n-1)\sum_{v,u\in V}\Var_{t-1}[Z_t(v,u)])$-Bernstein condition.
  Now, we bound $\sum_{v,u\in V}\Var_{t-1}[Z_t(v,u)]$. 
  From \cref{lem:basic inequalities for gamma} (\cref{item:expectation of gamma}), we have
  \begin{align*}
      \sum_{v,u\in V}\Var_{t-1}[Z_t(v,u)]
      \leq \frac{1}{n^4}\sum_{v,u\in V}\E_{t-1}[\indicator_{\opn_t(v)=\opn_t(u)}\indicator_{\opn_t(v)\neq \bot}]
      = \frac{1}{n^2}\E_{t-1}[\npt_t]
      \leq \frac{10\gamma_{t-1}}{n^2},
  \end{align*}
  and we obtain the result.
\end{proof}

\begin{proof}[Proof of \cref{lem:basic inequalities for psi}]
\medskip
\noindent\textbf{(i) Expectation.}
  From definition of $\psi_t$ and \cref{lem:basic inequalities for beta} (\cref{item:expectation of beta}), we have
  \begin{align*}
    &\E_{t-1}[\psi_t]
    =2\E_{t-1}\qty[\beta_t]^2-\E_{t-1}[\beta_t]-\E_{t-1}[\gamma_t]+\Var_{t-1}[\beta_t], \\
    &\E_{t-1}[\beta_t]=\beta_{t-1}-\psi_{t-1},\\
    &2\beta_{t-1}^2-\gamma_{t-1}=\beta_{t-1}+\psi_{t-1}.
  \end{align*}
    Applying \cref{lem:basic inequalities for tnpt} (\cref{item:expectation of gammatilde}), we have
    \begin{align*}
      \beta_{t-1}^2\qty(2\E_{t-1}\qty[\beta_t]^2-\E_{t-1}[\beta_t]-\E_{t-1}[\gamma_t])
      &\leq \beta_{t-1}^2\qty(2\E_{t-1}\qty[\beta_t]^2-\E_{t-1}[\beta_t]) - \E_{t-1}[\beta_t]^2\gamma_{t-1}\\
      &=-\psi_{t-1}^2\E_{t-1}[\npo_t]\\
      &\leq 0.
  \end{align*}
  Thus, from \cref{lem:basic inequalities for beta} (\cref{item:variance of beta}), we obtain
  \begin{align*}
    \E_{t-1}[\psi_t]
    \leq \Var_{t-1}[\beta_t]\leq \frac{\beta_{t-1}}{n}.
  \end{align*}
  \medskip
\noindent\textbf{(ii) Bernstein condition.}
   First, we show the following claim: 
  \begin{claim}
  $\npo_t^2-\E_{t-1}[\npo_t^2]$ satisfies $\qty(\frac{2}{n},\frac{20\beta_{t-1}^2}{n})$-Bernstein condition.
  \end{claim}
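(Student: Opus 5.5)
I will treat $\beta_t^2$ as a double sum over pairs of vertices, exactly as was done for $\npt_t$ in \cref{lem:basic inequalities for gamma} (\cref{item:bernstein condition of gamma}), and then use the read-$\ell$ machinery.

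\textbf{Decomposition.} For $v,u\in V$ define
\[
W_t(v,u)=\frac{1}{n^2}\qty(\indicator_{\opn_t(v)\neq\bot}\indicator_{\opn_t(u)\neq\bot}-\E_{t-1}\qty[\indicator_{\opn_t(v)\neq\bot}\indicator_{\opn_t(u)\neq\bot}]).
\]
Then $\npo_t^2-\E_{t-1}[\npo_t^2]=\sum_{v,u\in V}W_t(v,u)$. Conditioned on round $t-1$, the opinions $(\opn_t(v))_{v\in V}$ are independent. Each $W_t(v,u)$ depends only on $\opn_t(v)$ and $\opn_t(u)$, and each vertex appears in at most $2n-1$ of the pairs. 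Hence $(W_t(v,u))_{v,u}$ is a read-$(2n-1)$ family. Moreover $\E_{t-1}[W_t(v,u)]=0$ and $\abs{W_t(v,u)}\le 1/n^2$, so \cref{lem:Bernstein condition} (\cref{item:BC for bounded rv}) shows that each $W_t(v,u)$ satisfies the $\qty(\frac{1}{n^2},\Var_{t-1}[W_t(v,u)])$-Bernstein condition.

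\textbf{Applying the read-$\ell$ lemma.} By \cref{lem:Bernstein condition 2} (\cref{item:BC for read-k family}), the sum satisfies the
\[
\qty(\frac{2n-1}{n^2},\,(2n-1)\sum_{v,u}\Var_{t-1}[W_t(v,u)])
\]
Bernstein condition. The first parameter is at most $2/n$. For the variance term,
\[
\Var_{t-1}[W_t(v,u)]\le \frac{1}{n^4}\E_{t-1}\qty[\indicator_{\opn_t(v)\neq\bot}\indicator_{\opn_t(u)\neq\bot}],
\]
so summing over all pairs gives
\[
\sum_{v,u}\Var_{t-1}[W_t(v,u)]\le \frac{1}{n^2}\E_{t-1}[\npo_t^2]=\frac{1}{n^2}\qty(\E_{t-1}[\npo_t]^2+\Var_{t-1}[\npo_t]).
\]

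\textbf{Main obstacle: bounding $\E_{t-1}[\npo_t^2]$ by $O(\beta_{t-1}^2)$.} By \cref{lem:basic inequalities for beta}, $\E_{t-1}[\npo_t]=2\beta_{t-1}(1-\beta_{t-1})+\gamma_{t-1}\le 3\beta_{t-1}$, using $\gamma_{t-1}\le\beta_{t-1}$. The same lemma gives $\Var_{t-1}[\npo_t]\le \beta_{t-1}/n$. To absorb the variance term I use $\beta_{t-1}\ge 1/n$, which holds whenever $\beta_{t-1}>0$; this gives $\beta_{t-1}/n\le\beta_{t-1}^2$. If $\beta_{t-1}=0$, then everything is identically zero and the claim is trivial. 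Combining,
\[
\E_{t-1}[\npo_t^2]\le 9\beta_{t-1}^2+\beta_{t-1}^2=10\beta_{t-1}^2.
\]
Multiplying by the read factor $2n-1\le 2n$ yields a variance parameter of at most $20\beta_{t-1}^2/n$. Together with the first parameter $2/n$, this is the claimed $\qty(\frac{2}{n},\frac{20\beta_{t-1}^2}{n})$-Bernstein condition, after applying monotonicity (\cref{lem:Bernstein condition}, \cref{item:BC for upper bounded rv}).
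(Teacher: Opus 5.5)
Your proof is correct and matches the paper's own argument step for step. Both write $\beta_t^2$ as a double sum over ordered vertex pairs, apply the read-$(2n-1)$ Bernstein lemma to that family, and bound $\E_{t-1}[\beta_t^2]\le 9\beta_{t-1}^2+\beta_{t-1}/n\le 10\beta_{t-1}^2$. You also make explicit two points the paper leaves implicit: the case $\beta_{t-1}=0$, and the use of $\beta_{t-1}\ge 1/n$ to absorb the variance term.
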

  \begin{proof}
    Write 
    \[
        Y_t(v,u)=\frac{1}{n^2}\qty(\indicator_{\opn_t(v)\neq \bot}\indicator_{\opn_t(u)\neq \bot}-\E_{t-1}[\indicator_{\opn_t(v)\neq \bot}\indicator_{\opn_t(u)\neq \bot}])
    \]
   for convenience.
   From \cref{item:BC for bounded rv} of \cref{lem:Bernstein condition}, $Y_t(v,u)$ satisfies $\qty(\frac{1}{n^2}, \Var_{t-1}[Y_t(v,u)])$-Bernstein condition.
   Observe that $\npo_t^2-\E_{t-1}[\npo_t^2]=\sum_{v\in V}\sum_{u\in V}Y_t(v,u)$.
   In other words, $\npo_t^2-\E_{t-1}[\npo_t^2]$ conditioned on $t-1$ round is the sum of $n^2$ random variables $(Y_t(v,u))_{v,u\in V}$.
   Furthermore, $(Y_t(v,u))_{v,u\in V}$ is a read-$2n-1$ family of independent random variables $(\opn_t(v))_{v\in V}$.
   Thus, from \cref{item:BC for read-k family} of \cref{lem:Bernstein condition}, $\npo_t^2-\E_{t-1}[\npo_t^2]$ satisfies $\qty(\frac{2n-1}{n^2}, (2n-1)\sum_{v,u\in V}\Var_{t-1}[Y_t(v,u)])$-Bernstein condition.
   Now, we bound $\sum_{v,u\in V}\Var_{t-1}[Y_t(v,u)]$.
   From \cref{lem:basic inequalities for beta} (\cref{item:expectation of beta,item:variance of beta}), we have
   \begin{align*}
        \sum_{v,u\in V}\Var_{t-1}[Y_t(v,u)]
        &\leq \frac{1}{n^4}\sum_{v,u\in V}\E_{t-1}[\indicator_{\opn_t(v)\neq \bot}\indicator_{\opn_t(u)\neq \bot}]\\
        &= \frac{1}{n^2}\E_{t-1}[\npo_t^2]\\
        &= \frac{1}{n^2}\qty(\E_{t-1}[\npo_t]^2+\Var_{t-1}[\npo_t])\\
        &\leq \frac{1}{n^2}\qty(9\npo_{t-1}^2+\frac{\npo_{t-1}}{n})\\
        &\leq \frac{10\beta_{t-1}^2}{n^2}.
   \end{align*}
   holds, and we obtain the claim. 
  \end{proof}

  From definition, $\psi_t-\E_{t-1}[\psi_t]=2\qty(\npo_t^2-\E_{t-1}[\npo_t^2])+\qty(\E_{t-1}[\npo_t]-\npo_t)+\qty(\E_{t-1}[\npt_t]-\npt_t)$.
  Combining \cref{item:BC for sum of rvs,item:BC for linear transformation} of \cref{lem:Bernstein condition}, 
  $2\qty(\npo_t^2-\E_{t-1}[\npo_t^2])+\qty(\E_{t-1}[\npo_t]-\npo_t)+\qty(\E_{t-1}[\npt_t]-\npt_t)$
  satisfies $\qty(3\cdot \frac{8}{n}, 3\cdot \qty(\frac{80\npo_{t-1}}{n}+\frac{\npo_{t-1}}{n}+\frac{20\npo_{t-1}}{n}))$-Bernstein condition.
\end{proof}

\begin{proof}[Proof of \cref{lem:basic inequalities for tnpm}]
  \medskip
  \noindent\textbf{(i) Expectation.}
    From \cref{lem:basic inequalities for alpha,lem:basic inequalities for beta},
    \begin{align}
        \frac{\E_{t-1}\qty[\np_t(I_{t-1})]}{\E_{t-1}[\npo_t]}
        &=\frac{\np_{t-1}(I_{t-1})}{\npo_{t-1}}\qty(1+\frac{\frac{\E_{t-1}\qty[\np_t(I_{t-1})]}{\np_{t-1}(I_{t-1})}-\frac{\E_{t-1}[\npo_t]}{\npo_{t-1}}}{\frac{\E_{t-1}[\npo_t]}{\npo_{t-1}}}) \nonumber \\
        &=\frac{\np_{t-1}(I_{t-1})}{\npo_{t-1}}\cdot \frac{\np_{t-1}(I_{t-1})+2(1-\npo_{t-1})}{2(1-\npo_{t-1})+\npt_{t-1}/\npo_{t-1}}\nonumber \\
        &=\tnpm_{t-1}\qty(1+\frac{\npm_{t-1}-\npt_{t-1}/\npo_{t-1}}{2(1-\npo_{t-1})+\npt_{t-1}/\npo_{t-1}}) 
        \label{eq:expectation of alphatilde lower bound}
    \end{align}
    holds. 
    Note that $I_{t-1}=\min\qty{i\in [k]:\alpha_{t-1}(i)=\npm_{t-1}}=\min\qty{i\in [k]:\tnp_{t-1}(i)=\tnpm_{t-1}}$.
    Combining the above and \cref{lem:basic inequalities for ratio distribution} (\cref{item:expectation of ratio distribution}), we obtain 
    \begin{align}
      \E_{t-1}\qty[\tnp_t(I_{t-1})] 
      &\geq \frac{\E_{t-1}\qty[\np_t(I_{t-1})]}{\E_{t-1}[\npo_t]}-\frac{\mathbf{Cov}_{t-1}\qty[\np_t(I_{t-1}),\npo_t]}{\E_{t-1}[\npo_t]^2} \nonumber\\
      &\geq \tnpm_{t-1}\qty(1+\frac{\npm_{t-1}-\npt_{t-1}/\npo_{t-1}}{2(1-\npo_{t-1})+\npt_{t-1}/\npo_{t-1}})-\frac{9\npm_{t-1}}{n}.
      \label{eq:expectation of tnpm lower bound key}
    \end{align}
    The last inequality holds from \cref{claim:covariance of alpha and beta} ($\mathbf{Cov}_{t-1}\qty[\np_t(i),\npo_t]\leq \alpha_{t-1}(i)/n$) and the assumptions of $\beta_{t-1}\geq 1/2-o(1)$, $\psi_{t-1}\leq o(1)$, 
    and $\E_{t-1}[\beta_t]=\beta_{t-1}-\psi_{t-1}\geq 1/2-o(1)\geq 1/3$.
  
    \medskip
    \noindent\textbf{(ii) Bernstein condition.}
    From \cref{claim:covariance of alpha and beta}, we have
    $\Cov_{t-1}\qty[\np_t(I_{t-1}),\npo_t]\leq \frac{\np_{t-1}(I_{t-1})}{n}=\frac{\npm_{t-1}}{n}$.
  From \cref{eq:expectation of alphatilde lower bound} and \cref{lem:basic inequalities for ratio distribution} (\cref{item:Bernstein condition of ratio distribution}), we have
  \begin{align*}
    &\tnpm_{t-1}\qty(1+\frac{\npm_{t-1}-\npt_{t-1}/\npo_{t-1}}{2(1-\npo_{t-1})+\npt_{t-1}/\npo_{t-1}})-\tnpm_t-\frac{9\npm_{t-1}}{n}\\
    &\leq \frac{\E_{t-1}\qty[\np_t(I_{t-1})]}{\E_{t-1}[\npo_t]}-\tnp_t(I_{t-1})-\frac{\Cov_{t-1}\qty[\np_t(I_{t-1}),\npo_t]}{\E_{t-1}[\npo_t]^2} && (\text{from \cref{eq:expectation of tnpm lower bound key}})\\
    &\leq  \frac{2\qty(\E_{t-1}\qty[\np_t(I_{t-1})]-\np_t(I_{t-1}))}{\E_{t-1}[\npo_t]}+\frac{\np_t(I_{t-1})\npo_t-\E_{t-1}\qty[\np_t(I_{t-1})\npo_t]}{\E_{t-1}[\npo_t]^2}. && (\text{from \cref{lem:basic inequalities for ratio distribution} (\cref{item:Bernstein condition of ratio distribution})})
  \end{align*}
  Note that we use $\tnp_t(I_{t-1})\leq \tnpm_t$ in the second inequality.
  From \cref{item:bernstein condition of alpha} of \cref{lem:basic inequalities for alpha} and \cref{item:BC for linear transformation} of \cref{lem:Bernstein condition}, the random variable $\frac{2\qty(\E_{t-1}\qty[\np_t(I_{t-1})]-\np_t(I_{t-1}))}{\E_{t-1}[\npo_t]}$ conditioned on round $t-1$ satisfies $\qty(\frac{2}{n\E_{t-1}[\npo_t]}, \frac{4\npm_{t-1}}{n\E_{t-1}[\npo_t]^2})$-Bernstein condition.
  We now check the Bernstein condition of $\np_t(i)\npo_t-\E_{t-1}[\np_t(i)\npo_t]$.
  \begin{claim}
    \label{claim:bernstein condition of alpha beta}
    For any $t\geq 1$ and $i\in [k]$,
    $\np_t(i)\npo_t-\E_{t-1}[\np_t(i)\npo_t]$ conditioned on round $t-1$ satisfies $\qty(\frac{2}{n}, \frac{14\np_{t-1}(i)\npo_{t-1}}{n})$-Bernstein condition.
  \end{claim}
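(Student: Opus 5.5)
We need to prove the claim: $\np_t(i)\npo_t-\E_{t-1}[\np_t(i)\npo_t]$ satisfies $\qty(\frac{2}{n},\frac{14\np_{t-1}(i)\npo_{t-1}}{n})$-Bernstein condition. We follow the same pattern as the proofs for $\npt_t$ and $\npo_t^2$ above: write the product as a double sum over pairs of vertices and invoke the read-$\ell$ bound (\cref{lem:Bernstein condition 2}, \cref{item:BC for read-k family}).

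Concretely, we expand
\[
\np_t(i)\npo_t=\frac{1}{n^2}\sum_{v\in V}\sum_{u\in V}\indicator_{\opn_t(v)=i}\indicator_{\opn_t(u)\neq\bot},
\]
and set $W_t(v,u)=\frac{1}{n^2}\qty(\indicator_{\opn_t(v)=i}\indicator_{\opn_t(u)\neq\bot}-\E_{t-1}[\indicator_{\opn_t(v)=i}\indicator_{\opn_t(u)\neq\bot}])$. Each $W_t(v,u)$ is centered and bounded by $1/n^2$ in absolute value, so by \cref{lem:Bernstein condition} (\cref{item:BC for bounded rv}) it satisfies $\qty(\frac{1}{n^2},\Var_{t-1}[W_t(v,u)])$-Bernstein condition. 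Conditioned on round $t-1$, the opinions $(\opn_t(w))_{w\in V}$ are independent, and $W_t(v,u)$ depends only on $\opn_t(v),\opn_t(u)$. Each $\opn_t(w)$ is read by at most $2n-1$ of the pairs, so $(W_t(v,u))_{v,u}$ is a read-$(2n-1)$ family. \Cref{lem:Bernstein condition 2} (\cref{item:BC for read-k family}) then gives a $\qty(\frac{2n-1}{n^2},(2n-1)\sum_{v,u}\Var_{t-1}[W_t(v,u)])$-Bernstein condition. We weaken $\frac{2n-1}{n^2}\le \frac{2}{n}$ via \cref{item:BC for upper bounded rv}.

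It remains to bound the variance sum. We use $\Var\le\E[\cdot^2]$ and the fact that indicators square to themselves:
\[
\sum_{v,u}\Var_{t-1}[W_t(v,u)]\le\frac{1}{n^4}\sum_{v,u}\E_{t-1}[\indicator_{\opn_t(v)=i}\indicator_{\opn_t(u)\neq\bot}]=\frac{1}{n^2}\E_{t-1}[\np_t(i)\npo_t].
\]
Then we write $\E_{t-1}[\np_t(i)\npo_t]=\E_{t-1}[\np_t(i)]\E_{t-1}[\npo_t]+\Cov_{t-1}[\np_t(i),\npo_t]$. By \cref{lem:basic inequalities for alpha,lem:basic inequalities for beta}, $\E_{t-1}[\np_t(i)]=\alpha_{t-1}(i)(\alpha_{t-1}(i)+2(1-\beta_{t-1}))\le 3\alpha_{t-1}(i)$. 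Moreover $\E_{t-1}[\npo_t]=2\beta_{t-1}(1-\beta_{t-1})+\gamma_{t-1}\le 3\beta_{t-1}$, using $\gamma_{t-1}\le\beta_{t-1}$. By \cref{claim:covariance of alpha and beta}, $\Cov_{t-1}\le\alpha_{t-1}(i)/n$. Hence the expectation is at most $9\alpha_{t-1}(i)\beta_{t-1}+\alpha_{t-1}(i)/n$.

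The only potentially delicate step is absorbing the $\alpha_{t-1}(i)/n$ term into a multiple of $\alpha_{t-1}(i)\beta_{t-1}$. This works because $\beta_{t-1}\ge 1/n$ whenever $\alpha_{t-1}(i)>0$; if $\alpha_{t-1}(i)=0$, everything vanishes. This gives $\E_{t-1}[\np_t(i)\npo_t]\le 10\alpha_{t-1}(i)\beta_{t-1}$. Multiplying by $(2n-1)/n^2\le 2/n$ yields variance proxy $\frac{20\alpha_{t-1}(i)\beta_{t-1}}{n}$.

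The stated constant $14$ is tighter than this. To reach it, we will sharpen the expectation bounds: we use $\E_{t-1}[\npo_t]=\beta_{t-1}-\psi_{t-1}$ and the exact product form, or a finer split between the diagonal terms $v=u$ and the off-diagonal terms. If this tightening fails, the constant $20$ suffices for every later use, since only $O(\cdot)$ dependence is needed.
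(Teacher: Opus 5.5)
Your argument is the paper's proof: the same pair indicators $\frac{1}{n^2}\indicator_{\opn_t(v)=i}\indicator_{\opn_t(u)\neq\bot}$, the same read-$(2n-1)$ Bernstein bound, and the same estimate of the variance sum by $\frac{1}{n^2}\E_{t-1}[\alpha_t(i)\beta_t]$, split as a product of means plus the covariance from the earlier claim.

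As written, however, you only establish the constant $20$, and the tightenings you sketch (using $\psi_{t-1}$, or a diagonal/off-diagonal split) are not what is needed. The slack is in your bound on the $\alpha$-factor. Since $\alpha_{t-1}(i)\le\beta_{t-1}$, you have
$\E_{t-1}[\alpha_t(i)]=\alpha_{t-1}(i)\bigl(\alpha_{t-1}(i)+2(1-\beta_{t-1})\bigr)\le(2-\beta_{t-1})\alpha_{t-1}(i)\le 2\alpha_{t-1}(i)$.
Combine this with your $\E_{t-1}[\beta_t]\le 3\beta_{t-1}$ and with $\alpha_{t-1}(i)/n\le\alpha_{t-1}(i)\beta_{t-1}$. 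This gives $\E_{t-1}[\alpha_t(i)\beta_t]\le 7\alpha_{t-1}(i)\beta_{t-1}$, and hence the variance proxy $\frac{14\alpha_{t-1}(i)\beta_{t-1}}{n}$, which is exactly the paper's computation.

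You can go further. Using $\gamma_{t-1}\le\beta_{t-1}^2$ gives $\E_{t-1}[\beta_t]\le 2\beta_{t-1}$, which would yield $10$ in place of $14$.
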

  \begin{proof}[Proof of \cref{claim:bernstein condition of alpha beta}]
    Write 
    \[
        X_t(v,u)=\frac{1}{n^2}\qty(\indicator_{\opn_t(v)=i}\indicator_{\opn_t(u)\neq \bot}-\E_{t-1}[\indicator_{\opn_t(v)=i}\indicator_{\opn_t(u)\neq \bot}])
    \]
    for convenience.
    From \cref{item:BC for bounded rv} of \cref{lem:Bernstein condition}, the random variable $X_t(v,u)$ conditioned on round $t-1$ satisfies $\qty(\frac{1}{n^2}, \Var_{t-1}[X_t(v,u)])$-Bernstein condition.
    Observe that $\np_t(i)\npo_t-\E_{t-1}[\np_t(i)\npo_t]=\sum_{v\in V}\sum_{u\in V}X_t(v,u)$.
    In other words, $\np_t(i)\npo_t-\E_{t-1}[\np_t(i)\npo_t]$ conditioned on $t-1$ round is the sum of $n^2$ random variables $(X_t(v,u))_{v,u\in V}$.
    Furthermore, $(X_t(v,u))_{v,u\in V}$ is a read-$2n-1$ family of independent random variables $(\opn_t(v))_{v\in V}$.
    Thus, from \cref{item:BC for read-k family} of \cref{lem:Bernstein condition}, $\np_t(i)\npo_t-\E_{t-1}[\np_t(i)\npo_t]$ satisfies $\qty(\frac{2n-1}{n^2}, (2n-1)\sum_{v,u\in V}\Var_{t-1}[X_t(v,u)])$-Bernstein condition.
    Now, we bound the summation $\sum_{v,u\in V}\Var_{t-1}[X_t(v,u)]$ as follows:
    Since
    \begin{align*}
        \sum_{v,u\in V}\Var_{t-1}[X_t(v,u)]
        &\leq \frac{1}{n^4}\sum_{v,u\in V}\E_{t-1}[\indicator_{\opn_t(v)=i}\indicator_{\opn_t(u)\neq \bot}]\\
        &= \frac{1}{n^2}\E_{t-1}[\np_t(i)\npo_t]\\
        &= \frac{1}{n^2}\qty(\E_{t-1}[\np_t(i)]\E_{t-1}[\npo_t]+\mathbf{Cov}_{t-1}\qty[\np_t(i),\npo_t])\\
        &\leq \frac{1}{n^2}\qty(6\np_{t-1}(i)\npo_{t-1}+\frac{\np_{t-1}(i)}{n})\\
        &\leq \frac{7\np_{t-1}(i)\npo_{t-1}}{n^2}
    \end{align*}
    holds from \cref{lem:basic inequalities for alpha,lem:basic inequalities for beta,claim:covariance of alpha and beta}, we obtain the claim.
  \end{proof}
  
  From \cref{claim:bernstein condition of alpha beta} and \cref{item:BC for linear transformation} of \cref{lem:Bernstein condition}, the random variable $\frac{\np_t(I_{t-1})\npo_t-\E_{t-1}\qty[\np_t(I_{t-1})\npo_t]}{\E_{t-1}[\npo_t]^2}$ conditioned on round $t-1$ satisfies $\qty(\frac{2}{n\E_{t-1}[\npo_t]^2}, \frac{14\npm_{t-1}\npo_{t-1}}{n\E_{t-1}[\npo_t]^4})$-Bernstein condition.
  Therefore, from \cref{item:BC for sum of rvs} of \cref{lem:Bernstein condition} and \cref{item:BC for upper bounded rv} of \cref{lem:Bernstein condition}, we obtain the claim.
  Note that, from the assumptions of $\beta_{t-1}\geq 1/2-o(1)$ and $\psi_{t-1}\leq o(1)$, we have $\E_{t-1}[\beta_t]=\beta_{t-1}-\psi_{t-1}\geq 1/2-o(1)\geq 1/3$.
  \end{proof}

\begin{proof}[Proof of \cref{lem:basic inequalities for tnpt}]
\medskip
\noindent\textbf{(i) Expectation.}
  From \cref{lem:basic inequalities for tnpt} (\cref{item:expectation of gammatilde}), we have
  \begin{align*}
      \frac{\E_{t-1}\qty[\npt_t]}{\E_{t-1}[\npo_t]^2}
      &\geq \tnpt_{t-1}+\frac{\beta_{t-1}(1-\beta_{t-1})^2}{n\E_{t-1}[\beta_t]^2}.
  \end{align*}

  Hence, from \cref{lem:basic inequalities for ratio distribution} (\cref{item:expectation of ratio distribution}), \cref{lem:basic inequalities for beta} (\cref{item:variance of beta}), \cref{lem:basic inequalities for gamma} (\cref{item:expectation of gamma}), and \cref{claim:covariance of gamma and beta2} (we will prove this later), we have
  \begin{align*}
      \E_{t-1}\qty[\tnpt_t]
      &\geq \frac{\E_{t-1}\qty[\npt_t]}{\E_{t-1}[\npo_t^2]} - \frac{\mathbf{Cov}_{t-1}\qty[\npt_t,\npo_t^2]}{\E_{t-1}[\npo_t^2]^2}\\
      &=\frac{\E_{t-1}\qty[\npt_t]}{\E_{t-1}[\npo_t]^2} - \frac{\E_{t-1}\qty[\npt_t]\Var_{t-1}[\npo_t]}{\E_{t-1}[\npo_{t}^2]\E_{t-1}[\npo_t]^2}- \frac{\mathbf{Cov}_{t-1}\qty[\npt_t,\npo_t^2]}{\E_{t-1}[\npo_t^2]^2}\\
      &\geq \tnpt_{t-1}+\frac{\beta_{t-1}(1-\beta_{t-1})^2}{n\E_{t-1}[\beta_t]^2}
      - \frac{\E_{t-1}[\gamma_t]\npo_{t-1}}{n\E_{t-1}[\npo_t]^4}
      - \frac{4\E_{t-1}[\gamma_t]}{n\E_{t-1}[\npo_t]^4} && (\text{\cref{claim:covariance of gamma and beta2}})\\
      &\geq \tnpt_{t-1}+ \frac{1}{n\E_{t-1}[\npo_t]^2}\qty(\npo_{t-1}(1-\npo_{t-1})^2-\frac{50\npt_{t-1}}{\E_{t-1}[\npo_t]^2}) && (\text{\cref{lem:basic inequalities for gamma} (\cref{item:expectation of gamma}}))\\
      &=\tnpt_{t-1} + \frac{1}{n\E_{t-1}[\npo_t]^2}\qty(\frac{\npo_{t-1}}{4}\qty(1-\frac{\psi_{t-1}+\gamma_{t-1}}{\beta_{t-1}})^2-\frac{50\npt_{t-1}}{(\beta_{t-1}-\psi_{t-1})^2}).
  \end{align*}
Note that $1-\beta_{t-1}=\frac{1}{2}-\frac{\psi_{t-1}+\gamma_{t-1}}{2\beta_{t-1}}$holds from the definition of $\psi_t=\beta_t(2\beta_t-1)-\npt_t$.
From the assumption that $\psi_{t-1},\gamma_{t-1}\leq o(1)$, and $\beta_{t-1}\geq 1/2-o(1)$, we obtain the claim.
  
  \begin{claim}
  \label{claim:covariance of gamma and beta2}
  For any $t\geq 1$, we have
  $\mathbf{Cov}_{t-1}\qty[\gamma_t,\beta_t^2]\leq \frac{4\E_{t-1}[\gamma_t]}{n}$.
  \end{claim}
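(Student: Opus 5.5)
The plan is to expand both $\gamma_t$ and $\beta_t^2$ as double sums of indicator variables indexed by pairs of vertices. Then I will use the fact that, in the gossip model, the opinions $(\opn_t(v))_{v\in V}$ are independent conditioned on round $t-1$, so most covariance terms vanish. Concretely,
\[
\gamma_t=\frac{1}{n^2}\sum_{v,u\in V}A_{vu},\qquad A_{vu}\defeq \indicator_{\opn_t(v)=\opn_t(u)}\indicator_{\opn_t(v)\neq\bot},
\]
\[
\beta_t^2=\frac{1}{n^2}\sum_{w,z\in V}B_{wz},\qquad B_{wz}\defeq \indicator_{\opn_t(w)\neq\bot}\indicator_{\opn_t(z)\neq\bot}.
\]
By bilinearity,
\[
\Cov_{t-1}[\gamma_t,\beta_t^2]=\frac{1}{n^4}\sum_{v,u}\sum_{w,z}\Cov_{t-1}[A_{vu},B_{wz}].
\]

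First, $A_{vu}$ is a function of $(\opn_t(v),\opn_t(u))$ only, and $B_{wz}$ is a function of $(\opn_t(w),\opn_t(z))$ only. Conditioned on round $t-1$ these vertex opinions are independent, so $\Cov_{t-1}[A_{vu},B_{wz}]=0$ whenever $\{v,u\}\cap\{w,z\}=\emptyset$. Second, for the remaining overlapping pairs, I will bound each covariance crudely. Since $A_{vu},B_{wz}\in\{0,1\}$, we have $\E_{t-1}[A_{vu}]\E_{t-1}[B_{wz}]\ge 0$ and $B_{wz}\le 1$. Hence
\[
\Cov_{t-1}[A_{vu},B_{wz}]\le \E_{t-1}[A_{vu}B_{wz}]\le \E_{t-1}[A_{vu}].
\]

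Third, I count the overlapping pairs. For a fixed $(v,u)$, the number of ordered pairs $(w,z)$ with $w\in\{v,u\}$ or $z\in\{v,u\}$ is at most $2n+2n=4n$; this also covers the diagonal case $v=u$. Combining the three steps,
\[
\Cov_{t-1}[\gamma_t,\beta_t^2]\le \frac{1}{n^4}\sum_{v,u}4n\,\E_{t-1}[A_{vu}]=\frac{4}{n}\cdot\frac{1}{n^2}\sum_{v,u}\E_{t-1}[A_{vu}]=\frac{4\E_{t-1}[\gamma_t]}{n},
\]
which is the claim.

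There is no real obstacle here. The only points needing care are that the upper bound must use $\Cov\le\E[AB]$, which relies on nonnegativity of the indicators, rather than a variance-based bound, and that the overlap count must be done correctly to obtain the constant $4$.
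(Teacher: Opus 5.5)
Your proposal is correct and follows essentially the same argument as the paper. Both expand $\gamma_t$ and $\beta_t^2$ as double sums of indicators and discard index pairs with $\{v,u\}\cap\{w,z\}=\emptyset$ using conditional independence of the vertices' new opinions. Both then bound each remaining covariance by $\E_{t-1}[A_{vu}B_{wz}]\le \E_{t-1}[A_{vu}]$ and count at most $4n$ overlapping pairs $(w,z)$ for each $(v,u)$.
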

  \begin{proof}
   We have
  \begin{align}
  &\gamma_t=\sum_{i\in [k]}\qty(\frac{1}{n}\sum_{v\in V}\indicator_{\opn_t(v)=i})^2
  =\frac{1}{n^2}\sum_{v\in V}\sum_{u\in V}\indicator_{\opn_t(v)=\opn_t(u)}\indicator_{\opn_t(v)\neq \bot}, \\
  &\beta_t^2=\qty(\frac{1}{n}\sum_{v\in V}\indicator_{\opn_t(v)=\bot})^2=\frac{1}{n^2}\sum_{v\in V}\sum_{u\in V}\indicator_{\opn_t(v)\neq \bot}\indicator_{\opn_t(u)\neq \bot}.
  \end{align}
  Hence, 
  \begin{align*}
  \Cov_{t-1}\qty[\gamma_t,\beta_t^2]
  &=\frac{1}{n^4}\sum_{v,u,a,b\in V}\Cov_{t-1}\qty[\indicator_{\opn_t(v)=\opn_t(u)}\indicator_{\opn_t(v)\neq \bot},\indicator_{\opn_t(a)\neq \bot}\indicator_{\opn_t(b)\neq \bot}]\\
  &=\frac{1}{n^4}\sum_{v,u,a,b\in V:\{v,u\}\cap\{a,b\}\neq \emptyset}\Cov_{t-1}\qty[\indicator_{\opn_t(v)=\opn_t(u)}\indicator_{\opn_t(v)\neq \bot},\indicator_{\opn_t(a)\neq \bot}\indicator_{\opn_t(b)\neq \bot}]\\
  &\leq \frac{1}{n^4}\sum_{v,u,a,b\in V:\{v,u\}\cap\{a,b\}\neq \emptyset}\E_{t-1}\qty[\indicator_{\opn_t(v)=\opn_t(u)}\indicator_{\opn_t(v)\neq \bot}\indicator_{\opn_t(a)\neq \bot}\indicator_{\opn_t(b)\neq \bot}]\\
  &\leq \frac{1}{n^4}\sum_{v,u\in V}\sum_{a,b\in V:\{v,u\}\cap\{a,b\}\neq \emptyset}\E_{t-1}\qty[\indicator_{\opn_t(v)=\opn_t(u)}\indicator_{\opn_t(v)\neq \bot}]\\
  &\leq \frac{4n}{n^2}\E_{t-1}\qty[\gamma_t].
  \end{align*}
  \end{proof}
\end{proof}

\subsection{Population Protocol Model} \label{sec:proof of basic properties in the population protocol model}

\begin{proof}[Proof of \cref{lem:basic inequalities for alpha PP}]
\medskip
\noindent\textbf{(i) Expectation.}
First, observe that, for any $i\in [k]$ and $t\geq 1$, 
\begin{align}
  \alpha_t(i)-\alpha_{t-1}(i)
  &=\begin{cases}
    \frac{1}{n} & \text{with probability} \; (1-\beta_{t-1})\alpha_{t-1}(i),\\
    -\frac{1}{n} & \text{with probability} \; \alpha_{t-1}(i)(\beta_{t-1}-\alpha_{t-1}(i)), \\
    0 & \text{otherwise}.
  \end{cases}
  \label{eq:one step difference for alpha PP}
\end{align}
Hence, we have
\begin{align*}
\E_{t-1}[\alpha_t(i)] &= \alpha_{t-1}(i)
+ \frac{1}{n}(1-\beta_{t-1})\alpha_{t-1}(i) - \frac{1}{n}\alpha_{t-1}(i)(\beta_{t-1}-\alpha_{t-1}(i)) \\
&=\alpha_{t-1}(i)\qty(1+\frac{\alpha_{t-1}(i)+1-2\beta_{t-1}}{n}).
\end{align*}

\medskip
\noindent\textbf{(ii) Variance.}
  From $\Var_{t-1}[\alpha_t(i)]=\Var_{t-1}\qty[\alpha_{t}(i)-\alpha_{t-1}(i)]$ and \cref{eq:one step difference for alpha PP}, we have
  \begin{align*}
    \Var_{t-1}[\alpha_t(i)]
    &=\E_{t-1}\qty[\qty(\alpha_{t}(i)-\alpha_{t-1}(i))^2]-\E_{t-1}\qty[\alpha_{t}(i)-\alpha_{t-1}(i)]^2\\
    &=\frac{\alpha_{t}(i)\qty(1-\alpha_{t-1}(i))}{n^2}-\frac{\alpha_{t-1}(i)^2}{n^2}\qty(1-2\beta_{t-1}+\alpha_{t-1}(i))^2.
  \end{align*}

\medskip
\noindent\textbf{(iii) Bernstein condition.}
  First, $\alpha_t(i)-\E_{t-1}[\alpha_t(i)]$ satisfies $\qty(\frac{1}{n},\Var_{t-1}[\alpha_t(i)])$-Bernstein condition by \cref{lem:Bernstein condition} (\cref{item:BC for bounded rv}).
  Further, we have $\Var_{t-1}[\alpha_t(i)]\leq \frac{\alpha_{t-1}(i)}{n}$ from \cref{item:variance of alpha} of \cref{lem:basic inequalities for alpha}.
  Hence, from \cref{lem:Bernstein condition} (\cref{item:BC for upper bounded rv}), we obtain the claim.
\end{proof}

\begin{proof}[Proof of \cref{lem:basic inequalities for beta PP}]
\medskip
\noindent\textbf{(i) Expectation.} By definition,
\begin{align}
  \beta_t - \beta_{t-1} = \begin{cases}
      \frac{1}{n}	& \text{with probability } \; (1-\beta_{t-1})\beta_{t-1},\\
      -\frac{1}{n}	& \text{with probability } \; \sum_{i\in [k]}\alpha_{t-1}(i)\qty(\beta_{t-1}-\alpha_{t-1}(i))=\beta_{t-1}^2-\gamma_{t-1},\\
      0    & \text{otherwise}.
  \end{cases} \label{eq:one step difference for beta}
\end{align}
Thus, we obtain
\begin{align*}
\E_{t-1}\qty[\beta_t]
&=\beta_{t-1}+\frac{(1-\beta_{t-1})\beta_{t-1}}{n}-\frac{\beta_{t-1}^2-\gamma_{t-1}}{n}\\
&=\beta_{t-1}+\frac{\beta_{t-1}\qty(1-2\beta_{t-1})+\gamma_{t-1}}{n}.
\end{align*}

\medskip
\noindent\textbf{(ii) Variance.}
From $\Var_{t-1}[\beta_t]=\Var_{t-1}\qty[\beta_{t}-\beta_{t-1}]$ and \cref{eq:one step difference for beta}, we have
\begin{align}
  \Var_{t-1}[\beta_t]
  &\leq \E_{t-1}\qty[\qty(\beta_{t}-\beta_{t-1})^2]
  =\frac{(1-\beta_{t-1})\beta_{t-1}+\beta_{t-1}^2-\gamma_{t-1}}{n^2}
  =\frac{\beta_{t-1}-\gamma_{t-1}}{n^2}.
  \label{eq:squared difference of beta}
\end{align}

\medskip
\noindent\textbf{(iii) Bernstein condition.}
  From \cref{lem:Bernstein condition} (\cref{item:BC for bounded rv}), 
  $\beta_t-\E_{t-1}[\beta_t]$ satisfies $\qty(\frac{1}{n},\Var_{t-1}[\beta_t])$-Bernstein condition.
  Hence, from \cref{lem:Bernstein condition} (\cref{item:BC for upper bounded rv}) and \cref{lem:basic inequalities for beta PP} (\cref{item:variance of beta PP}), we obtain the claim.

\end{proof}

\begin{proof}[Proof of \cref{lem:basic inequalities for delta PP}]
  \noindent\textbf{(i) Expectation.}
  From \cref{lem:basic inequalities for alpha PP} (\cref{item:expectation of alpha PP}), we have
  \begin{align*}
    \E_{t-1}\qty[\delta_t^{(\epsilon)}]
    &=\delta_{t-1}^{(\epsilon)}\qty(1+\frac{\alpha_{t-1}(i)+\alpha_{t-1}(j)+1-2\beta_{t-1}}{n})+\frac{\epsilon\alpha_{t-1}(i)\alpha_{t-1}(j)}{n}.
  \end{align*}
  \noindent\textbf{(ii) Variance.}
  Form \cref{lem:basic inequalities for delta PP} (\cref{item:expectation of delta PP}), we have
  \begin{align*}
    \Var_{t-1}\qty[\delta_t]
    &=\E_{t-1}\qty[\qty(\delta_t-\delta_{t-1})^2]-\E_{t-1}\qty[\delta_t-\delta_{t-1}]^2\\
    &=\E_{t-1}\qty[\qty(\alpha_t(i)-\alpha_{t-1}(i))^2]+\E_{t-1}\qty[\qty(\alpha_t(j)-\alpha_{t-1}(j))^2]-\E_{t-1}\qty[\delta_t-\delta_{t-1}]^2\\
    &\frac{\alpha_{t-1}(i)\qty(1-\alpha_{t-1}(i))}{n^2}
    +\frac{\alpha_{t-1}(j)\qty(1-\alpha_{t-1}(j))}{n^2}
    -\frac{\delta_{t-1}^2}{n^2}\qty(1-2\beta_{t-1}+\alpha_{t-1}(i)+\alpha_{t-1}(j))^2\\
    &\geq \frac{\alpha_{t-1}(i)\qty(1-\alpha_{t-1}(i))}{n^2}
    +\frac{\alpha_{t-1}(j)\qty(1-\alpha_{t-1}(j))}{n^2}
    -\frac{4\delta_{t-1}^2}{n^2}.
  \end{align*}
  \noindent\textbf{(iii) Bernstein condition.}
  Since
  \begin{align*}
    \delta_t^{(\epsilon)}-\E_{t-1}\qty[\delta_t^{(\epsilon)}]
    =\qty(\alpha_t(i)-\E_{t-1}\qty[\alpha_t(i)])-(1+\varepsilon)\qty(\alpha_t(j)-\E_{t-1}\qty[\alpha_t(j)]),
  \end{align*}
  applying \cref{lem:Bernstein condition 2} (\cref{item:BC for sum of rvs}) and \cref{lem:basic inequalities for alpha PP} (\cref{item:bernstein condition of alpha PP}), 
  $\delta_t^{(\epsilon)}-\E_{t-1}\qty[\delta_t^{(\epsilon)}]$ satisfies $\qty(\frac{2(1+\epsilon)}{n},s)$-Bernstein condition for $s=\frac{2}{n^2}\qty(\alpha_{t-1}(i)+(1+\epsilon)^2\alpha_{t-1}(j))$.
\end{proof}

\begin{proof}[Proof of \cref{lem:basic inequalities for gamma PP}]
Suppose that $v\in V$ chooses $u\in V$ at time $t-1$.
Then, 
\begin{align}
\gamma_t-\gamma_{t-1}
&=\begin{cases}
  -\frac{2\alpha_{t-1}(i)}{n}+\frac{1}{n^2} & \text{if $\opn_{t-1}(v)=i$ and $\opn_{t-1}(u)=j$ for $i\neq j$},\\
  \frac{2\alpha_{t-1}(i)}{n}+\frac{1}{n^2} & \text{if $\opn_{t-1}(v)=\bot$ and $\opn_{t-1}(u)=i$},\\
  0 & \text{otherwise}.
\end{cases}
\label{eq:one step gamma difference for PP}
\end{align}
Indeed, for the first case, we have
\begin{align*}
\gamma_t-\gamma_{t-1}
&=\alpha_t(i)^2-\alpha_{t-1}(i)^2
=\qty(\alpha_{t-1}(i)-\frac{1}{n})^2-\alpha_{t-1}(i)^2
=-\frac{2\alpha_{t-1}(i)}{n}+\frac{1}{n^2},
\end{align*}
and for the second case, we have
\begin{align*}
\gamma_t-\gamma_{t-1}
&=\alpha_t(i)^2-\alpha_{t-1}(i)^2
=\qty(\alpha_{t-1}(i)+\frac{1}{n})^2-\alpha_{t-1}(i)^2
=\frac{2\alpha_{t-1}(i)}{n}+\frac{1}{n^2}.
\end{align*}

Note that \cref{eq:one step gamma difference for PP} implies the following fact: For any $t\geq 1$, we have
\begin{align}
\abs{\gamma_t-\gamma_{t-1}}
\leq \frac{1}{n^2}+\frac{2\npm_{t-1}}{n} \leq \frac{3\npm_{t-1}}{n}.
\label{eq:one step gamma difference bound for PP}
\end{align}
Here, we use the fact that $\npm_{t-1}\leq \frac{1}{n}$ holds (if this does not hold, then $\gamma_t-\gamma_{t-1}=0$ and \cref{eq:one step gamma difference bound for PP} holds trivially).

\noindent\textbf{(i) Expectation.}
From \cref{eq:one step gamma difference for PP}, 
\begin{align*}
&\E_{t-1}\qty[\gamma_t-\gamma_{t-1}]\\
&=\sum_{i\in [k]}\sum_{j\in [k]\setminus \{i\}}\qty(\frac{1}{n^2}-\frac{2\alpha_{t-1}(i)}{n})\alpha_{t-1}(i)\alpha_{t-1}(j)
+\sum_{i\in [k]}\qty(\frac{1}{n^2}+\frac{2\alpha_{t-1}(i)}{n})(1-\beta_{t-1})\alpha_{t-1}(i)\\
&=\frac{2}{n}\sum_{i\in [k]}\alpha_{t-1}(i)^2\qty(1-2\beta_{t-1}+\alpha_{t-1}(i))
+\frac{1}{n^2}\sum_{i\in [k]}\alpha_{t-1}(i)\qty(1-\alpha_{t-1}(i))\\
&=\frac{2}{n}\qty((1-2\beta_{t-1})\gamma_{t-1}+\norm{\alpha_{t-1}}_3^3)+\frac{\beta_{t-1}-\gamma_{t-1}}{n^2}
\end{align*}
holds and we obtain the claim.

\medskip
\noindent\textbf{(ii) Variance.}
Note that $\Var_{t-1}[\gamma_t]=\Var_{t-1}[\gamma_t-\gamma_{t-1}]\leq \E_{t-1}[(\gamma_t-\gamma_{t-1})^2]$.
From \cref{eq:one step gamma difference for PP}, we have
\begin{align*}
&\E_{t-1}\qty[\qty(\gamma_t-\gamma_{t-1})^2]\\
&=\sum_{i\in [k]}\sum_{j\in [k]\setminus \{i\}}\qty(\frac{1}{n^2}-\frac{2\alpha_{t-1}(i)}{n})^2\alpha_{t-1}(i)\alpha_{t-1}(j)
+\sum_{i\in [k]}\qty(\frac{1}{n^2}+\frac{2\alpha_{t-1}(i)}{n})^2(1-\beta_{t-1})\alpha_{t-1}(i)\\
&=\frac{1}{n^4}\sum_{i\in [k]}\alpha_{t-1}(i)\qty(1-\alpha_{t-1}(i))
+\frac{4}{n^2}\sum_{i\in [k]}\alpha_{t-1}(i)^3\qty(1-\alpha_{t-1}(i))\\
&\hspace{1em}+\frac{4}{n^3}\sum_{i\in [k]}\alpha_{t-1}(i)^2\qty(1-2\beta_{t-1}+\alpha_{t-1}(i))\\
&=\frac{\beta_{t-1}-\gamma_{t-1}}{n^4}
+\frac{4}{n^2}\qty(\norm{\alpha_{t-1}}_3^3-\norm{\alpha_{t-1}}_4^4)
+\frac{4}{n^3}\qty((1-2\beta_{t-1})\gamma_{t-1}+\norm{\alpha_{t-1}}_3^3).
\end{align*}
Since
\begin{align*}
  \sum_{i\in [k]}\alpha_{t-1}(i)^{h+\ell}
  =\sum_{i\in [k]:\alpha_{t-1}(i)>0}\alpha_{t-1}(i)^{h+\ell}
  \geq \frac{1}{n^\ell}\sum_{i\in [k]:\alpha_{t-1}(i)>0}\alpha_{t-1}(i)^h
  =\frac{1}{n^\ell}\sum_{i\in [k]}\alpha_{t-1}(i)^h
\end{align*}
holds for any $h,\ell\geq 1$, 
we have 
$\norm{\alpha_{t-1}}_3^3\leq n\norm{\alpha_{t-1}}_4^4$,
$\gamma_{t-1}\leq n\norm{\alpha_{t-1}}_3^3$, and
$\beta_{t-1}\leq n^2\norm{\alpha_{t-1}}_3^3$.
Hence, we obtain the claim.

\medskip
\noindent\textbf{(iii) Bernstein condition.}
From \cref{eq:one step gamma difference bound for PP}, we have
\begin{align}
  \abs{\gamma_t-\E_{t-1}[\gamma_t]}
  =\abs{\gamma_t-\gamma_{t-1}-\E_{t-1}[\gamma_t-\gamma_{t-1}]}
  \leq 2\abs{\gamma_t-\gamma_{t-1}}
  \leq \frac{6\npm_{t-1}}{n}.
  \label{eq:one step gamma difference bound for expectation of gamma PP}
\end{align}
Hence, from \cref{lem:Bernstein condition} (\cref{item:BC for bounded rv}), 
$\gamma_t-\E_{t-1}[\gamma_t]$ satisfies $\qty(\frac{6\npm_{t-1}}{n},\Var_{t-1}[\gamma_t])$-Bernstein condition.
Combining this with \cref{lem:Bernstein condition} (\cref{item:BC for upper bounded rv}) and \cref{lem:basic inequalities for gamma PP} (\cref{item:variance of gamma PP}), we obtain the claim.
\end{proof}

\begin{proof}[Proof of \cref{lem:basic inequalities for psi PP}]
  \medskip
  \noindent\textbf{(i) Expectation.}
  First, from \cref{lem:basic inequalities for gamma PP} (\cref{item:expectation of gamma PP}), we have
  \begin{align*}
    \beta_{t-1}\E_{t-1}[\gamma_t-\gamma_{t-1}]
    &=\frac{2}{n}\qty(\beta_{t-1}(1-2\beta_{t-1})\gamma_{t-1}+\beta_{t-1}\norm{\alpha_{t-1}}_3^3)+\beta_{t-1}\frac{\beta_{t-1}-\gamma_{t-1}}{n^2}\\
    &\geq \frac{2}{n}\qty(\beta_{t-1}(1-2\beta_{t-1})\gamma_{t-1}+\gamma_{t-1}^2)+\beta_{t-1}\frac{\beta_{t-1}-\gamma_{t-1}}{n^2}\\
    &=-\frac{2\gamma_{t-1}}{n}\psi_{t-1}+\beta_{t-1}\frac{\beta_{t-1}-\gamma_{t-1}}{n^2}.
  \end{align*}
  Note that $\gamma_{t-1}^2\leq \beta_{t-1}\norm{\alpha_{t-1}}_3^3$ holds from the Cauchy-Schwarz inequality.
  Hence, from \cref{eq:one step difference for beta,eq:squared difference of beta},
  \begin{align*}
  &\beta_{t-1}\E_{t-1}\qty[\psi_t-\psi_{t-1}]\\
  &=2\beta_{t-1}\E_{t-1}\qty[\beta_t^2-\beta_{t-1}^2]-\beta_{t-1}\E_{t-1}[\beta_t-\beta_{t-1}]-\beta_{t-1}\E_{t-1}[\gamma_t-\gamma_{t-1}]\\
  &=2\beta_{t-1}\E_{t-1}\qty[\qty(\beta_t-\beta_{t-1})^2]+\beta_{t-1}(4\beta_{t-1}-1)\E_{t-1}[\beta_t-\beta_{t-1}]-\beta_{t-1}\E_{t-1}[\gamma_t-\gamma_{t-1}]\\
  &\leq 2\beta_{t-1}\frac{\beta_{t-1}-\gamma_{t-1}}{n^2}-\beta_{t-1}(4\beta_{t-1}-1)\frac{\psi_{t-1}}{n}
  +\frac{2\gamma_{t-1}}{n}\psi_{t-1}-\beta_{t-1}\frac{\beta_{t-1}-\gamma_{t-1}}{n^2}\\
  &=\frac{\psi_{t-1}}{n}\qty(-2\psi_{t-1}-\beta_{t-1})+\beta_{t-1}\frac{\beta_{t-1}-\gamma_{t-1}}{n^2}\\
  &\leq -\frac{\psi_{t-1}}{n}\beta_{t-1}+\beta_{t-1}\frac{\beta_{t-1}-\gamma_{t-1}}{n^2}
  \end{align*}
  holds and we obtain the claim.

  \medskip
  \noindent\textbf{(ii) Bernstein condition.}
  From definition, 
  $\psi_t-\E_{t-1}[\psi_t]=2\qty(\beta_t^2-\E_{t-1}[\beta_t^2])-\qty(\beta_t-\E_{t-1}[\beta_t])-\qty(\gamma_t-\E_{t-1}[\gamma_t]).$
  In what follows, we show that $\beta_t^2-\E_{t-1}[\beta_t^2]$ satisfies $\qty(\frac{16}{n},\frac{36\beta_{t-1}^4}{n^2})$-Bernstein condition.
  If this holds, then by applying \cref{lem:Bernstein condition} (\cref{item:BC for upper bounded rv}), together with \cref{lem:basic inequalities for beta PP} (\cref{item:bernstein condition of beta PP}) and \cref{lem:basic inequalities for gamma PP} (\cref{item:bernstein condition of gamma PP}), the claim follows.

First, we have
  \begin{align*}
    \abs{\beta_t^2-\E_{t-1}[\beta_t^2]}
    =\abs{\beta_t^2-\beta_{t-1}^2-\E_{t-1}[\beta_t^2-\beta_{t-1}^2]}
    \leq 2\abs{\beta_t^2-\beta_{t-1}^2}
    \leq 4\abs{\beta_t-\beta_{t-1}}
    \leq\frac{4}{n}.
  \end{align*}

  Now, we show that $\Var_{t-1}[\beta_t^2]\leq \frac{9\beta_{t-1}^4}{n^2}$.
  If $\beta_{t-1}=0$, then $\beta_t^2-\beta_{t-1}^2=0$ and the inequality holds.
  Suppose $v\in V$ chooses $u\in V$ at time $t-1$.
  If $\opn_{t-1}(v)=i$ and $\opn_{t-1}(u)=j$ for $i\neq j$, we have
  \begin{align}
    \beta_t^2-\beta_{t-1}^2
    =-\frac{2\beta_{t-1}}{n}+\frac{1}{n^2}.
    \label{eq:one step difference of squared beta 1}
  \end{align}
  If $\opn_{t-1}(v)=\bot$ and $\opn_{t-1}(u)=i$, we have
  \begin{align}
    \beta_t^2-\beta_{t-1}^2
    =\frac{2\beta_{t-1}}{n}+\frac{1}{n^2}.
    \label{eq:one step difference of squared beta 2}
  \end{align}
  Otherwise, we have $\beta_t^2-\beta_{t-1}^2=0$.
  Thus, we obtain
  \begin{align*}
    &\E_{t-1}\qty[\qty(\beta_t^2-\beta_{t-1}^2)^2]\\
    &=\sum_{i\in [k]}\sum_{j\in [k]\setminus \{i\}}\qty(\frac{1}{n^2}-\frac{2\beta_{t-1}}{n})^2\alpha_{t-1}(i)\alpha_{t-1}(j)
    +\sum_{i\in [k]}\qty(\frac{1}{n^2}+\frac{2\beta_{t-1}}{n})^2(1-\beta_{t-1})\alpha_{t-1}(i)\\
    &=\qty(\frac{1}{n^2}-\frac{2\beta_{t-1}}{n})^2\qty(\beta_{t-1}^2-\gamma_{t-1})
    +\qty(\frac{1}{n^2}+\frac{2\beta_{t-1}}{n})^2\beta_{t-1}\qty(1-\beta_{t-1})\\
    &=\qty(\frac{1}{n^4}+\frac{4\beta_{t-1}^2}{n^2})\qty(\beta_{t-1}^2-\gamma_{t-1})
    +\frac{4\beta_{t-1}^2}{n^3}\qty(\beta_{t-1}(1-2\beta_{t-1})+\gamma_{t-1})\\
    &\leq \frac{9\beta_{t-1}^4}{n^2}.
  \end{align*}
  Note that $\Var_{t-1}[\beta_t^2]=\Var_{t-1}[\beta_t^2-\beta_{t-1}^2]\leq \E_{t-1}[\qty(\beta_t^2-\beta_{t-1}^2)^2]$.
  Hence, from \cref{lem:Bernstein condition} (\cref{item:BC for bounded rv}), we obtain the claim.
\end{proof}

\begin{proof}[Proof of \cref{lem:basic inequalities for tnpm PP}]
  \medskip
  \noindent\textbf{(i) Expectation.}
  From \cref{lem:basic inequalities for ratio distribution} (\cref{item:expectation of ratio distribution}), we have
    \begin{align*}
      \E_{t-1}\qty[\tnpm_t]
      \geq \E_{t-1}\qty[\tnp_t(I_{t-1})]
      \geq \frac{\E_{t-1}\qty[\alpha_t(I_{t-1})]}{\E_{t-1}[\beta_t]}-\frac{\mathbf{Cov}_{t-1}\qty[\alpha_t(I_{t-1}),\beta_t]}{\E_{t-1}[\beta_t]^2}.
    \end{align*}
    From \cref{lem:basic inequalities for alpha PP} (\cref{item:expectation of alpha PP}) and \cref{lem:basic inequalities for beta PP} (\cref{item:expectation of beta PP}), we have
    \begin{align}
      \frac{\E_{t-1}\qty[\alpha_t(I_{t-1})]}{\E_{t-1}[\beta_t]}
      &=\frac{\alpha_{t-1}(I_{t-1})}{\beta_{t-1}}\qty(1+\frac{\frac{\E_{t-1}[\alpha_t(I_{t-1})]}{\alpha_{t-1}(I_{t-1})}-\frac{\E_{t-1}[\beta_t]}{\beta_{t-1}}}{\frac{\E_{t-1}[\beta_t]}{\beta_{t-1}}})\nonumber\\
      &=\tnpm_{t-1}\qty(1+\frac{\npm_{t-1}-\gamma_{t-1}/\beta_{t-1}}{n\E_{t-1}[\beta_t]/\beta_{t-1}})\nonumber\\
      &\geq \tnpm_{t-1}\qty(1+\frac{\npm_{t-1}-\gamma_{t-1}/\beta_{t-1}}{2n}). \label{eq:expectation of tnpm ratio PP}
    \end{align}
    Note that $\gamma_{t-1}\leq \npm_{t-1}\beta_{t-1}$ holds.
    
    Now, we complete the proof by showing that $\mathbf{Cov}_{t-1}\qty[\alpha_t(I_{t-1}),\beta_t]\leq \frac{2\npm_{t-1}}{n^2}$.
    To see this, from \cref{lem:basic inequalities for alpha PP} (\cref{item:variance of alpha PP}) and \cref{lem:basic inequalities for beta PP} (\cref{item:expectation of beta PP}), we have
    \begin{align*}
    \Cov_{t-1}\qty[\alpha_{t}(i),\beta_{t}]
    &=\Cov_{t-1}\qty[\alpha_t(i)-\alpha_{t-1}(i),\beta_{t-1}-\beta_{t-1}] \nonumber\\
    &=\E_{t-1}\qty[\qty(\alpha_t(i)-\alpha_{t-1}(i))\qty(\beta_t-\beta_{t-1})]-\E_{t-1}\qty[\alpha_t(i)-\alpha_{t-1}(i)]\E_{t-1}\qty[\beta_t-\beta_{t-1}] \nonumber\\
    &=\E_{t-1}\qty[\qty(\alpha_t(i)-\alpha_{t-1}(i))^2]-\E_{t-1}\qty[\alpha_t(i)-\alpha_{t-1}(i)]\E_{t-1}\qty[\beta_t-\beta_{t-1}] \nonumber\\
    &=\frac{\alpha_{t-1}(i)}{n^2}\qty(1-\alpha_{t-1}(i)-\qty(1-2\beta_{t-1}+\alpha_{t-1}(i))\qty(\beta_{t-1}(1-2\beta_{t-1})+\gamma_{t-1}))) \nonumber\\
    &\leq \frac{2\alpha_{t-1}(i)}{n^2}. 
    \end{align*}
    Hence, 
    \begin{align}
      \frac{\mathbf{Cov}_{t-1}\qty[\alpha_t(I_{t-1}),\beta_t]}{\E_{t-1}[\beta_t]^2}
      \leq \frac{18\npm_{t-1}}{n^2}
      \label{eq:covariance of alpha beta PP}
    \end{align}
    and we obtain the claim.
    Note that we use assumptions of $\beta_{t-1}\geq 1/2-o(1)$, 
    and $\E_{t-1}[\beta_t]\geq \beta_{t-1}-1/n\geq 1/3$.
  
  \medskip
  \noindent\textbf{(ii) Bernstein condition.}
    From \cref{lem:basic inequalities for ratio distribution} (\cref{item:Bernstein condition of ratio distribution}), \cref{lem:basic inequalities for beta PP,eq:covariance of alpha beta PP,eq:expectation of tnpm ratio PP}, we have
    \begin{align*}
      &\tnpm_{t-1}\qty(1+\frac{\npm_{t-1}-\gamma_{t-1}/\beta_{t-1}}{2n})-\tnpm_t-\frac{18\npm_{t-1}}{n^2}\\
      &\leq \frac{\E_{t-1}\qty[\alpha_t(I_{t-1})]}{\E_{t-1}[\beta_t]}-\alpha_t(I_{t-1})-\frac{\Cov_{t-1}\qty[\alpha_t(I_{t-1}),\beta_t]}{\E_{t-1}[\beta_t]^2}\\
      &\leq \frac{2\qty(\E_{t-1}\qty[\alpha_t(I_{t-1})]-\alpha_{t-1}(I_{t-1}))}{\E_{t-1}[\beta_t]}-\frac{\alpha_t(I_{t-1})\beta_t-\E_{t-1}\qty[\alpha_t(I_{t-1})\beta_t]}{\E_{t-1}[\beta_t]^2}.
    \end{align*}
    From \cref{lem:basic inequalities for alpha PP} (\cref{item:bernstein condition of alpha PP}), the former term $\frac{2\qty(\E_{t-1}\qty[\alpha_t(I_{t-1})]-\alpha_{t-1}(I_{t-1}))}{\E_{t-1}[\beta_t]}$ satisfies $\qty(\frac{2}{n\E_{t-1}[\beta_t]},\frac{4\npm_{t-1}}{n^2\E_{t-1}[\beta_t]^2})$-Bernstein condition.
    
    Now, we show that $\frac{\alpha_t(I_{t-1})\beta_t-\E_{t-1}\qty[\alpha_t(I_{t-1})\beta_t]}{\E_{t-1}[\beta_t]^2}$ satisfies the $\qty(\frac{6\beta_{t-1}}{n\E_{t-1}[\beta_t]^2},\frac{20\npm_{t-1}\beta_{t-1}}{n^2\E_{t-1}[\beta_t]^4})$-Bernstein condition.
    Once this is established, applying \cref{lem:Bernstein condition 2} (\cref{item:BC for sum of rvs}) and \cref{lem:Bernstein condition} (\cref{item:BC for bounded rv}) with $\E_{t-1}[\beta_t]\geq \beta_{t-1}-1/n \geq 1/3$ completes the proof.

    First, 
    we have 
    \begin{align*}
      \abs{\alpha_t(i)\beta_t-\E_{t-1}\qty[\alpha_{t}(i)\beta_{t}]}
      &\leq 2\abs{\alpha_t(i)\beta_t-\alpha_{t-1}(i)\beta_{t-1}}\\
      &\leq 2\alpha_t(i)\abs{\beta_t(i)-\beta_{t-1}(i)}+2\beta_{t-1}\abs{\alpha_t(i)-\alpha_{t-1}(i)}\\
      &\leq \frac{6\beta_{t-1}}{n}.
    \end{align*}
Furthermore, we have 
\begin{align*}
  \Var_{t-1}\qty[\alpha_t(i)\beta_t]=\Var_{t-1}\qty[\alpha_t(i)\beta_t-\alpha_{t-1}(i)\beta_{t-1}]\leq \E_{t-1}\qty[\qty(\alpha_t(i)\beta_t - \alpha_{t-1}(i)\beta_{t-1})^2]
\end{align*}
and
\begin{align*}
\E_{t-1}\qty[\qty(\alpha_t(i)\beta_t - \alpha_{t-1}(i)\beta_{t-1})^2]
&=\qty(\frac{\alpha_{t-1}(i)}{n}+\frac{\beta_{t-1}}{n}+\frac{1}{n^2})^2 \qty(1-\beta_{t-1})\alpha_{t-1}(i)\\
&+\qty(\frac{\alpha_{t-1}(i)}{n})^2 (1-\beta_{t-1})\qty(\beta_{t-1}-\alpha_{t-1}(i))\\
&+\qty(-\frac{\alpha_{t-1}(i)}{n}-\frac{\beta_{t-1}}{n}+\frac{1}{n^2})^2 \alpha_{t-1}(i)\qty(\beta_{t-1}-\alpha_{t-1}(i))\\
&+\qty(-\frac{\alpha_{t-1}(i)}{n})^2 \sum_{j\in [k]\setminus \{i\}}\alpha_{t-1}(j)\qty(\beta_{t-1}-\alpha_{t-1}(j))\\
&\leq \frac{20\alpha_{t-1}(i)\beta_{t-1}}{n^2},
\end{align*}
and we obtain the claim from \cref{lem:Bernstein condition} (\cref{item:BC for bounded rv}).
  \end{proof}

\begin{proof}[Proof of \cref{lem:basic inequalities for tnpt PP}]
  Write $x=\frac{1}{n}\qty(1-2\beta_{t-1}+\frac{\gamma_{t-1}}{\beta_{t-1}})$ for convenience.
      Then, from \cref{lem:basic inequalities for beta PP} (\cref{item:expectation of beta PP}), we have
      \begin{align*}
      \E_{t-1}\qty[\beta_t]=\beta_{t-1}+\frac{\beta_{t-1}\qty(1-2\beta_{t-1})+\gamma_{t-1}}{n}=\beta_{t-1}(1+x),
      \end{align*}
      and from \cref{lem:basic inequalities for gamma PP} (\cref{item:expectation of gamma PP}), we have
      \begin{align*}
        \E_{t-1}\qty[\gamma_t]
        &\geq \gamma_{t-1}+\frac{2}{n}\qty((1-2\beta_{t-1})\gamma_{t-1}+\frac{\gamma_{t-1}^2}{\beta_{t-1}})+\frac{\beta_{t-1}-\gamma_{t-1}}{n^2}
        =\gamma_{t-1}\qty(1+2x)+\frac{\beta_{t-1}-\gamma_{t-1}}{n^2}.
      \end{align*}
      Note that $\norm{\alpha_{t-1}}_3^3\geq \beta_{t-1}\left(\sum_{i\in [k]}\frac{\alpha_{t-1}(i)}{\beta_{t-1}}\right)^2=\gamma_{t-1}^2/\beta_{t-1}$ holds from Jensen's inequality.
      Hence,
      \begin{align*}
        \frac{\E_{t-1}\qty[\gamma_t]}{\E_{t-1}\qty[\beta_t]^2}
        &\geq \frac{\gamma_{t-1}(1+2x)+\frac{\beta_{t-1}-\gamma_{t-1}}{n^2}}{\beta_{t-1}^2(1+x)^2}\\
        &=\tnpt_{t-1}+\frac{\frac{\beta_{t-1}-\gamma_{t-1}}{n^2}-\gamma_{t-1}x^2}{\beta_{t-1}^2(1+x)^2}\\
        &=\tnpt_{t-1}+\frac{\beta_{t-1}-\gamma_{t-1}-\gamma_{t-1}\qty(1-2\beta_{t-1}+\frac{\gamma_{t-1}}{\beta_{t-1}})^2}{n^2\E_{t-1}\qty[\beta_t]^2}.
      \end{align*}

      Now, we give an upper bound of $\Cov_{t-1}\qty[\gamma_t,\beta_t^2]$.
      From \cref{eq:one step difference of squared beta 1,eq:one step difference of squared beta 2,eq:one step gamma difference for PP}, we have
      \begin{align*}
      &\E_{t-1}\qty[\qty(\gamma_t-\gamma_{t-1})(\beta_t^2-\beta_{t-1}^2)]\\
      &=\sum_{i\in [k]}\sum_{j\in [k]\setminus \{i\}}\qty(\frac{1}{n^2}-\frac{2\alpha_{t-1}(i)}{n})\qty(\frac{1}{n^2}-\frac{2\beta_{t-1}}{n})\alpha_{t-1}(i)\alpha_{t-1}(j)\\
      &+\sum_{i\in [k]}\qty(\frac{1}{n^2}+\frac{2\alpha_{t-1}(i)}{n})\qty(\frac{1}{n^2}+\frac{2\beta_{t-1}}{n})(1-\beta_{t-1})\alpha_{t-1}(i)\\
      &=\frac{1}{n^3}\qty(\frac{1}{n}-2\beta_{t-1})\qty(\beta_{t-1}^2-\gamma_{t-1})
      -\frac{2}{n^2}\qty(\frac{1}{n}-2\beta_{t-1})\qty(\gamma_{t-1}\beta_{t-1}-\norm{\alpha_{t-1}}_3^3)\\
      &+\frac{1}{n^3}\qty(\frac{1}{n}+2\beta_{t-1})\beta_{t-1}\qty(1-\beta_{t-1})
      +\frac{2}{n^2}\qty(\frac{1}{n}+2\beta_{t-1})\gamma_{t-1}\qty(1-\beta_{t-1})\\
      &\leq \frac{24\gamma_{t-1}}{n^2}.
      \end{align*}
      Note that $\beta_{t-1}\geq 1/n$, $\gamma_{t-1}\geq \frac{\beta_{t-1}^2}{k}$, and $\norm{\alpha_{t-1}}_3^3\leq \gamma_{t-1}\npm_{t-1}$ hold.
    Furthermore, we have
      \begin{align*}
        &\E_{t-1}\qty[\beta_t^2-\beta_{t-1}^2]\\
        &=\sum_{i\in [k]}\sum_{j\in [k]\setminus \{i\}}\qty(\frac{1}{n^2}-\frac{2\beta_{t-1}}{n})\alpha_{t-1}(i)\alpha_{t-1}(j)
        +\sum_{i\in [k]}\qty(\frac{1}{n^2}+\frac{2\beta_{t-1}}{n})(1-\beta_{t-1})\alpha_{t-1}(i)\\
        &=\frac{1}{n}\qty(\frac{1}{n}-2\beta_{t-1})\qty(\beta_{t-1}^2-\gamma_{t-1})
        +\frac{1}{n}\qty(\frac{1}{n}+2\beta_{t-1})\beta_{t-1}\qty(1-\beta_{t-1})\\
        &\leq \frac{6\beta_{t-1}^2}{n}
        \end{align*}
        and $\E_{t-1}\qty[\gamma_t-\gamma_{t-1}]\leq \frac{5\gamma_{t-1}}{n\beta_{t-1}}$ holds from \cref{lem:basic inequalities for gamma PP} (\cref{item:expectation of gamma PP}).
        Hence, 
        \begin{align*}
          \Cov_{t-1}\qty[\gamma_t,\beta_t^2]
          &=\E_{t-1}\qty[\gamma_t-\gamma_{t-1}]\E_{t-1}\qty[\beta_t^2-\beta_{t-1}^2]-\E_{t-1}\qty[\gamma_t-\gamma_{t-1}]\E_{t-1}\qty[\beta_t^2-\beta_{t-1}^2]\\
          &\leq \frac{24\gamma_{t-1}}{n^2}+\frac{5\gamma_{t-1}}{n\beta_{t-1}}\cdot \frac{6\beta_{t-1}^2}{n}\\
          &\leq \frac{54\gamma_{t-1}}{n^2}.
        \end{align*}

      Combining the above inequalities, 
      we obtain
      \begin{align*}
        \E_{t-1}\qty[\tnpt_t]
      &\geq \frac{\E_{t-1}\qty[\npt_t]}{\E_{t-1}[\npo_t^2]} - \frac{\mathbf{Cov}_{t-1}\qty[\npt_t,\npo_t^2]}{\E_{t-1}[\npo_t^2]^2}\\
      &=\frac{\E_{t-1}\qty[\npt_t]}{\E_{t-1}[\npo_t]^2} - \frac{\E_{t-1}\qty[\npt_t]\Var_{t-1}[\npo_t]}{\E_{t-1}[\npo_{t}^2]\E_{t-1}[\npo_t]^2}- \frac{\mathbf{Cov}_{t-1}\qty[\npt_t,\npo_t^2]}{\E_{t-1}[\npo_t^2]^2}\\
        &\geq \tnpt_{t-1}+\frac{\beta_{t-1}}{n^2\E_{t-1}\qty[\beta_t]^2}-\frac{5\gamma_{t-1}}{n^2\E_{t-1}\qty[\beta_t]^2}-\frac{6\gamma_{t-1}\beta_{t-1}}{n^2\E_{t-1}\qty[\beta_t]^4}-\frac{54\gamma_{t-1}}{n^2\E_{t-1}\qty[\beta_t]^4}\\
        &\geq \tnpt_{t-1}+\frac{1}{6n^2}-\frac{80000\gamma_{t-1}}{n^2}\\
        &\geq \tnpt_{t-1}+\frac{1}{12n^2}.
      \end{align*}
      Here, we use assumptions $\beta_{t-1}\geq 1/3$ and $\gamma_{t-1}=o(1)$.
      Note that $\beta_t\geq \beta_{t-1}-\frac{1}{n}\geq \frac{\beta_{t-1}}{2}\geq \frac{1}{6}$.

\end{proof}

\section{First Round \texorpdfstring{$\beta_0=1$}{beta0=1}} \label{sec:first round beta0=1}
In this section, we prove that in the gossip USD, the number of decided vertices after the first round becomes $\Otilde(n\gamma_0)$ with high probability.

\begin{lemma} \label{lem:beta1_upper_bound}
  Consider \USD in the gossip model with $\beta_0=1$.
  Then, it holds with high probability that $ \gamma_0  - \sqrt{\gamma_0\log n/n} \le \beta_1 \le \gamma_0 \log n$.
\end{lemma}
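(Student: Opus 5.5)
With $\beta_0=1$ there are no undecided vertices, so by \cref{eq:updating probability for alpha} every vertex $v$ with $\opn_0(v)=i$ keeps opinion $i$ with probability exactly $\alpha_0(i)$ and otherwise becomes $\bot$, independently over $v$. Hence $n\beta_1=\sum_{v\in V}\indicator_{\opn_1(v)\neq\bot}$ is a sum of $n$ independent Bernoulli variables. By \cref{item:expectation of beta} of \cref{lem:basic inequalities for beta} with $\beta_0=1$, its mean is $n\E[\beta_1]=n\gamma_0$. Both inequalities will come from tail bounds on this sum, with the lower tail using Bernstein and the upper tail using the cruder bound of \cref{lem:chernoff-variant}, so that the estimate also holds when $n\gamma_0$ is tiny.

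For the upper bound, I apply \cref{lem:chernoff-variant} to $X=n\beta_1$ with $h=\max\{2\e n\gamma_0,\,C\log n\}$, which gives $\Pr[n\beta_1\ge h]\le 2^{-h}\le n^{-C}$. It remains to compare $h$ with $n\gamma_0\log n$. By Cauchy--Schwarz (as in \cref{lem:fail at first round}) we have $\gamma_0\ge 1/k\ge 1/n$, so $n\gamma_0\ge 1$. Therefore $2\e n\gamma_0\le n\gamma_0\log n$ once $\log n\ge 2\e$. Also $C\log n\le n\gamma_0\log n$ whenever $n\gamma_0\ge C$. The remaining regime $1\le n\gamma_0<C$ is handled by replacing $\gamma_0\log n$ with $C\gamma_0\log n$, or equivalently by choosing the failure exponent to match. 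This slack of a constant factor is the one point where I expect to adjust constants: I would either absorb it into the statement's hidden ``high probability'' exponent or take $h=n\gamma_0\log n$ directly. Taking $h=n\gamma_0\log n$ works because $h\ge 2\e n\gamma_0$ for large $n$ and $2^{-h}\le 2^{-\log n}=n^{-\Omega(1)}$, using $n\gamma_0\ge1$. That choice settles the upper bound cleanly, and it is the one I will use.

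For the lower bound, \cref{item:bernstein condition of beta} of \cref{lem:basic inequalities for beta} says $\beta_1-\E[\beta_1]$ satisfies the $(1/n,\beta_0/n)$-Bernstein condition. I will sharpen the variance term, since the variance is really at most $\sum_v p_v(1-p_v)/n^2\le \gamma_0/n$. Applying \cref{item:BC for bounded rv,item:BC for independent rvs} of \cref{lem:Bernstein condition} to $-(\beta_1-\E\beta_1)$ then gives the $(1/n,\gamma_0/n)$-Bernstein condition. \Cref{remark:Bernstein condition and concentration} with deviation $x=\sqrt{C\gamma_0\log n/n}$ yields a failure probability of
\[
\exp\!\left(-\frac{x^2/2}{\gamma_0/n+x/(3n)}\right).
\]
When $x\le\gamma_0$, this exponent is $\Omega(C\log n)$. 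When $x>\gamma_0$, the claimed lower bound $\gamma_0-x$ is negative and holds trivially. So the lower bound holds with deviation $\sqrt{C\gamma_0\log n/n}$. Matching the stated constant $1$ in front of $\sqrt{\gamma_0\log n/n}$ is only a matter of interpreting $\log$ or the w.h.p.\ exponent, and it yields failure probability $n^{-\Omega(1)}$.

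Finally, a union bound over the two tails gives the lemma.
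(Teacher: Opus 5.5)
Your proof is correct. The upper tail is the paper's own argument: apply \cref{lem:chernoff-variant} to $n\beta_1$, a sum of $n$ independent indicators with mean $n\gamma_0$, at $h=n\gamma_0\log n$. You also make explicit the fact $n\gamma_0\ge 1$, which the paper uses silently to conclude $2^{-h}\le n^{-\Omega(1)}$. Your preliminary detour through $h=\max\{2\e n\gamma_0,C\log n\}$ is unnecessary.

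The lower tail is where you go beyond the paper. Its proof stops after the upper bound and never justifies $\beta_1\ge\gamma_0-\sqrt{\gamma_0\log n/n}$; your Bernstein argument supplies this. The essential step is the one you identified: replace the generic variance proxy $\beta_0/n=1/n$ from \cref{lem:basic inequalities for beta} by $\Var[\beta_1]\le\gamma_0/n$. With the generic proxy you would only control deviations of order $\sqrt{\log n/n}$, which is useless when $\gamma_0$ is small.

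No constant matching is needed either. Take $C=1$, so $x=\sqrt{\gamma_0\log n/n}$. If $x\le\gamma_0$, your exponent is at least $\frac{x^2/2}{4\gamma_0/(3n)}=\frac{3}{8}\log n$, giving failure probability $n^{-3/8}$. That already counts as high probability under the paper's definition. If $x>\gamma_0$, the bound is trivial, as you note.
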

\begin{proof}
  Since $\beta_0=1$, a vertex $u$ is still decided after the first synchronous update if and only if it communicates with a vertex holding the same opinion as $u$, which occurs with probability $\alpha_0(i)$ if $u$ has opinion $i$.
  Therefore, we have
  $$
    \E[\beta_1] = \sum_{i\in[k]} \alpha_0(i)\cdot \alpha_0(i) = \gamma_0.
  $$
  Moreover, the quantity $n\beta_1$ is the sum of $n$ independent random variables, each of which is bounded by $1$.
  Therefore, by a variant of the Chernoff bound (\cref{lem:chernoff-variant}) for $h=n\gamma_0\log n \ge 6\E[\beta_1]$ (for all sufficiently large $n$), we have
  $$
    \Pr\qty[ n\beta_1 \ge n\gamma_0\log n ] \le 2^{-n\gamma_0\log n} \le n^{-\Omega(1)}.
  $$
\end{proof}

From this lemma, we know that the number of remaining opinions becomes at most $\gamma_0 \log n$ with high probability after the first round.

\begin{corollary} \label{cor:k_upper_bound}
  Consider \USD in the gossip model with $\beta_0=1$.
  Let $k'$ be the number of opinions that remain after the first round.
  Then, with high probability, $k' \le \min\qty{ k, n\gamma_0\log n}$.
\end{corollary}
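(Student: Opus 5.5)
The corollary should follow directly from \cref{lem:beta1_upper_bound}, via a trivial counting argument.

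First, the bound $k'\le k$ holds deterministically. USD never creates a new decided opinion: a vertex either keeps its opinion, becomes $\bot$, or adopts an opinion already present in round $0$. Hence the set of opinions surviving after round one is a subset of $[k]$.

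For the other bound, we use that each surviving opinion is held by at least one vertex at round $1$. Therefore
\[
k' \le \abs{\{u\in V:\opn_1(u)\neq\bot\}} = n\beta_1 .
\]
By \cref{lem:beta1_upper_bound}, with high probability $\beta_1\le \gamma_0\log n$, and so $k'\le n\gamma_0\log n$ on the same event. Combining this with the deterministic bound gives $k'\le\min\{k,n\gamma_0\log n\}$ with high probability.

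There is essentially no obstacle here. The only point to check is that the upper tail in \cref{lem:beta1_upper_bound} is genuinely high probability. The Chernoff-type bound $2^{-n\gamma_0\log n}$ applies because $\gamma_0\ge 1/k\ge 1/n$ when $\beta_0=1$ (Cauchy--Schwarz, as in \cref{lem:fail at first round}). This gives $n\gamma_0\log n\ge\log n$, so the failure probability is $n^{-\Omega(1)}$. The lower bound on $\beta_1$ in that lemma is not needed.
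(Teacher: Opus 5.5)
Your proof is correct and matches the paper's own argument: the paper likewise observes that $k'\le\min\{k,n\beta_1\}$ and then invokes \cref{lem:beta1_upper_bound} to get $\beta_1\le\gamma_0\log n$ with high probability. Your added check that $n\gamma_0\ge n/k\ge 1$ makes the failure probability $2^{-n\gamma_0\log n}=n^{-\Omega(1)}$ is a correct detail that the paper leaves implicit.
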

\begin{proof}
  From \cref{lem:beta1_upper_bound}, with high probability, we have $k'\le \min\{k,n\beta_1\} \le \min\{k,n\gamma_0\log n\}$.
\end{proof}


\end{document}